\newcommand{\HH}{\mathcal{H}}
\newcommand{\dd}{\partial}
\newcommand{\mr}{\mathrm}
\newcommand{\ra}{\rightarrow}
\newcommand{\xra}{\xrightarrow}
\newcommand{\xla}{\xleftarrow}
\newcommand{\bt}{\bullet}
\newcommand{\ul}{\underline}
\newcommand{\ii}{\mathrm{in}}
\newcommand{\oo}{\mathrm{out}}
\newcommand{\CC}{\mathbb{C}}
\newcommand{\til}{\widetilde}
\newcommand{\wh}{\widehat}
\newcommand{\E}{\mr{Eucl}}
\newcommand{\pt}{\mathrm{pt}}
\newcommand{\End}{\mathrm{End}}
\newcommand{\Hom}{\mathrm{Hom}}
\newcommand{\RR}{\mathbb{R}}
\newcommand{\mc}{\mathcal}
\newcommand{\bb}{\mathbb}
\newcommand{\ZZ}{\mathbb{Z}}
\newcommand{\tr}{\mathrm{tr}}
\newcommand{\F}{\mathcal{F}}
\newcommand{\GGamma}{\mathbb{\Gamma}}
\newcommand{\hra}{\hookrightarrow}
\newcommand{\MM}{\mathcal{M}}
\newcommand{\LL}{\mathcal{L}}
\newcommand{\CP}{\mathbb{CP}}
\newcommand{\RP}{\mathbb{RP}}
\newcommand{\conf}{\mathrm{conf}}
\newcommand{\Conf}{\mathrm{Conf}}
\newcommand{\W}{\mathcal{W}}
\newcommand{\ol}{\overline}
\newcommand{\X}{\mathfrak{X}}
\newcommand{\calt}{\rotatebox[origin=c]{90}{$\curvearrowleft$}}
\newcommand{\car}{\rotatebox[origin=c]{270}{$\curvearrowright$}}
\newcommand{\Diff}{\mathrm{Diff}}
\newcommand{\MCG}{\mathrm{MCG}}
\newcommand{\pMCG}{\mathrm{pMCG}}
\newcommand{\dvol}{d\mathrm{vol}}
\newcommand{\loc}{\mathrm{loc}}
\newcommand{\EL}{\mathcal{EL}}
\newcommand{\g}{\mathfrak{g}}
\newcommand{\simEL}{\underset{EL}{\sim}}
\newcommand{\Hosc}{\mathcal{H}^\mathrm{osc}}
\newcommand{\vac}{\mathrm{vac}}
\newcommand{\fs}{\mathsf{s}}
\newcommand{\e}{\mathsf{e}}
\newcommand{\m}{\mathsf{m}}
\newcommand{\ppsi}{\boldsymbol{\psi}}
\newcommand{\bdd}{\boldsymbol{\partial}}
\newcommand{\mb}{\mathbf}
\newcommand{\qq}{\mathrm{q}}
\newcommand{\frh}{\mathfrak{h}}
\newcommand{\bJ}{\mathbf{J}}
\newcommand{\CS}{\mathrm{CS}}
\newcommand{\WZW}{\mathrm{WZW}}
\newcommand{\KZ}{\mathrm{KZ}}
\newcommand{\GW}{\mathrm{GW}}
\newcommand{\sj}{\mathsf{j}}
\newcommand{\sn}{\mathsf{n}}
\newcommand{\paz}{\partial_z}
\newcommand{\pabz}{\partial_{\overline z}}
\newcommand{\vp}{\varphi}
\newcommand{\il}{\iota}
\newcommand{\pa}{\partial}
\newcommand{\cL}{\mathcal{L}}
\newcommand{\be}{\begin{equation}}
\newcommand{\ee}{\end{equation}}
\def\beqa{\begin{eqnarray}}
\def\eeqa{\end{eqnarray}}
\def\beqan{\begin{eqnarray*}}
\def\eeqan{\end{eqnarray*}}
\newcommand\myeq{\mathrel{\stackrel{\makebox[0pt]{\mbox{\normalfont\tiny def}}}{=}}}
\newtheorem{theorem}{Theorem}[section]
\theoremstyle{remark}
\newtheorem{remark}{Remark}[section]
\theoremstyle{plain}
\newtheorem{lemma}[remark]{Lemma}
\newtheorem{thm}[remark]{Theorem}
\newtheorem{corollary}[remark]{Corollary}
\theoremstyle{definition}
\newtheorem{definition}[remark]{Definition}
\newtheorem{example}[remark]{Example}
\newtheorem{assumption}[remark]{Assumption}
\newtheorem{exercise}[remark]{Exercise}
\begin{document}

\title{Lecture notes on conformal field theory}
\author{Pavel Mnev
}
\date{}
\maketitle

\setcounter{secnumdepth}{3}

\setcounter{tocdepth}{3}

\frontmatter


\chapter[Dedication]{}
\begin{center}
\vspace*{\stretch{1}}
    \emph{To my mom, Tatjana Mneva, in loving memory.}\\
 \vspace{\stretch{2}}
  \epigraph{
  Замкнувши вечные уста,\\
Стоят строители событий.\\
Уже изъят из вещества \\
Их труд. Уже пора забыть их.\\ 
\vspace{0.3cm}
Уж близится час торжества, \\
Уж пахнет пиршеством вселенским. \\
И мы на цыпочках по-детски \\
Крадемся к пище божества.}{Татьяна Мнёва}
 \vspace{\stretch{1}}
\end{center}

\chapter{Abstract} 
These are the notes on two-dimensional conformal field theory, based on a lecture course for graduate math students, given by P.M. in fall 2022 at the University of Notre Dame.
These notes are intended to be substantially reworked and expanded in coauthorship with Nicolai Reshetikhin.

\vspace{\stretch{1}}

{\let\clearpage\relax \chapter{Acknowledgements} }
I want to thank Andrey Losev and Nicolai Reshetikhin 
for many inspiring conversations on CFT over the years.
These notes are based on a course that had several iterations -- spring and fall 2011 at the University of Zurich, spring 2019 and fall 2022 at the University of Notre Dame. I thank the participants of those courses for their comments and interest. When preparing an early iteration of the CFT course (in 2010--2011) I also benefitted from conversations with Vladimir Fock and 
Robert C. Penner.
For my early exposure to conformal field theory I am indebted 
to mini-courses by Alexander Belavin and Boris Feigin and 
-- for the related complex geometry 
-- to a course by Peter Zograf at St. Petersburg Fizmatclub in 2005.

\vspace{\stretch{1}}

\tableofcontents

\mainmatter

\chapter{A long introduction: functorial 
picture of 2d conformal field theory
}
\chaptermark{
Functorial picture of 2d CFT
}

\section{
Functorial framework for
quantum field theory} \marginpar{Lecture 1,\\ 8/24/2022}


\subsection{The definition of QFT}

Segal \cite{Segal88} suggested the following geometrical definition of a quantum field theory. Segal focused 
mainly on the case of 2D conformal theories; Atiyah in \cite{Atiyah} described the case of topological theories.\footnote{
Another reference for the functorial viewpoint on QFT, with motivation from quantization of classical field theories, 
is \cite{Reshetikhin}. In the exposition here I was inspired by Losev's lectures \cite{Losev2008}.
}

\textbf{Data.} A $D$-dimensional QFT is the following assignment:
\begin{itemize}
\item A closed oriented $(D-1)$-manifold $\gamma$ is assigned a vector space $\HH_\gamma$ over $\CC$ (the ``space of states'').
\item An oriented $D$-manifold $\Sigma$ with boundary split into disjoint in- and out-components such that $\dd \Sigma = -\gamma_\mr{in}\sqcup \gamma_\mr{out}$ (minus means orientation reversal),\footnote{We will say that $\Sigma$ is a cobordism from $\gamma_\mr{in}$ to $\gamma_\mr{out}$ and write $\gamma_\mr{in}\xra{\Sigma}\gamma_\mr{out}$ and think of $\Sigma$ as an arrow in a cobordism category, where objects are oriented closed $(D-1)$-manifolds. See also Remark \ref{l1 rem: def of cob with inclusions} below for a more careful definition of a cobordism.} is assigned a linear map $Z_\Sigma\colon \HH_{\gamma_\mr{in}}\ra \HH_{\gamma_\mr{out}}$ (the ``evolution operator'' or ``partition function'').
\begin{figure}[H]
$$\vcenter{\hbox{ \includegraphics[scale=0.5]{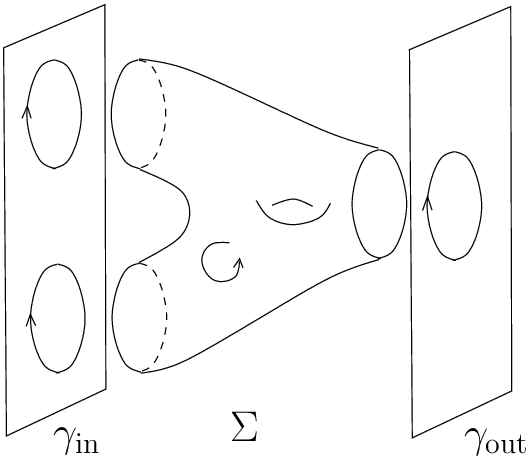} }} $$
\caption{Cobordism.}
\end{figure}
\end{itemize}

\textbf{Axioms.}

$\bullet$ \ul{Multiplicativity}: ``$\sqcup \ra \otimes$'' (disjoint unions are mapped to tensor products).
\begin{enumerate}[(a)]
\item Given two closed $(D-1)$-manifolds $\gamma_1,\gamma_2$, one has 
$$\HH_{\gamma_1\sqcup \gamma_2}=\HH_{\gamma_1}\otimes \HH_{\gamma_2}.$$
\item Given two $D$-cobordisms $\gamma_1^\mr{in}\xra{\Sigma_1}\gamma_1^\mr{out}$, $\gamma_2^\mr{in}\xra{\Sigma_2}\gamma_2^\mr{out}$, one has
$$ Z_{\Sigma_1\sqcup \Sigma_2}=Z_{\Sigma_1}\otimes Z_{\Sigma_2}
$$
where both sides are linear maps $\HH_{\gamma_1^\mr{in}}\otimes \HH_{\gamma_2^\mr{in}}\ra \HH_{\gamma_1^\mr{out}}\otimes \HH_{\gamma_2^\mr{out}} $.
\begin{figure}[H]
$$\vcenter{\hbox{ \includegraphics[scale=0.5]{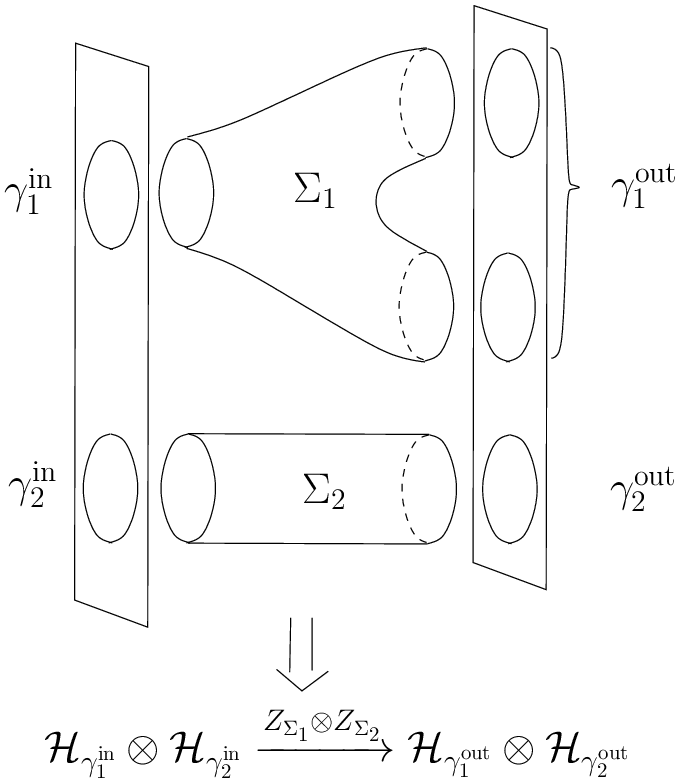} }} $$
\caption{Multiplicativity with respect to disjoint unions.}
\end{figure}
\end{enumerate}

$\bt$ \ul{Sewing axiom}: ``$\cup\ra \circ$'' (sewing of cobordisms is mapped to composition of linear maps).
Given two cobordisms $\gamma_1\xra{\Sigma'} \gamma_2$ and $\gamma_2 \xra{\Sigma''} \gamma_3$  one can sew\footnote{
In the case of a topological theory (cobordisms are smooth oriented manifolds with boundary, with no extra geometric structure), one can consider cobordisms modulo diffeomorphisms relative to the boundary, and then sewing is a well-defined operation. In 2d conformal theory, cobordisms are Riemann surfaces with parametrized boundary and the sewing operation, identifying two circles along the parametrization, is also well-defined.
}
 the out-boundary of the first one to the in-boundary of the second one, obtaining a sewn cobordism $\Sigma=\Sigma'\cup_{\gamma_2} \Sigma''$.
\begin{figure}[H]
$$\vcenter{\hbox{ \includegraphics[scale=0.3]{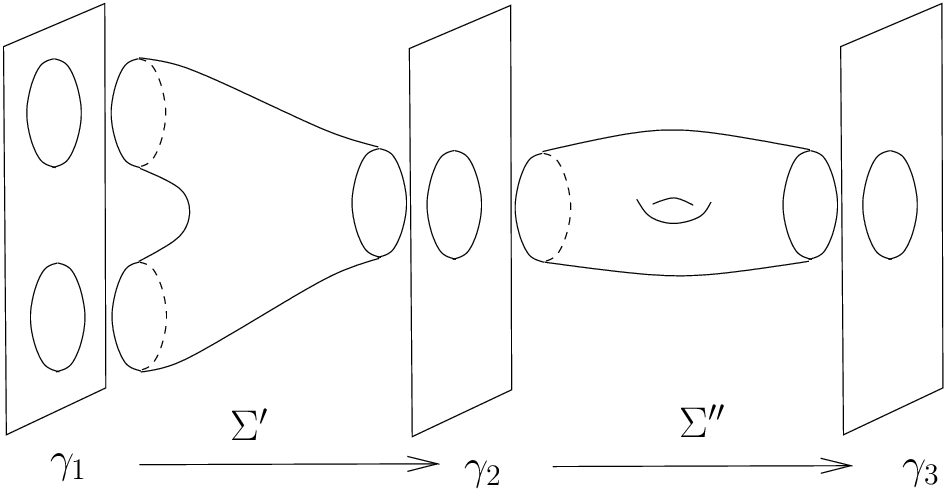} }} 
\quad 
\begin{array}{c}
{\tiny \mbox{cutting along $\gamma_2$} }\\
\longleftarrow \\
\longrightarrow \\
{\tiny \mbox{sewing along $\gamma_2$ }}
\end{array}
\quad
\vcenter{\hbox{ \includegraphics[scale=0.3]{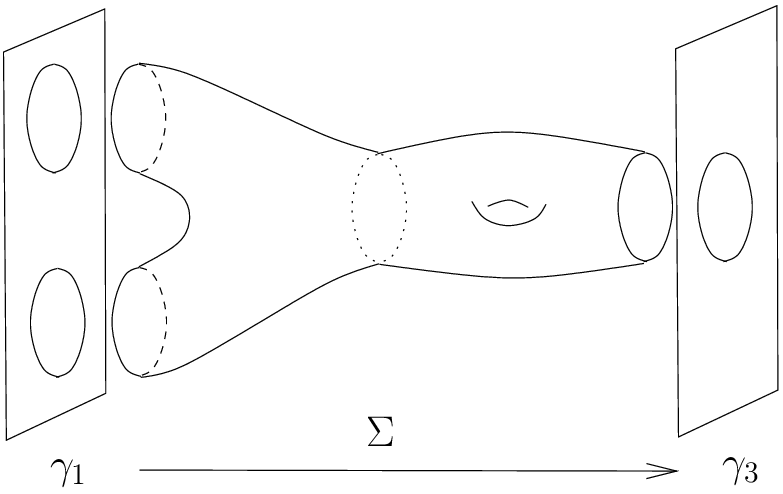} }} 
$$
\caption{Sewing.}
\end{figure}

Then one has
\begin{equation}\label{sewing}
 Z_\Sigma = Z_{\Sigma''}\circ Z_{\Sigma'} 
\end{equation}
or, making domains and codomains explicit, 
$$\HH_{\gamma_3}\xla{Z_\Sigma} \HH_{\gamma_1}=\HH_{\gamma_3}\xla{Z_{\Sigma''}}\HH_{\gamma_2} \xla{Z_{\Sigma'}}\HH_{\gamma_1}.$$

$\bt$ \ul{Normalization}.
\begin{enumerate}[(a)]
\item For the empty $(D-1)$-manifold, one has
$$ \HH_\varnothing= \CC. $$
\item For any closed oriented $(D-1)$ -manifold $\gamma$, the partition function for a ``very short'' cylinder\footnote{The precise meaning of ``very short'' depends on the type of geometric data we put on cobordisms.} 
$\gamma\times [0,\epsilon]$ tends to the identity map on the space of states:
$$ \lim_{\epsilon\ra 0} Z_{\gamma\times [0,\epsilon]}=\mr{id}\colon \HH_\gamma \ra \HH_\gamma $$
\end{enumerate}

\textbf{Additional data.}\\
\ul{Action of diffeomorphisms}.
For $\phi\colon \gamma\ra\til\gamma$ a diffeomorphism, we have a map 
\begin{equation}\label{action of diffeo on H}
\rho(\phi)\colon \HH_\gamma\ra \HH_{\til\gamma}
\end{equation}
which is linear if $\phi$ is orientation-preserving and is antilinear if $\phi$ is orientation-reversing. Moreover, this is an action, i.e., $\rho(\phi_2\circ \phi_1)=\rho(\phi_2)\circ \rho(\phi_1)$.

\ul{Geometric data}. Cobordisms $\Sigma$ are equipped with \emph{local geometric data} $\xi_\Sigma\in \mr{Geom}_\Sigma$ of type which depends on the particular QFT.\footnote{In our notations, $\mr{Geom}_\Sigma$ is the space of all geometric data of given type on $\Sigma$, and $\xi_\Sigma$ is a particular choice.} Examples of $\xi_\Sigma$:
\begin{enumerate}
\item Riemannian (or pseudo-Riemannian) metric on $\Sigma$. This is the case for many physically relevant QFTs, like, e.g., Yang-Mills theory or electrodynamics.
\item Conformal structure on $\Sigma$ (metric up to rescaling by a positive function). This is the case relevant to us (especially for $D=2$).
\item Nothing. Despite its apparent triviality, actually a very interesting case corresponding to \emph{topological} quantum field theories, in the sense of Atiyah \cite{Atiyah}.
\end{enumerate}
Boundaries $\gamma$ should also be equipped with geometric data, $\xi_\gamma\in \mr{Geom}_\gamma$. E.g., in the cases above the corresponding boundary data is:
\begin{enumerate}
\item A germ of Riemannian bi-collars on $\gamma$ (a germ of Riemannian metrics on $\gamma\times (-\epsilon,\epsilon)$).
\item A parametrization of a boundary circle $\gamma$.
\item Nothing.
\end{enumerate}
The relation between geometric data for cobordisms and for boundaries is that one wants that for a sewn cobordism $\Sigma$, $\mr{Geom}_\Sigma$ is the fiber product $\mr{Geom}_\Sigma=\mr{Geom}_{\Sigma'}\times_{\mr{Geom}_\gamma} \mr{Geom}_{\Sigma''}$. I.e., when we sew cobordisms in the sewing axiom, we also sew the geometric data.

\textbf{Axioms continued.}

\ul{Naturality} (equivariance under diffeomorphisms).

Given a diffeomorphism between cobordisms, $\phi\colon \Sigma\ra \til\Sigma$, 
one has a commutative diagram
\begin{equation}\label{naturality}
\begin{CD}
\HH_{\gamma_\ii} @>{Z_{\Sigma,\xi}}>> \HH_{\gamma_\oo} \\
@V{\rho(\phi|_\ii)}VV @VV{\rho(\phi|_\oo)}V  \\
\HH_{\til\gamma_\ii} @>>{Z_{\til\Sigma,\til\xi=\phi_*\xi}}> \HH_{\til\gamma_\oo}
\end{CD}
\end{equation}

\subsection{Remarks.}

\begin{remark}
For a closed $D$-manifold $\varnothing \xra{\Sigma}\varnothing$, the partition function is $Z_\Sigma\colon \CC \xra{\cdot \zeta}\CC$ -- multiplication by some complex number $\zeta$. By abuse of notations, this number $\zeta$ is also called the partition function (and also denoted $Z_\Sigma$).
\end{remark}

\begin{remark}
One may summarize 
the axioms above by saying that a QFT is a functor of symmetric monoidal categories
\begin{equation} \label{Cob -> Vect}
\mr{Cob}
 \xra{(\HH,Z)} \mr{Vect}_\CC 
\end{equation}
where on the left one has the category of spacetimes  (a.k.a. geometric cobordism category), where:
\begin{itemize}
\item The objects $(\gamma,\xi_\gamma)$ are closed oriented $(D-1)$-manifolds $\gamma$ equipped with geometric structure $\xi_\gamma\in \mr{Geom}_\gamma$.
\item The morphisms  $(\Sigma,\xi_\Sigma)$ are $D$-dimensional oriented cobordisms with geometric structure $\xi_\Sigma\in \mr{Geom}_\Sigma$.
\item Composition is sewing of cobordisms (accompanied by sewing the geometric data).
\item Monoidal tensor product is given by disjoint unions. Monoidal unit is the empty $(D-1)$-manifold.
\item $\mr{Cob}$ is a non-unital category: 
it does not have identity morphisms.
 Instead, it has ``almost identity'' morphisms -- short cylinders.\footnote{An exception is the topological case $\mr{Geom}=\varnothing$ where finite cylinders $\gamma\times [0,1]$ play the role of identity morphisms on the nose, without having to approximate identity by a family.}
\end{itemize}
The right hand side of (\ref{Cob -> Vect}) is the category of complex vector spaces and linear maps with obvious monoidal structure given by tensor product.

Naturality axiom says that diffeomorphisms act on the functor $(\HH,Z)$ by natural transformations.

Another way to understand diffeomorphisms categorically is as an enhancement of $\mr{Cob}$ to a bicategory, where the second type of $1$-morphisms is diffeomorphisms of $(D-1)$-manifolds and 2-morphisms are diffeomorphisms of cobordisms. Then naturality says that (\ref{Cob -> Vect}) extends to a functor of bicategories.
\end{remark}

\begin{remark}
It is very interesting to restrict the naturality axiom (\ref{naturality}) to the subgroup $\mr{Sym}_{\Sigma,\xi}\subset \mr{Diff}_\Sigma$ of diffeomorphisms $\phi\colon\Sigma\ra\Sigma$ preserving the chosen geometric data $\xi$, i.e., satisfying $\phi_*\xi=\xi$. Then, (\ref{naturality}) yields \emph{symmetries} of $Z_{\Sigma,\xi}$ (the ``Ward identities''):
\begin{equation} \label{l1 Ward identity}
Z_{\Sigma,\xi}=\rho(\phi|_\oo)\circ  Z_{\Sigma,\xi} \circ \rho(\phi|_\ii)^{-1}. 
\end{equation}
\end{remark}


\begin{remark}\label{l1 rem: def of cob with inclusions}
A careful definition of a $D$-cobordism is as a quintuple $(\Sigma,\gamma_\ii,\gamma_\oo,i_\ii,i_\oo)$ consisting of the following: 
\begin{itemize}
\item $\gamma_\ii$, $\gamma_\oo$ two closed oriented $(D-1)$-manifolds, 
\item $\Sigma$ an oriented $D$-manifold with boundary,
\item two embeddings $i_\ii\colon \gamma_\ii\hookrightarrow \dd\Sigma$,  $i_\oo\colon \gamma_\oo\ra \dd\Sigma$ with disjoint images, such that
\begin{itemize}
\item $\dd\Sigma=i_\ii(\gamma_\ii)\sqcup i_\oo(\gamma_\oo)$,
\item $i_\ii$ is orientation-reversing and $i_\oo$ is orientation-preserving.
\end{itemize}
\end{itemize}

With this definition, one can say that the data of the action of a diffeomorphism $\phi$ on the spaces of states (\ref{action of diffeo on H}) is redundant, as it is already contained in the data of partition functions assigned to cobordisms, as $Z$ for an infinitesimally short mapping cylinder 
\begin{equation}\label{l1 mapping cylinder}
M_\phi=\left(\gamma\times [0,\epsilon],\gamma,\gamma,
\begin{array}{cccc}
i_\ii\colon &\gamma&\hra& \gamma\times [0,\epsilon] \\
& x & \mapsto & (x,0)
\end{array}
, 
\begin{array}{cccc}
i_\oo\colon &\gamma&\hra& \gamma\times [0,\epsilon] \\
& x & \mapsto & (\phi(x),\epsilon)
\end{array}
\right).
\end{equation}

From this viewpoint, the naturality axiom (\ref{naturality}) is a special case of the sewing axiom (when one is attaching two short mapping cylinders to the in-/out-ends of a cobordism).
\end{remark}

\begin{remark} \label{rem: H_-gamma is dual to H_gamma}
One has a natural identification between $\HH_{-\gamma}$ and the linear dual of $\HH_\gamma$, since the partition function of a short cylinder, seen as a cobordism $\gamma\sqcup (-\gamma) \xra{\gamma\times [0,\epsilon]} \varnothing$ yields (in $\epsilon\ra 0$ limit) a bilinear pairing 
\begin{equation}\label{pairing between H_gamma and H_-gamma}
(,)\colon \HH_\gamma\otimes \HH_{-\gamma}\ra \CC,
\end{equation}
which is nondegenerate.\footnote{Nondegeneracy is shown by the following argument. One can consider a second short cylinder $\varnothing \xra{\gamma\times [0,\epsilon]} (-\gamma)\sqcup \gamma$. Attaching $-\gamma$ from the in-boundary of the first cylinder to the $-\gamma$ from the out-boundary of the second cylinder, we obtain a cylinder $\gamma \xra{\gamma\times [0,2\epsilon]} \gamma$ whose partition function converges to identity. That implies that the pairing (\ref{pairing between H_gamma and H_-gamma}) cannot have any kernel vectors.}
\end{remark}

\begin{remark}
Given a cobordism, one can always reassign a connected component of the in-boundary as a component of the out-boundary with reversed orientation. The corresponding partition functions are equal:
$$ Z\Big(\gamma_1\sqcup \gamma_2 \xra{\Sigma} \gamma_3\Big) = Z\Big(\gamma_1 \xra{\Sigma} \gamma_3\sqcup (-\gamma_2)\Big), $$
using the identification $\HH_{-\gamma_2}=\HH_{\gamma_2}^*$ from Remark \ref{rem: H_-gamma is dual to H_gamma}. In \cite{Segal88} this property is called the ``crossing axiom.''
\end{remark}

\subsection{Unitarity (and its Euclidean counterpart)\label{ss: Segal unitarity}
}
For any $\gamma$ (we are suppressing the geometric data in notation) one has the tautological orientation-reversing mapping $r\colon \gamma\ra -\gamma$ mapping each point to itself. By (\ref{action of diffeo on H}), one has a corresponding antilinear map $\rho(r)\colon \HH_\gamma\ra \HH_{-\gamma}$. Combining it with pairing (\ref{pairing between H_gamma and H_-gamma}), 
one has a sesquilinear form
\begin{equation}\label{<,>}
\langle ,\rangle\colon \HH_\gamma \otimes \HH_\gamma \xra{\rho(r)\otimes \mr{id}} \HH_{-\gamma}\otimes \HH_\gamma \xra{(,)} \CC.
\end{equation}

Unitarity is an \emph{optional} collection of assumptions on a QFT which it might 
satisfy (or not):
\begin{enumerate}[(a)]
\item \label{unitarity (a)} $(\HH_\gamma,\langle,\rangle)$ is a Hilbert space for each $\gamma$.
In particular, the sesquilinear form $\langle,\rangle$ is positive definite.
\item \label{unitarity (c)} For a cylinder $\gamma\times [0,t]$, the partition function $Z_{\gamma\times [0,t]}$ is a \emph{unitary} operator $\HH_\gamma\ra \HH_\gamma$. 
\item \label{unitarity (d)} The representation of diffeomorphisms on spaces of states (\ref{action of diffeo on H}) is unitary.
\end{enumerate}

We will be studying 2d CFTs in Euclidean signature; they are not unitary theories in the sense above. 
In fact, properties (\ref{unitarity (a)}) and 
(\ref{unitarity (d)}) may hold for them (in which case one talks about a ``unitary CFT''), but (\ref{unitarity (c)}) fails. Instead, (\ref{unitarity (c)}) gets replaced by its Euclidean counterpart:
\begin{enumerate}[(a)]
\item[(\ref{unitarity (c)}')]   The partition function of a cobordism $\gamma_1\xra{\Sigma}\gamma_2$, and of its orientation-reversed copy $\gamma_2 \xra{-\Sigma}\gamma_1$ are related by
$$ Z_{-\Sigma} = \bar{Z}_\Sigma^*, $$
where bar stands for complex conjugation and star is the dual (transpose) map.\footnote{
In Osterwalder-Schrader axioms, this property is called ``reflection positivity.''  Segal \cite{Segal88} calls it ``*-functor'' property.
}
\end{enumerate}

Note also that if $\dim\HH=+\infty$, (\ref{unitarity (c)}) is incompatible with the trace-class property that one wants to have in a CFT.

\section{Examples of functorial QFTs} 
\marginpar{Lecture 2,\\ 8/26/2022}
\subsection{
TQFTs 
and a silly example}
A functorial QFT with no geometric data on cobordisms and boundaries is a topological quantum field theory in the sense of Atiyah \cite{Atiyah}.
A TQFT assigns to a closed oriented $D$-manifold a complex number $Z_\Sigma\in \CC$ -- invariant of a $D$-manifold up to diffeomorphism, behaving nicely with respect to cutting/gluing.

There are very interesting examples like e.g. $D=3$ Chern-Simons theory. 

\textbf{A silly example.}
For any $D$ we can construct a TQFT with $\HH_\gamma=\CC$ for any $\gamma$ and 
$$Z(\Sigma)= e^{\chi(\Sigma)-\chi(\gamma_\ii)}$$ 
for any cobordism $\gamma_\ii\xra{\Sigma}\gamma_\oo$.\footnote{
Slightly more generally, we can set $Z(\Sigma)= e^{\chi(\Sigma)-\alpha\chi(\gamma_\ii)-\beta\chi(\gamma_\oo)}$ where $\alpha,\beta$ are fixed numbers such that $\alpha+\beta=1$. E.g. one can make a symmetric choice $\alpha=\beta=\frac12$.
} Here $\chi$ is the Euler characteristic. It follows from the additivity of Euler characteristic that axioms of QFT are satisfied (in particular, multiplicativity and sewing).

\subsection{$D=1$ 
Riemannian QFT -- quantum mechanics}
\label{sss: QM}
Here objects of the spacetime category ($0$-manifolds) are collections of points with orientation $\pm$.  Fix a vector space $\HH$ and let the space of states for $\mr{pt}^+$ be $\HH_{\mr{pt}^+}\colon=\HH$. Then $\HH_{\mr{pt}^-}=\HH^*$.

Morphisms of the spacetime category ($1$-cobordisms) are collections of oriented intervals and circles equipped with Riemannian metric. Note that naturality axiom implies that the partition function for a cobordism depends only on metric modulo diffeomorphisms, 
i.e., only on lengths of connected components.
Denote the partition function for an interval  of length $t$ (thought of as a cobordism $\mr{pt}^+\xra{[0,t]} \mr{pt}^+$) by $Z_t\colon \HH\ra \HH$.

Sewing intervals of lengths $t_1$ and $t_2$, we get an interval of length $t_1+t_2$. Thus, the sewing axiom implies the semi-group law
\begin{equation}\label{l2 semi-group law}
Z_{t_1+t_2}=Z_{t_2}\circ Z_{t_1} .
\end{equation}
Assume that we have an improved normalization property:
\begin{equation}\label{l2 improved normalization}
Z_\epsilon \underset{\epsilon\ra 0}{\sim}\mr{id} +A 
\epsilon + O(\epsilon^2)
\end{equation}
with $A\in \mr{End}(\HH)$ some linear operator. In physical normalization, one writes $A=-\frac{i}{\hbar}\wh{H}$, then the operator $\wh{H}\in \mr{End}(\HH)$ is called the ``quantum Hamiltonian'' (or ``Schr\"odinger operator''). Together, (\ref{l2 semi-group law}) and (\ref{l2 improved normalization}) imply
\begin{equation}\label{l2 QM evol op}
Z_t=(Z_{\frac{t}{N}})^N = \lim_{N\ra \infty} (\mr{id}+A\frac{t}{N}+O(\frac{1}{N^2}))^N=e^{At} = e^{-\frac{i}{\hbar}\wh{H} t}.
\end{equation}
(I.e., the idea is that we cut a finite interval into $N$ tiny intervals where $Z$ is well-approximated by (\ref{l2 improved normalization}), and then reassemble them using the sewing axiom.)

Formula (\ref{l2 QM evol op}), which we recovered from the axioms of QFT, is the standard expression for the evolution operator in time $t$ in quantum mechanics with quantum Hamiltonian $\wh{H}$. In quantum mechanics, one recovers (\ref{l2 QM evol op}) from Shr\"odinger equation 
\begin{equation}\label{l2 Schroedinger eq}
(i\hbar\,\dd_t+\wh{H})\psi_t=0
\end{equation}
for a $t$-dependent state $\psi_t\in \HH$. Equation (\ref{l2 Schroedinger eq}) implies $\psi_t= Z_t (\psi_0)$, with $Z_t$ given by (\ref{l2 QM evol op}). One may also say that the Schr\"odinger equation itself (\ref{l2 Schroedinger eq}), seen from the standpoint of functorial QFT, expresses the sewing axiom for sewing an infinitesimal interval of length $dt$ to a finite interval of length $t$.

\begin{remark}
Recall that $\HH$ is automatically equipped with a sesquilinear form (\ref{<,>}).
The 1D QFT above is unitary if additionally $\HH, \langle,\rangle$ is a Hilbert space and if $\wh{H}$ is a \emph{self-adjoint} operator, which implies that the evolution operator (\ref{l2 QM evol op}) is unitary. 
\end{remark}

\begin{remark} If we ask $\wh{H}$ to be self-adjoint, but consider evolution in imaginary time $t=-i T_\E$ with $T_\E>0$ (the ``Euclidean time''), then (\ref{l2 QM evol op}) becomes a self-adjoint operator 
\begin{equation}\label{l2 QM Euclidean evol op}
Z=e^{-\frac{T_\E}{\hbar}\wh{H}}
\end{equation}
(instead of unitary) and the theory satisfies (\ref{unitarity (c)}') of Section \ref{ss: Segal unitarity} instead of (\ref{unitarity (c)}). 
\end{remark}

\begin{remark}
It follows from the sewing axiom that the partition function for a circle of length $t$ is given by the trace of the partition function for the interval of length $t$
\begin{equation}
Z(S^1_t)=\tr_\HH Z_t = \tr_\HH e^{-\frac{i}{\hbar}\wh{H}t}
\end{equation}
\end{remark}

\begin{example}[Quantum mechanics of a free particle on a circle]
\label{l2 example: QM on S^1}
Let $X$ be a circle of length $L$. 
Free particle on $X$ is described by the quantum Hamiltonian 
\begin{equation}\label{l2 QM H}
\wh{H}=-\frac{1}{4\pi}\frac{\dd^2}{\dd x^2} 
\end{equation}
acting on the Hilbert space $\HH=L^2(X)$; $x\in \RR/L\cdot\ZZ$ is the coordinate on the circle $X$. Here for simplicity we adopted the units where $\hbar=1$ and the mass of the particle is $2\pi$ (this normalization of the Hamiltonian is chosen in order to have simpler formulae below). 

The partition function for an interval of length $t$ is a unitary integral operator $Z_t=e^{-i\wh{H} t}$ 
with integral kernel
\begin{equation}\label{l2 circle QM K}
K_t(x_1,x_0)=\sum_{n=-\infty}^\infty (i t)^{-\frac12}  e^{\pi i\frac{(x_1-x_0+nL)^2}{t}}.
\end{equation}
The partition function for $\Sigma$ a circle of length $t$ is then
\begin{equation}\label{l2 circle QM via coord trace}
Z(S^1_t)=\tr_\HH Z_t = \int_X dx\, K_t(x,x) = L(i t)^{-\frac12} \sum_{n=-\infty}^\infty e^{\pi i\frac{L^2}{ t} n^2}
\end{equation}
We note that another way to obtain $\tr_\HH Z_t$ is via the eigenvalue spectrum of the Hamiltonian (\ref{l2 QM H}). The eigenfunctions of $\wh{H}$ are $\psi_k=e^{\frac{2\pi ikx}{L}}$ and the corresponding eigenvalues are $E_k=\pi \left(\frac{k}{L}\right)^2$. Thus,
one has 
\begin{equation}\label{l2 circle QM via eval}
Z(S^1_t)=\tr_\HH e^{-i \wh{H} t}=\sum_{k=-\infty}^\infty e^{-i E_k t} = \sum_{k=-\infty}^\infty e^{
-\pi i \frac{t}{L^2} k^2
}
\end{equation}
One can show directly by Poisson summation formula\footnote{\label{l2 footnote: Poisson summation}
Recall that Poisson summation formula says that for a function $f(x)$ on $\RR$ decaying sufficiently fast at $x\ra \pm\infty$,
with $\til{f}(p)=\int_\RR f(x) e^{2\pi i p x} dx$  its Fourier transform,
 one has ${\sum_{n\in \ZZ} f(n)=\sum_{k\in \ZZ} \til{f}(k)}$. 
 One can see this as the equality of distributions $\sum_{n\in \ZZ} \delta(x-n)=\sum_{k\in \ZZ} e^{2\pi i k x}$, integrated against a test function $f$.
} that the right hand sides of (\ref{l2 circle QM via coord trace}) and (\ref{l2 circle QM via eval}) agree; in Poisson summation, the sum over ``winding numbers'' $n$ is transformed into a sum over the dual summation index -- the ``momentum'' $k$.

We note that one can consider the evolution in Euclidean time $t=-i T_\E$ with $T_\E>0$. Then the operator $Z_t$ becomes trace-class and sums (\ref{l2 circle QM via coord trace}), (\ref{l2 circle QM via eval}) become absolutely convergent.

Denoting for convenience $\lambda\colon=\frac{L^2}{T_\E}$ and denoting the partition function for a circle (\ref{l2 circle QM via coord trace}), (\ref{l2 circle QM via eval}) by $\zeta(\lambda)$, we have an interesting transformation property under $\lambda\ra \lambda^{-1}$:
\begin{equation}\label{l2 circle QM T-duality}
\zeta(\lambda)=\lambda^{-\frac12}\zeta(\lambda^{-1})
\end{equation}
This property can be regarded as a very simple instance of the so-called 
$T$-duality (behavior under inversion of the radius of the target circle). Alternatively, if one fixes $L=1$, (\ref{l2 circle QM T-duality}) becomes a toy 1d model of modular invariance in 2d conformal field theory, see (\ref{l2 modular invariance}) below.

\end{example}

\section{Quantum observables in the language of functorial QFT (the idea)}
Fix a functorial QFT. For $\gamma_\ii\xra{\Sigma}\gamma_\oo$ a cobordism, let $\Gamma\subset \Sigma$ be a CW subcomplex disjoint from $\dd \Sigma$.\footnote{It is very interesting to allow $\Gamma$ to go to intersect the boundary of $\Sigma$, but that would lead us into QFTs with corners (known in the topological case, as \emph{extended} TQFTs in the sense of Baez-Dolan-Lurie).} Let consider a family of $\epsilon$-thickenings  $U_\epsilon(\Gamma)$ of $\Gamma$ in $\Sigma$, with $\epsilon\in (0,\epsilon_0)$.\footnote{E.g. we can equip $\Sigma$ with a metric and define $U_\epsilon(\Gamma)$ as the set of points of distance $\leq \epsilon$ from $\Gamma$.}  

\begin{figure}[H]
\begin{center}
\includegraphics[scale=0.7]{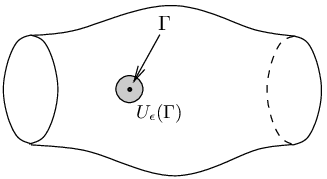}  \qquad
\includegraphics[scale=0.7]{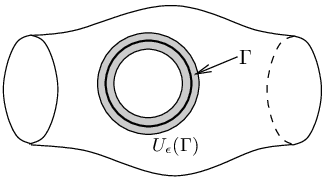}  
\end{center}
\caption{$\epsilon$-thickenings. 
}
\end{figure}


A quantum observable supported on $\Gamma$  is a family (parametrized by $\epsilon\in (0,\epsilon_0)$) of elements 
\begin{equation}\label{l2 obs}
\wh{O}_{\Gamma,\epsilon}\in \HH_{\dd U_\epsilon(\Gamma)}
\end{equation}
I.e. for each $\epsilon$ we have a state  on the boundary of the $\epsilon$-tube around $\Gamma$.

The correlator (or VEV -- ``vacuum expectation value'') of the observable is defined as 
\begin{equation}\label{l2 VEV}
\langle \wh{O}_\Gamma \rangle_\Sigma\colon= \lim_{\epsilon\ra 0} Z_{\Sigma-U_\epsilon(\Gamma)}\circ \wh{O}_{\Gamma,\epsilon} \qquad \in \mr{Hom}(\HH_{\gamma_\ii},\HH_{\gamma_\oo})
\end{equation}
The idea here is that $\Sigma$ with the tube around $\Gamma$ cut out has as its boundaries $\gamma_\ii$, $\gamma_\oo$ and a new piece of boundary -- the boundary of the tube, where we plug the state given by the observable. An important case is when $\Sigma$ is closed (i.e., $\gamma_\ii=\gamma_\oo=\varnothing$). Then the correlator (\ref{l2 VEV}) is a complex number.

The $\epsilon$-dependence  in the family (\ref{l2 obs}) is supposed to be such that the limit in the r.h.s. of (\ref{l2 VEV}) exists. One way to arrange it is to require 
 that elements (\ref{l2 obs}) for different $\epsilon$ are related by 
\begin{equation}\label{l2 rel between epsilons}
\wh{O}_{\Gamma,\epsilon'} = Z_{U_{\epsilon'}(\Gamma)-U_\epsilon(\Gamma)}\circ \wh{O}_{\Gamma,\epsilon}
\end{equation} 
for $0<\epsilon<\epsilon'<\epsilon_0$. In this case the expression  under the limit in (\ref{l2 VEV})
 does not depend on $\epsilon\in (0,\epsilon_0)$ (as follows from the sewing axiom).

For us, the most important case would be when $\Gamma$ is a collection of points (correlators of point observables). However, in topological and gauge theories it is natural to consider different $\Gamma$s, e.g., Wilson loop observable in Chern-Simons and Yang-Mills theories corresponds to $\Gamma$ an embedded circle in $\Sigma$; its generalization -- Wilson graph -- corresponds to $\Gamma$ an embedded graph in $\Sigma$.

\subsection{Example: point observables in quantum mechanics}\label{sss point observables in QM}
In the setting of Section \ref{sss: QM} -- quantum mechanics as 1d QFT -- consider the cobordism $\Sigma=[t_\ii,t_\oo]$ and consider an observable supported at a single point $\Gamma=\{t\}$ inside $\Sigma$. As the thickening we can take small intervals 
$$U_\epsilon(\Gamma)=[t-\epsilon,t+\epsilon].$$ 
The boundary of the thickening is a pair of points of opposite orientation 
$$\dd U_\epsilon(\Gamma)=\mr{pt}^-\sqcup \mr{pt^+}.$$ 
Thus, a quantum observable is an element 
\begin{equation}
\wh{O}\in \HH_{\dd U_\epsilon(\Gamma)}=\HH_{\pt^-\sqcup\pt^+}= \HH^*\otimes \HH\cong \mr{End}(\HH)
\end{equation}
-- an operator on the space of states $\HH$.
\begin{figure}[H]
\begin{center}
\includegraphics[scale=0.8]{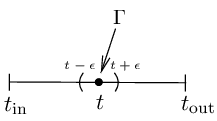}  
\end{center}
\caption{Point observable in quantum mechanics. 
}
\end{figure}

We can similarly consider several point observables on $\Sigma$, supported at $\Gamma=\{t_1,\ldots,t_n\}$ (we assume that $t_\ii<t_1<t_2<\cdots<t_n<t_\oo$). The picking a state on the boundary of $\epsilon$-thickening of each point amounts to choosing a collection of operators $\wh{O}_1,\ldots,\wh{O}_n\in \End(\HH)$. The correlator (\ref{l2 VEV}) then is
$$ \langle \wh{O}_1(t_1)\cdots \wh{O}_n(t_n) \rangle_\Sigma = e^{-\frac{i}{\hbar}\wh{H}(t_\oo-t_n)}\wh{O}_n\cdots e^{-\frac{i}{\hbar}\wh{H}(t_2-t_1)} \wh{O}_1 e^{-\frac{i}{\hbar}\wh{H}(t_1-t_\ii)} $$

\begin{figure}[H]
\begin{center}
\includegraphics[scale=0.7]{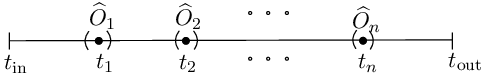}  
\end{center}
\caption{Correlator of several point observables in quantum mechanics. 
}
\label{l2_fig6}
\end{figure}

\section{2d conformal field theory as a functorial QFT}
In the main case of interest for us -- two-dimensional conformal field theory -- the geometric structure on cobordisms is conformal structure (Riemannian metric up to rescaling by a positive function), plus orientation; in two dimensions this data is equivalent\footnote{
See Section \ref{ss 2d conf str = cx str} below.
} to complex structure. Thus,
cobordisms are (possibly disconnected) Riemann surfaces  with parametrized boundary circles (when sewing in- and out-circles, one should respect the parametrization -- points with the same angle parameter are identified).\footnote{Parametrization of boundary circles can be seen in terms of Remark \ref{l1 rem: def of cob with inclusions} as the embeddings $i_\ii,i_\oo$ of unions of standard circles into $\dd\Sigma$.} 
Parametrization of boundaries is needed for the sewn surface to have a well-defined complex structure.\footnote{E.g. sewing the two boundary circles of a cylinder with a twist by angle $\theta$, one obtains non-equivalent complex tori for different $\theta$.}

Such a Riemann surface with $n$ in-circles and $m$ out-circles, $\sqcup_{i=1}^n S^1 \xra{\Sigma} \sqcup_{j=1}^m S^1$, is assigned a linear map $Z(\Sigma)\colon \HH_{S^1}^{\otimes n}\ra \HH_{S^1}^{\otimes m}$.

\begin{figure}[H]
$$\mr{(a)}\vcenter{\hbox{ 
\includegraphics[scale=0.4]{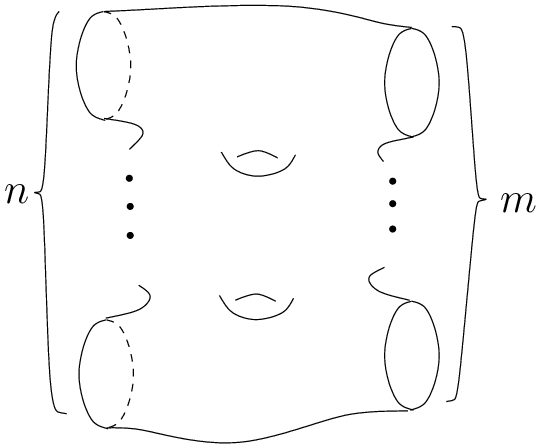}
 }}  
 \quad  \mr{(b)}
 \vcenter{\hbox{ 
 \includegraphics[scale=0.5]{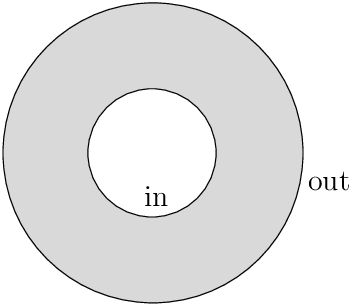}
}}
 \quad  \mr{(c)}
 \vcenter{\hbox{ 
 \includegraphics[scale=0.5]{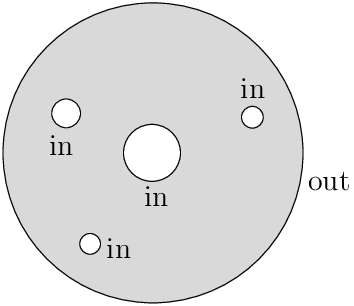}
}}
 $$
\caption{(a) a generic cobordism in 2d CFT and some relevand cobordisms embedded in $\CC$ -- (b) annulus  (coformally equivalent to a cylinder) and (c) a 2d equivalent of Figure \ref{l2_fig6} (corresponding to several point observables). 
}
\label{l2_fig7}
\end{figure}

The space of states for a circle $\HH_{S^1}$ is a Hilbert space carrying a representation of the group of diffeomorphisms $\mr{Diff}(S^1)$,
\begin{equation}\label{l2 rho}
\rho\colon \mr{Diff}(S^1) \ra \mr{End}(\HH_{S^1}).
\end{equation} 

\ul{Vacuum vector}. The space $\HH_{S^1}$ contains a distinguished vector 
\begin{equation}\label{l2 vac}
|\mr{vac}\rangle \in \HH_{S^1}
\end{equation} 
-- ``vacuum vector'' -- the partition function of the 
disk.\footnote{This vector is not invariant under $\mr{Diff}(S^1)$. However, as a consequence of naturality, it is invariant under the $3$-dimensional subgroup (isomorphic, via identifying the  disk with upper half-plane, to $PSL_2(\RR)$ -- real M\"obius transformations) consisting of diffeomorphisms of $S^1$ which can be extended as conformal transformations over the whole disk.
} 
In (b), (c) of Figure \ref{l2_fig7}, pairing with $|\mr{vac}\rangle$ for any of the in-boundaries corresponds to removing (or filling in with the disk) the corresponding hole.

\ul{Self-sewing}. If the surface $\til\Sigma$ is obtained from $\Sigma$ by gluing $i$-th in-circle to $j$-th out-circle, one has
\begin{equation} \label{l2 self-sewing}
Z(\til\Sigma)=\mr{tr}_\HH Z(\Sigma)
\end{equation} 
Here on the right hand side we mean a partial trace -- the trace taken in the first factor of 
$$Z(\Sigma)\in \Hom\left(\HH_{S^1_{\ii,i}},\HH_{S^1_{\oo,j}}\right) \bigotimes \Hom\left(\bigotimes_{1\leq k \leq n, k\neq i}\HH_{S^1_{\ii,k}},\bigotimes_{1\leq l \leq m, l\neq j}\HH_{S^1_{\oo,l}}\right). $$

Self-sewing formula (\ref{l2 self-sewing}) is not an extra axiom -- it follows from the usual sewing axiom by attaching an infinitesimally short cylinder to $S^1_{\ii,i}$ and $S^1_{\oo,j}$.

In particular, 
traces (\ref{l2 self-sewing}) must exist if we have 
a full CFT.\footnote{One may 
consider
a partial CFT where partition functions are only defined on genus zero cobordisms. In that case one can make do with partition function for which traces do not exist. An example of such a model is massless scalar field with values in $\RR$; the variant with values in $S^1$ (a.k.a. ``compactified free boson'') is a full CFT existing in all genera.}
Segal in \cite{Segal88} imposes a slightly stronger condition that traces exist in the sense of absolute convergence, i.e., that partition functions are \emph{trace-class} operators.

\subsection{Genus one partition function, modular invariance}\label{sss: modular invariance}
Given a complex number $\tau\in \CC$ with $\mr{Im}\,\tau>0$, one can consider the Riemann surface 
\begin{equation}\label{l2 torus}
\bb{T}_\tau\colon = 
\CC/(\ZZ\oplus \tau \ZZ)
\end{equation}
-- the quotient of $\CC$ equipped with standard complex structure by a lattice;
(\ref{l2 torus}) is the complex torus with modular parameter $\tau$.

One can evaluate the CFT on $\bb{T}_\tau$. Denote 
$$Z(\tau)\colon= Z(\bb{T}_\tau)\in \CC$$
Then since tori $\bb{T}_\tau$ and $\bb{T}_{-1/\tau}$ are equivalent as complex manifolds (via the holomorphic map $z\mapsto z/\tau$), $Z_\tau$ as a function of $\tau$ possesses modular invariance property
\begin{equation}\label{l2 modular invariance}
 Z(\tau)=Z(-\frac{1}{\tau}) 
\end{equation}
Also, tori $\bb{T}_\tau$ and $\bb{T}_{\tau+n}$ are equivalent for any $n\in \ZZ$ (via the tautological map $z\mapsto z$), hence one also has $Z(\tau+n)=Z(\tau)$.

In particular one can consider the torus (\ref{l2 torus}) with $\tau=i T$, $T>0$, as obtained from a cylider $\Sigma=S^1\times [0,T]$ (we think of $S^1$ as having length $1$) by sewing the out-end to in-end. CFT restricted to cylinders can be regarded as quantum mechanics with partition functions 
\begin{equation}\label{l2 Z cylinder}
Z(S^1\times [0,T])=e^{-2\pi T \wh{H}}
\end{equation}
for some self-adjoint operator $\wh{H}\in \End(\HH_{S^1})$ -- the Hamiltonian, cf. Section \ref{sss: QM}.\footnote{Here we are considering evolution in ``Euclidean time'' $T$, cf. (\ref{l2 QM Euclidean evol op}). We also set $\hbar=1$. The factor $2\pi$ in the exponential is a choice of normalization of the Hamiltonian and is put there 
for compatibility with standard CFT conventions.
}
Then by (\ref{l2 self-sewing}) we have
\begin{equation}\label{l2 Z(iT)}
Z(iT) =  \tr_{\HH_{S^1}} e^{-2\pi T\wh{H}}
\end{equation}
As a function of $T$, (\ref{l2 Z(iT)}) has to be invariant under inversion $T\leftrightarrow \frac{1}{T}$, as a special case of (\ref{l2 modular invariance}).

The general torus (\ref{l2 torus}), with $\tau = \frac{\theta}{2\pi}+iT$ can be obtained from (\ref{l2 Z cylinder}) by identifying boundary circles with a twist by  the angle $\theta$:
\begin{equation*}
\bb{T}_\tau= \frac{S^1\times [0,T]}{(\sigma,0)\sim (\sigma+\frac{\theta}{2\pi},T), \;  \sigma\in S^1}
\end{equation*}
By sewing and naturality axioms, the corresponding partition function is
\begin{equation}\label{l2 Z(theta+iT)}
Z(\tau) = \tr_{\HH_{S^1}} e^{-2\pi T \wh{H} +i\theta\wh{P}}
\end{equation} 
where $\wh{P}\in \End(\HH)$ is the infinitesimal generator of the action of the subgroup of rigid  rotations $S^1\subset \mr{Diff}(S^1)$ on $\HH_{S^1}$ (in particular, $\wh{P}$ is a self-adjoint operator with integer eigenvalues).

\subsection{Correction to the picture: conformal anomaly} \label{sss: conf anomaly}
Conformal field theories one constructs in reality satisfy the axioms of QFT in a weakened -- ``\emph{projective}'' -- sense: 
\begin{itemize}
\item The representation of $\mr{Diff}(S^1)$ on $\HH_{S^1}$ is projective. Put another way, there is an honest representation of a central extension $\wh{\mr{Diff}(S^1)}$ of the group of diffeomorphisms on $\HH_{S^1}$. This central extension is known as the Virasoro group.
\item Sewing  axiom (\ref{sewing}) holds up to a factor in $\CC^*$. -- One says that (\ref{Cob -> Vect}) is a \emph{projective functor}. Equivalently, one can say that partition functions are operators in $\Hom(\HH_\ii,\HH_\oo)$ defined up to scaling by a factor in $\CC^*$.
\end{itemize}

As another viewpoint, one can understand a projective functor as a strict functor out of a central extension of the cobordism category (see \cite{Segal88}). This is equivalent to saying that $Z_\Sigma$ is not a function on $\mr{Geom}_\Sigma$ but rather a section of a line bundle on it.\footnote{\label{l2 footnote: conf anomaly bundle}
More explicitly, in CFT this line bundle is $\mc{L}^{\otimes c}\otimes \bar{\mc{L}}^{\otimes \bar{c}}$ as a bundle over the moduli space of complex structures on $\Sigma$. Thus, $Z_\Sigma\in \Hom(\HH_\ii,\HH_\oo)\otimes \mc{L}^{\otimes c}\otimes \bar{\mc{L}}^{\otimes \bar{c}}$. Here $\mc{L}=\mr{Det}(\bar\dd)$ is the Quillen line bundle -- the determinant line bundle of the Dolbeault operator; $\bar{\mc{L}}$ is the complex conjugate one; $c$ and $\bar{c}$ are numbers -- holomorphic and anti-holomorphic central charges. Usually one has $c=\bar{c}$ (this is the case assumed in (\ref{l2 change of Z under Weyl}) below).
}

Yet another viewpoint on the projectivity phenomenon is that CFT partition functions are well-defined as operators for a given Riemannian metric $g$ on the surface $\Sigma$, but if one changes the metric within its conformal class, $g\mapsto \Omega\cdot g$, with Weyl factor $\Omega=e^{2\sigma}$, then the partition function scales by a complex factor:
\begin{equation}\label{l2 change of Z under Weyl}
Z(\Sigma,e^{2\sigma} g) = e^{ic S_\mr{Liouville}(\sigma,g)} \cdot Z(\Sigma,g).
\end{equation}
Here $c$ is a number (the ``strength'' of the projectivity effect), known as the \emph{central charge} of the CFT; 
$$ S_\mr{Liouville}(\sigma,g)=\int_\Sigma \frac12 (d\sigma\wedge *d\sigma + 4\sigma R_g\, dvol_g)  $$
is the ``Liouville action,''  $R_g$ is the scalar curvature.

\section{Heuristic motivation for the axioms of QFT from path integral quantization} \label{ss Segal from path integral}
\marginpar{Lecture 3,\\ 8/29/2022}

A classical (Lagrangian) field theory on a cobordism $\gamma_\ii\xra{\Sigma}\gamma_\oo$ is determined by the following data:
\begin{enumerate}[(a)]
\item The space of fields on $\Sigma$,
$$ \F_\Sigma = \Gamma(\Sigma, E) $$
-- the space of smooth sections of a fiber bundle $E$ over $\Sigma$ -- the bundle of fields. (For instance, fields could be maps from $\Sigma$ to some target manifold $X$, or fields could be differential forms on $E$.)
\item The action functional -- a real-valued function on the space of fields of the form
\begin{equation} \label{l3 S}
S_\Sigma(\phi)=\int_\Sigma L(\phi, \dd \phi,\cdots) \qquad \in \RR 
\end{equation}
where $\phi\in \F_\Sigma$ is a field.
Here $L$ (the Lagrangian) is a $D$-form (or density) on $\Sigma$, depending on the field $\phi$ in a local way: the value of $L$ at a point $x\in\Sigma$ can depend only on the value of $\phi$ at $x$ and its derivatives up to a finite order at  $x$.\footnote{
It is convenient (see \cite{Anderson} for details) to consider the ``variational bicomplex''  $\Omega^{p,q}_\mr{loc}(\Sigma\times \F_\Sigma)$ of $(p,q)$-forms on $\Sigma\times \F_\Sigma$ local in the same sense. In this terms, the Lagrangian $L$ is in $\Omega^{D,0}_\mr{loc}(\Sigma\times \F_\Sigma)$.
}
\end{enumerate}

Given the data above, at the classical level one is interested in the solutions $\phi\in \F_\Sigma$ of the ``equations of motion'' -- the critical point equation
\begin{equation}\label{l3 delta S=0}
\delta S=0
\end{equation}
(with $\delta$ the de Rham operator on $\F_\Sigma$).
One considers the equation (\ref{l3 delta S=0}) (which is the Euler-Lagrange PDE) with boundary conditions on the field at $\gamma_\ii,\gamma_\oo$. In a range of cases (Lagangians of second order in derivatives), one can consider the boundary conditions of the form
\begin{equation}\label{l3 bc}
\phi|_{\gamma_\ii} = \phi_\ii,\quad  \phi|_{\gamma_\oo} = \phi_\oo
\end{equation}
where $\phi_\ii=\F_{\gamma_\ii}$ and $\phi_\oo\in \F_{\gamma_\oo}$ -- fixed sections of the bundle $E$ over the boundaries $\gamma_\ii$ and $\gamma_\oo$, respectively.

\begin{remark}\label{l3 rem general bc}
One can consider more general boundary conditions on $\gamma$ of the form 
\begin{equation}\label{l3 general bc}
\pi(\mr{Jet}(\phi)|_\gamma)=b_\gamma
\end{equation} 
where $\mr{Jet}(\phi)|_\gamma$ is the normal $\infty$-jet of $\phi$ at $\gamma$; 
$$\pi\colon 
\{\mr{normal\;jets\;of\;fields\;at\;}\gamma\}
\ra B_\gamma$$ 
is some fibration and $b_\gamma\in B_\gamma$ a point in the base. The desired scenario is when the solution of (\ref{l3 delta S=0}) with boundary condition  (\ref{l3 general bc}) exists and is locally-unique (non-deformable).
\end{remark}

\begin{example}[Classical mechanics of a particle on 
a Riemannian manifold
]  \label{l3 exa: particle}
Let $D=1$. Fix a Riemannian manifold $(M,g)$ (target), 
a positive number $m$ (mass) and a function $V\in C^\infty(M)$ (the force potential). Consider as the cobordism the interval $\Sigma=[0,t]$ and set
\begin{equation}
\F_\Sigma=\mr{Map}([0,t],M)
\end{equation}
-- the space of paths in $M$ 
parametrized by the interval $[0,t]$.
We set the action $S\colon \F
\ra \RR$ to be defined by
\begin{equation}\label{l3 particle action}
S_\Sigma(\phi)=\int_0^t d\tau \left(\frac{m}{2}g_{\phi(\tau)}(\dot\phi(\tau),\dot\phi(\tau))-V(\phi(\tau))\right)
\end{equation}
for $\phi\colon [0,t]\ra M$ a field (a path).\footnote{Note that the Riemannian metric on the source cobordism $\Sigma$ is implicitly used in (\ref{l3 particle action}): the action (\ref{l3 particle action}) is not invariant under reparametrization of a path.
One can also write a $\mr{Diff}(\Sigma)$-invariant version of the action (\ref{l3 particle action}):
$$ S_{\Sigma,\xi}(\phi)=\int d\tau  \sqrt{\xi(\tau)} \left(\xi(\tau)^{-1}\frac{m}{2}g_{\phi(\tau)}(\dot\phi(\tau),\dot\phi(\tau))-V(\phi(\tau))\right) $$
Here $\xi(\tau) d\tau^2$ is the metric on $\Sigma$. Then for $\psi\colon \Sigma\ra \Sigma$ a diffeomorphism, one has $S_{\Sigma,\xi}(\phi)=S_{\Sigma,\psi^*\xi}(\psi^*\phi)$.
}

Setting for simplicity $(M,g)=\RR^N$ with standard Euclidean metric, the critical point equation $\delta S=0$ is equivalent to the ODE
\begin{equation}\label{l3 Newton eq}
m\ddot\phi(\tau)+\mr{grad}V (\phi(\tau)) =0
\end{equation}
-- the Newtonian equation of motion of a particle in $\RR^N$ in the force field with potential $V$. One can consider this equation with Dirichlet boundary conditions $\phi(0)=\phi_\ii$, $\phi(t)=\phi_\oo$ where $\phi_\ii$, $\phi_\oo$ -- two given points in $\RR^N$. Thus, we are considering parametrized paths in $\RR^N$ satisfying the equation (\ref{l3 Newton eq}) with \emph{fixed endpoints}. E.g. if $V=0$, there is a unique solution -- the straight interval connecting $\phi_\ii$ and $\phi_\oo$ with constant-velocity parametrization by $[0,t]$.

If we take a general Riemannian manifold $(M,g)$ and set $V=0$, then $\delta S=0$ is equivalent to the geodesic equation. So, solutions of the boundary problem (\ref{l3 delta S=0}), (\ref{l3 bc}) are the geodesics in $M$ connecting the two given points.

For a general Riemannian manifold $(M,g)$ and general potential $V$, the Euler-Lagrange equation for the action (\ref{l3 particle action}) (the critical point equation $\delta S=0$) written in local coordinates on 
$M$ takes the form
\begin{equation}
m\left(\frac{d^2 \phi^i(\tau)}{d\tau^2}+\Gamma^i_{jk}(\phi(\tau)) \frac{d\phi^j(\tau)}{d\tau} \frac{d\phi^k(\tau)}{d\tau}\right) + g^{ij}(\phi(\tau)) \dd_i V (\phi(\tau)) =0
\end{equation}
where $\Gamma^i_{jk}$ are the Christoffel symbols.
\end{example}

\begin{example}[Scalar field]\label{l3 exa: scalar}
Let $D\geq 1$ be any, fix $m\geq 0$ (the mass) and fix some polynomial function $V$ on $\RR$ (interaction potential). Consider a cobordism $\Sigma$ equipped with Riemannian metric and set 
\begin{equation}
\F_\Sigma=\mr{Map}(\Sigma,\RR)
\end{equation}
and 
\begin{equation}
S_\Sigma=\int_\Sigma \frac12 d\phi\wedge *d\phi +\frac{m^2}{2}\phi^2 d\mr{vol} +V(\phi)d\mr{vol}
\end{equation}
with $\phi\colon \Sigma\ra \RR$ the scalar field on $\Sigma$.
The corresponding equation of motion $\delta S=0$ is equivalent to the PDE
\begin{equation}\label{l3 scal field eom}
\Delta \phi+ m^2 \phi + V'(\phi)=0
\end{equation}
with $\Delta$ the Laplacian on functions on $\Sigma$.  Equation (\ref{l3 scal field eom}) is the Laplace equation if $m=0$, $V=0$, Helmholtz equation if $m\neq 0$, $V=0$; for general $V$, it is a nonlinear PDE. Equation (\ref{l3 scal field eom}) can be considered with Dirichlet boundary conditions (\ref{l3 bc}) where $\phi_{\ii,\oo}$ are fixed functions on $\gamma_{\ii,\oo}$.

\end{example}

\subsection{Path integral quantization} 
Given a classical field theory on $\Sigma$, we want to define a corresponding QFT. Consider the following expression depending on $\phi_\ii$, $\phi_\oo$ -- sections of $E$ over $\gamma_{\ii,\oo}$:
\begin{equation}\label{l3 PI}
K_\Sigma(\phi_\oo,\phi_\ii)\colon= \underset{\scriptsize
\begin{array}{c}
\phi\in \F_\Sigma\mr{\; s.t.} \\
\phi|_{\gamma_\ii}=\phi_\ii,\\
\phi|_{\gamma_\oo}=\phi_\oo
\end{array}
}{\int}
\mc{D} \phi\; e^{-S_\Sigma(\phi)}
\end{equation}
The right hand side is a formal expression -- the integral over the (infinite-dimensional) space of fields on $\Sigma$ subject to boundary conditions; the ``measure'' $\mc{D}\phi$ on fields is a formal symbol. 

\begin{remark}\label{l3 rem hbar}
 Depending on the context, there are different normalizations of the exponential in (\ref{l3 PI}):
\begin{itemize}
\item In unitary (or ``relativistic'') quantum field theory on a Lorentzian spacetime manifold, one writes the integrand of (\ref{l3 PI}) as $e^{\frac{i}{\hbar}S(\phi)}$.
\item In statistical mechanics one writes the integrand as $e^{-\beta E(\phi)}$ (the Gibbs measure on states of the statistical system), with $\phi $ a state of the system on $\Sigma$, $E(\phi)$ the energy of the state and $\beta=\frac{1}{T}$ the inverse temperature.
Summarizing the comparison between QFT and statistical mechanics, we have the following.

\vspace{0.2cm}
\hspace{-2cm}\begin{tabular}{c|c}
QFT & statistical mechanics \\ \hline \hline
field $\phi$ on $\Sigma$ & state $\phi$ of the system on $\Sigma$ \\
action functional $S$ & energy functional $E$ \\  \hline
path integral $\int \mc{D}\phi\; e^{\frac{i}{\hbar} S(\phi)}$ & sum over states $\int\mc{D}\phi\; e^{-\beta E(\phi)}$ \\
at $\hbar\ra 0$: fast oscillating integrand &  at temperature$\ra 0$: integrand is supported near \\
stationary phase point = classical solution & the state with minimal energy
\end{tabular}
\vspace{0.3cm}

\item In Euclidean field theory (which will be our setting for 2d CFT), on a Riemannian (as opposed to pseudo-Riemannian) spacetime manifold $\Sigma$, one considers the path integral with the integrand $e^{-\beta S(\phi)}$ where $\beta=\frac{1}{\hbar}$ and -- unless we want to do perturbation theory yielding a power series in $\hbar$ -- we can choose to set $\beta=\hbar=1$.
\end{itemize}
One can transition a unitary QFT on a cobordisms of cylinder type $\gamma\times [0,t]$ to a Euclidean field theory on $\gamma\times [0,T_\E]$ by ``Wick rotation'' -- analytical continuation in $t$ to $t=-i T_\E$.
\end{remark}

\begin{remark}
There are ways to make mathematical sense of the path integral (a.k.a. functional integral or Feynman integral) (\ref{l3 PI}), like e.g. 
\begin{enumerate}[(a)]
\item perturbative approach  -- expansion in Feynman diagrams (replacing the path integral by its stationary phase or Laplace approximation), or
\item  lattice approach -- replacing $\Sigma$ with a lattice with the field defined at the nodes -- then (\ref{l3 PI}) is replaced by a finite-dimensional integral; after that one needs to take the limit of the lattice spacing going to zero (one should think of this procedure as an analog of a Riemannian sum for an ordinary integral).
\end{enumerate}
\end{remark}

We define the space of states of the QFT on $\gamma$ as 
\begin{equation}\label{l3 H}
\HH_\gamma=\mr{Fun}_\CC(\F_\gamma)
\end{equation}
the space of complex-valued functions  on $\F_\gamma$ (the space parametrizing the possible boundary conditions in (\ref{l3 bc})).\footnote{For more general boundary conditions of the type (\ref{l3 general bc}), instead of (\ref{l3 H}) we should write $\mr{Fun}_\CC(B_\gamma)$. Occurrences on $\F_\gamma$ as integration space throughout this subsection (such as e.g. in (\ref{l3 Z via K})) should then also be swapped for $B_\gamma$.}

For instance, in Example \ref{l3 exa: particle}, one would set $\HH_\mr{pt}=\mr{Fun}_\CC(M)$. If we want to have unitarity, then we should be more specific about regularity of allowed functions  and ask that it is of $L^2$ class:
$$\HH_\mr{pt}=L^2(M)$$
-- the standard Hilbert space in the quantum mechanical system consisting of a particle moving on $M$. By extension, it is tempting to write (\ref{l3 H}) as $\HH_\gamma=L^2(\F_\gamma)$.

We define the partition function of the cobordism $\Sigma$ using the path integral (\ref{l3 PI}) as follows: for $\Psi_\ii\in \mr{Fun}_\CC(\F_{\gamma_\ii})$, we set
\begin{equation}\label{l3 Z via K}
(Z_\Sigma \Psi_\ii)(\phi_\oo)\colon = \int_{\F_{\gamma_\ii}} \mc{D}\phi_\ii \; K_\Sigma(\phi_\oo,\phi_\ii) \Psi_\ii(\phi_\ii)
\end{equation}
In other words, $Z_\Sigma$ is an integral operator, determined by the \emph{integral kernel} $K_\Sigma$ defined by the path integral (\ref{l3 PI}).

\begin{remark}[Dirac's bra- and ket-notations]
One can consider a basis in $\HH_\gamma$ consisting of vectors $\{|\phi_0\rangle\}_{\phi_0\in \F_\gamma}$. The vector $|\phi_0\rangle$ is understood as corresponding to the delta-function on the space $\F_\gamma$ centered at $\phi=\phi_0$.  In particular, a vector $\Psi\in \HH_\gamma$ can be written tautologically as 
$$\Psi=\int_{\F_\gamma}\mc{D}\phi_0\, \Psi(\phi_0) |\phi_0\rangle $$
 Likewise, one has a dual basis in $\HH^*$ consisting of covectors $\{\langle \phi_0 |\}_{\phi_0\in\F_\gamma}$. In terms of these notations, it is natural to denote the integral kernel (\ref{l3 PI}) by
\begin{equation}
\langle \phi_\oo | Z_\Sigma|\phi_\ii\rangle \colon= K_\Sigma(\phi_\oo,\phi_\ii)
\end{equation}
One also calls this expression the ``matrix element'' of $Z_\Sigma$ (corresponding to ``row'' $\phi_\oo$ and ``column'' $\phi_\ii$).
\end{remark}

\subsection{Sewing as Fubini theorem for path integrals}
Let $\gamma_1\xra{\Sigma'}\gamma_2$ and $\gamma_2\xra{\Sigma''}\gamma_3$ be two cobordisms and $\gamma_1\xra{\Sigma}\gamma_3$ the corresponding sewn cobordism. Then we have
\begin{multline}\label{l3 Fubini computation}
\langle \phi_3 | Z_\Sigma |\phi_1 \rangle = \underset{
\scriptsize
\begin{array}{c}
\phi\in \F_\Sigma\mr{\; s.t.} \\
\phi|_{\gamma_1}=\phi_1,\\
\phi|_{\gamma_3}=\phi_3
\end{array}
}{\int} \mc{D}\phi\;e^{-S_\Sigma(\phi)} 
\\\underset{\mr{Fubini}}{=}
\underset{\scriptsize \phi_2\in\F_{\gamma_2}}{\int} \mc{D}\phi_2 
\underset{\scriptsize
\begin{array}{c}
\phi'\in \F_{\Sigma'}\mr{\; s.t.} \\
\phi|_{\gamma_1}=\phi_1,\\
\phi|_{\gamma_2}=\phi_2
\end{array}}{\int} \mc{D}\phi' 
\underset{\scriptsize
\begin{array}{c}
\phi''\in \F_{\Sigma''}\mr{\; s.t.} \\
\phi|_{\gamma_2}=\phi_2,\\
\phi|_{\gamma_3}=\phi_3
\end{array}}{\int} \mc{D}\phi'' \;\underbrace{e^{-S_\Sigma(\phi)}}_{e^{-S_{\Sigma'}(\phi')}e^{-S_{\Sigma''}(\phi'')}} \\
= 
\underset{\scriptsize \phi_2\in\F_{\gamma_2}}{\int} \mc{D}\phi_2 
\underset{\scriptsize
\begin{array}{c}
\phi'\in \F_{\Sigma'}\mr{\; s.t.} \\
\phi|_{\gamma_1}=\phi_1,\\
\phi|_{\gamma_2}=\phi_2
\end{array}}{\int} \mc{D}\phi'\;  e^{-S_{\Sigma'}(\phi')}
\underset{\scriptsize
\begin{array}{c}
\phi''\in \F_{\Sigma''}\mr{\; s.t.} \\
\phi|_{\gamma_2}=\phi_2,\\
\phi|_{\gamma_3}=\phi_3
\end{array}}{\int} \mc{D}\phi''\;  e^{-S_{\Sigma''}(\phi'')}\\
=\underset{\scriptsize \phi_2\in\F_{\gamma_2}}{\int} \mc{D}\phi_2 \;\;
\langle \phi_3 | Z_{\Sigma''} |\phi_2\rangle \; 
\langle \phi_2 |Z_{\Sigma''}|\phi_1 \rangle
\end{multline}
This is the convolution property of integral kernels equivalent to the relation 
$$ Z_\Sigma=Z_{\Sigma''}\circ Z_{\Sigma'} $$
between the corresponding integral operators, i.e. the sewing property.

The idea in (\ref{l3 Fubini computation}) is to treat the integration over fields on $\Sigma$ in the following way:
\begin{enumerate}[(i)]
\item Fix the value $\phi_2$ of the field on the sewing interface $\gamma_2$.
\item Integrate over fields on the two sub-cobordisms $\Sigma',\Sigma''$ with $\phi_2$ becoming a boundary condition -- this gives the matrix elements of partition functions for the sub-cobordisms.
\item Integrate out the field $\phi_2$ on the interface.
\end{enumerate}

\begin{figure}[H]
\begin{center}
\includegraphics[scale=0.7]{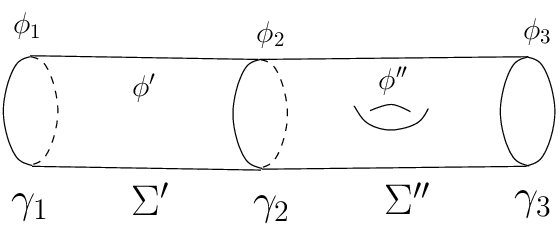}
\end{center}
\caption{Sewing: integrating over the field $\phi$ everywhere is equivalent to integrating over $\phi',\phi''$ and then over $\phi_2$.}
\end{figure}

In particular, we think of the space of fields on $\Sigma$ (with boundary conditions on $\gamma_{1,3}$) as fibered over fields on $\gamma_2$, and we write this integral using ``Fubini theorem for path integrals'' as an intergral over the fiber followed by integral over the base.\footnote{
This Fubini theorem is heuristically clear if the path integral measure is thought of as a continuum product of measures $d\phi(x)$ over points $x$ of $M$. However, when one defines path integrals mathematically, e.g., as perturbative integrals (via Feynman diagrams), this statement requires an independent proof. For special cases studied in detail, see e.g. \cite{TJF} (quantum mechanics), \cite{2dscalar} (2d scalar theory with polynomial potential), \cite{CMRpert} (topological field theories of AKSZ type), \cite{IM} (2d Yang-Mills). 
}

In the computation (\ref{l3 Fubini computation}) we also used additivity of action (which is automatic from the local ansatz (\ref{l3 S})): $S_\Sigma(\phi)=S_{\Sigma'}(\phi')+S_{\Sigma''}(\phi'')$ if $\phi',\phi''$ are the restrictions of the field $\phi$  on $\Sigma$ to $\Sigma'$, $\Sigma''$.

\subsection{Observables in path integral formalism}
Suppose we are given a classical field theory on a cobordism $\Sigma$ and also given $i\colon\GGamma\hookrightarrow \Sigma$ a CW complex embedded into $\Sigma$ (with the image disjoint from the boundary). We define a classical observable $O_\GGamma$ supported on $\GGamma$ as some function on 
$\Gamma(\GGamma,i^*\mr{Jet}_\infty E)$, i.e., a function of jets of fields on $\GGamma$. 

For instance, if $\GGamma=\{x\}$ is a single point, then a classical observable at $x$ is just a function of the jet of 
the field at $x$, $O_x=f(\phi(x),\dd\phi(x),\ldots)$.

The expectation value of $O_\GGamma$ is formally defined in the path integral formalism as
\begin{equation}\label{l3 PI corr on Sigma closed}
\langle O_\GGamma \rangle = \underset{\phi\in \F_\Sigma}{\int} \mc{D}\phi\; e^{-S_\Sigma(\phi)} O_\GGamma(\mr{Jet}_\infty(\phi)|_\GGamma)\qquad \in\CC
\end{equation}
Here we assumed for simplicity that $\Sigma$ is closed.

If $\Sigma$ has a boundary, then we should include boundary conditions in the r.h.s., as in (\ref{l3 PI}), thus obtaining the ``matrix element,'' between states $|\phi_\ii\rangle$ and $\langle\phi_\oo|$, of the theory on $\Sigma$ enriched by the observable $O_\GGamma$:
\begin{equation} \label{l3 PI corr Sigma cob}
\langle \phi_\oo| Z_{\Sigma,O_\GGamma} |\phi_\ii\rangle =  
\underset{\scriptsize
\begin{array}{c}
\phi\in \F_\Sigma\mr{\; s.t.} \\
\phi|_{\gamma_\ii}=\phi_\ii,\\
\phi|_{\gamma_\oo}=\phi_\oo
\end{array}
}{\int} \mc{D}\phi\; e^{-S_\Sigma(\phi)} O_\GGamma(\mr{Jet}_\infty(\phi)|_\GGamma)\qquad \in \CC
\end{equation}

In quantization, a classical observable $O_\GGamma$ is mapped to a quantum observable $\wh{O}_\GGamma$ such that the expectation value (\ref{l2 VEV}) of $\wh{O}_\GGamma$ defined within the functorial QFT 
language agrees with the path integral expression (\ref{l3 PI corr on Sigma closed}), (\ref{l3 PI corr Sigma cob}). This can be arranged by defining $\wh{O}_\Gamma$ to be the state on the boundary of a thickening $U_\epsilon(\GGamma)$ given by the expression (\ref{l3 PI corr Sigma cob}) where instead of the cobordism $\Sigma$ we take the ``tube'' $U_\epsilon(\GGamma)$ (seen as a cobordism from $\varnothing$ to $\dd U_\epsilon(\GGamma)$).

\section{Why care about CFT?} \marginpar{Lecture 4\\ 8/31/2022}
Here we list some of the points of motivation, why one might be interested in 2d conformal field theory.

\subsection{CFT description of 2d Ising model}
This is the historical point of motivation, and it was the point of the seminal paper on CFT by Belavin-Polyakov-Zamolodchikov \cite{BPZ}.

One considers the Ising model -- a lattice model of statistical physics. On a graph $\Xi$, a state of the system is an assignment of spins $\pm 1$ (or ``spin up/spin down'') to vertices of $\Xi$. In particular, there are $2^{\# V(\Xi)}$ states in total where $V(\Xi)$ is the set of vertices of $\Xi$. The energy of a state is defined as
\begin{equation}\label{l4 Ising E}
E(s)=-\sum_{\mr{edges}\; (u,v)} s_u s_v - h \sum_{\mr{vertices}\; v} s_v
\end{equation} 
where $h\in \RR$ is a parameter (``external magnetic field''). Then one has the Gibbs probability measure on the set of states
\begin{equation}\label{l4 Gibbs measure}
\mr{Probability}(s) = \frac{1}{Z(T,h)} e^{-\frac{1}{T} E(s)}
\end{equation}
where $T>0$ is the temperature and 
\begin{equation}
Z(T,h)= \sum_{\mr{states}\; s} e^{-\frac{1}{T} E(s)}
\end{equation}
is the partition function (the normalization factor in the Gibbs measure (\ref{l4 Gibbs measure}), needed to normalize it to total mass $1$).

Then one considers the  continuum (or ``thermodynamical'') limit, taking $\Xi$ to be a very fine square lattice on a large square on $\RR^2$ and sending the spacing of the lattice to zero (while appropriately rescaling the energy function (\ref{l4 Ising E})). 

In the continuum limit, the system has a phase transition: the partition function $Z(T,h)$ and the $n$-point correlation functions of spins become real-analytic functions of $(T,h)$ on $\RR_{>0}\times \RR$ except on the interval $(0,T_\mr{crit}]$ with some positive critical temperature $T_\mr{crit}$.  The partition function and correlation functions are discontinuous (have a finite jump) across the interval $(0,T_\mr{crit})$, when going from small negative $h$ to small positive $h$. Points $(0<T<T_\mr{crit},h=0)$ are points of first-order phase transition of the system and $(T=T_\mr{crit},h=0)$ is the point of second order phase transition.

From now on, set $h=0$. If $T>T_\mr{crit}$, the two-point correlation function behaves as 
\begin{equation}
\langle s(x) s(y) \rangle  \sim e^{- \frac{||x-y||}{r_\mr{corr}}}
\end{equation}
where $r_\mr{corr}$ is the ``correlation radius,'' depending on $T$. In the limit $T\ra T_\mr{crit}$, the correlation radius goes to $+\infty$ and the system loses the ``characteristic scale'' -- becomes scaling invariant. In particular, the two-point function (\ref{l4 Ising 2-point}) at $T=T_\mr{crit}$ becomes a power law
\begin{equation}\label{l4 Ising 2-point}
\langle s(x) s(y) \rangle \sim \frac{1}{||x-y||^{\frac14}}
\end{equation}
The power $\frac14$ here is a result from the explicit solution of 2d Ising model (at any $T$) by Onsager \cite{Onsager}.

Thus, at the point $(T_\mr{crit},h=0)$ of second-order phase transition, the system becomes scaling invariant. Put another way, its symmetry gets enhanced from Euclidean motions (translations+rotations) to include scaling. 
At this point it is natural to conjecture the system on $\RR^2$, at the point of second-order phase transition, can be described by some model of conformal field theory (which would also mean that the symmetry is further enhanced from rotations+translations+scaling to all conformal transformations). This was proven -- and the corresponding CFT was identified as the \emph{free Weyl fermion} -- in \cite{BPZ}.

It turns out that a much wider class of statistical systems exhibiting phase transitions at the point of phase transition can be described by a CFT, which eventually leads to explanation of the interesting rational exponents (``critical exponents'') one encounters in these systems -- such as the power $\frac14$ in (\ref{l4 Ising 2-point}).\footnote{Ultimately, $\frac14$ comes from the fact that Ising spin field can be identified with a primary field of conformal weight $(\frac{1}{16},\frac{1}{16})$ in the free fermion field.}

\subsection{Bosonic string theory}
Classically, bosonic string theory can be though of as a classical field theory of maps  from a surface $\Sigma$ (``worldsheet'') to the target $\RR^N$ (``spacetime'' in string theory terminology), with action 
\begin{equation}\label{l4 bosonic string}
S(\Phi;b,c,\bar{b},\bar{c})=\int_\Sigma\sum_{i=1}^N \frac12 d\Phi_i \wedge * d\Phi_i + b \bar\dd c + \bar{b}\dd\bar{c}.
\end{equation}
Here $\Phi\colon \Sigma\ra \RR^N$ is the bosonic field describing the string in $\RR^N$, $\Phi_i$ are components corresponding to coordinates on $\RR^N$, so that each $\Phi_i$ can be seen as a scalar field on $\Sigma$. The last two terms in (\ref{l4 bosonic string}) (the ``reparametrization ghost system'') are auxiliary anticommuting fields (``Faddeev-Popov ghosts'') that appear in the action through Faddeev-Popov mechanism, because one wants to consider the path integral over $\mr{Map}(\Sigma,\RR^N)/\mr{Diff}(\Sigma)$ -- they appear in essence from homological resolution of this quotient. The fields $c,\bar{c}$ are sections of $T^{1,0}\Sigma$, $T^{0,1}\Sigma$ -- holomorphic/antiholomorphic tangent bundle; fields $b,\bar{b}$ are quadratic differentials -- sections of $((T^{(1,0)})^*\Sigma)^{\otimes 2}$, $((T^{(0,1)})^*\Sigma)^{\otimes 2}$,  respectively.

Upon quantization, (\ref{l4 bosonic string}) becomes a particular CFT on $\Sigma$ -- the ``sum'' of several mutually non-interacting theories -- $N$ free massless scalar fields and the ghost system. The central charge of this CFT (measuring the ``strength'' of projectivity effect/conformal anomaly, see Section \ref{sss: conf anomaly}) turns out to be 
\begin{equation}
c= N-26
\end{equation}
-- each free scalar contributes $1$ to the central charge and the ghost system contributes $-26$. In particular, the central charge (and thus the conformal anomaly) vanishes iff $N=26$.  Which is the reason why dimension $26$ of the target is distinguished in bosonic string theory.

\subsection{Invariants of 3-manifolds}
There are interesting connections between 3d topological quantum field theories and 2d conformal field theories on the boundary of a 3-manifold.

Notably, there is a relation between 3d Chern-Simons theory (which is topological) and 2d Wess-Zumino-Witten theory (which is a CFT). This relation was very fruitfully exploited in \cite{Witten89} to construct invariants of knots and 3-manifolds. 

One relation is that Chern-Simons correlator of a tangle in a 3-ball can be interpreted as a correlator of point observables in WZW theory on the boundary 2-sphere. This fact was explained and used in \cite{Witten89} to explain why the correlators of Wilson loop observables in 3d theory have to satisfy certain skein relation (which is ultimately a move performed on the portion of a knot contained in a small ball).

Put differently, the relation between Chern-Simons theory on a 3-manifold $M$ and 2d WZW on the boundary $\Sigma=\dd M$ is that the space of states that Chern-Simons assigns to $\Sigma$ is the ``space of conformal blocks'' (holomorphic building blocks of correlators) that WZW assigns to $\Sigma$, see e.g.  \cite{GK}.

\subsection{A zoo of computable QFTs}
Part of motivation to study CFTs is that they give examples quantum field theories with explicit and nontrivial answers.

For instance in a typical CFT situation,
\begin{itemize} 
\item two-point functions are often given by power laws with interesting rational exponents, 
\item four-point functions can be expressed in terms of the hypergeometric function,
\item genus 1 partition function can be expressed in terms of such objects as Jacobi theta functions and Dedekind eta function.
\end{itemize}

The zoo of well-known examples of CFTs includes among others:
\begin{itemize}
\item Free theories: 
\begin{itemize}
\item free massless scalar field (or ``free boson''),
\item free massless scalar with values in $S^1$,
\item free fermion,
\item $bc$-system (and a very similar $\beta\gamma$-system).
\end{itemize}
\item Minimal models $M(p,q)$ of CFT.
\item Wess-Zumino-Witten model.
\end{itemize}

\subsection{Motivation from representation theory}\leavevmode 
\\ \indent
\ul{Representations of loop groups/Lie algebras}. CFT is naturally linked to representation theory of loop groups and loop Lie algebras (or rather their central extensions). E.g., the space of states $\HH_{S^1}$ always carries a representation of the Virasoro algebra. In the case of WZW models, $\HH_{S^1}$ also carries a representation of a Kac-Moody algebra $\wh{\mathfrak{g}}$ (which gives in a sense a ``refinement''\footnote{In the sense that Virasoro generators act as quadratic expressions in Kac-Moody generators, via the so-called Sugawara construction.} of the Virasoro representation). 

\ul{Representations of the mapping class group}.
Additionally, a part of the data of CFT (the space of conformal blocks) naturally carries a representation of the mapping class group of the surface.

\subsection{Motivation from topology of $\MM_{g,n}$ and enumerative geometry}
In topological conformal field theories (such as Witten's A-model), special correlators define closed differential forms on the moduli space of algebraic curves $\overline{\MM}_{g,n}$ (with Deligne-Mumford compactification) yielding interesting elements in de Rham cohomology of the moduli space. Periods of these forms over compactification cycles satisfy certain quadratic relations (equivalently, the corresponding generating functions satisfy the so-called Witten-Dijkgraaf-Verlinde-Verlinde equation).

In the A-model, such periods are the Gromov-Witten invariants -- counts of holomorphic curves in the target K\"ahler manifold $X$ intersecting a given collection of cycles.

\section{CFT as a system of correlators}\label{ss: CFT as a system of correlators}

CFT is often studied in a simplified setting (as compared to the functorial QFT picture): instead of surfaces with boundary, one considers surfaces with punctures (marked points). 

\begin{figure}[H]
\begin{center}
\includegraphics[scale=1]{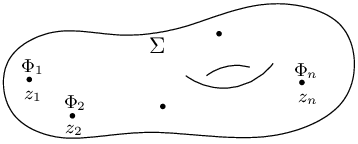}
\end{center}
\caption{Surface with punctures decorated by fields.}
\end{figure}

One can think of punctures as ``infinitesimally small circles.'' Instead of partition function on surfaces with boundary, one studies $n$-point correlation functions 
\begin{equation}\label{l4 corr}
\langle \Phi_1(z_1)\cdots \Phi_n(z_n) \rangle  \quad \in \CC
\end{equation}
depending on a configuration of $n$ distinct ordered points on the Riemann surface $\Sigma$ and on a choice of vectors $\Phi_1,\ldots,\Phi_n$ in the vector space $V$ (the space of states $\HH_{S^1}$ in the language of functorial QFT). There are different possible names for elements of $V$:
\begin{itemize}
\item Fields (or ``composite fields'') at a point $z$.\footnote{Not to be confused with the fields of the Lagrangian formulation of the underlying classical field theory. 
}
\item Point observables.
\item Operators.
\end{itemize}

In the path integral language, (\ref{l4 corr}) corresponds to the expression 
\begin{equation}\label{l4 corr path integral}
 \int \mc{D}\phi\;\; e^{-S(\phi)} \Phi_1(z_1)\cdots \Phi_n(z_n) 
\end{equation}
where expressions $\Phi_i$ under the path integral are point classical observables  -- functions of the jet of the classical field $\phi$ at $z_i$ (in the notations we are blurring the distinction between classical observables and corresponding quantum observables).

\ul{Subtlety}: to make sense of a correlator (\ref{l4 corr}) as a number, one needs to fix a complex coordinate chart around each point $z_1,\ldots,z_k$.\footnote{Or at least one needs to fix an $\infty$-jet of complex coordinate charts centered at each $z_i$ -- a ``formal'' complex coordinate chart at $z_i$.}

For particularly nice elements of $V$ -- so-called ``primary'' fields (see below), one doesn't need the full data of coordinate charts -- it is sufficient to have a trivialization of tangent spaces $T_{z_i} \Sigma$, thus the correlators of primary fields can be regarded as a section
\begin{equation}\label{l4 corr as in Gamma(Conf,L)}
\langle \Phi_1\cdots \Phi_n \rangle \in \Gamma(\mr{Conf}_n(\Sigma),\LL)
\end{equation}
over the open configuration space 
$\mr{Conf}_n(\Sigma)=\{(z_1,\ldots,z_n)\in\CC^n | z_i\neq z_j \;\mr{if}\; i\neq j\}$
of $n$ ordered points on $\Sigma$, of a certain complex line bundle $\LL$ depending of the fields $\Phi_i$. In (\ref{l4 corr as in Gamma(Conf,L)}) we allow points $z_1,\ldots,z_n$ to move around on a fixed Riemann surface $\Sigma$ (i.e. the complex structure is fixed). 

We can also allow the complex structure to change (then movement of points is absorbed into changes of complex structure). Then the correlator of primary fields becomes a section of certain complex line bundle $\til\LL$ over the moduli space of complex structures on $\Sigma$ with $n$ punctures:
\begin{equation}\label{l4 corr as in Gamma(M_g,n,L)}
\langle \Phi_1\cdots \Phi_n \rangle \in \Gamma(\MM_{\Sigma,n},\til\LL)
\end{equation}

\begin{remark}
For general (possibly non-primary) $\Phi_i$, one needs to replace $\MM_{\Sigma,n}$ in (\ref{l4 corr as in Gamma(M_g,n,L)}) with an enhanced version $\MM^\mr{coor}_{\Sigma,n}$ of the moduli space where each puncture carries a formal coordinate system. Put another way, when defining $\MM^\mr{coor}_{\Sigma,n}$ as complex surfaces modulo diffeomorphisms, one should only quotient by diffeomorphisms which have the $\infty$-jet of identity at each $z_i$. In this setup the line bundle over $\MM^\mr{coor}_{\Sigma,n}$ is trivial and the general $n$-point correlator is a function on $\MM^\mr{coor}_{\Sigma,n}$ with values in $\mr{Hom}(V^{\otimes n},\CC)$, invariant under formal conformal vector fields at the punctures $z_i$ (acting both on $V$ at $z_i$ and on the formal coordinate system):
\begin{equation}
\langle \cdots \rangle\in C^\infty(\MM^\mr{coor}_{\Sigma,n},\mr{Hom}(V^{\otimes n},\CC))^\mr{formal\;c.v.f.\;at\;punctures}.
\end{equation}
\end{remark}

\subsection{The action of conformal vector fields on $V$}
\label{sss 1.7.1 action of cvf on V}
The space $V$ comes equipped with a projective representation of the Lie algebra $\mc{A}^\mr{loc}$ of conformal vector fields on $\CC^*$ (real parts of meromorphic vector fields with only pole at zero allowed),
\begin{equation}\label{l4 rho}
\rho\colon \mc{A}^\mr{loc}\ra \mr{End}(V)
\end{equation}
This representation can be thought of as the complexified (in a certain sense)
infinitesimal version of the representation (\ref{l2 rho}) in the picture of functorial QFT,  see Section \ref{sss: double complexification} below.\footnote{Remark: representation (\ref{l4 rho}) contains strictly more information (morally, ``twice more'') than the action of diffeomorphisms (\ref{l2 rho}). 
For instance, the difference of conformal weights $h-\bar{h}$ of a field (see Section \ref{sss: conf weights} below) corresponds to the action of rotation around the origin and is a part of the data of (\ref{l2 rho}), while $h+\bar{h}$ corresponds to the action of dilation, which infinitesimally is a vector field on  $S^1$ not tangential to $S^1$, and it is not a part of the data (\ref{l2 rho}) but is a part of the data (\ref{l4 rho}).
}

In the common nomenclature, the standard generators of  $\mc{A}^\mr{loc}_\CC$ -- the 
complexified 
Lie algebra of conformal vector fields on $\CC^*$ --
 are denoted
\begin{equation}\label{l4 l_n}
l_n\colon= -z^{n+1}\frac{\dd}{\dd z} , \quad \bar{l}_n\colon= -\bar{z}^{n+1}\frac{\dd}{\dd \bar{z}}, \qquad n\in \ZZ 
\end{equation}
The corresponding operators acting on $V$ are denoted
\begin{equation}\label{l4 L_n}
L_n\colon = \rho(l_n),\quad \bar{L}_n\colon= \rho(\bar{l}_n)
\end{equation}

\subsection{The ``double complexification''} \label{sss: double complexification}
The Lie algebra $\mc{A}^\mr{loc}_\CC= \mc{A}^\mr{loc}\otimes \CC$ conveniently splits into holomorphic and antiholomorphic copies of complex Witt\footnote{Witt algebra is the Lie algebra of 
 meromorphic vector fields on $\CC$ with only pole at $0$ allowed, see Section \ref{sss: Witt algebra}. In terms of (\ref{l4 l_n}), it is $\mr{Span}_\CC(\{l_n\}_{n\in \ZZ})$. } algebra and its central extension splits similarly into two copies of complex Virasoro algebras. The Lie algebra $\mc{A}^\mr{loc}_\CC$ can be seen, in a sense, as  ``double complexification'' of the Lie algebra of diffeomorphisms of a circle:
\begin{equation}\label{l4 double complexification}
\begin{CD}
\mathfrak{X}(S^1) @>\mr{complexification}>>  \mc{A}^\mr{loc} @>\mr{complexification}>{\otimes\CC}> \underset{\simeq \mr{Witt}\oplus \overline{\mr{Witt}}}{\mc{A}^\mr{loc}_\CC} \\
@AAA @AAA \\
\mr{Diff}(S^1) @>>\mr{``complexification''}> \mr{Ann}
\end{CD}
\end{equation}
Here $\mathfrak{X}(S^1)$ is the Lie algebra of real vector fields on a circle, $\mr{Ann}$ is the Segal's semi-group of annuli \cite{Segal88} -- the full subcategory of Segal's cobordism category consisting of cobordisms $S^1\xra{\Sigma} S^1$ (with conformal structure on $\Sigma$ and parametrization of boundary circles). The vertical arrows are the transitions from a Lie group or semi-group to its Lie algebra. The first complexification in the top row  of (\ref{l4 double complexification}) allows vector fields on $S^1$ that are not necessarily tangential to $S^1$ and then extends them to real conformal vector  fields (which are special sections of the \emph{non-complexified} tangent bundle $T\CC^*$ of $\CC^*$ seen as a smooth 2-manifold) on $\CC^*$. The second complexification allows complex-valued conformal vector fields on $\CC^*$ -- special sections of the complexified tangent bundle $T_\CC\CC^*$. Explicitly, one has 
{
\begin{equation}
\begin{aligned}
\mathfrak{X}(S^1)&=\mr{Span}_\RR(\{\cos n\theta\, \dd_\theta\}_{n\geq 0}, \{\sin n\theta\, \dd_\theta\}_{n\geq 1})\\
&=
\mr{Span}_\RR\left(\left\{-\frac{i}{2}(l_n+l_{-n}-\bar{l}_{n}-\bar{l}_{-n})\right\}_{n\geq 0}, \left\{-\frac12 (l_n-l_{-n}+\bar{l}_n-\bar{l}_{-n})\right\}_{n\geq 1}\right),\\
\mc{A}^\mr{loc}
&=
\mr{Span}_\RR\left(\left\{\frac{l_n+\bar{l}_n}{2}\right\}_{n\in\ZZ},\left\{\frac{l_n-\bar{l}_n}{2i}\right\}_{n\in\ZZ}\right),
\\
\mc{A}^\mr{loc}_\CC &= 
\mr{Span}_\CC\Big( \{l_n,\bar{l}_n\}_{n\in\ZZ}\Big) .
\end{aligned}
\end{equation}
}
The bottom horizontal arrow in (\ref{l4 double complexification}) is explained in \cite{Segal88}.


\subsection{Grading on $V$ by conformal weights}\label{sss: conf weights}
The complexified Lie algebra $\mc{A}^\mr{loc}_\CC$ is naturally graded by elements of $\ZZ\oplus \ZZ$. In particular, the meromorphic vector field $z^{n+1}\frac{\dd}{\dd z}$ on $\CC^*$ has degree $(n,0)$ and the antimeromorphic vector field $\bar{z}^{n+1}\frac{\dd}{\dd \bar{z}}$ has degree $(0,n)$.  Accordingly, $V$ carries a grading by ``conformal weight'' $(h,\bar{h})\in \RR\oplus \RR$. A field $\Phi\in V$ is said to have conformal weight $(h,\bar{h})$ if 
\begin{equation}
\rho\left(-z\frac{\dd}{\dd z}\right)\circ \Phi = h \Phi,\quad \rho\left(-\bar{z}\frac{\dd}{\dd\bar{z}}\right)\circ\Phi= \bar{h} \Phi.
\end{equation} 
The grading on the Lie algebra is compatible with the grading on the module: acting by an element of $\mc{A}^\mr{loc}_\CC$ of degree $(n,\bar{n})$ shifts the conformal weight of a vector in $V$ as $(h,\bar{h})\ra (h-n,\bar{h}-\bar{n})$.\footnote{
We emphasize that in $\bar{h}$, $\bar{n}$, the bar does not mean complex conjugation.
}
One can split $V$ into graded components:
$$V=\bigoplus_{(h,\bar{h})\in \Lambda} V^{(h,\bar{h})}.$$
Here $\Lambda \subset \RR\oplus \RR$ is the set of admissible conformal weights (dependent on a particular CFT model); $\Lambda$ is necessarily a $\ZZ\oplus \ZZ$-module. 

\begin{remark} \label{l4 rem: h-hbar in Z}
The condition that the representation $\rho$ of $\mc{A}^\mr{loc}_\CC$ comes from a representation of the group $\mr{Diff}(S^1)$ implies in particular that rotation by the angle $2\pi$ should act on a field as identity (or, in the notations (\ref{l4 L_n}), one should have $e^{2\pi i (L_0-\bar{L}_0)}=\mr{id}$). That implies  
\begin{equation}\label{l4 h-barh in Z}
h-\bar{h}\in \ZZ
\end{equation}
for any element of $V$.\footnote{One can consider models where (\ref{l4 h-barh in Z}) is violated, but in this case correlators are multivalued. In other words, correlators are functions (or sections of a line bundle) not on the configuration space of $n$-points but rather on its covering space.}
\end{remark}

\subsection{Conformal Ward identity} 
Conformal Ward identity is the following symmetry property of correlators. Fix a Riemann surface $\Sigma$ with punctures $z_1,\ldots,z_n$ and fix fields $\Phi_1,\ldots,\Phi_n\in V$. Let  $v$ be a conformal vector field on $\Sigma$ with singularities allowed at $\{z_i\}$ -- the real part of a meromorphic vector field with poles allowed at $z_1,\ldots,z_n$ (we will denote the Lie algebra of such vector fields $\mc{A}_{\Sigma,\{z_i\}}$). Then we
have the Ward identity
\begin{equation}\label{l4 conformal Ward identity}
\underbrace{\sum_{i=1}^n \langle \Phi_1(z_1)\cdots \rho(\mr{Laurent}_{z_i}(v))\circ \Phi_i(z_i)\cdots \Phi_n(z_n)  \rangle}_{\LL_v \langle \Phi_1(z_1)\cdots \Phi_n(z_n) \rangle}  = 0 .
\end{equation}
Here the left hand side can be thought of 
as the ``Lie derivative of the correlator along $v$;'' 
$$\mr{Laurent}_{z_i}\colon \mc{A}_{\Sigma,\{z_i\}}\ra \mc{A}^\mr{loc}$$ 
is the Laurent expansion of a (real part of the) meromorphic vector field at the point $z_i$.

One can think of (\ref{l4 conformal Ward identity}) as a version of naturality (\ref{l1 Ward identity}) in the functorial QFT setting.\footnote{In this version, one passes (a) from finite boundaries to infinitesimal ones (punctures), (b) from Lie group action to the associated Lie algebra action, (c) one complexifies the Lie algebra, which corresponds to allowing vector fields not tangential to the boundary.}

\subsection{The ``$L_{-1}$ axiom''}
Representation (\ref{l4 rho}) is supposed to satisfy the following natural property:
{\small 
\begin{eqnarray}
\langle \Phi_1(z_1)\cdots \rho\left(\frac{\dd}{\dd w}\right) \circ \Phi_i(z_i) \cdots \Phi_n(z_n) \rangle
&= &
\frac{\dd}{\dd z_i} \langle \Phi_1(z_1)\cdots \Phi_i(z_i)\cdots \Phi_n(z_n) \rangle \label{l4 L_-1 axiom} \\
\langle \Phi_1(z_1)\cdots \rho\left(\frac{\dd}{\dd \bar{w}}\right) \circ \Phi_i(z_i) \cdots \Phi_n(z_n) \rangle
&= &
\frac{\dd}{\dd \bar{z}_i} \langle \Phi_1(z_1)\cdots \Phi_i(z_i)\cdots \Phi_n(z_n) \rangle \label{l4 bar L_-1 axiom}
\end{eqnarray}
}
for any surface with any collection of punctures and fields; $w$ is a local complex coordinate centered at $z_i$.

 Thus, (\ref{l4 L_-1 axiom}) says that acting by 
$L_{-1}$
 on a field under the correlator is tantamount to taking  the holomorphic derivative in the position of the corresponding puncture (up to a sign). Similarly, (\ref{l4 bar L_-1 axiom}) says that acting by 
 $\bar{L}_{-1}$ is tantamount to taking the antiholomorphic derivative in the position.

\subsection{Some special fields} \leavevmode \\ \indent
\ul{Identity field.} The identity field $\mathbb{1}\in V^{(0,0)}$ corresponds in the functorial QFT picture to the vacuum vector  $|\mr{vac}\rangle\in \HH_{S^1}$ -- the partition function of a disk.  The field $\mathbb{1}$ is characterized by the property that for any fields $\Phi_1,\ldots,\Phi_n$ and any points $z_0,z_1,\ldots,z_n$ on $\Sigma$, one has
\begin{equation}
\langle \mathbb{1}(z_0) \Phi_1(z_1)\cdots \Phi_n(z_n) \rangle  = \langle  \Phi_1(z_1)\cdots \Phi_n(z_n) \rangle 
\end{equation}
Put another way, putting the field $\mathbb{1}$ at a puncture effectively forgets that puncture.

\ul{Stress-energy tensor.} The stress-energy tensor $T\in V^{(2,0)}\oplus V^{(0,2)}$ is defined as
\begin{equation}
T\colon=\rho\left(\mr{Re}\left(\frac{-2}{z}\dd_z
\right)\right)\circ \mathbb{1} 
\end{equation}
Or in terms of standard notations (\ref{l4 L_n}) introduced above,
\begin{equation}
T=(L_{-2}+\bar{L}_{-2})\circ \mathbb{1}
\end{equation}

\marginpar{Lecture 5\\
9/2/2022}
\ul{Primary fields.}
A field $\Phi\in V^{(h,\bar{h})}$ is said to be primary if it is a highest weight vector under the action of $\mc{A}^\mr{loc}_\CC$, i.e., if
\begin{equation}
L_n \Phi=0,\; \bar{L}_n\Phi=0 \quad \mr{for\; any\;}n>0.
\end{equation}
Equivalently, field $\Phi$ is primary if it is annihilated by conformal vector fields which vanish to second order at the origin (the point of insertion of $\Phi$).

It is natural to assign to a primary field of conformal weight $(h,\bar{h})$ a complex line bundle 
\begin{equation} \label{l5 line bun primary}
\LL^{h,\bar{h}}=K^{\otimes h}\otimes \bar{K}^{\otimes \bar{h}}
\end{equation} over $\Sigma$ where 
$$ K=(T^{1,0})^*\Sigma, \quad \bar{K}=(T^{0,1})^*\Sigma $$
are the holomorphic and antiholomorphic cotangent bundles of $\Sigma$, respectively.

Then the correlator (\ref{l4 corr as in Gamma(Conf,L)}) of primary fields $\Phi_i\in V^{h_i,\bar{h}_i}$ is a section over $\mr{Conf}_n(\Sigma)$ of the line bundle
\begin{equation}
\LL=\iota^*\boxtimes_{i=1}^n \LL^{h_i,\bar{h}_i}
\end{equation}
where  $\iota\colon \mr{Conf}_n(\Sigma)\ra \Sigma^{\times n}$ is the natural inclusion. 

From the standpoint of the moduli space of complex structures, the correlator of primary fields (\ref{l4 corr as in Gamma(M_g,n,L)}) is a section of the line bundle
\begin{equation}
\til\LL=\left(\bigotimes_{i=1}^n \LL^{h_i,\bar{h}_i}_i\right)\otimes \LL_\mr{anomaly}
\end{equation}
over the moduli space $\MM_{\Sigma,n}$. Here  $\LL^{h_i,\bar{h}_i}_i$ is the line bundle (\ref{l5 line bun primary}) assoicated to $i$-th puncture on $\Sigma$; 
\begin{equation}
\LL_\mr{anomaly}=(\mr{Det}\,\bar\dd)^{\otimes c} \otimes (\mr{Det}\,\dd)^{\otimes \bar{c}}
\end{equation}
is the effect of conformal anomaly, with $(c,\bar{c})$ the central charge (see Section \ref{sss: conf anomaly} and footnote \ref{l2 footnote: conf anomaly bundle}).

\subsection{Operator product expansions}\label{sss: OPEs intro}
When studying CFT as a system of correlators,
instead of sewing along boundaries, one studies OPEs (``operator product expansions'') governing the singularities of correlators of fields (\ref{l4 corr as in Gamma(Conf,L)}) as the point of insertion of one field approaches another, $z_i\ra z_j$.

\begin{figure}[H]
\begin{center}
\includegraphics[scale=1]{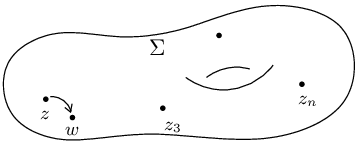}
\end{center}
\caption{One puncture approaching another.}
\end{figure}

An OPE is an expression of the form
\begin{equation}\label{l5 OPE}
\Phi_1(z) \Phi_2(w) \underset{z\ra w}{\sim} \sum_{\til\Phi} f^{\til\Phi}_{\Phi_1\Phi_2}(z,w) \til\Phi(w)+\mr{reg}
\end{equation}
Here on the right hand side: 
\begin{itemize}
\item The sum is over a basis $\{\til\Phi\}$ in $V$. 
\item Coefficient functions $f^{\til\Phi}_{\Phi_1\Phi_2}(z,w)$
are some real-analytic functions on a neighborhood of $\mr{Diag}\subset \Sigma\times \Sigma$, singular on $\Sigma$.
\item $\mr{reg}$ stands for terms that are continuous (in special cases, even holomorphic) on the diagonal $z=w$.
\end{itemize}

In (\ref{l5 OPE}) 
we could have chosen instead to express the operator product in terms of fields $\til\Phi$ at $z$ rather than $w$ (or even, say, at some point between $z$ and $w$); this choice affects the coefficients in the OPE.

The expression (\ref{l5 OPE}) is understood as a substitution that one can perform under the correlator of $\Phi_1(z)$, $\Phi_2(w)$, and any collection of other fields away from $z$ and $w$, in the asymptotics $z\ra w$:
\begin{equation}\label{l5 OPE under corr}
\langle \Phi_1(z) \Phi_2(w) \underbrace{\Phi_3(z_3)\cdots \Phi_n(z_n)}_{\mr{away\;from\;}z,w} \rangle \underset{z\ra w}{\sim} 
\sum_{\til\Phi} f^{\til\Phi}_{\Phi_1\Phi_2}(z,w) \langle \til\Phi(w) \Phi_3(z_3)\cdots \Phi_n(z_n) \rangle + \mr{reg}
\end{equation}

Thus, singularities of $n$-point correlators are governed by $(n-1)$-point correlators. 

Note: the OPE (\ref{l5 OPE}) does not depend on the collection of ``test fields'' $\Phi_3,\ldots, \Phi_n$ in the correlator (\ref{l5 OPE under corr}).

\ul{Idea.} 
One wants to recover $n$-point correlators functions from $(n-1)$-point correlators using the OPEs (\ref{l5 OPE under corr}), ultimately reducing everything to 3-point correlators.  The idea is similar to recovering a meromorphic function from knowing the principal part of its Laurent expansion at each pole. 

The idea that all correlators can be derived from 3-point correlators is close to the idea in the functorial QFT approach, that one can cut any surface into  ``pairs of pants'' (spheres with three holes).\footnote{Cf. Section \ref{sec 7.5} below.}

Another form of that thought: an $n$-point correlator on a plane can be seen as a sewing of a collection of annuli with one hole.\footnote{Cf. the discussion of radial quantization in Section \ref{ss 4.3.4 radial quantization}.}

\begin{remark} The asymptotic of two punctures on $\Sigma$ approaching one another from the standpoint of the moduli space of  curves $\MM_{\Sigma,n}$ corresponds to approaching a nodal curve, where punctures $z,w$ are in one component, connected by a ``neck'' to the other component, where the remaining punctures $z_3,\ldots,z_n$ are (where we put the ``test fields'').
\begin{figure}[H]
\begin{center}
\includegraphics[scale=1]{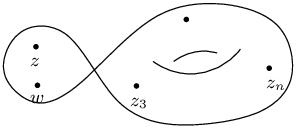}
\end{center}
\caption{Nodal curve.}
\end{figure}
\end{remark}

\chapter{Elements of conformal geometry}

\section{Conformal maps}
Reference: \cite{Schottenloher}.

Let $(M,g)$ be a Riemannian (or pseudo-Riemannian) manifold.

\begin{definition} A \emph{Weyl transformation} is a change of metric on a (pseudo-)Riemannian manifold $(M,g)\ra (M,g'=\Omega\cdot g)$ consisting in multiplying the metric by an everywhere positive function $\Omega\in C^\infty_{>0}(M)$ (the ``Weyl factor'').

Two metrics on $M$ differing by a Weyl transformation are said to be \emph{conformally equivalent}. A metric on $M$ modulo conformal equivalence is called a \emph{conformal structure} on $M$.
\end{definition}

\begin{definition}
A smooth map of (pseudo-)Riemannian manifolds $\phi\colon (M,g)\ra (M',g')$ is a \emph{conformal map} if 
\begin{equation} 
\phi^*g'=\Omega\cdot g
\end{equation}
for some positive function $\Omega\in C^\infty_{>0}(M)$ (the \emph{conformal factor} associated to $\phi$).

One says that two (pseudo-)Riemannian manifolds $(M,g)$ and $(M',g')$ are conformally equivalent if there exists a conformal diffeomorphism
\begin{equation}
\phi\colon (M,g)\ra (M',g').
\end{equation}
\end{definition}

Some immediate properties of conformal maps:
\begin{enumerate}[(a)]
\item If $\phi_1\colon (M,g)\ra (M',g')$ and $\phi_2\colon (M',g')\ra (M'',g'')$ are two conformal maps with conformal factors $\Omega',\Omega''$, then $\phi_2\circ\phi_1\colon (M,g)\ra (M'',g'')$ is a conformal map with $\Omega=\phi_1^*\Omega_2\cdot \Omega_1$.
\item If $\phi\colon (M,g)\ra (M',g')$ is a conformal diffeomorphism with conformal factor $\Omega$, then $\phi^{-1}\colon (M',g')\ra (M,g)$ is also a conformal diffeomorphism with conformal factor $(\phi^{-1})^*\Omega^{-1}$.
\item \label{l5 (c)} If $\phi\colon (M,g)\ra (M',g')$ is a conformal map with conformal factor $\Omega$ and $\Lambda\in C^\infty_{>0}(M)$, $\Lambda'\in C^\infty_{>0}(M')$ are positive functions, then
$\phi\colon (M,\Lambda\cdot g)\ra (M',\Lambda'\cdot g')$ is also a conformal map, with conformal factor $ \frac{\phi^*\Lambda'}{\Lambda}\cdot\Omega$. 

In particular, the notion of a conformal map between manifolds equipped with just conformal structure (rather than metric) is well-defined, but the conformal factor of such a map is not well-defined.
\end{enumerate}

\begin{definition} 
Conformal automorphisms $\phi\colon(M,g)\ra (M,g)$ form a group -- the \emph{conformal group} $\mr{Conf}(M,g)$. By (\ref{l5 (c)}) above, this group depends only on the conformal class of $g$. 
\end{definition}

\section{Examples of conformal maps}
\begin{example}\label{l5 example: isometries}
Isometries of $(M,g)$ form a subgroup of $\mr{Conf}(M,g)$ (characterized by the property $\Omega=1$).
\end{example}

\begin{example}
 Translations and $O(n)$-rotations of Euclidean space $\RR^n$ (with the standard metric $g=(dx^1)^2+\cdots + (dx^n)^2$) are conformal automorphisms:
\begin{equation}
ISO(n)= O(n)\ltimes \RR^n \subset \mr{Conf}(\RR^n).
\end{equation}
(This is a special case of Example \ref{l5 example: isometries}.)

More generally, one can consider the space $\RR^{p,q}$ with metric $g=(dx^1)^2+\cdots+(dx^p)^2-(dx^{p+1})^2-\cdots-(dx^{p+q})^2$. Then one has translations and $O(p,q)$-rotations as isometries (and in particular, conformal automorphisms) of $\RR^{p,q}$.
\end{example}

\begin{example}[Dilations]
Fix a nonzero real number $\lambda$.
The dilation (or 
scaling) map
\begin{equation}
\begin{array}{ccc}
\RR^n&\ra & \RR^n\\
\vec{x}&\mapsto & \lambda\vec{x}
\end{array}
\end{equation}
is a conformal map with $\Omega=\lambda^2$. (One can replace $\RR^n$ with $\RR^{p,q}$ in this example.)
\end{example}

\marginpar{Lecture 6,\\
9/5/2022}

\begin{example}[Stereographic projection]
Let 
\begin{equation}
S^n=\{ (x^0,\ldots,x^n)\in \RR^n\;|\;  \sum_{i=0}^n (x^i)^2=1\}
\end{equation}
be the unit sphere in $\RR^{n+1}$ with $N=(1,0,\ldots,0)$ the North pole. Consider the stereographic projection
\begin{equation}
\begin{array}{cccc}
\phi\colon & S^n\backslash\{N\} & \ra & \RR^n \\
& (x^0,x^1,\ldots, x^n) & \mapsto & \frac{1}{1-x^0}(x^1,\ldots,x^n)
\end{array}
\end{equation}
The map $\phi$ is a conformal diffeomorphism (w.r.t. the round metric on $S^n$ -- induced from the standard flat metric on the ambient $\RR^{n+1}$ -- and w.r.t. the standard metric on $\RR^n$). The conformal factor is $\Omega=\frac{1}{(1-x^0)^2}$.
\end{example}

\begin{example} \label{l6 ex: Conf(R^1)}
Any diffeomorphism $\phi\colon \RR\ra \RR$ is a a conformal map (w.r.t. the metric $g=(dx)^2$ on the source and the target), with $\Omega=\left(\frac{d\phi}{dx}\right)^2$. 
\end{example}

\begin{example}[Inversion]
The map 
\begin{equation}
\begin{array}{cccc}
\phi\colon & \RR^n\backslash\{0\} & \ra & \RR^n\backslash\{0\} \\
& \vec{x} &\mapsto  & \frac{\vec{x}}{|| \vec{x} ||^2}
\end{array}
\end{equation}
is an orientation-reversing diffeomorphism. It is a conformal map (w.r.t. the metric induced from the standard one on $\RR^n$), with $\Omega= \frac{1}{||\vec{x}||^4}$.
\end{example}

The following lemma gives a full classification of local holomorphic maps on $\RR^2$.
\begin{lemma} \label{l6 lm}
Let $D\subset \RR^2$ be an open set. For a smooth map $\phi\colon D\ra \RR^2$ the following statements are equivalent:
\begin{enumerate}[(i)]
\item \label{l6 lm (i)} $\phi$ is conformal (w.r.t. the standard metric on source and target) 
\item  \label{l6 lm (ii)} $\phi$ is either holomorphic or antiholomorphic (we are identifying $\RR^2$ with $\CC$) and has no critical points in $D$.
\end{enumerate}
\end{lemma}
\begin{proof}
Let $x,y$ be the real coordinates on $D$ and let $u,v$ be the coordinates on the target $\RR^2$. Let $z=x+iy$ be the complex coordinate on $D$
and let $w=u+iv$ be the complex coordinate on the target $\RR^2=\CC$.
The pullback of the target metric $g=du^2+dv^2=dw \,d\bar{w}$ is then
\begin{equation}\label{l6 lm phi^* g on C}
\phi^* g = \phi^* (dw \,d\bar{w}) = 
\dd_z \phi\, \dd_z \bar\phi (dz)^2 + \dd_{\bar{z}} \phi\, \dd_{\bar{z}} \bar\phi (d\bar{z})^2 + (\dd_z \phi\, \dd_{\bar{z}}\bar\phi+\dd_{\bar{z}}\phi\, \dd_z \bar\phi) dz d\bar{z}
\end{equation}
We are using the standard notations for holmorphic/antiholomorphic derivatives: 
\begin{equation}
\dd_z=\frac{\dd}{\dd z}=\frac12(\dd_x-i \dd_y),\quad \dd_{\bar{z}}=\frac{\dd}{\dd \bar{z}}=\frac12(\dd_x+i \dd_y).
\end{equation}

(\ref{l6 lm (i)})$\Rightarrow$(\ref{l6 lm (ii)}):  If we know that $\phi$ is conformal, then 
\begin{equation}\label{l6 lm phi^* g=Omega g}
\phi^*g=\Omega g_D=\Omega dzd\bar{z}
\end{equation}
for some positive function $\Omega$, thus coefficients of  $(dz)^2$ and $(d\bar{z})^2$ must vanish. For this there are two possibilites:
\begin{enumerate}[(a)]
\item $\dd_{\bar z} \bar\phi=0$ (and thus also $\dd_{z}\bar\phi=0$), i.e., $\phi$ is holomorphic. In this case, comparing the $dz d\bar{z}$ term in (\ref{l6 lm phi^* g on C}) with (\ref{l6 lm phi^* g=Omega g}), we have 
\begin{equation} \label{l6 lm Omega1}
\Omega=|\dd_z \phi|^2.
\end{equation}
\item $\dd_z \phi=0$ (and thus also $\dd_{\bar{z}}\bar\phi=0$), i.e., $\phi$ is antiholomorphic. In this case we have
\begin{equation}\label{l6 lm Omega2}
\Omega=|\dd_{\bar{z}} \phi|^2.
\end{equation}
\end{enumerate}
Note that in both cases $\phi$ cannot have critical points, since  there $\Omega$ would vanish (by (\ref{l6 lm Omega1}), (\ref{l6 lm Omega2})).

(\ref{l6 lm (ii)})$\Rightarrow$ (\ref{l6 lm (i)}): Assume $\phi$ is holomorphic with no critical points. Then $\dd_{\bar z}\phi=\dd_{z}\bar\phi=0$, thus by (\ref{l6 lm phi^* g on C}), $\phi^*g = |\dd_z \phi|^2 dzd\bar{z}$. Hence, $\phi$ is conformal with $\Omega=|\dd_z \phi|^2$ which is positive, since $\phi$ has no critical points. The antiholomorphic case is similar.
\end{proof}

\begin{example}[M\"obius transformations]
The Lie group 
\begin{equation}
PSL_2(\CC)= \left\{\left.\left( 
\begin{array}{cc}
a&b\\c&d
\end{array}
\right) \; \right|\; a,b,c,d\in \CC,\; ad-bc=1 \right\} / \ZZ_2,
\end{equation}
where quotient by $\ZZ_2$ identifies a matrix and its negative, acts on the Riemann sphere $\bar\CC=\CP^1$ by fractional-linear transformations (or ``M\"obius transformations'')
\begin{equation}\label{l6 Mobius transf}
\left( 
\begin{array}{cc}
a&b\\c&d
\end{array}
\right)\colon \quad z\mapsto z'=\frac{az+b}{cz+d}
\end{equation}
For any element of $PSL_2(\CC)$, (\ref{l6 Mobius transf}) is a conformal map with conformal factor (w.r.t. the standard metric on $\RR^2$)\footnote{
Note that if $c\neq 0$, then (\ref{l6 Mobius Omega}) vanishes at the point $\{\infty\} \in\bar{\CC}$ (and also explodes at $z=-\frac{d}{c}$)  which seems to contradict that $\Omega$ should be a positive (and everywhere defined) function. This is to do with the fact that we chose a metric on $\CC$ which does not extend to the point  $\{\infty\}$. One can choose another metric in the same conformal class which extends to $\{\infty\}$ (e.g. the round metric on $\bar\CC$ seen as $S^2$), then $\Omega$ relative to that metric will be truly everywhere positive and everywhere defined.
}
\begin{equation}\label{l6 Mobius Omega}
\Omega=\left|\frac{dz'}{dz} \right|^2 = \frac{1}{|cz+d|^{4}}
\end{equation}

For instance, one has the following interesting classes of M\"obius transformations:
\begin{enumerate}[(a)]
\item \label{l6 Mobius (a)} Element $\left( 
\begin{array}{cc}
1&b\\0&1
\end{array}
\right)$, with $b\in\CC$, acts by translation $z\mapsto z+b$.
\item \label{l6 Mobius (b)} $\left( 
\begin{array}{cc}
e^{i\phi/2}&0\\0& e^{-i\phi/2}
\end{array}
\right)$ acts by rotation by angle $\phi$,  $z\mapsto e^{i\phi} z$.
\item \label{l6 Mobius (c)} $\left( 
\begin{array}{cc}
\lambda^{1/2}& 0\\0& \lambda^{-1/2}
\end{array}
\right)$ with $\lambda>0$ acts by dilation $z\mapsto \lambda z$.
\item  \label{l6 Mobius (d)} 
$\left( 
\begin{array}{cc}
1&0\\c&1
\end{array}
\right)$ with $c\in \CC$ yields a \emph{special conformal transformation} (SCT), ${z\mapsto \frac{z}{cz+1}=\frac{1}{c+z^{-1}} }$. In particular, it maps  $-c^{-1}\mapsto \infty$ and $\infty\mapsto c^{-1}$.
\end{enumerate}
Note that translations, rotations and dilations are conformal automorphisms of $\CC\subset \bar{\CC}$, but SCTs are not -- they have a pole on $\CC$. 
\end{example}

\begin{example}
Consider the exponential map
\begin{equation}  
\CC/2\pi i \ZZ \xra{\exp} \CC\backslash\{0\}  
\end{equation}
from the cylinder to the punctured plane. By Lemma \ref{l6 lm}, it is a conformal diffeomorphism, with $\Omega=e^{z+\bar{z}}$ (w.r.t. to the standard Euclidean metric on the source and target).
\end{example}

\section{Conformal vector fields}
One can think of conformal vector fields as ``infinitesimal conformal maps.''
\begin{definition}
A conformal vector field on a (pseudo-)Riemannian manifold $(M,g)$ is a vector field $v\in \mathfrak{X}(M)$ satisfying 
\begin{equation}\label{l6 cvf defining eq}
\LL_v g =\omega\,g
\end{equation}
for some function $\omega\in C^\infty(M)$ (the inifitesimal conformal factor); $\LL_v$ stands for the Lie derivative along $v$.\footnote{Note that there is no positivity constraint on $\omega$.}
We denote the set of all conformal vector fields on $(M,g)$  by $\mr{conf}(M,g)$.
\end{definition}

Conformal vector fields form a Lie subalgebra in the Lie algebra of all vector fields w.r.t. the standard Lie bracket of vector fields:
\begin{equation}
\mr{conf}(M,g) \subset \mathfrak{X}(M).
\end{equation}

One has a natural inclusion
\begin{equation}
\iota\colon \mr{Lie}\left(\mr{Conf}(M,g)\right) \hra \mr{conf}(M,g)
\end{equation}
of the Lie algebra of the group of conformal automorphisms into the Lie algebra of conformal vector fields (by taking derivative the at $t=0$ of a 1-parametric subgroup).  If $M$ is compact, $\iota$ is an isomorphism (one can construct the flow of a conformal vector field $v\mapsto \mr{Flow}_t (v)$ yielding a 1-parametric subgroup of $\mr{Conf}(M,g)$). However, for $M$ noncompact, conformal vector fields can fail to be complete, so only a part of elements of $\mr{conf}(M,g)$ can be exponentiated.


\section{Conformal symmetry of $\RR^{p,q}$ with $p+q>2$}
\subsection{Conformal vector fields on $\RR^{p,q}$}
Consider the space $\RR^{p,q}$ with its standard metric $g=\eta_{ij}dx^i dx^j$ with the matrix $\eta_{ij}$ being
\begin{equation}
 \eta_{ij}=\mr{diag}(\underbrace{+1,\ldots,+1}_p,\underbrace{-1,\ldots,-1}_q). 
\end{equation}
We denote $n=p+q$.

We are looking for conformal vector fields $v=v^k(x)\dd_k$ on $\RR^{p,q}$. (Summation over repeated indices is implied everywhere in this section.) The defining equation (\ref{l6 cvf defining eq}) for them takes the form
\begin{equation}\label{l6 eq1}
\dd_i v_j+\dd_j v_i=\omega \eta_{ij}
\end{equation}
with $v_i\colon= \eta_{ij}v^j$. (\ref{l6 eq1}) is a system of $n^2$ (dependent) differential equations on $n+1$ unknown functions -- components $v_i$ of the conformal vector field and $\omega$ -- the infinitesimal conformal factor. Solving (\ref{l6 eq1}) is a well-known exercise \cite{Schottenloher,DMS,Ginsparg};
for reader's convenience, we reproduce the argument.\footnote{Part of the value of the explicit argument here is that it gives an explanation (albeit a technical one) of why the cases $n=1,2$ and $n>2$ are so vastly different.}
\begin{enumerate}[(i)]
\item Contracting (\ref{l6 eq1}) with $\eta^{ij}$, we get
\begin{equation}\label{l6 eq2}
2\underbrace{\dd_i v^i}_{\mr{div}\,v}=n \omega.
\end{equation}
\item Applying $\dd^j$ to (\ref{l6 eq1}), we get
\begin{equation}
\dd_i(\mr{div}\, v)+\Delta v_i=\dd_i\omega,
\end{equation}
where $\Delta=\dd_j\dd^j$. By (\ref{l6 eq2}), this implies
\begin{equation}\label{l6 eq3}
\Delta v_i = \left(1-\frac{n}{2}\right) \dd_i\omega.
\end{equation}
\item Applying $\dd_j$ to (\ref{l6 eq3}), symmetrizing in $i\leftrightarrow j$ and using (\ref{l6 eq1}), we get
\begin{equation}\label{l6 eq4}
\frac12 \eta_{ij} \Delta \omega = \left(1-\frac{n}{2}\right) \dd_i\dd_j\omega.
\end{equation}
\item Applying $\dd^i$ to (\ref{l6 eq3}), we get 
\begin{equation}
\Delta(\underbrace{\mr{div}\,v}_{\underset{(\ref{l6 eq2})}{=}\frac{n}{2}\omega})=\left(1-\frac{n}{2}\right) \Delta\omega,
\end{equation}
which implies
\begin{equation}\label{l6 eq5}
(n-1)\Delta\omega =0
\end{equation}
\item Equations (\ref{l6 eq4}) and (\ref{l6 eq5}) imply that for $n\neq 1,2$ one has 
\begin{equation}\label{l6 eq6}
\dd_i\dd_j\omega=0.
\end{equation}
I.e., $\omega$ is at most linear in coordinates.
\item Taking a derivative of (\ref{l6 eq1}), we have
\begin{equation}\label{l6 eq6.5}
\dd_i\dd_jv_k+\dd_i\dd_k v_j = \dd_i\omega\,\eta_{jk}
\end{equation}
The equation $(\ref{l6 eq6.5})+(\ref{l6 eq6.5})_{(ijk)\ra (jik)}-(\ref{l6 eq6.5})_{(ijk)\ra (kij)}$ then reads 
\begin{equation}\label{l6 eq7}
2\dd_i\dd_j v_k=\dd_i\omega\,\eta_{jk} + \dd_j\omega\, \eta_{ik}-\dd_k\omega\, \eta_{ij}
\end{equation}
\item Equation (\ref{l6 eq6}) and (\ref{l6 eq7}) together imply, for $n\neq 1,2$, that
\begin{equation}
\dd_i\dd_j\dd_k v_l =0.
\end{equation}
I.e., $v$ is at most quadratic in coordinates.
\end{enumerate}

Now, specializing to the case $n>2$, we have an ansatz
\begin{equation}\label{l6 ansatz}
v_i(x)=a_i+b_{ij}x^j + c_{ijk} x^j x^k,\quad \omega(x)= 2\mu+4\nu_i x^i
\end{equation}
with $a_i,b_{ij},c_{ijk},\mu,\nu_i$ some coefficients. Substituting this ansatz into (\ref{l6 eq1}), we find that (\ref{l6 ansatz}) is a conformal vector field and its conformal factor if the coefficients satisfy the following:
\begin{enumerate}[(a)]
\item No restriction on $a_i$.
\item $b_{ij}+b_{ji}=2\mu\eta_{ij}$ which implies 
\begin{equation}
b_{ij}=\mu \eta_{ij}+\beta_{ij}
\end{equation}
with some anti-symmetric tensor $\beta_{ij}=-\beta_{ji}$.
\item $c_{ijk}+c_{jik}=2\nu_k\eta_{ij}$ which implies, similarly to the derivation of (\ref{l6 eq7}) above,
\begin{equation}
 c_{ijk}=\nu_j \eta_{ik}+\nu_k \eta_{ij}-\nu_i \eta_{jk}.
\end{equation}
\end{enumerate}

This proves the following.
\begin{thm}[Liouville] \label{l6 thm Liouville}
For $n=p+q>2$, the Lie algebra of conformal vector fields on $\RR^{p,q}$ splits into the following subspaces:
\begin{equation}\label{l6 Liouville thm eq}
\mr{conf}(\RR^{p,q}) = \underset{\simeq \RR^n}{\{\mr{translations}\}}\oplus \underset{\simeq \mathfrak{so}(p,q)}{\{\mr{rotations}\}}\oplus \underset{\simeq \RR}{\{\mr{dilations}\}}\oplus \underset{\simeq \RR^n}{\{\mr{SCTs}\}}
\end{equation}
where SCTs stands for ``special linear transformations.'' Explicitly, these conformal vector fields are as follows.

\begin{center}
\begin{tabular}{c|c|c}
& conf. vector field & $\omega$  \\ \hline
translation & $v^i(x)=a^i$ & $0$  \\
rotation & $v^i(x)=\beta^i_j x^j$ with $\beta_{ij}=-\beta_{ji}$ & $0$  \\
dilation & $v^i(x)=\mu x^i$ & $2\mu$  \\
SCT& $v^i(x) = 2(\vec{x},\vec{\nu})x^i-\nu^i ||\vec{x}||^2$ & $4(\vec{\nu},\vec{x})$
\end{tabular}
\end{center} 

\end{thm}

\subsection{Finite conformal automorphisms of $\RR^{p,q}$ with $p+q>2$. }
Here are the finite\footnote{``Finite'' conformal maps are just conformal maps. We use the adjective ``finite'' to emphasize the difference from ``infinitesimal conformal maps,'' i.e., conformal vector fields.} conformal maps 
exponentiating (via 
constructing the flow in time $1$) the conformal vector fields of Theorem \ref{l6 thm Liouville}.\footnote{Under the flow-in-time-one map, the parameters of the finite conformal maps are related to the parameters of the conformal vector fields by  $\vec{a}=\vec{a}$, $O=\exp(\beta)$, $\lambda=e^\mu$, $\vec{b}=\vec{\nu}$.}

\begin{center}
\begin{tabular}{c|c|c}
&  conf. map & $\Omega$ \\ \hline
translation &  $x^i\mapsto x^i+a^i$, $\vec{a}\in \RR^n$& $1$ \\
rotation & $x^i\mapsto O^i_j x^j$, $O^i_j\in SO(p,q)$ & $1$ \\
dilation & $x^i\mapsto \lambda x^i$, $\lambda>0$ & $\lambda^2$ \\
SCT&
$x^i\mapsto \frac{x^i-||\vec{x}||^2 b^i}{1-2(\vec{b},\vec{x})+||\vec{b}||^2 ||\vec{x} ||^2}$, $\vec{b}\in \RR^n$ & $(1-2(\vec{b},\vec{x})+||\vec{b}||^2 || \vec{x} ||^2)^{-2}$
\end{tabular} 
\end{center}

\begin{definition}\label{l6 def conformal compactification}
Given a manifold $M$  equipped with a conformal structure $\gamma_N$ (a choice of metric modulo Weyl transformations), 
we say that a \emph{compact} manifold $N$  equipped with conformal structure, is a \emph{conformal compactification of $M$}, if the following holds:
\begin{itemize}
\item One has an embedding $M\hra N$ with open dense image.
\item All conformal vector fields on $M$ extend to $N$. (And $N$ they can automatically be integrated to conformal automorphisms.)
\end{itemize}
\end{definition}

\begin{remark}[On finite SCTs]
\begin{enumerate}[(a)]
\item
A finite SCT can be written as 
\begin{equation}
(\mr{inversion})\circ (\mr{translation\;by\;}-\vec{b})\circ(\mr{inversion}).
\end{equation}
I.e., it maps $\vec{x}\mapsto \vec{x}'$ with image and preimage related by
\begin{equation}
 \frac{\vec{x}'}{||\vec{x}'||^2} =\frac{\vec{x}}{||\vec{x}||^2}-\vec{b}.
\end{equation}
\item A finite SCT is not everywhere defined as a map $\RR^{p,q}\ra \RR^{p,q}$ (the denominator in the formula for SCT may vanish). This corresponds to the quadratic vector field describing the infinitesimal SCT not being complete on $\RR^{p,q}$.
\item In Section \ref{sss: SO action on R^p,q, conf compactification} we will construct a conformal compactification $N^{p,q}$ of $\RR^{p,q}$,  
such that SCTs are everywhere well-defined on $N^{p,q}$.
\end{enumerate}
\end{remark}


We also remark that in the exceptional dimensions $n=1,2$, the r.h.s. of (\ref{l6 Liouville thm eq}) is a (small) subspace of the  l.h.s., while the l.h.s is an $\infty$-dimensional Lie algebra.

\marginpar{Lecture 7,\\ 9/7/2022}
\begin{thm}\label{l7 thm}
Assume $p+q>2$.  
\begin{enumerate}[(i)]
\item \label{l7 thm (i)} One has an isomorphism of Lie algebras 
\begin{equation}\label{l7 conf = so}
\mr{conf}(\RR^{p,q})\cong \mathfrak{so}(p+1,q+1).
\end{equation}
\item \label{l7 thm (ii)} For the group $\mr{Conf}^\mr{sing}$ of almost everywhere defined conformal automorphisms of $\RR^{p,q}$, one has:
\begin{itemize}
\item If $-1$ and $1$ are in different connected components of $SO(p+1,q+1)$, then
\begin{equation}
\Conf^\mr{sing}_0(\RR^{p,q})\cong SO_0(p+1,q+1) 
\end{equation}
Subscript $0$ on both sides stands for ``connected component of $1$.''
\item Otherwise,
\begin{equation}
\Conf^\mr{sing}_0(\RR^{p,q})\cong SO_0(p+1,q+1)/\ZZ_2
\end{equation}
\end{itemize}
%
\item \label{l7 thm (iii)} The conformal manifold $\RR^{p,q}$ possesses a conformal compactification $N^{p,q}$ in the sense of Definition \ref{l6 def conformal compactification}.
\end{enumerate}
\end{thm}

For the proof, see \cite{Schottenloher}.

As a sanity check of (\ref{l7 conf = so}), let us check that the dimensions of both sides match:
\begin{multline}\label{l7 dim count}
\dim\conf(\RR^{p,q})\underset{(\ref{l6 Liouville thm eq})}{=}\dim\{\mr{translations}\}+\dim\{\mr{rotations}\}+\dim\{\mr{dilations}\}+\dim\{\mr{SCTs}\} \\
= n+\frac{n(n-1)}{2}+1+n = \frac{(n+1)(n+2)}{2} = \dim \mathfrak{so}(p+1,q+1)
\end{multline}

\subsection{Sketch of proof of Theorem \ref{l7 thm}: action of $SO(p+1,q+1)$ on $\RR^{p,q}$ and the conformal compactification of $\RR^{p,q}$}
\label{sss: SO action on R^p,q, conf compactification}
For the following construction, 
we also follow \cite{Schottenloher}.


\subsubsection{Case of $\RR^n$}. Consider first the case  $(p,q)=(n,0)$. 
\begin{itemize}
\item The group $SO(n+1,1)$ acts on $\RR^{n+1,1}$ by linear isometries and preserves the light cone
\begin{equation}
LC=\{(x^0,\ldots,x^n,y)\in \RR^{n+1,1}\;|\; (x^0)^2+\cdots+(x^n)^2-y^2=0\} \quad \subset \RR^{n+1,1}
\end{equation}
\item We have two commuting actions 
\begin{equation}
SO(n+1,1)\quad \rotatebox[origin=c]{90}{$\curvearrowleft$}\quad LC \underset{\mr{dilations}}{\rotatebox[origin=c]{270}{$\curvearrowright$}} \RR^*
\end{equation}
\item In particular, $SO(n+1,1)$ acts on $LC-\{0\}/\RR^*$.
\item $LC-\{0\}$ inherits a \emph{degenerate} metric from $\RR^{n+1,1}$. Its kernel is the fundamental vector field of the $\RR^*$-action and thus is killed by quotienting over $\RR^*$. 
\item By the previous,
$LC-\{0\}/\RR^*$ inherits a conformal structure and $SO(n+1,1)$ acts on $LC-\{0\}/\RR^*$ by conformal maps.
\item Note: $LC-\{0\}/\RR^*$ can be identified with the unit sphere $S^n\subset \RR^{n+1}$: intersecting $LC$ with the hyperplane $y=1$ in $\RR^{n+1,1}$, we are selecting a single point from each $\RR^*$-orbit. 
\begin{figure}[H]
\begin{center}
\includegraphics[scale=1]{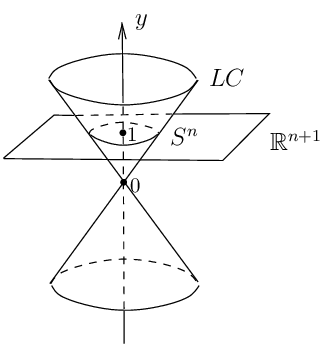}
\end{center}
\caption{Light cone and its section by $y=1$ hyperplane.}
\end{figure}
\item One has a stereographic projection 
$$S^n-\{\underbrace{(1,0,\ldots,0)}_{\mr{North\;pole}}
\}\ra \RR^n$$ 
(which is a conformal diffeomorphism). Thus we identify $S^n$ as a conformal compactification of $\RR^n$: conformal vector fields on $\RR^n$ extend to $S^n$ and finite conformal maps are everywhere defined on $S^n$.
\end{itemize}

\subsubsection{Case of general $\RR^{p,q}$}.
\begin{itemize}
\item We have the light cone 
\begin{equation}
LC=\left\{(x^0,\ldots,x^p,y^0,\ldots,y^q)\;\Big| \; \sum_{i=0}^p(x^i)^2-\sum_{j=0}^q (y^j)^2=0\right\} \quad \subset \RR^{p+1,q+1}.
\end{equation}
\item We have two commuting actions 
\begin{equation}
SO(p+1,q+1)\underset{\mr{lin.\;isometries}}{\rotatebox[origin=c]{90}{$\curvearrowleft$}} LC-\{0\} \underset{\mr{dilations}}{\rotatebox[origin=c]{270}{$\curvearrowright$}} \RR^*.
\end{equation}
\item We have a projection 
\begin{equation}\label{l7 pi}
\pi\colon LC-\{0\}\;\;\ra\;\; \RP^{n+1}.
\end{equation}
Denote its image 
\begin{equation}\label{l7 N^p,q def}
N^{p,q}\colon= \mr{im}(\pi)\simeq (LC-\{0\})/\RR^*
\end{equation}
Being a submanifold of a compact manifold $\RP^{n+1}$, $N^{p,q}$ is compact.
\item Consider the map $\iota\colon \RR^{p,q}\ra N^{p,q}$ defined by
{\small
\begin{multline}\label{l7 iota}
 \iota(x^1,\ldots,x^p,y^1,\ldots,y^q)=\\
 =
\left( 
\frac12 \Big(1-\sum_{i=1}^p(x^i)^2+\sum_{j=1}^q (y^j)^2\Big):x^1:\cdots: x^p:\frac12\Big(1+\sum_{i=1}^p(x^i)^2-\sum_{j=1}^q (y^j)^2\Big):y^1:\cdots: y^q
\right)
\end{multline}
}
where $(-:-\cdots :-)$ stands for the homogeneous coordinates on the projective space. The map $\iota$ is injective and has open dense image.
\end{itemize}


\begin{proof}[Sketch of proof of Theorem \ref{l7 thm}]
\leavevmode
\begin{enumerate}[1.]
\item We have constructed a compact manifold $N^{p,q}$ equipped with an inclusion $\RR^{p,q}\hra N^{p,q}$ (compatible with conformal structures) as an open dense subset.
\item We have constructed an action of $SO(p+1,q+1)$ on $N^{p,q}$ by conformal diffeomorphisms. The only elements acting trivially are multiples of identity, i.e., $1$ and $-1$ (in the case when $-1$ belongs to $SO(p+1,q+1)$).
\item \label{l7 proof 3} The differential of the action of $SO(p+1,q+1)$ gives an injective Lie algebra map $\mathfrak{so}(p+1,q+1)\hra \conf(N^{p,q})$ (and by restriction to $\RR^{p,q}$, an inclusion $\mathfrak{so}(p+1,q+1)\hra\conf(\RR^{p,q})$). By the dimension count (\ref{l7 dim count}), these inclusions are in fact isomorphisms. This proves (\ref{l7 thm (i)}) and (\ref{l7 thm (iii)}) of Theorem \ref{l7 thm}, identifying (\ref{l7 N^p,q def}) as the desired conformal compactification.
\item The previous two points imply that the Lie group $\mr{Conf}(N^{p,q})$ contains $SO(p,q)/\ZZ_2$ and both groups have the same Lie algebra. That implies that the connected components of $1$ in both groups coincide. That proves (\ref{l7 thm (ii)}) of Theorem \ref{l7 thm}.
\end{enumerate}
\end{proof}

\begin{remark}
The product of unit spheres
\begin{equation}
S^p\times S^q = \{(x^0,\ldots,x^p,y^0,\ldots,y^q)\;|\; \sum_{i=0}^p (x^i)^2=1,\; \sum_{j=0}^q (y^j)^2=1\} 
\end{equation}
is a submanifold of $LC-\{0\}$ and intersect each $\RR^*$-orbit twice ($(x,y)$ and $(-x,-y)$ are in the same $\RR^*$-orbit). Thus, one has a twofold covering map 
\begin{equation}
S^p\times S^q \ra N^{p,q}
\end{equation} 
given by the projection (\ref{l7 pi}) restricted to $S^p\times S^q$.
In particular, we can identify $N^{p,q}$ with the quotient
\begin{equation}
N^{p,q}\simeq S^p\times S^q/\ZZ_2
\end{equation}
where $\ZZ_2$ acts by the diagonal antipodal map, $(x,y)\mapsto (-x,-y)$.
\end{remark}

\section{Conformal symmetry of $\RR^2$}
A vector field $v=v_i(x,y)\dd_i$ (with $x=x^1$, $y=x^2$) on $\RR^2$ equipped with the standard Euclidean metric is conformal if  
the equation (\ref{l6 cvf defining eq}) holds:
\begin{equation}
\dd_i v_j+\dd_j v_i=\omega \delta_{ij} \quad \Leftrightarrow
\left\{ 
\begin{array}{l}
\dd_x v_x=\dd_y v_y =\frac12 \omega \\
\dd_x v_y=-\dd_y v_x
\end{array}
\right.
\end{equation}
for some function (conformal factor) $\omega$. On the right side we can recognize the Cauchy-Riemann equations. Thus, the vector field $v=v_i\dd_i$ is conformal if and only if the function 
\begin{equation}
u\colon=v_x+iv_y
\end{equation} 
is holomorphic.
Note that the vector field $v$ can be written in terms of the holomorphic function $u$ and its complex conjugate $\bar{u}$ as 
\begin{equation}\label{l7 cvf via holom vf}
v=u(z)\dd_z+\bar{u}(\bar{z})\dd_{\bar{z}} = 2\,\mr{Re}(u(z)\dd_z)
\end{equation}
The corresponding conformal factor is $\omega=\dd_z u+\dd_{\bar{z}}\bar{u}$.

In (\ref{l7 cvf via holom vf}) we use the complex coordinate $z=x+iy$, its conjugate $\bar{z}=x-iy$ and the corresponding derivatives $\dd_z=\frac12(\dd_x-i\dd_y)$, $\dd_{\bar{z}}=\frac12(\dd_x+i\dd_y)$.

To summarize, we have the following.
\begin{lemma}\label{l7 lemma cvf to hol vf}
One has an isomorphism of Lie algebras
\begin{equation}\label{l7 cvf-holom vf iso}
\psi\colon \conf(\RR^2) \xra{\sim} \{\mr{holomorphic\;vector\;fields\;on\;}\CC\}.
\end{equation}
It maps
a conformal vector field $v_x\dd_x+v_y\dd_y$ to the holomorphic vector field $u(z)\dd_z$ where $u(z)=v_x+iv_y$. The inverse map $\psi^{-1}$
assigns to a holomorphic vector field $u(z)\dd_z$ a conformal vector field 
$ 2\,\mr{Re}(u(z)\dd_z)= 
u(z)\dd_z+\bar{u}(\bar{z})\dd_{\bar{z}}$. 
\end{lemma}
The fact that $\psi$ intertwines  the Lie brackets on the two sides of (\ref{l7 cvf-holom vf iso}) is a straightforward check.

\begin{remark}
In the isomorphism (\ref{l7 cvf-holom vf iso}), we are thinking of both sides as Lie algebras over $\RR$. However, the right hand side is also a Lie algebra over $\CC$. Multiplication by $i$ on the right side translates in the left side to acting on a conformal vector field by pointwise rotation by $\pi/2$ (in the tangent space at each point of $\RR^2$). 
\end{remark}

\marginpar{Lecture 8,\\
9/9/2022}

Lemma \ref{l7 lemma cvf to hol vf} classifies infinitesimal conformal maps; its counterpart for finite conformal maps is Lemma  \ref{l6 lm} above, or its rephrasing:

\begin{lemma}\label{l8 lm 2.21}
Let $D,D'$ be two open sets in $\CC$. A map $\phi\colon D\ra D'$ is a conformal diffeomorphism if and only if $\phi$ is either biholomorphic or biantiholomorphic (i.e., the complex conjugate map $\bar\phi\colon D\ra \bar{D'}$ is bihomolomorphic).
\end{lemma}

\subsection{
Conformal vector fields on $\CC^*$, Witt algebra} 
\label{sss: Witt algebra}
\begin{definition}
We define the Witt algebra $\mc{W}$ as the Lie algebra of meromorphic vector fields on $\CC$ with a pole (of finite order) allowed only at $0$.
The Lie algebra $\mc{W}$ has a standard
basis of meromorphic vector fields 
\begin{equation}
l_n=-z^{n+1}\frac{\dd}{\dd z}, \quad n\in \ZZ.
\end{equation}
Thus,
the Witt algebra is
\begin{equation}\label{l8 W}
\mc{W}=\{\sum_{n=-n_0}^{\infty} c_n l_n \;|\; c_n\in \CC, \;\mr{the\;sum\;converges\;on\;}\CC^* \}.
\end{equation}
\end{definition}

The generators $l_n$ of $\mc{W}$ satisfy the commutation relations
\begin{equation}\label{l8 [l_n,l_m]}
[l_n, l_m]=(n-m)l_{n+m}.
\end{equation}
Indeed: 
\begin{multline}
[-z^{n+1}\dd_z,-z^{m+1}\dd_z]=z^{n+1}[\dd_z,z^{m+1}\dd_z]-z^{m+1}[\dd_z,z^{n+1}\dd_z] =\\=((m+1)z^{n+m+1}-(n+1)z^{n+m+1})\dd_z= (m-n)z^{n+m+1}\dd_z=(n-m)l_{n+m}.
\end{multline}

There are several relevant variants of the Lie algebra $\mc{W}$, all with the same collection of generators $\{l_n\}$ but with different asymptotic conditions on the coefficients $c_n$ as $n\ra \pm\infty$:
\begin{enumerate}[(i)]
\item Holomorphic vector fields on the punctured formal disk:
\begin{equation}
\CC[[z,z^{-1}]\dd_z=\{\sum_{n=-n_0}^\infty c_n l_n\;|\; c_n \in \CC\}.
\end{equation}
-- This is a good model for the local conformal algebra $\mc{A}^\mr{loc}$ of Section \ref{sss 1.7.1 action of cvf on V}.
\item Meromorphic vector fields on $\CP^1$ with  finite-order poles allowed only at $0$ and $\infty$: 
\begin{equation}
\{\sum_{n=-n_0}^{n_1}c_n l_n\;|\; c_n\in \CC\}.
\end{equation}
-- This model has the benefit that it is symmetric under the involution $z\ra 1/z$ on $\CP^1$.
\end{enumerate}

We remark that the space of vector fields with coefficients in all formal Laurent power series $\displaystyle\{\sum_{n=-\infty}^\infty c_n l_n\}$ does not form a Lie algebra, since coefficients of the Lie bracket of two elements involves infinite sums that do not have to converge.

By abuse of notations and terminology, we will call all 
complex Lie algebras spanned by $\{l_n\}_{n\in\ZZ}$
with  different decay conditions on coefficients, the Witt algebra and denote them $\mc{W}$.

By Lemma \ref{l7 lemma cvf to hol vf}, conformal vector fields on $\CC^*=\CC\backslash \{0\}$ are the real parts of meromorphic vector fields on $\CC^*$:
\begin{equation}\label{l8 conf(C^*)}
\conf(\CC^*)\simeq \mc{W}=\mr{span}_{\CC}\{l_n\}_{n\in\ZZ}
\end{equation}
(When we write ``span,'' we are being noncommittal about the decay  conditions on coefficients.)
Thus, one may also write
\begin{equation}
\conf(\CC^*)=\mr{span}_{\RR}\{l_n+\bar{l}_n,i(l_n-\bar{l}_n)\}_{n\in \ZZ}.
\end{equation}
Thus, $\conf(\CC^*)$ embeds as a real slice into its complexification 
\begin{equation}
\conf(\CC^*)\otimes_\RR \CC=\underbrace{\mc{W}}_{\mr{span}_\CC\{l_n\}}\oplus \underbrace{\ol{\mc{W}}}_{\mr{span}_\CC\{\bar{l}_n\}} . 
\end{equation}
Here  
\begin{equation}
\bar{l}_n=-\bar{z}^{n+1}\dd_{\bar{z}}
\end{equation} 
are the antimeromorphic vector fields on $\CC^*$ complex-conjugate to $l_n$. They satisfy the commutation relation similar to (\ref{l8 [l_n,l_m]}), 
\begin{equation}
[\bar{l}_n,\bar{l}_m]=(n-m)\bar{l}_{n+m}.
\end{equation}
Also, one has 
\begin{equation}
[l_n,\bar{l}_m]=0.
\end{equation}

\subsubsection{Some interesting Lie subalgebras of $\conf(\CC^*)$}
Here are some relevant Lie subalgebras of $\conf(\CC^*)$:
\begin{enumerate}[(a)]
\item 
Conformal vector fields on $\CC$:
\begin{equation}\label{l8 conf(C)}
\mr{span}_\RR\{l_n+\bar{l}_n,i(l_n-\bar{l}_n)\}_{n\geq -1} 
\end{equation}
Indeed, vector fields $l_n,\bar{l}_n$ are holomorphic at $0$ iff $n\geq -1$.
\item Conformal vector fields on $\CC$  vanishing at $0$:
\begin{equation}
\mr{span}_\RR\{l_n+\bar{l}_n,i(l_n-\bar{l}_n)\}_{n\geq 0}
\end{equation}
Indeed, $l_n,\bar{l}_n$ vanish at $0$ iff $n\geq 0$.
\item Conformal vector fields on $\CP^1\backslash\{0\}$:
\begin{equation}\label{l8 conf(CP^1-0)}
\mr{span}_\RR\{l_n+\bar{l}_n,i(l_n-\bar{l}_n)\}_{n\leq 1}
\end{equation}
Indeed in the local coordinate $w=z^{-1}$ on $\CP^1\backslash\{0\}$ one has $l_n=w^{-n+1}\frac{\dd}{\dd w}$. Thus, $l_n$ is regular at the point $z=\infty$ (or $w=0$) iff $-n+1\geq 0$. (And similarly for $\bar{l}_n$.)
\end{enumerate}

\begin{remark}
Naively, the punctured plane $\CC^*$, the punctured unit disk $\{z\in\CC|0<|z|<1\}$ and annulus $\mr{Ann}_r^R=\{z\in\CC| r<|z|<R\}$ all have the same Lie algebra $\conf(-)\simeq \W=\mr{span}_\CC\{l_n\}_{n\in\ZZ}$. But in fact, for all these domains, the decay conditions on the coefficients $c_n$ in (\ref{l8 W}) are different. In the case of the annulus, the decay conditions depend on the inner and outer radii,\footnote{
Explicitly, the decay conditions for the annulus $\mr{Ann}_r^R$ are: $c_n\rho^n\underset{n\ra +\infty}{=}O(n^{-\infty})$ for any $0<\rho<R$ and $c_n\rho^n\underset{n\ra-\infty}{=}O(|n|^{-\infty})$ for any $\rho>r$.
} so that e.g. if one has $r'<r<R<R'$, then one has a proper inclusion $\mr{conf}(\mr{Ann}_{r'}^{R'})\hra \mr{conf}(\mr{Ann}_{r}^{R})$ (so that the thinner annulus has a bigger Lie algebra of conformal vector fields).
\end{remark}

\subsection{Conformal symmetry of $\CP^1$}
Conformal vector fields on $\CP^1$ are:
\begin{equation}\label{l8 conf(CP^1)}
\conf(\CP^1)=\mr{span}_\RR\{l_n+\bar{l}_n,i(l_n-\bar{l}_n)\}_{n\in\{-1,0,1\}}
\end{equation}
This is the subalgebra of $\conf(\CC^*)$ comprised of vector fields which are regular at $0$ and at $\infty$, i.e, it is  the intersection of (\ref{l8 conf(C)}) and (\ref{l8 conf(CP^1-0)}). The Lie algebra $\conf(\CP^1)$ is also isomorphic to $\mathfrak{sl}_2(\CC)$ and to $\mathfrak{so}(3,1)$.\footnote{
One has an action of $\mathfrak{so}(3,1)$ on $\CP^1$ by conformal vector fields by the construction of Section \ref{sss: SO action on R^p,q, conf compactification}. Also, in the last isomorphism in (\ref{l8 Conf(CP^1)}) we are referring to the finite version of that action.
} We can identify the generators of $\conf(\CP^1)$ explicitly as infinitesimal translations, rotation, dilation, and special canonical transformations:

\vspace{0.2cm}
\begin{center}
\begin{tabular}{ccc}
$-(l_{-1}+\bar{l}_{-1})$ & $=\dd_x $ & translation \\
$-i(l_{-1}-\bar{l}_{-1})$ & $=\dd_y$ & translation \\
$-(l_0+\bar{l}_0)$ & $=x\dd_x+y\dd_y$ & dilation \\
$-i(l_0-\bar{l}_0$) & $ = -y\dd_x+x\dd_y$ & rotation \\
$-(l_1+\bar{l}_1)$ & $=(x^2-y^2)\dd_x+2xy\dd_y$ & SCT\\
$-i(l_1-\bar{l}_1)$ & $=-2xy\dd_x + (x^2-y^2)\dd_y$ & SCT
\end{tabular}
\end{center}
\vspace{0.2cm}

The orientation-preserving part of the group of conformal automorphisms of $\CP^1$ is given by M\"obius transformations (\ref{l6 Mobius transf}):
\begin{equation}\label{l8 Conf(CP^1)}
\mr{Conf}_+(\CP^1)=PSL_2(\CC)\simeq SO_+(3,1)
\end{equation}
Where $SO_+(3,1)$ is the \emph{orthochronous} component of $SO(3,1)$, consisting of the elements preserving the positive ($y>0$) half of the light-cone.

\begin{remark}
Note that while $\conf(\CC)$ is an infinite-dimensional Lie algebra,
passing to the one-point compactification $\CC\ra \CP^1=\CC\cup\{\infty\}$ reduces this algebra to a finite-dimensional one (\ref{l8 conf(CP^1)}). In fact, $\CC$ does not have a conformal compactification (see Definition \ref{l6 def conformal compactification}), unlike $\RR^{p,q}$ with $p+q>2$.
\end{remark}

\subsection{The group of conformal automorphisms of a simply-connected domain in $\CC$}
\begin{lemma}
\begin{enumerate}[1.]
\item The group of conformal automorphisms of the upper half-plane  $\mathbb{H}=\{z\in \CC \;|\; \mr{Im}(z)>0 \}$ is
\begin{equation}
\Conf(\mathbb{H})=PSL_2(\RR)
\end{equation}
where the elements of $PSL_2(\RR)$ are acting by M\"obius transformations (\ref{l6 Mobius transf}) with $a,b,c,d\in \RR$.
\item The group of conformal automorphisms of the unit disk $D=\{z\in\CC\;|\; |z|<1\}$ is
\begin{equation}
\Conf(D) = PSU(1,1)
\end{equation}
-- the group of M\"obius transformations of the form
\begin{equation}
z\mapsto e^{i\phi} \frac{z-a}{\bar{a}z-1}
\end{equation}
where $\phi\in \RR/2\pi\ZZ$, $a\in \CC$ with $|a|$ are parameters.
\end{enumerate}
\end{lemma}
This is proven straightforwardly, by finding the part of the $PSL_2(\CC)$ which preserves the boundary of the domain (the real line or the unit circle) and does not swap the domain with its complement in $\CP^1$.

\begin{remark}
The groups $PSL_2(\RR)$ and $PSU(1,1)$ are conjugate subgroups  $PSL_2(\CC)$, with conjugating element corresponding to the map $z\mapsto \frac{z-i}{z+i}$ -- a conformal diffeomorphism $\mathbb{H}\ra D$.
\end{remark}

Recall the key result of complex analysis:
\begin{thm}[Riemann mapping theorem]
For any simply-connected open set $U\subset \CC$, there exists a 
biholomorphic map
$\phi\colon U\ra D$ with $D$ the open unit disk.
\end{thm}

\begin{corollary}
For any simply-connected open set $U$, the group of conformal automorphisms is 
\begin{equation}
\Conf(U)=\phi^* PSL_2(\RR)
\end{equation}
where $\phi\colon U\ra D$ is the map from the Riemann mapping theorem. 
\end{corollary}

\subsection{Vector fields on $S^1$ vs. Witt algebra}
A real vector field tangent to the unit circle $S^1\subset \CC$ can be written as
\begin{equation}
v=f(\theta)\dd_\theta = \sum_{n\in\ZZ}a_n e^{in\theta} \dd_\theta
\end{equation}
with the Fourier coefficients $a_n$ satisfying the reality condition
\begin{equation}
a_{-n}=\bar{a}_n.
\end{equation}
Here $\theta\in \RR/2\pi \ZZ$ is the angle coordinate on $S^1$. We denote the Lie algebra of such vector fields $\mathfrak{X}(S^1)$.

One can express the basis tangent vector fields on $S^1$ in terms of Witt generators restricted to $S^1$:\footnote{
A related point: consider the inversion map $\mathbb{I}\colon \CC^*\ra \CC^*$, mapping  $z\mapsto \frac{1}{\bar{z}}$. The pushforward of $l_n$ by the inversion is $\mathbb{I}_*l_n=-\bar{l}_{-n}$. Vector fields tangent to $S^1$ appearing in the r.h.s. of (\ref{l8 tang vf to S^1 via l_n}) are invariant under $\mathbb{I}_*$.
}
\begin{equation}\label{l8 tang vf to S^1 via l_n}
e^{in\theta}\dd_\theta=-i(l_n-\bar{l}_{-n})\Big|_{S^1}
\end{equation}
Likewise, one has a  basis of normal vector fields to $S^1$:
\begin{equation}
e^{in\theta}\dd_r=-(l_n+\bar{l}_{-n}) \Big|_{S^1}
\end{equation}

We have a map
\begin{equation}\label{l8 W to vector fields on S^1}
\begin{array}{ccc}
\W&\ra &\Gamma(S^1,T\CC|_{S^1}) \\
\displaystyle\sum_{n=-\infty}^\infty c_nl_n &\mapsto & \displaystyle2\mr{Re}\sum_{n}c_n l_n\Big|_{S^1}
\end{array}
\end{equation}
In fact, it is an isomorphism, under appropriate decay assumptions on $c_n$. The r.h.s. of (\ref{l8 W to vector fields on S^1}) consist of vector fields on $S^1$ that are allowed to have both tangent and normal component. The part of $\W$ that maps to vector fields \emph{tangent} to $S^1$ is the real Lie subalgebra
\begin{equation}\label{l8 vect(S^1) real slice in W}
\underbrace{\{\sum_n c_n l_n \;|\; c_{-n}=-\bar{c}_n\}}_{\simeq \mathfrak{X}(S^1)} \subset \W
\end{equation}

Thus, one has the following. 
\begin{lemma}
The Witt algebra
$\W$ (with decay conditions on coefficients as above) is a complexification of $\mathfrak{X}(S^1)$.
\end{lemma}

One might ask: which vector fields on $S^1$ extend into the unit disk $D$ (cobounding $S^1$) as conformal vector fields? The answer depends drastically on whether the vector fields are required to be tangent to $S^1$ or are allowed to have a normal component on $S^1$.
\begin{lemma}
\begin{enumerate}[(i)]
\item The subalgebra 
 of $\mathfrak{X}(S^1)$ given by vector fields extending as conformal vector fields into the unit disk $D$ is 
\begin{equation}
\{\mr{Re}\sum_{n=-1}^1 c_n l_n\;|\; c_n=-\bar{c}_n\}\simeq \mathfrak{sl}_2(\RR)
\end{equation}
\item The subalgebra of $\Gamma(S^1,T\CC|_{S^1})$ given by vector fields on $S^1$ (with normal component allowed) extending as conformal vector fields into the unit disk $D$ is
\begin{equation}
\{\mr{Re}\sum_{n\geq -1} c_n l_n\;|\; c_n=-\bar{c}_n\}
\end{equation}
\end{enumerate}
\end{lemma}
In particular, we have a finite-dimensional Lie algebra in one case and an infinite-dimensional one in the other case.
\begin{proof}
Immediate consequence of (\ref{l8 W to vector fields on S^1}), (\ref{l8 vect(S^1) real slice in W}) and the fact that $l_n$ is regular at $0$ iff $n\geq -1$.
\end{proof}

\section{Conformal symmetry of $\RR^{1}$ (trivial case)}

Recall from Example \ref{l6 ex: Conf(R^1)} that on $\RR^1$ any diffeomorphism is conformal, $\mr{Conf}(\RR^1)=\mr{Diff}(\RR^1)$. Likewise, any vector field on $\RR^1$ is conformal, $\conf(\RR^1)=\mathfrak{X}(\RR^1)$. 

Also, one can replace $\RR^1$ with $S^1$ (thought of as a one-point compactification of $\RR^1$). 
 Here one has as a distinguished subgroup the  M\"obius transformations of $S^1$:
\begin{equation}\label{l8 Conf(S^1)}
\Conf(S^1)=\mr{Diff}(S^1)\supset \underbrace{PSL_2(\RR) \simeq SO_+(2,1)}_{\mr{``restricted\;conformal\;group''}}
\end{equation}
The action of $SO(2,1)$ on $S^1$ by conformal automorphisms is by the construction of Section \ref{sss: SO action on R^p,q, conf compactification}.

\section{Conformal symmetry of $\RR^{1,1}$}
Consider Minkowski plane $\RR^{1,1}$ with coordinates $x,y$ and  metric $g=(dx)^2-(dy)^2$. Introduce the ``light-cone coordinates''
\begin{equation}
x^+=x+y,\quad x^- = x-y
\end{equation}
(they are Minkowski analogs of the complex coordinates $z,\bar{z}$ in the Euclidean case $\RR^2$). 
\begin{figure}[H]
\begin{center}
\includegraphics[scale=0.7]{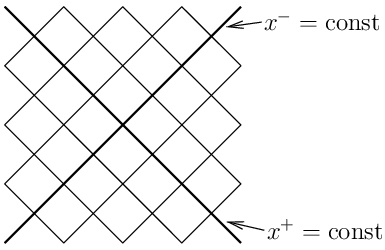}
\end{center}
\caption{Light cone coordinates on $\RR^{1,1}$.}
\end{figure}
In terms of the light-cone coordinates, the metric is: $g=dx^+ dx^-$. Let us write a  vector field on $\RR^{1,1}$ as 
$$ v=v^+(x^+,x^-)\dd_+ + v^-(x^+,x^-)\dd_- $$
with $v^\pm$ some functions on $\RR^{1,1}$; we denoted  $\dd_\pm = \frac12(\dd_x\pm \dd_y)$. The condition that $v$ is conformal (\ref{l6 cvf defining eq}) becomes
\begin{equation}
\dd_- v^+=0,\quad \dd_+ v^-=0,\quad \dd_+v^++\dd_- v^-=\omega
\end{equation}
Thus, a general conformal vector field on $\RR^{1,1}$ is of the form
\begin{equation}
v=v^+(x^+)\dd_+ + v^-(x^-)\dd_-
\end{equation}
Note that coefficient functions now depend on a single light-cone variable; this is an analog of holomorphic/antiholomorphic coefficient functions in the $\RR^2$ case. The conformal factor of  $v$ is:
\begin{equation}
\omega=\dd_+v_++\dd_-v_-
\end{equation}

Thus we have the following.
\begin{lemma} The Lie algebra of conformal vector fields on $\RR^{1,1}$ splits into two copies of the Lie algebra of vector fields on the line:
$$\conf(\RR^{1,1})=\underbrace{\X(\RR^1)}_{v_+\dd_+}\oplus \underbrace{\X(\RR^1)}_{v_-\dd_-}$$
\end{lemma}

One can similarly classify (finite) conformal automorphisms of $\RR^{1,1}$ -- one has the following analog of Lemma \ref{l8 lm 2.21}:
\begin{lemma}
A map $\phi\colon \RR^{1,1}\ra \RR^{1,1}$ with components $\phi^+(x^+,x^-)$, $\phi^-(x^+,x^-)$ is a conformal automorphism of $\RR^{1,1}$ if and only if one of the two following options holds:
\begin{enumerate}[1.]
\item $\phi^+=\phi^+(x^+)$, $\phi^-=\phi^-(x^-)$. \\I.e., $\phi\in \mr{Diff}(\RR)\times \mr{Diff}(\RR)$ -- a reparametrization of $x^+$ and of $x^-$. The conformal factor in this case is $\Omega=(\dd_+\phi^+)(\dd_-\phi^-)$.
\item $\phi^+=\phi^+(x^-)$, $\phi^-=\phi^-(x^+)$. \\I.e., $\phi$ is a composition of a reparametrization of $x^+$ and $x^-$ with a reflection $(x,y)\mapsto (x,-y)$. The conformal factor in this case is $\Omega=(\dd_-\phi^+)(\dd_+\phi^-)$.
\end{enumerate}
\end{lemma}

In particular, we have
\begin{equation}
\Conf_0(\RR^{1,1}) = \mr{Diff}_+(\RR)\times \mr{Diff}_+(\RR)
\end{equation}
Subscript in $\mr{Diff}_+$ stands for orientation-preserving diffeomorphisms. Note that the whole group $\mr{Conf}(\RR^{1,1})$ has $8=2\times 2\times 2$ connected components: one can choose to preserve or reverse the orientation along $x_+$ and $x_-$ and whether or not to compose with the reflection $x_+\leftrightarrow x_-$.

\begin{remark}
One can consider $\ol{\RR^{1,1}}\colon=S^1\times S^1$ as a (partial) conformal compactification of $\RR^{1,1}$, with respect to a (large) subalgebra of $\conf(\RR^{1,1})$ consisting of pairs of vector fields on $\RR$ which extend to  $S^1=\RR\cup\{\infty\}$. Then, in analogy with (\ref{l8 Conf(S^1)}), one has
\begin{equation}
\Conf_0(\ol{\RR^{1,1}})=\mr{Diff}_+(S^1)\times \mr{Diff}_+(S^1)\supset \underbrace{PSL_2(\RR)}_{\mr{M\ddot{o}bius}_+} \times \underbrace{PSL_2(\RR)}_{\mr{M\ddot{o}bius}_-} \simeq \underbrace{SO(2,2)}_{\mr{restricted\;conformal\; group}}
\end{equation}
\end{remark}

\marginpar{Lecture 9,\\
9/12/2022}

\section{Moduli space of conformal structures}
\begin{definition}\label{l9 def: conf flatness}
A (pseudo-)Riemannian manifold $(M,g)$ with metric of signature $(p,q)$ is said to be \emph{conformally flat} if 
one can find an atlas of coordinate neighborhoods $U_\alpha\subset M$ with local coordinates $\{x^i_\alpha\}$,
such that in each chart the metric has the form
\begin{equation}\label{l9 g=Omega g_stand}
g|_{U_\alpha}=\Omega_\alpha(x)\cdot ((dx^1_\alpha)^2+\cdots + (dx^p_\alpha)^2- (dx^{p+1}_\alpha)^2-\cdots - (dx^{p+q}_\alpha)^2)
\end{equation}
with some positive functions $\Omega_\alpha$. 
Coordinate charts in which the metric satisfies the ansatz (\ref{l9 g=Omega g_stand}) are called ``isothermal coordinates'' on $(M,g)$.
\end{definition}

Note that being conformally flat is a local property.

The situation with conformal flatness of manifolds depends on the dimension.
\begin{itemize}
\item If $\dim M=1$ any Riemannian manifold admits local coordinates in which $g=(dx)^2$. I.e. any 1-dimensional Riemannian manifold is flat and, a fortiori, is conformally flat.
\item If $\dim M=2$ (case of main interest for us), any (pseudo-)Riemannian manifold is conformally flat.\footnote{This is not a trivial fact. It can be proven from existence of a solution of the Beltrami equation for the change of coordinates 
from generic starting coordinates
 to isothermal coordinates. Originally this statement was proven by Gauss.
} 
\item If $\dim M=3$ a (pseudo-)Riemannian manifold is conformally flat if and only if its Cotton tensor vanishes at every point -- this is a certain tensor $C\in \Omega^2(M,TM)$ constructed in terms of derivatives of the Ricci tensor of the metric.
\item If $\dim M\geq 4$, a (pseudo-)Riemannian manifold is conformally flat if and only if the Weyl curvature tensor vanishes at every point -- this is a certain tensor $W\in \Omega^2(M,\wedge^2 T^*M)$ expressed in terms of the Riemann curvature tensor of $g$.
\end{itemize}

In particular, (pseudo-)Rimeannian manifolds of dimension $\geq 2$ are conformally flat, while in dimension $\geq 3$ there are local obstructions for conformal flatness.

Given a smooth manifold $M$, one has an action of the Lie group of diffeomorphisms of $M$ on the space of conformal structures:
\begin{equation}\label{l9 Diff acts on conf str}
\mr{Diff}(M)\;\;\calt \;\;\{\mr{conformal\;structures\;on\;}M\}
\end{equation}
\begin{definition}\label{l9 def: mod space of conf structures}
We call the orbit space $\MM_M$ of the action (\ref{l9 Diff acts on conf str}) the \emph{moduli space of conformal structures}.
\end{definition}

Note that the action (\ref{l9 Diff acts on conf str}) is not free: for $\xi$ a conformal structure on $M$ there can be a nontrivial stabilizer subgroup
\begin{equation}
\mr{Stab}_\xi =\{\phi\colon M\ra M\;|\; \phi^* \xi=\xi\}=\Conf(M,\xi)\quad \subset \mr{Diff}(M)
\end{equation}
-- the group of conformal automorphisms of $(M,\xi)$. Also, if $\psi\colon M\ra M$ is a diffeomorphism, then $\mr{Stab}_{\xi}$ and $\mr{Stab}_{\psi^*\xi}$ are conjugate subgroups of $\mr{Diff}(M)$.

\begin{remark} In which sense $\MM_M$ is a  ``space''? There are several ways to understand this object:
\begin{enumerate}[(i)]
\item As a topological space, with quotient topology.
\item As an  orbifold -- a manifold with ``nice'' singularities (of the local form $\RR^N/\Gamma$, with $\Gamma$ a finite group acting on $\RR^N$ properly).
\item As a ``stack.'' This is the correct way to talk about $\MM_M$, but we will be a bit 
simple-minded about it and just remember a part of the ``stacky data'' -- that points $[\xi]\in\MM_M$ come equipped with stabilizers -- subgroups $\mr{Stab}_\xi\subset \mr{Diff}(M)$.
\end{enumerate}
\end{remark}

\begin{remark}
The discussion below Definition \ref{l9 def: conf flatness} suggests that the moduli space of conformal structures on a manifold of dimension $\geq 3$ is infinite-dimensional, due to the presence of local moduli (Cotton and Weyl tensors). In dimension $2$, there are no local moduli: all metric are locally conformally equivalent to the standard flat metric, and only global moduli remain. So, one would expect the $\MM_M$ to be ``small'' (finite-dimensional) in this case. This indeed turns out to be the case, as we discuss below.
\end{remark}

\subsection{Reminder: almost complex structures and complex structures}
For details on complex and almost complex manifolds we refer the reader e.g. to \cite[Section 15]{CdS}.

\begin{definition}
An \emph{almost complex structure} on a smooth manifold $M$ is smooth family over $M$ of endomorphisms of (real) tangent spaces that square to $-\mr{id}$:
\begin{equation}
J\in \Gamma(M,\mr{End}(TM)),\quad \mr{s.t.} \;J_x^2=-\mr{id}\quad \mr{for\;all}\; x\in M.
\end{equation}
\end{definition}
Consider the matrix of $J_x$ with respect to some basis in $T_xM$. Note that the eigenvalues of a real matrix with square $-\mr{id}$ must be $+i$ and $-i$, moreover $+i$ and $-i$ must have the same multiplicity. In particular, if $M$ has an almost complex structure, $\dim M=2m$ must be even.

Also note that an almost complex structure induces an orientation on $M$: for $(v_1,\ldots,v_m)$ an $m$-tuple of generic vectors in $T_xM$, we say that the $(2m)$-tuple $(v_1,Jv_1,v_2,Jv_2,\ldots, v_m, Jv_m)$ is positively oriented in $T_xM$ (it is a straightforward check that this orientation is independent of the choice of the initial $m$-tuple).

Given an almost complex structure, we have a splitting of the complexified tangent bundle into ``holomorphic'' and ``antiholomorphic'' parts:
\begin{equation}\label{l9 T_C M splitting}
\underbrace{T_{\CC}M}_{\CC\otimes TM} = T^{1,0}M\oplus T^{0,1}M.
\end{equation}
On the right, for each $x\in M$, the complex vector spaces $T^{1,0}_x M$, $T^{0,1}_x M$ are defined as $+i$- and $-i$-eigenspaces of $J_x$, respectively. The splitting (\ref{l9 T_C M splitting}) induces a dual splitting of the complexified cotangent bundle
\begin{equation}\label{l9 T^*_C M splitting}
T^*_\CC M = \underbrace{(T^{1,0})^*M}_K\oplus \underbrace{(T^{0,1})^*M}_{\bar{K}}
\end{equation}
we will denote the holomorphic/antiholomorphic cotangent bundles on the right by $K$, $\bar{K}$. Furthermore, the splitting (\ref{l9 T^*_C M splitting}) of $k$-forms on $M$ (with complex coefficients) as
\begin{equation}
\Omega^k_\CC(M)=\bigoplus_{p\geq 0,q\geq 0, p+q=k} \underbrace{\Omega^{p,q}(M)}_{\Gamma(M,\wedge^p K\otimes \wedge^q \bar{K})}
\end{equation}
We refer to elements of $\Omega^{p,q}$ as $(p,q)$-forms on $M$.

Note that if the (real) dimension of the manifold $M$ is $2m$, then $T^{1,0}_xM$, $T^{0,1}_xM$ have complex dimension $m$ -- then we say that $M$ has \emph{complex dimension} 
$$\dim_\CC M=m=\frac12 \dim M.$$
 In particular, one has $\Omega^{p,q}(M)=0$ if either $p>m$ or $q>m$.

Consider the differential operators
$\dd,\bar\dd\colon \Omega^\bt(M)\ra \Omega^{\bt+1}(M)$ 
defined by
\begin{equation}
\dd\alpha\colon= \pi_{p+1,q}(d\alpha),\;\bar\dd\alpha \colon= \pi_{p,q+1} (d\alpha)
\end{equation}
for $\alpha\in \Omega^{p,q}$. Here $d$ is de Rham operator and $\pi_{p,q}$ is the projection of $\Omega(M)$ onto its component $\Omega^{p,q}(M)$.
One calls $\dd,\bar{\dd}$ the holomorphic/antiholomorphic Dolbeault operators. By default, just ``Dolbeault operator'' is $\bar\dd$.

\begin{definition}\label{l9 def integrable acs}
An almost complex structure $J$ on a manifold $M$ is \emph{integrable} if one can find an atlas of complex coordinates $(z^j_\alpha,\bar{z}^{\bar{j}}_\alpha)$ on coordinate neighborhoods $U_\alpha$ such that
\begin{itemize}
\item $J\dd_{z^j}=i\dd_{z^j},\quad J\dd_{\bar{z}^{\bar{j}}}=-i\dd_{\bar{z}^{\bar{j}}}$,
\item the transition functions between charts are holomorphic: $\frac{\dd z_\beta^j}{\dd \bar{z}_\alpha^{\bar{j}}}=0$, $\frac{\dd \bar{z}_\beta^{\bar{j}}}{\dd z_\alpha^j}=0$ for any $j,\bar{j}$ and any two overlapping neighborhoods $U_\alpha$, $U_\beta$ from the atlas.
\end{itemize}
An integrable almost complex structure $J$ is called a complex structure (not ``almost''). A manifold $M$ with a complex structure $J$ is called a complex manifold.
\end{definition}

Equivalent characterizations of integrability of $J$ are:
\begin{enumerate}[(i)]
\item An almost complex structure $J$ is integrable if and only if its Nijenhuis tensor $N_J\in \Omega^2(M,TM)$
vanishes: 
\begin{equation} \label{l9 N_J}
N_J(X,Y)\colon= -J^2[X,Y]+J[JX,Y]+J[X,JY]-[JX,JY]  =0
\end{equation}
for $X,Y\in\X(M)$. 
An equivalent restatement of (\ref{l9 N_J}) is: for $X^{1,0},Y^{1,0}\in \Gamma(M,T^{1,0}M)$ two sections of the holomorphic tangent bundle, 
their Lie bracket is also a section of the holomorphic tangent bundle (the antiholomorphic component vanishes):
\begin{equation}\label{l9 involutivity}
[X^{1,0},Y^{1,0}]^{0,1} =0.
\end{equation}
\item An almost complex structure $J$ is integrable if and only if one has 
\begin{equation}\label{l9 dbar^2=0}
\bar\dd^2=0.
\end{equation}
\item The de Rham operator splits as\footnote{For a non-integrable almost complex structure, $d$ additionally has components of bi-degree $(2,-1)$ and $(-1,2)$ w.r.t. the $(p,q)$-grading on forms.}
\begin{equation}
d=\dd+\bar\dd.
\end{equation}
\end{enumerate}

Equivalence of Definition \ref{l9 def integrable acs} with the characterizations above is known as the Newlander-Nirenberg theorem.\footnote{In particular, the equivalence of Definition \ref{l9 def integrable acs} and vanishing property of the Nijenhuis tensor (\ref{l9 N_J}) can be viewed as a complex analog of Frobenius theorem. Recall that Frobenius theorem says that a tangential distribution is involutive if and only if it integrates locally to a foliation. In the case of Newlander-Nirenberg theorem, the distribution in question is complex, $T^{1,0}M\subset T_\CC M$. In this analogy, a foliation corresponds to local complex coordinates and involutivity is the property (\ref{l9 involutivity}).}

On a complex manifold $(M,J)$, the Dolbeault operators written locally in terms of complex coordinates are
\begin{equation}
\dd=\sum_j dz^j \frac{\dd}{\dd z^j},\quad \bar\dd=\sum_{\bar{j}} d\bar{z}^{\bar{j}}\frac{\dd}{\dd \bar{z}^{\bar{j}}}
\end{equation}

\begin{lemma}\label{l9 lemma: 2d acs is integrable}
  Any almost complex structure $J$ on a manifold $M$ of dimension $\dim M=2$ is integrable.
\end{lemma}
\begin{proof}
This follows e.g. from (\ref{l9 dbar^2=0}): $\bar\dd^2$ maps $(p,q)$-forms to $(p,q+2)$-forms. But there are no forms of degree $(*,\geq 2)$ on a $2$-manifold. 
\end{proof}

\subsection{2d conformal structures (of Riemannian signature) = complex structures}
\label{ss 2d conf str = cx str}
We will reserve the letter $\Sigma$ for 2-dimensional surfaces, while manifolds of general dimension we denote by $M$. 
\begin{lemma} \label{l9 lemma 2d conf str = cx str}
Fix an oriented 2-dimensional surface $\Sigma$.
One has a natural bijection between the following two sets:
\begin{enumerate}[(i)]
\item the set of conformal structures on $\Sigma$ of signature $(2,0)$ (i.e. \emph{Riemannian} metrics modulo Weyl transformations),
\item the set of complex structures $J$ on $\Sigma$, compatible with orientation.
\end{enumerate}
\end{lemma}

\begin{proof}
Given a conformal structure $\xi=g/\sim$ on $\Sigma$, we assign to it the complex structure $J\colon T_x\Sigma\ra T_x\Sigma$ which maps a tangent vector $u\in T_x \Sigma$ to the vector $v\in T_x \Sigma$ uniquely characterized by the following properties:
\begin{itemize}
\item $v$ is orthogonal to $u$ (according to any metric $g$ representing $\xi$),
\item $v$ and $u$ have the same length (according to any metric $g$ representing $\xi$),
\item $(u,v)$ is a positively oriented pair in $T_x \Sigma$.
\end{itemize}
\begin{figure}[H]
\begin{center}
\includegraphics[scale=0.7]{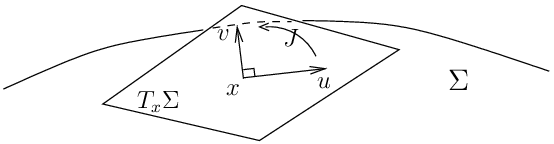}
\end{center}
\caption{Complex structure on a surface.}
\end{figure}

Here is the inverse construction. Given a complex structure $J$ on $\Sigma$, we assign to it a conformal structure $\xi$ on $\Sigma$, defined as follows:
Choose some volume form $\sigma\in \Omega^2(\Sigma)$ compatible with the orientation.
Set $g_x(u,v)\colon= \sigma (u,Jv)$.
It is a straightforward check that $g_x$ is positive symmetric bilinear form on $T_x\Sigma$, i.e., a metric. The conformal class of $g$ does not depend on a choice of the volume form $\sigma$ (changing $\sigma\ra \Omega\sigma$ with $\Omega$ a positive function, induces a change of $g$ by a Weyl transformation). This construction $J\ra \xi$ inverts the construction $\xi\ra J$ above.
\end{proof}

\begin{remark} \label{l9 rem: 2d conf equivalences = cx equivalences}
Under the correspondence between conformal and complex structure of Lemma \ref{l9 lemma 2d conf str = cx str}, equivalences of conformal and complex surfaces also go into one another: $\phi\colon (\Sigma,\xi)\ra (\Sigma',\xi')$ is a conformal diffeomorphism of surfaces equipped with conformal structures if and only if $\phi$ is a biholomorphic map of the corresponding complex surfaces $\phi\colon (\Sigma,J)\ra (\Sigma',J')$. 

In particular, the correspondence of Lemma \ref{l9 lemma 2d conf str = cx str} gives an equivalence of categories, between 
\begin{enumerate}[(a)]
\item the category of surfaces equipped with conformal structure, with morphisms being conformal diffeomorphisms on one side and
\item the category of complex surfaces and biholomorphic maps on the other side.
\end{enumerate}
\end{remark}

\begin{remark}
As a consequence of Lemma \ref{l9 lemma 2d conf str = cx str}, in the case of 2d surfaces, the moduli space of conformal structures (Definition \ref{l9 def: mod space of conf structures}) and the moduli space of complex structures (\ref{l10 mod space of cx str on M}) are the same.
\end{remark}

\begin{definition} A smooth manifold $\Sigma$ of dimension $2$ equipped with a complex structure is called a Riemann surface.
Equivalently, a Riemann surface is a smooth 2-manifold equipped with orientation and conformal structure.\footnote{Note that a Riemann surface is not a Riemannian manifold: it does not come with a preferred metric.} 
\end{definition}

\begin{definition}
We will call a Riemann surface \emph{stable} if it does not admit nonzero conformal vector fields. In the case of a Riemann surface with marked points $p_1,\ldots,p_n$, we call it stable if there are no nonzero conformal vector fields which vanish at the points $p_i$.
\end{definition}

\subsection{Deformations of a complex structure. Parametrization of deformations by Beltrami differentials}
\label{sss Beltrami differentials}
Let $(M,J)$ be a complex manifold. A deformation of a complex structure in the class of almost complex structures can be described as a change of the Dolbeault operator $\bar\dd$:
\begin{equation}\label{l9 dbar deformation by mu}
\bar\dd \ra \underbrace{\bar\dd-\mu}_{\bar\dd_\mu} 
\end{equation}
where the parameter of the deformation
\begin{equation}
\mu\in \Omega^{0,1}(M,T^{1,0}M)
\end{equation}
is called the \emph{Beltrami differential}; $\bar\mu \in \Omega^{1,0}(M,T^{0,1}M)$ is the complex conjugate object. In local complex coordinates, $\mu$ has the form
\begin{equation}\label{l9 mu in coords}
\mu=\mu^j_{\bar{i}}(z)d\bar{z}^{\bar{i}}\frac{\dd}{\dd{z^j}}
\end{equation}
where the coefficient functions $\mu^j_{\bar{i}}(z)$ are arbitrary smooth complex-valued functions on $M$. In (\ref{l9 dbar deformation by mu}), we understand $\mu$ as a first-order differential operator $\Omega^{p,q}\ra \Omega^{p,q+1}$. The deformed Dolbeault operator written locally thus has the form
\begin{equation}
\bar\dd_\mu= d\bar{z}^{\bar{i}}\left(\frac{\dd}{\dd \bar{z}^{\bar{i}}}-\mu^j_{\bar{i}}(z)\frac{\dd}{\dd z^j}\right)
\end{equation}

The deformation (\ref{l9 dbar deformation by mu}) is accompanied by the deformation of the holomorphic Dolbeault operator 
\begin{equation}
\dd\ra \dd-\bar\mu
\end{equation}
where $\bar\mu$ is the complex conjugate of the Beltrami differential $\mu$.

Expressed as a deformation of $J$, 
(\ref{l9 dbar deformation by mu}) corresponds to the change
\begin{equation}
J_x\ra 
J_x+2i(\mu_x-\bar\mu_x)
\end{equation}
for any $x\in M$ (in the first order in $\mu,\bar\mu$).

In order for the deformation (\ref{l9 dbar deformation by mu}) to be a complex structure (rather than almost complex), it must satisfy the integrability condition 
\begin{equation}\label{l9 KS}
(\bar\dd_\mu)^2=0 \quad \Leftrightarrow \quad \bar\dd\mu-\frac12 [\mu,\mu]=0
\end{equation}
The equation on the right is called the Kodaira-Spencer equation.

\begin{remark}
In other words, deformations of a complex structure on a given complex manifold are governed by Maurer-Cartan elements of the differential graded Lie algebra 
\begin{equation}\label{l9 deformation complex}
\Omega^{0,*}(M,T^{1,0}M),\;\;\bar\dd,\;\;[,]
\end{equation}
of $(0,q)$-forms with coefficients in the holomorphic tangent bundle,
with differential $\bar\dd$ and Lie bracket $[,]$ coming as the wedge product of forms tensored with the Lie bracket of $(1,0)$-vector fields.\footnote{We should mention that there is a natural and very deep 
generalization of deformations of complex structures due to Barannikov-Kontsevich \cite{Barannikov-Kontsevich}. Here one replaces the dg Lie algebra (\ref{l9 deformation complex}) by a bigger one: $\Omega^{0,p}(M,\wedge^qT^{1,0}M)$, with total grading by $p+q-1$, and considers Maurer-Cartan elements there.}
\end{remark}

We emphasize that the formula (\ref{l9 dbar deformation by mu}), with $\mu$ satisfying the Kodaira-Spencer equation (\ref{l9 KS}), describes \emph{finite} deformations of a complex structure, not just infinitesimal (first-order) deformations.

We also remark that if $\dim M=2$, then the Kodaira-Spencer equation (\ref{l9 KS}) holds trivially (as there are no $(0,2)$-forms on $M$), cf. Lemma \ref{l9 lemma: 2d acs is integrable}.

\begin{remark} For $\dim M=2$, a metric on a surface compatible with the complex structure deformed by a Beltrami differential $\mu$ locally has the form
\begin{equation}\label{l9 g via Beltrami}
g=\rho(z)^2 |dz + \mu^z_{\bar{z}}(z) d\bar{z}|^2
\end{equation}
with $\rho$ some positive function 
and $z,\bar{z}$ the non-deformed local complex coordinates (associated to the reference complex structure). For the metric (\ref{l9 g via Beltrami}) to be nondegenerate one needs the Beltrami differential to be sufficiently small: $|\mu^z_{\bar{z}}(z)|<1$ everywhere on $M$.

The deformed complex coordinate $z'$ is a solution of the Beltrami equation:
\begin{equation}
(\bar\dd - \mu) z'=0.
\end{equation}
\end{remark}

\marginpar{Lecture 10,\\ 9/14/2022}

\subsubsection{Tangent space to the space of complex structures.}
The discussion above implies that the tangent space to the space of complex structures on a manifold $M$ at a complex structure $J$ is the space of $\bar\dd$-closed Beltrami differentials (with $\bar\dd$-closed condition being the first-order approximation of the Kodaira-Spencer equation (\ref{l9 KS})):
\begin{equation}\label{l10 T_J (cx str)}
T_J\Big(\mbox{space of complex structures on }M\Big) \simeq \Omega^{0,1}_{\bar\dd\mr{-closed}}(M,T^{1,0}M).
\end{equation}
For the moduli space of of complex structures,\footnote{
In this subsection we use $\MM_M$ for the moduli space of complex (not conformal) structures on $M$. Later, when we specialize to surfaces, there will be no difference between moduli of complex and conformal structures, due to Lemma \ref{l9 lemma 2d conf str = cx str}.
}
\begin{equation}\label{l10 mod space of cx str on M}
\MM_M=\{\mbox{complex structures on }M\}/\mr{Diff}(M),
\end{equation} 
the tangent space at the class of $J$ is given by the quotient of (\ref{l10 T_J (cx str)}) modulo the action of (infinitesimal) diffeomorphisms on Beltrami differentials, 
\begin{equation}
\mu\sim \mu+\bar\dd v^{1,0}\label{l10 mu -> mu + dbar v}
\end{equation} 
with $v^{1,0}$ the projection to $T^{1,0}$ of any vector field on $M$. I.e., one has
\begin{equation}
T_J \MM_M = H^{0,1}(M,T^{1,0} M)
\end{equation}
-- the cohomology of the complex (\ref{l9 deformation complex}) in degree one.

\subsubsection{Cotangent space to the space of complex structures (case of surfaces).}
In the case of a 2-dimensional surface, the $\bar\dd$-closed condition in (\ref{l10 T_J (cx str)}) is automatic. In this case, one can describe the \emph{cotangent} space to the space of complex structures as 
\begin{equation}\label{l10 T^*_J (cx str)}
T^*_J\Big(\mbox{space of complex structures on }\Sigma\Big) = \Omega^{1,0}(\Sigma,K 
) \simeq \Gamma(\Sigma, K^{\otimes 2})
\end{equation}
where $K=(T^{1,0})^*\Sigma$ is the holomorphic cotangent bundle. Elements of (\ref{l10 T^*_J (cx str)}) are quadratic differentials $\tau$ on $\Sigma$ -- tensors written in a local complex coordinate chart as $\tau=f(z)(dz)^2$. The pairing between an element $\mu$ of (\ref{l10 T_J (cx str)}) (a Beltrami differential) and an element $\tau$ of (\ref{l10 T^*_J (cx str)}) is
\begin{equation}
\int_\Sigma \langle \mu,\tau \rangle
\end{equation}
where $\langle,\rangle$ is a pairing between vectors $T^{1,0}_x\Sigma$ and covectors $(T^{1,0}_x\Sigma)^*$; thus, $\langle \mu,\tau\rangle$ is a $(1,1)$-form on $\Sigma$, i.e., a $2$-form, which can be integrated. 

For the cotangent space of the moduli space of complex structures $\MM_\Sigma$, (\ref{l10 T^*_J (cx str)}) implies
\begin{equation}\label{l10 T^*J (moduli space)}
T^*_J \MM_\Sigma \simeq \Omega^{1,0}_{\bar\dd\mr{-closed}}(\Sigma,K) = \{\mr{holomorphic\;quadratic \;differentials\;on\;}\Sigma \}
\end{equation}
-- the space of \emph{holomorphic} quadratic differentials, locally of the form $\tau=f(z)(dz)^2$ with a \emph{holomorphic} coefficient function.

The holomorphicity condition in (\ref{l10 T^*J (moduli space)}) arises because we are looking for the elements of (\ref{l10 T^*_J (cx str)}) annihilating all vectors of the form 
$$\bar\dd v^{1,0}\in T_J(\mr{space\; of\; complex\;structures}),$$ 
cf. (\ref{l10 mu -> mu + dbar v}).

\begin{remark}
In 2d conformal field theory, the stress-energy tensor $T$ is a holomorphic quadratic differential, so it can be seen via (\ref{l10 T^*J (moduli space)}) as a cotangent vector to the moduli space of complex structures.
\end{remark}

\subsection{Uniformization theorem}
The following statement is a key result on Riemann surfaces, known as the Uniformization Theorem.
\begin{thm}[
Klein-Koebe-Poincar\'e] \label{l10 thm: Uniformization}
Any simply-connected Riemann surface $(\Sigma,\xi)$ is conformally equivalent to exactly one the following three model surfaces:
\begin{enumerate}[(i)]
\item $\CP^1$,
\item $\CC$,
\item Open disk $D=\{z\in \CC\;|\; |z|<1 \}$ (``Poincar\'e disk'') or, equivalently (a conformally equivalent model), upper half-plane $\Pi_+=\{z\in \CC\;|\; \mr{Im}(z)>0\}$.
\end{enumerate}
\end{thm}
\begin{remark} For each of the model surfaces from Theorem \ref{l10 thm: Uniformization}, there is a metric of constant scalar curvature $R=+1,0,-1$ representing its conformal class:
\begin{enumerate}[(i)]
\item $\CP^1$ has a unique metric in its conformal class of scalar curvature $R=+1$ -- the Fubini-Study metric $g=\frac{4dzd\bar{z}}{(1+z\bar{z})^2}$.
\item $\CC$ has a unique up to scaling flat (i.e. $R=0$) metric in its conformal class, $g=C dzd\bar{z}$, for any $C>0$.
\item $D$ has a unique metric of scalar curvature $R=-1$ in its conformal class, $g=\frac{4dz d\bar{z}}{(1-z\bar{z})^2}$. Equivalently, $\Pi_+$ has a unique $R=-1$ metric $g=\frac{dz d\bar{z}}{(\mr{Im}(z))^2}$.
\end{enumerate}
We also remark that for these distinguished metrics, in cases (i) and (iii) the groups of isometries and all conformal automorphisms coincide (put another way, each conformal automorphism is an isometry).
\end{remark}

For a general Riemann surface $\Sigma$ (not necessarily simply-connected), its universal cover $\til\Sigma$ inherits a conformal structure from $\Sigma$, is simply-connected and corresponds to one of the model surfaces from Theorem \ref{l10 thm: Uniformization}. The group of covering transformations acts on $\til\Sigma$ by conformal automorphisms. Thus, any Riemann surface $\Sigma$ is conformally equivalent to a surface of the form
\begin{equation}\label{l10 Sigma^model/Gamma}
\Sigma^\mr{model}/\Gamma
\end{equation} 
where $\Gamma$ is the image of a group homomorphism 
\begin{equation}\label{l10 rho}
\rho\colon \pi_1(\Sigma)\ra \Conf(\Sigma^\mr{model})
\end{equation} In particular, we need $\Gamma$ to be a discrete subgroup of $\Conf(\Sigma^\mr{model})$, acting freely on $\Sigma^\mr{model}$ (so that the quotient (\ref{l10 Sigma^model/Gamma}) is a smooth manifold).

\begin{remark}\label{l10 rem: conjugation}
If we change in (\ref{l10 Sigma^model/Gamma}) the subgroup $\Gamma$ to a conjugate subgroup $\chi \Gamma\chi^{-1}$ 
with $\chi\in \Conf(\Sigma^\mr{model})$ a fixed element (or, put another way, we change the homomorphism (\ref{l10 rho}) to a conjugate one, $\rho\mapsto \chi\rho\chi^{-1}$), then the quotient (\ref{l10 Sigma^model/Gamma}) changes to a conformally equivalent surface.
\end{remark}

This leads to the following classification of connected Riemann surfaces:
\begin{enumerate}[(i)]
\item $\CP^1$
\item  
\begin{enumerate}[(a)]
\item $\CC$
\item $\CC\backslash\{0\}$ or, equivalently, infinite cylinder $\CC/\ZZ$.
\item 2-torus $\CC/\Lambda$ where $\Lambda=u\ZZ\oplus v \ZZ\in \CC$ is a lattice spanned by vectors $u,v\in \CC$ with $u/v \not\in \RR$. Using rotation and scaling,\footnote{
In this example, $\rho$ maps $\pi_1(S^1\times S^1)$ to a lattice $\Lambda$ seen as a subgroup of $\{\mr{translations}\}\subset \Conf(\CC)$. The change of the generators of $\Lambda$ by translation and scaling corresponds to the conjugation of $\rho$, as in Remark \ref{l10 rem: conjugation}, by rotation and scaling.
} one can convert the pair $(u,v)$ to $(1,\tau)$ with $\tau\in \Pi_+$.
\end{enumerate}
\item $\Pi_+/\Gamma$ for some $\Gamma\subset PSL_2(\RR)$ a ``Fuchsian group'' -- a discrete subgroup of $PSL_2(\RR)$ isomorphic to $\pi_1(\Sigma)$. This case includes all surfaces of genus $g\geq 0$ with $n\geq 0$ boundary circles  (the surfaces are considered as open -- the boundary circles are not a part of $\Sigma$), with $\chi(\Sigma)=2-2g-n<0$, and also includes annulus (or finite cylinder) and punctured disk (or semi-infinite cylinder). 
\end{enumerate}

Surfaces of types (i), (ii), (iii) above are called, respectively, elliptic, parabolic and hyperbolic. 
Elliptic surfaces admit (in their conformal class) a unique metric of scalar curvature $+1$, parabolic surfaces -- a unique-up-to-scaling flat metric, hyperbolic surfaces -- a unique metric of scalar curvature $-1$.

\begin{example} A closed Riemann surface of genus $g\geq 2$ falls into the type (iii) (hyperbolic). Using the standard presentation of the fundamental group of a surface as 
$$\pi_1(\Sigma)=\langle \alpha_1,\ldots,\alpha_g,\beta_1,\ldots,\beta_g \;|\; \prod_{i=1}^g  \alpha_i \beta_i\alpha_i^{-1}\beta_i^{-1}=1\rangle$$
we see that its image in $PSL_2(\RR)$ under $\rho$ is a $2g$-tuple of elements 
$$a_1,\ldots,a_g,b_1,\ldots,b_g\quad \in PSL_2(\RR)$$ 
subject to a relation 
$$ \prod_{i=1}^g a_i b_i a_i^{-1} b_i^{-1}=1. $$
\end{example}
Moreover, by Remark \ref{l10 rem: conjugation}, two $2g$-tuples should be considered equivalent if they are related by conjugation by an element $h\in PSL_2(\RR)$:
\begin{equation}
(a_1,\ldots,a_g,b_1,\ldots,b_g)\sim (ha_1 h^{-1},\ldots, ha_g h^{-1}, hb_1 h^{-1},\ldots, hb_g h^{-1}).
\end{equation}

\subsection{Moduli space $\MM_{g,n}$ of 
complex structures on a surface with $n$ marked points}

\begin{definition}
Fix a smooth closed oriented surface $\Sigma$ of genus $g$. Let $p_1,\ldots,p_n\in \Sigma$ be a collection of pairwise distinct points on $\Sigma$.
The moduli space of complex structures on $\Sigma$ with $n$ marked points is the quotient space\footnote{Again, there are different ways to understand the quotient here: as a topological space with quotient topology (``coarse'' moduli space), as an orbifold, as a stack.}
\begin{equation}\label{l10 M_g,n def}
\MM_{g,n}\colon= \{\mr{complex\;structures\; on\;}\Sigma\}/\mr{Diff}_+(\Sigma,\{p_i\}) ,
\end{equation}
where $\mr{Diff}_+(\Sigma,\{p_i\})$ stands for the orientation-preserving diffeomorphisms of $\Sigma$ that do not move each of the marked points $p_i$.\footnote{
Other names used for $\MM_{g,n}$ include: ``moduli space of conformal structures'' (since in 2d, conformal and complex structures correspond to one another), ``moduli space of Riemann surfaces'' and (in the context of algebraic geometry) ``moduli space of (algebraic) curves.'' 
}
\end{definition}
There is another version of the moduli space where we quotient by orientation-preserving diffeomorphisms which are allowed to move a marked point to another marked point: 
$$
\Diff_+^\mr{unordered}(\Sigma,\{p_i\})\colon=
\{\phi\in \mr{Diff}_+(\Sigma)\;|\; \phi(p_i)=p_{\sigma(i)}\;\mr{for\;some\;}\sigma\in S_n \}.$$
We denote the quotient of the space of complex structures on $\Sigma$ by such diffeomorphisms $\MM^\mr{unordered}_{g,n}$ (unordered marked points), whereas (\ref{l10 M_g,n def}) is the moduli space of complex structures with $n$ \emph{ordered} marked points,  $\MM_{g,n}=: \MM_{g,n}^\mr{ordered}$.\footnote{
One has an action of the symmetric group $S_n$ on $\MM_{g,n}^\mr{ordered}$ by relabeling the marked points. The unordered moduli space is naturally identified with the orbit space of this action:  $\MM_{g,n}^\mr{unordered}=\MM_{g,n}^\mr{ordered}/S_n$.
}

\begin{definition}
\label{l10 def: univ family}
We call the \emph{universal family} (of Riemann surfaces) the fiber bundle $\mc{E}_{g,n}$ over $\MM_{g,n}$ where the fiber over the point corresponding to a Riemann surface $\Sigma$ with marked points $\{p_i\}$ is that same surface with same marked points.
\end{definition}

The idea of Teichm\"uller theory is to do the quotient (\ref{l10 M_g,n def}) in two steps:
\begin{enumerate}[1.]
\item Take the quotient
\begin{equation}\label{l10 cx str/Diff_0}
\{\mbox{complex structures on }\Sigma\}/\mr{Diff}_0(\Sigma,\{p_i\})=: \mc{T}_{g,n}
\end{equation}
with respect to the \emph{connected component of identity}  in the  group of diffeomorphisms preserving the marked points, $\mr{Diff}_0\subset \Diff_+$. The quotient (\ref{l10 cx str/Diff_0}) is called the Teichm\"uller space $\mc{T}_{g,n}$.\footnote{The points of $\mc{T}_{g,n}$ correspond to equivalence classes of complex structures on $\Sigma$ (modulo diffeomorphisms fixing the marked points), equipped with a ``marking'' --  a diffeomorphism $\phi\colon \Sigma_{g,n}^\mr{stand}\ra \Sigma$ from a ``standard'' surface to $\Sigma$ (taking marked points to marked points), where $\phi$ is considered up to isotopy.}
In the case $\chi=2-2g-n<0$ (the ``stable'' case), the Teichm\"uller space is diffeomorphic to $\RR^{6g-6+2n}$. It carries a natural complex structure and several natural metrics, see e.g. \cite{Penner}.
\item Take the quotient of (\ref{l10 cx str/Diff_0}) by the discrete group of connected components of the diffeomorphism group appearing in (\ref{l10 M_g,n def}),
\begin{equation}
\pi_0 \Diff(\Sigma,\{p_i\})=: \pMCG_{g,n} .
\end{equation}
This group is known as the ``pure mapping class group'' of a surface of genus $g$ with $n$ marked points. One has a natural action of $\pMCG_{g,n}$ on the Teichm\"uller space inherited from the action of diffeomorphisms on complex structures. Thus, we consider the quotient
\begin{equation}
\MM_{g,n}=\mc{T}_{g,n}/\pMCG_{g,n} .
\end{equation}
\end{enumerate}

\begin{remark}
If one wants to construct the moduli space with unordered punctures, one extra step is needed: a quotient by the symmetric group $S_n$ (which acts by permuting the marked points):
\begin{equation}
\MM^\mr{unordered}_{g,n}=\MM_{g,n}/S_n .
\end{equation}
Another way to write it is directly as a quotient of the Teichm\"uller space
\begin{equation}
\MM^\mr{unordered}_{g,n}=\mc{T}_{g,n}/\MCG_{g,n}
\end{equation}
by the full (not ``pure'') mapping class group 
\begin{equation}
\MCG_{g,n}\colon= \pi_0\Diff_+^\mr{unordered}(\Sigma,\{p_i\}).
\end{equation}
\end{remark}

\begin{remark}
The action of the mapping class group 
on the Teichm\"uller space $\mc{T}_{g,n}$ is free almost everywhere, except for a discrete set of points where it has a discrete (in fact, finite, for $g,n$ sufficiently large) stabilizer. These points correspond to orbifold singularities of the quotient $\MM_{g,n}$.
\end{remark}


\begin{remark} The following remark is from \cite{Penner}.
Given a closed surface $\Sigma$ of genus $g\geq 2$, by the Uniformization Theorem (see (\ref{l10 Sigma^model/Gamma}) and Remark \ref{l10 rem: conjugation}) one has a map
\begin{equation}
\{\mbox{conformal structures on }\Sigma\}\ra \{\mr{subgroups\;}\Gamma\subset PSL_2(\RR)\;\mr{s.t.}\; \Gamma\simeq \pi_1(\Sigma)\}/PSL_2(\RR)
\end{equation}
More specifically, one has a map 
\begin{equation}
\mc{T}_{g,0}\xra{p} \mr{Hom}(\pi_1(\Sigma),PSL_2(\RR))/PSL_2(\RR)
\end{equation}
In fact, $p$ is injective and its image is 
\begin{equation}
\mr{im}(p)= \mr{Hom}^{df}(\pi_1(\Sigma),PSL_2(\RR))/PSL_2(\RR)
\end{equation}
where superscript $d$ stands for ``discrete'' (so that $1$ is not an accumulation point of the image of $\pi_1$), $f$ is for ``faithful'' (injective). One can also allow marked points -- then one gets bijection
\begin{equation}\label{l10 T as moduli space of PSL_2(R)-local systems}
\mc{T}_{g,n}\xra{\sim} \mr{Hom}^{dfp}(\pi_1(\Sigma_{g,n}),PSL_2(\RR))/PSL_2(\RR)
\end{equation}
where superscripts $d,f$ are as above and $p$ means ``periferal cycles map to parabolic elements of $PSL_2(\RR)$'' (i.e. elements with trace $\pm 2$). On the right hand side, $\Sigma_{g,n}$ is understood as a surface of genus $g$ with $n$ points removed. Thus, one has an identification of the Teichm\"uller space with a (part of) the moduli space of $PSL_2(\RR)$-local systems on $\Sigma$. For instance, the formula for the dimension of the Teichm\"uller space
\begin{equation}
\dim \mc{T}_{g,n}=6g-6+2n
\end{equation}
follows from (\ref{l10 T as moduli space of PSL_2(R)-local systems}) immediately.
\end{remark}

\subsection{Aside: cross-ratio}
\begin{definition} Given four pairwise distinct points $z_1,z_2,z_3,z_4$ in $\CP^1$, their \emph{cross-ratio} is the number
\begin{equation}\label{l10 cross-ratio}
[z_1,z_2:z_3,z_4]\colon= \frac{(z_1-z_3)(z_2-z_4)}{(z_1-z_4)(z_2-z_3)} = \frac{z_1-z_3}{z_1-z_4}:\frac{z_2-z_3}{z_2-z_4}\quad \in
\CC\backslash\{0,1\} .
\end{equation}
\end{definition}
\begin{lemma} The cross-ratio is invariant under M\"obius transformations:
\begin{equation}
[Az_1,Az_2:Az_3,Az_4]=[z_1,z_2:z_3:z_4]
\end{equation}
for any $A\in PSL_2(\CC)$. Put another way, the cross-ratio is
 a function on the open configuration space $C_4(\CP^1)$ of 4 points on $\CP^1$ invariant under the diagonal action of $PSL_2(\CC)$.
\end{lemma}
\begin{proof}
The M\"obius group is generated by translations $z\ra z+a$ with $a\in \CC$, rotations plus dilations $z\ra \lambda z$ with $\lambda\in \CC^*$, and the transformation $z\ra 1/z$. The expression (\ref{l10 cross-ratio}) depends only on differences of $z$'s, so it is invariant under translations. It is a rational function of total homogeneity degree $0$, so it is invariant under $z\ra \lambda z$. The only thing left to check is that the cross-ratio is invariant under $z\ra 1/z$. We have
$$[z_1^{-1},z_2^{-1}:z_3^{-1},z_4^{-1}]=\frac{(z_1^{-1}-z_3^{-1})(z_2^{-1}-z_4^{-1})}{(z_1^{-1}-z_4^{-1})(z_2^{-1}-z_3^{-1})} =
\frac{(z_3-z_1)(z_4-z_2)}{(z_4-z_1)(z_3-z_2)} = [z_1,z_2:z_3,z_4].
$$
\end{proof}

\begin{definition}
A an action of a group on a manifold $\rho\colon G\ra \Diff(M)$ is said to be $k$-transitive, for some $k\geq 1$, if any $k$-tuple of distinct points  in $M$ can be mapped to any other $k$-tuple of distinct points by acting with some element $g\in G$. Put another way, the action $\rho$ is $k$-transitive if the corresponding diagonal action on the open configuration space of $k$ points,
$\rho\colon G\ra \Diff(C_k(M))$ is transitive.
\end{definition}

\begin{lemma}\label{l10 lm 3-transitivity} 
\begin{enumerate}[(a)]
\item \label{l10 lm 3-transitivity (a)} 
The action of $PSL_2(\CC)$ on $\CP^1$ by M\"obius transformations is 3-transitive.
\item \label{l10 lm 3-transitivity (b)}  The M\"obius transformation sending any one given triple of distinct points in $\CP^1$ to any other triple is unique.
\end{enumerate}
\end{lemma}
\begin{proof} For (\ref{l10 lm 3-transitivity (a)}),
it suffices to check that for any triple of distinct points $(z_1,z_2,z_3)$ in $\CP^1$ there exists a M\"obius transformation that moves it to the triple $(\infty,0,1)$. We can find it as the following composition of simple M\"obius transformation:
\begin{multline}
(z_1,z_2,z_3)\xra{z\ra z^{-1}} (z_1^{-1},z_2^{-1},z_3^{-1})\xra{z\ra z-z_1^{-1}} (0,z_2^{-1}-z_1^{-1},z_3^{-1}-z_1^{-1}) \xra{}\\
\xra{z\ra z^{-1}} (\infty, \frac{z_1z_2}{z_{12}},\frac{z_1z_3}{z_{13}}) \xra{z\ra z-\frac{z_1z_2}{z_{12}}} 
(\infty,0,
\frac{z_1^2z_{32}}{z_{12}z_{13}}
) 
\xra{z\ra z\cdot \frac{z_{12}z_{13}}{z_1^2 z_{32}}} 
(\infty,0,1).
\end{multline}
Here we used a shorthand notation $z_{ij}\colon=z_i-z_j$.

For (\ref{l10 lm 3-transitivity (b)}) it suffices to show that the only M\"obius transformation mapping $(0,\infty,1)$ to $(0,\infty,1)$ is the identity map $z\ra z$. Indeed, for a general M\"obius transformation (\ref{l6 Mobius transf}), we have
$$ (0,\infty,1)\mapsto (\frac{b}{d},\frac{a}{c},\frac{a+b}{c+d}). $$
For the right hand side to be $(0,\infty,1)$, one needs $b=c=0$ and $a=d$, thus the transformation (\ref{l6 Mobius transf}) is the identity.
\end{proof}

\begin{lemma}
The cross-ratio (\ref{l10 cross-ratio}) has the following meaning: start with a quadruple of distinct points $(z_1,z_2,z_3,z_4)$ 
in $\CP^1$. Find the (unique) M\"obius transformation that transforms the quadruple to one of the form $(\varkappa,1,0,\infty)$ with some $\varkappa\in \CP^1\backslash\{0,1,\infty\}$.
Then one has 
\begin{equation}
[z_1,z_2:z_3,z_4]=\varkappa.
\end{equation}
\end{lemma}
\begin{proof}
By 3-transitivity of the M\"obius transformations, it suffices to check that the cross-ratio $[\varkappa,1:0,\infty]$ is $\varkappa$, and this is obvious from the definition (\ref{l10 cross-ratio}).
\end{proof}

\begin{remark} The group $S_4$ of permutations of $z_1,z_2,z_3,z_4$ acts on the cross-ratio. Its orbits consists of sextuples of the form
\begin{equation}
\varkappa\sim \frac{1}{\varkappa}\sim 1-\varkappa\sim \frac{\varkappa}{\varkappa-1}\sim\frac{1}{1-\varkappa}\sim \frac{\varkappa-1}{\varkappa}.
\end{equation}
More precisely, one has a short exact sequence of groups
$$  \ZZ_2\times \ZZ_2\ra S_4 \ra S_3 , $$
where $\ZZ_2\times \ZZ_2$ (the ``Klein four-group'') is the symmetries of the cross-ratio -- permutations of the four points that don't change it. Explicitly, these symmetries are:
$$ [z_1,z_2:z_3,z_4]=[z_2,z_1:z_4,z_3]= [z_3,z_4:z_1,z_2]=[z_4,z_3:z_2,z_1] .$$
\end{remark}

\marginpar{Lecture 11,\\
9/16/2022}

\subsection{Moduli space $\MM_{0,n}$}

A sphere $\Sigma=S^2$ equipped with some conformal structure and $n$ (distinct) marked points is conformally equivalent to the standard $\CP^1$, by the Uniformization Theorem. Under this conformal equivalence, the points are mapped to  the $n$-tuple of distinct points $z_1,\ldots,z_n \in \CP^1$. Note that the surfaces $(\CP^1,\{z_i\})$ and $(\CP^1,\{z'_i\})$ are conformally equivalent if and only if one can find a conformal automorphism  $\alpha\in \Conf(\CP^1)=PSL_2(\CC)$ such that $z'_i=\alpha( z_i)$, for $i=1,\ldots,n$.

Thus, we have the following:
\begin{itemize}
\item For $n=3$, any three points can be mapped to $0,1,\infty\in \CP^1$ by a M\"obius transformation (in a unique way). Thus, all surfaces $(\CP^1,\{z_1,z_2,z_3\})$ are conformally equivalent to the standard one $(\CP^1,\{0,1,\infty\})$. Hence the moduli space $\MM_{0,3}$ is a single point.
\item For $n=4$,  a quadruple of points can be mapped by unique M\"obius transformation to the quadruple of the form $(\varkappa,1,0,\infty)$ where $\varkappa=[z_1,z_2:z_3,z_4]$ -- the cross-ratio. Thus, the surface $(\CP^1,\{z_1,z_2,z_3,z_4\})$ is conformally equivalent the surface of the form $(\CP^1,\{\varkappa,1,0,\infty\})$. So, genus $0$ Riemann surfaces with $4$ marked points up to conformal equivalence are parametrized by a single complex parameter $\varkappa\in \CP^1\backslash\{0,1,\infty\}$. Hence, we have 
\begin{equation}\label{l11 M_0,4}
\MM_{0,4}\simeq \CP^1\backslash \{0,1,\infty\}
\end{equation}
and the coordinate on the moduli space is provided by the cross-ratio of the four marked points on $\Sigma=\CP^1$.
\item For $n=5$, one can map the last $3$ out of $5$ marked points to $1,0,\infty$ by a unique M\"obius transformation; this transformation moves the first two points to some $\varkappa_1\neq \varkappa_2 \in \CP^1\backslash\{0,1,\infty\}$, with 
$$ \varkappa_{1,2}=[z_{1,2},z_3:z_4,z_5] $$
the cross-ratios. Thus, one has
\begin{equation}
\MM_{0,5}\simeq C_2(\CP^1\backslash\{0,1,\infty\})
\end{equation}
-- the open configuration space of two distinct points $\varkappa_1,\varkappa_2$ in $\CP^1\backslash\{0,1,\infty\}$.
\item Similarly, for any $n\geq 3$, one has 
\begin{equation}\label{l11 M_0,n}
\MM_{0,n}\simeq C_{n-3}(\CP^1\backslash\{0,1,\infty\})
\end{equation}
where the surface $(\CP^1,\{z_1,\ldots,z_n\})$ corresponds to the point $(\varkappa_i={[z_,z_{n-2}:z_{n-1},z_n])_{i=1}^{n-3}}$ in the configuration space in the r.h.s. of (\ref{l11 M_0,n}).
\item (``Unstable case.'') For $n<3$, one can fix $n$ marked points to standard positions, but by a non-unique M\"obius transformation. So, the corresponding moduli can be thought of as a the quotient of a point (the standard $\CP^1$ with $n$  marked points in standard positions) by the subgroup $G_n\subset PSL_2(\CC)$ fixing the marked points:
\begin{equation}
\MM_{0,n}\simeq \mr{pt}/G_n
\end{equation}
-- thought of as category with a single object and $G_n$ worth of morphisms, or as a stack. 
Explicitly, the groups $G_n$ are:

\vspace{0.3cm}
\noindent
\begin{tabular}{c|c}
$n$ & $G_n$ \\ \hline
$0$ & $PSL_2(\CC)$ \\
$1$ & $\mr{Stab}_\infty(PSL_2(\CC)\;\calt\; \CP^1)= \{\mr{dilations}\}\oplus \{\mr{rotations}\}\oplus \{\mr{translations}\}\simeq \CC^*\ltimes \CC$ \\
$2$ & $\mr{Stab}_\infty\cap \mr{Stab}_0(PSL_2(\CC)\;\calt\; \CP^1)=
\{\mr{dilations}\}\oplus \{\mr{rotations}\}\simeq \CC^*
$
\end{tabular}
\end{itemize}

\vspace{0.5cm}

\subsubsection{Deligne-Mumford compactification}.  The moduli space $\MM_{0,n}$ with $n\geq 3$ is a smooth noncompact manifold. It admits the so-called Deligne-Mumford compactification $\ol\MM_{0,n}$ -- a stratified complex manifold. The main stratum (of codimension $0$) is $\MM_{0,n}$. A stratum $D_{S_1,S_2}$ of complex  codimension $1$ corresponds to a partitioning of the set of marked points $z_1,\ldots,z_n$ into two subsets $S_1,S_2$, each containing $\geq 2$ points; the stratum $D_{S_1,S_2}$ corresponds to ``nodal curves/surfaces''\footnote{There are competing terminologies for complex manifolds of complex dimension $1$ -- ``curves'' (mainly, in algebraic geometry literature) and ``surfaces'' (differential geometry literature). 
We will try to be consistent, sticking with ``surfaces.'' In particular, instead of ``nodal curve'' (a standard term in algebraic geometry), we say ``nodal surface.''
}
$$(\CP^1,\{S_1,p\})\cup_p (\CP^1,\{S_2,p\})$$ with ``neck'' at a point $p$. The moduli space of such nodal surfaces is 
\begin{equation}\label{l11 DM stratum codim=1}
D_{S_1,S_2}\simeq \MM_{0,|S_1|+1}\times \MM_{0,|S_2|+1}
\end{equation}
One adds higher-codimension strata by induction, compactifying the r.h.s. of (\ref{l11 DM stratum codim=1}).

We refer to all the strata of $\ol\MM_{0,n}$ except for the main one ($\MM_{0,n}$) as \emph{compactification strata}.

\begin{example}[$\ol\MM_{0,4}$] The Deligne-Mumford compactification of the moduli space $\MM_{0,4}$ (\ref{l11 M_0,4}) glues back in the points $\varkappa=0,1,\infty$ (as compactification strata of complex codimension $1$), thus 
\begin{equation}
\ol\MM_{0,4}=\underbrace{\CP^1\backslash\{0,1,\infty\}}_{\MM_{0,4}}\cup\{0,1,\infty\}=\CP^1.
\end{equation} 
E.g., the point $\varkappa=0$ corresponds to the asymptotic situation for a surface $\CP^1,\{z_1,z_2,z_3,z_4\}$ where $z_1$ approaches $z_3$. Note that such configuration can be mapped by a M\"obius transformation to one where $z_1,z_3$ stay at finite distance from each other but $z_2$ and $z_4$ approach one another. The limiting configuration is described by a nodal surface -- two $\CP^1$'s, one containing $z_1,z_3$ and $p$ (the ``neck'') and the other containing $z_2,z_4$ and $p$. This singular surface is acted on by $PSL_2(\CC)\times PSL_2(\CC)$ -- independent M\"obius transformations of both $\CP^1$'s. Thus, on both components of the singular surface, there are no moduli (3 marked points can be brought into standard position), so the stratum is $\MM_{0,3}\times \MM_{0,3}=\mr{pt}$.

\begin{figure}[H]
\begin{center}
\includegraphics[scale=0.6]{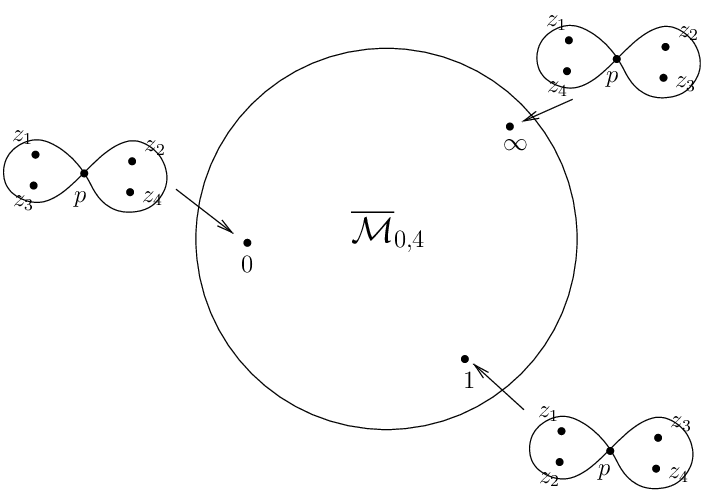}
\end{center}
\caption{Deligne-Mumford compactification of $\MM_{0,4}$. We are drawing the nodal surfaces corresponding to the compactification strata $\varkappa=0$, $\varkappa=1$,$\varkappa=\infty$. Put another way, the universal family (Definition \ref{l10 def: univ family}) degenerates at these three points and we draw the degenerate fibers over them.}
\label{l11 fig Mbar_0,4}
\end{figure}
\end{example}

\begin{example}[Higher-codimension strata]
In the Deligne-Mumford compactification of $\MM_{0,5}$, one can consider the $\mr{codim}_\CC=1$ compactification stratum of the form 
\begin{equation}\label{l11 stratum in M_0,5}
\MM_{0,3}\times \MM_{0,4},
\end{equation} 
corresponding to partitioning the marked points as $\{z_1,z_2\}\cup \{z_3,z_4,z_5\}$, i.e., nodal surfaces of the form 
\begin{equation}\label{l11 stratum in M_0,5 surface}
(\CP^1,\{z_1,z_2,p\})\cup_p (\CP^1,\{p,z_3,z_4,z_5\}) 
\end{equation}
(corresponding to either $z_1$ approaching $z_2$ or, as alternative viewpoint, corresponding to $z_3,z_4,z_5$ colliding together). The right factor in (\ref{l11 stratum in M_0,5}) also should be further compactified, by adjoining to the product, e.g., the stratum $\MM_{0,3}\times \MM_{0,3}\times \MM_{0,3}$ corresponding to surfaces with two necks, of the form
\begin{equation}
(\CP^1,\{z_1,z_2,p\})\cup_p (\CP^1,\{p,z_3,q\}) \cup_q (\CP^1,\{q,z_4,z_5\}) 
\end{equation}
of complex codimension $2$ (as a stratum in $\ol\MM_{0,5}$; it corresponds to a stratum of complex codimension 1 in the right factor of (\ref{l11 stratum in M_0,5})).
\begin{figure}[H]
\begin{center}
\includegraphics[scale=1]{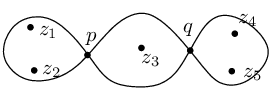}
\end{center}
\caption{Nodal surface with two ``necks,'' corresponding to a  stratum in $\ol\MM_{0,5}$ of complex codimension two.
}
\end{figure}
\end{example}

\begin{remark} \label{l11 rem: DM for higher genus}
The construction of Deligne-Mumford compactification extends to $\MM_{g,n}$ with nonvanishing genus $g$. Then one has compactification strata (of complex codimension $1$) of two types:
\begin{enumerate}[1.]
\item Strata isomorphic to $\MM_{g_1,n_1+1}\times \MM_{g_2,n_2+1}$ with $g_1+g_2=g$, $n_1+n_2=n$ -- this is essentially the same construction as above, where not only marked points but also genus is distributed between the two components of the nodal surface.
\item Strata isomorphic to $\MM_{g-1,n+2}$ -- this corresponds to introducing a neck on handle, thus trading one handle for two extra marked points.
\end{enumerate}
\end{remark}

\subsection{Moduli space $\MM_{1,0}$}
A 2-torus with conformal structure, is, by Uniformization Theorem, conformally equivalent to 
\begin{equation}\label{l11 C/Lambda}
\Sigma_\Lambda\colon=\CC/\Lambda
\end{equation}
with 
\begin{equation}\label{l11 Lambda}
\Lambda=\mr{span}_\ZZ(u,v)
\end{equation} 
a lattice in $\CC$ spanned by two non-collinear\footnote{Otherwise, the quotient is not diffeomorphic to the 2-torus.} vectors $u,v\in \CC$. Since the order of $(u,v)$ does not matter, we may assume that $\mr{Im}(v/u)>0$. Surfaces (\ref{l11 C/Lambda}) are conformally equivalent for lattices $\Lambda$, $\Lambda'$ if and only if the lattices are related by rotation and scaling. There is a unique rotation+scaling that transforms $v$ to $1$. Thus, the surface (\ref{l11 C/Lambda}) is equivalent to a surface of the form
\begin{equation}
T_\tau\colon= \CC/\Lambda_\tau 
\end{equation}
where $\Lambda_\tau\colon=\mr{span}_\ZZ(\tau,1)$ with $\tau=\frac{u}{v}\in \Pi_+$.  

Choosing a different basis in $\Lambda$,
$$ (u,v)\mapsto (u'=au+bv,v'=cu+dv) 
\quad \mbox{ with }
\left(
\begin{array}{cc}
a&b \\ c&d
\end{array}
\right) \in SL_2(\ZZ),
$$
one obtains that tori $T_\tau$ and $T_{\tau'}$ are equivalent if and only if 
\begin{equation}
\tau'=\frac{a\tau+b}{c\tau+d} \quad \mbox{ with }
\left(
\begin{array}{cc}
a&b \\ c&d
\end{array}
\right) \in PSL_2(\ZZ).
\end{equation}

Thus, we have the following.
\begin{thm} The moduli space of complex structures on a 2-torus with no marked points is
\begin{equation}
\MM_{1,0} = \Pi_+/PSL_2(\ZZ).
\end{equation}
I.e. any complex torus is conformally equivalent to a torus of the form $T_\tau=\CC/(\ZZ\oplus \tau \ZZ)$ where the \emph{modular parameter} $\tau\in \Pi_+/PSL_2(\CC)$ provides a complex coordinate on $\MM_{1,0}$.
\end{thm}

\begin{remark} The standard way to choose a \emph{fundamental domain}\footnote{
I.e. a subset of $\Pi_+$ such that each $PSL_2(\ZZ)$-orbit intersects $\mc{D}$ and if two points in $\mc{D}$ are in the same orbit, then they are boundary points of $\mc{D}$.
}
 $\mc{D}\subset \Pi_+$ for the action of $PSL_2(\ZZ)$ on $\Pi_+$ is the following:
\begin{equation}
\mc{D}=\{z\in \CC\;|\; \mr{Re}(z)\in [-\frac12,\frac12],\; |z|\geq 1\}
\end{equation}
The action of $PSL_2(\ZZ)$ identifies points on the boundary of $\mc{D}$ as follows:
\begin{equation}
\MM_{1,0}\simeq \frac{\mc{D}}{-\frac12+iy \sim \frac12+iy\;\;\mr{for}\; y\geq \frac{\sqrt{3}}{2},\quad e^{i\theta}\sim e^{i(\pi-\theta)}\;\;\mr{for}\; \theta\in [\frac{\pi}{3},\frac{2\pi}{3}]}
\end{equation}
Points $\tau=i$ and $\tau=e^{\pi i/3}\sim e^{2\pi i/3}$ in $\mc{D}$ have nontrivial stabilizers ($\ZZ_2$ and $\ZZ_3$, respectively) under the action of $PSL_2(\ZZ)$ and correspond to orbifold singularities in $\MM_{1,0}$.

\begin{figure}[H]
\begin{center}
\includegraphics[scale=1]{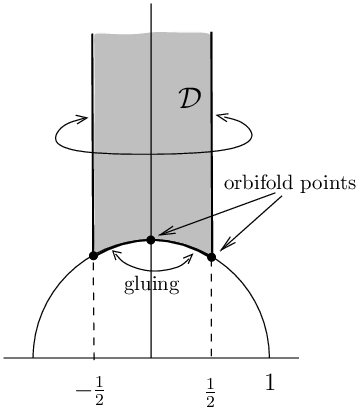}
\end{center}
\caption{$\MM_{1,0}$}
\end{figure}
\end{remark}

\begin{remark} Each complex torus $T_\tau$ has a nontrivial group of conformal automorphisms -- translations by vectors in $T_\tau$,
\begin{equation}\label{l11 Conf(T_tau)}
\Conf(T_\tau)=T_\tau.
\end{equation}
\end{remark}

\begin{remark}
The moduli space $\MM_{1,1}$ of complex tori with a single marked point can be identified with $\MM_{1,0}$: one can convert the underlying complex torus to a standard one $T_\tau$ and then move the marked point to the standard position (say, $z=0$) by a translation from (\ref{l11 Conf(T_tau)}).
\end{remark}

\subsection{The mapping class group of a surface}
We refer to \cite{Farb_Margalit} as an excellent detailed introduction to the subject of
 mapping class groups of surfaces. Here we just want to give some simple examples.

\begin{example}
The mapping class group of a 2-torus (seen as a smooth manifold $\RR^2/\ZZ^2$ with no marked points) is
\begin{equation}
\MCG_{1,0}=SL_2(\ZZ)
\end{equation}
-- elements of the mapping class group can be represented by linear automorphisms $\RR^2\ra \RR^2$ preserving the lattice $\ZZ^2$. 
\end{example}

\begin{example}
The mapping class group of the sphere $S^2$ with $n$ marked points is the ``spherical braid group on $n$ strands,'' i.e.,
\begin{equation}
\MCG_{0,n} = \pi_1 C_n^\mr{non-ordered}(S^2)
\end{equation}
-- the fundamental group of the open configuration space of $n$ non-ordered points on $S^2$. 

The version for the pure mapping class group (respectively, pure spherical braid group on $n$ strands) is:
\begin{equation}
\pMCG_{0,n} = \pi_1 C_n^\mr{ordered}(S^2).
\end{equation}
\end{example}

\begin{example}
The mapping class group of the annulus relative to the boundary (i.e. $\pi_0$ of diffeomorphisms of the annulus not moving the boundary points) is 
\begin{equation}
\MCG(\mr{Ann},\dd \mr{Ann}) \simeq \ZZ
\end{equation}
This group is generated by the \emph{Dehn twist}. Thinking of $\mr{Ann}$ as the domain ${\{z\in \CC\;|\; r\leq |z|\leq R\}}$, the Dehn twist can be represented a diffeomorphism\footnote{
Equivalently, thinking of $\mr{Ann}$ as a cylinder $[0,1]\times S^1$, one can represent the Dehn twist by the diffeomorphism $(t,\theta)\mapsto (t,\theta+2\pi t)$.
}
\begin{equation}
\begin{array}{ccc}
\mr{Ann} & \ra & \mr{Ann} \\
z& \mapsto & e^{2\pi i\frac{|z|-r}{R-r}}\cdot z
\end{array}
\end{equation}
\begin{figure}[H]
\begin{center}
\includegraphics[scale=1]{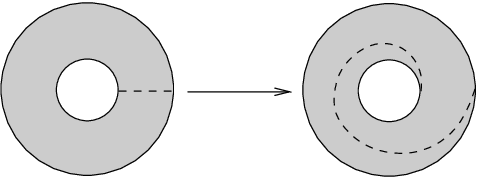}
\end{center}
\caption{Dehn twist (illustrated by the image of the dashed curve). }
\label{l11 fig: Dehn twist}
\end{figure}
\end{example}

For general genus $g$ and number $n$ of marked points, one can write a presentation of the mapping class group $\MCG_{g,n}$ with two types of generators:
\begin{itemize}
\item Dehn twists along a finite collection of  nonseparating closed  simple curves  on the surface.\footnote{The Dehn twist along a closed simple curve $\gamma$ on a surface $\Sigma$ is the diffeomorphism that is identity everywhere except in in a small tubular neighborhood $U_\gamma\subset \Sigma$ of $\gamma$; in $U_\gamma$ (which is diffeomorphic to an annulus or, equivalently, a cylinder), one performs the standard Dehn twist (Figure \ref{l11 fig: Dehn twist}).}
\item ``Dehn half-twists'' which permute pairs of marked points.
\begin{figure}[H]
\begin{center}
\includegraphics[scale=1]{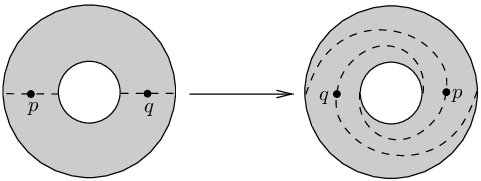}
\end{center}
\caption{Dehn half-twist permuting the marked points $p$ and $q$.}
\end{figure}
\end{itemize}
These generators are subject to a set of relations. We refer to \cite{Farb_Margalit} for the details.

For the \emph{pure} mapping class group $\pMCG_{g,n}$, one can make do with just Dehn twists (without half-twists). E.g., $\pMCG_{0,n}$ can be generated by Dehn twists along curves encircling pairs of marked points.

Let us also mention the following result, helpful in computing mapping class groups. Let $\MCG^\pm(\Sigma)=\pi_0\Diff(\Sigma)$ -- the group of isotopy classes of diffeomorphisms that either preserve or reverse the orientation of $\Sigma$. Note that there is a natural action of $\MCG^\pm$ on the fundamental group $\pi_1(\Sigma)$ (by pushing loops along the diffeomorphism).
\begin{thm}[Dehn-Nielsen-Baer]
\begin{equation}\label{l11_DNB thm}
\MCG^\pm(\Sigma)\simeq \mr{Out}(\pi_1(\Sigma))
\end{equation}
where for $G$ a group, $\mr{Out}(G)\colon=\mr{Aut}(G)/\mr{Inn}(G)$ is the group of ``outer automorphism'' -- the quotient of all automorphisms by inner ones.
\end{thm}
Then, the usual mapping class group (classes of orientation preserving diffeomorphisms) is an index two subgroup of (\ref{l11_DNB thm}).

\chapter{Symmetries in classical field theory, stress-energy tensor}
\chaptermark{Symmetries in classical field theory}

\marginpar{Lecture
12, 9/19/2022}

\section{Classical field theory, Euler-Lagrange equations}

\subsection{Basic setup} An outline of classical field theory was given in Section \ref{ss Segal from path integral}).
Here we give more details.

Let $M$ be a smooth $n$-dimensional manifold, possibly with a geometric structure such as a metric,
an orientation etc.. A classical local field theory on $M$ is determined by the following data. 

\begin{enumerate}[(a)]
\item The space of fields associated to $M$ is
\begin{equation}
\F_{M}=\Gamma(M,E).
\end{equation}
the Fr\'echet manifold of smooth sections of a fiber bundle $E$ over  $M$ (the ``field bundle'').
We denote fields, i.e. sections of this bundle,  by $\phi$. When $E=M\times X$ is a trivial bundle with fiber $X$, the space $X$ is called the target space and fields are mappings $\phi: M\to X$.

For example, fields can be differential forms, spinors, connections in a principal bundle, etc. All these examples correspond to particular choices of the field bundle $E$.

\item The local action functional
\begin{equation}\label{l12 S}
S_{M,g}(\phi) = \int_{M}   L(\phi,\dd\phi,\ldots; g). 
\end{equation}
\end{enumerate}

Here  $L
$ 
is a density on $M$. When $M$ is oriented, it is an $n$-form on $M$ depending locally on fields and on possible geometric data on $M$ (which, typically, include a metric).  If the geometric data on $M$ provide a volume form $d^nx$, we have 
$L=\mathsf{L}d^n x$ where 
the Langrangian
function $\mathsf{L}(\phi,\dd\phi,\ldots; g)$ depending locally on fields and possible geometric structures. 

\subsection{Variational bicomplex} Recall that one can introduce the ``variational bicomplex'' (see \cite{Anderson} for details) 
\begin{equation}
\Omega^{p,q}_\mr{loc}(M\times \F_M)
\end{equation}
of ``local'' $q$-forms on $\F_M$ valued in $p$-forms on $M$; locality means that for $\omega\in \Omega^{p,q}_\mr{loc}$, its value at $x\in M$ depends on $x$ and the jet of fields at $x$ (but not on the values of the fields away from $x$). The bicomplex $\Omega^{\bt,\bt}_\loc(M,\F_M)$ comes with 
\begin{itemize}
\item the ``vertical'' differential $\delta\colon \Omega^{p,q}_\loc\ra \Omega^{p,q+1}_\loc$ -- the de Rham operator on the space of fields. Locally, in a local trivialization of $E$, one has $\displaystyle \delta = \sum_{r\geq 0}\delta \phi^a_{i_1\cdots i_r} \frac{\dd}{\dd \phi^a_{i_1\cdots i_r}}$
where $a$ labels the field components (coordinates in the fiber of the field bundle $E$); $\phi^a_{i_1\cdots i_r}=\dd_{i_1}\cdots \dd_{i_r}\phi^a$ are components of the $r$-th jet of the field.

\item the ``horizontal'' differential $d\colon \Omega^{p,q}_\loc\ra \Omega^{p+1,q}_\loc$ -- the de Rham operator on $M$. It is understood that $d$ also acts on fields. Locally, one has 
\begin{equation}
d=dx^i \left(\frac{\dd}{\dd x^i}+ \sum_{r\geq 0} \phi^a_{ii_1\cdots i_r} \frac{\dd}{\dd \phi^a_{i_1\cdots i_r}}
+ \sum_{r\geq 0}(\delta \phi^a_{ii_1\cdots i_r}) \frac{\dd}{\dd (\delta\phi^a_{i_1\cdots i_r})}
\right) .
\end{equation}

\end{itemize}
The two differentials $d,\delta$ both square to zero and anticommute with each other.


Another viewpoint on the bicomplex $\Omega^{\bt,\bt}_\loc$ is as follows. Assume that the space of fields is as above $\F_M=\Gamma(M,E)$. Consider the composition of maps \\ ${M\times \Gamma(M,E)\xra{\mr{id}\times j_\infty} M\times \Gamma(M,\mr{Jet}_\infty E) \xra{\mr{ev}}\mr{Jet}_\infty E}$ where $j_\infty$ takes the jet
of a section of $E$ at each point of $M$; $\mr{ev}$ is the evaluation of a section at a point of $M$.
Consider the complex of forms $\Omega(\mr{Jet}_\infty E)$ on the total space of the jet bundle. Then $\Omega_\loc(M\times F_M)$ is the image of $\Omega(\mr{Jet}_\infty E)$ under the pullback $(\mr{ev}\circ (\mr{id}\times j_\infty))^*$. 


In terms of the variational bicomplex, the Lagrangian density in (\ref{l12 S}) is an element 
\begin{equation}
L\in \Omega^{n,0}_\loc (M\times \F_M).
\end{equation}

\subsection{Covariance} 
We are assuming ``covariance'' of the given classical field theory (as assignment of action functionals to (pseudo-)Riemannian $n$-manifolds):
for a diffeomorphism $f\colon M\xra{\sim} M'$ of smooth manifolds, one has
\begin{equation}\label{l12 covariance}
S_{M,g}(\phi)=S_{M',(f^{-1})^*g}((f^{-1})^*\phi).
\end{equation}

\subsection{Euler-Lagrange equations}\label{sss EL}

The Euler-Lagrange equation (or ``equation of motion'') is the condition on a field $\phi\in \F_M$ 
is a critical point of the action functional $S(\phi)$ (ignoring boundary terms).

Let  $\gamma=\{\phi_t\}_{t\in [0,\epsilon)}$ be a path in $\F_M$ such that $\phi_0=\phi$.

 and such that $\phi_t$ coincides with $\phi$ in a neighborhood of the boundary $\dd M\subset M$, one has
\begin{equation}\label{l12 EL as Frechet derivative}
\left.\frac{d}{dt}\right|_{t=0} S(\phi_t)=0.
\end{equation}
(Put another way, Fr\'echet derivatives of $S$ at $\phi$ in the directions given by fluctuations of $\phi$ supported away from $\dd M$ vanish.) This condition is a equivalent to a PDE on $\phi$. 
We will sometimes shorten ``Euler-Lagrange equation'' to just ``EL.''

More explicitly, applying the de Rham derivative in fields to $S$ (or in other words, considering the variation of $S$ with respect to the variation of the field), one can write the result in the form
\begin{equation}\label{l12 delta S}
\delta S=\int_M \EL(\phi)_a\delta\phi^a + \int_{\dd M}\ul\alpha,
\end{equation}
where $\EL(\phi)_a\delta\phi^a \in \Omega^{n,1}_\loc(M\times \F_M)$ is an expression containing variations of the field not hit by derivatives along $M$ (labels $a$ refer to the local trivialization of the field bundle $E$). To obtain an answer in this form, one has to integrate by parts  (to move the geometric derivatives from $\delta\phi$), which results in the appearance of the boundary term in (\ref{l12 delta S}). The integrand $\ul\alpha$ in the boundary term of (\ref{l12 delta S}) is an element of $\Omega^{n-1,1}_\loc(M\times\F_M)$; the whole boundary term $\int_{\dd M}\ul\alpha$ is a 1-form on $\F_M$, known as the Noether 1-form (thus, $\ul\alpha$ is the density of the Noether 1-form).

Expressed in terms of densities, the equation (\ref{l12 delta S}) is:
\begin{equation}\label{l12 delta L= EL+d alpha}
\delta L=(-1)^n \Big(\EL_a(\phi)\delta\phi^a+ d\ul\alpha \Big).
\end{equation}

The Euler-Lagrange equation then is: 
\begin{equation}
\EL_a(\phi)=0,
\end{equation}
with $\EL_a(\phi)$ an expression in the jet of the field appearing in (\ref{l12 delta S}).

\subsubsection{Aside on source forms.}
In the variational bicomplex one can consider the subspace of ``source forms'' 
\begin{equation}
\Omega^{n,1\;\mr{source}}_\loc \subset \Omega^{n,1}_\loc 
\end{equation}
-- the set of elements of the form $\omega=\omega_a(x,\phi,\dd\phi,\ldots) \delta\phi^a$, i.e., not depending on variations of derivatives of fields $\delta \phi^a_{i_1\cdots i_r}$ for $r\geq 1$. Then one has a straightforward lemma.
\begin{lemma}
\begin{equation}\label{l12 source form splitting}
\Omega^{n,1}_\loc = \Omega^{n,1\;\mr{source}}_\loc \oplus d\left(\Omega^{n-1,1}_\loc \right).
\end{equation}
I.e., any $(n,1)$-form $\beta$ can be written in a unique way as $\beta=\omega+d\eta$ with $\omega$ a source form. 
\end{lemma}
\begin{proof}
It is proven straightforwardly, by moving the derivatives from $\delta\phi$ to its prefactor in $\beta$, at the cost of adding a $d$-exact term:
\begin{multline}\label{l12 lemma3.1 computation}
\beta_a^{i_1\cdots i_r}(x,\phi,\dd\phi,\cdots) \delta \phi^a_{i_1\cdots i_r} =\\
= -(\dd_{i_r} \beta_a^{i_1\cdots i_r}) \delta \phi^a_{i_1\cdots i_{r-1}}+ \underbrace{\dd_{i_r} (\beta_a^{i_1\cdots i_r} \delta\phi^a_{i_1\cdots i_{r-1}})}_{d\iota_{\dd_{i_r}}(\beta^{i_1\cdots i_r}_a \delta \phi^a_{i_1\cdots i_{r-1}})}
\\
=\cdots = (-1)^r (\dd_{i_1}\cdots \dd_{i_r}\beta_a^{i_1\cdots i_r}) \delta\phi^a + d \left(\sum_{k=0}^{r-1} (-1)^{k}\iota_{\dd_{i_{r-k}}}((\dd_{i_r}\cdots \dd_{i_{r-k+1}} \beta_a^{i_1\cdots i_r}) \delta \phi^a_{i_1\cdots i_{r-k-1}}) \right).
\end{multline}
One extends this computation by $\RR$-linearity 
to general $\beta$'s.
This gives a splitting $\beta=\omega+d\eta$, with $\omega$ a source form. The fact that the splitting is unique follows from the observation that $d$ of a field-dependent $(*,1)$-form will necessarily contain a term depending on $\delta\phi^a_{i_1\cdots i_r}$ with $r\geq 1$. Thus, $\Omega^{n,1\;\mr{source}}_\loc\cap d(\Omega^{n-1,1}_\loc)=0$.
\end{proof}

Equation (\ref{l12 delta L= EL+d alpha}) is an application of this lemma to $\beta=\delta L\in \Omega^{n,1}$; Euler-Lagrange equation    says that 
\begin{equation}
[\delta L]_\mr{source}=0,
\end{equation} 
where $[\cdots]_\mr{source}$ is the projection onto the 
source forms.

\section{Examples of classical field theories}


\subsection{Free massive scalar field} \label{l12 ex free massive scalar}  Let $(M,g)$ be a (pseudo-)Riemannian $n$-manifold. The fields are smooth functions on $M$, $\phi\in C^\infty(M)$ (i.e., the field bundle is $E=M\times \RR\ra M$ -- the trivial rank one bundle over $M$). The action is
\begin{equation}\label{l12 S scalar}
S(\phi)=\int_{M} \underbrace{\frac12 d\phi\wedge *d\phi +\frac{m^2}{2} \phi^2 \dvol_g}_L
\end{equation}
with $*$ the Hodge star associated with the metric $g$ and $\dvol_g=*1$ the metric volume element; $m\geq 0$ is a fixed number (``mass''). It is obvious that the assignment (\ref{l12 S scalar}) satisfies the covariance property (\ref{l12 covariance}) -- essentially because the action is written in terms of natural geometric operations on forms.

We have 
\begin{multline}\label{l12 scal field EL computation}
\delta S =\int_M (-1)^{n+1}\underbrace{d\delta \phi \wedge *d\phi}_{d(\delta\phi \wedge *d\phi)+\delta\phi\wedge d*d\phi} + (-1)^n m^2 \delta\phi\,\phi\,\dvol_g =\\
= \int_M \underbrace{\dvol_g (\Delta+m^2)\phi \wedge \delta \phi}_{\EL(\phi)\delta\phi}+ d(\underbrace{*d\phi\wedge \delta\phi}_{\ul\alpha}) = 
\int_M \dvol_g (\Delta+m^2)\phi \wedge \delta \phi + \underbrace{\int_{\dd M}*d\phi\wedge \delta\phi}_{\mr{Noether\;1-form}}.
\end{multline}
Thus, the Euler-Lagrange equation  is the linear PDE
\begin{equation}
(\Delta+m^2)\phi =0
\end{equation}
with $\Delta=-*d*d$ the Laplace-Beltrami operator.

\subsection{Scalar field with self-interaction} 
Note that if we start instead with the action 
\begin{equation}
 S(\phi)=\int_M \frac12 d\phi\wedge *d\phi + V(\phi)\dvol_g
\end{equation}
with $V$ some smooth function on $\RR$ (the ``potential''), then by repeating the computation we see that the Noether 1-form $\alpha$ (and its density $\ul\alpha$) does not change from (\ref{l12 scal field EL computation}) but the Euler-Lagrange equation becomes a nonlinear PDE
\begin{equation}
\Delta\phi+V'(\phi)=0.
\end{equation}

\subsection{Yang-Mills theory}\label{l12 YM}
\label{l12 ex YM}
Fix a Lie group $G$ with a nondegenerate ad-invariant quadratic form $\langle,\rangle$ on its Lie algebra $\g$. Let  $(M,g)$ be a  (pseudo-)Riemannian $n$-manifold. The fields of the theory are pairs $(\mc{P},A)$ consisting of a principal $G$-bundle $\mc{P}$ over $M$ and a connection $A$ in $\mc{P}$. The action of Yang-Mills theory is
\begin{equation}
S(A)=\int_M \frac12 \langle  F_A \stackrel{\wedge}{,}  *F_A\rangle,
\end{equation}
where $F_A\in \Omega^2(M,\mr{ad}(\mc{P}))$ is the curvature $2$-form of the connection $A$; $*$ is again the Hodge star.
In a local trivialization of $\mc{P}$, $A$ is represented by a $\g$-valued $1$-form on $M$ (or rather on the trivializing neighborhood $U\subset M$) and $F_A$ is represented by the $\g$-valued $2$-form $dA+\frac12 [A,A]$.

The corresponding Euler-Lagrange equation is:
\begin{equation}\label{l12 YM eq}
d_A*F_A=0
\end{equation}
with $d_A\colon \Omega^\bt(M,\mr{ad}(\mc{P}))\ra \Omega^{\bt+1}(M,\mr{ad}(\mc{P}))$ the covariant derivative operator associated with $A$.  The equation (\ref{l12 YM eq}) is a nonlinear PDE (for nonabelian $G$) -- the ``Yang-Mills equation.'' 

In the special case $G=\RR$, the Yang-Mills theory drastically simplifies (in this case, it is called electrodynamics or Maxwell theory): fields are global $1$-forms $A\in\Omega^1(M)$, the action is $S(A)=\frac12\int_M dA\wedge *dA$ and the Euler-Lagrange equation becomes the Maxwell equation
\begin{equation}
 d*d A=0 
\end{equation}
-- a linear PDE.

\subsection{Chern-Simons theory}\label{l12 ex CS}
Fix a simply connected Lie group $G$ with a nondegenerate ad-invariant bilinear form $\langle,\rangle$ on its Lie algebra $\g$. Let $M$ be an oriented $3$-manifold. The fields of the theory are connections $A$ in the trivial principal bundle $\mc{P}=M\times G$ over $M$.\footnote{
In fact, one should allow connections in all principal $G$-bundles over $M$. However, for $G$ simply connected and $M$ 3-dimensional there are no nontrivial $G$-bundles over $M$ (since $BG$ is 3-connected and hence there is a unique homotopy class of classifying maps $M\ra BG$). This is why we asked $G$ to be simply connected -- to have this simplification. Case of non-simply connected $G$ can be treated but requires more care.
} Since $\mc{P}$ is trivial, connections can be identified with $\g$-valued $1$-forms, $A\in \Omega^1(M,\g)$. The action is defined as
\begin{equation}\label{l12 S Chern-Simons}
S(A) = \int_M \frac12\langle A\stackrel{\wedge}{,}dA \rangle+\frac16 \langle A\stackrel{\wedge}{,}[A,A] \rangle.
\end{equation}

We have
\begin{multline}
\delta S= \int_M -\frac12\langle \delta A , dA\rangle - \frac12 \underbrace{\langle A,d\delta A \rangle}_{-d\langle A,\delta A \rangle+\langle dA,\delta A \rangle} -\frac12 \langle \delta A, [A,A] \rangle=\\
=\int_M -\langle\delta A, dA+\frac12 [A,A] \rangle +\int_{\dd M} \frac12 \langle A,\delta A  \rangle = 
\int_M -\langle\delta A, F_A \rangle +\int_{\dd M} \underbrace{\frac12 \langle A,\delta A  \rangle}_{\ul\alpha},
\end{multline}
where $F_A=dA+\frac12 [A,A]\in \Omega^2(M,\g)$ is the curvature $2$-form. Thus, the Euler-Lagrange equation is
\begin{equation}
F_A=0
\end{equation}
-- zero-curvature (or ``flatness'') condition for the connection field $A$.

Note that the action (\ref{l12 S Chern-Simons}) does not depend on a metric on $M$ (it only depends on the orientation, no other geometric structure is used). -- It is an example of a topological field theory (in the sense of Schwarz).

\subsection{Nonlinear sigma model}\label{l12 ex: sigma model}
Fix a (pseudo-)Riemannian manifold  $(X,h)$ (the target) and let $(M,g)$ be a (pseudo-)Riemannian $n$-manifold (the source). The fields are smooth maps $\Phi\colon M\ra X$ (one can think of them as section of the field bundle $E=M\times X\ra X$ -- the trivial fiber bundle with fiber $X$) and the action is
\begin{equation}\label{l12 sigma model}
S(\Phi)=\int_M \frac12 \langle d\Phi \stackrel{\wedge}{,} *d\Phi\rangle_{\Phi^*h},
\end{equation}
where $*=*_g\colon \Omega^\bt(M)\ra \Omega^{n-\bt}$ is the Hodge star associated with the source metric $g$; $d\Phi\in \Omega^1(M,\Phi^* TX)$ is the differential of the map $\Phi$, $\langle,\rangle_{\Phi^*h}$  is the fiberwise metric on the vector bundle $\Phi^*TX\ra M$ coming from the pullback of the target metric. Using local coordinates $u^a$ on the target and local coordinates $x^i$ on the source, the action (\ref{l12 sigma model}) can be written as
\begin{equation}
S(\Phi)=\int_M \frac12 h_{ab}(\Phi)\,d\Phi^a \wedge*d\Phi^b  =
\int_M \frac12 g^{ij}(x)\, h_{ab}(\Phi(x))\,\dd_i\Phi^a(x)\, \dd_j\Phi^b(x) \,\dvol_g.
\end{equation} 

The variation of the action is:
\begin{multline}\label{l12 sigma model delta S computation}
\delta S=\int_M (-1)^n \Big(
\frac12 \dd_c h_{ab}(\Phi) d\Phi^a\wedge *d\Phi^b -\underbrace{h_{ab}(\Phi) d\delta\Phi^a\wedge *d\Phi^b}_{d(h_{ab}(\Phi)\delta \Phi^a\wedge *d\Phi^b)+(-1)^n d(h_{ab}(\Phi)*d\Phi^b)\wedge \delta\Phi^a}
\Big)
\\=
\int_M \Big(\frac12 \dd_a h_{bc}(\Phi)d\Phi^b\wedge *d\Phi^c -d\Big(h_{ab}(\Phi)*d\Phi^b\Big)\Big) \wedge \delta\Phi^a +
\int_{\dd M} \Big(h_{ab}(\Phi)*d\Phi^b\Big) \wedge\delta\Phi^a\\
=\int_M \Big(
\frac12 \dd_a h_{bc}(\Phi)d\Phi^b\wedge *d\Phi^c-\dd_b h_{ac}(\Phi)d\Phi^b\wedge *d\Phi^c - h_{ab}(\Phi) d*d\Phi^b
\Big)\wedge \delta \Phi^a
+\int_{\dd M} \Big(h_{ab}(\Phi)*d\Phi^b\Big) \wedge\delta\Phi^a\\
=\int_M\Big( 
h_{ab}(\Phi)*\Delta \Phi^b-\Gamma_{abc}(\Phi)d\Phi^b\wedge *d\Phi^c
\Big)\wedge \delta \Phi^a
+\int_{\dd M} \Big(h_{ab}(\Phi)*d\Phi^b\Big)\wedge \delta\Phi^a\\
=\int_M \dvol_g h_{ab}(\Phi)(\Delta\Phi^b-\Gamma^b_{cd}(\Phi)\langle d\Phi^c, d \Phi^d\rangle_{g^{-1}})\wedge \delta\Phi^a
+\int_{\dd M} \underbrace{\Big(h_{ab}(\Phi)*d\Phi^b\Big)\wedge \delta\Phi^a}_{\ul\alpha}.
\end{multline}
Here $\Gamma^\bt_{\bt\bt}$ are the Christoffel symbols of the target metric; $\Delta=-*d*d$ is the Laplacian on $(M,g)$. Thus, the Euler-Lagrange equation is
\begin{equation}\label{l12 sigma model EL}
\Delta\Phi^a-\Gamma^a_{bc}(\Phi)\langle d\Phi^b, d \Phi^c\rangle_{g^{-1}}=0.
\end{equation}

Note that in the special case $\dim M=1$, (\ref{l12 sigma model EL}) becomes the equation of geodesic motion on $X$. 

In the other extreme case, $X=\RR$ the model becomes the massless free scalar field.

One can also consider a modification of the sigma model action functional (\ref{l12 sigma model}) by a potential:
\begin{equation}\label{l12 sigma model + potential}
S(\Phi)=\int_M \frac12 \langle d\Phi \stackrel{\wedge}{,} *d\Phi \rangle_{\Phi^*h} - V(\Phi)\,\dvol_g
\end{equation}
with $V\in C^\infty(X)$ a fixed function on the target manifold (the ``potential''). 
This modification does not change the density 
\begin{equation}
\ul\alpha=\langle *d\Phi\stackrel{\wedge}{,}\delta\Phi\rangle_{h^*\Phi}
\end{equation}
of the Noether 1-form,
as one can see by repeating the computation (\ref{l12 sigma model delta S computation})). However it changes the Euler-Lagrange equation (\ref{l12 sigma model EL}) to
\begin{equation}
\Delta\Phi^a-\Gamma^a_{bc}(\Phi)\langle d\Phi^b, d \Phi^c\rangle_{g^{-1}}- h^{ab}(\Phi) \dd_b V(\Phi)=0.
\end{equation}

\section{Symmetries and Noether currents}

Consider an infinitesimal transformation of fields of the field theory given as  
\begin{equation}\label{l12 inf tranformation of fields}
\phi(x) \mapsto \phi(x)+ \epsilon v(\phi(x),\dd\phi(x),\ldots) 
\end{equation}
given by a local vector field 
\begin{equation}\label{l12 v}
v\in \X_\loc(\F_M),
\end{equation}
where $\loc$ means that the infinitesimal change of the field given by $v$ at the point $x\in M$ depends only on the jet of the field at $x$.

\begin{definition}\label{l12 def: inf symmetry}
We say that $v$ is an \emph{infinitesimal symmetry} of the given classical field theory if one has 
\begin{equation}\label{l12 L_v L = d Lambda}
\LL_v L= d\Lambda
\end{equation}
for some element $\Lambda\in \Omega^{n-1,0}_\loc(M,\F_M)$; here $\LL_v$ stands for the Lie derivative in the direction of $v$.\footnote{The vector field (\ref{l12 v})  naturally induces an ``evolutionary'' (i.e. commuting with derivatives along $M$) vertical  vector field $v^\mr{evo}$ on the jet bundle $\mr{Jet}_\infty E\ra M$ (see \cite{Anderson}). 
It is that latter vector field that we act with in (\ref{l12 L_v L = d Lambda}); by an abuse of notation, we still denote it $v$. Cf. Example \ref{l12 example: cl mech energy} and footnote \ref{l12 footnote: v^evo} below.
}
\end{definition}
An equivalent formulation of 
(\ref{l12 L_v L = d Lambda}) is:
one has 
\begin{equation}\label{l12 L_v S}
\LL_v S_N=\int_{\dd N} \Lambda.
\end{equation}
for any   $n$-dimensional submanifold with boundary $N\subset M$.

\begin{lemma}
If $v$ is a symmetry in the sense of  (\ref{l12 L_v L = d Lambda}), then the corresponding infinitesimal transformation of fields (\ref{l12 L_v L = d Lambda}) takes solutions of Euler-Lagrange equation to solutions of Euler-Lagrange equations.
\end{lemma}

\begin{proof}
 Consider a path $\phi_t$ in $\F_M$ with $\frac{d}{dt}\phi_t$ supported away from $\dd M$, as in the beginning of Section \ref{sss EL} and assume that $\phi_0=\phi$ is a solution of the Euler-Lagrange equation (\ref{l12 EL as Frechet derivative}).
Then 
\begin{equation} 
\frac{d}{dt}\Big|_{t=0}S(\phi_t + \epsilon v(\phi_t)) = \epsilon\frac{d}{dt}\Big|_{t=0}\LL_v S(\phi_t) = \epsilon\frac{d}{dt}\Big|_{t=0} \int_{\dd M} \Lambda (\phi_t) =0 \quad \bmod \epsilon^2.
\end{equation}
In the last step we used that $\dot{\phi_t}$ (and its jet) vanishes on the boundary.
Thus, any fluctuation away from the boundary of the transformed $\phi$ (the r.h.s. of (\ref{l12 inf tranformation of fields})) preserves the value of the action in the first order in fluctuation (i.e., first order in $t$).
%
%
\end{proof}

\begin{definition}\label{l12 def: J}
Given an infinitesimal symmetry $v\in\X_\loc(\F_M)$ of a given classical field theory, defines an element $J\in \Omega^{n-1,0}_\loc(M,\F_M)$ by the formula
\begin{equation}\label{l12 J}
J\colon= (-1)^n \iota_v \ul{\alpha} + \Lambda.
\end{equation}
$J$ is called the ``Noether current'' associated with the symmetry $v$.
\end{definition}

\begin{thm}[Noether theorem]\label{l12 thm Noether}
The field-dependent $(n-1)$-form $J$ defined by (\ref{l12 J}) is closed on $M$ when restricted to the solutions of the Euler-Lagrange equation.
\end{thm}

\textbf{Notation.}
For two expressions $A,B$ depending on the field  we will write 
\begin{equation}
A=B\bmod \mr{EL} \qquad \mbox{or}\qquad A\underset{\mr{EL}}{\sim} B
\end{equation}
to indicate that an equality holds ``modulo the Euler-Lagrange equation,'' i.e., when both sides are evaluated on a field configuration $\phi$ satisfying the Euler-Lagrange equation. 

Thus, Noether theorem reads
\begin{equation}\label{l12 dJ=0 mod EL}
dJ = 0 \bmod \mr{EL}.
\end{equation}

\begin{proof}[Proof of Theorem \ref{l12 thm Noether}]
Applying $d$ to the definition (\ref{l12 J}), we have
\begin{multline}
dJ = (-1)^{n+1} \iota_v \underbrace{d\ul\alpha}_{\underset{(\ref{l12 delta L= EL+d alpha})}{=}(-1)^n\delta L-\EL_a\delta\phi^a}  + \underbrace{d\Lambda}_{\underset{(\ref{l12 L_v L = d Lambda})}{= } \LL_v L}  
= \\
=-\LL_v L+\underbrace{(-1)^n \EL_a v^a}_{= 0\bmod\; \mr{EL}} + \LL_v L  =0 \bmod \mr{EL}.
\end{multline}
\end{proof}

\begin{corollary}\label{l12 corr: Noether charge conservation}
Assume we have a symmetry $v$ and $J$ is the associated Noether current. Then for $\gamma_1,\gamma_2$ two \emph{cobordant} submanifolds of $M$ of codimension $1$, one has
\begin{equation}\label{l12 Noether charge conservation}
\int_{\gamma_1}J = \int_{\gamma_2} J \quad \bmod \;\mr{EL}.
\end{equation}
\end{corollary}

\begin{proof}
Let $N\subset M$ be the cobordism between $\gamma_1$ and $\gamma_2$, i.e., $\dd N=\gamma_2-\gamma_1$. Then we have
\begin{equation}
\int_{\gamma_2}J-\int_{\gamma_1}J\underset{\mr{Stokes'}}{=}\int_N dJ \underset{\mr{Noether\;thm}}{=}0\quad \bmod \;\mr{EL}.
\end{equation}
\end{proof}

The expression $\int_\gamma J$, with $J$ a Noether current associated with a symmetry and with $\gamma\subset M$ a codimension $1$ submanifold, is called a ``Noether charge.''

Equation (\ref{l12 Noether charge conservation})  expresses the conservation property of the Noether charge (for a fixed field configuration satisfying EL), as one slides $\gamma$ along $M$.

\begin{figure}[H]
\begin{center}
\includegraphics[scale=1]{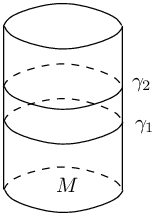}
\end{center}
\caption{Noether charge is conserved (modulo EL) when changing the hypersurface in its cobordism class $\gamma_1\ra \gamma_2$.
}
\end{figure}


\begin{definition}
One calls an element of $J\in\Omega^{n-1,0}_\loc(M\times \F_M)$ a ``conserved current'' if it satisfies
\begin{equation}
dJ\simEL 0.
\end{equation}
\end{definition}
A conserved current gives rise to a charge $\int_\gamma J$, with $\gamma\subset M$ a hypersurface, which is ``conserved'' -- independent under deformations of $\gamma$, cf. (\ref{l12 Noether charge conservation}).

Noether theorem gives a mechanism producing conserved currents out of infinitesimal symmetries.

\begin{definition}\label{l12 def: equivalent currents}
We call two conserved currents $J,J'$ ``equivalent'' if one has 
\begin{equation}
J'\simEL J+ d K
\end{equation}
for some element $K\in \Omega^{n-2,0}_\loc(M\times \F_M)$. 
\end{definition}
In particular, if $J$ is conserved ($dJ\simEL 0$), then the equivalent current $J'$ is automatically conserved. Also, if $J$ and $J'$ are equivalent, they yield the same (modulo EL) charges, \begin{equation}
\int_\gamma J \simEL \int_{\gamma} J',
\end{equation}
for $\gamma$ closed.

\begin{example}\label{l12 example: cl mech energy}
Consider the classical mechanics of a particle moving on a target  manifold $\RR$. The spacetime (source cobordism) is $M=[t_0,t_1]$, fields are maps $x\colon [t_0,t_1]\ra \RR$ and the action is 
\begin{equation}
S[x(\tau)]=\int_{t_0}^{t_1}\underbrace{ d\tau\left(m\frac{\dot{x}(\tau)^2}{2}-U(x(\tau))\right) }_L .
\end{equation}
The derivative in fields yields
\begin{equation}
\delta S = \int_{t_0}^{t_1} d\tau (m\dot{x}\delta \dot{x}-U'(x)\delta x) = \int_{t_0}^{t_1}\underbrace{d\tau(-m\ddot{x}-U'(x))\delta x}_{\EL \delta x} +\Big|\underbrace{m\dot{x}\delta x}_{\ul\alpha}\Big|^{t_1}_{t_0} .
\end{equation}
The Euler-Lagrange equation is: 
\begin{equation}\label{l12 particle EL}
m\ddot{x}+U'(x)=0.
\end{equation}
Consider the infinitesimal transformation of fields 
\begin{equation}
x(\tau)\mapsto x(\tau)-\epsilon \dot{x}(\tau).
\end{equation}
The corresponding vector field is $v=-\int_{t_0}^{t_1}\dot{x}\frac{\delta}{\delta x}$.\footnote{\label{l12 footnote: v^evo}
When acting on jets of fields at $\tau$, $v$ acts as $v^\mr{evo}=-(\dot{x}\frac{\dd}{\dd x}+\ddot{x} \frac{\dd}{\dd \dot{x}}+\dddot{x}\frac{\dd}{\dd \ddot{x}}+\cdots)$ where the superscript ``evo'' stands for ``evolutionary'' (i.e. commuting with $d_\tau$) prolongation of $v$.}
Acting with it on $L$ yields
\begin{equation}
\LL_v L =d\tau (-m\dot{x}\ddot{x}+U'(x)\dot{x}) = d\Big(\underbrace{-m\frac{\dot{x}^2}{2}+U(x)}_\Lambda\Big) ,
\end{equation}
where $d=d\tau\frac{d}{d\tau}$ the ``horizontal'' differential. Thus, the Noether current is 
\begin{equation}\label{l12 particle H}
J=-\iota_v \ul\alpha+\Lambda = m\dot{x}^2  -m\frac{\dot{x}^2}{2}+U(x)= m\frac{\dot{x}(\tau)^2}{2}+U(x(\tau)) \quad \in \Omega^{0,0}([t_0,t_1]\times \F_{[t_0,t_1]})
\end{equation}
-- this is the ``energy'' of the particle. 

The conservation law (\ref{l12 Noether charge conservation}) says that if $\gamma_1=\{\tau_1\}$, $\gamma_2=\{\tau_2\}$ are two points on the time interval $M=[t_0,t_1]$, then, assuming the trajectory  $x(\tau)$ satisfies the Euler-Lagrange equation, the expression (\ref{l12 particle H}) yields the same number at time $\tau_1$ and at time $\tau_2$. In other words, we get that the energy (\ref{l12 particle H}) is constant in $\tau\in [t_0,t_1]$, if $[x(\tau)]$ is a solution of EL.  Of course, we can verify this statement directly: applying  $\frac{d}{d\tau}$ to the r.h.s. of (\ref{l12 particle H}), we obtain minus the l.h.s. of (\ref{l12 particle EL}).
\end{example}

\marginpar{Lecture 13,\\
9/21/22}

\begin{remark}[Noether current as a vector field] In Definition \ref{l12 def: J} we introduced the Noether current as a field-dependent $(n-1)$-form on $M$, $J\in \Omega^{n-1}(M)$, closed modulo $M$. One can consider the associated field-dependent vector field $J^\# \in \X(M)$ uniquely determined by $\iota_{J^\#}\dvol_g = J$.  Then:
\begin{itemize}
\item The conservation property $dJ\underset{EL}{\sim} 0$ corresponds in terms of the vector-field Noether current to the property
\begin{equation}
\mr{div}_{\dvol_g} J^\# \underset{EL}{\sim} 0
\end{equation}
-- the divergence of the vector field $J^\#$ w.r.t. the metric volume form on $M$ vanishes.\footnote{
Recall that to define the divergence of a vector field $u$ on a manifold $M$, one needs to specify a volume form $\mu$ on $M$. Then the divergence is defined via $ \int_M \mu u(f)= -\int_M \mu f \mr{div}_\mu(u)$ for any compactly supported test function $f$. Equivalent definition: $\mr{div}_\mu(u)=\frac{\LL_u \mu}{\mu}$.
} Equivalently, in the index notation,
\begin{equation}
\nabla_i (J^\#)^i  \underset{EL}{\sim} 0.
\end{equation}
\item The Noether charge $\int_\gamma J$ is the flux of the vector field $J^\#$ through the hypersurface $\gamma$.
\end{itemize}
\end{remark}

\begin{example}\label{l13 example massless scalar shift symmetry}
Consider the free massless scalar on a Riemannian $n$-manifold $(M,g)$, defined by the action
\begin{equation}\label{l13 massless boson S}
S(\phi)=\int_M \underbrace{\frac12 d\phi\wedge *d\phi}_L .
\end{equation}
Consider the infinitesimal transformation of fields
\begin{equation}\label{l13 phi shift}
\phi \ra \phi+\epsilon
\end{equation}
-- a shift of the value of the field $\phi$ by a constant function. The corresponding vector field on the space of fields is 
\begin{equation}
v=\int_M \frac{\delta}{\delta \phi(x)}\in \X(\F_M) .
\end{equation}
The transformation (\ref{l13 phi shift}) clearly does not change the action $S$ and the Lagrangian $L$ and clearly takes a solution of the Euler-Lagrange equation
\begin{equation}\label{l13 massless boson EL}
\Delta\phi=0
\end{equation}
to another solution. In particular, (\ref{l13 phi shift}) is a symmetry with $\Lambda=0$ (cf. Definition \ref{l12 def: inf symmetry}). Thus, the Noether current (\ref{l12 J}) corresponding to (\ref{l13 phi shift}) is
\begin{equation}
J=(-1)^n\iota_v \ul\alpha = (-1)^n \iota_v \left((-1)^{n+1}\delta\phi\wedge *d\phi\right) = \boxed{-*d\phi} \quad\in \Omega^{n-1,0}_\loc (M\times\F_M),
\end{equation}
where we used $\ul\alpha$ we obtained before, in the computation (\ref{l12 scal field EL computation}).
Noether theorem the tells that $J=-*d\phi$ is conserved (closed) modulo EL. One can check it independently:
\begin{equation}
dJ=-d*d\phi=*(\Delta \phi) \underset{EL}{\sim} 0,
\end{equation}
cf. the Euler-Lagrange equation (\ref{l13 massless boson EL}).
\end{example}

\subsection{Canonical stress-energy tensor}
\begin{example}[Canonical stress-energy tensor for the free massive scalar field]\label{l13 ex T_can}
Consider the free massive scalar field on $M=\RR^n$ (equipped with standard Euclidean metric), with the action
\begin{equation}
S(\phi)=\int_{\RR^n} \underbrace{ \frac12 d\phi\wedge *d\phi + \frac{m^2}{2}\phi^2 \dvol}_L .
\end{equation}

Consider the symmetry given by a translation on $\RR^n$:
\begin{equation}\label{l13 translations}
\begin{array}{cccc}
R_\epsilon\colon  &\RR^n &\ra &\RR^n \\
 &\vec{x}&\mapsto &\vec{x}'=\vec{x}+\epsilon \vec{a}
\end{array}
\end{equation}
with $\vec{a}\in \RR^n$ a fixed vector. This symmetry acts on fields as
\begin{equation}\label{l13 ex massive boson phi ->...}
\phi \ra (R_\epsilon^{-1})^*\phi = \phi-\epsilon a^i \dd_i\phi + O(\epsilon^2).
\end{equation}
This transformation is described by the vector field
\begin{equation}
v=-\int_{\RR^n} a^i \dd_i\phi\frac{\dd}{\dd \phi}.
\end{equation}
We have 
\begin{equation}\label{l13 ex massive boson L_v L}
\LL_v L = -a^i \frac{\dd}{\dd x^i} L = -\LL_{\vec{a}} L = d(\underbrace{-\iota_{\vec{a}}L}_\Lambda).
\end{equation}
Here the derivatives $\frac{\dd}{\dd x^i}$ act on fields, $\vec{a}$ is understood as a constant vector field on $\RR^n$. Computation (\ref{l13 ex massive boson L_v L}) in particular shows that (\ref{l13 ex massive boson phi ->...}) is indeed a symmetry, in the sense of Definition \ref{l12 def: inf symmetry}. The corresponding Noether current is
\begin{equation}
J_{\vec{a}}=(-1)^n\iota_{v}\ul\alpha +\Lambda = 
* d\phi \langle \vec{a},d\phi \rangle - \iota_{\vec{a}} \underbrace{\left(\frac12 d\phi\wedge *d\phi + \frac{m^2}{2}\phi^2 \dvol\right)}_{L} .
\end{equation}
So, one gets a family of conserved charges parametrized by $\vec{a}\in \RR^n$. This family is linear in $\vec{a}$, so it can be written as
\begin{equation}\label{l13 ex massive boson J_a}
J_{\vec{a}}=\langle \vec{a},T_\mr{can} \rangle ,
\end{equation}
where the generating object of the family
\begin{equation}\label{l13 ex massive boson Tcan}
T_\mr{can} \in \Omega^{n-1}(M)\otimes_{C^\infty(M)}\Omega^1(M)= \Gamma(M,\wedge^{n-1}T^*M\otimes T^* M)
\end{equation}
(depending on a field) is called the ``canonical stress-energy tensor.'' In (\ref{l13 ex massive boson J_a}), the second factor in (\ref{l13 ex massive boson Tcan}) (covectors) is contracted with the constant vector field $\vec{a}$.

By Noether theorem, one has
\begin{equation}
(d\otimes \mr{id}) T_\mr{can} \underset{EL}{\sim} 0.
\end{equation}

If we switch in (\ref{l13 ex massive boson Tcan}) from $(n-1)$-forms on $M$ to vector fields on $M$ by contacting with metric volume form, we obtain the tensor 
\begin{equation}
(T_\mr{can})^\bt_{\;\;\bt} =\left(\dd^i\phi\, \dd_j \phi -\delta^i_j \left(\frac12 \dd_k\phi\, \dd^k\phi+\frac{m^2}{2}\phi^2\right)\right) \dd_i\otimes dx^j\quad \in \Gamma(M,TM\otimes T^*M).
\end{equation}
Here the bullets $(\cdots)^\bt_{\;\;\bt}$ indicate the location of indices -- the type of tensor -- once covariant and once contravariant.

\end{example}

\begin{remark}
More generally, one can repeat the computation of Example \ref{l13 ex T_can} for any classical field theory on $M=\RR^{p,q}$ (or any full-dimensional submanifold $M$ of $\RR^{p,q}$), defined by some Lagrangian density 
\begin{equation}
L=d^nx\; \mathsf{L}(\phi,\dd \phi) .
\end{equation}
Then one obtains as the generating object for Noether currents associated with translations (\ref{l13 translations}) on $\RR^{p,q}$, the ``canonical stress-energy tensor'' 
\begin{equation}\label{l13 Tcan general}
\begin{gathered}
(T_\mr{can})^\bt_{\;\;\bt}= T^i_{\;\; j}
\dd_i\otimes dx^j
\quad \in \Gamma(M,TM\otimes T^*M)
\\ 
\mr{with}\quad
T^i_{\;\;j}=\frac{\dd\mathsf{L}(\phi,\dd\phi)}{\dd (\dd_i\phi^A)}\dd_j\phi^A - \delta^i_j \mathsf{L}(\phi,\dd\phi).
\end{gathered}
\end{equation}
By Noether theorem, it satisfies the conservation property
\begin{equation}
(\mr{div}\otimes \mr{id})(T_\mr{can})^\bt_{\;\;\bt}\simEL 0\quad \mr{or} \quad  \dd_i T^i_{\;\; j}\simEL 0.
\end{equation}
\end{remark}

\begin{remark} One can trade tangent and cotangent coefficient bundles in (\ref{l13 Tcan general}) (i.e. raise/lower indices), using the standard metric on $M=\RR^{p,q}$. In particular, one has the versions 
\begin{eqnarray}
(T_\mr{can})^{\bt\bt}=(T_\mr{can})^{ij}\;\dd_i\otimes \dd_j &\in & \Gamma(M,TM\otimes TM) ,\\
(T_\mr{can})_{\bt\bt}=(T_\mr{can})_{ij}\;dx^i\otimes dx^j &\in & \Gamma(M,T^*M\otimes T^*M) .
\end{eqnarray}
In the example of the free massive scalar field (Example \ref{l13 ex T_can}), these two versions of the canonical stress-energy tensor happen to be symmetric. However, in a general (not necessarily scalar) field theory on $\RR^{p,q}$ this fails: the canonical stress energy tensor is generally not symmetric.
\end{remark}

\section{Hilbert stress-energy tensor}
The notion of canonical stress-energy tensor above has deficiencies. Most importantly, it is only defined on a flat manifold. There is different version of the stress-energy tensor, which is defined on any manifold, is always symmetric and conserved.

\begin{definition}
Given a covariant (see (\ref{l12 covariance})) classical field theory, one defines the \emph{Hilbert stress-energy tensor} as a field-dependent tensor
\begin{equation}
T=T^{ij}\dd_i\cdot \dd_j \quad \in \Gamma(M,\mr{Sym}^2 TM)
\end{equation}
characterized by 
\begin{equation}\label{l13 delta_g S}
\delta_g S_{M,g}(\phi) = -\frac12\int_M \dvol_g T^{ij}\delta g_{ij},
\end{equation}
where $\delta_g S$ means ``variation of $S$ w.r.t. the variation of the metric $g\ra g+\delta g$.'' In other words, (\ref{l13 delta_g S}) means
\begin{equation}
\frac{d}{d\epsilon}\Big|_{\epsilon=0} S_{M,g+\epsilon h} = -\frac12\int_M \dvol_g T^{ij}h_{ij}
\end{equation}
for any fluctuation of the metric $h\in \Gamma(M,\mr{Sym}^2TM)$.

Equivalently, one defines $T$ as the variational derivative of $S_{M,g}$ w.r.t. the metric at a given point:
\begin{equation}\label{l13 T as derivative in g}
T^{ij}(x)\colon= -\frac{2}{\sqrt{\det(g)}}\, \frac{\delta S_{M,g}}{\delta g_{ij}(x)}.
\end{equation}
\end{definition}

From now on when we say ``stress-energy tensor'' we will mean ``Hilbert stress-energy tensor,'' unless stated otherwise.

Hilbert stress-energy tensor satisfies the following properties.
\begin{lemma}
\begin{enumerate}[(i)]
\item \label{l13 lm (i)}  $T$ is a symmetric tensor: $T^{ij}=T^{ji}$.
\item \label{l13 lm (ii)} $T$ is conserved, in the sense that
\begin{equation}
\nabla_i T^{ij} \underset{EL}{\sim} 0
\end{equation}
(or, in coordinate-free language, $(\mr{div}_{\dvol}\otimes \mr{id})T \underset{EL}{\sim} 0$).
\end{enumerate}
\end{lemma}

\begin{proof}
(\ref{l13 lm (i)}) is obvious by construction.

Proof of (\ref{l13 lm (ii)}): Let $r\in \X(M)$ be aby vector field vanishing in a neighborhood of the boundary; let $R_\epsilon\in \Diff(M)$ be the flow along $r$ in time $\epsilon$.  Covariance (\ref{l12 covariance}) implies 
\begin{equation}\label{l13 T conservation computation}
S_{M,g}(\phi)= S_{M,(R^{-1}_\epsilon)^* g} ((R^{-1}_\epsilon)^* \phi).
\end{equation}
Taking the derivative of both sides in $\epsilon$ at $\epsilon=0$, we get
\begin{multline}
0= \frac{d}{d\epsilon}\Big|_{\epsilon=0}S_{M,(R^{-1}_\epsilon)^* g} (\phi)+ \frac{d}{d\epsilon}\Big|_{\epsilon=0}S_{M,g} ((R^{-1}_\epsilon)^* \phi) =\\
= -\frac12 \int_M \dvol_g T^{ij} (\nabla_i r_j+\nabla_j r_i) + \underbrace{(\cdots)}_{\simEL 0} \quad \simEL\quad  2 \int_M \dvol_g (\nabla_i T^{ij}) r_j.
\end{multline}
In the last step we integrated by parts and used that $r$ vanished near $\dd M$. Since the computation (\ref{l13 T conservation computation}) holds for any $r$ supported away from $\dd M$, we get $\nabla_i T^{ij}\simEL 0$.

\end{proof}

\begin{definition}
Given a covariant classical field theory on a Riemannian manifold $(M,g)$, we say that a vector field $r\in \X(M)$ is a ``source symmetry'' (or ``spacetime symmetry,'' or ``horizontal symmetry'') if for any $n$-dimensional submanifold $N\subset M$, possibly with boundary, one has 
\begin{equation}\label{l13 source sym property}
\frac{d}{d\epsilon}\Big|_{\epsilon=0} S_{R_\epsilon(N),g}((R_\epsilon^{-1})^*\phi) = 0,
\end{equation}
where $R_\epsilon$ is the flow of $r$ in time $\epsilon$, in a neighborhood of $N$.\footnote{Note that in (\ref{l13 source sym property}) the metric is not pushed forward by  the flow in the r.h.s. If it were, the property would hold automatically for any vector field $r$ by covariance (\ref{l12 covariance}).}

Equivalently, $r$ is a source symmetry if $v=\LL_r \in \X_\loc(\F)$ is a symmetry in the sense of Definition \ref{l12 def: inf symmetry},  with $\Lambda=\iota_r L$. 
\end{definition}

For instance, in Example \ref{l13 ex T_can}, \emph{constant} vector fields on $\RR^n$ are source symmetries for the massive scalar field theory.

\begin{definition} We call an infinitesimal symmetry (\ref{l12 inf tranformation of fields}) a ``target symmetry'' (or ``vertical symmetry'') if it has the form 
\begin{equation}\label{l13 target symmetry}
\phi(x)\mapsto \phi(x)+\epsilon v(\phi(x)),
\end{equation}
with 
$v$ 
not depending on derivatives of the field, and if it is a symmetry in the sense of (\ref{l12 L_v L = d Lambda}).
\end{definition}
For instance, constant shift of the field in Example \ref{l13 example massless scalar shift symmetry} is a target symmetry. More generally, in the sigma model (\ref{l12 sigma model}), a Killing vector field on the target (infinitesimal isometry) gives rise to a target symmetry (\ref{l13 target symmetry}).

\begin{lemma}\label{l13 lm <T,r> conserved}
 Let $r
 $ be a vector field on $M$. Then $r$
 source symmetry of the theory if and only if
the expression 
\begin{equation}\label{l13 J_r}
J^i_r\colon= T^{ij}r_j
\end{equation}
is a conserved current, i.e., 
\begin{equation}\label{l13 J_r conserved}
\nabla_i J^i_r\simEL 0
\end{equation} 
(or, in coordinate-free language, $\mr{div}_{\dvol} J_r\simEL 0$).
\end{lemma}

\begin{proof}
Assume that $r$ is a source symmetry.
Applying covariance relation (\ref{l12 covariance}) with $m=R_\epsilon^{-1}\colon R_\epsilon(N)\ra N$ to (\ref{l13 source sym property}), we have
\begin{equation}\label{l13 lm3.23 eq1}
\frac{d}{d\epsilon}\Big|_{\epsilon=0}  S_{N, R_\epsilon^* g}(\phi) =0.
\end{equation}
By (\ref{l13 delta_g S}), this means
\begin{equation}\label{l13 (286)}
-\int_N \dvol_g \underbrace{T^{ij} \nabla_i r_j}_{\nabla_i (T^{ij} r_j) -(\nabla_i T^{ij}) r_j } =0.
\end{equation}
Since $\nabla_i T^{ij}\simEL 0$ and since (\ref{l13 (286)}) holds in particular for any small disk $N$ in $M$, we infer that 
\begin{equation}\label{l13 nabla J_r = 0}
\nabla_i(T^{ij}r_j)\simEL 0
\end{equation}
everywhere on $M$.

The converse is proven by reversing the argument: conservation of $J^i_r$ implies (\ref{l13 lm3.23 eq1}), which implies (by covariance) the source symmetry property of $r$.
\end{proof}

In particular, (\ref{l13 J_r conserved}) can be interpreted as follows: $T$ converts source symmetries of the theory into conserved currents:
\begin{equation}
r  \ra J_r= \langle T,r \rangle. 
\end{equation}

We remark that the conserved current $\langle T, r\rangle$ does not generally coincide with the conserved current associated with the source symmetry $r$ by the Noether theorem, see Example \ref{l14 ex J for source symmetry in CS} below.

\marginpar{Lecture 14,\\
9/23/2022}
\begin{example}[$T$ 
 for the free massive scalar field] \label{l14 ex T for massive scalar}
Consider the free massive scalar field (Example \ref{l12 ex free massive scalar}). The variation of the action w.r.t. metric is
\begin{multline}\label{l14 T for massive scalar - computation}
\delta_g S_{M,g}(\phi)= S_{M,g+\delta g}(\phi)-S_{M,g}(\phi)\; \bmod\; (\delta g)^2 =\\ =
\delta_g \int_M \underbrace{\Big(\frac12 (g^{-1})^{ij} \dd_i \phi \,\dd_j \phi + \frac{m^2}{2}\phi^2\Big)}_{\mathsf{L}} \underbrace{\sqrt{\det(g)}d^n x}_{\dvol_g} \\ =
\int_M \frac12 \Big(-(g^{-1})^{ik}\delta g_{kl} (g^{-1})^{lj}\Big) \dd_i\phi\, \dd_j\phi \sqrt{\det(g)}d^n x+\\
+ 
\mathsf{L}
\frac12 (g^{-1})^{kl}\delta g_{kl} \sqrt{\det (g)} d^n x\\
=-\frac12 \int_M  \sqrt{\det (g)} d^n x\,\delta g_{ij}\; \underbrace{\Big( \dd^i\phi\, \dd^j\phi-(g^{-1})^{ij}
\mathsf{L}
\Big)}_{T^{ij}} .
\end{multline}
Thus, the Hilbert stress-energy tensor is:
\begin{equation}\label{l14 THilb for massive scalar}
T=T^{ij}\,\dd_i\cdot \dd_j = \left(\dd^i\phi\,\dd^j\phi-(g^{-1})^{ij}\Big(\frac12 \dd_k \phi\, \dd^k \phi+\frac{m^2}{2}\phi^2\Big)\right)\;\dd_i\cdot \dd_j,
\end{equation}
where the indices are raised using the metric, e.g., $\dd^i\phi\colon= (g^{-1})^{il}\dd_l$. In a coordinate-free language, one has
\begin{equation}
T=(d\phi)^\#\cdot (d\phi)^\# -g^{-1}\left(\frac12 \langle d\phi,d\phi \rangle)_{g^{-1}}+\frac{m^2}{2}\phi^2\right),
\end{equation}
where $(\cdots)^\#$ is the bundle map $T^*M\ra TM$ provided by the metric $g$ (``index-raising'').

We remark that the Hilbert stress-energy tensor we computed coincides (for $M=\RR^n$) with the canonical stress-energy tensor we found in Example \ref{l13 ex T_can}: (\ref{l14 THilb for massive scalar}) coincides with (\ref{l13 Tcan general}) (upon raising an index). However, in more general classical field theories it does not happen.
\end{example}

\begin{example} \label{l14 ex T for sigma model with potential}
For the sigma model with target potential (Example \ref{l12 ex: sigma model}, (\ref{l12 sigma model + potential})), the Hilbert stress-energy tensor is
\begin{equation}\label{l14 T for sigma model with potential}
T=T^{ij}\dd_i\cdot \dd_j=
\left(
h_{ab}(\Phi) \dd^i\Phi^a \dd^j \Phi^b-(g^{-1})^{ij}\left(
\frac12 h_{ab}(\Phi)\langle d\Phi^a,d\Phi^b \rangle_{g^{-1}}-V(\Phi)
\right)
\right)\dd_i\cdot \dd_j,
\end{equation}
by a computation similar to (\ref{l14 T for massive scalar - computation}).
\end{example}

\begin{example}[$T$ in the Yang-Mills theory] \label{l14 ex: T for YM}
Consider the Yang-Mills theory (Example \ref{l12 ex YM}). The variation of the action
\begin{equation}
S_{M,g}(A)=\frac12\int_M \langle F_A \stackrel{\wedge}{,} *F_A \rangle = \frac14 \int_M \sqrt{\det (g)}d^nx\; (g^{-1})^{ik}(g^{-1})^{jl}\langle (F_A)_{ij},(F_A)_{kl}\rangle
\end{equation}
with respect to the metric. The computation is similar to (\ref{l14 T for massive scalar - computation}) and yields
\begin{equation}\label{l14 T YM}
T=T^{ij}\dd_i\cdot \dd_j = \left(
\langle F^{ik}, F^{j}_{\;\; k}\rangle-\frac14 (g^{-1})^{ij} \langle F_{kl}, F^{kl}  \rangle
\right) \dd_i\cdot \dd_j ,
\end{equation}
where $F\colon=F_A$ the curvature of the connection.

\end{example}

\begin{example}[$T$ in Chern-Simons theory]
Consider Chern-Simons theory (Example \ref{l12 ex CS}). Since the action (\ref{l12 S Chern-Simons}) does not depend on the metric, Hilbert stress-energy tensor (\ref{l13 T as derivative in g}) automatically vanishes:
\begin{equation}
T=0.
\end{equation}

We remark that the canonical (rather than Hilbert) stress-energy tensor (\ref{l13 Tcan general}) for Chern-Simons theory on $\RR^3$, is nonzero. Seen as an element of ${\Omega^2(M)\otimes_{C^\infty(M)} \Omega^1(M)}$ and then projected  to $3$-forms (i.e. skew-symmetrized), it is
\begin{equation}
[T_\mr{can}]_{\Omega^3}= -\langle A,F_A\rangle.
\end{equation}
This expression is not zero on the nose, but vanishes modulo EL.
\end{example}

\begin{example}[Noether current for source symmetries in Chern-Simons theory]\label{l14 ex J for source symmetry in CS}
In Chern-Simons theory (and generally, in any metric-independent, i.e. topological field theory) any vector field $r\in \X(M)$ is a source symmetry. The corresponding Noether current is
\begin{multline}
J_r^\mr{Noether}=-\iota_{\LL_r}\ul\alpha+\underbrace{\Lambda}_{\iota_r L}=\\
=-\iota_{\LL_r} (\frac12 \langle \delta A,A \rangle)+
\iota_r\left(\frac12\langle A,dA \rangle+\frac16 \langle A,[A,A]\rangle\right)\\
=-\frac12 \langle \LL_r A,A \rangle +
\iota_r\left(\frac12\langle A,dA \rangle+\frac16 \langle A,[A,A]\rangle\right)\\
=-\frac12 \langle d\iota_r A + \cancel{\iota_r d A}, A \rangle + \underbrace{\frac12 \langle \iota_r A,dA\rangle}_{-\frac12 \langle \iota_r A,dA\rangle+\langle \iota_r A,dA\rangle} - \cancel{\frac12 \langle A,\iota_r dA \rangle} +\frac12 \langle \iota_r A,[A,A] \rangle\\
=d(-\frac12 \langle \iota_r A,A \rangle) + \langle \iota_r A,\underbrace{dA+\frac12 [A,A]}_{\simEL 0} \rangle .
\end{multline}
So, it is a $d$-exact term plus a term vanishing modulo EL. On the other hand, the conserved current associated to $r$ by Lemma \ref{l13 lm <T,r> conserved} is identically zero, since the stress-energy tensor vanishes:
\begin{equation}
J_r=0.
\end{equation}
Note that although the currents $J_r^\mr{Noether}$ and $J_r$ are different on the nose, they are equivalent in the sense of Definition \ref{l12 def: equivalent currents}.
\end{example}

\section{Conformally invariant classical field theories}

\begin{definition}
We say that a classical field theory is ``conformally invariant'' (or just ``conformal'') if its action is invariant under Weyl transformations of the metric:
\begin{equation}\label{l14 Weyl invariance}
S_{M,g}(\phi)= S_{M,\Omega\cdot g}(\phi)
\end{equation}
for any positive function $\Omega\in C^\infty_{>0}(M)$.
\end{definition}

\begin{thm}
A classical field theory is conformally invariant if and only if its Hilbert stress-energy tensor is traceless 
\begin{equation}
\tr\, T=0.
\end{equation}
Here the trace of the stress-energy tensor is understood as 
\begin{equation}
\tr\, T= \tr\, T^\bt_{\;\;\bt}=T^i_{\,\;i}=g_{ij}T^{ij} \quad\in \Omega^{0,0}_\loc(M\times \F_M) .
\end{equation}
\end{thm}

\begin{proof}
Given a Weyl-invariant classical field theory, we have
\begin{equation}
0=\frac{d}{d\epsilon}\Big|_{\epsilon=0} S_{M,(1+\epsilon\omega)g}(\phi)=-\frac12\int_M \dvol_g \underbrace{T^{ij}(x)g_{ij}(x)}_{\tr\,T(x)} \omega(x)
\end{equation}
for any function $\omega\in C^\infty(M)$. Hence, $\tr\,T=0$.

For the ``only if'' part: given that $\tr T=0$ we have by the same computation (read right-to-left) that $S$ is invariant under infinitesimal Weyl transformations. Since Weyl orbits are path connected, this implies the full Weyl-invariance property (\ref{l14 Weyl invariance}).
\end{proof}

Weyl-invariance (\ref{l14 Weyl invariance}) implies (via covariance)  that every \emph{conformal} vector field $r\in\conf(M)$  is a source symmetry. In particular, by Lemma \ref{l13 lm <T,r> conserved}, 
\begin{equation}\label{l14 J_r=<T,r>}
J_r\colon=\langle T,r\rangle
\end{equation}
is a conserved current for any conformal vector field $M$.

Given a conformally invariant classical field theory, the stress-energy tensor depends on the metric (in addition to its dependence on fields) and is generally not Weyl-invariant. However, it is Weyl-equivariant. More precisely, we have
\begin{eqnarray}
\label{l14 T Weyl equivariance} (T_{\Omega g})^{\bt\bt}&=&\Omega^{-1-\frac{n}{2}}(T_g)^{\bt\bt},\\
\label{l14 T Weyl equivariance 2}
(T_{\Omega g})_{\bt\bt}&=&\Omega^{1-\frac{n}{2}}(T_g)_{\bt\bt} ,
\end{eqnarray}
where we are indicating the background metric as a subscript; $n$ is the dimension of the spacetime manifold $M$.

Indeed, to see (\ref{l14 T Weyl equivariance}), 
we compute
\begin{equation}
\begin{CD}
S_{\Omega(g+\delta g)} @= \displaystyle -\frac12\int \underbrace{\sqrt{\det (\Omega\, g)}}_{\Omega^{\frac{n}{2}}\sqrt{\det(g)}}\, d^n x\,T^{ij}_{\Omega g} \Omega\delta g_{ij}\\
@| \\
S_{g+\delta g} @= \displaystyle -\frac12\int \sqrt{\det (g)}\, d^n x\,T^{ij}_g \delta g_{ij}
\end{CD}
\end{equation}
which immediately implies (\ref{l14 T Weyl equivariance}). We get (\ref{l14 T Weyl equivariance 2}) by contracting (\ref{l14 T Weyl equivariance}) with two copies of $\Omega g$.

In particular, (\ref{l14 T Weyl equivariance 2}) implies that for a \emph{2-dimensional} conformally invariant classical field theory the stress-energy tensor $T_{\bt\bt}$ is Weyl-invariant -- depends only on the conformal class of the metric $g$.

\begin{example}
Consider again the massive scalar field (Examples \ref{l12 ex free massive scalar}, \ref{l14 ex T for massive scalar}), $S=\int\frac12 d\phi\wedge*d\phi+\frac{m^2}{2}\phi^2\dvol_g$. Using (\ref{l14 THilb for massive scalar}), the trace of the stress-energy tensor is
\begin{equation}
\tr\,T= \frac{2-n}{2} \dd^i\phi \,\dd_i\phi-n\frac{m^2}{2}\phi^2.
\end{equation}
The only way this expression can be identically zero is if $n=2$ and $m=0$. I.e., \emph{only the 2d massless scalar field theory is conformally invariant} (among all scalar field theories in different dimensions and with different masses).

Another way to see this is to look directly at the action where one performs a Weyl transformation with the metric:
\begin{equation}\label{l14 massive scalar Wey transf of metric}
S_{M,\Omega g}(\phi)=\int_M \Omega^{\frac{n}{2}-1}\frac12 d\phi\wedge *_g d\phi + \Omega^{\frac{n}{2}}\frac{m^2}{2}\dvol_g .
\end{equation}
It is independent of $\Omega$ (and coincides with $S_{M,g}(\phi)$) if and only if $n=2$ (which makes the first term $\Omega$-independent) and $m=0$ (which makes the second term $\Omega$-independent). Here we made use of the fact that  the Hodge star behaves under Weyl transformations as 
\begin{equation}
*_{\Omega g}\alpha=\Omega^{\frac{n}{2}-p}*_g\alpha,
\end{equation} 
where $\alpha\in \Omega^p(M)$ is any $p$-form.
\end{example}

\begin{example}
Similarly to the previous example, the sigma model (Example \ref{l12 ex: sigma model}, \ref{l14 ex T for sigma model with potential}) is conformally invariant if and only if $n=2$ and the potential $V(\Phi)$ is zero.
\end{example}

\begin{example}
Consider again the Yang-Mills theory (Examples \ref{l12 YM}, \ref{l14 ex: T for YM}). The trace of the stress-energy tensor (\ref{l14 T YM}) is
\begin{equation}
\tr\,T= \frac{n-4}{4} \langle F_{ij},F^{ij} \rangle .
\end{equation}
Thus, $n$-dimensional Yang-Mills theory is conformally invariant if and only if  $n=4$.

Another way to see this is via a computation similar to (\ref{l14 massive scalar Wey transf of metric}):
\begin{equation}
S_{M,\Omega g}(A)=\frac12 \int_M  \Omega^{\frac{n}{2}-2} \langle F_A\stackrel{\wedge}{,}*_g F_A\rangle.
\end{equation}
This expression is independent of $\Omega$ (and coincides with $S_{M,g}(A)$) if and only if the power of $\Omega$ in the integrand vanishes, i.e., if $n=4$.
\end{example}

\begin{example} 3d Chern-Simons theory (Example \ref{l12 ex CS}) is conformally invariant a fortiori: stress-energy tensor vanishes identically, in particular its trace vanishes. Put another way, the model does not depend on metric, hence it is invariant under Weyl transformations of metric.
\end{example}

\section{2d classical conformal field theory
}
Consider a conformal classical field theory on a Riemann surface $\Sigma$.

\subsection{Stress-energy tensor in local complex coordinates}
Let us use local coordinates $x^1=x, x^2=y$ in which the conformal structure is represented by the metric $(dx)^2+(dy)^2$.
Symmetry and tracelessness of the stress-energy tensor implies the ansatz
\begin{equation}
T_{ij}=\left(
\begin{array}{cc}
T_{11} & T_{12} \\ T_{12} & -T_{11}
\end{array}
\right)
\end{equation}
Thus, there are two independent components $T_{11}$, $T_{12}$
The conservation property $\dd^i T_{ij}\simEL 0$ is tantamount to
\begin{eqnarray}
\dd_1 T_{11}+\dd_2 T_{12} & \simEL & 0, \label{l14 T conservation eq1}\\
\dd_1 T_{12}-\dd_2 T_{11} & \simEL & 0. \label{l14 T conservation eq2}
\end{eqnarray}

Switching to local complex coordinates $z,\bar{z}$, we have
\begin{multline}\label{l14 T in cx coords}
T_{\bt\bt}=T_{ij}dx^i dx^j = T_{11}\underbrace{(dx^2-dy^2)}_{\frac12(dz^2+d\bar{z}^2)}+T_{12}\underbrace{2dx\,dy}_{\frac{1}{2i}(dz^2-d\bar{z}^2)} =\\ =
\underbrace{\frac{T_{11}-iT_{12}}{2}}_{T_{zz}}(dz)^2+
\underbrace{\frac{T_{11}+iT_{12}}{2}}_{T_{\bar{z}\bar{z}}}(d\bar{z})^2 = T_{zz}(dz)^2+T_{\bar{z}\bar{z}}(d\bar{z})^2.
\end{multline}
Thus, $T$ is a sum of a quadratic differential and its complex conjugate.
Note that the the mixed term $T_{z\bar{z}}dz d\bar{z}$ does not appear. In fact, this property is equivalent to conformal invariance, since\footnote{
Since the metric is $g=dz \cdot d\bar{z}$, the inverse metric is $g^{-1}=4 \dd_z\cdot \dd_{\bar{z}}$; the matrix  of the latter in $z,\bar{z}$-coordinates is 
$(g^{-1})^{ij}=\left( 
\begin{array}{cc}
0&2\\2&0
\end{array}
\right)$. Hence, $\tr\,T=(g^{-1})^{ij}T_{ij}=(g^{-1})^{z\bar{z}}T_{z\bar{z}}+(g^{-1})^{\bar{z}z}T_{\bar{z}z}=2T_{z\bar{z}}+2T_{\bar{z}z}=4T_{z\bar{z}}$. 
}
\begin{equation}
T_{z\bar{z}}=\frac{1}{4}\tr\,T \underset{\mr{conformal\;invariance}}{=} 0.
\end{equation}

Conservation property (\ref{l14 T conservation eq1}), (\ref{l14 T conservation eq2}) in the complex coordinates reads
\begin{equation}\label{l14 T conservation cx coord}
\dd_{\bar{z}}T_{zz}\simEL 0,\quad \dd_z T_{\bar{z}\bar{z}}\simEL 0.
\end{equation}
So, modulo EL, $T_{zz}$ is a holomorphic function and $T_{\bar{z}\bar{z}}$ is antiholomorphic. Thus, modulo EL, the stress-energy tensor (\ref{l14 T in cx coords}) is a sum of a holomorphic quadratic differential 
\begin{equation}
T_{zz}(z)(dz)^2
\end{equation}
and its complex conjugate. 

\begin{remark}
Note that holomorphic quadratic differentials arise
\begin{itemize}
\item as a parametrization of the cotangent space to the moduli space of complex structures on a surface (cf. (\ref{l10 T^*J (moduli space)})),
\item as a component of the stress-energy in a 2d conformal classical field theory.
\end{itemize}
These two occurrences are related:
the variation $\delta_g S_{\Sigma,g}(\phi)$ is (for a fixed field $\phi$) a cotangent vector to the space of metrics, but due to Weyl-invariance it descends to a cotangent vector to the space of conformal (or complex) structures on the surface. 

Next, if the field $\phi$ satisfies the Euler-Lagrange equation, then for $\psi_t\in \mr{Diff}(\Sigma)$ the flow of some (not necessarily conformal) vector field $u$ on $\Sigma$, one has 
\begin{equation}
\frac{d}{dt}\Big|_{t=0} S_{\Sigma,\psi_t^*g}(\phi)\underset{\mr{covariance}}{=}\frac{d}{dt}\Big|_{t=0}S_{\Sigma,g}((\psi^{-1}_t)^*\phi) 
\simEL
0.
\end{equation}
\end{remark}
Thus, for $\phi$ satisfying EL, $\delta_g S_{\Sigma,g}(\phi)$ actually gives a cotangent vector to the Teichm\"uller space 
\begin{equation}
\mc{T}_\Sigma=\{\mr{conformal\;structures}\}/\{\mr{action\;by\;vector\;fields}\}
\end{equation} 
and hence to the moduli space of complex structures $\MM_\Sigma$.

More explicitly, consider an infinitesimal deformation of the metric 
\begin{equation}\label{l14 variation of g}
g=dz d\bar{z} \mapsto g+\delta g = g(1+\mu+\bar\mu)(1+\omega) = (1+\omega)dz d\bar{z}+
\bar\mu^{\bar{z}}_z (dz)^2+ \mu^z_{\bar{z}}(d\bar{z})^2 
\end{equation}
with $\mu,\bar\mu$ the infinitesimal Beltrami differentials and $\omega$ the infinitesimal Weyl factor (this is an infinitesimal version of (\ref{l9 g via Beltrami})). Then one has
\begin{equation}\label{l14 delta_g S via Beltrami}
\delta_g S(\phi) = -2\int d^2z (T_{zz}\mu^z_{\bar{z}}+T_{\bar{z}\bar{z}} \bar\mu^{\bar{z}}_z),
\end{equation}
as a consequence of (\ref{l13 delta_g S}). The r.h.s. of (\ref{l14 delta_g S via Beltrami}) is invariant (modulo EL) under shifts (\ref{l10 mu -> mu + dbar v}) of the Beltrami differential, due to the conservation property of the stress-energy tensor (\ref{l14 T conservation cx coord}).

\marginpar{Lecture 15,\\ 9/26/2022}

\subsection{Conserved currents and charges associated to conformal symmetry}
Given a conformal vector field on $\Sigma$
\begin{equation}
r=u(z)\dd_z+\bar{u}(\bar{z})\dd_{\bar z} \in \conf(\Sigma)
\end{equation}
(which is automatically a source symmetry for a conformal field theory), the associated conserved current (\ref{l14 J_r=<T,r>}) is
\begin{equation}\label{l15 J_r}
J_r=\langle T,r \rangle = u T_{zz} dz + \bar{u} T_{\bar{z}\bar{z}} d\bar{z}
\end{equation}
-- a field-dependent $1$-form on $\Sigma$, which is closed modulo EL. Indeed, we see from (\ref{l14 T conservation cx coord}) that
\begin{equation}
d(uT_{zz}dz)= \bar\dd (u T_{zz} dz )= \dd_{\bar{z}}(u(z) T_{zz}) d\bar{z}\wedge dz =u(z)(\dd_{\bar{z}}T_{zz}) d\bar{z}\wedge dz \simEL 0
\end{equation}
and similarly for the second term in (\ref{l15 J_r}).

Given a closed loop $\gamma\subset \Sigma$, one has the corresponding conserved charge is 
\begin{equation}
C_{r,\gamma}\colon=\oint_\gamma  J_r
\end{equation}
and it is invariant under deformations of $\gamma$ modulo EL -- by Stokes' theorem (as an integral of a closed 1-form), or equivalently by Cauchy theorem (as an integral of a holomorphic 1-form, plus complex conjugate).

\subsection{Example: massless scalar field on a Riemann surface}
Fields are smooth real-valued functions $\phi\in C^\infty(M)$. 
The action written in real  local coordinates on $\Sigma$ reads
\begin{equation}\label{l15 massless scalar real coords}
S(\phi)=\int_\Sigma \sqrt{\det(g)}  dx\wedge dy \; \frac12(g^{-1})^{ij}\dd_i\phi\, \dd_j\phi.
\end{equation}
Here $g$ can be any metric withing the given conformal class (the combination $\sqrt{\det(g)}(g^{-1})^{ij}$ is Weyl-invariant). 

Written in local complex coordinates $z,\bar{z}$ on $\Sigma$, the action reads
\begin{equation}\label{l15 massless scalar cx coords}
S(\phi)=\int_\Sigma \frac{i}{2} dz\wedge d\bar{z}\;2\dd_z\phi \,\dd_{\bar{z}}\phi.
\end{equation}
To see this, it is sufficient to consider the Lagrangian density in (\ref{l15 massless scalar real coords}) in the standard real/complex coordinates on the standard $\RR^2\simeq \CC$, since this locally describes the general surface. In the standard metric one has $dz\wedge d\bar{z}=(dx+idy)\wedge (dx-idy)=-2i dx\wedge dy$, thus $dx\wedge dy=\frac{i}{2} dz\wedge d\bar{z}$. Also, one has
$\frac12((\dd_x \phi)^2+(\dd_y \phi)^2)=2 \dd_z \phi \dd_{\bar{z}}\phi$. This proves (\ref{l15 massless scalar cx coords}).

The Euler-Lagrange equation reads 
\begin{equation}\label{l15 EL real coord}
\Delta\phi =0 
\end{equation}
or equivalently, in complex coordinates,
\begin{equation}
\dd_z\dd_{\bar{z}}\phi =0.
\end{equation}
I.e., $\phi$ satisfies EL if and only if it is a harmonic function on $\Sigma$. We remark that although the Laplacian 
\begin{equation}
\Delta=\frac{1}{\sqrt{\det(g)}} \dd_i \sqrt{\det(g)} (g^{-1})^{ij}\dd_j
\end{equation}
itself is not a Weyl-invariant operator on a surface (it changes under Weyl transformations as $\Delta_{\Omega g}=\Omega^{-1}\Delta_g$ on a 2d manifold), the equation (\ref{l15 EL real coord}) is Weyl-invariant.

The components of the stress-energy tensor in complex coordinates read
\begin{equation}
T_{zz}=(\dd_z\phi)^2,\qquad T_{\bar{z}\bar{z}}=(\dd_{\bar{z}}\phi)^2.
\end{equation}

\chapter{2d quantum free massless scalar field}

In this section our goal is to study the 2d free massless scalar field (a.k.a. free boson) as a quantum conformal field theory (in Euclidean signature): the space of states for the circle $\HH$, correlation functions  on the plane
 and the operator product expansions. 

The logic of the approach is as follows:
\begin{enumerate}[(i)]
\item We start by constructing the quantum theory on a Minkowski cylinder (via canonical quantization of the classical theory) -- along the way we identify the space of states for the circle. 
As a warm-up, we start with the quantization of a simple 1d system -- the harmonic oscillator; as we will see, the free scalar field on a cylinder can be represented (via Fourier transform on $S^1$) as a tensor product of a family of harmonic oscillators.
\item  
We switch from Minkowski to Euclidean metric on the cylinder by Wick's rotation. Then we identify -- via the exponential map -- the Euclidean cylinder with the punctured complex plane $\CC^*$. At this point we are ready to calculate correlation functions of several point observables on $\CC$.
\end{enumerate}

\section{A warm-up: harmonic oscillator}

\subsection{Harmonic oscillator as a classical mechanical system}

In classical mechanics, in Hamiltonian formalism, the harmonic oscillator is a system with the phase space 
\begin{equation}\label{l15 Phi}
\Phi=T^*\RR
\end{equation}
seen as a symplectic vector space, with symplectic form
\begin{equation}
\omega_\mr{symp}=dp\wedge dx
\end{equation}
where $x$ is the coordinate on $\RR$ and $p$ -- the coordinate on the cotangent fiber. The symplectic form equips the algebra of smooth functions $C^\infty(\Phi)$ with the Poisson bracket
\begin{equation}
\{-,-\}\colon \Phi\times \Phi \ra \Phi
\end{equation}
-- a skew-symmetric bilinear (over $\RR$) operation which is a derivation in either slot and satisfies the generating relation
\begin{equation}
\{p,x\}=1.
\end{equation}

A more geometric definition of the Poisson bracket (valid for any symplectic manifold $(\Phi,\omega_\mr{symp})$) is:
\begin{itemize}
\item For each function $f\in C^\infty(\Phi)$, there is the corresponding Hamiltonian vector field $X_f\in \X(\Phi)$ uniquely characterized by the property
\begin{equation}
\iota_{X_f}\omega_\mr{symp}=-df.
\end{equation}
\item The Poisson bracket is defined by
\begin{equation}
\{f,g\}\colon= X_f(g)
\end{equation}
for any $f,g\in C^\infty(\Phi)$.
\end{itemize}

Back to the harmonic oscillator:
the phase space $\Phi$ is equipped with the function
\begin{equation}\label{l15 H}
H=\frac{p^2}{2}+\omega^2 \frac{x^2}{2}\quad \in C^\infty(\Phi)
\end{equation}
-- the classical Hamiltonian; here $\omega>0$ is a parameter of the system (``frequency''). The function $H$ generates the Hamiltonian vector field 
\begin{equation}
X_H=\{H,-\}=p\frac{\dd}{\dd x}-\omega^2 x\frac{\dd}{\dd p}.
\end{equation}

Hamilton's equations of motion of the system is the equation of an integral curve of the vector field $X_H$ on $\Phi$. In the case of the oscillator, they are:
\begin{eqnarray}
\dot{x}=\{H,x\}&=&p, \label{l15 Ham eqs for oscillator eq1}\\
\dot{p}=\{H,p\} &=& -\omega^2 x. \label{l15 Ham eqs for oscillator eq2}
\end{eqnarray}
Solving this system is straightforward: one combines this system to the single equation on $x$
\begin{equation} \label{l15 ddot(x)+omega^2 x=0}
\ddot{x}+\omega^2 x =0
\end{equation}
which has general solution $x(t)=A \cos(\omega t)+B \sin(\omega t)$ -- oscillatory motion with frequency $\omega$ and $A,B$ arbitrary parameters. Then one uses (\ref{l15 Ham eqs for oscillator eq1}) to find $p(t)$.

In Lagrangian mechanics, the same system is described by space of fields 
\begin{equation}
\F_{[t_0,t_1]}=\mr{Map}([t_0,t_1],\RR)
\end{equation}
-- maps from the source (or ``worldline'') interval $[t_0,t_1]$ to the target $\RR$ (the base of the cotangent bundle (\ref{l15 Phi})). The action for a function $x(\tau)$ is
\begin{equation}\label{l15 S}
S[x(\tau)]=\int_{t_0}^{t_1} d\tau \left( \frac{\dot{x}^2}{2}-\frac{\omega^2}{2}x^2 \right)
\end{equation}
The corresponding Euler-Lagrange equation is exactly the equation (\ref{l15 ddot(x)+omega^2 x=0}). Thus, indeed, the Euler-Lagrange equations for the action (\ref{l15 S}) are equivalent to the Hamilton's equations corresponding to the Hamiltonian (\ref{l15 H}). 

\subsection{Correspondence between Lagrangian and Hamiltonian descriptions of classical mechanics.}
Stepping aside from the harmonic oscillator for the moment,
consider the general classical mechanical system in Lagrangian formalism, with fields 
\begin{equation}
\F_{[t_0,t_1]}=\mr{Map}([t_0,t_1],X),
\end{equation}
with $X$ some target manifold, and with action functional
\begin{equation}\label{l15 S cl mech general}
S[x(\tau)]=\int_{t_0}^{t_1} d\tau\; \mathsf{L}(x(\tau),\dot{x}(\tau)),
\end{equation}
where 
\begin{equation}\label{l15 L as fun on TX}
\mathsf{L}(x,v)\in C^\infty(TX)
\end{equation}
is some function on the tangent bundle of the target $X$; here $v\in T_x X$ is a tangent vector. Then the Euler-Lagrange equation is
\begin{equation}\label{l15 EL for general L}
\frac{\dd\mathsf{L}(x,\dot{x})}{\dd x^i}-\frac{d}{dt}\left(\frac{\dd\mathsf{L}(x,v)}{\dd v^i}\Big|_{v=\dot{x}}\right)=0
\end{equation}
-- an ODE on the map $x\colon [t_0,t_1]\ra X$; here we use local coordinates $x^i$ on $X$.

The same system can be described as a Hamiltonian system with the phase space
\begin{equation}
\Phi=T^* X
\end{equation}
--  the cotangent bundle of the target $X$ equipped with the canonical symplectic form of the cotangent bundle, $\omega_\mr{symp}=dp_i\wedge dx^i$. The Hamiltonian function $H\in C^\infty(\Phi)$ is obtained as the \emph{Legendre transform} of the Lagrangian $\mathsf{L}$, trading velocity $v$ for momentum $p$:
\begin{equation}\label{l15 Legendre 1}
H(x,p)\colon = v^ip_i-\mathsf{L}(x,v),
\end{equation}
where $v=v(x,p)$ determined implicitly by the equation
\begin{equation}\label{l15 Legendre 2}
p_i=\frac{\dd\mathsf{L}(x,v)}{\dd v^i}.
\end{equation}
For the Legendre transform to exist and be invertible, one needs $\mathsf{L}(x,v)$ to be a convex function in $v$ (for any $x$).

The key observation is that the Hamiltonian equations generated by $H$ and Euler-Lagrange equations determined by the action (\ref{l15 S cl mech general}) are equivalent, provided that the Lagrnagian $\mathsf{L}$ and the Hamiltonian $H$ are linked by the Legendre transform (\ref{l15 Legendre 1}), (\ref{l15 Legendre 2}). Indeed, the Hamiltonian equations read
\begin{multline}
\dot{x}^i=\frac{\dd H}{\dd p_i}\underset{(\ref{l15 Legendre 1})}{=} v^i+\cancel{p_j \frac{\dd v^j}{\dd p_i}}-\cancel{\frac{\dd v^j}{\dd p_i}\underbrace{ \frac{\dd \mathsf{L}}{\dd v^j}}_{p_j}}=v^i,
\\
\dot{p}_i=-\frac{\dd H}{\dd x^i} =
-p_j \frac{\dd v^j}{\dd x^i} + \frac{\dd \mathsf{L}}{\dd x^i}\Big|_{p=\mr{const}}
=\cancel{-p_j \frac{\dd v^j}{\dd x^i}}+\Big(\frac{\dd \mathsf{L}}{\dd x^i}\Big|_{v=\mr{const}}+\cancel{\underbrace{\frac{\dd\mathsf{L}}{\dd v^j}}_{p_j}\frac{\dd v^j}{\dd x^i}}\Big) = \frac{\dd \mathsf{L}}{\dd x^i}.
\end{multline}
Substituting (\ref{l15 Legendre 2}) in the second equation above, we get the Euler-Lagrange equation (\ref{l15 EL for general L}).

\begin{remark} Legendre transform admits the following geometric description. If $l(v)$ is convex\footnote{Convexity implies that the Lagrangian submanifold (\ref{l15 Legendre graph}) is projectable onto both $V$ and $V^*$.
} smooth function on a vector space $V\ni v$  then its Legendre transform $h(p)$ is a smooth convex function on $V^*\ni p$ with the property that the Lagrangian submanifold   $V\oplus V^*$ that is the graph of $dl$ (here we think of $V\oplus V^*$ as the cotangent bundle $T^* V$ with the standard symplectic form $dp_i\wedge dv^i$) is also described as the graph of $dh$ (where we think of $V\oplus V^*$ as $T^*(V^*)$ with symplectic structure $dv^i\wedge dp_i$): 
\begin{equation}\label{l15 Legendre graph}
\mr{graph}(dl) = \left\{(v,p)\;|\; p_i= \frac{\dd l}{\dd v^i}\right\} \quad  = \quad \mr{graph}(dh) =\left\{(v,p)\;|\; v_i=\frac{\dd h}{\dd p_i}\right\}\qquad \subset V\oplus V^*
\end{equation}
Put another way, the Lagrangian submanifold (\ref{l15 Legendre graph}) has $l$ as its generating function on $V$ and $h$ as its generating function on $V^*$. If $l$ is given, the property (\ref{l15 Legendre graph}) determines $h$ uniquely up to a possible shift by a constant function. 

In (\ref{l15 Legendre 1}), (\ref{l15 Legendre 2}), the Legendre transform is done pointwise on $X$, with $V=T_x X$, $V^*=T^*_x X$, $l(v)=\mathsf{L}(x,v)$ and $h(p)=H(x,p)$ for any point $x\in X$.
\end{remark}

\subsection{Preparing for canonical quantization: Weyl algebra and Heisenberg Lie algebra}\label{sss Weyl and Heisenberg}
\begin{definition}\label{l15 def: Heis}
Let $(V,\omega_\mr{symp})$ be a (real) symplectic vector space and let $V_\CC=\CC\otimes V$ be its complexification. One defines  the 
Heisenberg Lie algebra associated to $(V,\omega_\mr{symp})$ as the Lie $*$-algebra
\begin{equation}
\mr{Heis}(V,\omega_\mr{symp})=  V_\CC\oplus \CC\cdot \mathbb{K}
\end{equation}
where $\mathbb{K}$ is a central element and one has the  commutators
\begin{equation}\label{l15 Heis comm rel}
[\wh{u},\wh{v}]=
-i\omega_\mr{symp}(u,v)\cdot\mathbb{K}
\end{equation}
for $u,v\in V$. We put a hat on an element of $V$ when we think of it as an element of $\mr{Heis}$. 
Elements $\wh{v}$ and $\mathbb{K}$ are understood as self-adjoint. 
\end{definition}

Thus, Heisenberg Lie algebra is a central extension of $V_\CC$ seen as an abelian Lie algebra, 
\begin{equation}
\CC \ra \mr{Heis}(V,\omega_\mr{symp})\ra V_\CC,
\end{equation}
with the Lie 2-cocycle of $V$ defining the central extension being $\omega_\mr{symp}$.

\begin{thm}[Stone-von Neumann] Assume that $V$ is finite-dimensional. Then there exists a unique (up to isomorphism) irreducible unitary representation of $\mr{Heis}(V,\omega_\mr{symp})$. 
\end{thm}
``Unitary'' here means that the representation is on a Hilbert space $\mc{H}$ and for each $v\in V$, $\wh{v}$ is represented by a hermitian operator. 

\begin{definition}\label{l15 def Weyl}
Weyl algebra of the symplectic vector space $(V,\omega)$ is defined as the following associative $*$-algebra over the ring of formal power series $\CC[[\hbar]]$:
\begin{equation}
\mr{Weyl}(V,\omega_\mr{symp})\colon= \CC[[\hbar]]\otimes U \mr{Heis}(V,\omega_\mr{symp})/(\mathbb{K}=\hbar)
\end{equation}
-- the universal enveloping of the Heiseberg Lie algebra (with scalars extended to formal power series), with the central element $\mathbb{K}$ identified with the scalar $\hbar$.
The involution (hermitian conjugation) maps the Heisenberg generators $\wh{v}$ to themselves.
\end{definition}

Here we think of the Planck constant $\hbar$ as an infinitesimal formal parameter.


\begin{example}[Main example]
Consider  $V=T^*\RR^n$ with coordinates $x_1,\ldots, x_n$ on the base $\RR^n$ and dual fiber coordinates $p^1,\ldots, p^n$, with standard symplectic form
\begin{equation}
\omega_\mr{symp}=\sum_i dp_i \wedge dx^i.
\end{equation}
The corresponding Weyl algebra is generated by elements $\wh{x}^i, \wh{p}_i$, $i=1,\ldots,n$, subject to relations
\begin{equation}\label{l15 canonical commutation relations}
[\wh{x}^i,\wh{x}^j]=0,\quad [\wh{p}_i,\wh{p}_j]=0,\quad [\wh{p}_i,\wh{x}^j]=-i\hbar\,\delta^j_i,\qquad \forall\;\; 0\leq i,j \leq n
\end{equation}
-- the ``canonical commutation relations.''

The standard representation of this algebra -- the Schr\"odinger representation -- is on the Hilbert space $\HH=L^2_\CC(\RR^n)$ of complex-valued square-integrable function on $\RR^n$, with hermitian structure 
$$\langle \psi_1,\psi_2\rangle\colon=\int_{\RR^n} d^n x\; \ol{\psi_1(x)}\psi_2(x)$$
for $\psi_1,\psi_2$ two square-integrable functions on $\RR^n$.
 The generators $\wh{x}^i$, $\wh{p}_i$  of the Weyl algebra act on $\HH$ as the following hermitian operators:
\begin{equation}
\wh{x}^i\colon \psi(x) \mapsto x^i\psi(x),\quad
\wh{p}_j\colon \psi(x)\mapsto -i\hbar \frac{\dd}{\dd x^i}\psi(x)
\end{equation}
I.e. $\wh{x}^i$ acts as a multiplication operator (by a coordinate function) and $\wh{p}^i$ acts as a derivation.\footnote{These operators are unbounded on $L^2_\CC(\RR^n)$.}

In particular, using this representation, one can identify the Weyl algebra of $T^*\RR^n$ with the algebra of polynomial differential operators in $n$ variables.
\end{example}

\subsection{Canonical quantization of the harmonic oscillator}
The idea of canonical quantization is to start with a classical system  in Hamiltonian formalism with phase space $\Phi=T^*\RR^n$ and lift (or ``quantize'') 
the Hamiltonian function $H(x,p)$ to an element $\wh{H}=H(\wh{x},\wh{p})$ of the corresponding Weyl algebra -- the quantum Hamiltonian.

By quantizing/lifting a polynomial function $f$ on $\Phi $ we mean 
choosing a preimage of $f$ under the ``dequantization map''
\begin{equation}\label{l15 pi}
\pi\colon \mr{Weyl}(\Phi)\xra{\bmod\hbar} C^\infty_\mr{poly}(\Phi).
\end{equation} 
where $C^\infty_\mr{poly}(\Phi)=\mr{Sym}^\bt\Phi^*$ is the algebra of polynomial functions on $\Phi$. 
Put another way, we take
a polynomial function $f(x,p)\in C^\infty_\mr{poly}(\Phi)$ and replace $x^i$, $p_j$  with corresponding generators of the Weyl algebra $\wh{x}^i$, $\wh{p}_j$, where we are allowed to add any terms proportional to $\hbar^k$ for $k>0$. The possibility to add such terms reflects the ordering ambiguity. E.g., $xp=px$ as functions on $\Phi=T^*\RR$, but $\wh{x}\wh{p}=\wh{p}\wh{x}+i\hbar$ in the Weyl algebra; so both $\wh{x}\wh{p}$ and $\wh{p}\wh{x}$ should be considered as legitimate quantizations of the monomial $xp$, and these quantizations are different. 

A systematic approach to lifting is to choose a ``quantization map'' (or ``operator ordering'').
\begin{definition} We call a ``quantization map'' a $\CC$-linear map
\begin{equation}
q\colon C^\infty_\mr{poly}(\Phi)\ra \mr{Weyl}(\Phi)
\end{equation} 
which satisfies $\pi\circ q=\mr{id}$, where $\pi$ is the map (\ref{l15 pi}). 
\end{definition}
Note that $q$ is not required to be an algebra morphism; in fact, it cannot be one.

\begin{example}[Weyl quantization map]
Consider the map $q\colon C^\infty_\mr{poly}(\Phi)\ra \mr{Weyl}(\Phi)$ which sends a monomial in $x^i, p_j$ to the corresponding monomial in $\wh{x}^i$, $\wh{p}_j$, where one averages over all possible orderings of the factors (i.e. for a monomial of degree $d$, one averages over the symmetric group $S_d$). Then one extends $q$ to general polynomials by linearity.  One calls this map $q$ the Weyl (or ``symmetric'') quantization map.
\end{example}

In the case of harmonic oscillator, we 
lift the coordinate function $x,p$ on the phase space $\Phi=T^*\RR$ to the generators of the Weyl algebra $\wh{x}, \wh{p}$ satisfying the relation
\begin{equation}\label{l15 canonical commutation relation harm osc}
[\wh{p},\wh{x}]=-i\hbar.
\end{equation}
We lift the Hamiltonian function to the element
\begin{equation}\label{l15 H hat}
\wh{H}=\frac{\wh{p}^2}{2}+\omega^2 \frac{\wh{x}^2}{2}
\end{equation}
of the Weyl algebra.

\textbf{Disclaimer.}
In the discussion below, we will be thinking of $\hbar$ as a small positive real number (rather than a formal parameter), and formulae involving $\hbar$ should be thought of as a family over $\hbar\in \RR_{>0}$.

In the Schr\"odinger representation, the Weyl algebra is acting on the Hilbert space 
\begin{equation}
\HH=L^2_\CC(\RR),
\end{equation}
with 
\begin{equation}
\wh{x}=x\cdot,\quad \wh{p}=-i\hbar \frac{\dd}{\dd x}.
\end{equation}
The quantum Hamiltonian (\ref{l15 H hat}) is then represented as the differential operator
\begin{equation}\label{l15 Hhat harmonic oscillator}
\wh{H}=-\frac{\hbar^2}{2}\frac{\dd^2}{\dd x^2}+\frac{\omega^2}{2}x^2.
\end{equation}

To construct the evolution operator of the quantum system 
\begin{equation}\label{l15 U harm osc}
U(t)=e^{-\frac{it \wh{H}}{\hbar}}\quad \in U(\HH),
\end{equation}
where $U(\HH)$ is the unitary group,
 one needs to find the eigenvalues and eigenvectors 
 (as square-integrable functions) 
 of $\wh{H}$. I.e., one is looking for all pairs $\psi\neq 0\in L^2_\CC(\RR)$, $E\in \RR$ such that
\begin{equation}
 \left(-\frac{\hbar^2}{2}\frac{\dd^2}{\dd x^2}+\frac{\omega^2}{2}x^2\right) \psi(x) = E\psi(x).
\end{equation}
 This is a well-known instance of a singular Sturm-Liouville problem. The answer is: 
\begin{thm}\label{l15 thm harm osc spectrum}
The operator (\ref{l15 Hhat harmonic oscillator}) admits a complete orthonormal system of eigenvectors $\{\psi_n\}_{n\geq 0}$ in $L^2(\HH)$ of the form
\begin{equation}\label{l15 psi_n}
\psi_n= C_n e^{-\frac{\omega x^2 }{2\hbar}} H_n\left(\sqrt{\frac{\omega}{\hbar}}\,x\right)
\end{equation}
where 
\begin{equation}
H_n(x)\colon = (-1)^n e^{x^2}\frac{d^n}{dx^n} e^{-x^2}
\end{equation}
are Hermite polynomials; 
 $C_n=
 \left(\frac{\omega}{\pi \hbar}\right)^{\frac14} (2^n n!)^{-\frac12}
 $ is a normalization constant.
The eigenvalue of $\wh{H}$ corresponding to $\psi_n$ is
\begin{equation}
E_n=\hbar\omega (n+\frac12).
\end{equation}
\end{thm} 

The first few Hermite polynomials are:
\begin{center}
\begin{tabular}{c|c}
$n$ & $H_n(x)$ \\ \hline
$0$ & $1$ \\
$1$ & $2x$ \\
$2$ & $4x^2-2$ \\
$3$ & $8x^3-12 x$ \\
$4$ & $16x^4-48 x^2+12$\\
$\vdots$ & $\vdots$
\end{tabular}
\end{center}

The evolution operator (\ref{l15 U harm osc}) is then
\begin{equation}\label{l15 U(t)}
\begin{array}{cccc}
U(t)\colon& \HH & \ra &\HH \\
 &\psi & \mapsto& \displaystyle \sum_{n\geq 0} e^{-\frac{iE_n t}{\hbar}}\langle \psi_n,\psi\rangle\, \psi_n = 
  \sum_{n\geq 0} e^{-i(n+\frac12)\omega t}\langle \psi_n,\psi\rangle\, \psi_n
\end{array}
\end{equation}
 
\subsection{Creation/annihilation operators}
Instead of directly looking for eigenvectors and eigenvalues of the operator (\ref{l15 Hhat harmonic oscillator}), one can obtain the result of Theorem \ref{l15 thm harm osc spectrum} by exploiting the hidden algebraic structure of the operator $\wh{H}$ (specific to the harmonic oscillator case).

Let us introduce two new elements of the Weyl algebra\footnote{
More pedantically, here we extend the ring of scalars in the Weyl algebra by tensoring it with $\CC[\hbar^{1/2},\hbar^{-1/2}]$.
} -- special complex linear combinations of $\wh{x}$, $\wh{p}$:
\begin{eqnarray}\label{l15 cr/annih op eq1}
\wh{a}&=& \sqrt{\frac{\omega}{2\hbar}} (\wh{x}+\frac{i}{\omega}\, \wh{p}),\\ \label{l15 cr/annih op eq2}
\wh{a}^+&=& \sqrt{\frac{\omega}{2\hbar}} (\wh{x}-\frac{i}{\omega}\, \wh{p}).
\end{eqnarray}
Operators $\wh{a},\wh{a}^+$ are called the ``annihilation operator'' and  ``creation operator,'' respectively. The are hermitian conjugates of one another and satisfy the commutation relation
\begin{equation}\label{l15 comm rel btw a, a^plus}
[\wh{a},\wh{a}^+]=1
\end{equation}
as a consequence of the canonical commutation relation (\ref{l15 canonical commutation relation harm osc}). The inverse formulae to (\ref{l15 cr/annih op eq1}), (\ref{l15 cr/annih op eq2}) are:
\begin{eqnarray}
\wh{x}&=&\sqrt{\frac{\hbar}{2\omega}} (\wh{a}^++\wh{a}), \\
\wh{p}&=&i\sqrt{\frac{\hbar\omega}{2}} (\wh{a}^+-\wh{a}).
\end{eqnarray}

The quantum Hamiltonian (\ref{l15 Hhat harmonic oscillator}) expressed in terms of $\wh{a},\wh{a}^+$ is
\begin{equation}\label{l15 Hhat via cr/annih}
\wh{H}=\hbar\omega\frac12(\wh{a}^+\wh{a}+\wh{a}\wh{a}^+) =\hbar\omega\left(\wh{a}^+\wh{a}+\frac12\right)
\end{equation}

The relation (\ref{l15 comm rel btw a, a^plus}) implies the commutation relations between $\wh{H}$ and $\wh{a}$, $\wh{a}^+$:
\begin{gather}
[\wh{H},\wh{a}] = -\hbar\omega \wh{a} \\
[\wh{H},\wh{a}^+] = \hbar\omega \wh{a}^+
\end{gather}

This implies that if in a representation of the Weyl algebra on a Hilbert space $\HH$, a vector
 $\psi\in \HH$ is an eigenvector of $\wh{H}$ with eigenvalue $E$, then
\begin{eqnarray}
\wh{H}(\wh{a}\psi) &=& (E-\hbar\omega) (\wh{a}\psi),  \label{l15 a lowers E} \\
\wh{H}(\wh{a}^+\psi) &=& (E+\hbar\omega) (\wh{a}^+\psi). \label{l15 a^+ raises E}
\end{eqnarray}
Thus, $\wh{a}$, $\wh{a}^+$ take eigenvectors of $\wh{H}$ to eigenvectors; applying $\wh{a}^+$ \emph{raises} the eigenvalue of  by  $\hbar\omega$, while $\wh{a}$ \emph{lowers} the eigenvalue by $\hbar\omega$.

\marginpar{Lecture 16,\\
9/28/2022}

We can construct an irreducible unitary representation $\Hosc$ of the Weyl algebra as follows: let $|0\rangle \in \Hosc$ be the ``vacuum vector'' -- a vector with the property 
\begin{equation}
\wh{a} |0\rangle =0.
\end{equation}
We will assume that $|0\rangle$ has norm $1$ in $\Hosc$.
From (\ref{l15 Hhat via cr/annih}) we infer that
\begin{equation}
\wh{H}|0\rangle = \frac{\hbar\omega}{2} |0\rangle.
\end{equation}
We then introduce the vectors $|n\rangle\in \Hosc$ with $n=1,2,3,\ldots$ as
\begin{equation}\label{l16 |n>}
|n\rangle \colon =  \alpha_n(\wh{a}^+)^n |0\rangle
\end{equation}
where $\alpha_n$ is a normalization factor, chosen in such a way that the vectors $|n\rangle$ are of norm $1$.
From (\ref{l15 a^+ raises E}) we infer that
\begin{equation}
\wh{H} |n\rangle = \Big(n+\frac12\Big) \hbar\omega |n\rangle
\end{equation}

The representation space $\Hosc$ of the Weyl algebra is then
\begin{equation}\label{l16 H osc}
\Hosc=
\Big\{\sum_{n\geq 0} c_n |n\rangle \;\;\Big|\;\; c_n\in \CC,\; \sum_{n\geq 0}|c_n|^2<\infty\Big\}.
\end{equation}

One can calculate the norms/inner products of vectors in $\Hosc$ from the fact that $\wh{a}$, $\wh{a}^+$ are Hermitian conjugate, using the commutation relation (\ref{l15 comm rel btw a, a^plus}). E.g., one has
\begin{equation}
\Big\langle \wh{a}^+|0\rangle, \wh{a}^+|0\rangle  \Big\rangle = 
\Big\langle |0\rangle, \wh{a}\wh{a}^+|0\rangle  \Big\rangle  = \langle 0| \underbrace{\wh{a}\wh{a}^+}_{\wh{a}^+\wh{a}+1} |0\rangle
=\langle 0| \wh{a}^+ \underbrace{\wh{a} |0\rangle}_0 + \underbrace{\langle 0|0\rangle}_{||\;|0\rangle \;||^2=1} =1
\end{equation}
Here we used the Dirac's notation: a covector in $(\Hosc)^*$ dual to the vector $|\psi\rangle\in \Hosc$ is denoted $\langle \psi|$; the inner product $\Big\langle |\psi_1\rangle, |\psi_2\rangle \Big\rangle_{\Hosc}$ of two vectors in $\Hosc$ is also denoted $\langle \psi_1 | \psi_2 \rangle$.

More generally, using the same strategy -- commuting $\wh{a}$ to the right of the word of creation/annihilition operators --  one can show the following.

\begin{lemma} For $n,m=0,1,2,\ldots$, one has
\begin{equation}\label{l16 lm eq}
\langle 0 | \wh{a}^m (\wh{a}^+)^n|0 \rangle = n!\,\delta_{nm}.
\end{equation}
\end{lemma}
\begin{proof}
First note that we have the commutation relation
\begin{equation}
[\wh{a},(\wh{a}^+)^n]=\sum_{k=1}^n (\wh{a}^+)^{k-1} \underbrace{[\wh{a},\wh{a}^+]}_1(\wh{a}^+)^{n-k} = n(\wh{a}^+)^{n-1}.
\end{equation}
Using it, we find
\begin{equation}\label{l16 eq1}
\wh{a}(\wh{a}^+)^n|0\rangle 
= [\wh{a},(\wh{a}^+)^n]|0\rangle+ (\wh{a}^+)^n \underbrace{\wh{a} |0\rangle}_{0} = n(\wh{a}^+)^{n-1}|0\rangle.
\end{equation}
Thus, for $m\leq n$, we have
\begin{multline}
(\wh{a})^m (\wh{a}^+)^n |0\rangle = 
(\wh{a})^{m-1} \wh{a} (\wh{a}^+)^n |0\rangle 
= (\wh{a})^{m-1} n (\wh{a}^+)^{n-1} |0\rangle 
=\\= (\wh{a})^{m-2} n \wh{a}(\wh{a}^+)^{n-1} |0\rangle
=(\wh{a})^{m-2} n(n-1) (\wh{a}^+)^{n-2} |0\rangle
\\=\cdots =
n(n-1)\cdots (n-m+1) (\wh{a}^+)^{n-m}|0\rangle
\end{multline}
In particular, for $m=n$ we have
\begin{equation}
(\wh{a})^n (\wh{a}^+)^n |0\rangle = n! |0\rangle,
\end{equation}
which implies (\ref{l16 lm eq}) for $m=n$.

If $m<n$, we have
\begin{equation}
\langle 0 | \wh{a}^m (\wh{a}^+)^n|0 \rangle = \frac{n!}{(n-m)!}\langle 0| (\wh{a}^+)^{n-m} |0 \rangle =0,
\end{equation}
where we use the fact that $\langle 0| \wh{a}^+$ is the covector dual to $\wh{a}|0\rangle$ and thus vanishes.

Likewise, if $m>n$, we have
\begin{equation}
\langle 0 | \wh{a}^m (\wh{a}^+)^n|0 \rangle = n! \langle 0| \underbrace{\wh{a}^{m-n} |0\rangle}_0 = 0.
\end{equation}

\end{proof}

In particular, vectors (\ref{l16 |n>}) with $n=0,1,2,\ldots$ form an orthonormal basis for $\HH$ if we set the normalization factors to be
\begin{equation}
\alpha_n=\frac{1}{\sqrt{n!}}.
\end{equation}
In this basis, the operators $\wh{a},\wh{a}^+$ act as
\begin{equation}
\wh{a} |n\rangle = \frac{1}{\sqrt{n!}}\underbrace{[\wh{a},(\wh{a}^+)^n]}_{n(\wh{a}^+)^{n-1}} |0\rangle = \sqrt{n}\; |n-1\rangle
\end{equation}
and
\begin{equation}
\wh{a}^+ |n\rangle = \underbrace{\frac{1}{\sqrt{n!}}}_{\frac{\sqrt{n+1}}{\sqrt{(n+1)!}}} (\wh{a}^+)^{n+1}|0\rangle = \sqrt{n+1}\; |n+1\rangle .
\end{equation}

By Stone-von Neumann theorem, there is an isomorphism of representations of the Weyl algebra 
\begin{equation} 
\Hosc\simeq L^2_\CC(\RR)
\end{equation} 
-- the ``oscillator representation'' and Schr\"odinger representation. Under this isomorphism vectors $|n\rangle\in \Hosc$ correspond to vectors (\ref{l15 psi_n}). In fact, one can obtain the formula (\ref{l15 psi_n}) from (\ref{l16 |n>}). Indeed: in Schr\"odinger representation,  the operators $\wh{a}$, $\wh{a}^+$ are
\begin{gather}
\wh{a}= \frac{1}{\sqrt 2} \left(y+\frac{\dd}{\dd y}\right) = \frac{1}{\sqrt{2}} e^{-\frac{y^2}{2}}\frac{\dd}{\dd y}e^{\frac{y^2}{2}} ,\\ 
\wh{a}^+=\frac{1}{\sqrt 2} \left(y-\frac{\dd}{\dd y}\right)  = \frac{-1}{\sqrt 2}e^{\frac{y^2}{2}}\frac{\dd}{\dd y}e^{-\frac{y^2}{2}},
\end{gather}
where we denoted $y=\sqrt{\frac{\omega}{\hbar}}\,x$.
Thus, the vacuum vector $|0\rangle$ in Schr\"odinger representation is a function $\psi_0$ satisfying the first-order ODE
\begin{equation}
\wh{a}\psi_0=0 \quad \Leftrightarrow \quad \frac{\dd}{\dd y}\left(e^{\frac{y^2}{2}} \psi_0(y)\right)=0\quad  \Leftrightarrow\quad  \psi_0(y) = C_0 e^{-\frac{y^2}{2}}
\end{equation}
with $C_0$ a constant (which can be chosen to normalize $\psi_0$ to unit norm).  Vectors $|n\rangle$ in Schr\"odinger representation are then
\begin{multline}
\psi_n(y) = \alpha_n (\wh{a}^+)^n |0\rangle = \alpha_n (-1)^n 2^{-\frac{n}{2}} e^{\frac{y^2}{2}} \frac{\dd^n}{\dd y^n} \left( e^{-\frac{y^2}{2}} \psi_0(y) \right)=\\
 =   2^{-\frac{n}{2}}  C_0 \alpha_n e^{-\frac{y^2}{2}}\underbrace{ \left( (-1)^n e^{y^2} \frac{\dd^n}{\dd y^n} e^{-y^2}\right) }_{H_n(y)} .
\end{multline}
This is exactly the formula (\ref{l15 psi_n}).

In terms of the basis $\{|n\rangle\}$ in the Hilbert space $\Hosc$, the evolution operator (\ref{l15 U(t)})  acts as
\begin{equation}
U(t)=e^{-\frac{i\wh{H}t}{\hbar}}\colon \quad\sum_{n\geq 0} c_n |n\rangle \mapsto  \sum_{n\geq 0} c_n e^{-i(n+\frac12) \omega t} |n\rangle.
\end{equation}

\begin{remark}
The partition function of the harmonic oscillator on the circle of length $t$ (cf. Example \ref{l2 example: QM on S^1}) is
\begin{equation}\label{l15 harm osc Z}
Z(S^1_t)=\mr{tr}_{\Hosc}U(t)= \sum_{n\geq 0} e^{-i(n+\frac12)\omega t}=\frac{e^{-\frac{i\omega t}{2}}}{1-e^{-i\omega t}}=\frac{1}{e^{\frac{i\omega t}{2}}-e^{-\frac{i\omega t}{2}}}=\frac{1}{2i\sin \frac{\omega t}{2}}.
\end{equation}
The Euclidean version of the partition function is obtained by the Wick rotation $t=-iT_\E$ with $T_\E>0$. In this version, the sum over eigenvalues in (\ref{l15 harm osc Z}) becomes absolutely convergent and one has
\begin{equation}
Z_\E(S^{1}_{T_\E})\colon =Z(S^1_{t=-iT_\E})=
\sum_{n\geq 0} e^{-(n+\frac12)\omega T_\E}=
\frac{1}{2\sinh \frac{\omega T_\E}{2}}.
\end{equation}
\end{remark}

\begin{remark}
The algebra of creation/annihilation operators (\ref{l15 comm rel btw a, a^plus}) admits another useful representation (unitarily isomorphic to $\Hosc$ and to the Schr\"odinger representation), on the Segal-Bargmann space, constructed as follows.
Consider the following hermitian inner product on the space $\mr{Hol}(\CC)$ of holomorphic functions on $\CC$:
\begin{equation}\label{l15 SB inner product}
\langle f,g \rangle_\mr{SB} = \frac{1}{\pi}\int \frac{i}{2}dz\wedge d\bar{z}\; e^{-|z|^2} \ol{f(z)}g(z).
\end{equation}
Then the Segal-Bargmann space is defined as
\begin{equation}
\HH_\mr{SB}\colon= \{f\in \mr{Hol}(\CC)\;|\; \langle f,f \rangle_\mr{SB}<\infty\} .
\end{equation}
In this representation, creation and annihilation operators act as
\begin{equation}
\wh{a}=\frac{\dd}{\dd z},\quad \wh{a}^+=z\cdot
\end{equation}
-- holomorphic derivative and multiplication operator by the holomorphic coordinate, respectively; these operators are hermitian conjugate of one another w.r.t. the inner product (\ref{l15 SB inner product}). The vacuum vector $|0\rangle$ can be identified with the function $1\in \HH_\mr{SB}$; then the vectors $|n\rangle$ are identified with $\frac{1}{\sqrt{n!}} z^n\in\HH_\mr{SB}$. The Hamiltonian $\wh{H}=\hbar\omega (z\frac{\dd}{\dd z}+\frac12)$, up to normalization and a shift, is the Euler vector field and thus counts the monomial degree of a function in $z$.
\end{remark}

\subsubsection{Normal ordering}. Normal ordering is an operation acting on linear combination words in the creation-annihilation operators $\wh{a},\wh{a}^+$ which reshuffles the letters in each word, putting annihilation operators $\wh{a}$ to the right and creation operators $\wh{a}^+$ to the left. Normal ordering applied to a word $W$ is denoted $:W:$. For instance, one has
\begin{equation}
: \wh{a}\wh{a}^+\wh{a}\wh{a}^+: = \wh{a}^+\wh{a}^+\wh{a}\wh{a}.
\end{equation}
We stress that normal ordering is an operation on words -- it does not descend to the Weyl algebra.

An important property of normal ordering is that if $O$ is a sum of words, each containing at least one creation or annihilation operator (i.e. no constant summand in $O$), then one has
\begin{equation}\label{l16 <0| :O: |0>}
\langle 0| :O: |0 \rangle = 0.
\end{equation}
This property is obvious: for each normally ordered word, the expression (\ref{l16 <0| :O: |0>})  will contain a term $\wh{a}|0\rangle$ and/or a term $\langle 0| \wh{a}^+$, both of which vanish.

In particular, if we represent the Hamiltonian of the harmonic oscillator by the combination of words $\wh{H}=\hbar\omega\frac12 (\wh{a}\wh{a}^++\wh{a}^+\wh{a})$, then we have
\begin{equation}
:\wh{H}: = \hbar\omega \wh{a}^+\wh{a}.
\end{equation}
-- which differs from (\ref{l15 Hhat via cr/annih}) by $\frac{\hbar\omega}{2}$. In particular, one has
\begin{equation}
:\wh{H}:|n\rangle = \hbar\omega n |n \rangle.
\end{equation}
In particular, the vacuum vector $|0\rangle$ is  has zero eigenvalue w.r.t. the normally ordered Hamiltonian $: \wh{H}:$,
\begin{equation}
:\wh{H}:|0\rangle =0.
\end{equation}

\section{Free massless scalar field on Minkowski cylinder}
\label{ss: free scalar on Minkowski cylinder}

\subsection{Lagrangian formalism}

Consider the massless scalar field on the cylinder $\Sigma=\RR\times S^1$ with Minkowski metric $g=(dt)^2-(d\sigma)^2$. Here $t$ (time) is the coordinate on $\RR$ and $\sigma\in S^1=\RR/2\pi\ZZ$ is the ``spatial coordinate.''

\begin{figure}[H]
\begin{center}
\includegraphics[scale=1]{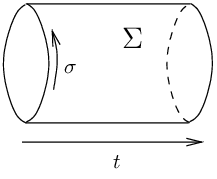}
\end{center}
\caption{Cylinder.
}
\end{figure}

Fields of the theory are smooth real functions $\phi(t,\sigma)$ on $\Sigma$ and the action functional is
\begin{equation}\label{l16 S}
S(\phi)=\frac{\kappa}{2}\int_{\Sigma} dt\,d\sigma (\dot{\phi}^2-(\dd_\sigma \phi)^2)
\end{equation}
where dot means the derivative in $t$. We put a normalization factor $\kappa$ in the definition of the action -- we will fix it later.

The space of fields of the theory $\F=\mr{Map}(\RR\times S^1,\RR)$ can be thought of as $\mr{Map}(\RR,\mr{Map}(S^1,\RR))$. Thus, one can think of the field theory on the cylinder $\Sigma$ as classical mechanics on the worldline $\RR$ with target 
\begin{equation}\label{l16 X}
X=\mr{Map}(S^1,\RR)=C^\infty(S^1) \quad \ni \phi(\sigma)
\end{equation} 
and Lagrangian 
\begin{equation}\label{l16 L}
\mathsf{L}=\frac{\kappa}{2}\oint_{S^1} d\sigma (\dot{\phi}^2-(\dd_\sigma \phi)^2)
\end{equation}
-- a function on $TX$ (cf. (\ref{l15 L as fun on TX})). We understand $\phi(\sigma)$\footnote{Here we mean the function on $S^1$, not its value at some particular $\sigma$.} as a point in the base of $TX$ and $\dot\phi(\sigma)$ as a tangent vector to $X$ at $\phi(\sigma)$. 

The Euler-Lagrange equation of the theory is the wave equation
\begin{equation}\label{l16 wave eq}
\ddot\phi-\dd_\sigma^2\phi=0.
\end{equation}
Its solution can be though of as a path in $X$ parametrized by $t\in\RR$.

Let us expand $\phi(\sigma)$ in the Fourier series
\begin{equation}
\phi(\sigma)= \sum_{n\in\ZZ} \phi_n e^{in\sigma}.
\end{equation}
Since $\phi$ is real-valued, the Fourier coefficients (or ``modes'') $\phi_n\in \CC$ must satisfy the reality condition $\phi_{-n}=\ol{\phi}_n$.
A path in $X$ is then specified by a collection of Fourier modes $\phi_n(t)$ as functions of $t\in \RR$.

In terms of Fourier modes, the Lagrangian (\ref{l16 L}) is
\begin{equation}
\mathsf{L}=\frac{\kappa}{2}2\pi \sum_{n\in\ZZ}\left(\dot\phi_n\dot\phi_{-n}-n^2 \phi_n\phi_{-n}\right)
\end{equation}

\subsection{Hamiltonian formalism}

In Hamiltonian formalism, the phase space of the system is 
\begin{equation}
\Phi=T^*X,
\end{equation}
with $X$ as in (\ref{l16 X}). Since $X$ is a linear space, we identify $T^*X$ with $X\times T^*X$ -- pairs of a function $\phi(\sigma)$ on $S^1$ and a distribution $\pi(\sigma)$ on $S^1$ (the ``momentum''). The canonical symplectic form on $\Phi$ is 
\begin{equation}
\omega_\mr{symp}= \oint_{S^1} dt\;\delta\pi(\sigma) \wedge \delta\phi(\sigma).
\end{equation}
The corresponding Poisson brackets between $\phi(\sigma)$, $\pi(\sigma')$ (thought of as coordinate functions on $\Phi$) are
\begin{equation}\label{l16 Poisson brackets, coord rep}
\{\phi(\sigma),\pi(\sigma')\}=-\delta_\mr{per}(\sigma-\sigma'),\quad
\{\phi(\sigma),\phi(\sigma')\}=0,\quad \{\pi(\sigma),\pi(\sigma')\}=0,
\end{equation}
where $\delta_\mr{per}$ is the periodic Dirac delta-distibution on $S^1$, $\displaystyle \delta_\mr{per}(\sigma)=\sum_{n\in\ZZ} \delta(\sigma+2\pi n)$ (where on the right $\delta$ are the usual Dirac delta-distributions on $\RR$).

To find the Legendre transform of the Lagrangian (\ref{l16 L}), we first find the relation between momenta and velocities:
\begin{equation}\label{l16 pi via phi dot}
\pi(\sigma)=\frac{\delta \mathsf{L}}{\delta\dot\phi(\sigma)}=\kappa \dot\phi(\sigma),
\end{equation}
cf. (\ref{l15 Legendre 2}). Then we find the Hamiltonian (cf. (\ref{l15 Legendre 1})) as
\begin{equation}\label{l16 H coord rep}
H=\oint_{S^1} d\sigma \pi(\sigma) \dot\phi(\sigma) - \mathsf{L} = \oint_{S^1}d\sigma \left(
\frac{\pi(\sigma)^2}{2\kappa}+\frac{\kappa}{2}(\dd_\sigma \phi)^2
\right),
\end{equation}
where in the second step we expresses velocities in terms of momenta using (\ref{l16 pi via phi dot}).

The Hamiltonian equations generated by the Hamiltonian $H$ are
\begin{equation}\label{l16 Ham eq}
\dot\phi= \frac{1}{\kappa}\pi, \quad \dot\pi = \kappa\dd^2_\sigma \phi.
\end{equation}
In particular, these equations imply the wave equation (\ref{l16 wave eq}) for $\phi$.

\begin{remark}\label{l16 Rem H and P via T}
The components of the stress-energy tensor of the theory are
\begin{eqnarray}
T_{00}=T_{11}&=& \frac{\kappa}{2}(\dot\phi^2+(\dd_\sigma\phi)^2), \\
T_{01}=T_{10}&=& \kappa\dot\phi \dd_\sigma \phi
\end{eqnarray}
We note that integrating $T_{00}$ over $\{t\}\times S^1$ one gets
\begin{equation}
H=\oint_{S^1} d\sigma\, T_{00}
\end{equation}
-- the Hamiltonian (or ``total energy''). Integrating $T_{01}$ over $\{t\}\times S^1$ one gets
\begin{equation}\label{l16 P}
P\colon= \oint_{S^1} d\sigma\, T_{01}
\end{equation}
-- the ``total momentum'' of the system. 

Modulo equations of motion, $H$ and $P$ do not depend on $t$ -- the position of the spatial slice. One can infer this from Lemma \ref{l13 lm <T,r> conserved}: translations along $\RR$ and rotations along $S^1$ are source symmetries and yield conserved currents, $T_{i0}$ and $T_{i1}$, hence the corresponding charges (fluxes through a spatial slice $\{t\}\times S^1$) are conserved  -- independent of $t$ modulo equations of motion.
\end{remark}

Expanding the field $\phi(\sigma)$ and the momentum $\pi(\sigma)$ in Fourier modes on $S^1$, we have
\begin{equation}
\phi(\sigma)=\sum_{n\in \ZZ} \phi_n e^{in\sigma},\quad \pi(\sigma)=\frac{1}{2\pi}\sum_{n\in \ZZ} \pi_n e^{in\sigma},
\end{equation}
with reality conditions $\phi_{-n}=\ol{\phi_n}$ and $\pi_{-n}=\ol{\pi_n}$. Poisson brackets (\ref{l16 Poisson brackets, coord rep}) correspond to the following brackets between the modes:
\begin{equation}\label{l16 Poisson brackets between modes}
\{\phi_n,\pi_m\}=-\delta_{n,-m},\quad \{\phi_n,\phi_m\}=0,\quad \{\pi_n,\pi_m\}=0.
\end{equation}

The Hamiltonian (\ref{l16 H coord rep}) written in terms of the parametrization of the phase space by Fourier modes $\phi_n,\pi_n$ is:
\begin{equation}
H=\sum_{n\in \ZZ} \frac12 \frac{1}{2\pi\kappa} \pi_n \pi_{-n}+\frac12 2\pi\kappa n^2 \phi_n\phi_{-n}.
\end{equation}
At this point we want to fix the normalization factor $\kappa$ to the value
\begin{equation}\label{l16 kappa value}
\kappa=\frac{1}{4\pi}.
\end{equation}
Then we have
\begin{equation}\label{l16 H via modes}
H=\sum_{n\in\ZZ} \pi_n\pi_{-n}+\frac14 n^2 \phi_n\phi_{-n} =(\pi_0)^2 +2\sum_{n>0 } \left(
|\pi_n|^2+\frac14 n^2 |\phi_n|^2
\right).
\end{equation}

\marginpar{Lecture 17,\\ 9/30/2022}
Similarly, the total momentum (\ref{l16 P}) is:
\begin{equation}\label{l17 P via modes}
P=\sum_{n\in\ZZ} in\pi_{-n}\phi_n.
\end{equation}

The Hamiltonian equations (\ref{l16 Ham eq}) spelled in terms of coordinates $\phi_n,\pi_n$ on the phase space read
\begin{equation}
\dot\phi_n=2\pi_n,\quad \dot\pi_n =-\frac{n^2}{2} \phi_n.
\end{equation}
As a consequence, $\phi_n$ satisfies the second-order ODE $\ddot\phi_n+n^2\phi_n=0$ (cf. (\ref{l15 ddot(x)+omega^2 x=0})).

Thus the system is a superposition of a collection of non-interacting subsystems: variables $(\phi_0,\pi_0)$ describe a free particle of mass $\mu=\frac12$ while variables $(\phi_n,\pi_{n})$ for $n\neq 0$ describe a complex harmonic oscillator with frequency $\omega_n=|n|$.

\subsubsection{Real oscillators.}
To get a better understanding of how the system breaks up into a collection of harmonic oscillators (plus a free particle), it useful to rewrite it in the real parametrization. 
Introduce the real variables $\phi_n^{(1,2)}$, $\pi_n^{(1,2)}$, with $n>0$, related to complex variables $\phi_n,\pi_n$ by
\begin{equation}\label{l17 real osc}
\phi_n=\phi_n^{(1)}+i\phi_n^{(2)},\quad \pi_n=\frac12(\pi_n^{(1)} +i\pi_n^{(2)}) \qquad \mr{for}\;\;n>0.
\end{equation}
I.e., $\phi^{(1,2)}_n$ are the real/imaginary parts of $\phi_n$, $n>0$, and similarly for $\pi_n$.
The real variables satisfy the Poisson brackets
\begin{equation}
\{\phi_n^{(\alpha)},\pi_m^{(\beta)}\}=\delta_{nm}\delta_{\alpha\beta},\quad \{\phi_n^{(\alpha)},\phi_m^{(\beta)}\}=0,\quad \{\pi_n^{(\alpha)},\pi_m^{(\beta)}\}=0
\end{equation}
for $n>0$ and $\alpha,\beta\in\{1,2\}$. The Hamiltonian (\ref{l16 H via modes}) in these variables reads
\begin{multline}\label{l17 H real osc}
H=\pi_0^2+\sum_{n\geq 1} \sum_{\alpha=1}^2 \left(
\frac{(\pi_n^{(\alpha)})^2}{2}+\frac{n^2}{2} (\phi_n^{(\alpha)})^2
\right)\\
=H_{\mr{free\;particle,\;}\mu=\frac12} + \sum_{n\geq 1}\sum_{\alpha=1}^2 H_{\mr{harmonic\; oscillator,\;}\omega_n=n}
\end{multline}

The general solution of the Hamiltonian equations (\ref{l16 Ham eq}) is
\begin{eqnarray}\label{l17 phi cl solution}
\phi(t,\sigma)&=&\sum_{n\neq 0} \Big(\underbrace{
A_n e^{in(t+\sigma)} + B_n e^{in(-t+\sigma)}}_{\phi_n(t)e^{in\sigma}}
\Big)+\underbrace{Ct+D}_{\phi_0(t)},\\
\pi(t,\sigma)&=& \frac{1}{2\pi}\Big(\sum_{n\neq 0}\underbrace{\frac{in}{2}\Big(  A_n e^{in(t+\sigma)}-B_n e^{in(-t+\sigma)}}_{\pi_n(t) e^{in\sigma}} \Big) + \underbrace{\frac{C}{2}}_{\pi_0(t)} \Big),
\end{eqnarray}
where $A_n,B_n,C,D$ are arbitrary constants subject to the reality constraints
\begin{equation}
A_{-n}=\ol{A_n},\; B_{-n}=\ol{B_n}\;\;\mr{for}\; n\neq 0,\qquad C,D\in\RR.
\end{equation}

\begin{remark} For the \emph{massive} scalar field (\ref{l12 S scalar}) on the Minkowski cylinder we can repeat all the computations above, introducing the same parametrization of the phase space by modes $\phi_n,\pi_n$. The Hamiltonian instead of (\ref{l16 H via modes}) will then be
\begin{equation}
H=\sum_{n\in \ZZ} \pi_n \pi_{-n}+\frac14 \omega_n^2 \phi_n\phi_{-n}
\end{equation}
with 
\begin{equation}\label{l17 omega_n}
\omega_n=\sqrt{n^2+m^2}
\end{equation} 
(with $m$ the mass of the scalar field). Thus, the system is a collection of non-interacting harmonic oscillators, one for each $n\in\ZZ$, with $n$-th oscillator having frequency  (\ref{l17 omega_n}). 
\begin{figure}[H]
\begin{center}
\includegraphics[scale=0.6]{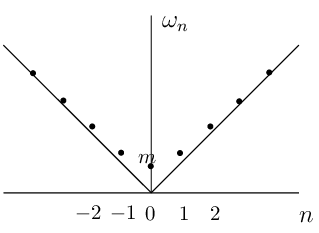}
\end{center}
\caption{Frequencies $\omega_n$ of oscillators comprising the free massive scalar field.
}
\end{figure}
In the massless limit $m\ra 0$, the frequencies become $\omega_n\ra |n|$. In particular, the $n=0$ oscillator in the limit becomes a free particle.
\end{remark}

\subsection{Aside: free particle}\label{sss free particle}
Since the free massless scalar field on a cylinder splits into a family of harmonic oscillators and a single free particle (cf. (\ref{l17 H real osc})), we stop for a moment to discuss the free particle, as a classical and as a quantum mechanical system.

The free particle moving on $\RR$ is the Lagrangian formalism is defined by the space of fields
$\F=\mr{Map}([t_0,t_1],\RR)$ with action functional
\begin{equation}
S[x(t)]=\int_{t_0}^{t_1} d\tau \underbrace{\frac{\mu \dot{x}^2}{2}}_{\mathsf{L}},
\end{equation}
where $\mu>0$ is a parameter -- ``mass'' of the particle.

In the Hamiltonian formalism, the system is described by the phase space $\Phi=T^*\RR$ and the Hamiltonian 
\begin{equation}
H=\frac{p^2}{2\mu}
\end{equation} 
(which is in particular the Legendre transform of the Lagrangian $\mathsf{L}=\frac{\mu v^2}{2}$).

In canonical quantization, we have the Weyl algebra generated by $\wh{x},\wh{p}$ subject to $[\wh{p},\wh{x}]=-i\hbar$, and the quantum Hamiltonian (using the symmetric Weyl quantization) is
\begin{equation}
\wh{H}=\frac{\wh{p}^2}{2\mu}.
\end{equation}
In Schr\"odinger representation of the Weyl algebra, the Hamiltonian acts as the differential operator 
\begin{equation}\label{l17 Hhat free particle}
\wh{H}= -\frac{\hbar^2}{2\mu} \dd_x^2
\end{equation}
on the Hilbert space $\HH=L^2_\CC(\RR)$.

The eigenvectors of $\wh{H}$ are the vectors
\begin{equation}\label{l17 |p>}
|p\rangle\colon= e^{\frac{i}{\hbar}px}
\end{equation}
with $p\in \RR$ a parameter (momentum). One then has
\begin{equation}
\wh{H} |p\rangle = \frac{p^2}{2\mu} |p\rangle.
\end{equation}
In particular, the operator $\wh{H}$ has a continuum eigenvalue spectrum $[0,\infty)$, where the eigenvalue $0$ is nondegenerate and all positive eigenvalues have multiplicity $2$. We also note that eigenvectors (\ref{l17 |p>}) are not points of $L^2_\CC(\RR)$ (not square-integrable), but rather are limit points of the space  (which is the usual case for a continuum spectrum).

\subsection{Canonical quantization
}
\label{sss free scalar field can quant}
We now proceed to the canonical quantization of the free massless scalar field on the Minkowski cylinder.

We promote the modes $\phi_n,\pi_n$ to generators $\wh\phi_n,\wh\pi_n$ of the Weyl algebra, subject to the relations
\begin{equation}\label{l17 can comm rel}
[\wh\pi_n,\wh\phi_m]=-i \delta_{n,-m},\quad [\wh\phi_n,\wh\phi_m]=0,\quad [\wh\pi_n,\wh\pi_m]=0.
\end{equation}
For convenience we set $\hbar=1$.

Next, we introduce a system of creation/annihilation operators $\wh{a}_n,\wh{\ol{a}}_n$, $n\neq 0$, subject to hermitian conjugation properties
\begin{equation}
(\wh{a}_n)^+=\wh{a}_{-n}, \quad (\wh{\ol{a}}_n)^+=\wh{\ol{a}}_{-n}
\end{equation}
and related to the Weyl generators $\wh\phi_n,\wh\pi_n$, with $n\neq 0$ by\footnote{
One can also express the operators $\wh{a}_n,\wh{\ol{a}}_n$ in terms of the standard creation/annihilation operators (\ref{l15 cr/annih op eq1}), (\ref{l15 cr/annih op eq2}) for the real oscillators, as in (\ref{l17 real osc}): for $n>0$ one sets
$\wh{a}_n={\sqrt{\frac{n}{2}} (-i \wh{a}_n^{(1)}-\wh{a}_n^{(2)})}$, $\wh{a}_{-n}=\sqrt{\frac{n}{2}} (i \wh{a}_n^{(1)+}-\wh{a}_n^{(2)+})$,
$\wh{\ol{a}}_n=\sqrt{\frac{n}{2}} (-i \wh{a}_n^{(1)}+\wh{a}_n^{(2)})$,
$\wh{\ol{a}}_{-n}=\sqrt{\frac{n}{2}} (i \wh{a}_n^{(1)+}+\wh{a}_n^{(2)+})$.
}
\begin{equation}\label{l17 phi, pi via a, abar}
\begin{aligned}
\wh{\phi}_n
&=\frac{i}{n}(-\wh{a}_{-n}+\wh{\ol{a}}_n),\\
\wh{\pi}_n&=\frac{\wh{a}_{-n}+\wh{\ol{a}}_n}{2}.
\end{aligned}
\end{equation}
The commutation relations corresponding to (\ref{l17 can comm rel}) are
\begin{equation}\label{l17 a,abar comm rel}
[\wh{a}_n,\wh{a}_m]=n\delta_{n,-m},\quad [\wh{\ol{a}}_n,\wh{\ol{a}}_m]=n\delta_{n,-m},\quad
[\wh{a}_n,\wh{\ol{a}}_m]=0.
\end{equation}

In terms of these creation/annihilation operators (and the zero-mode operators $\wh{\phi}_0$, $\wh{\pi}_0$ which need to be treated separately), the quantum Hamiltonian (obtained by symmetric Weyl quantization) is:
\begin{equation}\label{l17 H via a,abar}
\wh{H}=\sum_{n\neq 0} \frac{\wh{a}_{-n}\wh{a}_{n}+\wh{\ol{a}}_{-n}\wh{\ol{a}}_n}{2} + (\wh{\pi}_0)^2
=\frac12\sum_{n\in\ZZ} \left( \wh{a}_{-n}\wh{a}_{n}+\wh{\ol{a}}_{-n}\wh{\ol{a}}_n\right).
\end{equation}
In the second equality we introduced the notation 
\begin{equation}\label{l17 a_0}
\wh{a}_0=\wh{\ol{a}}_0\colon=\wh{\pi}_0.
\end{equation}

The canonical quantization of the total momentum operator (\ref{l17 P via modes}), written in terms of creation/annihilation operators, is
\begin{equation}\label{l17 P via a,abar}
\wh{P}=\frac12\sum_{n\in\ZZ} \left( \wh{a}_{-n}\wh{a}_{n}-\wh{\ol{a}}_{-n}\wh{\ol{a}}_n\right).
\end{equation}

\begin{remark}[Heisenberg Lie algebra]
One can consider the Lie $*$-algebra (the Heisenberg Lie algebra)
\begin{equation}\label{l17 Heisenberg Lie algebra}
\mr{Heis}\colon=\mr{Span}_\CC(\{\wh{a}_n\}_{n\in\ZZ},\mathbb{K})
\end{equation}
where $\mathbb{K}$ is the central element and the commutation relations are
\begin{equation}\label{l17 Heis comm rel}
[\wh{a}_n,\wh{a}_m]=n\delta_{n,-m}\mathbb{K}.
\end{equation}
with the involution (hermitian conjugation) acting as $\wh{a}_n^+=\wh{a}_{-n}$, $\mathbb{K}^+=\mathbb{K}$. It is the special case of the general Heisenberg Lie algebra (Definition \ref{l15 def: Heis}), for the symplectic vector space $V$ of Laurent series on $\CC^*$
\begin{equation}
V=\Big\{f(z)=\sum_{n\in \ZZ} f_n z^{-n}\Big\}
\end{equation} 
with symplectic form
\begin{equation}\label{l17 Heis omega_symp}
\omega_\mr{symp}(f,g)=i\;\mr{res}_{z=0}(fdg)
\end{equation}
-- the residue at $z=0$ (i.e. the coefficient of $z^{-1}dz$) of the meromorphic $1$-form $fdg$.\footnote{The normalization factor $i$ in (\ref{l17 Heis omega_symp}) compensates the factor $-i$ in the general definition of Heisenberg Lie algebra (\ref{l15 Heis comm rel}), i.e., one has the commutation relation $[\wh{f},\wh{g}]=\mr{res}_{z=0}(fdg)\;\mathbb{K}$.}
The basis vectors $z^{-n}$ in $V$ correspond to the generators $\wh{a}_n$ of $\mr{Heis}$.

The full Lie algebra of mode operators of the free massless scalar fields  can then be described via two copies $\mr{Heis}$, $\ol{\mr{Heis}}$ of the algebra above:
\begin{equation}
\mr{Span}_\CC(\{\wh\phi_n,\wh\pi_n\}_{n\in\ZZ},\mathbb{K}) =
\frac{\mr{Heis}\oplus \ol{\mr{Heis}}}{\wh{a}_0=\wh{\ol{a}}_0, \mathbb{K}=\ol{\mathbb{K}}} \oplus \CC\cdot \wh{\phi}_0,
\end{equation}
where on the right the extra generator $\wh{\phi}_0$ interacts with the Heisenberg Lie algebras via
\begin{equation}
[\wh{a}_0,\wh{\phi}_0]=-i \mathbb{K}.
\end{equation}
\end{remark}

From (\ref{l17 H via a,abar}) and (\ref{l17 a,abar comm rel}) one easily finds the commutators between the Hamiltonian $\wh{H}$ and the operators $\wh{a}_n$, $\wh{\ol{a}}_n$:
\begin{equation}
[\wh{H}, \wh{a}_n] = -n \wh{a}_n,\quad 
[\wh{H}, \wh{\ol{a}}_n] = -n \wh{\ol{a}}_n,\qquad n\in\ZZ.
\end{equation}
In particular, for $n>0$ applying $\wh{a}_n$ or $\wh{\ol{a}}_n$ to an eigenvector of $\wh{H}$ \emph{decreases} the eigenvalue (total energy of the state) by $n$, while 
applying $\wh{a}_{-n}$ or $\wh{\ol{a}}_{-n}$ \emph{increases} the eigenvalue by $n$. Thus, it is natural to think of $\wh{a}_n,\wh{\ol{a}}_n$ as annihilation operators and of $\wh{a}_{-n},\wh{\ol{a}}_{-n}$ as creation operators.

Next, consider the commutators of $\wh{a}_n$, $\wh{\ol{a}}_n$ with the total momentum operator (\ref{l17 P via a,abar}):
\begin{equation}
[\wh{P}, \wh{a}_n] = -n \wh{a}_n,\quad 
[\wh{P}, \wh{\ol{a}}_n] = +n \wh{\ol{a}}_n,\qquad n\in\ZZ.
\end{equation}
Thus, for $n>0$, applying $\wh{a}_{-n}$ to a joint eigenvector of $\wh{H}$ and $\wh{P}$ increases the energy and the total momentum of the system (creates  -- or adjoins to the system -- a ``left-mover'' -- a quantum with positive momentum), while applying $\wh{\ol{a}}_{-n}$ increases the energy but decreases the total momentum (creates a ``right-mover'').

To summarize, we have the following table for each $n>0$.

\vspace{0.3cm}
\begin{center}
\begin{tabular}{c|cc}
& annihilation operator & creation operator \\ \hline
left-mover &  $\wh{a}_n$ & $\wh{a}_{-n}$ \\
right-mover & $\wh{\ol{a}}_n$ & $\wh{\ol{a}}_{-n}$
\end{tabular}
\end{center}
\vspace{0.3cm}

\subsubsection{The space of states}. The space of states of the full system (the massless free scalar theory) can be described as the tensor product of the spaces of states for the constituent subsystems:
\begin{equation}\label{l17 H factorized}
\HH=\HH_\mr{free\;particle}\otimes \bigotimes_{n\neq 0} \HH_{\mr{harmonic\;oscillator\;}\omega_n=|n|}.
\end{equation}
One can choose to represent each factor in (\ref{l17 H factorized}) by the Schr\"odinger representation, thereby obtaining a tensor product of countably many copies of $L^2(\RR)$.

A different (better) description of $\HH$ is as a ``Fock space'' -- in the vein of the description (\ref{l16 H osc}) of the space of states of harmonic oscillator as spanned by excitations of a vacuum state given by repeatedly applying creation operators (Verma module description). 
In the case of the free massless scalar field, we pick from the first factor of (\ref{l17 H factorized}) any vector $|\pi_0\rangle$ (cf. (\ref{l17 |p>})), with $\pi_0\in \RR$ the zero-mode momentum, tensored with vacua $|0\rangle$ in each oscillator factor -- we denote the result by abuse of notations again $|\pi_0\rangle$ (this vector is referred to as ``psedovacuum'').\footnote{Note that by construction we have $\wh{a}_{n}|\pi_0\rangle = \wh{\ol{a}}_{n}|\pi_0\rangle =0$ for any $n>0$.} Then we act on $|\pi_0\rangle$ by the creation operators corresponding to different oscillators, creating an excited state; this gives a basis for $\HH$:
\begin{equation}\label{l17 Fock space}
\HH=\bigoplus_{r\geq 0,s\geq 0}\mr{Span}_\CC \Big\{
 \prod_{i=1}^r \wh{a}_{-n_i} \prod_{j=1}^s \wh{\ol{a}}_{-\ol{n}_j} |\pi_0\rangle \;\; \Big| \; 
\begin{array}{c}
1\leq n_1\leq n_2\leq\cdots \leq n_r,\\
1\leq \ol{n}_1\leq \ol{n}_2\leq\cdots \leq \ol{n}_s, \\
\pi_0\in \RR
\end{array}
\Big\}.
\end{equation}
Let us denote the basis vectors spanning $\HH$ by
\begin{equation}\label{l17 basis vector in H}
| \pi_0; \{n_i\},\{\ol{n}_j\}\rangle\colon=\prod_{i=1}^r \wh{a}_{-n_i} \prod_{j=1}^s \wh{\ol{a}}_{-\ol{n}_j} |\pi_0\rangle.
\end{equation}

We think of the basis vector $| \pi_0; \{n_i\},\{\ol{n}_j\}\rangle$ as a multiparticle state, consisting of 
\begin{itemize}
\item $r$ left-moving quanta carrying energy-momentum 2-vectors $(n_i,n_i)$, $i=1,\ldots,r$ and  
\item $s$ right-moving quanta carrying energy-momentum $(\bar{n}_j,-\bar{n}_j)$, $j=1,\ldots, s$.
\end{itemize}
We motivate this interpretation more below, after (\ref{l18 :P: action on vectors}).

\begin{remark}
Thinking of the system a string moving in the target $\RR$ (for each time $t$, we have a map $\phi\colon \{t\}\times S^1\ra \RR$), the zero-mode momentum $\pi_0$ can be understood as the (target) momentum of the center-of-mass of the string, and has nothing to do with the (source) total momentum $P$.
\end{remark}

An equivalent description of $\HH$ as a Fock space (a different way to enumerate the basis vectors) is as follows:
\begin{equation}\label{l17 H occupation numbers}
\HH=\mr{Span}_\CC \left\{  \prod_{n\geq 1}(\wh{a}_{-n})^{k_n} \prod_{\ol{n}\geq 1} (\wh{\ol{a}}_{-\ol{n}})^{\ol{k}_{\ol{n}}}|\pi_0\rangle \;\; \Big|\;\; 
\begin{array}{l}
k_n\geq 0,\ol{k}_{\ol{n}}\geq 0,\\\mr{finitely\;many\;of\;} k_n, \ol{k}_{\ol{n}}\;\mr{are\;nonzero}
\end{array}
 \right\}.
\end{equation}
The numbers $k_n$, $\ol{k}_{\ol{n}}$ are the ``occupation numbers'' for the excitations with energy-momentum $(n,n)$ and $(\ol{n},-\ol{n})$, respectively (i.e. $k_n$, $\ol{k}_{\ol{n}}$ are the numbers of quanta of these types).

\begin{figure}[H]
\begin{center}
\includegraphics[scale=0.6]{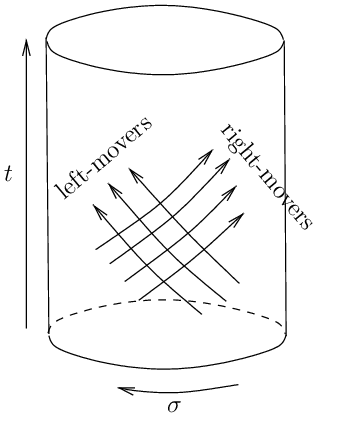}
\end{center}
\caption{Left- and right-movers on a cylinder.
}
\end{figure}

\marginpar{Lecture 18,\\
10/3/2022}

Note that applying the Hamiltonian (\ref{l17 H via a,abar}) to the pseudovacuum $|\pi_0\rangle$ we obtain
\begin{multline}\label{l18 divergency}
\wh{H} |\pi_0\rangle = \wh{\pi}_0^2|\pi_0\rangle+\frac12\sum_{n>0}\Big(\wh{a}_{-n}\underbrace{\wh{a}_n |\pi_0\rangle}_0 + \wh{\ol{a}}_{-n}\underbrace{\wh{\ol{a}}_n |\pi_0\rangle}_0\Big) 
+\\
+ 
\frac12\sum_{n<0}\Big(\underbrace{\wh{a}_{-n}\wh{a}_n}_{-n+\wh{a}_{n}\wh{a}_{-n}} |\pi_0\rangle + \underbrace{\wh{\ol{a}}_{-n}\wh{\ol{a}}_n}_{-n+\wh{\ol{a}}_{n}\wh{\ol{a}}_{-n}} |\pi_0\rangle\Big)
=\Big(\pi_0^2 +\underbrace{\sum_{n<0}(-n)}_{\mr{divergence!}}\Big)|\pi_0\rangle 
\end{multline}
-- $|\pi_0\rangle$ multiplied by a divergent factor. By a similar reason, each basis vector $|\pi_0;\{n_i\},\{\ol{n}_j\}\rangle$ is an eigenvector of $\wh{H}$ with a divergent eigenvalue.
To deal with this problem, one uses normal ordering.

\subsubsection{Normal ordering} Normal ordering (in the context of the free massless scalar field) is defined as a $\CC$-linear map  $:\cdots:$ from the free associative algebra generated by the operators $\{\wh{a}_n,\wh{\ol{a}}_n\}_{n\in \ZZ}$ to the Weyl algebra (i.e., to the quotient of the free associative algebra by relations (\ref{l17 a,abar comm rel})). Acting on a word, it puts the annihilation operators $\wh{a}_{>0},\wh{\ol{a}}_{>0}$ to the right and creation operators $\wh{a}_{<0},\wh{\ol{a}}_{<0}$ to the left (and then projects to the Weyl algebra). 

For example, the normally ordered Hamiltonian (\ref{l17 H via a,abar}) and total momentum operators (\ref{l17 P via a,abar}) are
\begin{eqnarray}
:\wh{H}: &=&  \wh{\pi}_0^2+ \sum_{n>0} \left( \wh{a}_{-n}\wh{a}_n + \wh{\ol{a}}_{-n} \wh{\ol{a}}_n\right), \label{l18 :H:}\\
:\wh{P}: &=&  \sum_{n>0} \left( \wh{a}_{-n}\wh{a}_n - \wh{\ol{a}}_{-n} \wh{\ol{a}}_n \right). \label{l18 :P:}
\end{eqnarray}

Acting with these normally ordered operators on basis vectors (\ref{l17 basis vector in H}), we don't encounter any divergencies (unlike in (\ref{l18 divergency})), and we have
\begin{eqnarray}
:\wh{H}:|\pi_0;\{n_i\},\{\ol{n}_j\} \rangle &=& \left(\pi_0^2+\sum_i n_i +\sum_j \ol{n}_j\right)\;|\pi_0;\{n_i\},\{\ol{n}_j\} \rangle, \\ \label{l18 :P: action on vectors}
:\wh{P}:|\pi_0;\{n_i\},\{\ol{n}_j\} \rangle &=& \left(\sum_i n_i -\sum_j \ol{n}_j\right)\;|\pi_0;\{n_i\},\{\ol{n}_j\} \rangle.
\end{eqnarray}
In particular, all states $|\pi_0;\{n_i\},\{\ol{n}_j\} \rangle$ are eigenvectors of both $:\wh{H}:$ and $:\wh{P}:$. Interpreting the joint eigenvalue as the energy-momentum 2-vector, we see that:
\begin{itemize}
\item The pseudovacuum $|\pi_0\rangle$ has energy-momentum $(\pi_0^2,0)$.
\item  Applying $\wh{a}_{-n}$ with $n>0$ to a state, we increase the energy-momentum by $(n,n)$ (which we interpret as adjoining a left-moving quantum to the system).
\item Applying $\wh{\ol{a}}_{-\ol{n}}$ with $n>0$ to a state, we increase the energy-momentum by $(\ol{n},-\ol{n})$ (which we interpret as adjoining a right-moving quantum).
\end{itemize}

\begin{remark}
There is a single (up to normalization) null-vector of $:\wh{H}:$ in $\HH$ -- the vector 
\begin{equation}
|\vac\rangle \colon= |\pi_0=0 \rangle,
\end{equation}
i.e. the pseudovacuum with $\pi_0=0$. We call this vector the vacuum vector (or vacuum state). It is a null-vector for both $:\wh{H}:$ and $:\wh{P}:$, which is interpreted as invariance of $|0\rangle$ under time-translations and rotation along $S^1$.\footnote{
Time-translation by time $t$ is represented on the space of states by the evolution operator $U(t)=e^{-it \wh{H}}$. Rotation by angle $\theta$ along $S^1$ is similarly represented by $R(\theta)=e^{-i\theta \wh{P}}$.
}
\end{remark}

\begin{remark}
Later -- after switching to Euclidean metric -- we will see that the partition function of a torus defined using the normally ordered operators $:\wh{H}:$ and $:\wh{P}:$ does not have the expected modular invariance property (see Section \ref{sss: modular invariance}). To restore it, one should replace $:\wh{H}:$ with the operator $:\wh{H}:-\frac{1}{12}$ (while $:\wh{P}:=\wh{P}$ does not have to be changed), which can be seen as the original operator $\wh{H}$ (\ref{l17 H via a,abar}) with the divergence regularized by Riemann zeta-function regularization:
\begin{multline}
\wh{H}=\frac12\sum_{n\in\ZZ} \left(\wh{a}_{-n}\wh{a}_n + \wh{\ol{a}}_{-n}\wh{\ol{a}}_n \right) = \\
=\frac12 \sum_{n>0}
\left(\wh{a}_{-n}\wh{a}_n + \wh{\ol{a}}_{-n}\wh{\ol{a}}_n \right) + \frac12\sum_{n<0} \left(-2n+\wh{a}_{-n}\wh{a}_n + \wh{\ol{a}}_{-n}\wh{\ol{a}}_n\right)\\ =
:\wh{H}:+\sum_{n>0}n \underbrace{=}_{\mr{zeta-regularization}} 
:\wh{H}: + \lim_{s\ra -1}\sum_{n>0} n^s= 
:\wh{H}: +\zeta(-1)= :\wh{H}:-\frac{1}{12}.
\end{multline}
At the moment this zeta-regularization prescription looks entirely ad hoc, and it is not clear why it should help with modular invariance. Note that with respect to this regularized $\wh{H}$, the vacuum state $|\vac\rangle$ has energy $-\frac{1}{12}$ instead of zero.

The (somewhat surprising) take-home message for the moment is that the normal ordering breaks conformal invariance (in a mild way\footnote{
The change of the quantum Hamiltonian by a multiple of identity is a somewhat subtle effect: we usually need the commutators with $\wh{H}$, not $\wh{H}$ itself. E.g. 
time-dependence of observables in the Heisenberg picture (\ref{l18 Heisenberg eq}) only depends on commutators with $\wh{H}$.
}) -- in fact we will not see any problem with normal ordering in the genus zero theory (correlators of point observables on a cylinder/plane) -- they do not contradict conformal invariance, but in genus one we have a problem.

\end{remark}

\subsection{Aside: Schr\"odinger vs Heisenberg picture in quantum mechanics}

In the \ul{Schr\"odinger picture} of quantum mechanics, time-evolution acts on states. I.e., one has time-dependent families of states linked by the evolution operator:
\begin{equation}\label{l18 evol of states}
|\psi\rangle_t = U(t-t_0)|\psi\rangle_{t_0}
\end{equation}
where
\begin{equation}
U(t)=e^{-i\wh{H}t}
\end{equation}
is the unitary evolution operator. 
Put another way, one has a family of the spaces of states $\HH_t$ linked by isomorphisms $\HH_{t}\xra{U(t'-t)}\HH_{t'}$. Observables are operators $\wh{O}$ acting on $\HH_t$ at some particular $t$.

The infinitesimal version of (\ref{l18 evol of states}) is the Schr\"odinger equation
\begin{equation}
\frac{d}{dt}|\psi\rangle_t = -i \wh{H} |\psi\rangle_t
\end{equation}
(we mention it for comparison with the Heisenberg picture).

In the \ul{Heisenberg picture}, evolution acts on observables instead of states. All states are thought of as elements of $\HH_{t_0}$ for some fixed reference time $t_0$. But an observable is understood as a family $\wh{O}_t$ arising as a pullback of some fixed ($t$-independent) operator $\wh{O}$ acting on $\HH_t$, along the evolution $U(t-t_0)\colon \HH_{t_0}\ra \HH_t$: 
\begin{equation}
\wh{O}_t\;\;\calt \;\;\HH_{t_0} \xra{U(t-t_0)} \HH_t\;\; \car\;\; \wh{O}
\end{equation}
I.e., one has
\begin{equation}\label{l18 Heisenberg t-dep observables}
\wh{O}_t=U(t-t_0)^{-1} \,\wh{O}\, U(t-t_0).
\end{equation}
The infinitesimal version of this equation is the Heisenberg equation
\begin{equation}\label{l18 Heisenberg eq}
-i\frac{d}{dt}\wh{O}_t=[\wh{H},\wh{O}_t].
\end{equation}
Below we will use the notation $\wh{O}(t)\colon=\wh{O}_t$ for the time-dependent operators of the Heisenberg picture.

Consider a correlator 
in Schr\"odinger picture (cf. Section \ref{sss point observables in QM}) of quantum mechanics on the source interval (cobordism) $[t_\ii,t_\oo]$, with in/out states $|\psi_\ii\rangle$, $\langle\psi_\oo|$,\footnote{We remind that in Dirac's notation $|\cdots\rangle$ are vectors in $\HH$ and $\langle \cdots |$ are vectors in the linear dual $\HH^*$.}
of observables $\wh{O}_1,\ldots, \wh{O}_n$ inserted at times $t_\ii<t_1<\cdots t_n < t_\oo$. 
\begin{figure}[H]
\includegraphics[scale=0.8]{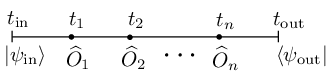}
\caption{Correlator in quantum mechanics.
}
\end{figure}
The correlator is given by
\begin{equation}
\langle \psi_\oo |\; U(t_\oo-t_n) \;\wh{O}_n \cdots  \wh{O}_2\; U(t_2-t_1)\; \wh{O}_1 \;U(t_1-t_\ii)\;| \psi_\ii \rangle
\end{equation}

The same quantity can be equivalently written in Heisenberg picture, as 
\begin{equation}\label{l18 corr Heisenberg}
\langle \til\psi_\oo |\;\wh{O}_n(t_n)\cdots \wh{O}_2(t_2)\;\wh{O}_1(t_1) \; |\til\psi_\ii \rangle
\end{equation}
where $\wh{O}_k(t_k)\colon= U(t_k-t_0)^{-1}\;\wh{O}\; U(t_k-t_0)$ are the time-dependent observables (\ref{l18 Heisenberg t-dep observables}) and 
\begin{equation}
|\til\psi_\ii\rangle=U(t_0-t_\ii) |\psi_\ii\rangle,\qquad
|\til\psi_\oo\rangle=U(t_0-t_\oo) |\psi_\oo\rangle
\end{equation}
are the in-out states expressed as elements of the reference Hilbert space $\HH_{t_0}$. Herethe reference time $t_0$ is chosen arbitrarily.

\begin{remark}
We remark that the product of time-dependent observables $\wh{O}_n(t_n)\cdots \wh{O}_1(t_1)$ in (\ref{l18 corr Heisenberg}) is time-ordered -- the times satisfy $t_n>\cdots > t_1$.

When we later consider field theory in Euclidean signature, this will correspond to setting $t=-it_\E$ in the formulae above, with $t_\E>0$. Then the evolution operator $U(T_\E)=e^{-T_\E \wh{H}}$ is non-invertible and only defined for positive $T_\E$. In this situation, \emph{only} time-ordered products of operators are defined. In this setting we should use (\ref{l18 Heisenberg eq}) to define $T_\E$-dependent observables.
\end{remark}

\subsection{Back to free massless scalar field on a cylinder: time-dependent field operator}
Back to the quantum field theory on the cylinder, we think of it as a special model of quantum mechanics, where we understood the space of states (\ref{l17 Fock space}) and we have a family of special operators 
\begin{equation}
\wh{\phi}(\sigma) = \sum_{n\in\ZZ} \wh{\phi}_n e^{in\sigma} =
\wh{\phi}_0+ \sum_{n\neq 0}\frac{i}{n}(-\wh{a}_{-n}+\wh{\ol{a}}_n)e^{in\sigma}
\end{equation}
(one operator for each $\sigma\in S^1$)  acting on $\HH$ and independent of $t$. We can treat these as special examples of observables in the Schr\"odinger picture.

The corresponding time-dependent observables in the Heisenberg picture are obtained by solving the equation (\ref{l18 Heisenberg eq}), which yields
\begin{multline}\label{l18 phi time-dep}
\wh\phi(t,\sigma)= \underbrace{e^{i\wh{H}t}}_{U(t)^{-1}} \wh\phi(\sigma)\underbrace{ e^{-i\wh{H}t}}_{U(t)}=\\
=\wh{\phi}_0+2t\wh{\pi}_0+\sum_{n\neq 0}\frac{i}{n}\Big(
-\wh{a}_{-n} e^{in(t+\sigma)}+\wh{\ol{a}}_n e^{in(-t+\sigma)}
 \Big).
\end{multline}
Note the similarity of this formula with the formula for the general solution of the equations of motion in the classical theory (\ref{l17 phi cl solution}).

Then we can consider, e.g., correlators of the form
\begin{equation}
\vcenter{\hbox{ \includegraphics[scale=0.8]{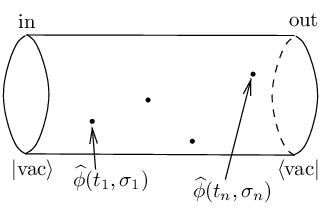} }}  \quad =\quad
\langle \vac | \wh{\phi}(t_n,\sigma_n)\cdots \wh{\phi}(t_1,\sigma_1)|\vac\rangle
\end{equation}
with $t_n>\cdots >t_1$ and with $\sigma_n,\ldots,\sigma_1\in S^1$. These correlators can be explicitly computed using (\ref{l18 phi time-dep}) and using the commutation relations (\ref{l17 a,abar comm rel}). We will discuss such correlators below, once we switch to Euclidean signature.

\section{Free massless scalar field on $\CC$}
\subsection{
From Minkowski to
Euclidean cylinder (via Wick rotation), and then to $\CC^*$ (via exponential map)}
\label{sss: free scalar Minkowski cylinder, Eucl cylinder, C^*}
Now let us switch 
the spacetime manifold of the
free massless scalar field from Minkowski cylinder to the cylinder $\Sigma=\RR\times S^1$ with Euclidean metric $g=(d\tau)^2+(d\sigma)^2$. Here we will be denoting the Euclidean time -- the coordinate on $\RR$ -- by $\tau$ (instead of $T_\E$); $\sigma$ is the coordinate on $S^1$ as before. 

Introducing a complex coordinate 
\begin{equation}
\zeta=\tau+i\sigma,\quad \ol\zeta=\tau-i\sigma,
\end{equation}
we can identify $\Sigma$ with $\CC/2\pi i \ZZ$ (where $\zeta$ is the standard coordinate on $\CC$).

Another useful model for $\Sigma$ for us is the punctured complex plane $\CC^*=\CC\backslash\{0\}$ with complex coordinate $z=e^\zeta$. This is in fact the model we will be using the most.

\begin{figure}[H]
\begin{center}
\includegraphics[scale=0.7]{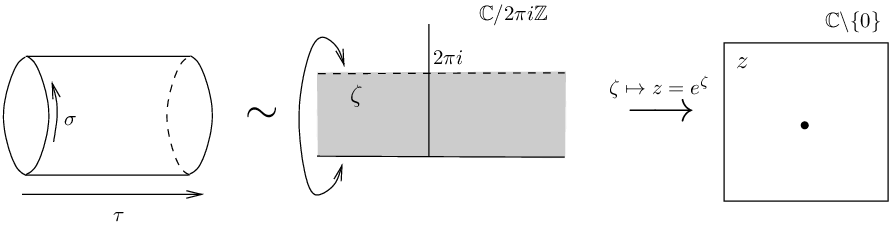}
\end{center}
\caption{Three models of Euclidean cylinder.
}\label{l18 fig: three models of Euclidean cylinder}
\end{figure}

The action functional of the classical theory
is
\begin{equation}\label{l18 S_Eucl}
\begin{aligned}
S_\E(\phi)&=\frac{\kappa}{2}\int_{\RR\times S^1} d\tau d\sigma\,((\dd_\tau\phi)^2+(\dd_\sigma \phi)^2)\\
&= 2\kappa \int_{\CC/2\pi i\ZZ} \frac{i}{2}d\zeta\wedge d\ol\zeta\; \dd_\zeta\phi\,\dd_{\ol\zeta}\phi\\
&= 2\kappa \int_{\CC\backslash\{0\}} \frac{i}{2}dz\wedge d\ol{z}\; \dd_z\phi\,\dd_{\ol{z}}\phi
\end{aligned}
\end{equation}
where $\kappa=\frac{1}{4\pi}$, as before (\ref{l16 kappa value}).

The stress-energy tensor written in the complex coordinates $\zeta,\ol\zeta$ or $z,\ol{z}$ reads
\begin{equation}\label{l18 T in cx coors}
\begin{aligned}
T&= \underbrace{\kappa (\dd_\zeta \phi)^2}_{T_{\zeta\zeta}} (d\zeta)^2 + \underbrace{\kappa (\dd_{\ol\zeta}^2 \phi)}_{T_{\ol\zeta\,\ol\zeta}} (d\ol\zeta)^2\\
&=\underbrace{\kappa (\dd_z \phi)^2}_{T_{zz}} (dz)^2 + \underbrace{\kappa (\dd_{\ol{z}}^2 \phi)}_{T_{\ol{z}\,\ol{z}}} (d\ol{z})^2.
\end{aligned}
\end{equation}

The switch from Minkowski cylinder to Euclidean cylinder is achieved via ``Wick rotation'' -- by substituting 
\begin{equation}
t=-i\tau
\end{equation}
in the formulae for the Minkowski cylinder with $\tau>0$ the Euclidean time. In particular, the evolution operator changes as
\begin{equation}
e^{i\wh{H}t} \rightsquigarrow  e^{-\wh{H}\tau}.
\end{equation}
The space of states $\HH$ and the quantum Hamiltonian $\wh{H}$ are the same in Minkowski and in Euclidean setting.\footnote{
If we were to retrace our steps and start from the Euclidean action functional, reinterpret it as Lagrangian mechanics, do the Legendre transform to obtain a Hamiltonian description and then canonically quantize, we would have obtained a different quantum Hamiltonian. This has to do with the fact that the rule of canonical quantization (\ref{l15 canonical commutation relations}) is attuned to the unitary evolution; in Euclidean theory the canonical commutation relations have to be changed accordingly. 
}

The time-dependent (Heisenberg) field operator (\ref{l18 phi time-dep}) in Euclidean setting becomes
\begin{equation}
\begin{aligned}
\wh\phi(\zeta) &= \wh\phi_0-i\wh\pi_0 (\zeta+\ol\zeta)+\sum_{n\neq 0}\frac{i}{n} \left(\wh{a}_n e^{-n\zeta}+\wh{\ol{a}}_ne^{-n\ol\zeta}\right)\\
&= 
\wh\phi_0-i\wh\pi_0 \log(z\ol{z})+\sum_{n\neq 0} \frac{i}{n}\left( \wh{a}_n z^{-n}+\wh{\ol{a}}_n\ol{z}^{-n}\right).
\end{aligned}
\end{equation}

\marginpar{Lecture 19,\\ 10/5/2022}

\subsection{Aside: Wick's lemma (in the operator formalism)} \label{sss: Wick's lemma}
Let 
\begin{equation}\label{l19 Heis}
\mc{A}=\mr{Span}_\CC\left(\{\wh{b}_k,\wh{b}_k^+\}_{k\in I},\mathbb{K} \right)
\end{equation} 
be the Heisenberg Lie algebra spanned by pairs of creation/annihilation operators indexed by some set $I$, and the central element $\mathbb{K}$, subject to commutation relations\footnote{We call the creation/annihilation operators here $\wh{b},\wh{b}^+$ to avoid confusion with the operators $\wh{a},\wh{\ol{a}}$ in the scalar field theory -- which are also creation/annihilation operators, just with a different normalization convention.}
\begin{equation}\label{l19 Heisenberg comm rel}
[\wh{b}_i,\wh{b}^+_j]=\delta_{ij}\mathbb{K},\quad [\wh{b}_i,\wh{b}_j]=0,\quad [\wh{b}_i^+,\wh{b}_j^+]=0.
\end{equation}

\begin{remark}
More abstractly, we think of a symplectic vector space $(V,\omega)$ equipped with a compatible complex structure $J\colon V\ra V$, $J^2=-\mr{id}$, with $g(x,y)=\omega(x,Jy)$ a positive-definite bilinear form. (Put another way, $(V,\omega,J,g)$ is a K\"ahler vector space.) Then one has a splitting $\CC\otimes V= U\oplus \ol{U}$ of the complexified space $V$ into the $\pm i$-eigenspaces of  $J$. Then the Lie algebra (\ref{l19 Heis}) is the Heisenberg Lie algebra of $(V,\omega)$ in the sense of Definition \ref{l15 def: Heis}, where we have chosen some basis $\{b_i\}$ in $\ol{U}$ and the dual basis $\{b_i^+\}$ in $U$, which corresponds to creation/annihilation operators $\{\wh{b}_i,\wh{b}_i^+\}$ in $\mc{A}$.
\end{remark}

Let $\{A_p\}_{p\in Y}$ be a collection of ``preferred'' elements of $\mc{A}$ which are some linear combinations of creation/annihilation operators,
$$A_p=\sum_{i\in I}c_{pi} \wh{b}_i+d_{pi} \wh{b}^+_i,$$
with $c_{pi}, d_{pi}$ complex coefficients. The indexing set $Y$ for the collection $\{A_p\}$ is arbitrary; it has no a priori relation to the set $I$ indexing the basis in $\mc{A}$.

We define the normal ordering $:\cdots:$ of an element of the free associative algebra generated by $\{\wh{a}_k,\wh{a}_k^+\}_{k\in I}$ as a $\CC$-linear operation which reorders each word, putting the annihiliation operators $\wh{a}_k$ to the right of the word and creation operators $\wh{a}^+_k$ to the left of the word, and then projects the reordered word to the Weyl algebra of $\mc{A}$,\footnote{
Cf. Definition \ref{l15 def Weyl}. Unlike the setup of Section \ref{sss Weyl and Heisenberg}, here we are not thinking about $\hbar\ra 0$ asymptotics (we are in purely quantum theory where we set $\hbar=1$), so we don't consider coefficients in formal power series in $\hbar$.
}
\begin{equation}
\mr{Weyl}(\mc{A})=U\mc{A}/(\mathbb{K}=1).
\end{equation}

For any pair $p,q$ one has the equality 
\begin{equation}\label{l19 Wick propagator}
A_p A_q - :A_p A_q: = g_{pq}
\end{equation}
in the Weyl algebra, with $g_{pq}\in \CC$ some complex numbers; we will (suggestively) refer to the matrix $(g_{pq})_{p,q\in Y}$ as the ``propagator.'' 

The reason for equality (\ref{l19 Wick propagator}), with a multiple of identity on the right, is that it is clearly true if both $A_p$ and $A_q$ are creation or annihilation operators, due to the commutation relations (\ref{l19 Heisenberg comm rel}); by linearity this property extends to $A_p,A_q$ any linear combinations of creation/annihilation operators.

\begin{remark}\label{l19 commutativity under :...:}
Note that the normally ordered products satisfy the symmetry property 
\begin{equation}
:A_{p_1}\cdots A_{p_n}: = :A_{p_{\sigma(1)}}\cdots A_{p_{\sigma(n)}}:
\end{equation}
for $\sigma$ any permutation of the set $\{1,\ldots,n\}$. This property is obvious for $A_p$'s being just creation/annihilation operators, then one extends to general $A_p$'s by $\CC$-linearity.
\end{remark}

The following is a very useful combinatorial statement allowing one to express any element of the Weyl algebra (or the subalgebra generated by the elements $\{A_p\}_{p\in Y}$) in terms of normally ordered elements.
\begin{lemma}[Wick]\label{l19 lemma: Wick}
For $n>0$ and any sequence $p_1,\ldots, p_n\in Y$, one has the following equality in the Weyl algebra:
\begin{multline}\label{l19 Wick formula}
A_{p_1}A_{p_2}\cdots A_{p_n} =\\
= \sum_{
\begin{array}{c}
\{\alpha_1,\beta_1\}\sqcup \cdots \sqcup \{\alpha_s,\beta_s\} \subset \{1,\ldots,n\} \\
\mr{a\;matching\; on\;}\{1,\ldots,n\}
\end{array}
} g_{p_{\alpha_1}p_{\beta_1}}\cdots g_{p_{\alpha_s}p_{\beta_s}} \cdot : \prod_{i\in \{1,\ldots,n\}\backslash \cup_k \{\alpha_k,\beta_k\} } A_{p_i} :.
\end{multline}
The sum here goes over matchings on the set $\{1,\ldots,n\}$ -- collections of non-overlapping 2-element subsets considered up to permutation.
\end{lemma}

\textbf{Examples:}
\begin{itemize}
\item For $n=2$, there are two matchings on the set $\{1,2\}$: $\{1,2\}$ and $\wick{\{\c1{1},\c1{2}\}}$. We indicate by the bracket the matched elements, so in the first case, the set is completely unmatched, $s=0$. In the second case, both elements are matched, $s=1$. So, (\ref{l19 Wick formula}) yields
\begin{equation}
A_a A_b =g_{ab}+ :A_a A_b:
\end{equation}
(we are calling the indices $a,b$ instead of $p_1,p_2$ for convenience). In fact, this formula is just (\ref{l19 Wick propagator}).
\item For $n=3$, the possible matchings are $\wick{\{\c1{1},\c1{2},3\}}$, 
$\wick{\{\c1{1},{2},\c1{3}\}}$, $\wick{\{{1},\c1{2},\c1{3}\}}$,
$\{1,2,3\}$, thus the Wick's formula gives
\begin{equation}\label{l19 Wick n=3 ex}
A_a A_b A_c = g_{ab} A_c + g_{ac}A_b+g_{bc}A_a + :A_a A_b A_c:.
\end{equation}
Note that $:A_p:=A_p$ for any $p\in Y$, so we don't have to write the normal ordering symbol for linear expressions in $A_p$'s.
\item For $n=4$, we have the following possible matchings:
\begin{equation}
\begin{gathered}
\wick{\{\c1{1},\c1{2},\c2{3},\c2{4}\}},\;
\wick{\{\c1{1},\c2{2},\c1{3},\c2{4}\}},\;
\wick{\{\c1{1},\c2{2},\c2{3},\c1{4}\}},\\
\wick{\{\c1{1},\c1{2},{3},{4}\}},\;
\wick{\{\c1{1},{2},\c1{3},{4}\}},\;
\wick{\{\c1{1},{2},{3},\c1{4}\}},\;
\wick{\{{1},\c1{2},\c1{3},{4}\}},\;
\wick{\{{1},\c1{2},{3},\c1{4}\}},\;
\wick{\{{1},{2},\c1{3},\c1{4}\}},\\
\{1,2,3,4\}.
\end{gathered}
\end{equation}
In the first row here we have three \emph{perfect} matchings (i.e. all of the set is matched). Wick's formula in this case gives
\begin{equation}
\begin{gathered}
A_aA_bA_cA_d = \\
=
g_{ab} g_{cd}+g_{ac}g_{bd} + g_{ad}g_{bc}+\\
+g_{ab} :A_c A_d:+ g_{ac} :A_b A_d:+ g_{ad} :A_b A_c: +g_{bc} :A_a A_d:+ g_{bd}:A_a A_c:+ g_{cd}:A_aA_b:+\\
+:A_a A_b A_c A_d:.
\end{gathered}
\end{equation}
\end{itemize}

Wick's lemma is proven by considering $A_1\cdots A_n$ to be a word comprised of only the creation and annihilation operators -- in which case it is proven directly, by induction in $n$. Then the statement is extended to any $A_p$'s by $\CC$-linearity.

\subsection{Propagator for the free massless scalar field on $\CC^*$}
Going back to the free 2d massless scalar field on Euclidean cylinder (which we can parameterize by the complex coordinate $z\in \CC^*$), we are in the setting of Wick's lemma: we have the Weyl algebra generated by creation/annihilation operators $\{\wh{a}_n,\wh{\ol{a}}_n\}_{n\neq 0}\cup\{\wh\phi_0,\wh\pi_0\}$ (we are thinking of $\wh\pi_0$ as annihilation operator and of $\wh\phi_0$ as creation operator w.r.t. the normal ordering) and a family of preferred linear elements 
\begin{equation}\label{l19 phi}
\wh{\phi}(z) = \wh\phi_0-i\wh\pi_0\log(z\ol{z})+\sum_{n\neq 0} \frac{i}{n}\left(\wh{a}_n z^{-n}+\wh{\ol{a}}_n\, \ol{z}^{-n}\right)
\end{equation}
parametetrized by points $z\in \CC^*$. I.e., in the notations of Section \ref{sss: Wick's lemma}, we have $I=\ZZ$ (the indexing set for the basis of creation/annihilation operators) and $Y=\CC^*$ (the indexing set for preferred linear combinations).

\begin{lemma}\label{l19 lemma: boson propagator}
Assume $z,w\in \CC^*$ two points satisfying $|z|\geq|w|$, $z\neq w$. Then one has
\begin{equation}\label{l19 boson propagator}
\wh{\phi}(z)\wh\phi(w)-:\wh{\phi}(z)\wh\phi(w): = -2\log|z-w|.
\end{equation}
\end{lemma}
The right hand side of (\ref{l19 boson propagator})  is the propagator in the sense of (\ref{l19 Wick propagator}).
\begin{proof}
We compute
\begin{multline}
\wh{\phi}(z)\wh\phi(w)-:\wh{\phi}(z)\wh\phi(w):=\\
\hspace{-1cm}
=
\sum_{n,m\neq 0} \frac{i}{n}\cdot \frac{i}{m}\Big( \underbrace{(\wh{a}_n z^{-n}+\wh{\ol{a}}_n\, \ol{z}^{-n}) (\wh{a}_m w^{-m}+\wh{\ol{a}}_m\, \ol{w}^{-m})-
:(\wh{a}_n z^{-n}+\wh{\ol{a}}_n\, \ol{z}^{-n}) (\wh{a}_m w^{-m}+\wh{\ol{a}}_m\, \ol{w}^{-m}):}_I
 \Big)+\\
 +\Big((\wh\phi_0-i\wh\pi_0\log(z\ol{z}))(\wh\phi_0-i\wh\pi_0\log(w\ol{w})) - 
 :(\wh\phi_0-i\wh\pi_0\log(z\ol{z}))(\wh\phi_0-i\wh\pi_0\log(w\ol{w})) : \Big).
\end{multline}
We note that the expression $I$ vanishes if $n\neq m$, since in that case the elements $\wh{a}_n z^{-n}+\wh{\ol{a}}_n\, \ol{z}^{-n}$ and $\wh{a}_m w^{-m}+\wh{\ol{a}}_m\, \ol{w}^{-m}$ commute. Also, $I$ vanishes if $m>0$, because then product $(\wh{a}_n z^{-n}+\wh{\ol{a}}_n\, \ol{z}^{-n}) (\wh{a}_m w^{-m}+\wh{\ol{a}}_m\, \ol{w}^{-m})$ is already normally ordered. That leaves only the terms with $n=-m>0$. So, continuing the computation, we have
\begin{multline}\label{l19 boson propagator computation}
\wh{\phi}(z)\wh\phi(w)-:\wh{\phi}(z)\wh\phi(w):=\\
= \sum_{n>0} \frac{1}{n^2} \Big(\underbrace{[\wh{a}_n,\wh{a}_{-n}]}_nz^{-n}w^n + \underbrace{[\wh{\ol{a}}_n,\wh{\ol{a}}_{-n}]}_n\ol{z}^{-n}\ol{w}^n\Big) - i\underbrace{[\wh\pi_0,\wh\phi_0]}_{-i}\log(z\ol{z}) \\
=\sum_{n>0}\frac{1}{n}\left(\left(\frac{w}{z}\right)^n+\left(\frac{\ol{w}}{\ol{z}}\right)^n\right)-\log(z\ol{z})=
-\log\left(1-\frac{w}{z}\right)-\log\left(1-\frac{\ol{w}}{\ol{z}}\right)-\log(z\ol{z})\\=-2\log|z-w|.
\end{multline}
\end{proof}

Note that the propagator (\ref{l19 boson propagator}) extends to a function on the configuration space of two points $z,w$ on $\CC$ (allowing the point $0$) and this extension is invariant under translations on $\CC$, $(z,w)\mapsto (z+a,w+a)$.

Note also that the convergence behavior of the sum over $n$ in the computation (\ref{l19 boson propagator computation}) is as follows:
\begin{itemize}
\item it converges absolutely if $|z|>|w|$,
\item converges conditionaly if $|z|=|w|$ and $z\neq w$,
\item diverges if $|z|<|w|$ or if $z=w$.
\end{itemize}

\subsection{Correlators on the plane (in the radial quantization formalism)} \label{ss 4.3.4 radial quantization}
One calls the canonical quantization formalism\footnote{We say ``formalism'' where we should really say ``approach to quantization'' or ``method of constructing a quantum field theory out of a classical one.''} for the theory on the cylinder mapped to $\CC^*$ (see Figure \ref{l18 fig: three models of Euclidean cylinder}) the ``radial quantization'' formalism.

We define the \ul{radial ordering} of a product of local operators (observables) on $\CC^*$ inserted at $n$ \emph{distinct} points $z_1,\ldots,z_n\in \CC^*$ as follows:
\begin{equation}\label{l19 radial ordering}
\mc{R}\left( \wh{O}_1(z_1) \cdots  \wh{O}_n(z_n)  \right)\colon = \wh{O}_{\sigma(1)}(z_{\sigma(1)})\cdots \wh{O}_{\sigma(n)}(z_{\sigma(n)}) ,
\end{equation}
where $\sigma\in S_n$ is a permutation of indices such that $|z_{\sigma(1)}|\geq \cdots \geq |z_{\sigma(n)}| $. 

Examples of local operators $\wh{O}_k(z)$ are:
\begin{itemize}
\item The field operator $\wh\phi(z)$.
\item Any derivative of the field operator $\dd_{z}^r \dd_{\ol{z}}^s \wh\phi(z)$, with $r,s\geq 0$.
\item Any normally ordered differential polynomial in $\wh\phi(z)$, e.g., $:\dd_z \wh\phi(z)\, \dd_{\ol{z}}^2\wh\phi(z):$.
\end{itemize}

\begin{remark}
Local operators at the same radius commute:
\begin{equation}\label{l19 same-radius commutation}
[\wh{O}_1(z),\wh{O}_2(w)]=0\qquad\mr{if}\;\; |z|=|w|,\; z\neq w.
\end{equation}
This can be seen as the spacial locality property. In the example of free scalar field, for local operators as in the list above, (\ref{l19 same-radius commutation}) is a consequence of (\ref{l19 boson propagator}). This remark shows that the possible ambiguity of radial ordering arising when several of $z_i$'s have the same absolute value does not affect the right hand side of (\ref{l19 radial ordering}).
\end{remark}

\begin{example}
If $z_1$, $z_2$, $z_3$ are three points on $\CC^*$ with absolute values satisfying $|z_2|>|z_3|>|z_1|$ and $\wh{O}_{1,2,3}$ are some local operators, then one has
\begin{equation}
\mc{R} \Big(\wh{O}_1(z_1)\wh{O}_2(z_2) \wh{O}_3(z_3)\Big) = \wh{O}_2(z_2)\wh{O}_3(z_3) \wh{O}_1(z_1).
\end{equation}
In particular, one can consider the vacuum expectation value of this expression
\begin{equation}
\langle\vac|\mc{R} \Big(\wh{O}_1(z_1)\wh{O}_2(z_2) \wh{O}_3(z_3)\Big)|\vac\rangle = \langle\vac|\wh{O}_2(z_2)\wh{O}_3(z_3) \wh{O}_1(z_1)|\vac\rangle.
\end{equation}
Only with this ordering in the right-hand side this is guaranteed to be a well-defined expression.
\begin{figure}[H]
\begin{center}
\includegraphics[scale=0.7]{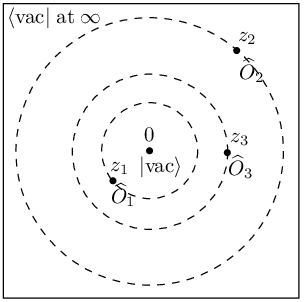}
\end{center}
\caption{Radial ordering.
}
\end{figure}
\end{example}

\begin{remark}
One can see the necessity of radial ordering (for convergence of a product of local operators $\wh{O}_1(z_1)\cdots \wh{O}_k(z_k)$ -- more precisely, for the matrix element $\langle\vac|\cdots |\vac\rangle$ of such a product to exist) by converting back from Heisenberg to Schr\"odinger pricture. Then 
the operators $\wh{O}_k^{\tiny \mbox{Schr\"odinger}}$ are joined by the evolution operators  $U(\log\frac{|z_{k}|}{|z_{k+1}|})$
and only for a positive Euclidean time $\tau$ the evolution operator $U(\tau)=e^{-\tau\wh{H}}$ is well-defined. More precisely: for $\tau>0$, $U(\tau)$ is a smoothing operator.

The other way to see that radial ordering is necessary for convergence is to apply Wick's lemma to the product of local operators. Then we will have a computation similar to (\ref{l19 boson propagator computation}) where the infinite sum will converge if and only if the operators are radially ordered.

A related comment is that the vector $\prod_{i=1}^{n} \wh{O}_{i}(z_i)|\vac\rangle$ (assuming that it exists) is certain to be in the domain of a local operator $\wh{O}(z)$ if and only if $|z_i|\leq |z|$ and $z\neq z_i$ for $i=1,\ldots,n$. Using this argument inductively in $n$, one arrives to the necessity of radial ordering.
\end{remark}

\begin{definition}
In operator formalism, we will understand the correlator  of several local operators (point observables) $\wh{O}_1,\ldots,\wh{O}_n$ inserted at pairwise distinct points $z_1,\ldots,z_n\in \CC^*$ as the expression\footnote{
In the left hand side we think of $O_k$'s as elements of the abstract vector space $V$ of point observables in the sense of Section \ref{ss: CFT as a system of correlators},
placed at points $z_1,\ldots,z_n$. In the r.h.s. these abstract elements are represented by operators acting on the space of states -- we denote these operators by hats. 

Also, in the path integral formalism, one can think of the l.h.s. 
as a product of classical observavles (functions of jets of classical fields at a point) averaged over the space of classical fields, cf. (\ref{l3 PI corr on Sigma closed}), (\ref{l4 corr path integral}).

}
\begin{equation}
\langle O_1(z_1)\cdots O_n(z_n) \rangle\colon =\langle \vac| \mc{R}\left(  \wh{O}_1(z_1)\cdots  \wh{O}_n(z_n)  \right)   |\vac\rangle.
\end{equation}

\end{definition}

\begin{example}
Lemma \ref{l19 lemma: boson propagator} implies 
\begin{equation}\label{l19 R phi phi}
\mc{R}(\wh\phi(z) \wh\phi(w))= :\wh\phi(z) \wh\phi(w): -2\log|z-w|
\end{equation}
for any $z\neq w \in \CC^*$. Note that the normally ordered expression in the r.h.s. does is invariant under swapping $z$ and $w$ (cf. Remark \ref{l19 commutativity under :...:}).
\end{example}

\begin{example}{Two-point correlator of $\wh\phi$.}\label{l19 ex <phi phi>}
From (\ref{l19 R phi phi}) we find
\begin{equation}\label{l19 <phi phi> radial quantization}
\langle \phi(z)\phi(w) \rangle \colon= \langle \vac| \mc{R}\Big(\wh\phi(z) \wh\phi(w)\Big) |\vac\rangle = -2\log|z-w|+C
\end{equation}
where \marginpar{Lecture 20,\\ 10/7/2022}
\begin{equation}\label{l19 C}
C=\langle \vac|  :\wh\phi(z) \wh\phi(w): |\vac\rangle = \langle \vac| \wh\phi_0^2 |\vac\rangle
\end{equation}
Here we expand $ :\wh\phi(z) \wh\phi(w):$ using (\ref{l19 phi}). All terms in the expansion (except the term $\wh\phi_0^2$) contain $\wh{a}_{\geq 0}$ or $\wh{\ol{a}}_{\geq 0}$ on the right which yields zero when acting on $|\vac\rangle$, and/or contain $\wh{a}_{<0}$, $\wh{\ol{a}}_{<0}$ on the left, which vanishes when paired with $\langle\vac|$.

Note that (\ref{l19 C}) is an ill-defined expression formally independent of $z,w$ -- an ``infinite constant.'' This can be seen by examining the Schr\"odinger representation for the free particle (the zero-mode) where $|\vac\rangle = |\pi_0\rangle$ is represented by the Dirac delta-distribution $\delta(\pi_0)$ and $\wh\phi_0=i\frac{\dd}{\dd \pi_0}$. Thus, the expression $\langle \vac| \wh\phi_0^2 |\vac\rangle$ in Schr\"odinger representation reads ``the evaluation of distribution $\delta''(\pi_0)$ at $\pi_0=0$.'' This evaluation does not exist. 

Put differently, $\wh\phi_0$ is an unbounded operator on $\HH$ and the vector $|\vac\rangle$ is not in its domain.
\end{example}

\textbf{Notation.} In this section and onward we will be denoting the holomorphic derivative $\dd_z$ by $\dd$ and the antiholomorphic derivative $\dd_{\bar z}$ by $\bar\dd$. Thus, 
symbols $\dd$ and $\bar\dd$ no longer stand for the holomorphic/antiholomorphic Dolbeault operators $dz\,\dd_z$, $d\bar{z}\,\dd_{\bar{z}}$.

To summarize, correlators of the field $\phi$ are ill-defined due to the presence of the zero-mode $\wh\phi_0$. However, correlators of the fields $\dd\phi$, $\bar\dd\phi$ are well-defined!

Note that from (\ref{l19 phi}) one has the following nice expansions of the derivatives of the field in terms of creation/annihilation operators:
\begin{equation}\label{l20 dd phi, dbar phi via a abar}
i\dd\wh\phi(z)=\sum_{n\in\ZZ} \wh{a}_n z^{-n-1},\qquad i\bar\dd\wh\phi(z)=\sum_{n\in\ZZ} \wh{\ol{a}}_n \bar{z}^{-n-1}.
\end{equation}

\begin{example} For the two-point correlator of derivatives of the field  we have
\begin{multline}\label{l20 <dd phi  dd phi>}
\langle \dd\phi(z) \dd\phi(w) \rangle\colon =  \langle \vac| \mc{R}\Big(  \dd\wh\phi(z) \dd\wh\phi(w)\Big) |\vac\rangle =
\langle\vac|\dd_z\dd_w\underbrace{\mc{R}\Big(\wh\phi(z)\wh\phi(w)\Big)}_{-2\log|z-w|+:\wh\phi(z)\wh\phi(w):}   |\vac\rangle =
\\
= \langle \vac| 
\underbrace{-\frac{1}{(z-w)^2}}_{\dd_z \dd_w(-2\log|z-w|)}+:\dd\wh\phi(z)\dd\wh\phi(w):|\vac\rangle = -\frac{1}{(z-w)^2}.
\end{multline}
Here $z\neq w$ are any two distinct points in $\CC\backslash\{0\}$. 
We used the fact that ${:\dd\wh\phi(z)\dd\wh\phi(w):}$, when expanded using (\ref{l20 dd phi, dbar phi via a abar}), has only terms with $\wh{a}_{\geq 0}$ or $\wh{\ol{a}}_{\geq 0}$ on the right, and/or with $\wh{a}_{\leq 0}$, $\wh{\ol{a}}_{\leq 0}$ on the left. Hence the vacuum expectation value ${\langle\vac|:\dd\wh\phi(z)\dd\wh\phi(w):|\vac\rangle}$ is zero.

By similar reasoning one has
\begin{equation}
\langle \bar\dd\phi(z) \bar\dd\phi(w) \rangle =-\frac{1}{(\bar{z}-\bar{w})^2}
\end{equation}
and
\begin{equation}\label{l20 <dd phi dbar phi>}
\langle \dd\phi(z) \bar\dd\phi(w) \rangle =0.
\end{equation}
We stress again that points $z$ and $w$ are assumed be distinct.\footnote{It is possible to sense of the correlator (\ref{l20 <dd phi dbar phi>}) as a distibution on $\CC\times \CC$ rather than as a function on the open configuration space $C_2(\CC\backslash\{0\})$. Then the correlator becomes  
$\langle \dd\phi(z) \bar\dd\phi(w) \rangle={\pi\delta(z-w)}$ -- up to normalization, the Dirac delta-distribution supported on the diagonal $\mr{Diag}\subset \CC\times \CC$. This delta-distribution is an example of so-called ``contact term.''
}
\end{example}

One can proceed to compute several-point correlators of observables $\dd\phi$, $\bar\dd\phi$ using Wick's lemma.
\begin{example}
For the four-point correlator, one finds
\begin{multline}\label{l20 4-point computation}
\langle \dd\phi(z_1)\dd\phi(z_2)\dd\phi(z_3)\dd\phi(z_4) \rangle\colon = 
\langle \vac| \mc{R}\Big( 
 \dd\wh\phi(z_1)\dd\wh\phi(z_2)\dd\wh\phi(z_3)\dd\wh\phi(z_4)
\Big) |\vac\rangle =\\
\langle \vac| \Big(
\wick{\c1{\dd\wh\phi(z_1)}\c1{\dd\wh\phi(z_2)}\c2{\dd\wh\phi(z_3)}\c2{\dd\wh\phi(z_4)} }
+\wick{\c1{\dd\wh\phi(z_1)}\c2{\dd\wh\phi(z_2)}\c1{\dd\wh\phi(z_3)}\c2{\dd\wh\phi(z_4)} }+
\wick{\c1{\dd\wh\phi(z_1)}\c2{\dd\wh\phi(z_2)}\c2{\dd\wh\phi(z_3)}\c1{\dd\wh\phi(z_4)} } +\\
+\wick{\c1{\dd\wh\phi(z_1)}\c1{\dd\wh\phi(z_2)}\cancel{:{\dd\wh\phi(z_3)}{\dd\wh\phi(z_4)}:}  }+\mbox{5 similar terms}+
\cancel{: \dd\wh\phi(z_1)\dd\wh\phi(z_2)\dd\wh\phi(z_3)\dd\wh\phi(z_4):} \Big)
|\vac\rangle\\
=\frac{1}{z_{12}^2z_{34}^2}+\frac{1}{z_{13}^2z_{24}^2}+\frac{1}{z_{14}^2z_{23}^2}.
\end{multline}
Here we denoted $z_{ij}\colon=z_i-z_j$. Note that in this computation only the three terms where all four operators are matched contribute.

By a similar computation one finds
\begin{equation}
\langle \dd\phi(z_1) \dd\phi(z_2)\bar\dd\phi(z_3)\bar\dd\phi(z_4) \rangle= 
\wick{\langle \c1{\dd\phi(z_1)} \c1{\dd\phi(z_2)}\c2{\bar\dd\phi(z_3)}\c2{\bar\dd\phi(z_4)} \rangle} = \frac{1}{z_{12}^2 \bar{z}_{34}^2}
\end{equation}
-- only a single matching contributes.

More generally, by the same logic, 
the correlator 
\begin{equation}\label{l20 general corr of dd phi}
\langle \dd\phi(z_1)\cdots \dd\phi(z_n) \bar\dd \phi(w_1)\cdots \bar\dd\phi(w_m) \rangle
\end{equation}
(all points $z_1,\ldots,z_n,w_1,\ldots,w_m$ are assumed to be distinct)
vanishes unless both $n$ and $m$ are even, $n=2\nu$, $m=2\mu$. If they are even, the correlator is given by a sum over pairs (perfect matching of  $z_i$'s, perfect matching of $w_j$'s) -- thus, in total there are $(2\nu-1)!!\cdot (2\mu-1)!!$ terms.  
E.g. in the case $m=0$, one obtains a meromorphic function on $\CC^n$ with second-order poles on principal diagonals. For instance,
\begin{equation}
\langle \phi(z_1)\cdots \phi(z_6) \rangle = \frac{-1}{z_{12}^2z_{34}^2 z_{56}^2}+ \mbox{14 similar terms},
\end{equation}
since one has $5!!=5\cdot 3\cdot 1$ perfect matchings on the set of $6$ elements.

Examining the terms contributing to the correlator (\ref{l20 general corr of dd phi}) for general $m,n$, we can notice that it factorizes into a meromorphic part and an antimeromorphic part:
\begin{equation}
\langle \dd\phi(z_1)\cdots \dd\phi(z_n) \bar\dd \phi(w_1)\cdots \bar\dd\phi(w_m) \rangle=
\langle \dd\phi(z_1)\cdots \dd\phi(z_n) \rangle \cdot
\langle \bar\dd \phi(w_1)\cdots \bar\dd\phi(w_m) \rangle
\end{equation}
\end{example}

\begin{example} We have
\begin{equation}
\langle \dd\bar\dd\phi(z) \dd\phi(w) \rangle = \frac{\dd}{\dd \bar z}\underbrace{ \langle \dd\phi(z) \dd\phi(w) \rangle }_{-\frac{1}{(z-w)^2}} = 0.
\end{equation}
\end{example}

For the next example we need a slightly enhanced version of Wick's lemma, rearranging a product of normally-ordered words in terms of fully normally-ordered expressions.
\begin{lemma} In the notations of Lemma \ref{l19 lemma: Wick}, for $n>0$, let $p_1,\ldots,p_n\in Y$ and  let 
\begin{equation}\label{l20 enh Wick partitioning}
\{1,\ldots, n\}=S_1\sqcup \cdots \sqcup S_m
\end{equation} 
be a  partitioning of the set $\{1,\ldots,n\}$ into nonempty disjoint subsets  $S_j$.
Then one has the following equality in the Weyl algebra:
\begin{multline}\label{l20 Wick enhanced}
:\prod_{p\in S_1} A_{p}: \cdots :\prod_{p\in S_m} A_{p}: =\\
=\sum_{
\begin{array}{c}
\{\alpha_1,\beta_1\}\sqcup\cdots\sqcup \{\alpha_s,\beta_s\}\subset \{1,\ldots,n\}\\
\mr{a\;matching\;on\;}\{1,\ldots,n\}\;\mr{s.t.} \\
\{\alpha_i,\beta_i\} \not\subset S_j\;\forall i,j
\end{array}
}
\prod_{i=1}^sg_{p_{\alpha_i}p_{\beta_i}} \cdot : \prod_{i\in\{1,\ldots,n\}\backslash \cup_k \{\alpha_k,\beta_k\}}A_{p_i} :.
\end{multline} 
\end{lemma}
In other words, the right hand side is the sum over matchings, as in (\ref{l19 Wick formula}), except that now elements of each subset of $S_j$ of labels corresponding to one of the normally-ordered words in the l.h.s. are not allowed to be matched.

\textbf{Example: }
\begin{equation}
:A_a A_b: A_c = g_{ac} A_b+g_{bc}A_a + :A_aA_bA_c:. 
\end{equation}
Here the partitioning (\ref{l20 enh Wick partitioning}) is $\{1,2,3\}=\{1,2\}\sqcup \{3\}$ and the labels are $p_1=a$, $p_2=b$, $p_3=c$.
Notice that in comparison with (\ref{l19 Wick n=3 ex}), the term $g_{ab}A_c$ corresponds to a prohibited contraction $\wick{\{\c1{1},\c1{2},3\}}$ and doesn't appear in the r.h.s.

\begin{example}
Consider the correlator
\begin{multline}\label{l20 4-point corr with :...:}
\langle \dd\phi(z_1)\big( :\dd\phi(z_2)\dd\phi(z_3):\big)\dd\phi(z_4) \rangle\colon=
\langle \vac | 
\mc{R}\Big(
\dd\wh\phi(z_1)\big( :\dd\wh\phi(z_2)\dd\wh\phi(z_3):\big)\dd\wh\phi(z_4)
\Big)
|\vac\rangle=\\
=\langle \vac | 
\wick{\c1{\dd\wh\phi(z_1)}\big( :\c1{\dd\wh\phi(z_2)}\c2{\dd\wh\phi(z_3)}:\big)\c2{\dd\wh\phi(z_4)}}
+
\wick{\c1{\dd\wh\phi(z_1)}\big( :\c2{\dd\wh\phi(z_2)}\c1{\dd\wh\phi(z_3)}:\big)\c2{\dd\wh\phi(z_4)}}
|\vac\rangle\\
=\frac{1}{z_{12}^2z_{34}^2}+\frac{1}{z_{13}^2z_{24}^2}.
\end{multline}
Here the 2-point observable $ :\dd\wh\phi(z_2)\dd\wh\phi(z_3):$ on the l.h.s. is a formal symbol defined by its correlators with other local fields, like in this example, where this observable is replaced in the operator language by a normally-ordered product of derivatives of field operators.
Notice in comparison with (\ref{l20 4-point computation}) the absence of the term $\frac{1}{z_{23}^2z_{14}^2}$ corresponding to a prohibited matching. In particular, (\ref{l20 4-point corr with :...:}) is a regular (in fact, holomorphic) function on the diagonal $z_2\ra z_3$ in $\CC^4$.
\end{example}

This example illustrates that one can define a new point observable 
\begin{equation}
:\dd\phi(z)\dd\phi(z):\;\;\colon= \lim_{w\ra z} :\dd\phi(w) \dd\phi(z): = \lim_{w\ra z} \left(\dd\phi(w)\dd\phi(z)+\frac{1}{(w-z)^2}\right).
\end{equation}
One sometimes calls point observables of this type -- constructed as normally-ordered  differential polynomials in the field -- ``composite fields.''
This definition is understood as an equality under a correlator with an arbitrary collection of other local observables (``test observables'') inserted at points  $\neq z$.
In the operator language, we should replace $\phi\ra\wh\phi$ everywhere.

This new observable has well-defined correlators. E.g., taking the limit $z_2\ra z_3$ in (\ref{l20 4-point corr with :...:}), we obtain
\begin{equation}
\langle \dd\phi(z_1)\big(:\dd\phi(z_3)\dd\phi(z_3):\big)\dd\phi(z_4) \rangle = \frac{2}{z_{13}^2 z_{34}^2}.
\end{equation}

\begin{definition}
We can define the (quantum) holomorphic/antiholomorphic 
 stress-energy tensor in the massless scalar field theory as the composite fields
\begin{equation}\label{l20 T components}
T_{zz}\colon= -\frac12 :\dd\phi(z)\dd\phi(z):\quad,\quad
T_{\bar{z}\bar{z}}\colon= - \frac12 :\bar\dd\phi(z)\bar\dd\phi(z):.
\end{equation}
\end{definition}
Note that the conventional normalization factor in (\ref{l20 T components})  is  different than what we had in the classical theory (\ref{l18 T in cx coors}). It is useful to also consider a local observable 
\begin{equation}\label{l20 T^total}
T^\mr{total}(z)=T_{zz}(dz)^2+T_{\bar{z}\bar{z}}(d\bar{z})^2
\end{equation}
valued in quadratic differentials -- the total (quantum) stress-energy tensor. For instance, its correlator with, e.g. a collection of fields $\dd\phi(z_i)$ will be a section of the pullback of the bundle of quadratic differentials $K^{\otimes 2}\oplus \ol{K}^{\otimes 2}\ra \Sigma$ to the space of configurations of points $(z,z_1,\ldots,z_n)\in\Sigma$, with $K=(T^{1,0})^*\Sigma$ the canonical line bundle. Here $\Sigma=\CC\backslash\{0\}$. 

\textbf{Notation.} 
From now on we will denote $T_{zz}$ by $T$ and $T_{\bar{z}\bar{z}}$ by $\bar{T}$. This is the standard convention in the literature on CFT.

\marginpar{Lecture 21,\\ 10/10/2022
}

\section{Operator product expansions}
Recall from Section \ref{sss: OPEs intro} that the operator product expansions (OPEs) express the product of two local observables at points $z,w$ as a linear combination (with singular coefficients) of single local observables at $w$, in the asymptotics $z\ra w$. These expressions are to be substituted in a correlator with an arbitrary collection of ``test'' local observables at points $z_1,\ldots,z_n \neq z,w$ and control the asymptotics of the correlator as $z\ra w$.

\begin{example}
From Wick's lemma we have the equality
\begin{equation}\label{l21 eq1}
\mc{R}\, \dd\wh\phi(z) \dd\wh\phi(w) = -\frac{1}{(z-w)^2}\wh{\mathbb{1}} + :\dd\wh\phi(z) \dd\wh\phi(w):
\end{equation}
for any $z\neq w\in\CC\backslash \{0\}$, as equality of linear operators on $\HH$. Here for the moment we make the identity operator $\wh{\mathbb{1}}$ explicit in the notations. Note that the second term is regular\footnote{Generally, ``regular'' for us in the context of OPEs means just ``continuous.''} (in fact, holomorphic) as $z\ra w$. Thus, for any collection of point observables $O_1,\ldots,O_n$ 
at points $z_1,\ldots,z_n$ (distinct among themselves and distinct from $w$), one has 
\begin{multline}
\langle \dd\phi(z)\dd\phi(w) O_1(z_1)\cdots O_n(z_n) \rangle = \langle \vac| \mc{R} \Big(  \dd\wh\phi(z)\dd\wh\phi(w) \wh{O}_1(z_1)\cdots \wh{O}_n(z_n) \Big)  |\vac\rangle \underset{z\ra w}{\sim} \\
\underset{z\ra w}{\sim}  
-\frac{1}{(z-w)^2} \langle \vac|\mc{R}\Big( \wh{\mathbb{1}}(w)\wh{O}_1(z_1)\cdots \wh{O}_n(z_n) \Big)| \vac\rangle + \mr{reg.}\\
=
-\frac{1}{(z-w)^2} \langle \mathbb{1}(w) {O}_1(z_1)\cdots {O}_n(z_n) \rangle + \mr{reg.}
\end{multline}
-- this is an asymptotic expression for the correlator as $z\ra w$ giving the principal part of its Laurent expansion in $z-w$; $\mr{reg.}$ stands for a term with regular behavior as $z\ra r$. The identity operator operator $\wh{\mathbb{1}}$ and identity field $\mathbb{1}$ do not affect the correlators in the r.h.s.

Thus, one has the operator product expansion
\begin{equation}\label{l21 del phi del phi OPE}
\dd\phi(z) \dd\phi(w) \sim -\frac{\mathbb{1}}{(z-w)^2}+\mr{reg.}
\end{equation}
The symbol $\sim$ means that one can trade the l.h.s. with the r.h.s. under a correlator with test observables, yielding the asymptotics as $z\ra w$.
\end{example}

\begin{remark}
One can also be more explicit about the regular part: one can write the rightmost term in (\ref{l21 eq1}) as
\begin{equation}
:\dd\wh\phi(z)\dd\wh\phi(w):=\sum_{n\geq 0}\frac{1}{n!}(z-w)^n :\dd^{n+1}\wh\phi(w)\, \dd\wh\phi(w):.
\end{equation}
The refined version of the OPE (\ref{l21 del phi del phi OPE}) is then
\begin{equation}
\dd\phi(z) \dd\phi(w) \sim -\frac{\mathbb{1}}{(z-w)^2}+\underbrace{\sum_{n\geq 0 }\frac{1}{n!}(z-w)^n :\dd^{n+1}\phi(w)\, \dd\phi(w):}_{\mr{reg.}}.
\end{equation}
The r.h.s. is now a linear combination of local composite fields at the point $w$. Under a correlator with test observables, one has
\begin{multline}
\langle \dd\phi(z)\dd\phi(w) O_1(z_1)\cdots O_n(z_n) \rangle \underset{z\ra w}{\sim} \\
\underset{z\ra w}{\sim}
-\frac{1}{(z-w)^2} \langle \mathbb{1}(w) O_1(z_1)\cdots O_n(z_n) \rangle+\sum_{n\geq 0} \frac{1}{n!}(z-w)^n\langle  :\dd^{n+1}\phi(w)\dd\phi(w):\, O_1(z_1)\cdots O_n(z_n)\rangle. 
\end{multline}
The sum on the right converges absolutely if and only if $|z-w|< \min\{|z_i-w|\}_{i=1}^n$ and in this convergence radius is equal to the l.h.s. Thus, the $\sim$ symbol here is actually equality, for $z$ sufficiently close to $w$ (closer than any of the test observables).
\end{remark}

Similarly to (\ref{l21 eq1}) (or (\ref{l21 del phi del phi OPE})), one finds
\begin{equation}
\mc{R} \bar\dd\wh\phi(z) \bar\dd\wh\phi(w)\sim-\frac{\wh{\mathbb{1}}}{(\bar{z}-\bar{w})^2}+\mr{reg.},\qquad \mc{R} \dd\wh\phi(z) \bar\dd\wh\phi(w)\sim \mr{reg.}
\end{equation}
These are again equalities of operators on $\HH$; removing the hats and the radial ordering sign, we have the OPEs in the form similar to (\ref{l21 del phi del phi OPE}) -- in the language of abstract correlators of observables as elements of $V$ (of Section \ref{ss: CFT as a system of correlators}).

\begin{example}
As the next example, consider the OPE between the stress-energy tensor and $\dd\phi$. From Wick's lemma we find
\begin{multline}
\mc{R}\underbrace{\wh{T}(z)}_{:-\frac12 \dd\wh\phi(z)\dd\wh\phi(z):} \dd\wh\phi(w) =\\
=\wick{-\frac12 :\dd\wh\phi(z)\dd\c1{\wh\phi(z)}: \dd\c1{\wh\phi(w)}}+\wick{-\frac12 :\dd\c1{\wh\phi(z)}\dd\wh\phi(z): \dd\c1{\wh\phi(w)}}+\underbrace{:-\frac12 \dd\wh\phi(z)\dd\wh\phi(z)\dd\wh\phi(w):}_{\mr{reg.}}\\
\sim\frac{\dd\wh\phi(z)}{(z-w)^2}+\mr{reg.} 
\end{multline}
This is not quite the desired OPE yet, as the operator in the r.h.s is at $z$ whereas we want to express the operator productin terms of local operators at $w$. This is remedied by expanding $\dd\wh\phi(z)$ in Taylor series centered at $w$: $\dd\wh\phi(z)=\dd\wh\phi(w)+(z-w)\dd^2\wh\phi(w)+O((z-w)^2)$.\footnote{
Here we used the fact that $\dd\wh\phi(z)$ is holomorphic in $z$, see (\ref{l20 dd phi, dbar phi via a abar}), thus, e.g., one does not have a term $(\bar{z}-\bar{w})\dd\bar\dd\wh\phi(w)$ in the Taylor expansion.
} Thus, one has
\begin{equation}\label{l21 T dd phi OPE}
\mc{R} \wh{T}(z) \dd\wh\phi(w)\sim \frac{\dd \wh\phi(w)}{(z-w)^2}+\frac{\dd^2\wh\phi(w)}{z-w}+\mr{reg.}
\end{equation}

Similarly, one obtains
\begin{equation}\label{l21 T dbar phi OPE}
\mc{R} \wh{\ol{T}}(z) \bar\dd\wh\phi(w)\sim \frac{\bar\dd \wh\phi(w)}{(\bar{z}-\bar{w})^2}+\frac{\bar\dd^2\wh\phi(w)}{\bar{z}-\bar{w}}+\mr{reg.},\quad
\mc{R} \wh{T}(z) \bar\dd\wh\phi(w)\sim \mr{reg.},\quad
\mc{R} \wh{\ol{T}}(z) \dd\wh\phi(w)\sim \mr{reg.}
\end{equation}
\end{example}

\begin{example}[$TT$ OPE] \label{l21 example: TT OPE}
 Let us calculate the OPE of the holomorphic component of the stress-energy tensor $T$ with itself:
\begin{multline}\label{l21 TT OPE computation}
\mc{R} \wh{T}(z)\wh{T}(w)= \mc{R} :-\frac12 \dd\wh\phi(z)\dd\wh\phi(z) :\;:-\frac12 \dd\wh\phi(w)\dd\wh\phi(w) : \underset{\mr{Wick}}{=} \\
\underset{\mr{Wick}}{=} 
\frac14 \wick{:\dd\c1{\wh\phi(z)}\dd\c2{\wh\phi(z)} \dd\c1{\wh\phi(w)}\dd\c2{\wh\phi(w)} :} + 
\frac14 \wick{: \dd\c1{\wh\phi(z)}\dd\c2{\wh\phi(z)} \dd\c2{\wh\phi(w)}\dd\c1{\wh\phi(w)} :}+\\
+\frac14 \wick{:\dd\c1{\wh\phi(z)}\dd{\wh\phi(z)} \dd\c1{\wh\phi(w)}\dd{\wh\phi(w)} :}+
\frac14 \wick{:\dd\c1{\wh\phi(z)}\dd{\wh\phi(z)} \dd{\wh\phi(w)}\dd\c1{\wh\phi(w)} :}+\\
+
\frac14 \wick{:\dd{\wh\phi(z)}\c1\dd{\wh\phi(z)} \dd\c1{\wh\phi(w)}\dd{\wh\phi(w)} :}+
\frac14 \wick{:\dd{\wh\phi(z)}\c1\dd{\wh\phi(z)} \dd{\wh\phi(w)}\c1\dd{\wh\phi(w)} :}+
\frac14 :\dd{\wh\phi(z)}\dd{\wh\phi(z)} \dd{\wh\phi(w)}\dd{\wh\phi(w)} :\\
\sim\frac{\frac12\wh{\mathbb{1}}}{(z-w)^4} -\frac{:\dd\wh\phi(z)\dd\wh\phi(w):}{(z-w)^2} +\mr{reg.}\\
\underset{\mr{expand\;}\dd\wh\phi(z)\;\mr{at\;}w}{=}
\frac{\frac12\wh{\mathbb{1}}}{(z-w)^4} -\frac{:\dd\wh\phi(w)\dd\wh\phi(w):}{(z-w)^2}-\frac{:\dd^2\wh\phi(w)\dd\wh\phi(w):}{z-w} +\mr{reg.}
\\= \boxed{\frac{\frac12\wh{\mathbb{1}}}{(z-w)^4} + \frac{2\wh{T}(w)}{(z-w)^2}+\frac{\dd \wh{T}(w)}{z-w} +\mr{reg.}}
\end{multline}
Note the appearance of the fourth order pole here. As we will see later, it is linked to the phenomenon of central charge (and thus to projectivity of CFT as a Segal's functor).

By similar computations, one finds
\begin{equation}
\mc{R}\wh{\ol{T}}(z) \wh{\ol{T}}(w)\sim 
\frac{\frac12\wh{\mathbb{1}}}{(\bar{z}-\bar{w})^4} + \frac{2\wh{\ol{T}}(w)}{(\bar{z}-\bar{w})^2}+\frac{\dd \wh{\ol{T}}(w)}{\bar{z}-\bar{w}} +\mr{reg.},\qquad
\mc{R}\wh{T}(z) \wh{\ol{T}}(w)\sim \mr{reg.}
\end{equation}

\end{example}

\section{Digression: path integral formalism (in the example of free scalar field)}

\subsection{Finite-dimensional Gaussian integral}
Let $F$ be an $N$-dimensional real vector space equipped with Euclidean metric $h$ and with a positive-definite bilinear form $B\colon\mr{Sym}^2F\ra \RR$ and let $\ul{B}\in \mr{End}(F)$ be an endomorphism such that $B(u,v)=h(u,\ul{B} v)$. Then one has the following well-known Gaussian integral
\begin{equation}\label{l21 Gaussian int}
\int_F \mu_h e^{-\frac12 B(u,u)} = (2\pi)^{\frac{N}{2}}(\det \ul{B})^{-\frac12}.
\end{equation}
Here $\mu_h$ is the Lebesgue measure on $F$ associated with the metric $h$ and $B(u,u)$ is the quadratic function on $F$ -- the restriction of $B$ to the diagonal $\mr{Diag}\subset F\times F$.

\subsection{Wick's lemma for the moments of Gaussian measure}
For $f$ a polynomial function on $F$, consider its expectation value (average) with respect to the normalized Gaussian measure,
\begin{equation}\label{l21 expectation value}
\langle  f \rangle \colon = \frac{1}{(2\pi)^{\frac{N}{2}}(\det\ul{B})^{-\frac12}} \int_F \mu_h e^{-\frac12 B(u,u)} f(u).
\end{equation}
Note that the normalization factor in the r.h.s. is chosen such that one has 
\begin{equation}
\langle 1 \rangle =1.
\end{equation}

\begin{lemma}[Wick's lemma for the moments of Gaussian measure]\label{l21 lm: Wick}
Let $\theta_1,\ldots,\theta_n$ be some linear forms on $F$. Consider the Gaussian expectation value
\begin{equation}\label{l21 Gaussian expectation value}
\langle \theta_1 \cdots\theta_n  \rangle
\end{equation}
Then one has
\begin{enumerate}[(i)]
\item If $n$ is odd, the expectation value (\ref{l21 Gaussian expectation value}) is zero.
\item If $n=2m$ is even, one has
\begin{equation}\label{l21 Wick's lemma for Gaussian momenta: formula}
\langle \theta_1 \cdots\theta_n  \rangle = \sum_{
\begin{array}{c}
\mr{perfect\;matchings}\\
\{1,\ldots, n\}=\sqcup_{i=1}^m \{\alpha_i, \beta_i\}
\end{array}
} B^{-1}(\theta_{\alpha_1},\theta_{\beta_1})\cdots  B^{-1}(\theta_{\alpha_m},\theta_{\beta_m}).
\end{equation}
\end{enumerate}
\end{lemma}

For example, for $n=2$, one has
\begin{equation}\label{l21  <theta theta> Wick}
\langle \theta_1 \theta_2 \rangle = B^{-1}(\theta_1,\theta_2).
\end{equation}
Here on the r.h.s., $B^{-1}$ is understood as a map $B^{-1}\colon F^*\otimes F^* \ra \RR$ which is adjoint to the map $F^*\ra F$ -- the inverse of the map $B^\#\colon F\ra F^*$.

For $n=4$, one has 
\begin{equation}
\langle \theta_1\theta_2 \theta_3\theta_4 \rangle = B^{-1}(\theta_1,\theta_2) B^{-1}(\theta_3,\theta_4)+  B^{-1}(\theta_1,\theta_3) B^{-1}(\theta_2,\theta_4) +  B^{-1}(\theta_1,\theta_4) B^{-1}(\theta_2,\theta_3),
\end{equation}
where the terms correspond to the three perfect matchings on the set $\{1,2,3,4\}$.

Note that the r.h.s. of (\ref{l21 Wick's lemma for Gaussian momenta: formula}) looks similar to the r.h.s. of (\ref{l19 Wick formula}) if we were to retain only the contributions of perfect matchings (and identify the propagator $g_{pq}$ with $B^{-1}(\theta_p,\theta_q)$).

\begin{proof}[Sketch of proof of Lemma \ref{l21 lm: Wick}]
First note that part (i) of Lemma is obvious, since in this case the integrand in (\ref{l21 expectation value}) changes sign under $u\ra -u$.

For part (ii),
consider the ``generating functions for moments'' -- the following expectation value depending on the ``source'' parameter $J\in F^*$:
\begin{multline}\label{l21 source integral}
\langle e^{\langle J,u\rangle } \rangle = C \int_F
\mu_h\;e^{-\frac12 B(u,u)+ \langle J,u \rangle }=
C \int_F
\mu_h\;e^{-\frac12 B(u-B^{-1}J,u-B^{-1}J)+ \frac12 B^{-1}(J,J) }=\\
\underset{v\colon=u-B^{-1}J}{=} C\int_F \mu_h e^{-\frac12 B(v,v)+ \frac12 B^{-1}(J,J)} = e^{\frac12 B^{-1}(J,J)}
\end{multline}
where $C=(2\pi)^{-\frac{N}{2}}\det(\ul{B})^{\frac12}$. Then we can obtain correlators of monomials (\ref{l21 Wick's lemma for Gaussian momenta: formula}) by taking multiple partial derivatives of (\ref{l21 source integral}) in $J$ and then setting $J=0$.

More explicitly, consider an orthonormal basis in $F$ w.r.t. the metric $g$ and let $\{u^p\}$ be the corresponding coordinates on $F$. It suffices to prove (\ref{l21 Wick's lemma for Gaussian momenta: formula}) for $\theta_1=u^{p_1},\ldots, \theta_n=u^{p_n}$ a collection of coordinate functions; the general result then follows by linearity. We have
\begin{multline}\label{l21 Wick proof computation}
\langle u^{p_1}\cdots u^{p_n} \rangle = \left.\frac{\dd}{\dd J_{p_1}}\cdots \frac{\dd}{\dd J_{p_n}}\right|_{J=0}\langle e^{\langle J,u\rangle } \rangle =
\left.\frac{\dd}{\dd J_{p_1}}\cdots \frac{\dd}{\dd J_{p_n}}\right|_{J=0} e^{\frac12B^{-1}(J,J)}=\\=
\left.\frac{\dd}{\dd J_{p_1}}\cdots \frac{\dd}{\dd J_{p_n}}\right|_{J=0} \frac{1}{2^m m!} \left(B^{-1}(J,J)\right)^m
\end{multline}
where in the last step we selected the $m$-th term from the Taylor series of the exponential, since only it contributes to the $n=2m$-th derivative in $J$ at $J=0$ (note that in the last expression the restriction to $J=0$ is irrelevant -- the derivative is a constant). At this point we see that the answer is the sum over the ways to distribute the $2m$ derivatives in $J$ over $2m$ copies of $J$ in $\left(B^{-1}(J,J)\right)^m$.
This results in the sum over perfect matchings in the r.h.s. of (\ref{l21 Wick's lemma for Gaussian momenta: formula}).\footnote{Note that the set of perfect matchings on the set of $2m$ elements can be seen as a coset of the symmetric group, $S_{2m}/(S_m\ltimes \ZZ_2^m)$.} E.g., for $m=1$ (i.e. $n=2$) we have
\begin{multline}\label{l21 <uu> Wick}
\langle u^{p_1} u^{p_2} \rangle = \frac12 \frac{\dd}{\dd J_{p_1}}\frac{\dd}{\dd J_{p_2}}(B^{-1})^{pq}J_p J_q=\\
=
\wick[offset=1.6em]{\frac12 \c1{\frac{\dd}{\dd J_{p_1}}}\c2{\frac{\dd}{\dd J_{p_2}}}(B^{-1})^{pq}\c1{J_p} \c2{J_q}}+
\wick[offset=1.6em]{\frac12 \c1{\frac{\dd}{\dd J_{p_1}}}\c2{\frac{\dd}{\dd J_{p_2}}}(B^{-1})^{pq}\c2{J_p} \c1{J_q}}
=
(B^{-1})^{p_1 p_2},
\end{multline}
which is (\ref{l21 Wick's lemma for Gaussian momenta: formula}) specialized to the coordinate monomial $\theta^1\theta^2$ with $\theta_1=u^{p_1}$, $\theta_2=u^{p_2}$.
Here brackets show which derivatives hit which instances of $J$. 
\end{proof}

\marginpar{Lecture 22,\\
10/12/2022}
\subsection{Scalar field theory in the path integral formalism}
\label{sss free scalar in PI formalism}
Let $\Sigma$ be a surface equipped with Riemannian metric $g$.
In the path integral (or more appropriately, ``functional integral'') approach, the partition function of the scalar field on $\Sigma$ is  given by a formal Gaussian integral
\begin{equation}\label{l22 Z}
Z(\Sigma)= ``\int_{\F_\Sigma} \mc{D}\phi\; e^{-\frac{1}{4\pi}S(\phi)}"
\end{equation}
over the (infinite-dimesnional) space of functions $\F_\Sigma=C^\infty(\Sigma)$. Here
\begin{equation}\label{l22 S}
S(\phi)=\int_\Sigma \frac12 (d\phi\wedge *d\phi+\frac{m^2}{2}\phi^2 \dvol_g) = \int_\Sigma \frac12 \phi (\Delta+m^2) \phi\; \dvol_g
\end{equation}
Here for the moment we are considering scalar field with mass $m\geq 0$; later we will want to set $m=0$ to have a conformal theory. In (\ref{l22 S}) we assume that either $\Sigma$ is closed or else an appropriate boundary condition is imposed on fields $\phi$, so that the boundary term $\int_{\dd \Sigma}\frac12 \phi\,*d\phi$ vanishes -- then the right equality in (\ref{l22 S}) is valid.

The expression (\ref{l22 Z}) is similar to the l.h.s. of (\ref{l21 Gaussian int}) if we make the identifications 
\begin{equation}
\begin{gathered}
F=\F_\Sigma,\quad u=\phi,\quad h(\phi_1,\phi_2)=\int_\Sigma \phi_1\phi_2\, \dvol_g,\\ B(\phi_1,\phi_2)=\frac{1}{4\pi}\int_\Sigma \phi_1(\Delta+m^2) \phi_2\,\dvol_g,\quad
\ul{B}=\frac{1}{4\pi}(\Delta+m^2).
\end{gathered}
\end{equation}

Understanding the infinite-dimensional integral (\ref{l22 Z}) as a measure-theoretic integral is problematic and we think of it as defined by the r.h.s. of (\ref{l21 Gaussian int}):
\begin{equation}\label{l22 Z as det}
Z(\Sigma)\colon= \det(c(\Delta+m^2))^{-\frac12},
\end{equation}
where $c=\frac{1}{8\pi^2}$.

\begin{remark}
Determinants of differential operators are also nontrivial to make sense of, but there are viable solutions. One method is ``zeta-regularization'': for $D$ a differential operator with a discrete eigenvalue spectrum, one constructs the zeta-function of $D$ -- a function of a complex variable $s$ defined as
\begin{equation}\label{l22 zeta}
\zeta_D(s)\colon= \sum_{\lambda} \lambda^{-s}.
\end{equation}
The sum is over the eigenvalues of $D$ (in the case of continuum spectrum, the sum should be replaced by an integral). 
The sum converges to a holomorphic function for $\mr{Re}(s)>A$ for some $A$ and admits a unique meromorphic continuation to $\CC$ with $s=0$ a regular point. Then the zeta-regularized determinant is defined in terms of the derivative of the meromorphically continues zeta-function at $s=0$ as
\begin{equation}
{\det}_{\zeta\mr{-reg}}(D)\colon= e^{-\zeta_D'(0)}.
\end{equation}
\end{remark}

Note that in (\ref{l21 Gaussian int}) we wanted the quadratic form $B$ (and thus the operator $\ul{B}$) to be strictly positive. For the scalar field on a closed surface $\Sigma$ that forces $m>0$; in the massless case the operator $\ul{B}=\Delta$ has a 1-dimensional kernel given by constant functions on $\Sigma$. Correspondingly, the determinant $\det \Delta$ is not well-defined even with zeta-regularization due to appearance of the eigenvalue $\lambda=0$, which means that the zeta-function (\ref{l22 zeta}) is not defined. For $m>0$, the partition function (\ref{l22 Z as det}) is well-defined via zeta-regularization.

\subsubsection{Moments of Gaussian measure.}
Correlators in the path integral formalism are given as Gaussian averages of products of fields and so are given by the Wick's lemma (\ref{l21 Wick's lemma for Gaussian momenta: formula}). For instance for $p_1\neq p_2\in\Sigma$ two points, one has
\begin{equation}\label{l22 <phi phi>}
\langle \phi(p_1) \phi(p_2) \rangle = ``\frac{1}{Z(\Sigma)}\int_{\F_\Sigma} \mc{D}\phi\;  e^{-\frac{1}{4\pi}S(\phi)} \phi(p_1) \phi(p_2) "\colon= G(p_1,p_2)
\end{equation} 
-- the Green's function of the operator $\frac{1}{4\pi}(\Delta+m^2)$. Here the Green's function -- the integral kernel of the operator $(\Delta+m^2)^{-1}=\ul{B}^{-1}$ is analogous to the matrix element of $\ul{B}^{-1}$ appearing in (\ref{l21  <theta theta> Wick}), (\ref{l21 <uu> Wick}). One should think of the r.h.s. of (\ref{l22 <phi phi>}) as the mathematical definition of the l.h.s., motivated by Wick's lemma in the finite-dimensional case. Put another way, in the context of infinite-dimensional Gaussian integrals, Wick's lemma becomes not a lemma (equality between two well-defined objects), but a \emph{definition} of the moments of the infinite-dimensional Gaussian measure.

Likewise, for the four-point correlator one has
\begin{multline}\label{l22 4-point}
\langle \phi(p_1) \phi(p_2) \phi(p_3) \phi(p_4) \rangle = ``\frac{1}{Z(\Sigma)}\int_{\F_\Sigma} \mc{D}\phi\;  e^{-\frac{1}{4\pi}S(\phi)} \phi(p_1) \phi(p_2) \phi(p_3) \phi(p_4) "\colon= \\
\colon=
G(p_1,p_2) G(p_3,p_4) + G(p_1,p_3) G(p_2,p_4) + G(p_1,p_4) G(p_2,p_3)
\end{multline}

We note that formulae (\ref{l22 <phi phi>}), (\ref{l22 4-point}) make sense for any closed surface, for $m>0$  (for $m=0$, the operator $\Delta$ is non-invertible and hence the Green's function does not exist).

\subsubsection{Case $\Sigma=\CC$.} Let us restrict to the case $\Sigma=\CC$ -- the complex plane. The Green's function $G(z,w)$ can be explicitly found in terms of Bessel's function $K_0$,\footnote{
Bessel's function $K_0(r)$ is a solution of the ODE $\left(\frac{d^2}{dr^2}-\frac{1}{r}\frac{d}{dr}+1\right)y=0$; it has logarithmic asymptotics $K_0(r)\sim -\log r + (\log 2-\gamma)+o(r)$ as $r\ra 0$ (where $\gamma= 0.5772\ldots$ is the Euler's constant). At $r\ra +\infty$ the function $K_0$ is exponentially decaying, $K_0(r)\sim \sqrt{\frac{\pi}{2r}}e^{-r}$.
}
\begin{equation}
G(z,w)=2 K_0(m\cdot |z-w|),
\end{equation}
In particular for $m\ra 0$ and $z\neq w$ fixed one has the asymptotic behavior
\begin{equation}
G(z,w)\underset{m\ra 0}{\sim} -2\log|z-w|+C(m)
\end{equation}
where $C(m)= -2\log m+c$ is a constant (in $z,w$) which diverges as $m\ra 0$; here $c=2(\log 2-\gamma)$.
Thus, we find that the two-point correlator 
\begin{equation}
\langle \phi(z) \phi(w) \rangle =G(z,w)
\end{equation}
computed in the path integral formalism does not exist in the conformal limit $m\ra 0$. Recall that its counterpart in the radial quantization picture (\ref{l19 <phi phi> radial quantization}) is also problematic due to the appearance of an ``infinite constant''  $\langle\vac| \wh\phi_0^2 |\vac\rangle$.

Next, if we consider the two-point correlator of derivatives of the field
\begin{equation}
\langle \dd\phi(z)\dd\phi(w) \rangle = \dd_z\dd_w G(z,w) \underset{m\ra 0}{\ra} -\frac{1}{|z-w|^2},
\end{equation}
we see that it has a well-defined limit $m\ra 0$, which also agrees with our earlier result obtained in the radial quantization picture (\ref{l20 <dd phi  dd phi>}).

One can apply this method to construct similar correlators of derivatives of fields on any surface -- the Green's function itself does not exist in the limit $m\ra 0$ but its derivatives do have a limit.\footnote{
Of course, on a general surface we don't have the radial quantization picture to compare to -- that one is specific to $\Sigma=\CC$. So on general $\Sigma$ it makes sense to take the path integral prescription as the definition of CFT correlators.}

As an example of a more complicated local observable, we can consider the following quadratic polynomial on $\F_\Sigma$:
\begin{equation}\label{l22 :dd phi dd phi:}
:\dd\phi(z)\dd\phi(z):\;\;\colon= \lim_{w\ra z}\left(\dd\phi(w)\dd\phi(z)+\frac{1}{(w-z)^2}\right)
\end{equation}
When computing the correlator of this observable with a collection of other other observables by Wick's lemma, the correction $\frac{1}{(z-w)^2}$ cancels the contribution of Wick contraction $\wick{\dd\c1{\phi(w)}\c1{\dd\phi(z)}}$ -- so effectively one can say this contraction is prohibited when computing correlators involving $:\dd\phi(z)\dd\phi(z):$.\footnote{In the path integral formalism we cannot talk about normal ordering of operators -- since we don't have operators -- so the limiting process in the r.h.s. of (\ref{l22 :dd phi dd phi:}) becomes the definition of the ``normally-ordered'' differential polynomial in the l.h.s.}

As an illustration, let us compute the correlator of the stress-energy tensor with itself (in the path integral formalism):
\begin{multline}
\langle T(z) T(w) \rangle = \langle :-\frac12 \dd\phi(z)\dd\phi(z): \;:-\frac12 \dd\phi(w)\dd\phi(w): \rangle =\\
= \wick{ \langle :-\frac12 \dd\c1{\phi(z)}\dd\c2{\phi(z)}: \;:-\frac12 \dd\c1{\phi(w)}\dd\c2{\phi(w)}: \rangle }+
 \wick{ \langle :-\frac12 \dd\c1{\phi(z)}\dd\c2{\phi(z)}: \;:-\frac12 \dd\c2{\phi(w)}\dd\c1{\phi(w)}: \rangle } \\
 =\frac{2}{4}\frac{-1}{(z-w)^2}\,\frac{-1}{(z-w)^2}=\frac{1}{2} \frac{1}{(z-w)^4}.
\end{multline}
Note that contractions inside  $:\cdots :$ are prohibited.

\subsubsection{OPEs.} We remark that one can also find OPEs within the path integral formalism (from Wick's lemma). For example, consider the correlator
\begin{equation}\label{l22 <dd phi dd phi OO>}
\langle \dd \phi(z) \dd\phi(w) \underbrace{O_1(z_1)\cdots O_n(z_n)}_{\mr{test\;observables}} \rangle
\end{equation}
in the asymptotics $z\ra w$. The correlator is given by a sum over perfect matchings of constituent fields, where we should distinguish two subclasses of matchings:
\begin{enumerate}[(i)]
\item Matchings where $\dd\phi(z)$ and $\dd\phi(w)$ are paired (Wick-contracted) -- these terms sum up to 
$-\frac{1}{(z-w)^2}\langle  O_1(z_1)\cdots O_n(z_n)\rangle $.
\item Matchings where $\dd\phi(z)$ and $\dd\phi(w)$ are not paired (rather, each is paired with one of $O_i$'s.) These terms are regular as $z\ra w$.
\end{enumerate}
Thus, one obtains 
\begin{equation}
\langle \dd \phi(z) \dd\phi(w)O_1(z_1)\cdots O_n(z_n) \rangle \underset{z\ra w}{\sim} -\frac{1}{(z-w)^2}\langle  O_1(z_1)\cdots O_n(z_n)\rangle +\mr{reg.}
\end{equation}
This corresponds to the OPE (\ref{l21 del phi del phi OPE}) which we previously obtained from the radial quantization picture.

\begin{figure}[H]
\begin{center}
\includegraphics[scale=0.7]{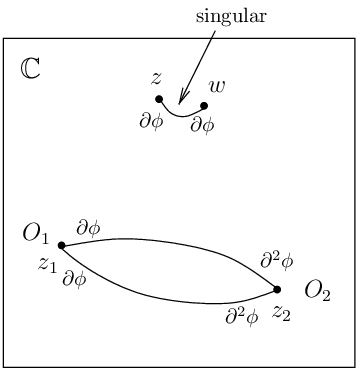}\hspace{2cm}
\includegraphics[scale=0.7]{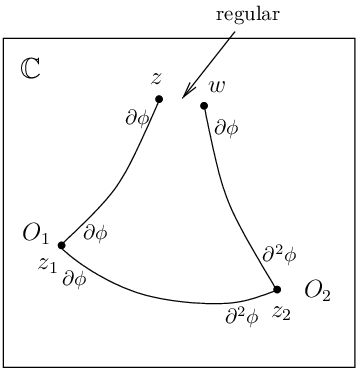}
\end{center}
\caption{An example of a singular and a regular (as $z\ra w$) contribution to the correlator (\ref{l22 <dd phi dd phi OO>}). In this example, the two test observables are $O_1=:\dd\phi\dd\phi:$ and $O_2=:\dd^2\phi \dd^2\phi:$; we depict observables as corollas with the number of prongs being the degree of the differential monomial in $\phi$; edges correspond to Wick contractions. Thus each picture is one summand in the computation of the correlator via Wick's lemma. 
}
\end{figure}

\begin{remark} When studying the theory on $\CC$ we introduced a small positive mass $m$ in order to have well-defined Green's function (and then we let $m\ra 0$ in correlators). Another possibility, instead of introducing a mass, is to have a massless theory, but replace $\CC$ with a disk $D_R=\{z\in \CC\;|\; |z|\leq R\}$ of large radius $R$, where one imposes Dirichlet boundary condition $\phi|_{\dd D_R}=0$. Then one can write an explicit Green's function
\begin{equation}
G(z,w)=-2\log\frac{|z-w|}{|R-\frac{z\bar{w}}{R}|}\;\; \underset{R\ra\infty}{\sim} -2\log |z-w|+C
\end{equation} 
with $C=2\log R$.
\end{remark}

\subsubsection{Summary: path integral vs. radial quantization.} The path integral formalism allows one another way to compute the same quantities as the radial quantization (or ``operator formalism'') does -- correlators and OPEs. The two formalisms should be seen as complementing each other: path integral formalism has the benefit that it can be applied to general surfaces, not just $\CC$. The benefit of the operator formalism is that it also recovers the space of states (and extra structure it might have, e.g., in the case of scalar field, the action of the Heisenberg Lie algebra). So, ultimately, the path integral formalism is better suited for handling global geometry (nontrivial surfaces) while the operator formalism gives a good handle of the local picture of CFT near a puncture (where $\Sigma$ can be approximated by $\CC^*$).

\chapter{
Conformal Field Theory on $\CC$ Belavin-Polyakov-Zamolodchikov axiomatic picture
}
\chaptermark{CFT on $\CC$: BPZ axiomatic picture}

In this chapter we will 
present
Belavin-Polyakov-Zamolodchikov \cite{BPZ} picture of a general CFT on $\CC$, sometimes using the scalar field as an illustration. 

\section{Virasoro algebra}

\begin{definition} Virasoro algebra is the 
central extension 
$\CC\ra \mr{Vir}\ra \mc{W} $
of the Witt algebra $\mc{W}$ 
(the Lie algebra of meromorphic vector fields on $\CC$ with only pole at $0$ allowed, see Section \ref{sss: Witt algebra}),  
defined by the Lie brackets
\begin{equation}\label{l22 Virasoro}
\left[f(z)\frac{\dd}{\dd z}, g(z)\frac{\dd}{\dd z}\right]^\mr{Vir} = (fg'-gf')\frac{\dd}{\dd z}+\frac{c}{12} \mathbb{K} \oint_{\gamma} \frac{dz}{2\pi i} f'''(z) g(z),
\end{equation}
where $\mathbb{K}$ is the central element, $c\in\CC$ is a complex number (a parameter of the central extension) -- the ``central charge,'' $\gamma$ is a closed simple curve going around $0$ counterclockwise.\footnote{The conventional normalization factor $\frac{1}{12}$ in (\ref{l22 Virasoro}) is chosen in such a way that the central charge of the free massless scalar field is $c=1$.}
\end{definition}

Virasoro algebra has the standard set of generators $\{L_n\}_{n\in \ZZ},\mathbb{K}$ subject to commutation relations
\begin{equation}\label{l22 Virasoro via L}
[L_n,L_m]=(n-m)L_{n+m} + \delta_{n,-m}\frac{c}{12}(n^3-n)\mathbb{K},\quad n,m\in\ZZ
\end{equation}
and $[\mathbb{K},\cdots]=0$; $L_n$ are the lifts of the standard generators $l_n=-z^{n+1}\dd_z$ of the Witt algebra.

\textbf{Exercise:} check that the Lie brackets (\ref{l22 Virasoro}) or equivalently (\ref{l22 Virasoro via L}) satisfy the Jacobi identity.


In fact, Virasoro algebra is the \emph{unique} (up to a choice of the value of the parameter $c$) central extension of the Witt algebra, which is the content of the following theorem.
\begin{thm}[Feigin-Fuchs] 
One has
\begin{equation}
H^2_\mr{Lie}(\mc{W},\CC)=\CC
\end{equation}
-- the second Lie algebra (Chevalley-Eilenberg) cohomology of the Witt algebra (with coefficients in the trivial module) has rank $1$. This cohomology is generated by the cohomology 
class of the Lie 2-cocycle 
\begin{equation}
\lambda(f(z)\dd_z,g(z)\dd_z)=\frac{1}{12}\oint_\gamma \frac{dz}{2\pi i } f'''(z)g(z).
\end{equation}
\end{thm}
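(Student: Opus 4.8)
The plan is to compute the Chevalley--Eilenberg complex of $\mc{W}$ with trivial coefficients directly, using the $\ZZ$-grading to reduce to a finite linear-algebra problem. Recall that a $2$-cochain is an alternating bilinear form $\omega\colon\mc{W}\times\mc{W}\to\CC$; it is a cocycle precisely when the cyclic sum
\begin{equation}
\omega([x,y],z)+\omega([y,z],x)+\omega([z,x],y)=0
\end{equation}
holds for all $x,y,z$, while the coboundary of a $1$-cochain $\phi$ is $(d\phi)(x,y)=-\phi([x,y])$. I must therefore describe $Z^2/B^2$.

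First I would exploit the grading $\deg l_n=n$, i.e.\ $[l_0,l_n]=-n\,l_n$. The adjoint action is homotopically trivial on cochains, so $l_0$ acts as $0$ on $H^\bullet(\mc{W},\CC)$: concretely, Cartan's homotopy formula $L_{l_0}=d\,\iota_{l_0}+\iota_{l_0}\,d$ gives, on a homogeneous cocycle $\omega$ of degree $w$, the identity $w\,\omega=d(\iota_{l_0}\omega)$, so $\omega$ is exact whenever $w\neq0$. Writing a general cochain via its degree decomposition $\omega=\sum_w\omega_w$ (a well-defined element of the product of homogeneous pieces), the primitive $\sum_{w\neq0}\tfrac1w\,\iota_{l_0}\omega_w$ shows $[\omega]=[\omega_0]$. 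Hence $H^2(\mc{W},\CC)=H^2_0(\mc{W},\CC)$, and it suffices to treat degree-$0$ cochains.

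A degree-$0$ cocycle is determined by the numbers $a_n:=\omega(l_n,l_{-n})$, with $a_{-n}=-a_n$ and $a_0=0$. Feeding $x=l_n,\,y=l_1,\,z=l_{-n-1}$ into the cocycle condition and using $[l_n,l_m]=(n-m)l_{n+m}$ yields
\begin{equation}
(n-1)\,a_{n+1}=(n+2)\,a_n-(2n+1)\,a_1,
\end{equation}
which fixes every $a_n$ with $n\ge3$ in terms of $a_1,a_2$; at $n=1$ the coefficient of $a_2$ drops out and the relation degenerates to $0=0$, leaving $a_1,a_2$ free, so $\dim Z^2_0\le2$. Exhibiting the two independent cocycles $a_n=n$ and $a_n=n^3$ --- the full condition $(i-j)a_k+(j-k)a_i+(k-i)a_j=0$ with $i+j+k=0$ holds for both, via the identity $(i-j)k^3+\mathrm{cyc}=-(i-j)(j-k)(k-i)(i+j+k)$ --- then gives $\dim Z^2_0=2$.

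Finally, a degree-$0$ $1$-cochain is fixed by $b:=\phi(l_0)$, and $(d\phi)(l_n,l_{-n})=-\phi(2n\,l_0)=-2n\,b$, so the coboundaries are exactly the sequences $a_n\propto n$, i.e.\ $\dim B^2_0=1$; therefore $\dim H^2=2-1=1$. The surviving class is that of $a_n=n^3$, equivalently $a_n=\tfrac1{12}(n^3-n)$ after subtracting the coboundary $a_n\propto n$, and for $l_n=-z^{n+1}\dd_z$ one computes $\lambda(l_n,l_m)=\tfrac1{12}(n^3-n)\,\delta_{n+m,0}$, identifying $[\lambda]$ as the generator. The main obstacle I expect is twofold: making the reduction to degree $0$ fully rigorous (the homotopy is clean on each homogeneous summand, but one must check that the degree decomposition and the infinite primitive are well defined in the product complex), and the bookkeeping in the recursion --- the rank is $1$ rather than $0$ precisely because the $n=1$ instance degenerates, so that $a_2$ survives after the single coboundary parameter has been spent normalizing $a_1$.
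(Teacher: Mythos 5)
The paper does not actually prove this theorem: it is stated as a citation to Feigin--Fuchs and used as a black box (the only related task posed to the reader is the exercise of checking that $\lambda$ satisfies the Jacobi identity, i.e.\ that it is a cocycle). So there is no in-text argument to compare against; judged on its own, your proof is correct and is the standard computation. The reduction to degree zero via $L_{l_0}=d\,\iota_{l_0}+\iota_{l_0}\,d$ is handled carefully (the infinite primitive is well defined because evaluation on any fixed $l_n$ sees only one homogeneous piece); the one step you leave implicit is that $d$ preserves the grading, so each homogeneous component $\omega_w$ of a cocycle is itself a cocycle --- this is what lets you write $d(\iota_{l_0}\omega_w)=w\,\omega_w$ without an $\iota_{l_0}d\omega_w$ remainder, and it deserves one sentence. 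The recursion from the triple $(l_n,l_1,l_{-n-1})$, the degeneration at $n=1$ giving $\dim Z^2_0=2$, the identification of $B^2_0$ with $a_n\propto n$, and the residue computation showing $\lambda(l_n,l_m)=\tfrac{1}{12}(n^3-n)\,\delta_{n+m,0}$ are all correct; your closing logic (upper bound $2$ from the recursion, lower bound $2$ from exhibiting $a_n=n$ and $a_n=n^3$ as solutions of the full cyclic identity) is exactly the right way to avoid having to verify that the recursion's output satisfies all instances of the cocycle condition.
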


\marginpar{Lecture 23,\\
10/14/2022}

\section{Axiomatic CFT on $\CC$. Action of Virasoro algebra on $\HH$}
We will start setting up general conformal field theory on $\CC$ as an axiomatic picture, following \cite{BPZ}.

In this picture, a CFT is the following collection of data. 
\begin{enumerate}[(I)]
\item \textbf{Space of states.} One has a complex vector space -- the space of states $\HH$ -- with a distinguished vector $|\vac\rangle\in \HH$.
\item \textbf{Space of fields, local operators.} One has a complex vector space of local observables (or ``space of composite fields'') $V$. For $z\in \CC$ we will denote $V_z$ a copy of $V$ placed at $z$;\footnote{I.e. we are thinking of a trivial vector bundle $\mc{V}=V\times \CC$ over $\CC$ with typical fiber $V$ and $V_z$ the fiber over a specific point $z$.} 
we denote a copy of an element $\Phi\in V$ placed at a point $z$ by $\Phi(z)\in V_z$. 

For $z\neq 0$, $\Phi(z)\in V_z$ is represented by a (possibly unbounded) operator $\wh\Phi(z)\in \mr{End}(\HH)$.\footnote{In other words, there is a map $Y\colon  V\times \CC^* \ra \mr{End}(\HH)$, linear in $V$ and smooth on $\CC^*$. We denote $Y(\Phi,z)$ by $\wh\Phi(z)$.}

 \item \textbf{Field-state correspondence.}  One has a linear isomorphism 
\begin{equation}\label{l23 field-state correspondence}
\mathsf{s}\colon V \xra{\sim} \HH
\end{equation} 
mapping a field $\Phi\in V$ to the state $\lim_{z\ra 0} \wh{\Phi}(z)|\vac\rangle $. (In particular, such a limit is required to exist for any $\Phi$ and determine an isomorphism between fields and states.) See Section \ref{sss: field-state correspondence} below for an example.
\item \label{l23 Axiom inner product} \textbf{Inner product.} Both $\HH$ and $V$ carry  real structures and
nondegenerate hermitian forms $\langle,\rangle$
(intertwined by $\fs$).
If the hermitian forms are additionally positive-definite, the CFT is called \emph{unitary}.
 For the hermitian conjugate of a local operator 
one has 
\begin{equation}\label{l23 Phi^+}
(\wh\Phi(z))^+ = \bar{z}^{-2h} z^{-2\bar{h}} \wh{\Phi^*}(1/\bar{z}).
\end{equation}
Here $*$ denotes the complex conjugation in $V$ and $(h,\bar{h})$ is the conformal weight of the field $V$ (see Definition \ref{l25 def: conformal dimension} below).
\item \label{l23 Axiom radial ordering} \textbf{Radial ordering, domains of field operators, same-time commutativity.} For any $n$-tuple of elements $\Phi_1,\ldots,\Phi_n\in V$, the vector 
\begin{equation}\label{l23 vector}
\wh{\Phi}_1(z_1)\cdots \wh{\Phi}_n(z_n)|\vac\rangle 
\end{equation} 
is assumed to be well-defined if $z_i$'s are radially ordered, 
$|z_1|\geq \cdots \geq |z_n|$. 
As a consequence (using the same axiom for a string of $n+1$ local operators) the vector (\ref{l23 vector}) is in the domain of $\wh\Phi(z)$ if $|z|\geq |z_1|$ and $z\neq z_i$ for $i=1,\ldots,n$.
Operators $\wh{\Phi}_1(z)$, $\wh\Phi_2(w)$ are assumed to commute if $z\neq w$ and $|z|=|w|$.
\item \textbf{Correlators.} For an $n$-tuple of elements $\Phi_1,\ldots,\Phi_n\in V$, the correlator is defined as
\begin{equation}\label{l23 correlator via radial ordering}
\langle \Phi_1(z_1)\cdots \Phi_n(z_n) \rangle\colon = \langle \vac|
\mc{R}\Big(\wh\Phi_1(z_1)\cdots \wh\Phi_n(z_n) \Big)
|\vac\rangle.
\end{equation}
where $\langle\vac| \colon= \big\langle |\vac\rangle,- \big\rangle_\HH \in \HH^*$ is the covector dual to the vector $|\vac\rangle$.
The correlator (\ref{l23 correlator via radial ordering}) is a smooth function on $C_n(\CC)$ -- the open configuration space of $n$ points on $\CC$, depending 
linearly
on the fields $\Phi_1,\ldots,\Phi_n$.\footnote{If one of the points $z_i$ in the l.h.s. of (\ref{l23 correlator via radial ordering}) is zero, one understands the r.h.s. as a limit $z_i\ra 0$.}
\item \textbf{Identity field and stress-energy tensor.} 
$V$ contains a  element $\mathbb{1}$ acting on $\HH$ by identity and special elements $T$, $\ol{T}$
 satisfying holomorphicity/antiholomorphicity
\begin{equation}\label{l23 dbar T=0}
\bar\dd \wh{T}(z)=0,\quad \dd\wh{\ol{T}}(z)=0
\end{equation}
and the OPEs
\begin{eqnarray}
\label{l23 TT} \mc{R} \wh{T}(z) \wh{T}(w) &\sim & \frac{\frac{c}{2}\wh{\mathbb{1}}}{(z-w)^4} + \frac{2 \wh{T}(w)}{(z-w)^2} + \frac{\dd \wh{T}(w)}{z-w} +\mr{reg.}
\\ \label{l23 Tbar Tbar}
\mc{R} \wh{\ol{T}}(z) \wh{\ol{T}}(w) &\sim & \frac{\frac{\bar{c}}{2}\wh{\mathbb{1}}}{(\bar{z}-\bar{w})^4} + \frac{2 \wh{\ol{T}}(w)}{(\bar{z}-\bar{w})^2} + \frac{\bar\dd\, \wh{\ol{T}}(w)}{\bar{z}-\bar{w}} +\mr{reg.}  \\ \label{l23 T Tbar}
\mc{R} \wh{T}(z) \wh{\ol{T}}(w) &\sim & \mr{reg.}
\end{eqnarray}
with $c,\bar{c}$ some complex numbers (the holomorphic and antiholomorphic central charges).

Elements $\mathbb{1},T,\ol{T}\in V$ are real (with respect to the real structure on $V$).
\item \textbf{Projective action of conformal vector fields on states.} One has a projective representation $\rho$ of the Lie algebra of conformal vector fields on $\CC^*$ on $\HH$, where
the conformal vector field $v=u(z)\dd_z+\ol{u(z)}\dd_{\bar{z}}$ on $\CC^*$  (with $u$ a meromorphic function on $\CC$ with pole allowed only at $z=0$) is represented by the operator
\begin{equation}\label{l23 rho}
\rho(
u\dd+\bar{u}\bar\dd
) \colon= -\frac{1}{2\pi i}\oint_{\gamma} \left(dz\, u(z)\, \wh{T}(z)-d\bar{z}\, \ol{u(z)}\, \wh{\ol{T}}(z) \right) \qquad \in \mr{End}(\HH)
\end{equation}
where $\gamma\in \CC^*$ is a closed contour 
going around zero once counterclockwise.\footnote{
If we combine $\wh{T}(z)$ and $\wh{\ol{T}}(z)$ into a single object -- the total stress-energy operator $\wh{T}^\mr{total}(z)=\wh{T}(z)(dz)^2+\wh{\ol{T}}(z) (d\bar{z})^2$ -- a quadratic differential on $\CC^*$ valued in $\mr{End}(\HH)$, we can phrase (\ref{l23 rho}) as
$$\rho(v)=-\frac{1}{2\pi i}\oint_\gamma \iota_v \wh{T}^\mr{total}.$$ 
Here the contraction with the vector field $v$ converts the total stress-energy tensor from a quadratic differential into an (operator-valued) 1-form, which can then be integrated over the 1-cycle $\gamma$.
}
 In particular the standard generators of the Witt algebra $\mc{W}$, $l_n=-z^{n+1}\dd_z$, are represented by
\begin{equation}\label{l23 L via T}
\wh{L}_n\colon=\rho(-z^{n+1}\dd_z)=\frac{1}{2\pi i}\oint_\gamma dz\, z^{n+1} \wh{T}(z) \qquad \in\mr{End}(\HH)
\end{equation}
and likewise for the generators of the antiholomorphic copy $\ol{\mc{W}}$ of the Witt algebra:
\begin{equation}\label{l23 Lbar via Tbar}
\wh{\ol{L}}_n\colon=\rho(-\bar{z}^{n+1}\dd_{\bar{z}})=\frac{1}{2\pi i}\oint_\gamma d\bar{z}\, \bar{z}^{n+1} \wh{\ol{T}}(\bar{z}) \qquad \in\mr{End}(\HH).
\end{equation}
We remark that the inverse formulae for (\ref{l23 L via T}) and (\ref{l23 Lbar via Tbar}), expressing the stress-energy tensor in terms of operators $\wh{L}_n$, $\wh{\ol{L}}_n$ are:
\begin{equation}\label{l23 T via L, Tbar via Lbar}
\wh{T}(z)=\sum_{n\in\ZZ} z^{-n-2} \wh{L}_n,\quad \wh{\ol{T}}(z)=\sum_{n\in\ZZ} \bar{z}^{-n-2} \wh{\ol{L}}_n.
\end{equation}
I.e., essentially (and up to a shift in numbering), operators $\wh{L}_n$ are the Fourier modes of the field $\wh{T}(z)$ restricted to a circle.
\end{enumerate}

\begin{lemma}\label{l23 lemma: Virasoro from TT OPE}
\begin{enumerate}[(i)]
\item  As a consequence of the $TT$ OPE (\ref{l23 TT}), operators $\wh{L}_n$ satisfy the Virasoro commutation relation (\ref{l22 Virasoro via L}) with central charge $c$ (the coefficient in the fourth order pole  in (\ref{l23 TT})). 
\item Similarly, as a consequence of $\ol{T}\,\ol{T}$ OPE (\ref{l23 Tbar Tbar}), operators $\wh{\ol{L}}$ satisfy the Virasoro commutation relation with central charge $\bar{c}$.
\item As a consequence of $T\ol{T}$ OPE (\ref{l23 T Tbar}), the generators of the holomorphic and antiholomorphic copies of the Virasoro algebra commute: $[\wh{L}_n,\wh{\ol{L}}_m]=0$.
\end{enumerate}
\end{lemma}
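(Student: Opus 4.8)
The plan is to compute $[\wh{L}_n,\wh{L}_m]$ directly from the definition (\ref{l23 L via T}) and to reduce it, by a contour-deformation argument, to a single contour integral of the $TT$ OPE (\ref{l23 TT}) around the coincident point. First I would write each of the two operator products in the commutator as a double contour integral. By the radial-ordering axiom (\ref{l23 Axiom radial ordering}), the product $\wh{L}_n\wh{L}_m$ is represented by integrating $\wh{T}(z)$ over a circle of radius strictly larger than that of the circle carrying $\wh{T}(w)$, whereas $\wh{L}_m\wh{L}_n$ has the two radii in the opposite order; in both cases the integrand is the radially ordered product $\mc{R}\wh{T}(z)\wh{T}(w)$. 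Hence
\begin{equation*}
[\wh{L}_n,\wh{L}_m]=\oint_{\gamma_w}\frac{dw}{2\pi i}\,w^{m+1}\left(\oint_{|z|>|w|}-\oint_{|z|<|w|}\right)\frac{dz}{2\pi i}\,z^{n+1}\,\mc{R}\wh{T}(z)\wh{T}(w).
\end{equation*}

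The key step -- and the main conceptual obstacle -- is to hold $w$ fixed and regard the bracketed expression as a contour statement in the variable $z$. Because the OPE (\ref{l23 TT}) presents $\mc{R}\wh{T}(z)\wh{T}(w)$ as meromorphic in $z$ with its only singularity at $z=w$ (the ``reg.'' part being holomorphic there), the difference of the outer and inner $z$-contours collapses, by Cauchy's theorem, to a small contour $\gamma_{z\sim w}$ encircling $z=w$ once. This is where the whole argument lives: it is precisely the analytic continuation furnished by the OPE that glues together the two radial orderings, and justifying this collapse -- rather than the ensuing residue bookkeeping -- is the substantive part of the proof.

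Next I would substitute the explicit OPE (\ref{l23 TT}) into the small contour and extract residues via $\oint_{z\sim w}\frac{dz}{2\pi i}\frac{z^{n+1}}{(z-w)^k}=\frac{1}{(k-1)!}\,\dd_w^{k-1}(w^{n+1})$. The $k=4$ (central) term gives $\frac{c}{12}(n^3-n)\,w^{n-2}\wh{\mathbb{1}}$, the $k=2$ term gives $2(n+1)\,w^n\wh{T}(w)$, and the $k=1$ term gives $w^{n+1}\dd\wh{T}(w)$. Integrating these against $\frac{dw}{2\pi i}\,w^{m+1}$ over $\gamma_w$, integrating the $\dd\wh{T}$ contribution by parts on the closed contour, and using $\oint\frac{dw}{2\pi i}\,w^{\ell}=\delta_{\ell,-1}$ together with the definition of $\wh{L}_{n+m}$, I would collect the three pieces. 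The two $\wh{T}$-terms combine as $[2(n+1)-(n+m+2)]\wh{L}_{n+m}=(n-m)\wh{L}_{n+m}$, and the central piece becomes $\frac{c}{12}(n^3-n)\,\delta_{n,-m}\mathbb{K}$, reproducing exactly (\ref{l22 Virasoro via L}).

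Finally, parts (ii) and (iii) follow by the same mechanism. Part (ii) is verbatim the computation above with $z\mapsto\bar{z}$, $\wh{T}\mapsto\wh{\ol{T}}$ and $c\mapsto\bar{c}$, now invoking the OPE (\ref{l23 Tbar Tbar}). For part (iii), the commutator $[\wh{L}_n,\wh{\ol{L}}_m]$ reduces in the same way to a small contour around the coincidence of the arguments of $\wh{T}$ and $\wh{\ol{T}}$; but the OPE (\ref{l23 T Tbar}) is regular there, so the collapsed contour encircles no singularity and the integral vanishes, yielding $[\wh{L}_n,\wh{\ol{L}}_m]=0$.
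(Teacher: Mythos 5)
Your proposal is correct and follows essentially the same route as the paper: write the commutator as a difference of nested radial contour integrals, collapse the difference of $z$-contours (at fixed $w$) onto a small circle around $z=w$, substitute the $TT$ OPE, and extract residues (your direct use of Cauchy's formula $\frac{1}{(k-1)!}\dd_w^{k-1}(w^{n+1})$ is just a compressed form of the paper's expansion in $\alpha = z-w$). The residue bookkeeping, the integration by parts on the $\dd\wh{T}$ term, and the treatment of (ii) and (iii) -- with (iii) vanishing because the $T\ol{T}$ OPE is regular -- all match the paper's argument.
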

We will prove this lemma in Section \ref{sss: Virasoro from TT OPE} below.

\begin{remark}\label{l23 rem: H^big, H^small} In the axioms above, it would be more correct to 
distinguish
two versions of the space of states: 
\begin{itemize}
\item 
$\HH=\HH^\mr{small}$ -- the one identified with $V$ by the field-state correspondence  (\ref{l23 field-state correspondence}), containing the vector $|\vac\rangle$ and carrying a 
hermitian inner product.
\item A completion $\HH^\mr{big}$ of $\HH^\mr{small}$ on which the field operators $\wh{\Phi}(z)$ act (it should be a completion containing all vectors of the form (\ref{l23 vector})). 
\end{itemize}

E.g. in the scalar field theory, we can define $\HH^\mr{small}$ to be the set of all \emph{finite} linear combinations of basis vectors (\ref{l17 basis vector in H}), while $\HH^\mr{big}$ is spanned by the same basis but has to contain certain infinite linear combinations. 

We note that even if the hermitian form is positive definite, neither $\HH^\mr{small}$ nor $\HH^\mr{big}$ is a Hilbert space: $\HH^\mr{small}$ carries a 
hermitian
form, but is not complete with respect to it, while $\HH^\mr{big}$ contains vectors (\ref{l23 vector}) which generally have infinite $L^2$ norm if $|z_1|\geq 1$.

In the case of a positive definite hermitian form, one can also consider the $L^2$ completion $\HH^\mr{Hilb}$ of $\HH^\mr{small}$. As follows from the axioms (\ref{l23 Axiom inner product}) and (\ref{l23 Axiom radial ordering}), $\HH^\mr{Hilb}$ is guaranteed to contain the vector (\ref{l23 vector}) only if $z_i$'s are distinct, radially-ordered \emph{and are contained in the open unit disk $\{z\in \CC\;|\; |z|<1\}$}.\footnote{
Indeed, for the square of the $L^2$ norm of the vector (\ref{l23 vector}) we have 
$|| \wh\Phi_1(z_1)\cdots \wh\Phi_n(z_n)|\vac\rangle ||^2 =
\langle \vac|\big(\wh\Phi_n(z_n)\big)^+\cdots \big(\wh\Phi_1(z_1)\big)^+\wh\Phi_1(z_1)\cdots \wh\Phi_n(z_n)|\vac\rangle = 
\prod_{i=1}^n \bar{z}_i^{-2h_i} z_i^{-2\bar{h}_i} \cdot \langle\vac|
\wh{\Phi}^*_n(1/\bar{z}_n)\cdots \wh{\Phi}^*_1(1/\bar{z}_1)\wh\Phi_1(z_1)\cdots \wh\Phi_n(z_n)|\vac\rangle
$. The correlator on the right is certain to exist only if the insertion points of the operators 
$1/\bar{z}_n,\ldots,1/\bar{z}_1,z_1,\ldots,z_n$
are distinct and the sequence is radially ordered. This implies that all $z_i$'s must be in the open unit disk. (Note that if $|z_1|=1$ then $1/\bar{z}_1=1/z_1$, thus the sequence is radially ordered but not all points are  distinct.)
}
\end{remark}

\begin{remark}\label{l23 rem: L_n^+}
The hermitian conjugate of the Virasoro generator $\wh{L}_n$ (\ref{l23 L via T}) is readily computed from (\ref{l23 Phi^+}):
\begin{equation}
\wh{L}_n^+=\frac{-1}{2\pi i} \oint_{\gamma} d\bar{z}\, \bar{z}^{n+1}\underbrace{\bar{z}^{-4}\wh{T}(1/\bar{z})}_{\wh{T}(z)^+}
\underset{w=1/\bar{z}}{=} \frac{1}{2\pi i}\oint_{\gamma'} dw\,w^{-n+1} \wh{T}(w) = \wh{L}_{-n}
\end{equation}
Here $\gamma'$ is the image of the contour $\gamma$ under the inversion map $z\mapsto 1/\bar{z}$, which is again a contour going once around zero in positive direction. We have used the fact that $T$ has conformal weight $(2,0)$, see Example \ref{l25 ex: T conformal weight}. Similarly, one proves
\begin{equation}
\wh{\ol{L}}_n^+=\wh{\ol{L}}_{-n}.
\end{equation}
\end{remark}

The following is also an immediate consequence of (\ref{l23 Phi^+}).
\begin{lemma}\label{l23 lemma: conjugation of a correlator}
For any fields $\Phi_1,\ldots, \Phi_n$ and $z_1,\ldots,z_n\in \CC\backslash \{0\}$ an $n$-tuple of distinct points one has
\begin{equation}
\ol{\langle \Phi_1(z_1)\cdots \Phi_n(z_n)\rangle }=
\prod_{i=1}^n \bar{z}_i^{-2h_i} z_i^{-2\bar{h}_i}\cdot
\langle \Phi_1^*(1/\bar{z}_1)\cdots \Phi_n^*(1/\bar{z}_n) \rangle
\end{equation}
where $(h_i,\bar{h}_i)$ is the conformal weight of $\Phi_i$. The bar over the correlator in the l.h.s. stands for complex conjugation.
\end{lemma}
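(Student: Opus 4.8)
The plan is to unfold the definition (\ref{l23 correlator via radial ordering}) of both correlators, to move the complex conjugation through the hermitian form as a hermitian adjoint, and then to apply the conjugation rule (\ref{l23 Phi^+}) for local operators factor by factor. Since the correlator is insensitive to the order in which the fields are listed (the radial ordering operator $\mc{R}$ reorders the operator product intrinsically), I may first relabel the fields so that $|z_1|\ge\cdots\ge|z_n|$. Then the string (\ref{l23 vector}) is already radially ordered, $\mc{R}$ acts trivially, and $\langle\Phi_1(z_1)\cdots\Phi_n(z_n)\rangle=\langle\vac|\,\wh\Phi_1(z_1)\cdots\wh\Phi_n(z_n)\,|\vac\rangle$.

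Next I would compute $\overline{\langle\vac|A|\vac\rangle}$ with $A=\wh\Phi_1(z_1)\cdots\wh\Phi_n(z_n)$. Writing $\langle\vac|A|\vac\rangle=\big\langle\,|\vac\rangle,\,A|\vac\rangle\,\big\rangle_\HH$ and using the conjugate-symmetry $\overline{\langle u,w\rangle}=\langle w,u\rangle$ of the hermitian form together with the defining relation $\langle Av,w\rangle=\langle v,A^+w\rangle$ of the adjoint, one gets $\overline{\langle\vac|A|\vac\rangle}=\langle\vac|A^+|\vac\rangle$. I would then expand $A^+=\wh\Phi_n(z_n)^+\cdots\wh\Phi_1(z_1)^+$ (the anti-involution reverses the order) and substitute (\ref{l23 Phi^+}) into each factor, pulling the scalar prefactors out to obtain $A^+=\prod_{i=1}^n\bar z_i^{-2h_i}z_i^{-2\bar h_i}\cdot\wh{\Phi_n^*}(1/\bar z_n)\cdots\wh{\Phi_1^*}(1/\bar z_1)$. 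The scalars here enter directly from the operator identity, with no further conjugation, so they match those in the right-hand side of the claim.

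The last step, and the only genuinely substantive point, is to recognize that the remaining vacuum matrix element is itself a correlator of the starred fields at the inverted points. Because inversion $z\mapsto 1/\bar z$ reverses moduli, the ordering $|z_1|\ge\cdots\ge|z_n|$ turns into $|1/\bar z_n|\ge\cdots\ge|1/\bar z_1|$, so the reversed string $\wh{\Phi_n^*}(1/\bar z_n)\cdots\wh{\Phi_1^*}(1/\bar z_1)$ is again radially ordered: the two order-reversals, one from hermitian conjugation and one from inversion, cancel. Consequently the product lies in the domain guaranteed by axiom (\ref{l23 Axiom radial ordering}), and $\langle\vac|\wh{\Phi_n^*}(1/\bar z_n)\cdots\wh{\Phi_1^*}(1/\bar z_1)|\vac\rangle=\langle\vac|\mc{R}\big(\wh{\Phi_1^*}(1/\bar z_1)\cdots\wh{\Phi_n^*}(1/\bar z_n)\big)|\vac\rangle=\langle\Phi_1^*(1/\bar z_1)\cdots\Phi_n^*(1/\bar z_n)\rangle$. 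Combining this with the extracted prefactor yields the stated identity. I expect that everything except this matching of radial orderings is formal bookkeeping of the hermitian structure; the hypothesis that the $z_i$ are distinct and nonzero (hence the $1/\bar z_i$ too) is exactly what ensures all the objects appearing are well-defined.
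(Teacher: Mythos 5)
Your proposal is correct and follows essentially the same route as the paper's proof: reduce to the radially ordered case, pass the conjugation through the hermitian form to get the adjoint of the operator string, apply (\ref{l23 Phi^+}) factor by factor, and recognize the result as the correlator of the starred fields at the inverted points. The only difference is that you spell out explicitly why the reversed string at the inverted points is again radially ordered (the two order-reversals cancel), a point the paper leaves implicit in its final equality.
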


\begin{proof}
Without loss of generality we may assume that points $z_i$ are radially ordered, $|z_1|\geq \cdots \geq |z_n|$.
We have
\begin{multline}
\ol{\langle \Phi_1(z_1)\cdots \Phi_n(z_n)\rangle }
=\ol{
\langle\vac|\wh\Phi_1(z_1)\cdots \wh\Phi_n(z_n)  |\vac\rangle  
}=\\
=\langle\vac|\Big(\wh\Phi_1(z_1)\cdots \wh\Phi_n(z_n)\Big)^+  |\vac\rangle 
=\langle\vac|\wh\Phi_n(z_n)^+\cdots \wh\Phi_1(z_1)^+  |\vac\rangle \\
\underset{(\ref{l23 Phi^+})}{=}
\prod_{i=1}^n \bar{z}_i^{-2h_i} z_i^{-2\bar{h}_i}\cdot
\langle\vac| \wh\Phi_n^*(1/\bar{z}_n)\cdots \wh\Phi_1^*(1/\bar{z}_1)  |\vac\rangle\\=
\prod_{i=1}^n \bar{z}_i^{-2h_i} z_i^{-2\bar{h}_i}\cdot
\langle \Phi_1^*(1/\bar{z}_1)\cdots \Phi_n^*(1/\bar{z}_n) \rangle
\end{multline}
\end{proof}

\subsection{Example: action of Virasoro algebra on $\HH$ in the scalar field theory. Abelian Sugawara construction.}
 
In the example of the free scalar field theory we know the stress-energy tensor (\ref{l20 T components}):
\begin{equation}\label{l23 T scalar}
\wh{T}(z)=-\frac12 :\dd\wh\phi(z) \dd\wh\phi(z):=\frac12\sum_{j,k\in\ZZ}z^{-j-k-2} :\wh{a}_j \wh{a}_k:
\end{equation}
where we used the expansion (\ref{l20 dd phi, dbar phi via a abar}) of $\dd\wh\phi(z)$ in terms of creation-annihilation operators. In particular, $\wh{T}(z)$ has no dependence on $\bar{z}$, i.e. the holomorphicity axiom (\ref{l23 dbar T=0}) holds (we skip the computations for $\wh{\ol{T}}(z)$ -- they are similar). OPEs (\ref{l23 TT}), (\ref{l23 Tbar Tbar}), (\ref{l23 T Tbar}) hold with the central charges $c=\bar{c}=1$ -- we know this from the explicit computation in Example \ref{l21 example: TT OPE}. From (\ref{l23 L via T}) and (\ref{l23 T scalar}) we find the operators $\wh{L}_n$ to be
\begin{equation}\label{l23 L_scalar via a a}
\wh{L}_n=\frac12 \sum_{k\in\ZZ} :\wh{a}_k\wh{a}_{n-k}:
\end{equation}
and similarly
\begin{equation}\label{l23 Lbar via abar abar}
\wh{\ol{L}}_n=\frac12 \sum_{k\in\ZZ} :\wh{\bar{a}}_k\wh{\bar{a}}_{n-k}:
\end{equation}
Note that the normal ordering is only relevant for $\wh{L}_n$, $\wh{\ol{L}}_n$ with $n=0$, as for $n\neq 0$ the operators $\wh{a}_k,\wh{a}_{n-k}$ commute for any $k$, and likewise for $\wh{\bar{a}}_k,\wh{\bar{a}}_{n-k}$.

\textbf{Exercise:}  Show by a direct computation that the operators (\ref{l23 L_scalar via a a}) satisfy Virasoro commutation relations with $c=1$, from the commutation relations (\ref{l17 a,abar comm rel}) for the creation/annihilation operators.

Equality (\ref{l23 L_scalar via a a}) expresses the generators of Virasoro algebra with central charge $c=1$ as quadratic polynomials in generators of the Heisenberg Lie algebra  (\ref{l17 Heisenberg Lie algebra}). Thus, we have an inclusion 
\begin{equation}
\mr{Vir}_{c=1}\hra U^{(2)}\mr{Heis},
\end{equation}
where $U^{(2)}$ means the subspace of (at most) quadratic elements in the universal enveloping algebra (of the Heisenberg Lie algebra). This inclusion is the abelian version of the Sugawara construction, realizing Virasoro algebra (at certain other values of $c$) inside the quadratic part of the universal enveloping algebra of the affine Lie algebra (a.k.a. Kac-Moody algebra)  $\wh{\g}$. We will come to the non-abelian Sugawara construction later, when talking about Wess-Zumino-Witten model.

\begin{remark} Comparing (\ref{l23 L_scalar via a a}) and (\ref{l23 Lbar via abar abar}) with  (\ref{l18 :H:}), (\ref{l18 :P:}), we observe the equalities
\begin{equation}\label{l23 H, P via L_0, Lbar_0}
\wh{L}_0+\wh{\ol{L}}_0 = \wh{H},\qquad \wh{L}_0-\wh{\ol{L}}_0 = \wh{P},
\end{equation}
expressing the quantum Hamiltonian and the total momentum operators in terms of Virasoro generators $\wh{L}_0$, $\wh{\ol{L}}_0$. In a general CFT, formulae (\ref{l23 H, P via L_0, Lbar_0}) become the \emph{definitions} of the Hamiltonian and the total momentum operators.

Note that due to (\ref{l23 rho}), the operator $\wh{H}=\wh{L}_0+\wh{\ol{L}}_0$ represents on $\HH$ the vector field $-z\dd_z-\bar{z}\dd_{\bar{z}}$ or, in terms of coordinates $\tau,\sigma$ on the cylinder, the vector field $-\dd_\tau$. Likewise, the operator $\wh{P}=\wh{L}_0-\wh{\ol{L}}_0$ represents the vector field $-z\dd_z+\bar{z}\dd_{\bar{z}}$ or, in terms of the cylinder, $i\dd_\sigma$. Ultimately, the operators represent infinitesimal translations along the cylinder and rotations of the cylinder, as the Hamiltonian and total momentum should, cf. Remark \ref{l16 Rem H and P via T}.
\end{remark}

\subsection{Virasoro commutation relations from $TT$ OPE (contour integration trick)}
\label{sss: Virasoro from TT OPE}

Let us prove Lemma \ref{l23 lemma: Virasoro from TT OPE}. We will focus on the case (i):
assuming that the $TT$ OPE (\ref{l23 TT}) is known, let us calculate the commutator of operators $\wh{L}_n$, $\wh{L_m}$ using their definition via the stress-energy tensor (\ref{l23 L via T}):
\begin{multline}\label{l23 [Ln,Lm] computation}
[\wh{L}_n,\wh{L}_m]=\wh{L}_n\wh{L}_m - \wh{L}_m\wh{L}_n=\\
=
\oint_{\gamma_{0,R}}\frac{dz}{2\pi i} \oint_{\gamma_{0,r}} \frac{dw}{2\pi i} z^{n+1}w^{m+1} \wh{T}(z)\wh{T}(w)-
\oint_{\gamma_{0,R}}\frac{dw}{2\pi i} \oint_{\gamma_{0,r}} \frac{dz}{2\pi i} w^{m+1}z^{n+1} \wh{T}(w)\wh{T}(z)\\
=\oint_{\gamma_{0,R}}\frac{dw}{2\pi i} \oint_{\Gamma} \frac{dz}{2\pi i} z^{n+1}w^{m+1} \mc{R}\left(\wh{T}(z)\wh{T}(w)\right).
\end{multline}
Here we denoted $\gamma_{z,r}$ the circle of radius $r$ centered at $z$, with counterclockwise orientation; we assume the two radii to satisfy $0<r<R$;
 $\Gamma$ is the 1-cycle $\gamma_{R'}-\gamma_{r}$ with $R'>R$. We are exploiting the freedom to deform the integration contour, due to holomorphicity of the integrand for $z\neq w$ and $z,w\neq 0$ (in particular, the property (\ref{l23 dbar T=0})). We can then further deform the contour $\Gamma$ to the circle $\gamma_{w,\epsilon}$ centered at $w$, of radius $0<\epsilon<R$.
 
\begin{figure}[H]
\begin{center}
\includegraphics[scale=0.7]{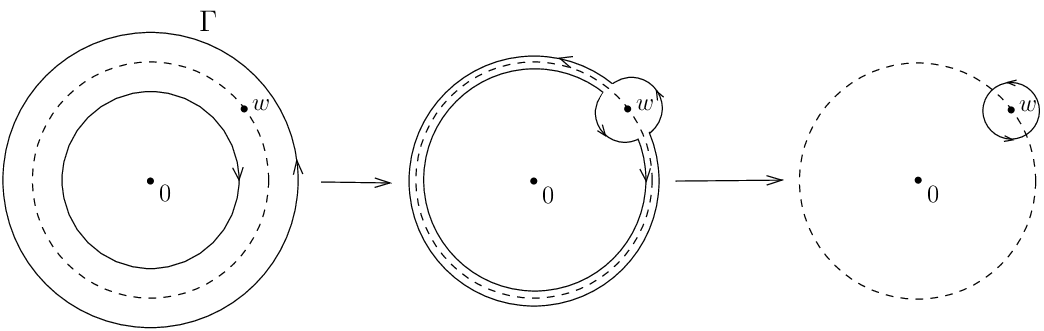}
\end{center}
\caption{Deformation of the integration contour for the integral over $z$ (solid curve). The dashed circle is the (fixed) integration contour for $w$.}
\end{figure}
 
Replacing the radially ordered product of stress energy tensors with the OPE (\ref{l23 TT}), we have then
\begin{multline}
[\wh{L}_n,\wh{L}_m]=
\oint_{\gamma_{0,R}}\frac{dw}{2\pi i} \oint_{\gamma_{w,\epsilon}} \frac{dz}{2\pi i} z^{n+1}w^{m+1} \left(\frac{\frac{c}{2}\wh{\mathbb{1}}}{(z-w)^4}+\frac{2\wh{T}(w)}{(z-w)^2}+\frac{\dd\wh{T}(w)}{z-w}+\mr{reg.}\right)\\
\hspace{-1cm}
\underset{
\begin{array}{c}
z=w+\alpha,\\
\mr{expand\;in\;}\alpha
\end{array}
}{=}
\oint_{\gamma_{0,R}}\frac{dw}{2\pi i} \oint_{\gamma_{0,\epsilon}} \frac{d\alpha}{2\pi i} \left(w^{n+1}+(n+1)w^n\alpha+\frac{(n+1)n}{2}w^{n-1}\alpha^2+\frac{(n+1)n(n-1)}{6}\alpha^3+\cdots\right)\cdot\\
\cdot
w^{m+1} \left(\frac{\frac{c}{2}\wh{\mathbb{1}}}{\alpha^4}+\frac{2\wh{T}(w)}{\alpha^2}+\frac{\dd\wh{T}(w)}{\alpha}+\mr{reg.}\right).
\end{multline}
Here the integral over $\alpha$ simply computes the residue at $\alpha=0$ of the integrand, i.e., the coefficient of $\alpha^{-1}$. Thus, continuing the computation we have
\begin{multline}
[\wh{L}_n,\wh{L}_m]=
\oint_{\gamma_{0,R}}\frac{dw}{2\pi i}\Big( \underbrace{w^{n+m+2} \dd\wh{T}(w)}_{\mr{integrate\;by\;parts}}+2(n+1)w^{n+m+1}\wh{T}(w)+\frac{(n+1)n(n-1)}{6}\frac{c}{2} w^{n+m-1}\wh{\mathbb{1}} \Big)\\
=\oint_{\gamma_{0,R}}\frac{dw}{2\pi i}\Big( 
\underbrace{(2(n+1)-(n+m+2))}_{n-m} w^{n+m+1}\wh{T}(w) + \frac{c}{12}(n^3-n)w^{n+m-1}\wh{\mathbb{1}}
\Big)\\
\underset{(\ref{l23 L via T})}{=}(n-m)\wh{L}_{n+m}+\frac{c}{12}(n^3-n)\delta_{n,-m}\wh{\mathbb{1}}.
\end{multline}
This is indeed the Virasoro commutation relation (\ref{l22 Virasoro via L}).
This proves case (i) of Lemma \ref{l23 lemma: Virasoro from TT OPE}. The other two cases are proved similarly.

\subsection{Digression: path integral heuristics, variation of a correlator in metric as an insertion of the stress-energy tensor, trace anomaly}
In the context of path integral quantization on a Riemann surface $\Sigma$, a correlator is represented by averaging over the space of classical fields with measure $e^{-S(\phi)}$:
\begin{equation}
\langle \Phi_1(z_1)\cdots \Phi_n(z_n) \rangle \colon= \int_{\F_\Sigma} \mc{D}\phi\, e^{-S(\phi)} \Phi(z_1)\cdots \Phi(z_n).
\end{equation}
We denote the classical field $\phi$.
The variation of this expression w.r.t. metric on $\Sigma$ is given by
\begin{multline}\label{l23 delta_g <...>}
\delta_g  \langle \Phi_1(z_1)\cdots \Phi_n(z_n) \rangle=
\delta_g\int_{\F_\Sigma} \mc{D}\phi \, e^{-S_g(\phi)}  \Phi(z_1)\cdots \Phi(z_n)\\
\underset{\mr{naively}}{=}\int_{\F_\Sigma} \mc{D}\phi \, e^{-S_g(\phi)}\,\underbrace{(-\delta_g S_g(\phi))}_{\frac{1}{2\pi}\int_\Sigma d^2z(T\mu +\ol{T}\bar\mu)}  \Phi(z_1)\cdots \Phi(z_n)\\
=\langle \frac{1}{2\pi}\int_\Sigma d^2z(T(z)\mu(z) +\ol{T}(z)\bar\mu(z))\; \Phi_1(z_1)\cdots \Phi_n(z_n) \rangle.
\end{multline}
Here we used the parametrization of a variation of metric (\ref{l14 variation of g}) by Beltrami differentials and Weyl factor, 
and we used formula (\ref{l14 delta_g S via Beltrami}). Computation (\ref{l23 delta_g <...>}) tells us that the variation of a correlator in metric is given by the insertion of an extra field in the correlator -- the stress-energy tensor contracted with the Beltrami differential.

In fact, if the background metric on $\Sigma$ is not flat, there is a correction to the result (\ref{l23 delta_g <...>}) due to conformal anomaly (\ref{l2 change of Z under Weyl}):
\begin{equation}\label{l23 trace anomaly}
\delta_g  \langle \Phi_1(z_1)\cdots \Phi_n(z_n) \rangle=
\langle \frac{1}{2\pi}\int_\Sigma d^2z\Big(T(z)\mu(z) +\ol{T}(z)\bar\mu(z)\boxed{+c\frac{R_g(z)}{24}\omega(z)\mathbb{1}}\Big)\; \Phi_1(z_1)\cdots \Phi_n(z_n) \rangle
\end{equation}
where $\omega$ is the infinitesimal Weyl factor (cf. (\ref{l14 variation of g})) and $R_g$ is the scalar curvature of the metric. Heuristically, this correction can be  attributed to the dependence of the path integral measure in (\ref{l23 delta_g <...>}) on the metric. 

The correction term in (\ref{l23 trace anomaly}) corresponds to the fact that although classically the stress-energy tensor is traceless, in the quantum theory on a curved manifold the trace $\tr\, T= 4T_{z\bar{z}}$ has nonzero expectation value
\begin{equation}
\langle \tr\, T \rangle = c \frac{R(z)}{6}.
\end{equation}
This phenomenon is known as \emph{trace anomaly}. For instance, in the free boson theory one can obtain this result by calculating the variational derivative of the zeta-regularized determinant of the Laplacian in metric, cf. e.g. \cite[Appendix 5.A]{DMS}.

\marginpar{Lecture 24,\\
10/22/2022}

\section{Field-state correspondence in the example of the scalar field CFT}\label{sss: field-state correspondence}

Let us examine the field-state correspondence is the map (\ref{l23 field-state correspondence}),
\begin{equation}\label{l24 field-state correspondence}
\begin{array}{cccc}
\mathsf{s}\colon & V &\ra & \HH\\
& \Phi & \mapsto & \displaystyle \lim_{z\ra 0} \wh{\Phi}(z)|\vac\rangle
\end{array}
\end{equation}
in the case of the scalar field theory. We start with simple examples.

For $\Phi=i\dd\phi$, we have 
\begin{equation}\label{l24 eq1}
\mathsf{s}(i\dd\phi)\colon=\lim_{z\ra 0}i\dd\wh\phi(z)|\vac\rangle = \lim_{z\ra 0} \sum_{n\in \ZZ} z^{-n-1}\wh{a}_n |\vac\rangle
\end{equation}
where we used (\ref{l20 dd phi, dbar phi via a abar}) to express the derivative of the field operator in terms of creation/annihilation operators. Notice that for $n\geq 0$ one has $\wh{a}_n|\vac\rangle =0$, while for $n\leq -2$ one has $z^{-n-1}\underset{z\ra 0}{\ra} 0$. So, the only surviving term in the r.h.s. of (\ref{l24 eq1}) is $n=-1$:
\begin{equation}
\mathsf{s}(i\dd\phi)= \wh{a}_{-1}|\vac\rangle
\end{equation}
-- a state with a single left-mover of energy-momentum $(1,1)$.

For higher derivatives of the fundamental field $\phi$ we find
\begin{equation}
\mathsf{s}(i\dd^p\phi)=\lim_{z\ra 0} i\dd^p \wh\phi(z) |\vac\rangle = 
\lim_{z\ra 0} \sum_{n\in\ZZ} (-n-1)(-n-2)\cdots (-n-p+1)z^{-n-p} \wh{a}_n|\vac\rangle
\end{equation}
where $p\geq 1$.
In the r.h.s. the summand satisfies the following:
\begin{itemize}
\item vanishes for $n\geq 0$, since then $\wh{a}_n|\vac\rangle =0$,
\item vanishes as $z\ra 0$ for $n\leq -p-1$, since then $\lim_{z\ra 0}z^{-n-p}=0$,
\item vanishes for $n=-1,-2,\ldots,-p+1$, since then the product ${(-n-1)(-n-2)\cdots (-n-p+1)}$ vanishes.
\end{itemize}
Thus, the only surviving term is $n= -p$:
\begin{equation}
\mathsf{s}(i\dd^p\phi)=(p-1)!\, \wh{a}_{-p}|\vac\rangle
\end{equation}
-- a state with a single left-mover of energy-momentum $(p,p)$.

\begin{remark}\label{l24 rem s(phi)}
Note that 
\begin{equation}
\fs(\phi)=\lim_{z\ra 0} \wh\phi(z)|\vac\rangle \underset{(\ref{l19 phi})}{=} \lim_{z\ra 0} \wh\phi_0 |\vac\rangle
\end{equation}
is ill-defined. This absence of the image of $\phi$ under field-state correspondence (together with the fact that correlators of $\phi$ are ill-defined) reinforces the point that $\phi$ should not be considered as an element of $V$ (while derivatives of $\phi$ are in $V$).
\end{remark}

As a more complicated example, consider the normally ordered differential monomial $\Phi=:i\dd\phi\,\dd\phi:$,
\begin{equation}\label{l24 eq2}
\fs(:i\dd\phi\, i\dd\phi:)=\lim_{z\ra 0} :i\dd\wh\phi(z)\, i\dd\wh\phi(z):|\vac\rangle = \sum_{n,m\in\ZZ} z^{-n-m-2} :\wh{a}_n\wh{a}_m:|\vac\rangle = \wh{a}_{-1}\wh{a}_{-1} |\vac\rangle
\end{equation}
-- the state with two left-moving quanta of energy-momentum $(1,1)$. Here the only surviving term in the double sum is $n=m=-1$, similarly to the situations above.

In particular, since the quantum stress-energy tensor (as an element of $V$) $T=-\frac12 :\dd\phi \dd\phi:$, we have
\begin{equation}
\fs(T)=\frac12 \wh{a}_{-1}\wh{a}_{-1} |\vac\rangle.
\end{equation}
Note that using (\ref{l23 L_scalar via a a}), we can write the r.h.s. as $\wh{L}_{-2}|\vac\rangle$.

\begin{remark}
In fact, in any CFT one has
\begin{equation}
\fs(T)=\wh{L}_{-2}|\vac\rangle.
\end{equation}
This, together with properties $\wh{L}_{\geq -1}|\vac\rangle =0$ and $\wh{L}_{-2-p}=\fs(\frac{1}{p!}\dd^p T)$ for $p\geq 0$, is a consequence of (\ref{l23 T via L, Tbar via Lbar}).
\end{remark}

A generalization of the examples above is the case where $\Phi$ is a general normally-ordered differential monomial in $\phi$:
\begin{multline}\label{l24 s(diff poly)}
\fs\Big(:\left( \prod_{j=1}^r \frac{i\dd^{n_j}\phi}{(n_j-1)!}\right) \left( \prod_{k=1}^s \frac{i\bar\dd^{\bar{n}_k}\phi}{(\bar{n}_k-1)!}\right)   :\Big) =\\ =
\wh{a}_{-n_1}\cdots \wh{a}_{-n_r} \wh{\ol{a}}_{-\bar{n}_1}\cdots \wh{\ol{a}}_{-\bar{n}_s}|\vac\rangle \underset{(\ref{l17 basis vector in H})}{=} \left|0;\{n_i\};\{\bar{n}_j\}\right\rangle
\end{multline}
where $1\leq n_1\leq\cdots\leq n_r$, $1\leq \bar{n}_1\leq\cdots \leq \bar{n}_s$. The computation is similar to the computations above (only a single term in the $(r+s)$-fold sum survives). Note that we identified all basis vectors  (\ref{l17 basis vector in H}) of $\HH$ \ul{with $\pi_0=0$} as images of particular vectors in $V$ (differential monomials), under the field-state correspondence. 

Since the map (\ref{l24 field-state correspondence}) is supposed to be an isomorphism, this means that $V$ should contain some more elements in addition to differential polynomials in $\phi$,\footnote{
We mean normally-ordered differential polynomials, where $\phi$ is not allowed to appear without derivatives, cf. Remark \ref{l24 rem s(phi)}.
} with images of these extra elements giving the states with $\pi_0\neq 0$.

\subsection{Vertex operators (in the scalar field theory)}\label{sss vertex operators}
A vertex operator is defined as
\begin{equation}\label{l24 Vhat}
\wh{V}_\alpha(z)\colon= \;\; :e^{i\alpha\wh\phi(z)}:
\end{equation}
where $\alpha\in \RR$ is a parameter (``charge'').
We emphasize that the vertex operator is a construction specific to the scalar field theory. We understand the operator (\ref{l24 Vhat}) as a local operator acting on $\HH$, corresponding to an abstract field $V_\alpha\in V$ placed at a point $z\in\CC$.

Let us find the state corresponding to the vertex operator $V_\alpha$:
\begin{multline}\label{l24 eq3}
\fs(V_\alpha)= \lim_{z\ra 0} \wh{V}_\alpha |\vac\rangle = \lim_{z\ra 0}:e^{i\alpha \wh\phi(z)}: |\vac\rangle  =
\\ \underset{(\ref{l19 phi})}{=} 
e^{i\alpha \sum_{n<0} \frac{i}{n}(\wh{a}_n z^{-n}+\wh{\ol{a}}_n \bar{z}^{-n}) } 
e^{i\alpha \sum_{n>0} \frac{i}{n}(\wh{a}_n z^{-n}+\wh{\ol{a}}_n \bar{z}^{-n}) } 
e^{i\alpha\wh\phi_0}   \cancel{e^{\alpha \wh\pi_0 \log(z\bar{z})}} |\vac\rangle
\end{multline}
Here the last exponential acting on $|\vac\rangle$ acts as identity, since $\wh\pi_0|\vac\rangle =0$. The next observation is that in Schr\"odinger representation of the quantum free particle system (corresponding to the zero-mode $\phi_0,\pi_0$), with states being $L^2$ functions of $\pi_0$, one has $\wh\pi_0 = \pi_0\cdot$ a multiplication operator and $\wh\phi_0 = i \frac{\dd}{\dd \pi_0}$ a derivation operator (cf. the discussion in  Example \ref{l19 ex <phi phi>}). Thus, the exponential 
$e^{i\alpha \wh\phi_0}\colon{ \psi(\pi_0)\mapsto \psi(\pi_0-\alpha)}$ is the shift operator. In particular, it maps the vacuum $|\pi_0=0\rangle$ represented by the delta-function centered at zero to the delta-function centered at $\alpha$, i.e., the vector $|\pi_0=\alpha\rangle$. In other words, $e^{i\alpha \wh\phi_0}$ maps the vacuum $|\vac\rangle$ to the pseudo-vacuum $|\pi_0=\alpha\rangle$ with zero-mode momentum $\alpha$. Thus, continuing the computation (\ref{l24 eq3}), we have
\begin{equation}
\fs(V_\alpha)=e^{i\alpha \sum_{n<0} \frac{i}{n}(\wh{a}_n z^{-n}+\wh{\ol{a}}_n \bar{z}^{-n}) } 
e^{i\alpha \sum_{n>0} \frac{i}{n}(\wh{a}_n z^{-n}+\wh{\ol{a}}_n \bar{z}^{-n}) }  |\pi_0=\alpha\rangle
\end{equation}
Here the right exponential acts by identity, since the annihilation operators in the exponent kill the pseudovacuum. The left exponential becomes identity as $z\ra 0$, thus one has
\begin{equation}
\fs(V_\alpha)= |\pi_0=\alpha\rangle
\end{equation}

So, the image of a vertex operator under the field-state correspondence is a pseudovacuum.
Combining this computation with the computation with (\ref{l24 s(diff poly)}), we have
\begin{equation}
\fs(: (\mr{differential\; monomial\;in\;}\phi)\cdot V_\alpha:) = |\alpha; \{n_i\};\{\bar{n}_j\}\rangle
\end{equation}
with differential monomial as in the l.h.s. of (\ref{l24 s(diff poly)}). Thus are recovering all basis vectors of $\HH$ as images of elements of $V$, once we have adjoined the vertex operators. Put another way, for the field-state correspondence to be an isomorphism, we should set the space of local fields in the scalar field theory to be
\begin{equation}
V= \mr{span}_\CC\{ :(\mr{differential\; polynomials\;in\;}\phi)\cdot V_\alpha:\;\;|\;\; \alpha\in \RR\}
\end{equation}
where as usual differential polynomials are not allowed to contain $\phi$ without derivatives.

\marginpar{Lecture 25,\\
10/26/2022}

\section{Local Virasoro action at a puncture
}
We continue with the CFT data/axioms list:
\begin{enumerate}[(I)]
\setcounter{enumi}{8}
\item \textbf{Local projective action of conformal vector fields on fields at a point $z$.}
Similarly to the projective action (\ref{l23 rho}) of conformal vector fields on states, one has a projective action of conformal vector fields with singularities (vector fields of the form $v=u(w)\dd_w+\ol{u(w)}\dd_{\bar{w}}$ with $u$ a meromorphic function)
 on fields at a point $z\in \CC\backslash\{0\}$,
\begin{equation}
\rho^{(z)}\colon \conf_\mr{sing}(\CC) \ra \mr{End}(V_z)
\end{equation}
given by 
\begin{equation}\label{l25 rho}
\rho^{(z)}(u\dd+\bar{u}\bar\dd)\circ \Phi(z)\colon =-\frac{1}{2\pi i}\oint_{\gamma_z}\left( 
dw\, u(w)\, T(w)\,\Phi(z) 
+ d\bar{w}\, \ol{u(w)}\, \ol{T}(w)\,\Phi(z) 
\right)
,
\end{equation}
for any field $\Phi(z)\in V_z$. Here $\gamma_z$ is a contour going around $z$ once in a positive direction (and small enough so that it does not enclose any poles of $u$ apart from $z$). We understand the r.h.s. of (\ref{l25 rho}) as defining a new local field at $z$.
Equality (\ref{l25 rho}) is understood either (a) as an equality under a correlator with an arbitrary collection of test field, or (b) as equality of local operators (then we put hats on $T$, $\Phi$ and the l.h.s., and we radially order the operator product in the r.h.s.).
\end{enumerate}

In particular, the vector fields $-(w-z)^{n+1}\dd_w$  (standard meromorphic vector fields generating the Witt algebra, centered at $z$ instead of the origin) correspond to certain operators $L_n^{(z)}$ acting on $V_z$:
\begin{equation}\label{l25 L_n loc}
L_n^{(z)}\Phi(z)\colon =\rho^{(z)} (-(w-z)^{n+1}\dd_w)\Phi(z)=\frac{1}{2\pi i}\oint_{\gamma_z} dw\,(w-z)^{n+1} T(w)\, \Phi(z)
\end{equation}
We will also write the l.h.s. as $(L_n\Phi)(z)$.
Calculating the integral on the right as a residue, we observe that the fields $L_n\Phi$ are the coefficients of the OPE $T(w)\Phi(z)$:
\begin{multline}\label{l25 T Phi via L_n}
T(w)\Phi(z)\sim \sum_{n\in \ZZ} (w-z)^{-n-2} (L_n\Phi)(z)=\\
= \cdots+\frac{(L_{1}\Phi)(z)}{(w-z)^3} +\frac{(L_{0}\Phi)(z)}{(w-z)^2}+\frac{(L_{-1}\Phi)(z)}{w-z}+\underbrace{(L_{-2}\Phi)(z)+(w-z)(L_{-3}\Phi)(z)+\cdots}_{\mr{reg.}}
\end{multline}

By an argument similar to  Lemma \ref{l23 lemma: Virasoro from TT OPE} (and the computation of Section \ref{sss: Virasoro from TT OPE}), operators $L_n^{z}$ acting on $V_z$ satisfy Virasoro commutation relations.

Similarly to (\ref{l25 L_n loc}), one defines operators $\ol{L}_n^{(z)}$ acting on $V_z$, corresponding to the terms in the OPE $\ol{T}(w)\Phi(z)$.

Remark \ref{l23 rem: L_n^+} has an analog for the hermitian conjugates of  the operators $L_n^{(z)},\ol{L}_n^{(z)}$:
\begin{equation}\label{l25 L_n^+ local}
(L_n^{(z)})^+=L_{-n}^{(z)},\quad (\ol{L}_n^{(z)})^+=\ol{L}_{-n}^{(z)}.
\end{equation}
This follows from Remark \ref{l23 rem: L_n^+} by field-state correspondence.

\begin{remark}
Consider the OPE (\ref{l25 T Phi via L_n}) for $\Phi=\mathbb{1}$ the identity field. One has
\begin{equation}\label{l25 T 1}
T(w)\mathbb{1}(z)=T(w) =\sum_{n\geq 0}\frac{1}{n!}(w-z)^n \dd^n T(z)
\end{equation}
Where on the right we have the Taylor expansion of $T(w)$ centered at $z$; the sum is convergent for $w$ sufficiently close to $z$.\footnote{
More precisely, under a correlator with test fields $\Phi_1(z_1),\ldots,\Phi_n(z_n)$, the field $T(w)$ can be replaced with the r.h.s. of (\ref{l25 T 1}) -- and the sum is convergent -- if $|w-z|<|z_i-z|$ for all $i$.
}
Comparing the coefficients in (\ref{l25 T 1}) and in (\ref{l25 T Phi via L_n}) with $\Phi=\mathbb{1}$, we obtain 
\begin{equation}
\ldots,L_1 \mathbb{1}=0,\; L_0 \mathbb{1}=0,\; L_{-1} \mathbb{1} =0,\; \boxed{L_{-2} \mathbb{1}=T},\; L_{-3} \mathbb{1}=\dd T,\; L_{-4}\mathbb{1}=\frac{1}{2!}\dd^2 T\,\ldots
\end{equation}
One has similar formulae for $\ol{L}_n\mathbb{1}$, in particular, one has $\ol{L}_{-2}\mathbb{1}=\ol{T}$.
\end{remark}

\begin{enumerate}[(I)]
\setcounter{enumi}{9}
\item \textbf{$L_{-1}$ axiom.}\footnote{Informally, the axiom 
can be phrased as ``$L_{-1}$ acts by infinitesimally moving the puncture $z$.''
} 
For any $\Phi\in V$ one has
\begin{eqnarray}
(L_{-1}\Phi) (z) &=& \dd \Phi(z),\\
(\ol{L}_{-1}\Phi) (z) &=& \bar\dd \Phi(z).
\end{eqnarray}
Here one understands that the field $\dd\Phi(z)$ is defined by its behavior under a correlator with test fields: $\langle \dd\Phi(z) \Phi_1(z_1)\cdots \Phi_n(z_n) \rangle = \dd_z \langle \Phi(z)\Phi_1(z_1)\cdots \Phi_n(z_n) \rangle$ (or in the language of field operators, $\wh{\dd\Phi}(z)\colon= \frac{\dd}{\dd z} \wh\Phi(z)$). The case of $\bar\dd\Phi$ is similar.
\end{enumerate}

\begin{remark} If $v=u\dd+\bar{u}\bar\dd$ is a conformal vector field on $\CC$ without singularities (except possibly at zero), then the field operator corresponding to (\ref{l25 rho}) is
\begin{equation}\label{l25 rho=[rho,-]}
\wh{\rho^{(z)}(v)\circ\Phi(z)}=[\rho(v),\wh\Phi(z)]
\end{equation}
where the r.h.s. is the commutator of the field operator with the operator (\ref{l23 rho}) representing the vector field $v$ on the space of states $\HH$. Equality (\ref{l25 rho=[rho,-]}) is proven by a contour integration trick similar to one of Section \ref{sss: Virasoro from TT OPE}: the r.h.s. of (\ref{l25 rho=[rho,-]}) is an integral over a cycle $\Gamma$ -- the difference of two circles, one of radius $R>|z|$ and one of radius $r<|z|$; this contour can be deformed to a single circle centered at $z$, which yields the l.h.s. of (\ref{l25 rho=[rho,-]}).
\end{remark}

\begin{definition}\label{l25 def: conformal dimension}
We say that a field $V\in \Phi$ has \emph{conformal weight} (or \emph{conformal dimension}) $(h,\bar{h})\in \RR^2$ if one has
\begin{equation}
(L_0\Phi)(z) = h\Phi(z),\qquad (\ol{L}_0\Phi)(z)=\bar{h}\Phi(z),
\end{equation}
i.e., $\Phi$ is an eigenvector of operators $L_0,\ol{L}_0$ simultaneously, with eigenvalues $h,\bar{h}$.
\end{definition}

\begin{example} \label{l25 ex: T conformal weight}
Consider (\ref{l25 T Phi via L_n}) for $\Phi=T$ and compare with the standard $TT$ OPE (\ref{l23 TT}). We obtain
\begin{equation}
L_{\geq 3}T=0,\quad L_2 T = \frac{c}{2}\mathbb{1},\quad L_1 T=0,\quad L_0 T=2 T,\quad L_{-1}T=\dd T
\end{equation}
Likewise, from $\ol{T} T$ OPE (\ref{l23 T Tbar}) we have
\begin{equation}
\ol{L}_{\geq -1}T=0.
\end{equation}
In particular, we see that $T$ has conformal weight $(h,\bar{h})=(2,0)$. Similarly, $\ol{T}$ has conformal weight $(0,2)$.
\end{example}

We will be assuming that $L_0,\ol{L}_0$ are simultaneously diagonalizable on $V$\footnote{
There are interesting examples of CFTs where this diagonalizability assumption fails. Such CFTs are called ``logarithmic.''
} (this assumption is in fact a part of the highest weight axiom (\ref{l25 highest weight axiom}) below),
thus
the space $V$ is bi-graded by conformal weight:
\begin{equation}
V=\bigoplus_{h,\bar{h}\subset \Delta}V^{h,\bar{h}}
\end{equation}
where $\Delta\subset \RR^2$ is some set of admissible conformal weights.

The action of a Virasoro generator $L_{-n}$ changes the conformal weight of a field as\footnote{
This is a consequence of the relation $[L_0,L_{-n}]=nL_{-n}$ in Virasoro algebra: if $L_0\Phi=h\Phi$, then one has $L_0 (L_{-n}\Phi)=L_{-n}(L_0+n)\Phi=(h+n)L_{-n}\Phi$. Likewise, $[\ol{L}_0,L_{-n}]=0$ implies that the eigenvalue of $\ol{L}_0$ does not change under the action of $L_{-n}$.
} 
\begin{equation}\label{l25 h -> h+n}
(h,\bar{h})\ra (h+n, \bar{h}).
\end{equation}
Similarly, the action of $\ol{L}_{-n}$ changes the conformal weight as 
\begin{equation}\label{l25 hbar -> hbar+n}
(h,\bar{h})\ra (h,\bar{h}+n).
\end{equation}

The following is a standard assumption on admissible conformal weights.
\begin{assumption}\label{l25 assump: (h,hbar) properties}\leavevmode
\begin{enumerate}[(a)]
\item \label{l25 assump (a)} $\Delta\subset \RR_{\geq 0}\times \RR_{\geq 0}$,\footnote{Otherwise the 2-point correlator $\langle \Phi(z) \Phi(w)\rangle$ can grow as points $z$ and $w$ become farther and farther apart, which contradicts the physical intuition of local interactions.}
\item \label{l25 assump (b)} If $(h,\bar{h})\in \Delta$ then\footnote{Needed for single-valuedness of correlators, cf. Remark \ref{l4 rem: h-hbar in Z}.}
\begin{equation}\label{l25 h-hbar in Z}
h-\bar{h}\in \ZZ,
\end{equation}
\item \label{l25 assump (c)} $V^{0,0}=\mr{Span}(\mathbb{1})$.
\end{enumerate}
\end{assumption}

\begin{remark}
Assumptions (\ref{l25 assump (a)}), (\ref{l25 assump (c)}) above pertain to unitary CFTs. E.g., (\ref{l25 assump (a)}) is implied by unitarity and the highest weight axiom (\ref{l25 highest weight axiom}).\footnote{
Indeed, for $\Phi$ a primary field of conformal weight $(h,\bar{h})$, one has $0\leq ||L_{-1}\Phi ||^2= \langle L_{-1}\Phi,\L_{-1}\Phi \rangle_V= \langle \Phi, L_1 L_{-1} \Phi \rangle_V =
\langle \Phi, (2L_0+L_{-1}L_1)\Phi \rangle = 2h ||\Phi ||^2$, which implies $h\geq 0$. Here we used that $L_1\Phi=0$, since $\Phi$ is assumed to be primary. By a similar argument, $\bar{h}\geq 0$. By Axiom (\ref{l25 highest weight axiom}), any field is a descendant of a primary field (or a linear combination of such). Hence, knowing that conformal weights are nonnegative for primary fields ensures that they are nonnegative for all fields.
} Assumption (\ref{l25 assump (c)}) does not follow from unitarity, but is rather an axiom in a unitary CFT -- the uniqueness (nondegeneracy) of vacuum.

  There are interesting non-unitary CFTs where assumptions (\ref{l25 assump (a)}), (\ref{l25 assump (c)}) fail, e.g., the $bc$ system, see Section \ref{s bc system}.\footnote{
In the $bc$ system, the ghost field $c$ has conformal weight $(-1,0)$, thus violating (\ref{l25 assump (a)}). Also, the field $\dd c$ has conformal weight $(0,0)$ and is linearly independent from $\mathbb{1}$, thus violating (\ref{l25 assump (c)}).
}

Assumption (\ref{l25 assump (b)}) holds in ``ordinary'' CFTs, with single-valued correlators. However, it is useful to consider 
(as auxiliary objects) 
a class of CFTs where (\ref{l25 assump (b)}) fails -- the so-called chiral CFTs. They arise in the holomorphic factorization of the correlators of ordinary (single-valued) CFTs, cf. Sections \ref{ss chiral boson}, \ref{ss space of states for chiral fermion}, \ref{ss WZW conf blocks}.
\end{remark}

%


\section{Primary fields}
\begin{definition}
A field $V\in \Phi$ is said to be primary, of conformal weight $(h,\bar{h})$ if it satisfies the OPE
\begin{equation}\label{l25 primary def via OPE}
T(w)\Phi(z) \sim \frac{h \Phi(z)}{(w-z)^2}+\frac{\dd \Phi(z)}{w-z}+\mr{reg.},\quad  
\ol{T}(w)\Phi(z) \sim \frac{\bar{h} \Phi(z)}{(\bar{w}-\bar{z})^2}+\frac{\bar\dd \Phi(z)}{\bar{w}-\bar{z}}+\mr{reg.}
\end{equation} 
\end{definition}
Equivalently, $\Phi\in V$ is primary, with conformal weight $(h,\bar{h})$, if
\begin{equation}
\begin{array}{ll}
L_{>0}\Phi=0, & \ol{L}_{>0}\Phi=0,\\
L_0\Phi =h\Phi, & \ol{L}_0\Phi=\bar{h}\Phi.
\end{array}
\end{equation}
Put another way, a primary field is a highest weight vector of $V$ as a module over $\mr{Virasoro}\oplus \ol{\mr{Virasoro}}$, of weight $(h,\bar{h})$.

For $\Phi$ a primary field, fields obtained from it by repeated application of negative Virasoro generators $L_{<0},\ol{L}_{<0}$, i.e., fields of the form 
\begin{equation}\label{l25 descendant}
L_{-k_r}\cdots L_{-k_1}\ol{L}_{-l_s}\cdots \ol{L}_{-l_1}\Phi
\end{equation}
with $k_1,\ldots,k_r,l_1,\ldots, l_s\geq 1$, are called ``descendants'' of $\Phi$. If $\Phi$ has conformal weight $(h,\bar{h})$ then the descendant (\ref{l25 descendant}) has conformal weight $(h+\sum_i k_i, \bar{h}+\sum_j l_j)$ (cf. (\ref{l25 h -> h+n}), (\ref{l25 hbar -> hbar+n})).
The subspace of $V$ spanned by all descendants of a primary field $\Phi$ is called the ``conformal family'' of $\Phi$.

\begin{enumerate}[(I)]
\setcounter{enumi}{10}
\item \textbf{Highest weight axiom.} The space of fields $V$ splits as a direct sum of irreducible highest weight modules of the Lie algebra $\mr{Vir}\oplus\ol{\mr{Vir}}$ with primary fields being the highest weight vectors:
\begin{equation}\label{l25 highest weight axiom}
V=\bigoplus_\alpha V^{(\Phi_\alpha)}.
\end{equation}
Here the sum is over species of primary fields (i.e. over a basis in the subspace of primary fields in $V$); $V^{(\Phi_\alpha)}$ is the conformal family of $\Phi_\alpha$.
\end{enumerate}

\begin{remark}\label{l25 rem linear dependencies between descendants}
There can be linear dependencies between descendants of a given $\Phi_\alpha$.\footnote{For instance, in any CFT one has $L_{-1}\mathbb{1}=0$. Also, see (\ref{l25 vanishing descendant}) below for a nontrivial example in scalar field theory.}
More precisely, one can consider a \emph{Verma module} $\mathbb{V}^{h,\bar{h}}$ (free highest weight module) of the Lie algebra $\mr{Vir}\oplus\ol{\mr{Vir}}$ -- the span of formal expressions $L_{-k_r}\cdots L_{-k_1}\ol{L}_{-l_s}\cdots \ol{L}_{-l_1}\Phi_\alpha$ with $1\leq k_1\leq\cdots \leq k_r$, $1\leq l_1\leq\cdots\leq l_s$ (i.e. all ordered descendants are considered to be independent), with $\Phi_\alpha$ a vector of weight $(h,\bar{h})$ and annihilated by $L_{>0},\ol{L}_{>0}$. Then $V^{(\Phi_\alpha)}$ is the quotient of the Verma module $\mathbb{V}^{h,\bar{h}}$ by a submodule,
\begin{equation}
V^{(\Phi_\alpha)}\simeq\mathbb{V}^{(h,\bar{h})}/\mathsf{N}
\end{equation}
The submodule $\mathsf{N}$ that one quotients out 
is the kernel of the sesquilinear form $\langle,\rangle$ on $\mathbb{V}^{h,\bar{h}}$, defined in such a way that one has $L_n^+=L_{-n}, \ol{L}_n^+=\ol{L}_{-n}$ and the highest vector has norm $1$ (in particular, vectors in $\mathsf{N}$ have zero norm). Also, $\mathsf{N}\subset \mathbb{V}^{h,\bar{h}}$ is the submodule generated by ``null vectors'' $\chi\in\mathbb{V}^{h,\bar{h}}$ -- vectors with with the property $L_{>0}\chi=0, \ol{L}_{>0}\chi=0$.
We refer to Chapter \ref{ch Vir rep theory} for more details.
\end{remark}

\subsection{Transformation property of a primary field}\label{sss: tranformation property of primary fields}
Let us fix a conformal vector field $v=u(w)\dd_w+\ol{u(w)}\dd_{\bar{w}}$ regular at $z$. For $\Phi\in V$ a primary of conformal weight $(h,\bar{h})$, by (\ref{l25 rho}) and (\ref{l25 primary def via OPE})  we have
\begin{multline}\label{l25 delta_v Phi computation}
\rho^{(z)}(u\dd + \bar{u}\bar\dd)\Phi(z)=\\
=
-\frac{1}{2\pi i}\oint_{\gamma_z} dw\, \underbrace{u(w)}_{u(z)+(w-z)\dd u(z)+\cdots} \underbrace{T(w) \Phi(z)}_{\frac{h\Phi(z)}{(w-z)^2}+\frac{\dd\Phi(z)}{w-z}+\mr{reg.}} + d\bar{w} \underbrace{\ol{u(w)}}_{\ol{u(z)}+(\bar{w}-\bar{z})\bar\dd \ol{u(z)}+\cdots}\underbrace{\ol{T}(w)\Phi(z)}_{\frac{\bar{h}\Phi(z)}{(\bar{w}-\bar{z})^2}+\frac{\bar\dd \Phi(z)}{\bar{w}-\bar{z}}+\mr{reg.}}\\
=
-u(z)\dd\Phi(z)-\ol{u(z)}\bar\dd\Phi(z)-h\dd u(z)\Phi(z)-\bar{h} \bar\dd\,\ol{u(z)}\,\Phi(z)
\end{multline}
-- a computation of the contour integral as a residue.

\subsubsection{Finite version, interpretation \#1: ``active transformations.''}
Formula (\ref{l25 delta_v Phi computation}) expresses the change of a field under an infinitesimal conformal map. For a finite conformal (holomorphic) map $z\mapsto w(z)$, it implies that the field transforms as
\begin{equation}\label{l25 primary field finite transf}
%
\Phi(z) \mapsto \Phi'(w)= \left(\frac{\dd w}{\dd z}\right)^{-h}  \left(\frac{\dd \bar{w}}{\dd \bar{z}}\right)^{-\bar{h}} \Phi(z)
\end{equation}
As a check of compatibility with (\ref{l25 delta_v Phi computation}), take a map close to identity, $w(z)=z+\epsilon u(z)$. Then in the first order in $\epsilon$ we have
\begin{equation}
\delta \Phi(z) = \Phi'(z)-\Phi(z)=\epsilon\cdot(\mr{r.h.s.\;of\;(\ref{l25 delta_v Phi computation})})
\end{equation}
In Section \ref{sss constraints on corr from global conf sym} below
we will see 
that (\ref{l25 primary field finite transf}) will become an equivariance property of correlators of primary fields under the diagonal action of a global conformal map on all field insertion points.

\subsubsection{Finite version, interpretation \#2: ``passive transformations''} Instead of moving points on the surface $\Sigma=\CC$, we can think about $z\mapsto w(z)$ as a change of local coordinate. We will use $z,w$ as names of local coordinate charts and call $p$  (previously $z$) the point on $\Sigma$.
 Think of the vector bundle $\mc{V}$ of fields over $\Sigma$; it has typical fiber $V$ and its local trivialization at a point $p$ depends on a choice of local coordinate $z$ or $w$ around $p$. Thus there is an isomorphism $V\ra V_p$ from the standard fiber to the particular fiber over the point depending on a choice of a local coordinate near $p$.
 Fix $\Phi\in V$ a field and denote its image in $V_p$ using 
the chart $z$ by $\Phi_{(z)}(p)$. Then we have 
\begin{equation}
\Phi_{(w)}(p)=\left(\left.\frac{\dd w}{\dd z}\right|_p\right)^{-h}  \left(\left.\frac{\dd \bar{w}}{\dd \bar{z}}\right|_p\right)^{-\bar{h}}\Phi_{(z)}(p)
\end{equation} 
Thus, the Jacobian on the right hand side is the transition function of the vector bundle.

Put another way, the combination
\begin{equation}\label{l25 Phi underline}
\mathbf{\Phi}(z)=\Phi(z) (dz)^h (d\bar{z})^{\bar{h}}
\end{equation}
is a coordinate-independent object valued in the line bundle 
\begin{equation}\label{l25 line bun}
K^{h,\bar{h}}\colon=K^{\otimes h}\otimes \ol{K}^{\otimes \bar{h}}
\end{equation}
over $\Sigma$. Here $K=(T^{1,0})^*\Sigma$ is the line bundle of $(1,0)$-forms and $\ol{K}=(T^{0,1})^*\Sigma$ is the line bundle of $(0,1)$-forms. For instance (see below), a correlation function of primary fields is a section of the product of several line bundles (\ref{l25 line bun}) pulled back to the configuration space of points on $\Sigma$.

\subsection{Examples of primary fields in scalar field theory}\label{sss primary fields in scalar field CFT}
In the scalar field theory, we have the following.
\begin{itemize}
\item The field $\dd\phi(z)$ is primary, with $(h,\bar{h})=(1,0)$. This follows from the OPEs (\ref{l21 T dd phi OPE}), (\ref{l21 T dbar phi OPE}). 
Similarly, $\bar\dd\phi(z)$ is $(0,1)$-primary.
\item The stress-energy tensor $T$ is a field of conformal weight $(2,0)$, but it is not primary, since $T(w)T(z)$ OPE contains a fourth-order pole (\ref{l21 TT OPE computation}).
\item The field $\dd^2\phi$ has conformal weight $(2,0)$  but is not primary: differentiating (\ref{l21 T dd phi OPE}) we have
\begin{equation}
T(w) \dd^2\phi(z)\sim \frac{2\dd\phi(z)}{(w-z)^3}+\frac{2\dd^2\phi(z)}{(w-z)^2}+\frac{\dd^3\phi(z)}{w-z}+\mr{reg.}
\end{equation}
-- contains a third-order pole.
\end{itemize}

Here is another example.
\begin{lemma}
The vertex operator $V_\alpha=:e^{i\alpha\phi}:$, with $\alpha$ any real number, is primary, of conformal weight $(h,\bar{h})=(\frac{\alpha^2}{2},\frac{\alpha^2}{2})$.
\end{lemma}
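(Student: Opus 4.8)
The plan is to verify the defining primary-field OPE (\ref{l25 primary def via OPE}) directly, by computing the operator product $\wh{T}(w)\wh{V}_\alpha(z)$ in the free scalar theory via Wick's theorem, using the explicit forms $\wh{T}(w)=-\frac12:\dd\wh\phi(w)\dd\wh\phi(w):$ from (\ref{l23 T scalar}) and $\wh{V}_\alpha(z)=:e^{i\alpha\wh\phi(z)}:$ from (\ref{l24 Vhat}). The only free-field input needed is the holomorphic contraction $\langle \dd\wh\phi(w)\,\wh\phi(z)\rangle=-\frac{1}{w-z}$ (the derivative of the propagator); since both $\wh{T}$ and $\wh{V}_\alpha$ are already separately normal-ordered, only contractions \emph{between} the $w$-fields and the $z$-fields contribute.

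The key computational step is the identity that contracting one $\dd\wh\phi(w)$ against the vertex operator reproduces the vertex operator with a single extra factor, $\dd\wh\phi(w):e^{i\alpha\wh\phi(z)}:=\,:\dd\wh\phi(w)e^{i\alpha\wh\phi(z)}:-\frac{i\alpha}{w-z}:e^{i\alpha\wh\phi(z)}:$, which follows by resumming the contraction of $\dd\wh\phi(w)$ with each $\wh\phi(z)$ arising in the Taylor expansion of the exponential. Because $\wh{T}$ is \emph{quadratic} in $\dd\wh\phi(w)$, Wick's theorem then produces exactly three kinds of terms: a double contraction, a single contraction (carrying a combinatorial factor $2$ for the choice of which of the two $\dd\wh\phi(w)$ is contracted), and the fully normal-ordered remainder.

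I then read off the singular part term by term. The double-contraction term gives $-\frac12\left(-\frac{i\alpha}{w-z}\right)^2 \wh{V}_\alpha(z)=\frac{\alpha^2/2}{(w-z)^2}\wh{V}_\alpha(z)$, which is precisely the second-order pole $\frac{h\Phi}{(w-z)^2}$ with $h=\frac{\alpha^2}{2}$. The single-contraction term gives $\frac{i\alpha}{w-z}:\dd\wh\phi(w)e^{i\alpha\wh\phi(z)}:$; expanding $\dd\wh\phi(w)=\dd\wh\phi(z)+O(w-z)$ and using $:\dd\wh\phi(z)e^{i\alpha\wh\phi(z)}:=\frac{1}{i\alpha}\dd_z\wh{V}_\alpha(z)$ turns this into $\frac{\dd\wh{V}_\alpha(z)}{w-z}+\mr{reg.}$, matching the simple-pole term $\frac{\dd\Phi}{w-z}$ of (\ref{l25 primary def via OPE}). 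The fully normal-ordered remainder is regular at $w=z$. This establishes the holomorphic OPE with $h=\frac{\alpha^2}{2}$, and running the identical argument with $\wh{\ol{T}}$ and the antiholomorphic contraction $\langle\bar\dd\wh\phi(\bar w)\,\wh\phi(\bar z)\rangle=-\frac{1}{\bar w-\bar z}$ yields $\bar h=\frac{\alpha^2}{2}$.

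I expect the main obstacle to be bookkeeping rather than anything conceptual: getting the Wick combinatorics right (the factor $2$, and verifying that it is the \emph{full} exponential, not merely its low-order terms, that is reproduced at each contraction order), together with nailing the signs and normalization of the propagator so that the residues land on exactly $h=\frac{\alpha^2}{2}$ and $\dd\wh{V}_\alpha(z)$ rather than multiples thereof.
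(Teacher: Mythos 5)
Your proposal is correct and follows essentially the same route as the paper: both compute the $\wh{T}(w)\wh{V}_\alpha(z)$ operator product via Wick's theorem, identifying the double contraction with the second-order pole $\frac{\alpha^2/2}{(w-z)^2}\wh{V}_\alpha(z)$ and the single contraction (with its factor of $2$) with the simple pole $\frac{\dd\wh{V}_\alpha(z)}{w-z}$, the fully normal-ordered remainder being regular. The only cosmetic difference is that you package the contraction against the exponential as a resummed identity, whereas the paper expands $:e^{i\alpha\wh\phi(z)}:$ term by term and tracks the combinatorial factors $n(n-1)$ and $n$ explicitly.
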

Note that $h,\bar{h}$ are (generally) not integers! (Thus, in particular, we really do need real tensor powers of the line bundles in (\ref{l25 line bun})).

\begin{proof}
Let us calculate the OPE $T(w)V_\alpha(z)$ in the language of field operators:
\begin{multline}\label{l25 TV OPE computation}
\mc{R} \wh{T}(w)\wh{V}_\alpha(z) =\mc{R} :-\frac12 \dd\wh\phi(w)\dd\wh\phi(w):\;\sum_{n\geq 0} \frac{(i\alpha)^n}{n!} :\underbrace{\wh\phi(z)\wh\phi(z)\cdots \wh\phi(z)}_n:=
\\ \underset{\mr{Wick}}{=}
\wick{ :-\frac12 \dd\c1{\wh\phi(w)}\dd\c2{\wh\phi(w)}\;\sum_{n\geq 0} \frac{(i\alpha)^n}{n!} n(n-1) \c1{\wh\phi(z)}\c2{\wh\phi(z)}\underbrace{\wh\phi(z)\cdots \wh\phi(z)}_{n-2}:}+\\
+\wick{ :- \dd\wh\phi(w)\dd\c1{\wh\phi(w)}\;\sum_{n\geq 0} \frac{(i\alpha)^n}{n!} n \c1{\wh\phi(z)}\underbrace{\wh\phi(z)\cdots \wh\phi(z)}_{n-1}:}+ 
\underbrace{:\wh{T}(w) \wh{V}_\alpha(z):}_{\mr{reg.}}\\
\sim-\frac12 \frac{1}{(w-z)^2}(i\alpha)^2 \wh{V}_\alpha(z)+
\frac{1}{w-z}:\dd\wh\phi(w)(i\alpha)e^{i\alpha\wh{\phi}(z)}:+\mr{reg.}\\
\sim \frac{\frac{\alpha^2}{2}\wh{V}_\alpha(z)}{(w-z)^2}+\frac{\dd \wh{V}_\alpha(z)}{w-z}+\mr{reg.}
\end{multline}
The OPE $\ol{T}(w)V_\alpha(z)$ is computed similarly. Comparing with (\ref{l25 primary def via OPE}) we see that $V_\alpha$ is primary (no cubic or higher poles in the OPE with $T$, $\ol{T}$), and the conformal weight is $h=\bar{h}=\frac{\alpha^2}{2}$ as claimed.
\end{proof}

\begin{exercise}
\begin{enumerate}[(a)]
\item
Show that the field 
\begin{equation}\label{l25 h=4 primary field}
:2(\dd \phi)^4-3 (\dd^2\phi)^2+2\dd\phi\, \dd^3\phi:
\end{equation} 
is primary, of conformal weight $(4,0)$. Or, equivalently, check that the corresponding by (\ref{l24 s(diff poly)}) state $(2\wh{a}_{-1}^4+3 \wh{a}_{-2}^2-4 \wh{a}_{-3}\wh{a}_{-1})|\vac\rangle \in \HH$ is annihilated by operators $\wh{L}_{>0}$ and has eigenvalue $4$ w.r.t. $\wh{L}_0$.
\item Show that one has 
\begin{equation}\label{l25 vanishing descendant}
(2L_{-3}-4 L_{-2}L_{-1}+L_{-1}^3)\dd\phi =0,
\end{equation}
i.e., this particular Virasoro descendant of the primary field $\dd\phi$ in free scalar theory vanishes (in terms of Remark \ref{l25 rem linear dependencies between descendants}, this descendant belongs to $\mathsf{N}$ -- the quotiented-out submodule).
\end{enumerate}
\end{exercise}

\begin{remark} 
Classification of all primary fields in scalar field theory is a nontrivial problem; the answer is known as a corollary of a theorem by Feigin-Fuchs \cite{Feigin-Fuchs} (Theorem \ref{l34 thm FF maps of Verma modules} below).
 
First note that the space of fields (or space of states) of the free scalar theory is
\begin{equation}
V=\bigoplus_{\alpha\in\RR} \mathbb{V}^\mr{Heis}_\alpha\otimes \mathbb{V}^{\ol{\mr{Heis}}}_\alpha
\end{equation}
where $\mathbb{V}_\alpha^\mr{Heis}$ be the Verma module of the Heisenberg Lie algebra, with highest vector of $\wh{a}_0$-weight $\alpha$. 

Let  $M_{h}$ be the highest weight irreducible Virasoro module with $L_0$-highest weight $h$ and central charge $c=1$ and let $\mathbb{V}^\mr{Vir}_h$ be the highest weight Verma module (possibly reducible) of the Virasoro algebra with $L_0$-highest weight $h$ and central charge $c=1$. One has:
\begin{enumerate}
\item If $\alpha\not\in \frac{1}{\sqrt{2}}\ZZ$ then
\begin{equation}
\mathbb{V}_\alpha^\mr{Heis}\simeq M_{\frac{\alpha^2}{2}}=\mathbb{V}^\mr{Vir}_{\frac{\alpha^2}{2}}
\end{equation} 
is a single irreducible representation of Virasoro and contains no null vectors.
\item If $\alpha=\pm \frac{N}{\sqrt{2}}$ for some $N=0,1,2,\ldots$, then one has
\begin{equation}
\mathbb{V}_\alpha^\mr{Heis}\simeq M_{\frac{N^2}{4}}\oplus M_{\frac{(N+2)^2}{4}} \oplus M_{\frac{(N+4)^2}{4}}  \oplus \cdots
\end{equation}
For instance, $\mathbb{V}^\mr{Heis}_0$ contains an infinite sequence of Virasoro-highest weight (primary) vectors $\chi_0=\mathbb{1},\chi_1=i\dd\phi,\chi_2,\chi_3,\ldots$, with $\chi_n$ having conformal weight $h=n^2$; $\chi_2$ is given explicitly by (\ref{l25 h=4 primary field}).

In the full scalar field theory, the Verma module $\mathbb{V}_{0,0}^{\mr{Heis}\oplus\ol{\mr{Heis}}}=\mathbb{V}^\mr{Heis}_0\otimes \mathbb{V}^{\ol{\mr{Heis}}}_0$ of the two copies of Heisenberg algebra contains a two-parameter family of Virasoro-highest weight vectors (primary fields) $\chi_{n,\bar{n}}$ with $n,\bar{n}=0,1,2,\ldots$, with conformal weights $(h=n^2,\bar{h}=\bar{n}^2)$. 
\item  A related point to the above is that the Virasoro Verma module $\mathbb{V}^\mr{Vir}_h$ for $h=\frac{N^2}{4}$ is reducible and contains null vectors at levels $\frac{(N+2k)^2}{4}-\frac{N^2}{4}$ with $k=1,2,\ldots$. Vanishing descendant (\ref{l25 vanishing descendant}) above gives an example of a null vector at level $3$ in $\mathbb{V}_{h=1}^\mr{Vir}$; here $N=2$, $k=1$. Also, $\mathbb{V}^\mr{Vir}_h$ fits (depending on parity of $N$) into one of the two sequences of maps between Virasoro Verma modules
\begin{equation}
\begin{gathered}
\mathbb{V}^\mr{Vir}_0\leftarrow \mathbb{V}^\mr{Vir}_1 \leftarrow \mathbb{V}^\mr{Vir}_{2^2}\leftarrow \mathbb{V}^\mr{Vir}_{3^2}\leftarrow \cdots\\
\mathbb{V}^\mr{Vir}_{\left(\frac{1}{2}\right)^2}\leftarrow \mathbb{V}^\mr{Vir}_{\left(\frac{3}{2}\right)^2} \leftarrow \mathbb{V}^\mr{Vir}_{\left(\frac{5}{2}\right)^2}\leftarrow \mathbb{V}^\mr{Vir}_{\left(\frac{7}{2}\right)^2}\leftarrow \cdots
\end{gathered}
\end{equation}
For each map here, the image of the highest vector or a null vector is a null vector in the target module (and each null vector arises that way -- ultimately comes from the highest vector of one of the modules to the right in the sequence).
Also, one has that the irreducible Virasoro module 
\begin{equation}
M_{\frac{N^2}{4}}=\mathbb{V}^\mr{Vir}_{\frac{N^2}{4}}/\mathbb{V}^\mr{Vir}_{\frac{(N+2)^2}{4}}
\end{equation} 
is the quotient of the Verma module by the submodule generated by the first null vector (all subsequent null vectors are already in that submodule).
\end{enumerate}

\end{remark}


\marginpar{Lecture 26,\\ 10/28/2022}

\subsection{More on vertex operators}\label{sss more on vertex operators}

Here are some other interesting properties of vertex operators in free scalar theory.
\begin{itemize}
\item The 2-point correlator of vertex operators is
\begin{equation}\label{l26 <VV>}
\langle V_\alpha(w) V_\beta(z) \rangle =\left\{
\begin{array}{cl}
|w-z|^{-2\alpha^2} & \mr{if}\; \beta=-\alpha,\\
0& \mr{otherwise}
\end{array}
\right.
\end{equation}
More generally, the $n$-point correlator of vertex operators is
\begin{equation}\label{l26 <V...V>}
\langle V_{\alpha_1}(z_1)\cdots V_{\alpha_n}(z_n) \rangle =\left\{
\begin{array}{cl}\displaystyle
\prod_{1\leq j<k\leq n} |z_j-z_k|^{2\alpha_j\alpha_k} &\mr{if}\; \alpha_1+\cdots+\alpha_n=0,\\
0& \mr{otherwise}
\end{array}
\right.
\end{equation}
\item Vertex operators satisfy the OPE
\begin{equation}\label{l26 VV OPE}
V_\alpha(w)V_\beta(z)\sim |w-z|^{2\alpha\beta}V_{\alpha+\beta}(z)+(\mr{less\;singular\;terms}).
\end{equation}
\item One has the OPE
\begin{equation}\label{l26 dd phi V OPE}
i\dd\phi(w) V_\alpha(z)\sim \frac{\alpha}{w-z}V_\alpha(z)+\mr{reg.}
\end{equation}
\end{itemize}
All these properties follow from the explicit formula for the vertex operator (\ref{l24 Vhat}) and Wick's lemma. For instance, let us prove (\ref{l26 <VV>}). We apply Wick's lemma to the product of two vertex operators (as operators on $\HH$:
For simplicity, assume $|w|>|z|$.
We have 
\begin{multline}\label{l26 VV product computation}
\wh{V}_\alpha(w)\wh{V}_\beta(z)=\sum_{n,m\geq 0}\frac{1}{n!m!}(i\alpha)^n(i\beta)^m :\wh\phi(w)^n:\; :\wh\phi(z)^m:=\\
\underset{\mr{Wick}}{=} \sum_{k\geq 0} \sum_{n,m\geq k}
\frac{1}{n!m!} \underbrace{
\left( \begin{array}{c}n\\ k\end{array}\right)
\left( \begin{array}{c}m\\ k\end{array}\right) k!}_{\mbox{\#($k$-fold Wick contractions)}} (i\alpha)^n (i\beta)^m (-2\log |w-z|)^k :\wh\phi(w)^{n-k}\wh\phi(z)^{m-k}:\\
=\sum_{k\geq 0} \sum_{n,m\geq k} \frac{1}{(n-k)!(m-k!)k!} (i\alpha)^n (i\beta)^m (-2\log |w-z|)^k :\wh\phi(w)^{n-k}\wh\phi(z)^{m-k}:\\
\underset{n'=n-k,\,m'=m-k}{=}
\sum_{k\geq 0} \frac{(2\alpha\beta)^k}{k!}(\log|w-z|)^k \sum_{n',m'\geq 0}  \frac{(i\alpha)^{n'}(i\beta)^{m'}}{n'!m'!}
:\wh\phi(w)^{n'}\wh\phi(z)^{m'}:\\
=e^{2\alpha\beta \log |w-z|} :e^{i\alpha \wh\phi(w)}e^{i\beta \wh\phi(z)}: = \boxed{ |w-z|^{2\alpha\beta} :\wh{V}_\alpha(w)\wh{V}_\beta(z):}
\end{multline}
The normally ordered product of vertex operators on the right can be written as $e^{i(\alpha+\beta)\wh\phi_0}(1+\cdots)$ where $\cdots$ are normally ordered terms with zero VEV (vacuum expectation value). The operator $e^{i(\alpha+\beta)\wh\phi_0}$ shift the vacuum $|\vac\rangle$ to a pseudovacuum $|\pi_0=\alpha+\beta\rangle$, so it has expectation value zero unless $\alpha+\beta=0$, and in the latter case the VEV is $1$. 
Thus,
\begin{equation}
\langle\vac|\wh{V}_\alpha(w) \wh{V}_\beta(z)  |\vac\rangle=
|w-z|^{2\alpha \beta}\cdot\left\{
\begin{array}{cc}
1& \mr{if}\; \alpha+\beta=0,\\
0& \mr{otherwise}
\end{array}
\right.
\end{equation}
This finishes the proof of (\ref{l26 <VV>}).

Note that the computation (\ref{l26 VV product computation}) also implies the OPE (\ref{l26 VV OPE}):
\begin{multline}
\mc{R}\wh{V}_\alpha(w) \wh{V}_\beta(z) = |w-z|^{2\alpha\beta}:\underbrace{\wh{V}_\alpha(w)}_{\mr{expand\;around\;}z}\wh{V}_\beta(z) : =\\
=|w-z|^{2\alpha\beta} \sum_{k,l\geq 0} \frac{(w-z)^k(\bar{w}-\bar{z})^l}{k!l!} :\dd^k\bar\dd^l \wh{V}_\alpha(z)\,\wh{V}_\beta(z):\\=|w-z|^{2\alpha\beta} \wh{V}_{\alpha+\beta}(z)+ O(|w-z|^{2\alpha\beta+1}),
\end{multline}
where we used the property $:\wh{V}_\alpha(z)\wh{V}_\alpha(z):=\wh{V}_{\alpha+\beta}(z)$, obvious from the definition of the vertex operator (\ref{l24 Vhat}).

The correlator (\ref{l26 <V...V>}) is also obtained from Wick's lemma, see \cite[section 9.1.1]{DMS}.

The OPE (\ref{l26 dd phi V OPE}) is obtained by a computation similar to (\ref{l25 TV OPE computation}) (actually simpler, as there are only single Wick contractions).

\section{Conformal Ward identity (via contour integration trick)}
\label{ss: Ward identity via contour integration}

In any CFT on $\CC$ one the following.
\begin{thm}[Conformal Ward identity]
Fix a collection of  fields $\Phi_1,\ldots,\Phi_n\in V$, a collection of distinct points $z_1,\ldots,z_n\in \CC$, a conformal vector field $v=u(w)\dd_w+\ol{u(w)}\dd_{\bar{w}}$ with $u(w)\dd_w$ a meromophic vector field  on $\CP^1$
with poles allowed only at the points $z_1,\ldots,z_k$ (in particular we are assuming that $w=\infty$ is a regular point of $u\dd$). Then one has
\begin{equation}\label{l26 Ward identity}
\sum_{k=1}^n \langle \Phi_1(z_1)\cdots \rho^{(z_k)}(v)\circ \Phi_k(z_k) \cdots \Phi_n(z_n)\rangle =0
\end{equation}
where $\rho^{(z_k)}(v)\circ \Phi_k(z_k)$ is the action of the vector field $v$ on the field $\Phi_k$ defined via (\ref{l25 rho}).
\end{thm}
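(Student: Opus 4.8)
The plan is to unfold the definition (\ref{l25 rho}) of the local Virasoro action and reduce the identity to a residue computation for the stress-energy correlator. I will carry out the holomorphic part ($T$, $u\dd$) in full; the antiholomorphic part ($\ol{T}$, $\ol{u}\bar\dd$) is identical. Write $G(w):=\langle \mc{R}\big(\wh{T}(w)\wh\Phi_1(z_1)\cdots\wh\Phi_n(z_n)\big)\rangle$ for the correlator with one extra $T$-insertion at $w$. Substituting (\ref{l25 rho}) into the left-hand side of (\ref{l26 Ward identity}), and noting that under a correlator everything is radially ordered so the value is the single function $G(w)$ no matter where the $\wh{T}(w)$ factor nominally sits, the holomorphic contribution becomes
\[
-\sum_{k=1}^n \oint_{\gamma_{z_k}}\frac{dw}{2\pi i}\,u(w)\,G(w),
\]
where each $\gamma_{z_k}$ is a small circle about $z_k$ enclosing no other insertion point and no pole of $u$ besides $z_k$.

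First I would pin down the analytic structure of the integrand. By the holomorphicity axiom (\ref{l23 dbar T=0}), $\bar\dd\wh{T}=0$, so $G$ is a meromorphic function of $w$ on $\CC$, and the $T\Phi$ OPE (\ref{l25 T Phi via L_n}) shows its only poles lie at $w=z_1,\dots,z_n$. Since $u(w)$ is meromorphic with poles confined to $\{z_1,\dots,z_n\}$, the $1$-form $u(w)G(w)\,dw$ is meromorphic on $\CC$ with poles only at the insertion points, and the displayed sum equals $-\sum_{k=1}^n \mr{Res}_{w=z_k}\big(u(w)G(w)\big)$. Exactly as in the contour-deformation trick of Section \ref{sss: Virasoro from TT OPE}, I would then deform the union of the small circles into a single large circle $\gamma_{0,R}$ enclosing all insertion points (legitimate because the integrand is holomorphic in the intervening annulus), obtaining
\[
\sum_{k=1}^n \oint_{\gamma_{z_k}}\frac{dw}{2\pi i}\,u(w)\,G(w)=\oint_{\gamma_{0,R}}\frac{dw}{2\pi i}\,u(w)\,G(w).
\]
It then remains to show this vanishes as $R\to\infty$, i.e. that $u(w)G(w)\,dw$ has no residue at $w=\infty$ — which is just the residue theorem on $\CP^1$.

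The crux, and the step I expect to be the main obstacle, is the decay of $G$ at infinity. Using the mode expansion (\ref{l23 T via L, Tbar via Lbar}), $\wh{T}(w)=\sum_{m\in\ZZ}w^{-m-2}\wh{L}_m$, pairing against the out-vacuum gives $\langle\vac|\wh{T}(w)=\sum_{m}w^{-m-2}\langle\vac|\wh{L}_m$. From the $SL_2$-invariance of the vacuum, $\wh{L}_{\geq -1}\vac=0$, together with the hermiticity relation $\wh{L}_m^+=\wh{L}_{-m}$ of Remark \ref{l23 rem: L_n^+}, one finds $\langle\vac|\wh{L}_m=0$ for all $m\leq 1$; hence $\langle\vac|\wh{T}(w)=O(w^{-4})$, and since for $|w|=R$ large the factor $\wh{T}(w)$ sits to the far left of the (already radially ordered) product, $G(w)=O(w^{-4})$ as $w\to\infty$. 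On the other hand, the hypothesis that $u(w)\dd_w$ is regular at $w=\infty$ forces $u(w)=O(w^2)$. Therefore $u(w)G(w)=O(w^{-2})$, so $u(w)G(w)\,dw$ is regular at infinity and the large-circle integral vanishes as $R\to\infty$. This proves the holomorphic half of (\ref{l26 Ward identity}).

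Finally I would run the mirror-image argument for the antiholomorphic contribution: $\dd\wh{\ol{T}}=0$ makes the corresponding correlator antimeromorphic in $\bar{w}$, the relations $\wh{\ol{L}}_{\geq -1}\vac=0$ and $\wh{\ol{L}}_m^+=\wh{\ol{L}}_{-m}$ give $O(\bar{w}^{-4})$ decay, and $\ol{u(w)}=O(\bar{w}^2)$ by regularity at infinity, so the antiholomorphic integral likewise vanishes. Adding the two halves yields (\ref{l26 Ward identity}). The two genuinely delicate points are the contour-deformation step — which is precisely the trick already used in Section \ref{sss: Virasoro from TT OPE} — and the decay at infinity, where the global geometry of $\CP^1$ (no pole of $u$ permitted at $\infty$) together with the $SL_2$-invariance of the vacuum are exactly what forces the residue at infinity to vanish.
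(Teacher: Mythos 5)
Your proof is correct and follows the same overall strategy as the paper: unfold the definition (\ref{l25 rho}), recognize the left-hand side of (\ref{l26 Ward identity}) as (minus) the sum of residues of the meromorphic form $u(w)\,\langle T(w)\Phi_1(z_1)\cdots\Phi_n(z_n)\rangle\,dw$ at the insertion points, deform the union of small circles into one large circle, and show the large-circle integral vanishes. Where you genuinely diverge is in that last step. The paper conjugates the correlator via Lemma \ref{l23 lemma: conjugation of a correlator} (hence via the hermiticity axiom (\ref{l23 Phi^+})) and substitutes $y=1/\bar{w}$, turning the large circle into a small circle about $y=0$ that encloses no poles of the transformed integrand; regularity of $u\dd$ at $\infty$ enters as regularity of $u_y(y)=\ol{u(w)}/\bar{w}^2$ at $y=0$. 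You instead estimate the decay directly from the mode expansion: $\langle\vac|\wh{T}(w)=\sum_{m\geq 2}w^{-m-2}\langle\vac|\wh{L}_m=O(w^{-4})$, using $\wh{L}_{\geq-1}|\vac\rangle=0$ and $\wh{L}_m^+=\wh{L}_{-m}$, combined with $u(w)=O(w^2)$. Both arguments amount to checking that $\iota_{u\dd}\bigl(T(w)(dw)^2\bigr)$ is regular at $\infty\in\CP^1$; yours is more "algebraic" and avoids invoking the conjugation lemma, while the paper's inversion trick generalizes more directly to the antiholomorphic half and to non-diagonalizable situations since it never expands $T$ in modes. One small correction: $\wh{L}_{\geq-1}|\vac\rangle=0$ is not a consequence of $SL_2$-invariance of the vacuum alone (that only gives $\wh{L}_0,\wh{L}_{\pm1}$); the vanishing for all $n\geq-1$ comes from the field-state correspondence, i.e., the requirement that $\lim_{z\to 0}\wh{T}(z)|\vac\rangle$ exist — the paper records exactly this fact in the remark following $\fs(T)=\wh{L}_{-2}|\vac\rangle$, so the ingredient is available, just mislabeled in your write-up.
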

We denote the l.h.s. of (\ref{l26 Ward identity}) by $\delta_v \langle \Phi_1(z_1)\cdots \Phi_n(z_n) \rangle$ -- the action of the vector fields on the correlator (via acting on individual fields). Thus, the Ward identity says that the action of a conformal vector field on a correlator vanishes.

Note that (by complexification) we can treat $u(w)\dd_w$ and $\ol{u(w)}\dd_{\bar{w}}$ in (\ref{l26 Ward identity}) as independent meromorphic and antimeromorphic vector fields (not complex conjugate to one another).

\begin{proof}
Consider the action of a meromorphic vector field $u(w)\dd_w$ on a correlator. Let $\Gamma=C_{0,R}$ be a circle centered at the origin of a large radius $R$ (in particular, large enough that it encloses all $z_i$'s). Then we have
\begin{multline}\label{l26 Ward computation}
 -\frac{1}{2\pi i}\oint_\Gamma dw\, u(w) \Big\langle T(w) \Phi_1(z_1)\cdots \Phi_n(z_n) \Big\rangle = \\
 \underset{\mr{deformation\;of\;contour}}{=}
\sum_{k=1}^n -\frac{1}{2\pi i}\oint_{\gamma_k} dw\, u(w) \Big\langle T(w) \Phi_1(z_1)\cdots \Phi_n(z_n) \Big\rangle  \\
\underset{(\ref{l25 rho})}{=}
\sum_{k=1}^n \langle \Phi_1(z_1)\cdots \rho^{(z_k)}(u\dd)\circ \Phi_k(z_k) \cdots \Phi_n(z_n)\rangle =
\delta_{u\dd} \langle \Phi_1(z_1)\cdots \Phi_n(z_n) \rangle,
\end{multline}
Here $\gamma_k=C_{z_k,r_k}$ is a circle around $z_k$ of radius $r_k$ small enough that $\gamma_k$ does not enclose any $z_i$ with $i\neq k$. We used the fact that the correlator with $T(w)$ is meromorphic in $w$, with possible poles at $w=z_1,\ldots,z_n$, to deform the integration contour $\Gamma$ to $\gamma_1\cup\cdots\cup\gamma_n$.

\begin{figure}[H]
$$\vcenter{\hbox{ \includegraphics[scale=0.8]{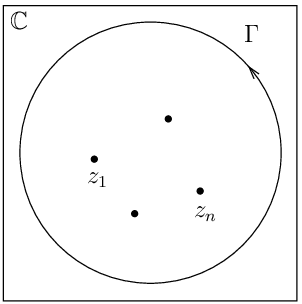} }} 
\longrightarrow
\vcenter{\hbox{ \includegraphics[scale=0.8]{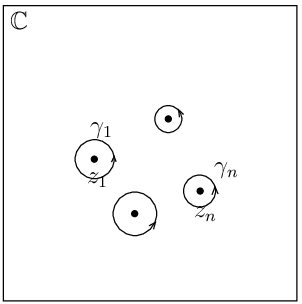} }} 
$$
\caption{Deformation of the integration contour $\Gamma$ (large circle) into a collection of small circles $\gamma_1,\ldots,\gamma_n$ around punctures $z_1,\ldots,z_n$.}
\end{figure}

It remains to show that the l.h.s. of (\ref{l26 Ward computation}) vanishes. For that, let us use  Lemma \ref{l23 lemma: conjugation of a correlator}:
\begin{multline}\label{l26 Ward computation 2}
\ol{-\frac{1}{2\pi i}\oint_\Gamma dw\, u(w) \Big\langle T(w) \Phi_1(z_1)\cdots \Phi_n(z_n) \Big\rangle}
=\\
=\frac{1}{2\pi i}\oint_{\Gamma\ni w} d\bar{w} \ol{u(w)} \bar{w}^{-4} \langle T(1/\bar{w}) \Phi_1^*(1/\bar{z}_1)\cdots \Phi_n^*(1/\bar{z}_n) \rangle\cdot \prod_{i=1}^n \bar{z}_i^{-2h_i}z_i^{-2\bar{h}_i}\\
\underset{y=1/\bar{w}}{=}
-\frac{1}{2\pi i}\oint_{\Gamma'\ni y} dy\, u_y(y)  
\langle T(y) \Phi_1^*(1/\bar{z}_1)\cdots \Phi_n^*(1/\bar{z}_n) \rangle\cdot \prod_{i=1}^n \bar{z}_i^{-2h_i}z_i^{-2\bar{h}_i}
\end{multline}
where $u_y(y)=\ol{u(w)}/\bar{w}^2$ is regular at $y= 0$, since the vector field $u(w)\dd_w$ was required to be regular at $w=\infty$; $\Gamma'$ is a circle around zero of small radius $1/R$. 
The integrand in the r.h.s. of (\ref{l26 Ward computation 2}) is a  meromorphic function in $y$ and $\Gamma'$ does not enclose any poles (in particular $y=0$ is a regular point), hence (\ref{l26 Ward computation 2}) vanishes. This proves that the r.h.s. of (\ref{l26 Ward computation}) is zero.

The case of the action of an antimeromorphic vector field on a correlator is similar.
\end{proof}

Informally, the argument is: take the integral in the l.h.s. (\ref{l26 Ward computation}) over a contour around $w=\infty$ in $\CP^1$. One the one hand the integral vanishes, since integrand is holomorphic around $w=\infty$. On the other hand, the contour can be deformed into a union of small circles around field insertions $z_i$, which yields $\delta_v$ of the correlator.

\begin{figure}[H]
$$\vcenter{\hbox{ \includegraphics[scale=0.8]{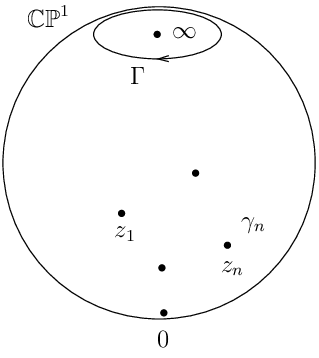} }} 
\longrightarrow
\vcenter{\hbox{ \includegraphics[scale=0.8]{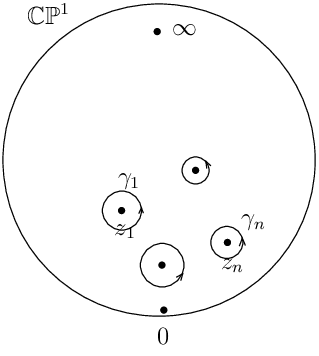} }} 
$$
\caption{Deformation of the integration contour on $\CP^1$.}
\end{figure}

\begin{example}
Let $u(w)\dd_w=\frac{-\dd_w}{w-z_0}$ -- a meromorphic vector field with a simple pole at $z_0$. Assume that $\Phi_1,\ldots,\Phi_n$ are \emph{primary} fields with conformal weights $(h_i,\bar{h}_i)$. Applying (\ref{l26 Ward identity}) to the correlator $\langle \mathbb{1}(z_0)\Phi_1(z_1)\cdots \Phi_n(z_n)\rangle$,\footnote{We inserted $\mathbb{1}(z_0)$, which does not affect the correlator, since we required that the vector field only has poles at the points where fields are inserted.} we obtain
\begin{multline}
0=\langle \rho^{z_0}\left(\frac{-\dd_w}{w-z_0}\right)\circ\mathbb{1}(z_0)\; \Phi_1(z_1)\cdots \Phi_n(z_n) \rangle+\\+
\sum_{k=1}^n \langle \cancel{\mathbb{1}(z_0)} \Phi_1(z_1)\cdots
\rho^{z_k}\Big(\underbrace{\frac{-\dd_w}{w-z_0}}_{\mr{expand\;at\;}z_k}\Big)\circ \Phi_k(z_k)\cdots \Phi_n(z_n)
 \rangle\\
 \underset{(\ref{l25 L_n loc})}{=}
\langle (L_{-2}\mathbb{1})(z_0) \Phi_1(z_1)\cdots \Phi_n(z_n) \rangle+\\
+
\sum_{k=1}^n \langle
\Phi_1(z_1)\cdots  \rho\left(-\frac{1}{z_k-z_0}\dd_w+\frac{w-z_k}{(z_k-z_0)^2} \dd_w -\frac{(w-z_k)^2}{(z_k-z_0)^3} \dd_w+\cdots\right)\circ \Phi_k(z_k)\cdots \Phi_n(z_n)
 \rangle\\
 =\langle \underbrace{(L_{-2}\mathbb{1})(z_0)}_{T(z_0)} \Phi_1(z_1)\cdots \Phi_n(z_n) \rangle+\\
 +
 \sum_{k=1}^n \langle \Phi_1(z_1)\cdots \Big(\frac{1}{z_k-z_0}L_{-1}-\frac{1}{(z_k-z_0)^2}L_0+\underbrace{\cancel{\frac{1}{(z_k-z_0)^3}L_{1}-\cdots}}_{\mr{since\;}\Phi_k\;\mr{is\;primary}}\Big)\circ\Phi_k(z_k)\cdots \Phi_n(z_n) \rangle\\
=\langle T(z_0) \Phi_1(z_1)\cdots \Phi_n(z_n) \rangle
+\sum_{k=1}^n \langle \Phi_1(z_1)
\cdots \left(\frac{1}{z_k-z_0}\,\frac{\dd}{\dd z_k}-\frac{h_k}{(z_k-z_0)^2}\right)\Phi_k(z_k) \cdots \Phi_n(z_n)
\rangle .
\end{multline}
Or, written another way:
\begin{equation}
\langle T(z_0)\Phi_1(z_1)\cdots \Phi_n(z_n) \rangle =
\left(\sum_{k=1}^n \frac{h_k}{(z_k-z_0)^2}-\frac{1}{z_k-z_0}\,\frac{\dd}{\dd z_k}\right) \circ \langle \Phi_1(z_1)\cdots \Phi_n(z_n) \rangle.
\end{equation}
Thus, the correlator of the stress-energy  with a collection of primary fields is expressed as a certain differential operator acting on the correlator of just the primary fields.
\end{example}

\begin{example}\label{l26 ex: corr of a descendant}
If the correlator of primary fields $\Phi_1,\ldots,\Phi_n$ is known then any correlator of their descendants can be recovered as a certain differential operator acting on $\langle \Phi_1\cdots \Phi_n \rangle$. Such an expression is obtained from Ward identity by repeatedly applying meromorphic vector fields of the form $-(w-z_k)^{-r+1}\dd_w$ to the correlator of the primary fields. 

For instance applying the vector field $u\dd=-(w-z_1)^{-r+1}\dd_w$ (for some $r\geq 1$) to $\langle\Phi_1(z_1)\cdots \Phi_n(z_n)\rangle$ we find
\begin{multline}\label{l26 corr of descendant}
0=\delta_{u\dd}\langle\Phi_1(z_1)\cdots \Phi_n(z_n)\rangle=\\
=
\langle (L_{-r}\Phi_1)(z_1)\, \Phi_2(z_2)\cdots \Phi_n(z_n) \rangle+\\
+\underbrace{\left(\sum_{k=2}^n (z_k-z_1)^{-r+1}\dd_{z_k}-(r-1)(z_k-z_1)^{-r}h_k\right)}_{-\mc{D}}\circ \langle \Phi_1(z_1)\Phi_2(z_2)\cdots \Phi_n(z_n) \rangle.
\end{multline}
Thus, one has
\begin{equation}
\langle (L_{-r}\Phi_1)(z_1)\, \Phi_2(z_2)\cdots \Phi_n(z_n) \rangle=\mc{D}\langle \Phi_1(z_1)\Phi_2(z_2)\cdots \Phi_n(z_n) \rangle
\end{equation}
with $\mc{D}$ the differential operator appearing in (\ref{l26 corr of descendant}). Here we were assuming that $\Phi_1,\ldots,\Phi_n$ are primary.
\end{example}

\subsection{
Constraints on correlators from global conformal symmetry
}\label{sss constraints on corr from global conf sym}
Let us explore the consequences of the Ward identity (\ref{l26 Ward identity}) with $v$ a conformal vector field on $\CP^1$ without singularities.

For $\Phi_1,\ldots,\Phi_n\in V$ primary and $v=u\dd+\bar{u}\bar\dd$ a conformal vector field without singularities, the Ward identity reads
\begin{multline}\label{l26 Ward for primary fields}
0=\delta_v\langle \Phi_1(z_1)\cdots \Phi_n(z_n) \rangle=\\
=
\sum_{k=1}^n \langle \Phi_1(z_1)\cdots 
\Big(-u(z_k)\dd_{z_k}-\ol{u(z_k)} \dd_{\bar{z}_k} -h_k \dd u(z_k) -\bar{h}_k \ol{\dd u(z_k)}\Big)\Phi_k(z_k)
\cdots\Phi_n(z_n) \rangle 
\end{multline}
The ``finite'' (or ``integrated'') version is then as follows: for $z\mapsto w(z)$ 
a holomorphic map $\CP^1\ra \CP^1$ (i.e., a M\"obius transformation) one has
\begin{equation}\label{l26 Ward idenity for primary fields, finite version}
\langle \Phi_1(w(z_1))\cdots  \Phi_n(w(z_n)) \rangle=
\prod_{i=1}^n\left(\frac{\dd w}{\dd z}(z_i)\right)^{-h_i}
\left(\frac{\dd \bar{w}}{\dd \bar{z}}(z_i)\right)^{-\bar{h}_i}\cdot
\langle \Phi_1(z_1)\cdots  \Phi_n(z_n) \rangle
\end{equation}
Put another way, one has an equality 
\begin{multline}
\langle \Phi_1(z_1)(dz_1)^{h_1}(d\bar{z}_1)^{\bar{h}_1}\cdots
\Phi_n(z_n)(dz_n)^{h_n}(d\bar{z}_n)^{\bar{h}_n} \rangle =\\
=
\langle \Phi_1(w_1)(dw_1)^{h_1}(d\bar{w}_1)^{\bar{h}_1}\cdots
\Phi_n(w_n)(dw_n)^{h_n}(d\bar{w}_n)^{\bar{h}_n} \rangle
\end{multline}
Using the notation (\ref{l25 Phi underline}), 
\begin{equation}\label{l26 Phi underline}
\mathbf{\Phi}(z)\colon= \Phi(z)(dz)^h(d\bar{z})^{\bar{h}} \quad \in V\otimes K^{h,\bar{h}}_z,
\end{equation}
the $n$-point correlator of primary fields is a section of a certain line bundle on the open configuration space of points on $\CP^1$, invariant under the
M\"obius group (the latter being the statement of the Ward identity):
\begin{equation}\label{l26 corr of primary fields as a section of a line bundle}
\langle \mathbf{\Phi}_1(z_1)\cdots \mathbf{\Phi}_n(z_n) \rangle \in 
\Gamma\left(C_n(\CP^1),\bigotimes_{i=1}^n \pi_i^* K^{h_i,\bar{h}_i}\right)^{PSL_2(\CC)}
\end{equation}
where $\pi_i\colon C_n(\CP^1)\ra \CP^1$ is the map selecting the $i$-th point of the $n$-tuple; $K^{h_i,\bar{h}_i}$ is the line bundle (\ref{l25 line bun}) on $\CP^1$.

\begin{remark}\label{l26 rem: Ward identity for translations, rotations, scalings}
If the vector field $v$ is at most linear in coordinates, then (\ref{l26 Ward for primary fields}) holds without assuming that fields $\Phi_1,\ldots,\Phi_n$ are primary. At the level of ``finite'' conformal maps, the identity (\ref{l26 Ward idenity for primary fields, finite version}) holds for $z\mapsto w(z)$ translations, rotations and dilations, without assuming that the fields are primary.
\end{remark}

\begin{lemma}\label{l26 lemma OPE exponents}
If the OPE of fields $\Phi_1,\Phi_2\in V$ contains the term
\begin{equation}\label{l26 OPE term}
\frac{C}{(w-z)^\alpha (\bar{w}-\bar{z})^{\bar\alpha}} \Phi(z)
\end{equation}
with some field $\Phi\in V$ and $C$ a constant, then the exponents  in (\ref{l26 OPE term}) satisfy 
\begin{equation}\label{l26 OPE exponents}
h(\Phi_1)+h(\Phi_2)=\alpha+h(\Phi), \quad \bar{h}(\Phi_1)+\bar{h}(\Phi_2)=\bar\alpha+\bar{h}(\Phi),
\end{equation}
where $h,\bar{h}$ are the conformal weights of the fields involved. 
\end{lemma}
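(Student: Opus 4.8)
The plan is to derive the two relations (\ref{l26 OPE exponents}) from \emph{dilation (scaling) covariance} of correlators, and nothing more. Write $h_i=h(\Phi_i)$, $\bar h_i=\bar h(\Phi_i)$ for $i=1,2$. Recall from Definition~\ref{l25 def: conformal dimension} that a field of conformal weight $(h,\bar h)$ is a simultaneous eigenvector of $L_0,\ol L_0$; by Remark~\ref{l26 rem: Ward identity for translations, rotations, scalings}, the finite Ward identity (\ref{l26 Ward idenity for primary fields, finite version}) for the dilation $z\mapsto\lambda z$ holds \emph{without} assuming the fields are primary, as long as each insertion has a definite conformal weight. Concretely, under the global rescaling of all insertion points by $\lambda\in\CC^*$, a field of weight $(h,\bar h)$ contributes a factor $\lambda^{-h}\bar\lambda^{-\bar h}$ to the correlator. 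Since $V$ is graded by conformal weight, I may assume without loss of generality that $\Phi_1,\Phi_2$ and the OPE coefficient field $\Phi$ are homogeneous (otherwise decompose $\Phi$ into its $L_0,\ol L_0$ eigencomponents and argue for each).

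First I would read the OPE inside an arbitrary correlator, which is its operational meaning here: for homogeneous test fields $\Psi_1(\zeta_1),\ldots,\Psi_m(\zeta_m)$, as $w\to z$,
\[
\langle \Phi_1(w)\,\Phi_2(z)\,\Psi_1(\zeta_1)\cdots\Psi_m(\zeta_m)\rangle \sim \frac{C}{(w-z)^\alpha(\bar w-\bar z)^{\bar\alpha}}\,\langle \Phi(z)\,\Psi_1(\zeta_1)\cdots\Psi_m(\zeta_m)\rangle+\cdots
\]
Now apply the dilation $w\mapsto\lambda w$, $z\mapsto\lambda z$, $\zeta_j\mapsto\lambda\zeta_j$ to \emph{all} points. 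The left-hand side, being a correlator of homogeneous fields, scales homogeneously: it acquires the overall factor $\lambda^{-h_1-h_2}\bar\lambda^{-\bar h_1-\bar h_2}$ times $\prod_j\lambda^{-h(\Psi_j)}\bar\lambda^{-\bar h(\Psi_j)}$. On the right, the explicit prefactor gives $(w-z)^{-\alpha}(\bar w-\bar z)^{-\bar\alpha}\mapsto \lambda^{-\alpha}\bar\lambda^{-\bar\alpha}(w-z)^{-\alpha}(\bar w-\bar z)^{-\bar\alpha}$, while $\langle\Phi(\lambda z)\Psi_1(\lambda\zeta_1)\cdots\rangle$ supplies $\lambda^{-h(\Phi)}\bar\lambda^{-\bar h(\Phi)}$ together with the same $\prod_j\lambda^{-h(\Psi_j)}\bar\lambda^{-\bar h(\Psi_j)}$.

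The common test-field factors, and the common surviving correlator $\langle\Phi(z)\Psi_1(\zeta_1)\cdots\rangle$, cancel from both sides. Because the left-hand side scales with the \emph{single} total exponent $\lambda^{-h_1-h_2}\bar\lambda^{-\bar h_1-\bar h_2}$, every term of the OPE on the right must scale with that same exponent; for the displayed term this reads $\lambda^{-\alpha-h(\Phi)}\bar\lambda^{-\bar\alpha-\bar h(\Phi)}$, so matching powers of $\lambda$ and of $\bar\lambda$ yields $h_1+h_2=\alpha+h(\Phi)$ and $\bar h_1+\bar h_2=\bar\alpha+\bar h(\Phi)$, which is exactly (\ref{l26 OPE exponents}). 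The only genuine obstacle is to justify that the scaling factor $\lambda^{-h}\bar\lambda^{-\bar h}$ applies to \emph{non-primary} fields; this is precisely Remark~\ref{l26 rem: Ward identity for translations, rotations, scalings}, valid because a dilation is generated by a vector field linear in the coordinate. The secondary bookkeeping point — that all insertions may be taken homogeneous — is harmless, since the space of fields is a direct sum of $L_0,\ol L_0$ eigenspaces.
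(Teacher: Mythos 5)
Your proof is correct and follows essentially the same route as the paper: insert the OPE into a correlator with test fields, apply the dilation $z\mapsto\lambda z$ using the fact (Remark \ref{l26 rem: Ward identity for translations, rotations, scalings}) that scaling covariance holds for non-primary fields of definite weight, and match powers of $\lambda$ and $\bar\lambda$ on the two sides. The only cosmetic differences are that the paper fixes $z=0$ and organizes the comparison as a commutative diagram, while you keep $z$ general and add the (harmless) remark about decomposing into homogeneous components.
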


\begin{proof}
This is a consequence of (\ref{l26 Ward idenity for primary fields, finite version}) and Remark \ref{l26 rem: Ward identity for translations, rotations, scalings}: one considers the correlator $\langle \Phi_1(w) \Phi_2(z) \Phi_3(z_3)\cdots \Phi_n(z_n) \rangle$, with $\Phi_3,\ldots,\Phi_n\in V$ arbitrary test fields,  and acts on it with rotation and dilation around $z$. For simplicity, set $z=0$ and consider the map $z\mapsto \lambda z$ with $\lambda\in \CC^*$. Then we have, in the asymptotics $w\ra 0$,
{\small
\begin{equation}
\hspace{-2cm}
\begin{CD}
\langle \Phi_1(\lambda w)\Phi_2(0)\Phi_3(\lambda z_3)\cdots \Phi_n(\lambda z_n)\rangle @=
\prod_{i=1}^n \lambda^{-h_i} \bar\lambda^{-\bar{h}_i} 
\langle \Phi_1(w) \Phi_2(0)\Phi_3(z_3)\cdots \Phi_n(z_n) \rangle\\
\mr{OPE}@| \mr{OPE}@| \\
\frac{C}{\lambda^\alpha\bar\lambda^{\bar\alpha}w^\alpha \bar{w}^{\bar{\alpha}}}\langle \Phi(0)\Phi_3(\lambda z_3)\cdots \Phi_n(\lambda z_n) \rangle+\cdots
@.
\frac{C}{w^\alpha \bar{w}^{\bar{\alpha}}} \prod_{i=1}^n \lambda^{-h_i} \bar\lambda^{-\bar{h}_i} 
\langle \Phi(0)\Phi_3(z_3)\cdots \Phi_n(z_n) \rangle +\cdots\\
(\ref{l26 Ward idenity for primary fields, finite version})@| @|\\
\frac{C\lambda^{-h(\Phi)}\bar{\lambda}^{-\bar{h}(\Phi)}}{\lambda^\alpha\bar\lambda^{\bar\alpha}w^\alpha \bar{w}^{\bar{\alpha}}}
\prod_{i=3}^n \lambda^{-h_i}\bar\lambda^{-\bar{h}_i}
\langle \Phi(0)\Phi_3(z_3)\cdots \Phi_n(z_n) \rangle+\cdots
@.
\frac{C}{w^\alpha \bar{w}^{\bar{\alpha}}} \prod_{i=1}^n \lambda^{-h_i} \bar\lambda^{-\bar{h}_i} 
\langle \Phi(0)\Phi_3(z_3)\cdots \Phi_n(z_n) \rangle +\cdots
\end{CD}
\end{equation}
}
Here $\cdots$ stands for the other terms in the OPE. 
Equality in the last row implies the claimed relation on the OPE exponents (\ref{l26 OPE exponents}).
\end{proof}

\subsubsection{One-point correlators.}
\begin{lemma}\label{l26 lemma 1-point corr}
Let $\Phi\in V$ be a field (not necessarily primary) of conformal weight $(h,\bar{h})$. Then
\begin{equation}
\langle \Phi(z) \rangle =
\left\{
\begin{array}{cc}
C_\Phi& \mr{if}\; h=\bar{h}=0,\\
0&\mr{otherwise} 
\end{array}
\right.
\end{equation}
where $C_\Phi$ is a constant function. 
(the value of the constant depends on $\Phi$).
\end{lemma}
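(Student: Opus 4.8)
The plan is to apply the conformal Ward identity (\ref{l26 Ward identity}) to the single-field correlator $\langle\Phi(z)\rangle$, using the globally regular (pole-free) holomorphic vector fields on $\CP^1$, namely the generators $l_{-1}=-\dd_w$ and $l_0=-w\dd_w$ of the global conformal group together with their antiholomorphic counterparts. The essential point that makes this work for a not-necessarily-primary $\Phi$ is Remark \ref{l26 rem: Ward identity for translations, rotations, scalings}: since these vector fields are at most linear in $w$, the induced action (\ref{l25 rho}) on a field is controlled entirely by the modes $L_{-1},L_0$ (and $\ol{L}_{-1},\ol{L}_0$), for which the $L_{-1}$ axiom and Definition \ref{l25 def: conformal dimension} provide explicit, primary-independent formulas.

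First I would establish $z$-independence. Taking $v=l_{-1}=-\dd_w$, which is regular everywhere on $\CP^1$ (in particular at $w=\infty$), the local action (\ref{l25 rho}) evaluates by residues to $\rho^{(z)}(l_{-1})\circ\Phi(z)=(L_{-1}\Phi)(z)=\dd\Phi(z)$, the last equality being the $L_{-1}$ axiom. The Ward identity then reads
\begin{equation}
0=\delta_{l_{-1}}\langle\Phi(z)\rangle=\langle\dd\Phi(z)\rangle=\dd_z\langle\Phi(z)\rangle,
\end{equation}
and similarly $\bar\dd_z\langle\Phi(z)\rangle=0$ using $\ol{l}_{-1}$. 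Hence $\langle\Phi(z)\rangle$ is a constant $C_\Phi$, independent of the insertion point.

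Next I would extract the weight constraint from dilations. The vector field $v=-(w-z)\dd_w=l_0-z\,l_{-1}$ is again globally regular on $\CP^1$, and the residue computation in (\ref{l25 rho}) (cf. the expansion used in Section \ref{sss: tranformation property of primary fields}) collapses to $\rho^{(z)}(v)\circ\Phi(z)=(L_0\Phi)(z)=h\,\Phi(z)$ by Definition \ref{l25 def: conformal dimension}; only the $L_0$ term survives, since an $L_1$ contribution would require a nonzero second derivative of the coefficient function at $z$, which cannot occur for a linear vector field. The Ward identity then gives
\begin{equation}
0=\delta_v\langle\Phi(z)\rangle=\langle h\,\Phi(z)\rangle=h\,C_\Phi,
\end{equation}
and the antiholomorphic partner yields $\bar h\,C_\Phi=0$. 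Therefore $C_\Phi=0$ whenever $(h,\bar h)\neq(0,0)$, while for $h=\bar h=0$ no constraint arises and $C_\Phi$ may be an arbitrary constant; this is exactly the claimed dichotomy.

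The only real subtlety — where I would concentrate the write-up — is the justification that the transformation law reduces to the naive weight formula even though $\Phi$ need not be primary. This is legitimate precisely because we restrict to the global conformal group $PSL_2(\CC)$, whose generators are linear-or-lower in $w$, so that in the OPE of $T(w)\Phi(z)$ only the modes $L_{-1},L_0,L_1$ are probed; and in the one-point computation above the potentially offending $L_1$ mode never enters because the chosen vector fields have vanishing second derivative at the insertion point. Equivalently, one may run the argument through the finite version (\ref{l26 Ward idenity for primary fields, finite version}): translation invariance gives $\langle\Phi(z)\rangle=C_\Phi$, and $z\mapsto\lambda z$ forces $C_\Phi=\lambda^{-h}\bar\lambda^{-\bar h}C_\Phi$ for all $\lambda\in\CC^*$, which requires $h=\bar h=0$ unless $C_\Phi=0$.
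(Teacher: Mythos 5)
Your proof is correct and follows essentially the same route as the paper: the paper likewise applies the Ward identity first with a constant vector field $a\dd_w+\bar a\dd_{\bar w}$ to get $z$-independence, and then with the linear field $b(w-z)\dd_w+\bar b(\bar w-\bar z)\dd_{\bar w}$ to obtain $(bh+\bar b\bar h)\langle\Phi(z)\rangle=0$, forcing vanishing unless $h=\bar h=0$. Your added justification that only the $L_{-1},L_0$ (and barred) modes are probed by at-most-linear vector fields is exactly the content of Remark \ref{l26 rem: Ward identity for translations, rotations, scalings}, which the paper invokes implicitly.
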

\begin{proof}
Using the Ward identity with $v$ a constant vector field $a\dd_w+\bar{a} \dd_{\bar{w}}$ (with arbitrary coefficients $a,\bar{a}\in\CC$), we find that the one-point correlator satisfies $(a\dd_z+\bar{a} \dd_{\bar{z}})\langle \Phi(z) \rangle =0$, i.e., the correlator is a constant function. Applying the vector field $v=b(w-z)\dd_w+\bar{b}(\bar{w}-\bar{z})\dd_{\bar{w}}$ to the correlator, we see that it satisfies 
\begin{equation}
(b h + \bar{b} \bar{h})\langle \Phi(z) \rangle=0
\end{equation}
for any $b,\bar{b}\in\CC$. Thus, the one-point correlator must vanish unless $h=\bar{h}=0$.
\end{proof}

\subsubsection{Two-point correlators.}
\begin{lemma}\label{l26 lemma 2-point}
Let $\Phi_1,\Phi_2\in V$ be two fields of conformal weights $(h_i,\bar{h}_i)$, $i=1,2$. 
\begin{enumerate}[(a)]
\item \label{l26 lemma 2-point (a)} One has
\begin{equation}\label{l26 2-point function}
\langle \Phi_1(z_1) \Phi_2(z_2) \rangle = C_{\Phi_1\Phi_2} \frac{1}{(z_1-z_2)^{h_1+h_2}(\bar{z}_1-\bar{z}_2)^{\bar{h}_1+\bar{h}_2}}
\end{equation}
with $C_{\Phi_1\Phi_2}$ some constant depending on $\Phi_1,\Phi_2$.
\item \label{l26 lemma 2-point (b)} If $\Phi_1,\Phi_2$ are primary, then the constant $C_{\Phi_{1},\Phi_2}$ in (\ref{l26 2-point function}) vanishes unless one has
\begin{equation}\label{l26 lemma condition on conf weights}
h_1=h_2,\;\; \bar{h}_1=\bar{h}_2.
\end{equation}
\item \label{l26 lemma 2-point (c)} For $\Phi_1,\Phi_2$ 
two fields satisfying condition (\ref{l26 lemma condition on conf weights}) on conformal weights,
the constant $C_{\Phi_1\Phi_2}$ in (\ref{l26 2-point function}) is related to the hermitian inner product on $V$ (cf. Axiom (\ref{l23 Axiom inner product})) by
\begin{equation}\label{l26 lemma 2-point (c) eq}
C_{\Phi_1\Phi_2}=\langle \Phi_1^*,\Phi_2  \rangle_V.
\end{equation}
\end{enumerate}
\end{lemma}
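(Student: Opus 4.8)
The three parts use progressively more of the conformal symmetry, so the plan is to peel off the Möbius generators one family at a time. For part (\ref{l26 lemma 2-point (a)}) I would invoke the Ward identity (\ref{l26 Ward for primary fields}) only for vector fields that are at most linear in the coordinate, so that by Remark \ref{l26 rem: Ward identity for translations, rotations, scalings} I need not assume the fields are primary. First I would use the constant vector field $\dd_w$ (and $\dd_{\bar{w}}$) to conclude that $\langle \Phi_1(z_1)\Phi_2(z_2)\rangle$ depends only on the differences, i.e.\ equals $f(z_1-z_2,\bar{z}_1-\bar{z}_2)$. Next I would apply the finite dilation/rotation $z\mapsto\lambda z$ with $\lambda\in\CC^*$ (treating $\lambda,\bar\lambda$ as independent after complexification), which by (\ref{l26 Ward idenity for primary fields, finite version}) gives the homogeneity $f(\lambda z,\bar\lambda\bar{z})=\lambda^{-(h_1+h_2)}\bar\lambda^{-(\bar{h}_1+\bar{h}_2)}f(z,\bar{z})$. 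Choosing $\lambda=1/(z_1-z_2)$ then fixes $f$ up to the single constant $C_{\Phi_1\Phi_2}:=f(1,1)$, yielding (\ref{l26 2-point function}).

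For part (\ref{l26 lemma 2-point (b)}) the plan is to feed in the one remaining generator of the Möbius group, the special conformal transformation, which is exactly where primarity becomes essential (it is genuinely nonlinear, so the transformation law (\ref{l26 Ward idenity for primary fields, finite version}) requires $\Phi_1,\Phi_2$ primary). The cleanest route is the finite inversion $w(z)=-1/z$: I would plug the ansatz from part (\ref{l26 lemma 2-point (a)}) into the equivariance (\ref{l26 Ward idenity for primary fields, finite version}), using $w_1-w_2=(z_1-z_2)/(z_1 z_2)$ and $w'(z_i)=1/z_i^2$. After cancelling the common factor $C_{\Phi_1\Phi_2}(z_1-z_2)^{-(h_1+h_2)}(\bar{z}_1-\bar{z}_2)^{-(\bar{h}_1+\bar{h}_2)}$ from both sides, the identity collapses to $z_1^{h_1+h_2}z_2^{h_1+h_2}=z_1^{2h_1}z_2^{2h_2}$ together with its antiholomorphic twin. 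Matching powers of $z_1$ and $z_2$ separately forces $h_1=h_2$ and $\bar{h}_1=\bar{h}_2$; if these conditions fail, consistency forces $C_{\Phi_1\Phi_2}=0$.

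For part (\ref{l26 lemma 2-point (c)}) the plan is to recompute the same two-point function through the operator definition (\ref{l23 correlator via radial ordering}) and read off $C_{\Phi_1\Phi_2}$ by comparison with (\ref{l26 2-point function}) specialized to $h_1=h_2=:h$, $\bar{h}_1=\bar{h}_2=:\bar{h}$. Writing $\langle\Phi_1(z_1)\Phi_2(z_2)\rangle=\langle\vac|\wh\Phi_1(z_1)\wh\Phi_2(z_2)|\vac\rangle$ with $|z_1|>|z_2|$, I would send $z_2\ra 0$ so that $\wh\Phi_2(z_2)|\vac\rangle\ra\fs(\Phi_2)$ by the field-state correspondence (\ref{l23 field-state correspondence}). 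On the bra side I would use the conjugation formula (\ref{l23 Phi^+}) to recognize $\langle\vac|\wh\Phi_1(z_1)$ as the covector dual to $\bar{z}_1^{-2h}z_1^{-2\bar{h}}\wh{\Phi^*}(1/\bar{z}_1)|\vac\rangle$, and then let $z_1\ra\infty$ so that $\wh{\Phi^*}(1/\bar{z}_1)|\vac\rangle\ra\fs(\Phi_1^*)$. Dualizing conjugates the scalar prefactor (using that $h,\bar{h}$ are real), so the correlator behaves as $z_1^{-2h}\bar{z}_1^{-2\bar{h}}\langle\fs(\Phi_1^*),\fs(\Phi_2)\rangle_\HH$; matching this against (\ref{l26 2-point function}) at $z_2=0$ identifies $C_{\Phi_1\Phi_2}=\langle\fs(\Phi_1^*),\fs(\Phi_2)\rangle_\HH=\langle\Phi_1^*,\Phi_2\rangle_V$, the last equality holding because $\fs$ intertwines the hermitian forms on $V$ and $\HH$.

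I expect the main obstacle to be the careful bookkeeping in part (\ref{l26 lemma 2-point (c)}): keeping the radial ordering $|z_1|>|z_2|$ valid throughout both limits, correctly conjugating the scalar prefactor in (\ref{l23 Phi^+}) so that $\bar{z}_1^{-2h}z_1^{-2\bar{h}}$ becomes $z_1^{-2h}\bar{z}_1^{-2\bar{h}}$ on the bra, and confirming that the limiting bra exists and equals the dual of $\fs(\Phi_1^*)$. By contrast, the algebra in parts (\ref{l26 lemma 2-point (a)}) and (\ref{l26 lemma 2-point (b)}) is routine once the correct symmetry generators are selected.
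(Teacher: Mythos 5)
Your proposal is correct and follows essentially the same route as the paper: part (\ref{l26 lemma 2-point (a)}) from the translation/rotation/dilation Ward identities via Remark \ref{l26 rem: Ward identity for translations, rotations, scalings}, part (\ref{l26 lemma 2-point (b)}) from the remaining (special-conformal) generator of the M\"obius group, and part (\ref{l26 lemma 2-point (c)}) by combining the field-state correspondence, the conjugation formula (\ref{l23 Phi^+}), and the ansatz (\ref{l26 2-point function}) -- you merely run that last computation from the correlator toward $\langle\Phi_1^*,\Phi_2\rangle_V$ rather than the reverse. The only cosmetic difference is in (\ref{l26 lemma 2-point (b)}), where you use the finite inversion $w=-1/z$ while the paper applies the infinitesimal Ward identity with the globally holomorphic vector field $w(1-w)\dd_w$ after fixing $z_1=0$, $z_2=1$; both isolate the same constraint $h_1=h_2$, $\bar h_1=\bar h_2$.
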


\begin{proof} Part (\ref{l26 lemma 2-point (a)}) follows from (\ref{l26 Ward idenity for primary fields, finite version}) for translations, rotations and dilations (we exploit Remark \ref{l26 rem: Ward identity for translations, rotations, scalings}).

For (\ref{l26 lemma 2-point (b)}), let us fix the two points at $z_1=0$ and $z_2=1$ and act on the correlator with the vector field $u\dd_w=w(1-w)\dd_w$ -- a holomorphic vector field on the entire $\CP^1$. 
The Ward identity (\ref{l26 Ward for primary fields}) in this case reads
\begin{equation}
0=\langle -h_1\Phi_1(z_1)\Phi_2(z_2) \rangle + \langle \Phi_1(z_1)h_2 \Phi_2(z_2) \rangle = (h_2-h_1)\langle\Phi_1(z_1)\Phi_2(z_2) \rangle.
\end{equation}
Thus unless $h_1=h_2$, the 2-point correlator vanishes. Likewise, acting with the vector field $\bar{w}(1-\bar{w})\dd_{\bar{w}}$, we find that unless $\bar{h}_1=\bar{h}_2$, the correlator also has to vanish.

For (\ref{l26 lemma 2-point (c)}), we calculate the r.h.s. of (\ref{l26 lemma 2-point (c) eq}) exploiting  the state-field correspondence:
\begin{multline}
\langle \Phi_1^*,\Phi_2 \rangle_V= \lim_{w,z\ra 0}\Big\langle 
\wh{\Phi}_1^*(w)|\vac\rangle, \wh{\Phi}_2(z)|\vac\rangle
\Big\rangle_\HH=\lim_{w,z\ra 0} \langle\vac| 
\wh\Phi_1^*(w)^+ \wh\Phi_2(z)
|\vac\rangle\\
\underset{(\ref{l23 Phi^+})}{=}
\lim_{w,z\ra 0} \bar{w}^{-2h_1}w^{-2\bar{h}_1}\langle \vac| \wh\Phi_1(1/\bar{w}) \wh\Phi_2(z) |\vac\rangle\\
\underset{(\ref{l26 2-point function})}{=}
C_{\Phi_1\Phi_2} \lim_{w,z\ra 0} \bar{w}^{-2h_1}w^{-2\bar{h}_1}
\left(1/\bar{w}-z\right)^{-h_1-h_2}(1/w-\bar{z})^{-\bar{h}_1-\bar{h}_2}\\
=C_{\Phi_1\Phi_2}.
\end{multline}
Here in the last step we used the condition (\ref{l26 lemma condition on conf weights}).
\end{proof}

\begin{example}
In scalar field theory, the correlators 
\begin{equation}
\langle \dd\phi(w) \dd\phi(z) \rangle = -\frac{1}{(w-z)^2}, \qquad 
\langle V_\alpha(w) V_\beta(z)\rangle = \left\{
\begin{array}{cl}
\frac{1}{|w-z|^{2\alpha^2}}, & \alpha=\beta \\
0,& \alpha\neq \beta
\end{array}\right.
\end{equation}
(cf. (\ref{l20 <dd phi  dd phi>}), (\ref{l26 <VV>})) are examples of two-point correlators of primary fields (of weight $(1,0)$ in the first case and of weight $(\frac{\alpha^2}{2},\frac{\alpha^2}{2})$ in the second case). They are clearly consistent with 
the general ansatz (\ref{l26 2-point function}).
\end{example}

\begin{example}
The $TT$ OPE (\ref{l23 TT}) and the ansatz (\ref{l26 2-point function}) imply that the two-point correlator of the stress-energy tensor is
\begin{equation}
\langle T(w) T(z)\rangle=\frac{c/2}{(w-z)^4}.
\end{equation}
With (\ref{l26 lemma 2-point (c) eq}) this implies 
\begin{equation}
\langle T,T \rangle_V = \frac{c}{2}.
\end{equation}
Recall that for a \emph{unitary} CFT the inner product on $V$ is assumed to be positive definite. This means that the central charge $c$ must be a positive number.\footnote{
There are interesting examples of non-unitary CFTs (e.g. the so-called ghost system or $bc$ system, see Section \ref{s bc system}) where the central charge can be negative. For instance, in the $bc$ system one has $c=-26$.
}
\end{example}

\marginpar{Lecture 27,\\ 10/31/2022}
\subsubsection{Three-point correlators of primary fields.}
\begin{lemma}\label{l27 lemma: 3-point fun}
For any three primary fields $\Phi_1,\Phi_2,\Phi_3\in V$, with $\Phi_i$ of conformal weights $(h_i,\bar{h}_i)$, one has
\begin{equation}\label{l27 3-point fun}
\langle \Phi_1(z_1)\Phi_2(z_2)\Phi_3(z_3) \rangle = C_{\Phi_1\Phi_2\Phi_3} \prod_{1\leq i<j \leq 3}\frac{1}{(z_i-z_j)^{2\alpha_{ij}}(\bar{z}_i-\bar{z}_j)^{2\bar\alpha_{ij}}},
\end{equation}
where $C_{\Phi_1\Phi_2\Phi_3}$ is a constant (depending on the fields but not on the points $z_1,z_2,z_3$) and the exponents are expressed in terms of conformal weights of the fields:
\begin{equation}\label{l27 alpha_ij}
\begin{gathered}
\alpha_{12}=\frac12 (h_1+h_2-h_3),\;\alpha_{13}=\frac12 (h_1+h_3-h_2),\; \alpha_{23}=\frac12 (h_2+h_3-h_1),\\
\bar{\alpha}_{12}=\frac12 (\bar{h}_1+\bar{h}_2-\bar{h}_3),\;
\bar{\alpha}_{13}=\frac12 (\bar{h}_1+\bar{h}_3-\bar{h}_2),\;
\bar{\alpha}_{23}=\frac12 (\bar{h}_2+\bar{h}_3-\bar{h}_1).
\end{gathered}
\end{equation}
\end{lemma}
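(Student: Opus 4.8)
The plan is to run the same strategy as in Lemmas \ref{l26 lemma 1-point corr} and \ref{l26 lemma 2-point}, feeding the conformal Ward identity (\ref{l26 Ward for primary fields}) the three globally holomorphic vector fields on $\CP^1$, namely $u(w)=1$, $u(w)=w$, $u(w)=w^2$ (these span $\mathfrak{sl}_2(\CC)$ and are all regular at $w=\infty$, so the hypothesis of the identity is met). I treat the holomorphic and antiholomorphic exponents independently, the antiholomorphic case being identical under $z\to\bar z$, $h\to\bar h$. Writing $G=\langle\Phi_1(z_1)\Phi_2(z_2)\Phi_3(z_3)\rangle$: the constant field $u=1$ (so $\dd u=0$) gives $\sum_k\dd_{z_k}G=0$, so by translation invariance $G$ depends only on the differences $z_{ij}:=z_i-z_j$ and $\bar z_{ij}$. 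The field $u=w$ (so $\dd u=1$) gives the Euler relation $\sum_k(z_k\dd_{z_k}+h_k)G=0$, i.e. $G$ is homogeneous of degree $-(h_1+h_2+h_3)$ in the $z_{ij}$. Finally $u=w^2$ (so $\dd u=2w$) gives the special-conformal constraint $\sum_k\big(z_k^2\dd_{z_k}+2h_k z_k\big)G=0$, which is the new ingredient absent from the two-point case.

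To extract the exponents I would substitute the ansatz $G=C\prod_{i<j}z_{ij}^{-2\alpha_{ij}}\bar z_{ij}^{-2\bar\alpha_{ij}}$ into the special-conformal constraint. Using $z_i^2-z_j^2=z_{ij}(z_i+z_j)$, one computes $\sum_k z_k^2\,\dd_{z_k}\log G=-\sum_{i<j}2\alpha_{ij}(z_i+z_j)$, so the constraint collapses to $-\sum_{i<j}2\alpha_{ij}(z_i+z_j)+2\sum_k h_k z_k=0$. Since this must hold for all configurations, the coefficient of each $z_k$ vanishes separately, giving the linear system
\begin{equation}
\alpha_{12}+\alpha_{13}=h_1,\qquad \alpha_{12}+\alpha_{23}=h_2,\qquad \alpha_{13}+\alpha_{23}=h_3,
\end{equation}
whose unique solution is exactly (\ref{l27 alpha_ij}); the dilation relation $\sum_{i<j}\alpha_{ij}=\tfrac12\sum_k h_k$ is then automatic, confirming consistency.

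A cleaner alternative to integrating the PDEs is to verify the finite transformation law (\ref{l26 Ward idenity for primary fields, finite version}) directly on the product ansatz. Under a Möbius map $w(z)=\tfrac{az+b}{cz+d}$ with $ad-bc=1$ one has $w(z_i)-w(z_j)=\tfrac{z_{ij}}{(cz_i+d)(cz_j+d)}$ and $w'(z_i)=(cz_i+d)^{-2}$, so the prefactors $(cz_i+d)$ collect, index by index, into $(cz_i+d)^{2(\alpha_{ij}+\alpha_{ik})}=(cz_i+d)^{2h_i}=\big(w'(z_i)\big)^{-h_i}$, precisely because the two exponents of the factors containing $z_i$ sum to $h_i$. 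Thus the proposed formula is Möbius-covariant with the correct weights.

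The main point requiring care — and the reason the answer is a single power law rather than carrying an undetermined function — is the justification that $G$ is \emph{globally} of product form. This rests on the fact that $PSL_2(\CC)$ acts \emph{simply transitively} on ordered triples of distinct points of $\CP^1$: any triple can be sent to $(0,1,\infty)$, so Möbius covariance determines $G$ up to its single value at the reference configuration, i.e. up to the constant $C$. Hence the solution space is at most one-dimensional, and since the displayed formula is a covariant solution it is \emph{the} solution. I would note that this is exactly where the three-point case differs from the four-point case, where the stabilizer of the reference configuration is trivial-dimensional in the wrong direction and a cross-ratio survives; here no invariant of the configuration remains, so $C$ is genuinely a constant.
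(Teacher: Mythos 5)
Your argument is correct, and its load-bearing step is the same as in the paper's Proof \#2 of Lemma \ref{l27 lemma: 3-point fun}: the candidate product is M\"obius-covariant with the correct weights and nowhere vanishing, so the ratio of the actual correlator to it is a M\"obius-invariant function on $C_3(\CP^1)$, which is constant because $PSL_2(\CC)$ acts (simply) transitively on ordered triples of distinct points. Where you genuinely diverge is in how the exponents $\alpha_{ij}$ are produced. The paper either computes the Jacobian factors of the explicit M\"obius map sending $(z_1,z_2,z_3)$ to a reference triple (Proof \#1), or fixes the $\alpha_{ij}$ by matching line-bundle degrees against powers of the Szeg\"o kernel, i.e.\ by imposing $h_i=\alpha_{ij}+\alpha_{ik}$ (Proof \#2, the display right after (\ref{l27 3-point fun via Szego})). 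You land on exactly that same linear system, but from the infinitesimal side: the Ward identity (\ref{l26 Ward for primary fields}) for the global vector fields $\dd_w$, $w\dd_w$, $w^2\dd_w$, with the special-conformal constraint evaluated on the ansatz via $\sum_k z_k^2\dd_{z_k}\log\prod_{i<j}(z_i-z_j)^{-2\alpha_{ij}}=-2\sum_{i<j}\alpha_{ij}(z_i+z_j)$. This is a more elementary and self-contained route -- no Szeg\"o kernel, no explicit reference map -- at the cost of being purely computational; and you correctly avoid having to actually integrate the three PDEs by falling back on the covariance-plus-transitivity uniqueness argument. Two minor points: your reference configuration $(0,1,\infty)$ taken literally would require a field insertion at infinity (the paper deliberately uses $(0,1,2)$), but your ratio formulation makes this moot; and the nonvanishing of the candidate product, which is what makes the ratio an honest invariant function, deserves one explicit word, as the paper gives it for the section $A$.
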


\begin{proof}[Proof \#1 (idea)]
Take the unique M\"obius transformation $f\colon\CP^1\ra \CP^1 $ that maps points $z_1,z_2,z_3$ to $0,1,2$. Then the Ward identity 
(\ref{l26 Ward idenity for primary fields, finite version}) allows one to write the 3-point correlator as
\begin{equation}
\langle \Phi_1(z_1)\Phi_2(z_2)\Phi_3(z_3) \rangle =\prod_{i=1}^3 (\dd f(z_i))^{h_i} (\ol{\dd f(z_i)})^{\bar{h}_i}\cdot \underbrace{\langle \Phi_1(0) \Phi_1(1)\Phi_3(2) \rangle}_{\til{C}}
\end{equation}
with $\til{C}$ some constant. Computing explicitly the derivatives in the r.h.s., one obtains (\ref{l27 3-point fun}).
\end{proof}

Let us introduce the notation
\begin{equation}\label{l27 mu}
\mu=\frac{dz_1\wedge dz_2}{(z_1-z_2)^2} \qquad \in \Gamma(C_2(\CP^1),\pi_1^*K\otimes \pi_2^*K) \subset \Omega^2(C_2(\CP^1)).
\end{equation}
with $\pi_i$ as in (\ref{l26 corr of primary fields as a section of a line bundle}). We will call $\mu$ the \emph{Szeg\"o kernel}.\footnote{In the standard terminology, it is the square root of $\mu$ that is called the Szeg\"o kernel.}

\begin{lemma}
The Szeg\"o kernel defined by (\ref{l27 mu}) is the unique (up to normalization) nowhere vanishing \emph{M\"obius-invariant} holomorphic $2$-form on the configuration space of two points on $\CP^1$. 
\end{lemma}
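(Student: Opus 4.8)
The plan is to establish, in order, that $\mu$ is holomorphic and nowhere vanishing on $C_2(\CP^1)$, that it is M\"obius-invariant, and finally that these two properties determine it up to an overall scalar. First I would check that $\mu$ is a genuine \emph{global} holomorphic $2$-form on the configuration space, not merely a form on the affine locus where both points are finite. On the affine chart the denominator $(z_1-z_2)^2$ never vanishes since $z_1\neq z_2$ on $C_2(\CP^1)$, so $\mu$ is holomorphic and nowhere zero there. To handle the locus where one point is at infinity, say $z_1$, I would set $\zeta=1/z_1$ and compute
\[
\frac{dz_1\wedge dz_2}{(z_1-z_2)^2}=\frac{-\,d\zeta\wedge dz_2}{(1-\zeta z_2)^2},
\]
which is manifestly holomorphic and nowhere vanishing at $\zeta=0$; the symmetric computation covers $z_2\to\infty$. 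Thus $\mu$ is a nowhere-vanishing holomorphic $2$-form on all of $C_2(\CP^1)$.

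Next comes M\"obius invariance. For $f(z)=\frac{az+b}{cz+d}$ normalized so that $ad-bc=1$, I would use the two elementary identities $f(z_i)-f(z_j)=\frac{z_i-z_j}{(cz_i+d)(cz_j+d)}$ and $df(z_i)=\frac{dz_i}{(cz_i+d)^2}$. Substituting these into $\mu$ evaluated at the image pair $(f(z_1),f(z_2))$, all factors of $(cz_i+d)$ cancel and one recovers $\mu$ unchanged. This is a short direct computation, and note that it is really the statement that $\mu$ is a section of $\pi_1^*K\otimes\pi_2^*K$ (cf. (\ref{l27 mu})) which is invariant under the diagonal M\"obius action.

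For uniqueness I would exploit the fact that $PSL_2(\CC)$ acts \emph{transitively} on $C_2(\CP^1)$: M\"obius transformations act sharply $3$-transitively on triples of distinct points, hence in particular transitively on ordered pairs of distinct points, exhibiting $C_2(\CP^1)$ as a single orbit (a homogeneous space $PSL_2(\CC)/H$, with $H$ the one-dimensional stabilizer of a fixed pair). Now let $\nu$ be any holomorphic M\"obius-invariant $2$-form. Since $\mu$ is nowhere vanishing, the ratio $g:=\nu/\mu$ is a well-defined \emph{holomorphic function} on $C_2(\CP^1)$, and it is M\"obius-invariant as a quotient of two invariant forms. But an invariant function on a single group orbit is determined by its value at one point, hence constant; therefore $\nu=g\cdot\mu$ with $g$ a constant, which proves that $\mu$ is unique up to normalization (and that any nonzero such $\nu$ is automatically nowhere vanishing).

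The genuinely important step is the last one, and its essential trick is to divide by the nowhere-vanishing $\mu$ so as to convert a question about invariant $2$-forms into a question about invariant \emph{functions}, which transitivity then immediately forces to be constant. By comparison the holomorphicity-at-infinity bookkeeping and the invariance computation are routine; the only place one must be a little careful is verifying that the $PSL_2(\CC)$-action is indeed transitive on the whole configuration space (including pairs involving $\infty$), so that the homogeneous-space argument applies globally.
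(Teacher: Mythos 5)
Your proof is correct and follows essentially the same route as the paper's: the same coordinate change $\zeta=1/z_1$ to verify nonvanishing at infinity, and the same uniqueness argument of dividing by the nowhere-vanishing $\mu$ and using transitivity of $PSL_2(\CC)$ on pairs of distinct points to force the invariant ratio to be constant. The only cosmetic difference is that you verify invariance by a direct computation with a general M\"obius transformation (using $ad-bc=1$), whereas the paper checks it on the generators (translations, dilations, inversion); both are equally valid.
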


\begin{proof} To check that $\mu$ is M\"obius-invariant, we observe that it is invariant under (a) translations $z\mapsto z+a$, (b) rotation and dilation $z\mapsto \lambda z$  (since $\mu$ is homogeneous of degree zero), (c) the map $i\colon z\mapsto 1/z$ (indeed, 
$i^*\mu= \frac{\frac{-dz_1}{z_1^2}\frac{-dz_2}{z_2^2}}{(\frac{1}{z_1}-\frac{1}{z_2})^2}=\mu$). These transformation generate all M\"obius transformations, thus $\mu$ is M\"obius-invariant. The fact that $\mu$ is nowhere vanishing is obvious if $z_1,z_2\neq \infty$. For  $z_1=\infty$  we switch for the point $z_1$ to the coordinate chart $w_1=1/z_1$ near the point $\infty\in\CP^1$. We have then  $\mu=-\frac{dw_1\wedge dz_2}{(1-w_1 z_2)^2}$ -- it is nonvanishing at $w_1=0$. The case $z_2=\infty$ is similar.

If $\nu$ is some other M\"obius-invariant section of the line bundle $\pi_1^* K\otimes \pi_2^* K$ over $C_2(\CP^1)$, we must have $\nu=f \mu$ for some M\"obius-invariant function $f$ on $C_2(CP^1)$. Such a function has to be constant, since any two points on $\CP^1$ can be moved to $0,1$ by a M\"obius transformation (and thus $f(z_1,z_2)=f(0,1)$ for any $z_1\neq z_2\in \CP^1$). This proves uniqueness of $\mu$ up to a multiplicative constant.
\end{proof}

In terms of the Szeg\"o kernel, the three-point function of primary fields (\ref{l27 3-point fun}) admits an equivalent expression:
\begin{equation}\label{l27 3-point fun via Szego}
\langle \mathbf{\Phi}_1(z_1) \mathbf{\Phi}_2(z_2) \mathbf{\Phi}_3(z_3) \rangle = C_{\Phi_1\Phi_2\Phi_3} \prod_{1\leq i< j\leq 3} (\pi_{ij}^* \mu)^{\alpha_{ij}} (\pi_{ij}^* \bar\mu)^{\bar\alpha_{ij}}
\end{equation}
where $\pi_{ij}\colon C_3(\CP^1)\ra C_2(\CP^1)$ maps $(z_1,z_2,z_3)\mapsto (z_i,z_j)$ and we used the notation (\ref{l26 Phi underline}). The exponents 
(\ref{l27 alpha_ij})
are chosen in such a way that the r.h.s.  of (\ref{l27 3-point fun via Szego}) is the section of the same line bundle over $C_3(\CP^1)$ as the l.h.s., i.e., so that the power of $dz_i$ is the same on both sides for $i=1,2,3$: 
\begin{equation}
h_1=\alpha_{12}+\alpha_{13},\; h_2=\alpha_{12}+\alpha_{23},\;h_3= \alpha_{13}+\alpha_{23},
\end{equation}
and similarly for powers of $d\bar{z}_i$.

\begin{proof}[Proof \#2 of Lemma \ref{l27 lemma: 3-point fun}]
Denote the r.h.s. of (\ref{l27 3-point fun via Szego}) without the factor $C_{\Phi_1\Phi_2\Phi_3}$ by $A$.
The l.h.s. of (\ref{l27 3-point fun via Szego}) and $A$ both are sections of the line bundle $\bigotimes_{i=1}^3 \pi_i^* K^{h_i,\bar{h}_i}$ over $C_3(\CP^1)$. Moreover, both are M\"obius invariant (the ll.h.s by Ward identity and $A$ by M\"obius-invariance of Szeg\"o kernel) and $A$ is nonvanishing. Therefore, one has
\begin{equation}
(\mbox{l.h.s. of (\ref{l27 3-point fun via Szego})}) = f\cdot A
\end{equation}
where $f$ is a M\"obius-invariant function on $C_3(\CP^1)$. Since M\"obius group acts 3-transitively on $\CP^1$, such a function has to be constant.
\end{proof}

\subsubsection{Correlators of $n\geq 4$ primary fields}
\begin{lemma}\label{l27 lemma: n-point fun from global conformal invariance}
For $\Phi_1,\ldots, \Phi_n\in V$ a collection of $n\geq 4$ primary fields, with $\Phi_i$ of conformal dimension $(h_i,\bar{h}_i)$, one has 
\begin{equation}\label{l27 n-point fun}
\langle \mathbf{\Phi}_1(z_1)\cdots \mathbf{\Phi}_n(z_n) \rangle = \prod_{1\leq i<j \leq n} (\pi_{ij}^*\mu)^{\alpha_{ij}}(\pi_{ij}^*\bar\mu)^{\bar{\alpha}_{ij}}\cdot \mc{F}_{\Phi_1\cdots\Phi_n}(\lambda_1,\ldots,\lambda_{n-3}),
\end{equation}
where $\mu$ is the Szeg\"o kernel (\ref{l27 mu}), $\lambda_i=[z_1,z_2:z_3,z_{i+3}]$ for $i=1,\ldots,n-3$ are cross-ratios, the exponents $\alpha_{ij}$, $\bar\alpha_{ij}$ are
\begin{equation}\label{l27 alpha_ij for n-point fun}
\alpha_{ij}=\frac{1}{n-2}(h_i+h_j-\frac{1}{n-1}\sum_{k=1}^n h_k),\quad \bar\alpha_{ij}=\frac{1}{n-2}(\bar{h}_i+\bar{h}_j-\frac{1}{n-1}\sum_{k=1}^n \bar{h}_k)
\end{equation}
and $\mc{F}_{\Phi_1\cdots\Phi_n}$ is some smooth function on $C_{n-3}(\CP^1\backslash\{0,1,\infty\})$ (it cannot be determined from the global conformal symmetry).
\end{lemma}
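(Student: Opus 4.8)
The plan is to imitate Proof \#2 of Lemma \ref{l27 lemma: 3-point fun}. I will exhibit the product of Szeg\"o kernels on the right-hand side of (\ref{l27 n-point fun}) as a M\"obius-invariant, nowhere-vanishing section of the \emph{same} line bundle $\bigotimes_{i=1}^n \pi_i^* K^{h_i,\bar{h}_i}$ over $C_n(\CP^1)$ in which the correlator lives by (\ref{l26 corr of primary fields as a section of a line bundle}). The ratio of the two is then a $PSL_2(\CC)$-invariant function, and the final step is to identify such functions with functions of the cross-ratios.

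The one genuine computation is to check that the exponents (\ref{l27 alpha_ij for n-point fun}) are exactly those making the ansatz a section of the correct bundle. Since $\pi_{ij}^*\mu$ carries precisely one factor of $dz_i$ and one of $dz_j$, the total power of $dz_i$ in $\prod_{i<j}(\pi_{ij}^*\mu)^{\alpha_{ij}}$ is $\sum_{j\neq i}\alpha_{ij}$, so the matching condition is the linear system $\sum_{j\neq i}\alpha_{ij}=h_i$ for each $i$ (and likewise $\sum_{j\neq i}\bar\alpha_{ij}=\bar{h}_i$). Using $\sum_{j\neq i}1=n-1$ and $\sum_{j\neq i}h_j=\sum_k h_k - h_i$, direct substitution of (\ref{l27 alpha_ij for n-point fun}) collapses the sum to $\frac{1}{n-2}\bigl((n-1)h_i + (\sum_k h_k - h_i) - \sum_k h_k\bigr)=h_i$, confirming that (\ref{l27 alpha_ij for n-point fun}) solves the system. (For $n\geq 4$ the system is underdetermined, with $\binom{n}{2}$ unknowns and only $n$ equations; formula (\ref{l27 alpha_ij for n-point fun}) is simply the symmetric solution, and any other choice would shift the factor into $\mc{F}$.)

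With the exponents fixed, write $A=\prod_{i<j}(\pi_{ij}^*\mu)^{\alpha_{ij}}(\pi_{ij}^*\bar\mu)^{\bar\alpha_{ij}}$. Because $\mu$ is M\"obius-invariant and nowhere vanishing (as established for the Szeg\"o kernel (\ref{l27 mu})), so is $A$; meanwhile the correlator is a M\"obius-invariant section of the same bundle by the finite Ward identity (\ref{l26 Ward idenity for primary fields, finite version}). Hence $\mc{F}:=\langle \mathbf{\Phi}_1(z_1)\cdots\mathbf{\Phi}_n(z_n)\rangle / A$ is a well-defined $PSL_2(\CC)$-invariant function on $C_n(\CP^1)$. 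Since $PSL_2(\CC)$ acts freely and $3$-transitively on ordered triples of distinct points, I can use the unique M\"obius transformation sending $z_1,z_2,z_3$ to $0,1,\infty$ to reduce to a slice; the residual data is the position of the remaining $n-3$ points, recorded by the cross-ratios $\lambda_i=[z_1,z_2:z_3,z_{i+3}]$ ranging over $C_{n-3}(\CP^1\backslash\{0,1,\infty\})$. Thus $\mc{F}=\mc{F}_{\Phi_1\cdots\Phi_n}(\lambda_1,\ldots,\lambda_{n-3})$, which is precisely (\ref{l27 n-point fun}).

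The point distinguishing this from the cases $n\leq 3$ — rather than a genuine obstacle — is that for $n\geq 4$ the M\"obius group no longer acts transitively on configurations: the quotient $C_n(\CP^1)/PSL_2(\CC)$ has positive dimension $n-3$, so $\mc{F}$ is a genuine function of moduli rather than a constant, and global conformal symmetry alone cannot pin it down. One minor point to keep straight is that the $\alpha_{ij}$ are generally non-integer, so both $A$ and the correlator are sections of a line bundle with fractional (multivalued) transition functions; their ratio $\mc{F}$ is nonetheless a well-defined single-valued function, since these identical transition functions cancel between numerator and denominator.
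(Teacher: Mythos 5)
Your proof is correct and follows the same route as the paper's: both verify that the exponents (\ref{l27 alpha_ij for n-point fun}) solve $\sum_{j\neq i}\alpha_{ij}=h_i$ so that the Szeg\"o-kernel ansatz is a M\"obius-invariant nowhere-vanishing section of the same line bundle as the correlator, then conclude the ratio is a M\"obius-invariant function, which is a function of the $n-3$ cross-ratios after normalizing three points. Your explicit check of the linear system and the remarks on non-uniqueness of the exponents and cancellation of fractional transition functions are welcome elaborations of what the paper relegates to a footnote.
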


Put another way, the result is that any M\"obius-invariant section of the line bundle in the r.h.s. of (\ref{l26 corr of primary fields as a section of a line bundle})  is built out of two types of ``building blocks'' -- cross-ratios and Szeg\"o kernels.

\begin{proof}
The proof is similar to the proof \#2 of Lemma \ref{l27 lemma: 3-point fun} above: the l.h.s. of (\ref{l27 n-point fun}) and $B\colon=  \prod_{1\leq i<j \leq n} (\pi_{ij}^*\mu)^{\alpha_{ij}}(\pi_{ij}^*\bar\mu)^{\bar{\alpha}_{ij}}$ are both M\"obius-invariant sections of the line bundle\footnote{
Note that the exponents (\ref{l27 alpha_ij for n-point fun}) are chosen in such a way that one has $\alpha_{ij}=\alpha_{ji}$ and $\sum_{j\neq i}\alpha_{ij}=h_i$ (and similarly for $\bar\alpha_{ij}$), which implies that  both sides of (\ref{l27 n-point fun}) are sections of the same line bundle.
} $\bigotimes_{i=1}^n \pi_i^* K^{h_i,\bar{h}_i}$  over $C_n(\CP^1)$ and $B$ is nonvanishing, therefore one has
\begin{equation}
(\mbox{l.h.s. of (\ref{l27 n-point fun})}) = g\cdot B
\end{equation}
with $g$ a M\"obius-invariant function on $C_n(\CP^1)$. Choosing a M\"obius transformation that maps $(z_1,\ldots,z_n)$ to $(1,0,\infty,\lambda_1,\ldots,\lambda_{n-3})$, we obtain 
\begin{equation}
g(z_1,\ldots,z_n)=g(1,0,\infty,\lambda_1,\ldots,\lambda_{n-3})=: \mc{F}(\lambda_1,\ldots,\lambda_{n-3}).
\end{equation}
\end{proof}

\section{Holomorphic fields, mode operators}
\subsection{Holomorphic fields}
\begin{definition}
We call a (not necessarily primary) field $\Phi\in V$ ``holomorphic'' if it satisfies $\bar\dd\Phi=0$.  Then in particular, correlation functions of the form $\langle \Phi(z) \Phi_1(z_1)\cdots \Phi_n(z_n) \rangle$ are holomorphic in $z$ (for $z$ distinct from $z_1,\ldots,z_n$).
Similarly, we call $\Phi\in V$ ``antiholomorphic'' if it satisfies $\dd\Phi=0$.
\end{definition}

\begin{lemma}\label{l25 lemma: holomorphic fields}
Then if a field $\Phi\in V$ 
in a \emph{unitary} CFT has conformal weight of the form $(h,0)$ (i.e. $\bar{h}=0$), then it is holomorphic. 
Similarly, if $\Phi$ 
has conformal weight $(0,\bar{h})$ then it is antiholomorphic.
\end{lemma}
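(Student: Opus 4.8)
The plan is to reduce the holomorphicity statement $\bar\dd\Phi = 0$ to the vanishing of the Virasoro descendant $\ol{L}_{-1}\Phi$, and then to establish the latter by a norm computation. By the $L_{-1}$ axiom one has $(\ol{L}_{-1}\Phi)(z) = \bar\dd\Phi(z)$, so it suffices to show $\ol{L}_{-1}\Phi = 0$ as an element of $V$.

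First I would compute the square of the norm of $\ol{L}_{-1}\Phi$ with respect to the hermitian form on $V$. Using the conjugation rule $\ol{L}_n^+ = \ol{L}_{-n}$ of (\ref{l25 L_n^+ local}) together with the Virasoro commutation relation (\ref{l22 Virasoro via L}), which gives $[\ol{L}_1, \ol{L}_{-1}] = 2\ol{L}_0$ (the central term vanishes since $n^3 - n = 0$ at $n = 1$), one obtains
\begin{equation*}
\langle \ol{L}_{-1}\Phi, \ol{L}_{-1}\Phi \rangle = \langle \Phi, \ol{L}_1 \ol{L}_{-1}\Phi \rangle = \langle \Phi, (2\ol{L}_0 + \ol{L}_{-1}\ol{L}_1)\Phi \rangle.
\end{equation*}

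The one step that must be argued rather than merely substituted is that $\ol{L}_1\Phi = 0$. Since $\ol{L}_1$ lowers the $\ol{L}_0$-eigenvalue by one (cf.\ (\ref{l25 hbar -> hbar+n})), the field $\ol{L}_1\Phi$, if nonzero, would have conformal weight $(h, -1)$; but by Assumption \ref{l25 assump: (h,hbar) properties}(\ref{l25 assump (a)}) the admissible weights lie in $\RR_{\geq 0}\times\RR_{\geq 0}$, so the weight space at $\bar{h} = -1$ is trivial and hence $\ol{L}_1\Phi = 0$. Feeding this back, together with $\ol{L}_0\Phi = \bar{h}\Phi = 0$, into the displayed identity yields $\langle \ol{L}_{-1}\Phi, \ol{L}_{-1}\Phi \rangle = 2\bar{h}\langle \Phi, \Phi \rangle = 0$.

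The final step is where unitarity is essential, and it is precisely the reason the hypothesis must strengthen mere nondegeneracy of the form: in a unitary CFT the hermitian form on $V$ is positive-definite, so a vector of zero norm must itself vanish. Therefore $\ol{L}_{-1}\Phi = \bar\dd\Phi = 0$, i.e.\ $\Phi$ is holomorphic. The antiholomorphic case, for $\Phi$ of weight $(0, \bar{h})$, is entirely parallel with the roles of $L$ and $\ol{L}$ interchanged: one checks $L_1\Phi = 0$ (its weight would be $(-1, \bar{h})$, again inadmissible), computes $\langle L_{-1}\Phi, L_{-1}\Phi \rangle = 2h\langle \Phi, \Phi \rangle = 0$, and concludes $\dd\Phi = 0$ by positive-definiteness.
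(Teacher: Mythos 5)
Your proof is correct and follows essentially the same route as the paper's: compute $\langle \ol{L}_{-1}\Phi,\ol{L}_{-1}\Phi\rangle$ via $\ol{L}_1^+=\ol{L}_{-1}$ and $[\ol{L}_1,\ol{L}_{-1}]=2\ol{L}_0$, kill the $\ol{L}_1\Phi$ term by the nonnegativity of conformal weights, and conclude from positive-definiteness of the hermitian form. You also correctly isolate the role of unitarity (positive-definiteness rather than mere nondegeneracy) in the final step, exactly as the paper does.
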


\begin{proof}
Consider a field  
$\Phi\in V$ of conformal weight $(h,\bar{h}=0)$. 
 Computing the square of the norm of $\ol{L}_{-1}\Phi$ we find
\begin{equation}
\Big\langle \ol{L}_{-1}\Phi,\ol{L}_{-1}\Phi \Big\rangle \underset{(\ref{l25 L_n^+ local})}{=}
\Big\langle \Phi,\ol{L}_1\ol{L}_{-1}\Phi \Big\rangle=
\Big\langle \Phi,(2\ol{L}_0+\ol{L}_{-1}\ol{L}_1)\Phi \Big\rangle=2\bar{h} \langle\Phi,\Phi\rangle=0.
\end{equation}
Here we used that $\ol{L}_1\Phi=0$, since if it were nonzero it would be a field of conformal weight $(h,-1)$, and by Assumption \ref{l25 assump: (h,hbar) properties} (\ref{l25 assump (a)}) (implied by unitarity) negative conformal weights are inadmissible.
Since the hermitian form on $V$ is 
positive definite (again by unitarity),
this implies 
\begin{equation}
\ol{L}_{-1}\Phi=\bar\dd\Phi=0,
\end{equation}
i.e., $\Phi$ is a holomorphic field. 
\end{proof}

For example, in any CFT, the stress-energy tensor $T$ is a $(2,0)$-field and therefore is holomorphic.\footnote{We already included holomorphicity of $T$ as a part of axiomatics in (\ref{l23 dbar T=0}). Lemma \ref{l25 lemma: holomorphic fields} provides another explanation why $T$ should be holomorphic.} In the scalar field theory, $\dd\phi$ is a $(1,0)$-field and thus is holomorphic.

\subsection{Mode operators}
\begin{definition} \label{l25 def: mode operators}
Let $\Xi\in V$ be a holomorphic field of conformal weight $(h,0)$, with $h\in\ZZ$. 
One defines the
``mode operators'' associated with $\Xi$ as the operators 
$\Xi_{(n)}\in \mr{End}(V)$, with $n\in\ZZ$, defined by
\begin{equation}\label{l25 mode operator}
\Xi_{(n)}\Phi(z)=\frac{1}{2\pi i}\oint_{\gamma_z} dw (w-z)^{n+h-1} \Xi(w) \Phi(z)
\end{equation}
for any test field $\Phi\in V$,
with $\gamma_z$ the contour going around $z$. Put another way, operators $\Xi_{(n)}$ yield terms in the OPE of $\Xi$ with the test field:
\begin{equation}
\Xi(w)\Phi(z)\sim \sum_{n\in\ZZ} \frac{\Xi_{(n)}\Phi(z)}{(w-z)^{n+h}}.
\end{equation}
%
\end{definition}

For instance, the mode operators for the stress-energy tensor $T$ are the Virasoro generators $L_n$, cf. (\ref{l25 L_n loc}). Another example: mode operators for the identity field $\mathbb{1}$ are $\mathbb{1}_{(n)}=\delta_{n,0}\,\mr{id}_V$.

The shift by $h$ in the definition (\ref{l25 mode operator}) is designed in such a way that the operator $\Xi_{(-n)}$ shifts the conformal weight by $(n,0)$.


\subsection{The Lie algebra of mode operators.}
\begin{lemma}\label{l25 lemma: algebra of mode operators}
Assume that the CFT contains a collection of holomorphic fields $\{\Phi_i\}_{i\in I}$ (with $I$ an indexing set) of conformal weights $(h_i,0)$ satisfying  the OPEs
\begin{equation}
\Phi_i(w)\Phi_j(z)\sim \sum_{k\in I} f_{ijk}\frac{\Phi_k(z)}{(z-w)^{h_i+h_j-h_k}}+\mr{reg.}
\end{equation}
with $f_{ijk}$ some constants (note that the exponents in the OPE are fixed by Lemma \ref{l26 lemma OPE exponents}). Then the mode operators of fields $\Phi_i$ satisfy the commutation relations
\begin{equation}\label{l25 comm relations of mode operators}
[\Phi_{i(n)},\Phi_{j(m)}]=\sum_{k\in I} f_{ijk} 
\left(
\begin{array}{c}
n+h_i-1 \\ h_i+h_j-h_k-1
\end{array}
 \right)
 \Phi_{k(n+m)}.
\end{equation}
\end{lemma}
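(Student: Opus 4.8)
The plan is to adapt, essentially verbatim, the contour-integration trick of Section \ref{sss: Virasoro from TT OPE}, which derived the Virasoro relations from the $TT$ OPE; the present statement is the same computation with the $TT$ OPE replaced by the general OPE of the $\Phi_i$, and with the specific Virasoro coefficients replaced by residues that come out to be binomial coefficients. Since (\ref{l25 comm relations of mode operators}) is an identity in $\mr{End}(V)$, it suffices to evaluate both sides on an arbitrary test field $\Phi$, which I place at the origin (the mode algebra being independent of the reference point); so I must compute $[\Phi_{i(n)},\Phi_{j(m)}]\Phi(0)$. Writing each mode operator as a contour integral around the origin (Definition \ref{l25 def: mode operators} with $z=0$), I express $\Phi_{i(n)}\Phi_{j(m)}\Phi(0)$ as an iterated integral over $w_1$ (for $\Phi_i$) and $w_2$ (for $\Phi_j$) with $\Phi_i(w_1)\Phi_j(w_2)$ radially ordered, $|w_1|>|w_2|>0$, and $\Phi_{j(m)}\Phi_{i(n)}\Phi(0)$ as the same integrand with the opposite radial ordering $|w_2|>|w_1|>0$.

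First I would combine the two terms. Fixing $w_2$ on a circle $\gamma_{0,R}$ and using holomorphicity of the integrand in $w_1$ away from $w_1=w_2$ and $w_1=0$ (here the hypothesis $\bar\dd\Phi_i=0$ is essential, exactly as the property (\ref{l23 dbar T=0}) was used for $T$), the difference of the two nested integrals becomes a single integral in which $w_2$ runs over $\gamma_{0,R}$ and $w_1$ runs over a cycle encircling only the point $w_2$; this cycle I deform to a small circle $\gamma_{w_2,\epsilon}$. There I substitute the OPE hypothesis for $\mc{R}\,\Phi_i(w_1)\Phi_j(w_2)$. The regular part contributes nothing: together with the factor $w_1^{\,n+h_i-1}$ (holomorphic near $w_2\neq 0$) it is holomorphic inside $\gamma_{w_2,\epsilon}$, so only the polar terms survive.

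The heart of the computation is the inner residue. Setting $w_1=w_2+\alpha$ and expanding
\begin{equation}
w_1^{\,n+h_i-1}=\sum_{l\geq 0}\binom{n+h_i-1}{l}\,w_2^{\,n+h_i-1-l}\,\alpha^{l},
\end{equation}
the $\Phi_k$-term of the OPE carries a pole of order $h_i+h_j-h_k$ at $w_1=w_2$, i.e. a factor $\alpha^{-(h_i+h_j-h_k)}$, so extracting the coefficient of $\alpha^{-1}$ selects $l=h_i+h_j-h_k-1$ and produces $\binom{n+h_i-1}{h_i+h_j-h_k-1}$ together with $w_2^{\,n-h_j+h_k}\,\Phi_k(w_2)$. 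The leftover $w_2$-integral is then
\begin{equation}
\frac{1}{2\pi i}\oint_{\gamma_{0,R}} dw_2\; w_2^{\,(n+m)+h_k-1}\,\Phi_k(w_2)=\Phi_{k(n+m)},
\end{equation}
since $(m+h_j-1)+(n-h_j+h_k)=(n+m)+h_k-1$, which is precisely the definition (\ref{l25 mode operator}) of $\Phi_{k(n+m)}$ (at the origin). Summing over $k$ yields (\ref{l25 comm relations of mode operators}).

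I expect no conceptual obstacle: this is the Virasoro computation with the combinatorics streamlined by the absence of derivative terms in the OPE, so, unlike Section \ref{sss: Virasoro from TT OPE}, no integration by parts in $w_2$ is needed. The only points demanding care are (i) the justification of the contour deformation, which rests on holomorphicity of the $\Phi_i$ and on reading the OPE as an operator identity inside radially-ordered products; and (ii) the bookkeeping of exponents ensuring that the residue reproduces precisely the binomial coefficient and the correct mode number $n+m$. Note that $h_i+h_j-h_k\in\ZZ$ (guaranteed by Lemma \ref{l26 lemma OPE exponents}) is what makes the binomial coefficient meaningful, with terms for which $h_i+h_j-h_k\leq 0$ automatically dropping out, as they belong to the regular part.
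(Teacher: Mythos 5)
Your proposal is correct and is exactly the argument the paper intends: the text gives no details beyond saying the proof is "similar to the proof of Virasoro commutation relations from $TT$ OPE in Section \ref{sss: Virasoro from TT OPE}," and your contour deformation, residue extraction of the binomial coefficient at $l=h_i+h_j-h_k-1$, and exponent bookkeeping giving $\Phi_{k(n+m)}$ all check out. The only cosmetic point is that you implicitly read the OPE denominator as $(w-z)^{h_i+h_j-h_k}$ (first argument minus second), which is the convention consistent with the $TT$ OPE (\ref{l23 TT}) and with the stated sign-free binomial coefficient, rather than the $(z-w)$ literally printed in the lemma.
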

The proof is similar to the proof of Virasoro commutation relations from $TT$ OPE in Section \ref{sss: Virasoro from TT OPE}.

\begin{remark}
Similarly to Definition \ref{l25 def: mode operators}, one also has the ``centered-at-zero version'' of mode operators: for $\Xi\in V$ a holomorphic field, one has mode operators $\wh\Xi_{(n)}$ acting on the space of states $\HH$ defined by
\begin{equation}
\wh\Xi_{(n)}=\frac{1}{2\pi i}\oint_{\gamma_0}dw\, w^{n+h-1}\wh{\Xi}(w)
\end{equation}
with $\gamma_0$ a contour around zero, or equivalently:
\begin{equation}
\wh\Xi(w)=\sum_{n\in\ZZ} \frac{\wh\Xi_{(n)}}{w^{n+h}}.
\end{equation}

For example, in the scalar field theory, for the holomorphic field $\Xi=i\dd\phi$, the corresponding mode operators acting on states are the creation/annihilation operators:
\begin{equation}
(\wh{i\dd\phi})_{(n)}=\wh{a}_n,
\end{equation}
as follows from (\ref{l20 dd phi, dbar phi via a abar}).
\end{remark}

\subsection{Ward identity associated with a holomorphic field.}
\begin{lemma}\label{l27 lemma: Ward identity for holom field Xi}
Assume that the CFT contains a holomorphic $\Xi$ of conformal weight $(h,0)$. Then one has the corresponding Ward identity: for any collection of fields $\Phi_1,\ldots,\Phi_n\in V$ and meromorphic section $f=f(w)(\dd_w)^{h-1}$ of the line bundle $K^{\otimes (1-h)}$ over $\CP^1$ with singularities allowed at $z_1,\ldots,z_n$, one has
\begin{equation}\label{l27 Ward identity for holom field Xi}
\sum_{k=1}^n \langle \Phi_1(z_1)\cdots \rho_\Xi^{(z_k)}(f)\circ \Phi_k(z_k)\cdots \Phi_n(z_n) \rangle =0
\end{equation}
where the action of $f$ on $V_{z}$ is given by the contour integral around $z$,
\begin{equation}
\rho_\Xi^{(z)}(f)\circ\Phi(z)\colon=\frac{1}{2\pi i}\oint_{\gamma_z} \underbrace{dw\, f(w) \Xi(w)}_{\iota_f (\Xi(w)(dw)^h)} \Phi(z).
\end{equation}
\end{lemma}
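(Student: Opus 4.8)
The plan is to imitate the proof of the conformal Ward identity in Section~\ref{ss: Ward identity via contour integration}, with the holomorphic field $\Xi$ playing the role of $T$ and the section $f$ of $K^{\otimes(1-h)}$ playing the role of the vector field $u(w)\dd_w$. The object to integrate is the operator-valued $1$-form $\iota_f\big(\Xi(w)(dw)^h\big)=f(w)\,\Xi(w)\,dw$; inserting it into the correlator I set
\[
\omega(w):=f(w)\,\big\langle \Xi(w)\,\Phi_1(z_1)\cdots\Phi_n(z_n)\big\rangle\,dw .
\]
Because $\Xi$ is holomorphic ($\bar\dd\Xi=0$) and $h\in\ZZ$, this correlator is a single-valued meromorphic function of $w$ on $\CP^1$, holomorphic for $w\neq z_1,\dots,z_n$. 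The singularities at the $z_k$ are poles of finite order: the OPE $\Xi(w)\Phi_k(z)\sim\sum_n (w-z)^{-n-h}\,(\Xi_{(n)}\Phi_k)(z)$ has only finitely many singular terms, since $\Xi_{(n)}$ lowers the conformal weight by $n$ and weights are bounded below. Hence $\omega$ is a meromorphic $1$-form on $\CP^1$ whose only possible poles lie at $z_1,\dots,z_n$ and at $\infty$.

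First I would evaluate the left-hand side of (\ref{l27 Ward identity for holom field Xi}) as a sum of residues. Taking a circle $\Gamma=C_{0,R}$ of radius $R>\max_i|z_i|$ and using holomorphicity of $\omega$ away from the $z_k$, I deform $\Gamma$ into a union of small circles $\gamma_k$ around the $z_k$; by the very definition of $\rho_\Xi^{(z_k)}(f)$, the contribution $\frac{1}{2\pi i}\oint_{\gamma_k}\omega$ equals $\langle\Phi_1(z_1)\cdots\rho_\Xi^{(z_k)}(f)\circ\Phi_k(z_k)\cdots\Phi_n(z_n)\rangle$. Thus the left-hand side is $\frac{1}{2\pi i}\oint_\Gamma\omega=\sum_{k=1}^n\mr{Res}_{z_k}\omega$.

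The decisive step is to show that $\omega$ is regular at $w=\infty$; then the residue theorem on $\CP^1$ (the residues of a meromorphic $1$-form sum to zero) gives $\sum_{k=1}^n\mr{Res}_{z_k}\omega=-\mr{Res}_\infty\omega=0$, equivalently the integral over $\Gamma$ vanishes when $\Gamma$ is pushed out to a small loop around $\infty$. For the decay of the correlator I use that on $\Gamma$ the point $w$ is radially outermost, so $\langle\vac|\wh\Xi(w)=\sum_n(\langle\vac|\wh\Xi_{(n)})\,w^{-n-h}$; the generalization of Remark~\ref{l23 rem: L_n^+}, namely $\wh\Xi_{(n)}^+=\wh{\Xi^*}_{(-n)}$, together with $\wh\Xi_{(m)}|\vac\rangle=0$ for $m\geq 1-h$, forces $\langle\vac|\wh\Xi_{(n)}=0$ for $n\leq h-1$, so $\langle\Xi(w)\cdots\rangle=O(w^{-2h})$ as $w\to\infty$. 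On the other hand, that $f(w)(\dd_w)^{h-1}$ is a section of $K^{\otimes(1-h)}$ regular at $\infty$ means, in the chart $y=1/w$, that $f(w)=O(w^{2h-2})$. Therefore $\omega=O(w^{-2})\,dw$, which in the coordinate $y=1/w$ reads $O(1)\,dy$ --- holomorphic at $y=0$, with vanishing residue. Geometrically this is precisely the assertion that $\iota_f(\Xi(dw)^h)$ extends to a global $1$-form on $\CP^1$; the anomalous (non-primary) part of the transformation of $\Xi$ does not spoil this because the transition $w\mapsto 1/w$ at $\infty$ is a M\"obius map.

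The main obstacle is exactly this regularity at $\infty$: matching the growth $O(w^{2h-2})$ allowed for $f$ against the decay $O(w^{-2h})$ of the $\Xi$-correlator so that the pairing lands in the space of $1$-forms holomorphic at $\infty$. The remaining ingredients --- meromorphy and finiteness of the poles, the contour deformation of $\Gamma$ into the $\gamma_k$, and the identification of each local integral with $\rho_\Xi^{(z_k)}(f)$ --- are the same contour-integration bookkeeping already carried out in Sections~\ref{sss: Virasoro from TT OPE} and~\ref{ss: Ward identity via contour integration}.
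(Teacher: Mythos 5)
Your proposal is correct, and its skeleton --- interpret the left-hand side as $\frac{1}{2\pi i}\oint_\Gamma f(w)\langle \Xi(w)\Phi_1(z_1)\cdots\Phi_n(z_n)\rangle\,dw$ over a large circle, deform $\Gamma$ into the small circles $\gamma_k$ to recover the terms $\rho_\Xi^{(z_k)}(f)\circ\Phi_k$, and then kill the large-contour integral by regularity at $w=\infty$ --- is exactly the argument the paper intends when it says the proof is ``completely analogous'' to that of the conformal Ward identity in Section \ref{ss: Ward identity via contour integration}.

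The one place where you genuinely deviate is the justification of regularity at infinity. The paper's model proof establishes this via Lemma \ref{l23 lemma: conjugation of a correlator}: one conjugates the correlator, changes variables $y=1/\bar w$, and observes that the transformed integrand is holomorphic inside the small image contour, so the integral vanishes. You instead count modes: from axiom (\ref{l23 Phi^+}) you get $\wh\Xi_{(n)}^+=\wh{\Xi^*}_{(-n)}$, the existence of $\fs(\Xi)$ gives $\wh\Xi_{(m)}|\vac\rangle=0$ for $m\geq 1-h$, hence $\langle\vac|\wh\Xi_{(n)}=0$ for $n\leq h-1$ and the correlator decays as $O(w^{-2h})$, which pairs against the allowed growth $O(w^{2h-2})$ of $f$ to give an $O(w^{-2})\,dw$ form with no residue at $\infty$. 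The two mechanisms are equivalent --- both are the statement that $\iota_f(\Xi\,(dw)^h)$ inserted in a correlator extends to a holomorphic $1$-form near $\infty$, and both ultimately rest on the hermiticity axiom --- but your version makes the bookkeeping of weights ($2h-2$ versus $2h$) explicit, which is a useful sanity check that the hypothesis ``$f$ is a section of $K^{\otimes(1-h)}$ regular at $\infty$'' is exactly the right one. Your side remark that finiteness of the pole orders at the $z_k$ uses boundedness below of conformal weights (Assumption \ref{l25 assump: (h,hbar) properties}) is also correct and is implicitly assumed in the paper's argument as well.
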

The proof is completely analogous to the proof of the conformal Ward identity (\ref{l26 Ward identity}).

\begin{example}
In the scalar field theory, take $\Xi=i\dd\phi$ and $\Phi_1=V_{\alpha_1},\ldots, \Phi_n=V_{\alpha_n}$ vertex operators, and set $f=1$. Then the Ward identity (\ref{l27 Ward identity for holom field Xi}) reads
\begin{equation}
(\alpha_1+\cdots+\alpha_n) \langle V_{\alpha_1}(z_1)\cdots V_{\alpha_n}(z_n)\rangle =0
\end{equation}
where we used the OPE (\ref{l26 dd phi V OPE}). This implies the result that the correlator of vertex operators can be nonzero only if the sum of their charges $\alpha_i$ vanishes (cf. (\ref{l26 <V...V>})).
\end{example}

\section{Transformation law for the stress-energy tensor}
The action of a holomorphic vector field $u(w)\dd_w$ on the stress-energy tensor is given by
\begin{multline}\label{l27 T infinitesimal transformation}
\rho^{(z)}(u\dd)T(z)\underset{(\ref{l25 rho})}{=}-\frac{1}{2\pi i}\oint_{\gamma_z} dw u(w) T(w) T(z) =\\
= -\frac{1}{2\pi i}\oint_{\gamma_z} dw (u(z)+(w-z)\dd u(z)+\frac12 (w-z)^2 \dd^2u(z)+\frac{1}{6}(w-z)^3 \dd^3u(z)+\cdots)\cdot \\
\cdot \left(\frac{c/2}{(w-z)^4}+\frac{2T(z)}{(w-z)^2}+\frac{\dd T(z)}{w-z}+\mr{reg}\right)\\
=-u(z) \dd T(z)-2\dd u(z)T(z)-\frac{c}{12} \dd^3 u(z)
\end{multline}
If not for the last term, this would have been the transformation law of a $(2,0)$-primary field (cf. (\ref{l25 delta_v Phi computation})). The last term in (\ref{l27 T infinitesimal transformation}) is a certain correction due to the projective property of CFT (a manifestation of conformal anomaly).  We note that the action of an antiholomorphic vector field on $T$ is zero,
\begin{equation}
\rho^{(z)}(\bar{u}\bar\dd)T(z)=0,
\end{equation}
since $\bar{T} T$ OPE is  regular.

The calculation (\ref{l27 T infinitesimal transformation}) expresses the infinitesimal transformation of $T$ under a conformal vector field (seen as an infinitesimal conformal map). Its counterpart for a ``finite'' conformal (holomorphic) transformation $z\mapsto w(z)$ is:
\begin{equation}\label{l27 T finite transformation}
T_{(z)}(z) \mapsto T_{(w)}(w)=\left(\frac{\dd w}{\dd z}\right)^{-2} (T_{(z)}(z)-\frac{c}{12}S(w,z))
\end{equation}
where 
\begin{equation}\label{l27 Schwarzian}
S(w,z)\colon= \frac{\dd_z^3 w}{\dd_z w}-\frac{3}{2} \left(\frac{\dd_z^2 w}{\dd_z w}\right)^2
\end{equation}
is the so-called \emph{Schwarzian derivative} of the holomorphic map $f\colon z\mapsto w(z)$ (we will also use the notation $S(f)$ for the Schwarzian derivative).

Here are some properties of the Schwarzian derivative:
\begin{enumerate}[(a)]
\item \label{l27 S properties (a)} $S$ vanishes on M\"obius transformations,
\item $S$ satisfies a chain-like rule
\begin{equation}\label{l27 S chain-rule}
S(f\circ g)=(S(f)\circ g)\cdot (g')^2+S(g).
\end{equation}
In particular, combining with (\ref{l27 S properties (a)}), we have that for $f$ a M\"obius transformation and $g$ any holomorphic map, $S(f\circ g)=S(g)$.
\item $S$ can be restricted to smooth maps $S^1\ra S^1$. This restriction can be understood as a degree 1 group cocycle of diffeomorphisms of the circle with coefficients in the module of densities of weight $2$:
\begin{equation}
S\in H^1(\mr{Diff}(S^1),\mr{Dens}^2(S^1)).
\end{equation}
This is ultimately a consequence of the ``chain rule'' (\ref{l27 S chain-rule}).
\end{enumerate}

As in Section (\ref{sss: tranformation property of primary fields}), the transformation law (\ref{l27 T finite transformation}) can either be understood in ``active way'' (moving points on the surface $\Sigma$) or ``passive way'' (action of a coordinate transformation).

\begin{example}
Consider $w=\log(z)$ as a holomorphic map from the punctured plane to the cylinder
\begin{equation}
\begin{array}{ccc}
\CC\backslash \{0\} & \ra & \CC/2\pi i \ZZ \\
z & \mapsto & w=\log(z)
\end{array}
\end{equation}
From (\ref{l27 Schwarzian}) one finds
\begin{equation}
S(w,z)=\frac{1}{2z^2}.
\end{equation}
In particular, (\ref{l27 T finite transformation}) becomes
\begin{equation}\label{l27 T plane to cyl}
T_{(z)}(z) \mapsto T_{(w)}(w)=z^2 T_{(z)}(z)-\frac{c}{24}.
\end{equation}
In particular, on $\CC$ one has $\langle T(z) \rangle_\mr{plane}=0$ (this is a consequence of e.g. Lemma \ref{l26 lemma 1-point corr}). Therefore, on the cylinder one has
\begin{equation}
\langle T(w) \rangle_\mr{cylinder}=-\frac{c}{24}.
\end{equation}
Thus, the vacuum energy on the cylinder should be $-\frac{c+\bar{c}}{24}$ instead of zero. 
In physics this 
effect is known as the 
Casimir energy associated with periodic boundary conditions.
\end{example}

\chapter{More free CFTs}

\section{Free scalar field with values in $S^1$}\label{s free scalar field with values in S^1}
An important deficiency of the free scalar field, our main (and only) example of a CFT so far, is that the evolution operator it assigns to a cylinder (or annulus) is not trace-class, which leads to the genus one partition function being ill-defined. This is remedied if we consider free scalar field with values in a circle (instead of values in $\RR$). This model is also known as ``free boson compactified on $S^1$'' (compactification refers to the target) or ``compactified free boson.''

\subsection{Classical theory}
We will introduce the model and quickly retrace our steps in Sections \ref{ss: free scalar on Minkowski cylinder}, \ref{sss: free scalar Minkowski cylinder, Eucl cylinder, C^*}, pointing out where the change of target from $\RR$ to $S^1$ changes the story.

Classically, the model on a Minkowski cylinder $\Sigma=\RR\times S^1$ is defined  by the action functional
\begin{equation}\label{l27 S for boson on S^1}
S(\phi)=  \int dt \underbrace{\int d\sigma \, \frac{\kappa}{2}((\dd_t \phi)^2-(\dd_\sigma \phi)^2)}_{\mathsf{L}}
\end{equation}
(the same formula as (\ref{l16 S})) where $\phi$ now is a smooth map $\Sigma\ra S^1_\mr{target}$ where $S^1_\mr{target}=\RR/2\pi r\ZZ$ is a circle of a fixed radius $r$. Such a maps $\phi$ fall into homotopy classes, classified by a winding number $\m\in\ZZ$: a map with winding $\m$ satisfies $\phi(t,\sigma+2\pi)=\phi(t,\sigma)+2\pi r \m$. We included the conventional normalization $\kappa=\frac{1}{4\pi}$ in (\ref{l27 S for boson on S^1}).

Thus, the space of fields splits as a disjoint union
of spaces of maps to $S^1_\mr{target}$ with a given winding number:
\begin{equation}
\F=\mr{Map}(\Sigma,S^1_\mr{target})=\bigsqcup_{\m\in \ZZ}\underbrace{\mr{Map}_\m(\Sigma,S^1_\mr{target})}_{\mr{maps\;with\;winding\;number\;}\m}
\end{equation}

One can then consider this model as classical mechanics with target 
\begin{equation}
X=\bigsqcup_{\m\in \ZZ}\mr{Map}_\m(S^1,S^1_\mr{target})
\end{equation}
with Lagrangian $\mathsf{L}$ as in (\ref{l27 S for boson on S^1}). A field $\phi\in X_\m$ with winding $\m$ can be expanded in Fourier modes, plus a shift linear in $\Sigma$, accounting for the winding:
\begin{equation}\label{l27 phi via modes}
\phi(\sigma)=
\m r \sigma+\sum_{n\in\ZZ} \phi_n e^{in\sigma}
\end{equation}

Transitioning to the Hamiltonian formalism (by Legendre transform), we have the phase space  
\begin{equation}\label{l27 Phi splitting by winding}
\Phi=T^*X=\bigsqcup_{\m\in \ZZ} \underbrace{T^* X_\m}_{\Phi_\m}
\end{equation}
parameterized in $\m$-th sector by the field $\phi(\sigma)$ and the Darboux-conjugate ``momentum'' $\pi(\sigma)=\frac{1}{2\pi}\sum_{n\in\ZZ} \pi_n e^{in\sigma}$. The modes satisfy the standard Poisson brackets 
(\ref{l16 Poisson brackets between modes}).
The Hamiltonian on $\Phi_\m$ in terms of Fourier modes  is 
\begin{equation}
H=\pi_0^2+\left(\frac{\m r}{2}\right)^2+\sum_{n\neq 0}(\pi_n \pi_{-n}+\frac14 n^2 \phi_n \phi_{-n}).
\end{equation}
Note that this differs from the Hamiltonian (\ref{l16 H via modes}) by a shift $\left(\frac{\m r}{2}\right)^2$ which arises from the $\sigma$-linear term in (\ref{l27 phi via modes}).

\subsection{Canonical quantization}
We proceed with canonical quantization of the theory. The splitting (\ref{l27 Phi splitting by winding}) of the phase space means that the space of states splits as a direct sum
\begin{equation}
\HH=\bigoplus_{\m\in \ZZ} \HH_\m,
\end{equation}
where $\HH_\m$ consists of states with winding number $\m$.

Let $\wh{\m}$ be the operator on $\HH$ which has eigenvalue $\m$ on $\HH_\m$. The quantum hamiltonian is 
\begin{equation}
\wh{H}=\wh\pi_0^2+\left(\frac{\wh{\m}r}{2}\right)^2+\sum_{n\neq 0}(\wh\pi_n \wh\pi_{-n}+\frac14 n^2 \wh\phi_n \wh\phi_{-n}).
\end{equation}
Similarly to Section \ref{sss free scalar field can quant}, 
the Hamiltonian splits into 
\begin{itemize}
\item 
A collection of harmonic oscillators (one for each $n\neq 0$, with frequency $\omega_n=|n|$). For the oscillators we introduce creation/annihilation operators $\wh{a}_n$, $\wh{\ol{a}}_n$, $n\neq 0$, exactly as before (\ref{l17 phi, pi via a, abar});  they satisfy the usual  commutation relations (\ref{l17 a,abar comm rel}).
\item  A free particle of mass $\mu=\frac12$ \ul{with values in $S^1_\mr{target}$} (described by $\wh\phi_0$, $\wh\pi_0$).
\item  A shift by a constant depending on winding, $\left(\frac{\wh{\m}r}{2}\right)^2$.
\end{itemize}

For a free quantum particle on $S^1_\mr{target}$ the space of states in Schr\"odinger representation is $L^2(S^1_\mr{target})$ (the space of $2\pi r$-periodic $L^2$ functions $\psi(\phi_0)$) with $\wh\phi_0$ acting by multiplication $\psi(\phi_0)\mapsto \phi_0 \psi(\phi_0)$ and $\wh\pi_0=-i\frac{\dd}{\dd \phi_0}$ the derivation. Two important points here (in comparison with Section \ref{sss free particle}):
\begin{itemize}
\item The eigenvectors of $\wh\pi_0$ are functions $\psi_\e(\phi_0)=e^{\frac{i \e}{r} \phi_0}$ with $\e\in \ZZ$, the corresponding eigenvalue is $\frac{\e}{r}$. In particular, the eigenvalue spectrum of $\wh\pi_0$ is \emph{discrete}: 
\begin{equation}
\left\{\frac{\e}{r}\right\}_{\e\in\ZZ}=\frac{1}{r}\ZZ,
\end{equation}
unlike the case of a free particle on $\RR$ where the spectrum of momentum operator is $\RR$.
\item ``Operator'' $\wh\phi_0$ is multi-valued (defined modulo $2\pi r \ZZ\cdot \mr{Id}$). In particular, it is not a well-defined operator in the usual sense, though exponentials $\wh{v}^n\colon=e^{i\frac{n}{r}\wh\phi_0}$ are well-defined operators for $n\in \ZZ$.\footnote{In $\wh{v}^n$, the superscript can be read either as index or as a power (of the operator $\wh{v}=\wh{v}^1$).} They satisfy the commutation relation 
\begin{equation}\label{l27 [pi_0,v^n]}
[\wh\pi_0,\wh{v}^n]=\frac{n}{r} \wh{v}^n.
\end{equation}
\end{itemize} 

Retracing our steps with the scalar field, we proceed with the canonical quantization, construct the Heisenberg field operator, switch to Euclidean cylinder by Wick rotation and map to $\CC\backslash\{0\}$ by the exponential map, arriving at the Heisenberg field operator
\begin{equation}\label{l27 phi}
\wh\phi(z)=\wh\phi_0-i\frac{\wh{\m}r}{2}\log\frac{z}{\bar{z}}-i\wh\pi_0 \log(z\bar{z})+\sum_{n\neq 0} \frac{i}{n}(\wh{a}_n z^{-n}+\wh{\ol{a}}_n \bar{z}^{-n})
\end{equation}

As we discussed above, $\wh\pi_0$ has eigenvalue spectrum $\frac{1}{r}\ZZ$. So, we introduce the operator $\wh{\e}\colon=r\wh\pi_0$ which has integer eigenvalues. In terms of this new notation, the field operator (\ref{l27 phi}) is
\begin{equation}
\wh\phi(z)=\wh\phi_0-i\frac{\wh{\m}r}{2}\log\frac{z}{\bar{z}}-i\frac{\wh{\e}}{r} \log(z\bar{z})+\sum_{n\neq 0} \frac{i}{n}(\wh{a}_n z^{-n}+\wh{\ol{a}}_n \bar{z}^{-n})
\end{equation}

\marginpar{Lecture 28,\\ 11/2/2022}

The derivatives of the field operator are
\begin{equation}
i\dd\wh\phi(z)=\sum_{n\in \ZZ}\wh{a}_n z^{-n-1},\quad i\bar\dd\wh\phi(z)=\sum_{n\in \ZZ}\wh{\ol{a}}_n \bar{z}^{-n-1}
\end{equation}
(same formulae as (\ref{l20 dd phi, dbar phi via a abar})),
where we defined 
\begin{equation}
\wh{a}_0\colon = \frac{\wh{\e}}{r}+\frac{\wh{\m}r}{2},\quad \wh{\ol{a}}_0\colon = \frac{\wh{\e}}{r}-\frac{\wh{\m}r}{2}.
\end{equation}

The stress-energy tensor is given by the same formula as for the scalar field valued in $\RR$, $T=:-\frac12\dd\phi \dd\phi:$ (the normal ordering is defined as usual, putting the operator $\wh{a}_{\geq 0}$, $\wh{\ol{a}}_{\geq 0}$ to the right). Thus, the Virasoro generators are again given by  (\ref{l23 L_scalar via a a}), (\ref{l23 Lbar via abar abar}) and the Hamiltonian and total momentum operators are given by (\ref{l23 H, P via L_0, Lbar_0}):
\begin{equation}\label{l28 H,P via a}
\begin{gathered}
\wh{H}=\wh{L}_0+\wh{\ol{L}}_0=\frac12 \sum_{n\in\ZZ}:\wh{a}_n \wh{a}_{-n}+\wh{\ol{a}}_{n}\wh{\ol{a}}_{-n}:,\\
\wh{P}=\wh{L}_0-\wh{\ol{L}}_0=\frac12 \sum_{n\in\ZZ}:\wh{a}_n \wh{a}_{-n}-\wh{\ol{a}}_{n}\wh{\ol{a}}_{-n}:
\end{gathered}
\end{equation}

\subsection{Space of states}
The space of states of the scalar field with values in $S^1_\mr{target}$ (the Fock space) is 
\begin{equation}\label{l28 Fock space}
\HH=\mr{Span}_\CC \Big\{\wh{a}_{-n_r}\cdots \wh{a}_{-n_1}\wh{\ol{a}}_{-\bar{n}_s}\cdots\wh{\ol{a}}_{-\bar{n}_1}|\e,\m\rangle\Big| 
\begin{array}{c}
1\leq n_1\leq\cdots \leq n_r,\\
1\leq \bar{n}_1\leq\cdots \leq \bar{n}_s,\\
(\e,\m)\in\ZZ^2
\end{array}
\Big\}
\end{equation}
The vector $|\e,\m\rangle\in \HH$ (``pseudovacuum'') is annihilated by the annihilation operators $\wh{a}_{>0}$, $\wh{\ol{a}}_{>0}$ and is an eigenvector of $\wh{a}_0, \wh{\ol{a}}_0$:
\begin{equation}\label{l28 a_0 abar_0 on pseudovacua}
\wh{a}_0 |\e,\m\rangle = \underbrace{\left(\frac{\e}{r}+\frac{\m r}{2}\right)}_{\alpha_{\e,\m}}|\e,\m\rangle, \qquad 
\wh{\ol{a}}_0 |\e,\m\rangle = \underbrace{\left(\frac{\e}{r}-\frac{\m r}{2}\right)}_{\bar\alpha_{\e,\m}}|\e,\m\rangle
\end{equation}
where we introduced the notations $\alpha_{\e,\m}$, $\bar\alpha_{\e,\m}$ for the respective eigenvalues.

Another way to express the the space of states is as a direct sum of Verma modules of the Lie algebra $\mr{Heis}\oplus \ol{\mr{Heis}}$ (the direct sum of two Heisenberg Lie algebras (\ref{l17 Heisenberg Lie algebra})) with highest weights (eigenvalues of $\wh{a}_0,\wh{\ol{a}}_0$) given by pairs  $(\alpha_{\e,\m},\bar\alpha_{\e,\m})$:
\begin{equation}
\HH=\bigoplus_{(\e,\m)\in \ZZ^2} \underbrace{\mathbb{V}^{\mr{Heis}\oplus \ol{\mr{Heis}}}_{(\alpha_{\e,\m},\bar\alpha_{\e,\m})} }_{\HH_{\e,\m}}
\end{equation} 

Note that the main distinction from the case of the usual free scalar theory (\ref{l17 Fock space}) is the structure of pseudovacua: previously we had a \emph{continuum family} of pseudovacua $|\pi_0\rangle$ characterized by the value of the zero-mode momentum $\pi_0\in \RR$, whereas now we have a \emph{lattice} of pseudovacua $|\e,\m\rangle$ characterized by the (integer) zero-mode momentum $\e$ and the winding number $\m$.\footnote{
The notations $\e,\m$ correspond to ``electric'' and ``magnetic'' number.
}

The energy and total momentum of pseudovacua are found from (\ref{l28 H,P via a}):
\begin{equation}
\begin{gathered}
\wh{H}|\e,\m\rangle=\frac12(\alpha_{\e,\m}^2+\bar\alpha_{\e,\m}^2)|\e,\m\rangle =
\left(\left(\frac{\e}{r}\right)^2+\left(\frac{\m r}{2}\right)^2 \right) |\e,\m \rangle, \\
\wh{P}|\e,\m\rangle=\frac12(\alpha_{\e,\m}^2-\bar\alpha_{\e,\m}^2)|\e,\m\rangle =
\e\m |\e,\m \rangle
\end{gathered}
\end{equation}
Note that while the eigenvalue of $\wh{H}$ is a non-negative real number, the eigenvalue of $\wh{P}$ is always an integer.
Also note that the only pseudovacuum with zero energy (eigenvalue of $\wh{H}$) is $|\e=0,\m=0\rangle$. It also has zero total momentum and we identify this particular state as the ``true'' (as opposed to ``pseudo-'') vacuum, $|\vac\rangle\colon = |0,0\rangle$

As in the ordinary free scalar theory, we have that 
\begin{itemize}
\item 
applying $\wh{a}_{-n}$  to a state changes energy-momentum by $(n,n)$ (creates a ``left-mover''), 
\item applying $\wh{\ol{a}}_{-n}$ to a state changes energy-momentum by $(n,-n)$ (creates a ``right-mover''),
\end{itemize}
where we assume $n>0$. 

The pseudovacuum $|\e,\m\rangle$ is also an eigenvector of the Virasoro generators $\wh{L}_0$, $\wh{\ol{L}}_0$ with
\begin{equation}\label{l28 L_0 acting on pseudovacuum}
\wh{L}_0 |\e,\m\rangle =\underbrace{\frac12 \alpha_{\e,\m}^2}_{h_{\e,\m}} |\e,\m\rangle,\quad
\wh{\ol{L}}_0 |\e,\m\rangle =\underbrace{\frac12 \bar\alpha_{\e,\m}^2}_{\bar{h}_{\e,\m}} |\e,\m\rangle.
\end{equation}

\subsection{Vertex operators}
The counterpart of pseudovacua $|\e,\m\rangle$ via the field-state correspondence are the vertex operators $V_{\e,\m}\in V$, constructed somewhat differently than in the non-compactified scalar field theory.

Let us introduce an ``operator'' $\wh\mu$ on $\HH$, defined modulo $2\pi\ZZ\cdot \mr{Id}$ (similarly to the operator $\wh\phi_0$) satisfying $[\wh\mu,\wh{m}]=i$ and commuting with $\wh{a}_{\neq 0}, \wh{\ol{a}}_{\neq 0},\wh{e},\wh{\phi}_0$. Then for $k\in\ZZ$ the exponential $e^{ik\wh\mu}$ is a well-defined operator  on $\HH$ satisfying 
\begin{equation}
[\wh{m},e^{ik\wh{\mu}}]=k e^{ik\wh\mu},
\end{equation}
cf. (\ref{l27 [pi_0,v^n]}), i.e., the operator 
\begin{equation}\label{l28 exp(ik mu)}
\begin{array}{cccc}
e^{ik\wh\mu}\colon &\HH_{\e,\m}&\ra &\HH_{\e,\m+k} \\
& 
 \wh{a}_{-n_r}\cdots \wh{a}_{-n_1}\wh{\ol{a}}_{-\bar{n}_s}\cdots\wh{\ol{a}}_{-\bar{n}_1}
 |\e,\m\rangle &\mapsto &
\wh{a}_{-n_r}\cdots \wh{a}_{-n_1}\wh{\ol{a}}_{-\bar{n}_s}\cdots\wh{\ol{a}}_{-\bar{n}_1}|\e,\m+k\rangle
\end{array}
\end{equation}
shifts the magnetic (or winding) number $\m$ by $k$.\footnote{Instead of introducing the operator $\wh\mu$, one can 
treat (\ref{l28 exp(ik mu)}) as the definition of a family of operators on $\HH$, formally denoted $e^{ik\wh\mu}$. From this viewpoint, $\wh\mu$ is a purely notational device, only meaningful in the combination $e^{ik\wh\mu}$.
} Similarly, due to (\ref{l27 [pi_0,v^n]}), the operator $e^{il\wh\phi_0/r}$ shifts the electric number $\e$ by $l$:
\begin{equation}
\begin{array}{cccc}
e^{il\wh\phi_0/r}\colon &\HH_{\e,\m}&\ra &\HH_{\e+l,\m} \\
& \wh{a}_{-n_r}\cdots \wh{a}_{-n_1}\wh{\ol{a}}_{-\bar{n}_s}\cdots\wh{\ol{a}}_{-\bar{n}_1}|\e,\m\rangle &\mapsto &
\wh{a}_{-n_r}\cdots \wh{a}_{-n_1}\wh{\ol{a}}_{-\bar{n}_s}\cdots\wh{\ol{a}}_{-\bar{n}_1}|\e+l,\m\rangle
\end{array}.
\end{equation}

Further, let us introduce the following two multivalued operators (the ``holomorphic/antiholomorphic parts of $\wh\phi$''):
\begin{equation}\label{l28 chi, chibar}
\wh\chi(z)=\frac12 \wh\phi_0+\frac{\wh{\mu}}{r}-i\wh{a}_0 \log z+\sum_{n\neq 0}\frac{i}{n} \wh{a}_n z^{-n},\quad
\wh{\ol\chi}(z)=\frac12 \wh\phi_0-\frac{\wh{\mu}}{r}-i\wh{\ol{a}}_0 \log \bar{z}+\sum_{n\neq 0}\frac{i}{n} \wh{\ol{a}}_n \bar{z}^{-n}.
\end{equation}
In particular, one has $\wh\phi(z)=\wh\chi(z)+\wh{\ol{\chi}}(z)$.

\begin{definition}
The vertex operator $V_{\e,\m}$ in the compactified free scalar field CFT is defined by
\begin{equation}\label{l28 vertex operator}
\wh{V}_{\e,\m}(z)\colon = \; 
:
e^{i\alpha_{\e,\m}\wh\chi(z)} e^{i \bar\alpha_{\e,\m} \wh{\ol\chi}(z)}
:
\end{equation}
Here the parameters $\e,\m$ are integers and $\alpha_{\e,\m}$, $\bar\alpha_{\e,\m}$ are as in (\ref{l28 a_0 abar_0 on pseudovacua}).
\end{definition}
The normal ordering puts operators $\wh{a}_{\geq 0},\wh{\ol{a}}_{\geq 0}$ to the right and operators $\wh{a}_{<0}, \wh{\ol{a}}_{<0}, \wh\phi_0,\wh\mu$ to the left. Written more explicitly, the vertex operator is
\begin{multline}
\wh{V}_{\e,\m}(z) = e^{i\e\wh\phi_0/r}e^{i\m \wh\mu}e^{-\sum_{n<0} \frac{1}{n}(\alpha_{\e,\m} \wh{a}_n z^{-n}+\bar\alpha_{\e,\m} \wh{\ol{a}}_n \bar{z}^{-n})}\cdot \\
\cdot e^{-\sum_{n>0} \frac{1}{n}(\alpha_{\e,\m} \wh{a}_n z^{-n}+\bar\alpha_{\e,\m} \wh{\ol{a}}_n \bar{z}^{-n})}e^{\alpha_{\e,\m}\wh{a}_0\log z+\bar\alpha_{\e,\m} \wh{\ol{a}}_0 \log\bar{z}}.
\end{multline}
Somewhat non-obviously, this is a single-valued operator: the multi-valued operators $\wh\phi_0$, $\wh\mu$ are only present in single-valued exponential expressions; the last exponential is single valued when acting on $\HH_{\e',\m'}$ since one has
\begin{equation}\label{l28 alpha alpha - alpha alpha = integer}
\alpha_{\e,\m} \alpha_{\e',\m'}-\bar\alpha_{\e,\m}\bar\alpha_{\e',\m'}=\e\m'+\m \e' \in \ZZ.
\end{equation}

Performing computations similar to those of Section \ref{sss vertex operators}, \ref{sss primary fields in scalar field CFT}, \ref{sss more on vertex operators}, one proves the following properties of vertex operators:
\begin{itemize}
\item $V_{\e,\m}$ is a primary field of conformal weight
\begin{equation}\label{l28 V conf weights}
h_{\e,\m}=\frac12 \left(\frac{\e}{r}+\frac{\m r}{2}\right)^2,\quad\bar{h}_{\e,\m}=\frac12 \left(\frac{\e}{r}-\frac{\m r}{2}\right)^2
\end{equation} 
-- same  $h_{\e,\m}, \bar{h}_{\e,\m}$ as in (\ref{l28 L_0 acting on pseudovacuum}).
\item One has 
\begin{equation}
\lim_{z\ra 0} \wh{V}_{\e,\m}(z)|\vac\rangle =|\e,\m\rangle,
\end{equation}
i.e., as claimed in the beginning of this section, the state corresponding to the vertex operator $V_{\e,\m}$ by the field-state correspondence is the pseudovacuum $|\e,\m\rangle$. More generally, one has
\begin{equation}
\lim_{z\ra 0}  : \prod_{j=1}^r \frac{i\dd^{n_j}\wh\phi(z)}{(n_j-1)!} \prod_{k=1}^s \frac{i\bar\dd^{\bar{n}_k}\wh\phi(z)}{(\bar{n}_k-1)!}
 \wh{V}_{\e,\m}(z):|\vac\rangle =
 \wh{a}_{-n_r}\cdots \wh{a}_{-n_1}\wh{\ol{a}}_{-\bar{n}_s}\cdots \wh{\ol{a}}_{\bar{n}_1}|\e,\m \rangle,
\end{equation}
i.e., the fields corresponding to basis states of $\HH$ are the vertex operators multiplied by differential polynomials in $\phi$.
\item The correlator of $n$ vertex operators is
\begin{equation} \label{l28 <V..V>}
\hspace{-1cm}
\left\langle \prod_{k=1}^n V_{\e_k,\m_k}(z_k) \right\rangle=\left\{
\begin{array}{cl}\displaystyle 
\prod_{1\leq i<j\leq n} (z_i-z_j)^{\alpha_{\e_i,\m_i}\alpha_{\e_j,\m_j}} (\bar{z}_i-\bar{z}_j)^{\bar\alpha_{\e_i,\m_i}\bar\alpha_{\e_j,\m_j}},&
\mr{if}\; \sum_{i=1}^n \e_i =\sum_{i=1}^n \m_i =0,\\
0,& \mr{otherwise}
\end{array}
\right.
\end{equation}
Despite the real exponents appearing here, the entire expression on the right is in fact a single-valued function on $C_n(\CP^1)$, due to (\ref{l28 alpha alpha - alpha alpha = integer}). For instance, for $n=2$ one has
\begin{equation}
\left\langle V_{\e,\m}(w) V_{-\e,-\m}(z) \right\rangle = |w-z|^{-2\left(
\left(\frac{\e}{r}\right)^2+ \left(\frac{\m r}{2}\right)^2
\right)} \left(\frac{w-z}{\bar{w}-\bar{z}}\right)^{-\e\m}
\end{equation}
-- note that the first exponent on the right is real while the second is an integer, making the expression single-valued.
\end{itemize}

\subsection{Torus partition function in a general CFT}

Consider the torus $\mathbb{T}$ obtained from the annulus $\{z\in \CC\; |\; r_\mr{in} \leq|z|\leq r_\mr{out}\}$ by identifying the inner and outer circles via the identification $r_\mr{in} e^{i\sigma} \sim r_\mr{out} e^{i\sigma}$. Equivalently, we map the annulus by the map $z\mapsto \zeta=\log z$ to the cylinder 
\begin{equation}
\mr{cyl}=\{\zeta=t+i\sigma \in \CC/2\pi i\ZZ\;|\; \log r_\ii\leq t\leq \log r_\oo\}
\end{equation} 
and identify the boundary circles by $\log r_\ii+i\sigma\sim \log r_\oo +i\sigma$. This yields a complex torus with modular parameter 
\begin{equation}\label{l28 tau}
\tau=\frac{i}{2\pi} T 
\end{equation}
with $T=\log\frac{r_\oo}{r_\ii}$.

\begin{figure}[H]
\begin{center}
\includegraphics[scale=0.85]{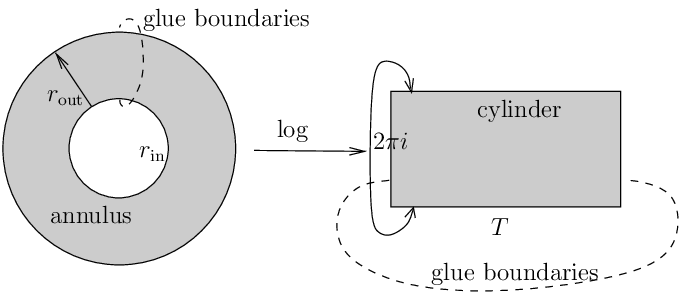}
\end{center}
\caption{Torus obtained from annulus or cylinder by identifying the boundary circles.}
\end{figure}

The evolution operator for the cylinder of Euclidean length $T$ is 
\begin{equation}
Z(\mr{cyl}_T)=e^{-T\wh{H} } = e^{-T(\wh{L}_0+\wh{\ol{L}}_0)} 
\end{equation}
The partition function for the torus is the trace of this evolution operator over the space of states,
\begin{equation}
Z(\mathbb{T}_\tau)=\tr_\HH e^{-T\wh{H} } = \tr_\HH e^{2\pi i \tau (\wh{L}_0+\wh{\ol{L}}_0)}
\end{equation}
with $\tau$ the modular parameter (\ref{l28 tau}).

\marginpar{Lecture 29,\\
11/4/2022}

\subsubsection{Gluing with a twist by angle $\theta$.}
More generally, one can glue the inner and outer boundary circles of the annulus with a twist by angle $\theta$:  $r_\ii e^{i\sigma}\sim r_\oo e^{i\sigma+\theta}$, or equivalently identify the boundary circles of the cylinder as $\log r_\ii +i\sigma \sim \log r_\oo +i (\sigma+\theta)$.
Denote $\mr{cyl}_{T,\theta}$ the mapping cylinder (\ref{l1 mapping cylinder}) of length $T$ (understood as a cobordism $S^1\ra S^1$), associated with mapping $\rho_r\colon S^1\ra S^1$ rotating the circle by angle $\theta$. Then one has
\begin{equation}\label{l29 Z(mapping cylinder)}
Z(\mr{cyl}_{T,\theta})= e^{-T\wh{H}-i\theta \wh{P}} =  e^{2\pi i \tau \wh{L}_0-2\pi i \bar\tau \wh{\ol{L}}_0} = q^{\wh{L}_0}\bar{q}^{\wh{\ol{L}}_0}
\end{equation}
with $\wh{P}$ the total momentum operator. 
In (\ref{l29 Z(mapping cylinder)})
we denoted
\begin{equation}
q=e^{2\pi i \tau}
\end{equation}
and $\bar{q}$ is its complex conjugate; note that since $\mr{Im}(\tau)>0$, one has $|q|<1$. We used the expressions (\ref{l23 H, P via L_0, Lbar_0}) for the total energy/momentum as $\wh{L}_0\pm \wh{\ol{L}}_0$.

This yields a complex torus with modular parameter $\tau=\frac{i}{2\pi} (T+i\theta)$ and the corresponding partition function is
\begin{equation}\label{l29 Z(torus)}
Z(\mathbb{T}_\tau)=\tr_\HH Z(\mr{cyl}_{T,\theta}) = \tr_\HH q^{\wh{L}_0}\bar{q}^{\wh{\ol{L}}_0}
\end{equation}

\subsubsection{Correction due to central charge.}
In fact, one needs to introduce a correction in (\ref{l29 Z(torus)}):
\begin{equation}\label{l29 Z(torus) corrected}
Z(\mathbb{T}_\tau)=\tr_\HH Z(\mr{cyl}_{T,\theta}) = \tr_\HH q^{\wh{L}_0\textcolor{red}{-\frac{c}{24} }}\bar{q}^{\wh{\ol{L}}_0\textcolor{red}{-\frac{\bar{c}}{24}}},
\end{equation}
with $(c,\bar{c})$ the holomorphic/antiholomorphic central charge of the CFT (one also needs similar correction in (\ref{l29 Z(mapping cylinder)})). The reason for this correction can be explained in several ways:
\begin{enumerate}[(i)]
\item The correction in (\ref{l29 Z(torus) corrected}) arises from the Schwarzian derivative correction in the transformation law of the stress-energy tensor (\ref{l27 T finite transformation}), (\ref{l27 T plane to cyl}), which implies 
\begin{equation}
\wh{H}_\mr{cyl}=\frac{1}{2\pi}\oint_{t=\mr{const}} \iota_{\frac{\dd}{\dd t}}(T_\mr{cyl}(d\zeta)^2+\ol{T}_\mr{cyl}(d\bar\zeta)^2)=\wh{H}_\mr{plane}-\frac{c+\bar{c}}{24}=\wh{L}_{0}+\wh{\ol{L}}_0-\frac{c+\bar{c}}{24}.
\end{equation}
and similarly for the total momentum operator; Virasoro generators $\wh{L}_0$, $\wh{\ol{L}}_0$ are understood as pertaining to the plane and to the radial quantization picture (thus, when mapping to the cylinder by the map $z\mapsto \zeta=\log(z)$ they receive the Schwarzian correction).
\item \label{l29 (ii)}
Expression (\ref{l29 Z(torus) corrected}) is the partition function for a torus with \emph{flat metric} (obtained from the flat metric on the cylinder), whereas (\ref{l29 Z(torus)}) is the partition function for the torus with a singular metric obtained by taking the flat annulus and identifying the two boundary circles (the glued surface has a metric which is flat almost everywhere, except at the circle where the gluing was performed -- there the metric is singular, since e.g. the identified circles had different lengths). Conformal anomaly means that the partition function has a dependence on the metric within the conformal class (\ref{l2 change of Z under Weyl}). Thus, the factor $q^{-\frac{c}{24}}\bar{q}^{-\frac{\bar{c}}{24}}$ in (\ref{l29 Z(torus) corrected}) is the explonential of the Liouville action in (\ref{l2 change of Z under Weyl}) corresponding to the change from the singular metric on $\mathbb{T}$ coming from the annulus to the flat metric.
\item Pragmatic viewpoint: the partition function for the torus is expected to be modular invariant, in particular, it should be invariant under $\tau\mapsto -\frac{1}{\tau} $. As we will see in the example of the free scalar field with values in $S^1$, expression (\ref{l29 Z(torus) corrected}) has this property, while (\ref{l29 Z(torus)}) does not. This is connected with item (\ref{l29 (ii)}) above: flat tori with modular parameters $\tau$ and $-1/\tau$ are connected by a constant Weyl transformation, for which the Liouville action in (\ref{l2 change of Z under Weyl}) is zero. 
For the singular metric coming from the annulus, this is not true: the metric tori $\mathbb{T}_\tau$, $\mathbb{T}_{-1/\tau}$ have ``scars'' -- singular loci of the metric -- and they are not intertwined by the conformal map $\mathbb{T}_\tau\ra \mathbb{T}_{-1/\tau}$.
\end{enumerate}

\subsection{Torus partition function for the 
free scalar field with values in $S^1$}\label{sss torus Z for compactified free boson}
In our case the central charge is $c=\bar{c}=1$ and the formula (\ref{l29 Z(torus) corrected}) becomes
\begin{multline}\label{l29 Z(torus) computation}
Z(\tau)
= \tr_\HH q^{\wh{L}_0-\frac{1}{24}}\bar{q}^{\wh{\ol{L}}_0-\frac{1}{24}}=\\
=
\sum_{(\e,\m)\in \ZZ^2}\sum_{1\leq n_1\leq\cdots \leq n_r,\; 1\leq \bar{n}_1\leq \cdots \leq \bar{n}_s} q^{h_{\e,\m}+\sum_{i=1}^r n_i-\frac{1}{24}}
\bar{q}^{\bar{h}_{\e,\m}+\sum_{j=1}^s \bar{n}_j -\frac{1}{24}}
\end{multline}
For brevity we denote the partition function of the torus with modular parameter $\tau$ simply as $Z(\tau)$.
Here we used that the operators $\wh{L}_0$, $\wh{\ol{L}}_0$ are diagonal in the basis (\ref{l28 Fock space}); the exponents in the r.h.s. of (\ref{l29 Z(torus) computation}) are the corresponding eigenvalues shifted by $-\frac{1}{24}$; $h_{\e,\m}$, $\bar{h}_{\e,\m}$ are the conformal weights of the pseudovacua (\ref{l28 L_0 acting on pseudovacuum}), (\ref{l28 V conf weights}). Continuing the computation, we have
\begin{equation}\label{l29 Z(torus) computation 2}
Z(\tau) = 
\sum_{(\e,\m)\in \ZZ^2} q^{h_{\e,\m}}\bar{q}^{\bar{h}_{\e,\m}} (q\bar{q})^{\frac{1}{24}} \sum_{k,l\geq 0} P(k)P(l)q^k \bar{q}^l,
\end{equation}
where $P(k)$ is the number of partitions of $k$, i.e., the number of nondecreasing sequences $1\leq n_1\leq \cdots n_r$ such that $k=n_1+\cdots + n_r$, for some $r\geq 1$. For instance, one has
\begin{equation}
\begin{aligned}
4&=1+1+1+1\\ &= 1+1+2 \\ &=2+2 \\ &=1+3 \\&= 4, 
\end{aligned}
\end{equation}
thus, $P(4)=5$. In (\ref{l29 Z(torus) computation 2}), the left factor is the sum over pseudovacua, the middle factor is the central charge correction, and the right factor accounts for the contributions of $\mr{Heis}\oplus\ol{\mr{Heis}}$-descendants of the pseudovacuum (and $P(k)P(l)$ is the count of descendants of conformal weight $(h_{\e,\m}+k,\bar{h}_{\e,\m}+l)$).

The generating function for the numbers of partitions is a well-studied object of  combinatorics,
\begin{equation}
\sum_{k\geq 0} P(k) q^k = 
\frac{1}{\prod_{n\geq 1} (1-q^n)}=\frac{q^{\frac{1}{24}}}{\eta(\tau)}
\end{equation}
where 
\begin{equation}\label{l29 Dedekind eta}
\eta(\tau)=q^{ \frac{1}{24}
}\prod_{n\geq 1}(1-q^n)
\end{equation}
is the Dedekind eta-function which satisfies the modular equivariance properties\footnote{Property (\ref{l29 eta modular property 1}) is obvious from the definition (\ref{l29 Dedekind eta}). Property (\ref{l29 eta modular property 2}) follows from the Euler's identity $\prod_{n\geq 1}(1-q^n)=\sum_{j\in \ZZ}(-1)^j q^{\frac{3j^2-j}{2}}$ by applying Poisson summation formula (cf. footnote \ref{l2 footnote: Poisson summation}).}
\begin{eqnarray}
\eta(\tau+1)&=&e^{i\pi/12} \eta(\tau), \label{l29 eta modular property 1} \\
\eta(-1/\tau)&=& (-i\tau)^{\frac12} \eta(\tau) \label{l29 eta modular property 2}
\end{eqnarray}

Finally, the partition function (\ref{l29 Z(torus) computation 2}) can be written in the form
\begin{equation} \label{l29 Z(tau) explicitly}
Z(\tau)= \frac{1}{\eta(\tau)\eta(\bar\tau)}\sum_{(\e,\m)\in \ZZ^2}
q^{\frac12 \left(\frac{\e}{r}+\frac{\m r}{2}\right)^2}
\bar{q}^{\frac12 \left(\frac{\e}{r}-\frac{\m r}{2}\right)^2}
\end{equation}
When we are interested in the dependence of the partition function on the radius of the target circle, we will write it as a function of two arguments $Z(\tau,r)$.

\begin{lemma}[Properties of $Z(\tau)$]
The torus partition function (\ref{l29 Z(tau) explicitly}) satisfies the following properties.
\begin{enumerate}[(a)]
\item \label{l29 lemma (a)} Modular invariance:
\begin{eqnarray}\label{l29 Z modular invariance 1}
Z(\tau+1)&=& Z(\tau),\\
Z(-1/\tau) &=& Z(\tau). \label{l29 Z modular invariance 2}
\end{eqnarray}
\item \label{l29 lemma (b)} ``T-duality'':
\begin{equation}\label{l29 Z T-duality}
Z(\tau,r)=Z(\tau,2/r)
\end{equation}
\item \label{l29 lemma (c)} Large-radius asymptotics
\begin{equation}\label{l29 Z large r asymptotics}
Z(\tau,r)\underset{r\ra \infty}{\sim} r \frac{1}{\sqrt{\mr{Im}(\tau)}\,\eta(\tau)\eta(\bar\tau)}
\end{equation}
\end{enumerate}
\end{lemma}

Modular invariance (\ref{l29 Z modular invariance 1}), (\ref{l29 Z modular invariance 2}) means that the genus one partition function  belongs to $C^\infty(\Pi_+)^{PSL_2(\ZZ)}$, i.e., descends to a smooth function on the moduli space of complex tori $\mc{M}_{1,0}$ -- which is the general feature expected in any CFT, cf. Section \ref{sss: modular invariance}.

``T-duality'' (or ``target-space duality'') is a term originating in string theory. 
T-duality means that there is an equivalence of sigma-models with target a circle of radius $r$ and target a circle of radius $2/r$.

Property (\ref{l29 Z large r asymptotics}) means in particular that if we think of the scalar field with target $\RR$ as a limit of the scalar field with target $S^1$ of radius $r$, as $r\ra\infty$, we are seeing explicitly how the partition function diverges (as the volume of the target). This gives us a better understanding of the claim made in the very beginning of Section \ref{s free scalar field with values in S^1} that the genus one partition function of the $\RR$-valued free scalar theory diverges.

\begin{proof}
Item (\ref{l29 lemma (a)}) is proven by Poisson summation in $\e,\m$.

For the item (\ref{l29 lemma (b)}), we notice that the exponents in (\ref{l29 Z(tau) explicitly}) satisfy
\begin{equation}
h_{\e,\m}(r)=h_{\m,\e}(2/r),\; \bar{h}_{\e,\m}(r)=\bar{h}_{\m,\e}(2/r)
\end{equation}
where we indicate explicitly the dependence of the exponents (conformal weights of the pseudovacuum $|\e,\m\rangle$) on $r$. From this observation, the equality (\ref{l29 Z T-duality}) is obvious. (Interestingly, the inversion of the target radius $r\mapsto 2/r$ is compensated by the interchange of the electric and magnetic numbers $(\e,\m)\mapsto (\m,\e)$.)

For the item (\ref{l29 lemma (c)}) one applies Poisson summation
just in the variable $\e$ to (\ref{l29 Z(tau) explicitly}): one has
\begin{equation}\label{l29 Z as a sum over p and m}
Z(\tau,r)=
\frac{1}{\eta(\tau)\eta(\bar\tau)}\sum_{(p,\m)\in \ZZ^2}
\frac{r}{\sqrt{\mr{Im}(\tau)}} e^{-\frac{\pi^2}{2}r^2 \left(
\frac{(p+\m \mr{Re}(\tau))^2}{\mr{Im}(\tau)}+\m^2
\right)},
\end{equation}
where we denoted $p$ the dual variable to $\e$ (w.r.t. Poisson summation). In the sum (\ref{l29 Z as a sum over p and m}), the asymptotics as $r\ra \infty$ is given by the term $p=\m=0$ (and it is the r.h.s. of (\ref{l29 Z large r asymptotics})), while the sum of all other terms is exponentially suppressed -- it behaves as $O(r e^{-A r^2})$ with some constant $A>0$.

\end{proof}

As mentioned above, T-duality (\ref{l29 Z T-duality}) extends to an equivalence of free boson CFTs corresponding to target radii $r$ and $2/r$. In particular, one has an isomorphism of the respective spaces of states:
\begin{equation}
\begin{array}{ccc}
\HH_r & \simeq & \HH_{2/r} \\
W
|\e,\m\rangle_r & \mapsto &
W|\m,\e\rangle_{2/r}
\end{array}
\end{equation}
where $W$ is any word in creation operators.

\subsection{Path integral approach to the torus partition of the free scalar field with values in $S^1$}
In this part we follow K. Gawedzki \cite{Gawedzki_IAS}, we refer the reader to this source for more details.

In the path integral approach, the partition function of the torus $\Sigma=\mathbb{T}_\tau$ is represented by the integral over smooth maps $\phi\colon \Sigma \ra  S^1_\mr{target}$:
\begin{equation}\label{l29 Z PI}
Z^\mr{PI}(\Sigma)=\int_{\mr{Map}(\Sigma,S^1)} \mc{D}\phi\; e^{-S(\phi)}
\end{equation}
with $S(\phi)$ the classical (Euclidean) action 
of the model,
\begin{equation}\label{l29 S}
S(\phi)=\frac{1}{8\pi} \int_\Sigma d\phi \wedge * d\phi =
\frac{1}{8\pi} \int_\Sigma dt d\sigma ((\dd_t \phi)^2+(\dd_\sigma \phi)^2).
\end{equation}

Note that we have $\pi_0 \mr{Map}(\Sigma,S^1_\mr{target})\simeq \ZZ^2$. More specifically, maps $\phi$ fall into classes of homotopy equivalent maps, according to the pair of winding numbers $(n_1,n_2)\in \ZZ^2$ of $\phi$ around two closed curves $\gamma_{1,2}\subset \Sigma$ -- the generators of $\pi_1(\Sigma)$.

\begin{figure}[H]
\begin{center}
\includegraphics[scale=0.85]{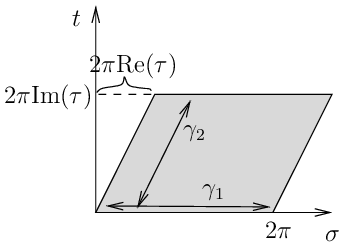}
\end{center}
\caption{Torus with modular parameter $\tau$ with two generators of $\pi_1$.}
\end{figure}

 Thus, the mapping space breaks into connected components
\begin{equation}
\mr{Map}(\Sigma,S^1_\mr{target})=\bigsqcup_{(n_1,n_2)\in \ZZ^2} \mr{Map}_{n_1,n_2}(\Sigma,S^1_\mr{target})
\end{equation}
where $\mr{Map}_{n_1,n_2}$ consists of maps with prescribed winding numbers $n_1,n_2$. Therefore, we can rewrite (\ref{l29 Z PI}) as
\begin{equation}\label{l29 Z PI 2}
Z^\mr{PI}(\Sigma)=\sum_{(n_1,n_2)\in\ZZ^2} 
\int_{\mr{Map}_{n_1,n_2}(\Sigma,S^1)} \mc{D}\phi\; e^{-S(\phi)}
\end{equation}

Notice that for each pair $(n_1,n_2)\in \ZZ$ there exists a unique (up to a constant shift) solution of the Euler-Lagrange equation $\Delta\phi=0$ with winding numbers $(n_1,n_2)$. Explicitly it can be represented by the function
\begin{equation}\label{l29 phi^cl}
\phi_{n_1,n_2}^\mr{cl}(\sigma,t)=r\cdot \left(
n_1\sigma + \frac{n_2-n_1 \mr{Re}(\tau)}{\mr{Im}(\tau)}t
\right).
\end{equation}
Note that it is a linear function in coordinates $\sigma,t$ on the torus.
The classical action evaluated on the classical solution (\ref{l29 phi^cl}) is
\begin{equation}\label{l29 S(phi^cl)}
S(\phi_{n_1,n_2}^\mr{cl})=\frac{\pi r^2}{2} \frac{|n_2-\tau n_1|^2}{\mr{Im}(\tau)}
\end{equation}

A general smooth map $\phi\in \mr{Map}_{n_1,n_2}(\Sigma,S^1_\mr{target})$ can be uniquely decomposed as
\begin{equation}\label{l29 phi decomposition}
\phi= \phi_0 +\phi_{n_1,n_2}^\mr{cl}+\til\phi,
\end{equation} 
where
\begin{itemize}
\item $\phi_0$ is a constant function valued in $S^1_\mr{target}$ (the constant shift of a classical solution),
\item $\phi_{n_1,n_2}^\mr{cl}$ is the ``standard'' classical solution with given winding numbers (\ref{l29 phi^cl}),
\item the ``fluctuation'' $\til\phi$ is a smooth function with no winding  (i.e. lifting to a function $\Sigma\ra \RR$) and satisfying the condition 
\begin{equation}\label{l29 zero integral condition}
\int_\Sigma dt\,d\sigma\,\til\phi=0
\end{equation} 
(this condition is imposed to have uniqueness of the decomposition (\ref{l29 phi decomposition})). We denote the space of maps $\phi\colon \Sigma\ra \RR$ satisfying (\ref{l29 zero integral condition}) by $\mr{Map}'(\Sigma,\RR)$ (it is the orthogonal complement of constant maps).
\end{itemize}
Note that the first two terms in (\ref{l29 phi decomposition}) together give the general classical solution with given winding numbers. Substituting the decomposition (\ref{l29 phi decomposition}) into the action (\ref{l29 S}), we obtain
\begin{equation}
S(\phi)= S(\phi_{n_1,n_2}^\mr{cl})+S(\til\phi).
\end{equation}
Thus, the path integral (\ref{l29 Z PI 2}) is
\begin{equation}\label{l29 Z PI 3}
Z^\mr{PI}(\Sigma)=\sum_{(n_1,n_2)\in \ZZ^2} \underbrace{\oint_{S^1_\mr{target}} d\phi_0}_{2\pi r} \underbrace{\int_{\mr{Map}'(\Sigma,\RR)}\mc{D}\til\phi\; e^{-S(\til\phi)}}_{(\det'\Delta_\Sigma)^{-\frac12}} \cdot e^{-S(\phi_{n_1,n_2}^\mr{cl})}
\end{equation}
The integral over $\phi_0\in S^1_\mr{target}$ here is the integral over the space of classical solutions. The Gaussian functional integral in the middle is formally evaluated to the determinant-prime (i.e. excluding the zero eigenvalue) of the Laplacian on $\Sigma$ raised to the power $-\frac12$, cf. Section \ref{sss free scalar in PI formalism}. This determinant can be calculated explicitly in the sense of zeta-function regularization (this is a rather nontrivial computation for which we refer the reader again to Gawedzki \cite{Gawedzki_IAS}), yielding
\begin{equation}
{\det}' \Delta_\Sigma =(2\pi)^2 \mr{Im}(\tau)\, |\eta(\tau)|^4,
\end{equation}
where the Dedekind eta-function makes an appearance. Thus, continuing (\ref{l29 Z PI 3}), we have
\begin{equation}
Z^\mr{PI}(\Sigma)=\sum_{(n_1,n_2)\in \ZZ^2} \cancel{2\pi} r \frac{1}{\cancel{2\pi} \sqrt{\mr{Im}(\tau)} |\eta(\tau)|^2} e^{-\frac{\pi r^2}{2} \frac{|n_2-\tau n_1|^2}{\mr{Im}(\tau)}}
\end{equation}
This expression coincides with result of the operator formalism in the form (\ref{l29 Z as a sum over p and m})!

To see this coincidence, we identify $n_1$ with $\m$ (which is not surprising, since $\m$ was the winding number along the fixed-time circle) and $n_2$ with $p$ (i.e., the second winding number gets identified with the Poisson-dual variable to $\e$ -- the zero-mode momentum).

Ultimately, we obtained a check that the operator formalism of CFT (relying on the study of the space of states) and the path integral formalism yield the same answer for the genus one partition function.

We remark that in the path integral formalism, the modular invariance of the torus partition function is manifest (unlike the operator formalism where it is a nontrivial consequence of Poisson summation). Indeed, the values of the action evaluated on  classical solutions (\ref{l29 S(phi^cl)}) on the tori $\Sigma=\mathbb{T}_\tau$ and $\Sigma'=\mathbb{T}_{-1/\tau}$ are the same (if one identifies the winding numbers as $(n_1,n_2)\leftrightarrow (n_2,-n_1)$). Likewise, the eigenvalue spectra of Laplacians on on $\Sigma,\Sigma'$ are the same, and hence the determinants are the same. Put another way, in the path integral formalism modular invariance is manifest, because the classical (Lagrangian) theory is conformally invariant.\footnote{In this form this argument is a bit formal  and implicitly assumes conformal invariance of the path integral measure.}

\marginpar{Lecture 30,\\ 11/7/2022 }

\section{Aside: conformal blocks}
In a general CFT on a surface $\Sigma$ (e.g. $\Sigma=\CC$ or $\CP^1$), for a collection of fields $\Phi_1,\ldots,\Phi_n\in V$  one is interested in writing the correlator as a sum of products of holomorphic and antiholomorphic functions
\begin{equation}\label{l30 conf blocks}
\langle \Phi_1(z_1)\cdots \Phi_n(z_n) \rangle = \sum_{\rho \in I(\Phi_1,\ldots,
\Phi_n)} F_\rho (z_1,\ldots, z_n) F'_\rho(\bar{z}_1,\ldots,\bar{z}_n).
\end{equation}
Here 
\begin{itemize}
\item The correlator in the l.h.s. is a smooth single-valued function on the open configuration space $C_n(\Sigma)$.
\item In the r.h.s. the index $\rho$ ranges over some set $I(\Phi_1,\ldots,\Phi_n)$ depending on the input fields (in nice cases it is a finite set, but generally does not have to be). 
\item $F_\rho$, $F'_\rho$ are respectively holomorphic and antiholomorphic (possibly multivalued\footnote{
In particular, $F_\rho, F'_\rho$ are allowed to have monodromy as one puncture goes around another one. Put another way, $F_\rho,F'_\rho$ are single-valued holomorphic/antiholomorphic functions on some covering space of $C_n(\Sigma)$.
})  functions on $C_n(\Sigma)$; they are called the ``conformal blocks'' for the correlator in the l.h.s. of (\ref{l30 conf blocks}). 
\end{itemize}

Similarly, the genus one partition function can be written as
\begin{equation}\label{l30 Z(tau) conf blocks}
Z(\tau)=\sum_{\rho\in I_{1,0}} \chi_\rho(\tau) \chi'_\rho(\bar\tau)
\end{equation}
with  $\chi_\rho,\chi'_\rho$ -- the ``conformal blocks for the torus partition function'' -- respectively holomorphic and antiholomorphic multivalued functions on the moduli space $\MM_{1,0}$.

\subsection{Chiral (holomorphic) free boson with values in $S^1$}
\label{ss chiral boson}
Consider the version of the compactified free boson theory where one only considers one copy of the Heisenberg algebra (generated by $\wh{a}_n$, but not $\wh{\ol{a}}_n$), and the space of states is the sum of Verma modules for this single Heisenberg algebra:
\begin{equation}
\HH^\mr{chiral}=\bigoplus_{(\e,\m)\in \ZZ^2} \mathbb{V}^\mr{Heis}_{\e,\m}  = \mr{Span}\Big\{\wh{a}_{-k_r}\cdots \wh{a}_{-k_1}|\e,\m\rangle\; \Big|\; (\e,\m)\in \ZZ^2,\; 1\leq k_1\leq\cdots \leq k_r \Big\}
\end{equation}

In this model, one can consider the chiral vertex operator
\begin{equation}\label{l30 chiral vertex operator}
\wh{V}^\mr{chiral}_{\e,\m}(z)= :e^{i\alpha_{\e,\m}\wh\chi(z)}:,
\end{equation}
with $\wh\chi(z)$ as in (\ref{l28 chi, chibar}) -- the ``holomorphic part'' of the field operator $\wh\phi(z)$. The expression (\ref{l30 chiral vertex operator}) should be thought of as the ``holomorphic half'' of the vertex operator (\ref{l28 vertex operator}) of the full (non-chiral) theory.

From Wick's lemma one obtains correlators
\begin{equation}\label{l30 corr of chiral vertex operators}
\langle \prod_{i=1}^n {V}^\mr{chiral}_{\e_i,\m_i}(z_i) \rangle = 
\left\{\begin{array}{cl}
\prod_{1\leq i<j \leq n} (z_i-z_j)^{\alpha_{\e_i\m_i}\alpha_{\e_j\m_j}}, & \mr{if}\; \sum \e_i =\sum \m_i=0,\\
0 ,& \mr{otherwise}
\end{array}
\right.
\end{equation}
This expression is holomorphic and multivalued (has monodromies) on $C_n(\CC)$. If the radius of the target circle satisfies $r^2\in \mathbb{Q}$, then the monodromies are rational and the correlator lifts to as a single-valued function on a finite-degree covering space of $C_n(\CC)$.

We remark that multivaluedness of correlators is linked to the fact that the conformal weights of chiral vertex operators $(h,\bar{h})=(\frac12 \alpha_{\e,\m}^2,0)$ fail the assumption (\ref{l25 h-hbar in Z}). In the chiral theory the antiholomorphic stress-energy tensor vanishes identically $\ol{T}=0$ and any field has $\bar{h}=0$, so $V$ and $\HH$ are graded just by the holomorphic conformal weight $h$.

The correlator (\ref{l28 <V..V>}) of vertex operators in the full compactified free boson theory factorizes as the correlator of holomorphic chiral vertex operators (\ref{l30 corr of chiral vertex operators}) times the correlator of (analogous) antiholomorphic chiral vertex operators:\footnote{In the antiholomorphic chiral theory, one only retains the creation/annihilation operators $\wh{\ol{a}}_n$, all fields have conformal weight of the form $(0,\bar{h})$ and $T=0$. Correlators are antiholomorphic and multivalued.}
\begin{equation}
\langle \prod_{i=1}^n {V}^\mr{non-chiral}_{\e_i,\m_i}(z_i) \rangle =
\langle \prod_{i=1}^n {V}^\mr{chiral}_{\e_i,\m_i}(z_i) \rangle\cdot 
\langle \prod_{i=1}^n \ol{V}^\mr{chiral}_{\e_i,\m_i}(\bar{z}_i) \rangle
\end{equation}
Comparing with (\ref{l30 conf blocks}), we can say that correlators of holomorphic/antiholmorphic chiral vertex operators in the respective chiral compactified free boson theories yield the conformal blocks for the correlator of vertex operators in the full (non-chiral) compactified free boson theory. In particular, in this example the indexing set $I$ of (\ref{l30 conf blocks}) is a single-element set.

The genus one partition function (\ref{l29 Z(tau) explicitly}) of the compactified free boson admits the representation (\ref{l30 Z(tau) conf blocks}) with $I_{1,0}$ a finite set if and only if the target radius satisfies $r^2\in \mathbb{Q}$.

For example, for $r=\sqrt{2}$ (the so-called self-dual radius, since it is a stationary point of T-duality (\ref{l29 Z T-duality})), one has
\begin{equation}
Z(\tau)=\Big( \frac{1}{\eta(\tau)}\sum_{k\in \ZZ}q^{k^2} \Big)\Big( \frac{1}{\eta(\bar\tau)}\sum_{l\in \ZZ}\bar{q}^{l^2} \Big) +
\Big( \frac{1}{\eta(\tau)}\sum_{k\in \ZZ+\frac12}q^{k^2} \Big)\Big( \frac{1}{\eta(\bar\tau)}\sum_{l\in \ZZ+\frac12}\bar{q}^{l^2} \Big)
\end{equation}
I.e., here $I_{1,0}$ is a 2-element set: one has two holomorphic and two antiholomorphic conformal blocks.

\section{Free fermion}

\subsection{Classical Lagrangian theory on a surface}
As a Lagrangian field theory, 2d free fermion on a Riemannian surface $\Sigma$ is defined by the classical action
\begin{equation}\label{l30 S free fermion}
S=\frac{i}{4\pi}\int_\Sigma \ppsi \bar{\boldsymbol\dd} \ppsi- \bar\ppsi \boldsymbol\dd \bar\ppsi = \frac{1}{2\pi}\int_\Sigma d^2z (\psi \bar\dd \psi+\bar\psi \dd \bar\psi)
\end{equation}
Here 
$\bdd=dz\, \dd$, $\bar\bdd =\bar{dz}\,\bar\dd$ are the holomorphic/antiholomorphic Dolbeault differentials, $z,\bar{z}$ refers to a local complex coordinate on $\Sigma$ and $d^2z=\frac{i}{2}dz\wedge d\bar{z}$ is the coordinate area element. The fields of the model are fermions (spinors)\footnote{One understands $\psi,\bar\psi$ as two independent fields.}
\begin{equation}\label{l30 psi}
\ppsi = \psi (dz)^{1/2}\in \Gamma (\Sigma, K^{\otimes \frac12}),\qquad 
\bar\ppsi =\bar\psi (d\bar{z})^{1/2} \in \Gamma(\Sigma, \ol{K}^{\otimes \frac12}).
\end{equation}
Here $K,\ol{K}$ are the line bundles $(T^{1,0})^*\Sigma, (T^{0,1})^*\Sigma$. Two important points:
\begin{itemize}
\item To define the square root of these line bundles, one needs to choose the sign of the root of the transition function. This choice of sign is known as the spin structure on $\Sigma$.\footnote{Put another way, it is a choice of a consistent set of periodicity/antiperiodicity conditions for the fermion field $\psi,\bar\psi$, as one traverses a closed curve $\gamma$ on $\Sigma$.}
\item One treats the values of the fields $\psi,\bar\psi$ as \emph{anticommuting} (or ``odd'' or ``Grassmann'') variables.
\end{itemize}
Thus, the space of fields of the model is (purely odd) vector superspace
\begin{equation}
\F_\Sigma= \bigoplus_{s}\Gamma(\Sigma, \Pi  K_s^{\otimes \frac12}\oplus \Pi\ol{K}_s^{\otimes \frac12}))
\end{equation}
where the sum is over the spin structures $s$ on $\Sigma$;\footnote{Generally, spin structures form a torsor over $H^1(\Sigma,\ZZ_2)$, thus there are $2^{B_1}$ spin structures on a surface with first Betti number $B_1$.} 
$\Pi$ is the parity reversal symbol, implying that $\F_\Sigma$ is the space of sections of a supervector bundle with purely odd fiber.

The Euler-Lagrange equation for the action reads
\begin{equation}
\bar\bdd \ppsi=0,\quad \bdd\bar\ppsi =0,
\end{equation}
or equivalently, in a local complex coordinate,
\begin{equation}
\bar\dd \psi=0,\quad \dd\bar\psi =0.
\end{equation}

\begin{remark}\label{l30 rem: Majorana and chiral fermions}
The system described by the action functional (\ref{l30 S free fermion}), with fields $\psi,\bar\psi$ is called the free Majorana  fermion.\footnote{Majorana fermion is ``uncharged'' as opposed to Dirac fermion, which is ``charged'' -- possesses an extra $U(1)$-symmetry $\psi\ra e^{i\theta}\psi$. Majorana and Dirac fermions are also referred to as ``real'' and ``complex'' fermions, respectively.}
 One can also consider the system with only field $\psi$ (or only $\bar\psi$), with the action $S_\mr{chiral}=\frac{1}{2\pi}\int_\Sigma d^2 z\, \psi\bar\dd\psi$ (respectively, $\frac{1}{2\pi}\int_\Sigma d^2 z\, \bar\psi\dd\bar\psi$) -- it is called the chiral or Weyl fermion. When one wants to distinguish between the chiral fermion $\psi$ and the chiral fermion $\bar\psi$, they are called respectively left- and right-chiral fermions.
\end{remark}

\subsection{Hamiltonian picture}
As a Hamiltonian theory on a cylinder, the model has phase space -- the purely odd vector superspace
\begin{equation}
\Phi = \bigsqcup_{s\in \{P,A\}} C^\infty_s (S^1)\otimes \CC^{0|2}
\end{equation}
where $\CC^{0|2}$ is another notation for the odd two-dimensional complex space $\Pi \CC^2$; $s\in \{P,A\}$ is a choice of spin structure on the cylinder -- a choice of either periodic (P) or antiperiodic (A) boundary conditon. Elements of $\Phi$ are pairs $(\psi,\bar\psi)$ of functions on $S^1$ satisfying simultaneously either P or A condition, 
\begin{equation}
\psi(\sigma+2\pi)=\epsilon \psi(\sigma), \quad \bar\psi(\sigma+2\pi)=\epsilon \bar\psi(\sigma)
\end{equation}
with $\epsilon=+1$ is $s=P$ and $\epsilon=-1$ if $s=A$.

One has the symplectic form on the phase space,
\begin{equation}
\omega = \frac{i}{4\pi} \oint_{S^1} d\sigma \Big( \delta\psi\wedge \delta \psi+\delta \bar\psi\wedge \delta \bar\psi \Big)\qquad \in \Omega^2(\Phi).
\end{equation}

The corresponding Poisson (anti-)brackets\footnote{Instead of being skew-symmetric, they are symmetric} are
\begin{equation}\label{l30 Poisson brackets}
\{\psi(\sigma),\psi(\sigma')\} = 2\pi i \delta(\sigma-\sigma'),\; \{\bar\psi(\sigma),\bar\psi(\sigma')\} = 2\pi i \delta(\sigma-\sigma'),\; 
\{\psi(\sigma),\bar\psi(\sigma')\} = 0.
\end{equation}

The Hamiltonian of the model is 
\begin{equation}
H=\frac{i}{4\pi} \oint_{S^1} d\sigma\, (\psi\dd_\sigma \psi - \bar\psi\dd_\sigma\bar\psi).
\end{equation}
It is obtained by writing the action functional on the Minkowski cylinder 
\begin{equation}
S_\mr{Mink}=\int dt \underbrace{\frac{i}{4\pi}\oint d\sigma (\psi \dd_t \psi -\psi \dd_\sigma \psi + \bar\psi \dd_t \bar\psi +\bar\psi\dd_\sigma \bar\psi)}_{\mathsf{L}}
\end{equation}
and performing the Legendre transform. Since the Lagrangian $\mathsf{L}$ is linear in velocities $\dot\psi(\sigma)$, the corresponding momenta $\pi(\sigma)=\frac{\delta}{\delta \dot\psi(\sigma)}\mathsf{L}=-\frac{i}{4\pi}\psi(\sigma)$ are not independent and drop out of the Legendre transform.


One can expand the fields in Fourier modes,
\begin{equation}
\psi(\sigma)=\sum_{n} e^{-in\sigma} b_n,\quad \bar\psi(\sigma)=\sum_n e^{-in\sigma} \bar{b}_n
\end{equation}
where $n$ ranges over integers if $s=P$ and over half-integers ($n\in\ZZ+\frac12$) if $s= A$. 
Poisson brackets (\ref{l30 Poisson brackets}) imply to following Poisson brackets for the Fourier modes:
\begin{equation}\label{l30 Poisson brackets for modes}
\{b_n,b_m\}=i\delta_{n,-m},\quad \{\bar{b}_n,\bar{b}_m\}=i\delta_{n,-m},\quad \{b_n,\bar{b}_m\}=0.
\end{equation}

\subsection{Canonical quantization}\label{ss fermion can quantization}
Proceeding to canonical quantization, one replaces coordinates $b_n$, $\bar{b}_n$ on the phase space with operators $\wh{b}_n,\wh{\bar{b}}_n$ acting on some space of states $\HH$ (to be described), subject to the following anticommutation relations
(obtained from (\ref{l30 Poisson brackets for modes}) by the canonical quantization prescription):
\begin{equation}\label{l30 b_n anticommutation relations}
[\wh{b}_n,\wh{b}_m]_+ = \delta_{n,-m} \wh{\mathbb{1}}, \quad [\wh{\bar{b}}_n,\wh{\bar{b}}_m]_+ = \delta_{n,-m} \wh{\mathbb{1}},\quad
[\wh{b}_n,\wh{\bar{b}}_m]_+ = 0,
\end{equation}
where $[A,B]_+\colon=AB+BA$ is the anticommutator.

\begin{remark} \label{l30 rem Cliff}
Generally, given a vector space $W$ with an inner product $g$, one can form the Clifford algebra $\mr{Cl}(W,g)$ -- the associative unital algebra generated by the elements of $W$ subject to the relation
\begin{equation}
uv+vu = g(u,v) \mathbb{1}
\end{equation}
for any $u,v\in W$.
Then, the algebra spanned by the operators $\wh{b}_n$ above (with $n\in \ZZ$ for $s=P$ and $n\in \ZZ+\frac12$ for $s=A$) is the Clifford algebra for the vector space $W=C^\infty_s(S^1)$ with inner product $g(u,v)=\oint d\sigma u(\sigma) v(\sigma)$.\footnote{
Or, more invariantly, one should set $W=\Gamma(S^1,(T^*S^1)^{\otimes \frac12}_s)$ -- the space of half-densities on $S^1$ with periodicity condition $s\in \{P,A\}$. Then one has $g(u,v)=\oint \mathbf{u}\mathbf{v}$ for $\mathbf{u},\mathbf{v}\in W$ two half-densities. 
} 
Thus, the Clifford algebra for $W$ plays a similar role in the free fermion theory to the role of the Weyl algebra in the free boson theory. We will denote these two Clifford algebras $\mr{Cl}_s$ with $s\in\{P,A\}$:
\begin{equation}\label{l30 Cl_P, Cl_A}
\begin{gathered}
\mr{Cl}_P=\CC\langle \ldots,\wh{b}_{-1},\wh{b}_0,\wh{b}_1,\ldots \rangle\Big/(\wh{b}_n\wh{b}_m+\wh{b}_m\wh{b}_n=\delta_{n,-m}\wh{\mathbb{1}}),\\
\mr{Cl}_A=\CC\langle \ldots,\wh{b}_{-3/2},\wh{b}_{-1/2},\wh{b}_{1/2},\wh{b}_{3/2}\ldots \rangle\Big/(\wh{b}_n\wh{b}_m+\wh{b}_m\wh{b}_n=\delta_{n,-m}\wh{\mathbb{1}})
\end{gathered}
\end{equation}
\end{remark}

The Heisenberg field operator on the cylinder is
\begin{equation}\label{l30 psihat cylinder}
\wh{\psi}(\zeta)=\sum_{n\in \ZZ_s} e^{-n\zeta} \wh{b}_n, \quad 
\wh{\bar{\psi}}(\zeta)=\sum_{n\in \ZZ_s} e^{-n\bar\zeta} \wh{\bar{b}}_n,
\end{equation}
where $\zeta=t+i\sigma$, with $t$ the Euclidean time, and where we denoted $\ZZ_P\colon=\ZZ,\quad \ZZ_A\colon=\ZZ+\frac12$.

Mapping from the cylinder to the punctured plane by $\exp\colon \CC/2\pi i\ZZ\ra \CC\backslash\{0\}$, $\zeta\mapsto z=e^\zeta$, we have
$
\psi_\mr{plane} (z)(dz)^{\frac12}= \psi_\mr{cyl}(\zeta) (d\zeta)^{\frac12}
$
and thus
\begin{equation}\label{l30 cyl->plane 1}
\psi_\mr{plane}(z)=\underbrace{z^{-\frac12}}_{\left(\frac{d z}{d\zeta}\right)^{-\frac12}} \psi_\mr{cyl}(\zeta)
\end{equation}
where the power of derivative is minus the power of $K$ in (\ref{l30 psi}), cf. also (\ref{l25 primary field finite transf}).
Similarly, one has
\begin{equation}\label{l30 cyl->plane 2}
\bar\psi_\mr{plane}(z)=\bar{z}^{-\frac12} \bar\psi_\mr{cyl}(\zeta).
\end{equation}

By this reasoning, Heisenberg field operators on the cylinder (\ref{l30 psihat cylinder}) mapped to the punctured plane become
\begin{equation}
\wh\psi(z)=\sum_{n\in \ZZ_s} \wh{b}_n z^{-n-\frac12},\quad 
\wh{\bar\psi}(z)=\sum_{n\in \ZZ_s} \wh{\bar{b}}_n \bar{z}^{-n-\frac12}
\end{equation}
where the $-\frac12$ shift in the exponent comes from (\ref{l30 cyl->plane 1}), (\ref{l30 cyl->plane 2}).

Periodic boundary condition (P) on the cylinder ($\psi_\mr{cyl}(\sigma+2\pi)=\psi_\mr{cyl}(\sigma)$) maps to the antiperiodic condition on the plane, 
\begin{equation}
\psi_\mr{plane}(e^{2\pi i} z)=e^{-\frac12 2\pi i}\psi_\mr{plane}(z)=- \psi_\mr{plane}(z),
\end{equation}
i.e., when travelling along a closed simple contour around zero, the field $\psi_\mr{plane}(z)$ changes sign. This spin structure on $\CC\backslash\{0\}$ (or ``sector'' of the phase space/space of states) is called  ``Ramond sector.'' Thus, in P or Ramond sector one has
$\wh\psi(z)=\sum_{n\in \ZZ} \wh{b}_n z^{-n-\frac12}$ and similarly for $\wh{\bar\psi}(z)$.

Similarly, antiperiodic condition (A)  on the cylinder becomes periodic condition on the plane, $\psi_\mr{plane}(e^{2\pi i}z)=+\psi_\mr{plane}(z)$. This is the so-called ``Neveu-Schwarz spin structure/sector.'' Thus, in A or Neveu-Schwarz sector one has
$\wh\psi(z)=\sum_{n\in \ZZ+\frac12} \wh{b}_n z^{-n-\frac12}$ and similarly for $\wh{\bar\psi}(z)$.

\marginpar{Lecture 31,\\
11/9/2022}

\subsection{Space of states for the chiral fermion}\label{ss space of states for chiral fermion}

Let us restrict our attention to the chiral fermion $\psi$, 
cf. Remark \ref{l30 rem: Majorana and chiral fermions}. 

The space of states splits into P- and A-sectors:
\begin{equation}
\HH=\HH_P\oplus \HH_A
\end{equation}
with $\HH_P$ a highest weight $\mr{Cl}_P$-module (cf. (\ref{l30 Cl_P, Cl_A})) generated by the highest vector $|\vac_P\rangle$ satisfying $\wh{b}_{>0} |\vac_P\rangle=0 $.
Similarly, $\HH_A$ a highest weight $\mr{Cl}_A$-module generated by the highest vector $|\vac_A\rangle$ satisfying $\wh{b}_{>0} |\vac_A\rangle=0 $. Thus, one has
\begin{equation}
\HH_P=\mr{Span}\Big\{\cdots \wh{b}_{-2}^{p_2}\wh{b}_{-1}^{p_1}\wh{b}_0^{p_0}|\vac_P\rangle\; \Big| \; 
\begin{array}{c}
p_0,p_1,p_2,\ldots \in \{0,1\}, \\
\mr{finitely\;many\;}p_n\mr{\;are\;nonzero}
\end{array}
  \Big\}
\end{equation}
Fermionic occupation numbers $p_0,p_1,\ldots$ are in $\{0,1\}$ since from the anticommutation relations (\ref{l30 b_n anticommutation relations}) one has $(\wh{b}_n)^2=0$ for $n\neq 0$ and $(\wh{b}_0)^2=\frac12\wh{\mathbb{1}}$. Similarly, one has
\begin{equation}
\HH_A=\mr{Span}\Big\{\cdots \wh{b}_{-5/2}^{p_{5/2}}\wh{b}_{-3/2}^{p_{3/2}}\wh{b}_{1/2}^{p_{1/2}}|\vac_A\rangle\; \Big| \; 
\begin{array}{c}
p_{1/2},p_{3/2},p_{5/2},\ldots \in \{0,1\}, \\
\mr{finitely\;many\;}p_n\mr{\;are\;nonzero}
\end{array}
  \Big\}
\end{equation}

\subsection{2-point function $\langle \psi \psi \rangle$}
Tu understand which of the Clifford highest vectors $|\vac_P\rangle$, $|\vac_A\rangle$ is the true vacuum of the system,
let us calculate the correlation function $\langle \psi(w) \psi(z) \rangle$ in the operator formalism.  Assume for simplicity $|w|>|z|>0$.

In the P-sector we have
\begin{multline}\label{l31 <psi psi>_P beginning}
\langle \psi(w)\psi(z) \rangle_P\colon = \langle \vac_P | \wh\psi(w) \wh\psi(z) |\vac_P\rangle =
\sum_{n,m\in \ZZ} \langle \vac_P | \wh{b}_n \wh{b}_m |\vac_P\rangle w^{-n-\frac12}z^{-m-\frac12}
\end{multline}
From the fact that $\wh{b}_{>0}|\vac_P\rangle=0$, $\langle\vac_P| \wh{b}_{<0}=0$ and from the anticommutation relation (\ref{l30 b_n anticommutation relations}) we see that the only surviving terms are $n=m=0$ and $n=-m>0$, i.e., one has
\begin{multline}\label{l31 <psi psi>_P}
\langle \psi(w)\psi(z) \rangle_P= 
\underbrace{\langle\vac_P| \wh{b}_0\wh{b}_0 |\vac_P\rangle}_{\frac12} w^{-\frac12} z^{-\frac12}+\sum_{n=1}^\infty \underbrace{\langle\vac_P| \wh{b}_n \wh{b}_{-n} |\vac_P \rangle}_{1} w^{-n-\frac12}z^{n-\frac12} =\\= (wz)^{-\frac12}(\frac12+\sum_{n=1}^\infty\left(\frac{z}{w}\right)^n)=
\frac12 \frac{\left(\frac{w}{z}\right)^{\frac12}+\left(\frac{z}{w}\right)^{\frac12}}{w-z}
\end{multline}

By a similar computation, in the A-sector we have
\begin{multline}\label{l31 <psi psi>_A}
\langle \psi(w)\psi(z) \rangle_A= \sum_{n\in \ZZ+\frac12,\,n>0} \underbrace{\langle \vac_A| \wh{b}_n\wh{b}_{-n}|\vac_A\rangle}_1 w^{-n-\frac12} z^{n-\frac12} =\\
=\frac{1}{w}+\frac{z}{w^2}+\frac{z^2}{w^3}+\cdots = \frac{1}{w-z}
\end{multline}

Note that the expression (\ref{l31 <psi psi>_A}) is translation-invariant as expected of a 2-point correlator in any CFT (cf. Lemma \ref{l26 lemma 2-point}),  while (\ref{l31 <psi psi>_P}) is not translation-invariant. This suggests that we should identify $|\vac\rangle\colon=|\vac_A\rangle$ as the true vacuum vector in $\HH$, while $|\vac_P\rangle$ is a pseudovacuum (similar to the states $|\pi_0\rangle$ with $\pi_0\neq 0$ in the scalar field theory). In particular, (\ref{l31 <psi psi>_A}) should be understood as the actual 2-point correlator 
\begin{equation}\label{l31 <psi psi>}
\langle \psi(w) \psi(z) \rangle=\frac{1}{w-z}
\end{equation} 
On the other hand, the computation (\ref{l31 <psi psi>_P}) should be understood as a 4-point correlator on $\CP^1$,\footnote{The field $\sigma(\infty)$ here is with respect to the coordinate chart at $\infty\in \CP^1$. Writing this correlator in terms of $\CC$, and using the result from further along this section that $\sigma$ has conformal weight $(\frac{1}{16},0)$, one should write
$\lim_{y\ra \infty} y^{\frac18} \langle \sigma(y) \psi(w)\psi(z)\sigma(0) \rangle_\CC$.
}
\begin{equation}
\langle \sigma(\infty)\psi(w)\psi(z)\sigma(0)\rangle_{\CP^1},
\end{equation} 
with a certain field $\sigma$ (so-called ``twist field,'' to be discussed later), corresponding by field-state correspondence to $|\vac_P\rangle$, inserted at the points $0$ and $\infty$. This explains why we don't see translation invariance in (\ref{l31 <psi psi>_P}) -- because ``secretly'' it is a 4-point function and a translation would displace the field $\sigma$ away from the origin.


In the free fermion model, the space of states $\HH$ and the space of fields $V$ are $\ZZ_2$-graded and we understand that when the radial ordering is applied, we have a sign when we have to permute field operators:
\begin{equation}
\mc{R} \wh{\Phi}_1(w) \wh{\Phi}_2(z) = \left\{
\begin{array}{cc}
\wh{\Phi}_1(w)\wh{\Phi}_2(z), & \mr{if}\; |w|\geq |z| \\
(-1)^{|\Phi_1|\cdot |\Phi_2|} \wh{\Phi}_2(z)\wh{\Phi}_2(w), & \mr{if}\; |z|\geq |w|
\end{array}
\right.
\end{equation}
Here $|\Phi|\in \ZZ_2$ is the parity of the field. With this prescription, for instance, the computation (\ref{l31 <psi psi>_A}) extends to the case $|w|\leq |z|$, yielding the same formula:
\begin{equation}\label{l31 <psi psi> 2}
\langle \phi(w)\psi(z) \rangle\colon = \langle \vac_A| \mc{R} \wh{\psi}(w) \wh\psi(z) |\vac_A \rangle = \frac{1}{w-z}.
\end{equation}
with any $w\neq z\in \CC\backslash\{0\}$.

Note that the 2-point function (\ref{l31 <psi psi> 2})  satisfies 
\begin{equation}
\langle \psi(w) \psi(z)\rangle = - \langle \psi(z) \psi(w) \rangle
\end{equation}
-- the correlation function is antisymmetric under swapping the positions of fermions (as expected in Fermi statistics).

\subsection{Stress-energy tensor}
Classically, the stress-energy tensor (computed as a variation of the action w.r.t. metric) for the chiral fermion is
\begin{equation}\label{l31 T_cl}
T(z)=-\frac12 \psi(z) \dd\psi(z)
\end{equation}
for the holomorphic component and $\ol{T}(z)=0$ for the antiholomorphic component. 

For the corresponding quantum object -- an operator on $\HH$, we consider separately $\wh{T}(z)$ as an operator on $\HH_A$ and on $\HH_P$.

\subsubsection{A-sector.} Set 
\begin{equation}\label{l31 T in A-sector by normal ordering}
\wh{T}(z)\colon= -\frac12:\wh\psi(z) \dd \wh\psi(z):
\end{equation}
where the normal ordering puts fermion annihilation operators $\wh{b}_{>0}$ to the right and fermion creation operators $\wh{b}_{<0}$ to the left; we understand that when we interchange two $\wh{b}$'s, the sign of the expression is flipped.

From Wick's lemma (or rather its obvious adaptation to the Clifford algebra) we find the standard OPE
\begin{equation}\label{l31 TT}
\mc{R}\wh{T}(w) \wh{T}(z) = \frac{\frac{c}{2}\wh{\mathbb{1}}}{(w-z)^4}+\frac{2\wh{T}(z)}{(w-z)^2} + \frac{\dd \wh{T}(z)}{w-z}+\mr{reg.}
\end{equation}
(cf. (\ref{l23 TT})) with holomorphic central charge $c=\frac12$. Since $\ol{T}=0$, the $T\ol{T}$ and $\ol{T}\ol{T}$ OPEs are satisfied trivially, with antiholomorphic central charge $\bar{c}=0$.

\subsubsection{P-sector.} Set
\begin{equation}
\wh{T}^\mr{naive}(z)\colon= -\frac12:\wh\psi(z) \dd \wh\psi(z):
\end{equation}
with the same definition of normal ordering as above. Interestingly, it does not satisfy the expected OPE (\ref{l23 TT}), thus it fails a basic axiom of a CFT (in particular its modes do not satisfy the Virasoro algebra relations). It turns out that a good definition is as follows:
\begin{equation}\label{l31 T via subtraction of sing part of OPE}
\wh{T}(z)\colon=\lim_{w\ra z}\left(-\frac12 \mc{R} \wh\psi(w)\dd\wh\psi(z)+\frac12 \frac{\wh{\mathbb{1}}}{(w-z)^2}\right)
\end{equation}
-- we split the two points in the definition of the stress-energy tensor (\ref{l31 T_cl}) and subtract the (translation-invariant) singular part of OPE, $-\frac12 \psi(w)\dd\psi(z)-[-\frac12 \psi(w)\dd\psi(z)]_\mr{sing}$. Then one has\footnote{
Indeed, repeating the computation (\ref{l31 <psi psi>_P beginning}), (\ref{l31 <psi psi>_P}), without pairing to $|\vac_P\rangle$, we have $\mc{R}\wh\psi(w)\wh\psi(z)-:\wh\psi(w)\wh\psi(z):=\frac12 \frac{\left(\frac{w}{z}\right)^{\frac12}+\left(\frac{z}{w}\right)^{\frac12}}{w-z}\wh{\mathbb{1}} =(\frac{1}{w-z}+ \frac{1}{8z^2} (w-z)+O((w-z)^2))\wh{\mathbb{1}}$. Hence, $-\frac12\mc{R}\wh\psi(w)\dd\wh\psi(z)=-\frac12:\wh\psi(w)\dd\wh\psi(z):+(-\frac12 \frac{1}{(w-z)^2}+\frac{1}{16z^2}+O(w-z))\wh{\mathbb{1}}$, 
or equivalently $-\frac12\mc{R}\wh\psi(w)\dd\wh\psi(z)+\frac12 \frac{\wh{\mathbb{1}}}{(w-z)^2}=-\frac12:\wh\psi(w)\dd\wh\psi(z):+\frac{\wh{\mathbb{1}}}{16z^2}+O(w-z)$. Taking the limit $w\ra z$, we obtain (\ref{l31 T = T^naive + 1/16}).
}
\begin{equation}\label{l31 T = T^naive + 1/16}
\wh{T}(z)= \wh{T}^\mr{naive}(z)+\frac{\wh{\mathbb{1}}}{16z^2}
\end{equation}
-- with this $\frac{\wh{\mathbb{1}}}{16z^2}$ shift included, $\wh{T}$ does satisfy the desired OPE (\ref{l31 TT}), again with $c=\frac12$.

In particular, we have nonzero expectation value of the stress-energy tensor in P-sector
\begin{equation}
\langle T(z) \rangle_P\colon=\langle \vac_P| \wh{T}(z) |\vac_P\rangle=\frac{1}{16z^2}.
\end{equation}

We remark that in A-sector, prescription (\ref{l31 T via subtraction of sing part of OPE}) is compatible with the construction via normal ordering (\ref{l31 T in A-sector by normal ordering}). Thus, (\ref{l31 T via subtraction of sing part of OPE}) can be taken as a universal recipe for the fermion stress-energy tensor (applies to both A- and P-sector).

\subsubsection{Virasoro generators.} Virasoro generators can be obtained from the  stress-energy tensor $\wh{T}(z)=\sum_{n\in \ZZ} z^{-n-2}\wh{L}_n$. Thus, from (\ref{l31 T in A-sector by normal ordering}) and (\ref{l31 T = T^naive + 1/16}) one obtains:
\begin{equation}\label{l31 L_n via b_n}
\begin{array}{cl}
\mbox{A-sector:} &\displaystyle \wh{L}_n=\sum_{m\in \ZZ+\frac12} \left(\frac{m}{2}+\frac14\right) :\wh{b}_{n-m}\wh{b}_m: ,\\
\mbox{P-sector:} &\displaystyle  \wh{L}_n=\sum_{m\in \ZZ} \left(\frac{m}{2}+\frac14\right) :\wh{b}_{n-m}\wh{b}_m:+\delta_{n,0}\frac{\wh{\mathbb{1}}}{16}.
\end{array}
\end{equation}
All operators $\ol{L}_n$ vanish identically.

In particular, one has
\begin{equation}
\wh{L}_0 |\vac_A\rangle =0,\qquad \wh{L}_0 |\vac_P\rangle = \frac{1}{16} |\vac_P\rangle
\end{equation}
In particular, the true vacuum $|\vac_A\rangle$ has zero energy and total momentum, while $|\vac_P\rangle$ has both energy and total momentum $\frac{1}{16}$.

One also has 
\begin{equation}
[\wh{L}_0,\wh{b}_{-n}]=n \wh{b}_n
\end{equation}
in both A- and P-sectors.
I.e., applying $\wh{b}_{-n}$, one increases the $\wh{L}_0$-eigenvalue (conformal weight) by $n$.

\subsection{Back to the space of states}
Let us list the states in A- and P-sectors with small conformal weights $h$ (i.e., $\wh{L}_0$-eigenvalues).


\begin{tabular}{c|l}
$h$ & state \\ \hline
$0$& $|\vac_A\rangle$ \\
$\frac12$ & $\wh{b}_{-\frac12}|\vac_A\rangle$\\
$1$ & $\varnothing$ \\
$\frac32$ & $\wh{b}_{-\frac32}|\vac_A\rangle$ \\
$2$ & $\wh{b}_{-\frac32}\wh{b}_{-\frac12}|\vac_A\rangle$ \\
$\frac52$ & $\wh{b}_{-\frac52}|\vac_A\rangle$\\
$3$ & $\wh{b}_{-\frac52}\wh{b}_{-\frac12}|\vac_A\rangle$\\
$\cdots$& $\cdots$
\end{tabular}  \qquad
\begin{tabular}{c|l}
$h$ & state \\ \hline
$\frac{1}{16}$ & $|\vac_P\rangle$, $\wh{b}_0|\vac_P\rangle$ \\
$1+\frac{1}{16}$ & $\wh{b}_{-1}|\vac_P\rangle$, $\wh{b}_{-1}\wh{b}_0|\vac_P\rangle$ \\
$2+\frac{1}{16}$ & $\wh{b}_{-2}|\vac_P\rangle$, $\wh{b}_{-2}\wh{b}_0|\vac_P\rangle$ \\
$3+\frac{1}{16}$ & $\wh{b}_{-3}|\vac_P\rangle$, $\wh{b}_{-3}\wh{b}_0|\vac_P\rangle$, \\
& $\wh{b}_{-2}\wh{b}_{-1}|\vac_P\rangle$, $\wh{b}_{-2}\wh{b}_{-1}\wh{b}_0|\vac_P\rangle$ \\
$\cdots$&$\cdots$
\end{tabular}

Here we have states in A-sector on the left and states in P-sector on the right.

States 
\begin{equation}\label{l31 four primary states}
|\vac_A\rangle, \quad \wh{b}_{-\frac12}|\vac_A\rangle,\quad |\vac_P\rangle,\quad \wh{b}_0 |\vac_P\rangle
\end{equation}
are Virasoro-primary (annihilated by $\wh{L}_{>0}$) -- and they are the only Virasoro-primary states in $\HH$. We will also denote these four states according to their conformal weight by $|0\rangle$, $|\frac12 \rangle$, $|\frac{1}{16}\rangle_+$, $|\frac{1}{16}\rangle_-$. Their $\ZZ_2$-grading is, respectively, even, odd, even, odd.\footnote{The logic with $\ZZ_2$ grading is that vectors $|\vac_A\rangle$, $|\vac_P\rangle$ are even, while action by any single Clifford generator $\wh{b}$ changes the parity of the vector.} 

Thus, the space of states of the chiral fermion splits into four conformal families (irreducible representations of Virasoro algebra):
\begin{equation}
\HH= \underbrace{M_{0}\oplus \Pi M_{\frac12}}_{\HH_A}\oplus \underbrace{ M_{\frac{1}{16}}\oplus \Pi M_{\frac{1}{16}} }_{\HH_P}
\end{equation}
where $M_h$ is the irreducible Virasoro highest weight module with central charge $\frac12$ and highest weight $h$; $\Pi$ is the parity reversal symbol (i.e. $M_h$ is an even vector space and $\Pi M_h$ is an odd (super)vector space).

By the field-state correspondence, the four primary states (\ref{l31 four primary states}) correspond to four primary fields
\begin{equation}
\mathbb{1},\quad \psi(z),\quad \sigma(z),\quad \mu(z)
\end{equation} 
with conformal weight $h$ being $0,\frac12,\frac{1}{16},\frac{1}{16}$, respectively (and $\bar{h}=0$ for all fields in the chiral theory). Fields $\sigma,\mu$ are the so-called ``twist fields.'' One has for instance the OPE
\begin{equation}
\psi(w) \sigma(z)\sim (w-z)^{-\frac12}\mu(z)+\mr{reg}.
\end{equation}
In particular, the insertion of the twist field $\sigma(z)$ creates a monodromy $-1$ around $z$ for the fermion $\psi(w)$.

\marginpar{Lecture 32,\\
11/11/2022}

\subsection{Non-chiral (Majorana) fermion}
We pair the left- and right- (or holomorphic/antiholomorphic) chiral fermion CFTs, with the following conventions:
\begin{itemize}
\item We require that the P/A boundary condition is the same for $\psi$ and $\bar\psi$.
\item We impose $\wh{b}_0=\wh{\bar{b}}_0$ (cf. (\ref{l17 a_0})).
\end{itemize}

The space of states splits as a sum of irreducible highest weight modules of $\mr{Vir}\oplus \ol{\mr{Vir}}$ with central charge $c=\bar{c}=\frac12$:
\begin{equation}\label{l32 H non-chiral fermion}
\HH^\mr{non-chiral}=\underbrace{M_{0,0}\oplus \Pi M_{\frac12,0}\oplus \Pi M_{0,\frac12} \oplus M_{\frac12,\frac12}}_{\HH^\mr{non-chiral}_A} \oplus \underbrace{M_{\frac{1}{16},\frac{1}{16}} \oplus \Pi  M_{\frac{1}{16},\frac{1}{16}}}_{\HH^\mr{non-chiral}_P}
\end{equation}
where the two indices of $M$ are the highest weight (conformal weight) $(h,\bar{h})$ of the highest vector. The highest weight vectors themselves and the corresponding primary fields are, respectively: 

\vspace{0.3cm}
\hspace{-1cm}
\begin{tabular}{c|cccccc}
highest vector & $|\vac_A\rangle$ & $\wh{b}_{-\frac12} |\vac_A\rangle$& $\wh{\bar{b}}_{-\frac12} |\vac_A\rangle$ & $\wh{b}_{-\frac12}\wh{\bar{b}}_{-\frac12}|\vac_A\rangle$ & $|\vac_P\rangle$& $\wh{b}_0 |\vac_P\rangle$ \\
primary field & $\mathbb{1}$ & $\psi(z)$ & $\bar\psi(z)$ & $\epsilon(z)=\psi(z)\bar\psi(z)$ & $\bbsigma(z)$ & $\bbmu(z)$ \\
$(h,\bar{h})$ & $(0,0)$ & $(\frac12,0)$ & $(0,\frac12)$ & $(\frac12,\frac12)$ & $(\frac{1}{16},\frac{1}{16})$ & $(\frac{1}{16},\frac{1}{16})$ \\
$\ZZ_2$-parity & even & odd & odd & even & even & odd
\end{tabular}


\begin{remark} Free Majorana fermion is the CFT model corresponding to the Ising model at critical temperature, see \cite{BPZ} and \cite{DMS} for a detailed discussion. In particular, correlation functions of the spin field in Ising model can be recovered as correlation functions of the field $\bbsigma$ in the free fermion CFT.
\end{remark}

\subsection{Examples of correlators}\label{sss free fermion correlators}
From the computation (\ref{l31 <psi psi>_A}) we know the 2-point correlator
\begin{equation}
\langle \psi(w) \psi(z)\rangle = \frac{1}{w-z}.
\end{equation}
The correlator of any number of fields $\psi$, $\bar\psi$ can be computed by Wick's lemma, as a sum over perfect matchings (where one needs to be careful with signs incurred when moving $\hat\psi$ over other $\hat\psi$'s.) For the correlator of several $\psi$ fields, this sum over perfect matchings can written as a Pfaffian formula
\begin{equation}
\langle \psi(z_1)\cdots \psi(z_n) \rangle = \left\{
\begin{array}{cl}
\mr{Pf}\left(\frac{1}{z_i-z_j}\right) & \mr{if}\; n\; \mr{is\;even},\\
0 & \mr{if}\; n\; \mr{is\; odd}
\end{array}
 \right.
\end{equation}

For example, for $n=4$ one has
\begin{equation}
\langle \psi(z_1) \psi(z_2) \psi(z_3) \psi(z_4) \rangle = \frac{1}{z_{12}z_{34}}-\frac{1}{z_{13}z_{24}} +\frac{1}{z_{14}z_{23}},
\end{equation}
where $z_{ij}=z_i-z_j$.

The 2-point correlator $\langle \bbsigma(w) \bbsigma(z) \rangle$ cannot be found from Wick's lemma (we don't have an explicit description of the field $\bbsigma$ in terms of Clifford generators $\wh{b}_n,\wh{\bar{b}}_n$ at our disposal), however we have an ansatz for it from global conformal symmetry, cf. Lemma \ref{l26 lemma 2-point}:
\begin{equation}\label{l32 <sigma sigma>}
\langle \bbsigma(w) \bbsigma(z) \rangle = C\frac{1}{(w-z)^{\frac{1}{16}+\frac{1}{16}}}\cdot \frac{1}{(\bar{w}-\bar{z})^{\frac{1}{16}+\frac{1}{16}}} = C\frac{1}{|w-z|^\frac14}
\end{equation}
with $C$ some constant. By choosing a convenient normalization for the field $\sigma$, we can assume $C=1$.\footnote{This normalization agrees with the convention that the  state corresponding to $\bbsigma$, $|\vac_P\rangle$, has unit norm $\langle \vac_P|\vac_P \rangle=1$, cf. (\ref{l26 lemma 2-point (c) eq}).} 

The exponent $\frac14$ in (\ref{l32 <sigma sigma>}) is exactly the one appearing in the spin-spin correlator in Ising model at critical temperature (as known from the explicit solution of 2d Ising model), thus corroborating the free fermion-Ising correspondence.

\subsubsection{4-point correlator of $\bbsigma$ fields.}
As the next example, consider the 4-point function of $\bbsigma$ fields. From global conformal invariance (cf. Lemma \ref{l27 lemma: n-point fun from global conformal invariance}) one has
\begin{equation}\label{l32 sigma corr ansatz}
\langle \bbsigma(z_1) \bbsigma(z_2) \bbsigma(z_3) \bbsigma(z_4) \rangle = \left| \frac{z_{13} z_{24}}{z_{12}z_{23}z_{34}z_{41}}\right|^{\frac14} F(\lambda),
\end{equation}
where $F(\lambda)$ is some smooth function of the cross-ratio $\lambda=\frac{z_{12}z_{34}}{z_{13}z_{24}}\in \CP^1\backslash \{0,1,\infty\}$. To fix the function $F$, we need some other idea than just global conformal invariance.

In the free fermion theory one has a vanishing descendant of the state $|\vac_P\rangle$ at level $2$:
\begin{equation}
(\wh{L}_{-2}-\frac43 \wh{L}_{-1}^2) |\vac_P\rangle =0
\end{equation}
-- this can be verified by using the expressions (\ref{l31 L_n via b_n}) for Virasoro generators in terms of Clifford generators.\footnote{
In fact, it is true generally that in the Verma module $\mathbb{V}_{c,h}$ for the Virasoro algebra at central charge $c$ with highest weight $h$ one has a singular vector at level $2$ (cf. Remark \ref{l25 rem linear dependencies between descendants}), of the form $|\chi\rangle=(L_{-2}+\alpha L_{-1}^2)|h\rangle$, if and only if one has 
$\left| \begin{array}{cc}
3& 4h+2 \\ \frac{c}{2}+4h & 6h 
\end{array} \right| =0
$ and then $|\chi\rangle$ is a singular vector if $\alpha=-\frac{3}{4h+2}$. In particular, the pair $c=\frac12$, $h=\frac{1}{16}$ satisfies the determinant condition and gives $\alpha=-\frac43$, i.e., $(L_{-2}-\frac{4}{3}L_{-1}^2)|\frac{1}{16}\rangle$ is a singular vector in the Verma module. Thus, in the irreducible Virasoro module it has to be set to zero.
} Thus, the corresponding primary field also has a vanishing descendant:
\begin{equation}
(L_{-2}-\frac43 L_{-1}^2) \bbsigma(z)=0.
\end{equation}
Thus, by Ward identity (cf. Example \ref{l26 ex: corr of a descendant}) one has
\begin{equation}
0=\langle (L_{-2}-\frac43 L_{-1}^2) \bbsigma(z_1)\; \bbsigma(z_2) \bbsigma(z_3) \bbsigma(z_4) \rangle =\mc{D} \langle \bbsigma(z_1) \bbsigma(z_2) \bbsigma(z_3) \bbsigma(z_4) \rangle
\end{equation}
with $\mc{D}$ some differential operator in $z_i$'s. Substituting the ansatz (\ref{l32 sigma corr ansatz}), we obtain a differential equation on the function $F(\lambda)$ -- the hypergeometric equation
\begin{equation}
\left(\lambda (1-\lambda)\frac{\dd^2}{\dd \lambda^2}+(\frac12-\lambda)\frac{\dd}{\dd \lambda}+\frac{1}{16}\right) F(\lambda) = 0.
\end{equation}
This equation has two independent solutions 
\begin{equation}
f_{1,2}(\lambda)=(1\pm \sqrt{1-\lambda})^{\frac12}
\end{equation}
and the general solution has the form $f_1(\lambda) g_1(\bar\lambda)+f_2(\lambda) g_2(\bar\lambda)$ with $g_{1,2}$ some antiholomorphic functions. Using the conditions that $F$ should be a real, single valued function, fixes the solution to the form
\begin{equation}
F(\lambda) = a (f_1(\lambda) f_1(\bar\lambda)+f_2(\lambda)f_2(\bar\lambda))
\end{equation}
with $a$ a constant. Using additionally the OPE $\bbsigma(w)\bbsigma(z)\sim \frac{\mathbb{1}}{|w-z|^{\frac14}}+\cdots$ (where the normalization follows from $C=1$ in (\ref{l32 <sigma sigma>})), one obtains $a=\frac12$. Thus, putting everything together, one has
\begin{equation}\label{l32 corr of 4 sigmas}
\langle \bbsigma(z_1) \bbsigma(z_2) \bbsigma(z_3) \bbsigma(z_4) \rangle=\frac12  \left| \frac{z_{13} z_{24}}{z_{12}z_{23}z_{34}z_{41}}\right|^{\frac14}
(|1+\sqrt{1-\lambda}|+|1-\sqrt{1-\lambda}|).
\end{equation}

\subsection{Torus partition function for the Majorana fermion}\label{ss Majorana fermion Z(torus)}
Denote $(-1)^F$ the operator on the space of states with eigenvalue $+1$ on even vectors and $-1$ on odd vectors.

The partition function of the free Majorana fermion on a torus is given by (\ref{l29 Z(torus) corrected}) with the following correction: $\tr_\HH(\cdots)$ should be replaced by 
\begin{equation}\label{l32 tr str}
\tr_{\HH^\mr{even}}(\cdots)=\tr_{\mr{\HH}} \frac{1+(-1)^F}{2}(\cdots)=\frac12\left(\mr{tr}_\HH(\cdots)+\mr{Str}_\HH(\cdots)\right),
\end{equation}
where $\mr{Str}$ is the supertrace\footnote{
Generally, for a $\ZZ_2$-graded vector space $W=W^\mr{even}\oplus W^\mr{odd}$ and $A\colon W\ra W$ a linear map, the supertrace is defined as $\tr_{W^\mr{even}}A - \tr_{W^\mr{odd}}A=\mr{tr}\,A^\mr{even,even}-\mr{tr}\, A^\mr{odd,odd}$. Where in the last form we are referring to the diagonal blocks of $A$ seen as a $2\times 2$ block matrix.
} and $(\cdots)=q^{\wh{L}_0-\frac{c}{24}}\bar{q}^{\wh{\ol{L}}_0-\frac{\bar{c}}{24}}$. Recall that for the Majorana fermion, the central charge is $c=\bar{c}=\frac12$. The operator $\frac{1+(-1)^F}{2}$ is the projector to the even part of the space of states.

Averaging over trace and supertrace in (\ref{l32 tr str}) corresponds to averaging over spin structures (boundary conditions for the fermions) in the ``time direction'' on the torus.
In fact, the \emph{supertrace} is the more natural extension of the notion of trace to $\ZZ_2$-graded vector spaces (it satisfies the natural cyclicity property with Koszul sign). As we will see below, from comparison with the path integral approach, the supertrace term in the r.h.s. of (\ref{l32 tr str}) corresponds to \emph{periodic} boundary condition for the fermions in the time direction, whereas the trace term corresponds to the \emph{antiperiodic} boundary condition.

In view of (\ref{l32 tr str}), the torus partition function of the Majorana fermion is
\begin{equation}\label{l32 Z(torus) via tr, str}
Z(\tau)=(q\bar{q})^{-\frac{1}{48}} \tr_{\HH^\mr{even}} q^{\wh{L}_0} \bar{q}^{\wh{\ol{L}}_0}
= \frac12 (q\bar{q})^{-\frac{1}{48}}  \left( \tr_{\HH_A} +\mr{Str}_{\HH_A}+\tr_{\HH_P}+\underbrace{\mr{Str}_{\HH_P}}_{=0} \right) q^{\wh{L}_0} \bar{q}^{\wh{\ol{L}}_0}
\end{equation}
Here the splitting of the space of states into A- and P-parts is as in (\ref{l32 H non-chiral fermion}). The supertrace over $\HH_P$ vanishes, since for each eigenstate $\alpha\in \HH_P$ of conformal weight $(h,\bar{h})$ there is a second state $\wh{b}_0\alpha\in \HH_P$  with the same conformal weight but opposite parity.  In the supertrace $\mr{Str}_{\HH_P}$, The contributions of $\alpha$ and $\wh{b}_0\alpha$ cancel out. On the other hand, in $\tr_{\HH_P}$ such contributions enter with the same sign. 

From the description of $\HH_A,\HH_P$ as Verma modules over Clifford algebras $\mr{Cl}_A\otimes \ol{\mr{Cl}}_A$ and $\mr{Cl}_P\otimes \ol{\mr{Cl}}_P$ (cf. (\ref{l30 Cl_P, Cl_A})), we can write explicit formulae for the terms of (\ref{l32 Z(torus) via tr, str}):
\begin{multline}\label{l32 Z(torus) via products}
Z(\tau)=\frac12 (q\bar{q})^{-\frac{1}{48}} 
\left(
\prod_{n\geq 1}(1+q^{n-\frac12})(1+\bar{q})^{n-\frac12}+
\prod_{n\geq 1}(1-q^{n-\frac12})(1-\bar{q})^{n-\frac12}+\right. \\
+ \left.
2 (q\bar{q})^{\frac{1}{16}}\prod_{n\geq 1} (1+q^n)(1+\bar{q}^n)
\right).
\end{multline}
In the last term, the factor $2$ comes from the doubling mechanism described above (contributions of $\alpha$ and $\wh{b}_0\alpha$); the exponent $\frac{1}{16}$ is the eigenvalue of $\wh{L}_0,\wh{\ol{L}}_0$ for the highest vector $|\vac_P\rangle$.

\subsubsection{Aside: Jacobi triple product identity}
\begin{thm}[Jacobi]
For any $q,t\in \CC$ with $|q|<1$ and $t\neq 0$ one has the equality
\begin{equation}\label{l32 Jacobi triple product}
\prod_{n=1}^\infty (1-q^n)(1+t q^{n-\frac12}) (1+t^{-1}q^{n-\frac12})=\sum_{k=-\infty}^\infty t^k q^{\frac{k^2}{2}}.
\end{equation}
\end{thm}
The r.h.s. of (\ref{l32 Jacobi triple product}) is denoted
\begin{equation}
\theta_3(w;\tau),
\end{equation} 
where 
\begin{equation}
t=e^{2\pi i w},\;\; q=e^{2\pi i \tau}
\end{equation}
One also defines
\begin{equation}
\begin{gathered}
\theta_1(w;\tau)\colon= -i \theta_3(w+\frac12+\frac{\tau}{2};\tau)\cdot q^{\frac18}t^{\frac12},\\
\theta_2(w;\tau)\colon= \theta_3(w+\frac{\tau}{2};\tau)\cdot q^{\frac18}t^{\frac12},\\
\theta_4(w;\tau)\colon= \theta_3(w+\frac12;\tau). 
\end{gathered}
\end{equation}
The function $\theta_i$, $i=1,\ldots,4$ are known as Jacobi theta functions. Of importance to us are their values at $w=0$. We denote them
\begin{equation}
\theta_i(\tau)\colon= \theta_i(0;\tau),\quad i=1,\ldots 4.
\end{equation}

One has the following important special cases of the Jacobi triple product identity (\ref{l32 Jacobi triple product}):
\begin{equation}\label{l32 products via theta}
\begin{array}{*6{>{\displaystyle}l}
}
\displaystyle
t=1: & \prod_{n\geq 1}(1-q^n)(1+q^{n-\frac12})^2 &=&\sum_{k\in \ZZ}q^{\frac{k^2}{2}} &=&\theta_3(\tau),\\
\displaystyle
t=-1: & \prod_{n\geq 1}(1-q^n)(1-q^{n-\frac12})^2 &=&\sum_{k\in \ZZ}(-1)^kq^{\frac{k^2}{2}} &=&\theta_4(\tau),\\
\displaystyle
t=q^{\frac12}: & 2\prod_{n\geq 1}(1-q^n)(1+q^{n})^2 &=&\sum_{k\in \ZZ}q^{\frac{k (k+1)}{2}} &=&\theta_2(\tau)\cdot q^{-\frac18},
\end{array}
\end{equation}

\subsubsection{Back to torus partition function}
Evaluating the terms of (\ref{l32 Z(torus) via products}) using the identities (\ref{l32 products via theta}), we arrive to the following expression for the torus partition function of the free Majorana fermion:
\begin{equation}\label{l32 Z(torus) answer}
Z(\tau)=\frac{1}{2|\eta(\tau)|}\left( |\theta_3(\tau)|+|\theta_4(\tau)|+|\theta_2(\tau)| \right),
\end{equation}
where $\eta(\tau)$ is the Dedekind eta function. The function (\ref{l32 Z(torus) answer}) satisfies modular invariance:
\begin{equation}
Z(\tau+1)=Z(\tau),\quad Z(-\frac{1}{\tau})=Z(\tau).
\end{equation}
This can be shown directly, from modular transformation properties of Jacobi theta functions (which in turn are proven by Poisson summation).

\begin{remark}
One can also write the expression (\ref{l32 Z(torus) answer}) in terms of just the Dedekind eta function (without theta functions):
\begin{equation}
Z(\tau)=\frac12\left( \left|\frac{\eta(\tau)^2}{\eta(\frac{\tau}{2})\eta(2\tau)}\right|^2  + \left|\frac{\eta(\frac{\tau}{2})}{\eta(\tau)}\right|^2+2 \left|\frac{\eta(2\tau)}{\eta(\tau)}\right|^2\right)
\end{equation}
\end{remark}

\subsubsection{Path integral formalism}
In the path integral formalism, the torus partition function is given by a sum over the four spin-structures on the torus:
\begin{multline}\label{l32 path integral}
Z(\tau)=\sum_{\epsilon_1=\pm 1,\; \epsilon_2=\pm 1} \underset{
\begin{array}{c}\small
\psi(\zeta+2\pi i)=\epsilon_1 \psi(\zeta),\\
\bar\psi(\zeta+2\pi i)=\epsilon_1 \bar\psi(\zeta),\\
\psi(\zeta+2\pi i \tau)=\epsilon_2 \psi(\zeta),\\
\bar\psi(\zeta+2\pi i \tau) = \epsilon_2 \bar\psi(\zeta)
\end{array}
}{\int}
\mc{D}\psi\,\mc{D} \bar\psi \; e^{-S(\psi,\bar\psi)}\\
=
\mr{Pf}_{AA}(\bar\dd) \mr{Pf}_{AA}(\dd)+\mr{Pf}_{AP}(\bar\dd) \mr{Pf}_{AP}(\dd)+
\mr{Pf}_{PA}(\bar\dd) \mr{Pf}_{PA}(\dd)+
\mr{Pf}_{PP}(\bar\dd) \mr{Pf}_{PP}(\dd)
\\
=|{\det}_{AA}(\bar\dd)|+|{\det}_{AP}(\bar\dd)|+|{\det}_{PA}(\bar\dd)|+|\underbrace{{\det}_{PP}(\bar\dd)}_0|.
\end{multline}
This is a fermionic Gaussian integral (the quadratic action $S$ is (\ref{l30 S free fermion})), which can be expressed in terms of zeta-regularized  Pfaffians of the operators $\bar\dd \;\calt \; \Gamma(\Sigma,K^{\frac12})$, $\dd \;\calt \; \Gamma(\Sigma,\ol{K}^{\frac12})$ 
acting on spinors on the torus with chosen spin structure. E.g., subscript $AP$ means that we consider spinors antiperiodic in the ``space'' direction ($\psi(\zeta+2\pi i)=-\psi(\zeta)$) and periodic in the ``time'' direction ($\psi(\zeta+2\pi i \tau)=+ \psi(\zeta)$).  Products of complex conjugate Pfaffians in turn become determinants.   The determinant for the PP spin structure vanishes, since in that case the operator $\bar\dd$ has a zero mode given by constant spinors.

The four terms in the r.h.s. of (\ref{l32 path integral}) correspond in the operator language to the four terms in the 
r.h.s. of (\ref{l32 Z(torus) via tr, str}).

\begin{remark}
The mapping class group of the torus (the modular group) $PSL_2(\ZZ)$ acts on the spin structures on the torus. This action has two orbits: $\{PP\}$ and $\{AA,AP,PA\}$. More explicitly, in terms of the standard generators $T\colon \tau \mapsto \tau+1$, $S\colon \tau\mapsto - 1/\tau$ of $PSL_2(\ZZ)$, one has the following action on spin structures:
\begin{equation}
S\;\;\calt \;\;AA \stackrel{T}{\longleftrightarrow} AP \stackrel{S}{\longleftrightarrow} PA \;\;\car\;\; T,\qquad
S\;\; \calt\;\; PP\;\; \car\;\; T.
\end{equation}
Symbolically denoting the contributions of the four spin structures to the path integral (\ref{l32 path integral}) by $Z_{AA},Z_{AP},Z_{PA},Z_{PP}$ we have that the general modular invariant linear combination is
\begin{equation}
C_1(Z_{AA}+Z_{AP}+Z_{PA})+C_2 Z_{PP},
\end{equation}
with $C_{1,2}$ arbitrary constants. The actual partition function we are computing has $C_1=C_2=1$ (and $C_2$ is in fact irrelevant, since $Z_{PP}=0$).
\end{remark}

\subsection{Bosonization and Dirac fermion}
Bosonization is a mechanism allowing one to compute correlators of the free real (Majorana) fermion by reducing the problem to correlators in the free boson theory. This is particularly useful, since not all correlators can be computed from Wick's lemma (e.g. the correlators of the twist fields), whereas in the free boson theory all correlators are computable via Wick's lemma.

Roughly, the idea is that the system of two Majorana fermions (with $c=\frac12$ each) is equivalent to a single $c=1$ free boson.

\subsubsection{Dirac fermion}
The system of two Majorana fermions $\{\psi_a,\bar\psi_a\}_{a=1,2}$ is equivalent to a single Dirac (or ``complex,'' or ``charged'') fermion:
\begin{equation}
S^\mr{Dirac}(\psi_\pm,\bar\psi_\pm) = \frac{1}{\pi}\int_\Sigma d^2 z \left( \psi_- \bar\dd \psi_+ + \bar\psi_- \dd \bar\psi_+\right) = S^\mr{Majorana}(\psi_1,\bar\psi_1)+S^\mr{Majorana}(\psi_2,\bar\psi_2),
\end{equation}
where $S^\mr{Majorana}$ is the action (\ref{l30 S free fermion}) and the Dirac field is
\begin{equation}
\psi_\pm =\frac{\psi_1\pm i \psi_2}{\sqrt{2}},\quad 
\bar\psi_\pm =\frac{\bar\psi_1\mp i \bar\psi_2}{\sqrt{2}}.
\end{equation}
We understand that $\psi_\pm (dz)^{\frac12}$ are odd sections of $K^{\frac12}$ (left Weyl spinors) and $\bar\psi_\pm (d\bar{z})^{\frac12}$ are odd sections of $\ol{K}^{\frac12}$ (right Weyl spinors). We are assuming that the spin structures for $\psi_{1,2}$ are synchronized (thus, the field $\psi^\pm$ satisfies either periodic or antiperiodic  condition around a puncture). 

The space of states of Dirac fermion is
\begin{equation}\label{l32 H^Dirac via H^Majorana}
\HH^\mr{Dirac}=\HH_A^\mr{Majorana}\otimes \HH_A^\mr{Majorana}\oplus \HH_P^\mr{Majorana}\otimes \HH_P^\mr{Majorana},
\end{equation}
where the factors in each summand correspond to $\psi_1,\psi_2$, cf. (\ref{l32 H non-chiral fermion}).

\subsubsection{$U(1)$-current}\label{sss Dirac U(1) current}
Dirac fermion CFT contains ``Dirac $U(1)$-current''\footnote{
At the level of classical field theory, it is the Noether current associated with the $U(1)$-symmetry of the theory $\psi_\pm(z)\mapsto e^{\pm i\alpha} \psi_\pm(z)$, with $e^{i\alpha}$ a constant phase (in fact, it is also a symmetry for $\alpha$ a holomorphic function).
} -- the holomorphic $(1,0)$-field
\begin{equation}
j(z)=:\psi_+(z)\psi_-(z):=-i \psi_1(z)\psi_2(z)
\end{equation}
satisfying the OPE
\begin{equation}
j(w) j(z)\sim \frac{1}{(w-z)^2}+\mr{reg.}
\end{equation}
similar to the OPE satisfied by the field $i\dd\phi$ in the free boson theory. By Lemma \ref{l25 lemma: algebra of mode operators}, modes operators of the field $j$ satisfy the Heisenberg Lie algebra relations.  Similarly, one has a complex conjugate field $\bar{j}=:\bar\psi_+\bar\psi_-:$. 

Jointly, modes of $j$ and $\bar{j}$ endow the space of states of Dirac fermion with the structure of a 
$\mr{Heis}\oplus \ol{\mr{Heis}}$-module (again, similarly to the free boson theory).

The stress-energy tensor of the model is
\begin{equation}
T(z)=\frac12:j(z)j(z):=\frac12:\psi_+\psi_-\psi_+\psi_-: = -\frac12 :\psi_1 \dd\psi_1+\psi_2\dd\psi_2:
\end{equation}
where all fields are at $z$. 
Note that the formal substitution $j\mapsto i\dd\phi$ converts this expression into the stress-energy tensor of the free boson CFT.  
This implies that the Virasoro action on the space of states is expressed in terms of the Heisenberg action by the usual formula (\ref{l23 L_scalar via a a}), where operators $\wh{a}_n$ should be 
understood as the
mode operators of $j$.

\subsubsection{Torus partition function for Dirac fermion}
The partition function of the Dirac fermion on a torus is computed by the technique of Section \ref{ss Majorana fermion Z(torus)}:
\begin{multline}\label{l32 Z^Dirac(torus)}
Z^\mr{Dirac}(\tau)=\frac12 (\tr_{\HH_A\otimes \HH_A}+\mr{Str}_{\HH_A\otimes \HH_A}+\tr_{\HH_P\otimes \HH_P}+\underbrace{\mr{Str}_{\HH_P\otimes \HH_P}}_{0}) q^{-\frac{c}{24}+\wh{L}_0}\bar{q}^{-\frac{\bar{c}}{24}+\wh{\ol{L}}_0}=\\
=\frac{1}{2}\left(
(q\bar{q})^{-\frac{1}{24}} \prod_{n\geq 1} (1+q^{n-\frac12})^2 (1+\bar{q}^{n-\frac12})^2+
 (q\bar{q})^{-\frac{1}{24}} \prod_{n\geq 1} (1-q^{n-\frac12})^2 (1-\bar{q}^{n-\frac12})^2+\right.\\
\left. +
 4 (q\bar{q})^{\frac{1}{12}} \prod_{n\geq 1} (1+q^{n})^2 (1+\bar{q}^{n})^2
\right)\\
\underset{\mr{Jacobi\;triple\;product\;}(\ref{l32 products via theta})}{=}
\frac{1}{\eta(\tau)\eta(\bar\tau)}\cdot \frac12\sum_{(k,l)\in \ZZ^2}\Big( q^{\frac{k^2}{2}}\bar{q}^{\frac{l^2}{2}}+ 
(-1)^{k+l}q^{\frac{k^2}{2}}\bar{q}^{\frac{l^2}{2}}+\\
+
q^{\frac12 (k+\frac12)^2}\bar{q}^{\frac12 (l+\frac12)^2}+
\underbrace{(-1)^{k+l} q^{\frac12 (k+\frac12)^2}\bar{q}^{\frac12 (l+\frac12)^2}}_{=0\;\mr{as\;it\;changes\;sign\;under\;}k\ra-k-1}
\Big)\\
=\frac{1}{\eta(\tau)\eta(\bar\tau)}\sum_{(\e,\m)\in\ZZ^2} q^{\frac12 (\frac{\e}{2}+\m)^2}  \bar{q}^{\frac12 (\frac{\e}{2}-\m)^2}  
\end{multline}
In the last expression we recognize the torus partition function of the free boson with values in a circle of radius $r=2$, (\ref{l29 Z(tau) explicitly}):\footnote{
Equivalently, we could talk about the free boson on a circle of radius $r=1$: by $T$-duality (\ref{l29 Z T-duality}), $r=1$ and $r=2$ theories are equivalent.
}
\begin{equation}\label{l32 Z^Dirac=Z^boson}
Z^\mr{Dirac}(\tau)=Z^{r=2\;\mr{free\;boson}}(\tau).
\end{equation}

\subsubsection{Correspondence between Dirac fermion states and boson states}
The lattice 
\begin{equation}
\Lambda=\left\{\left(\frac{\e}{2}+\m,\frac{\e}{2}-\m\right)\right\}_{\e,\m\in \ZZ^2}\quad
\subset \RR^2
\end{equation} 
appearing in the r.h.s. of (\ref{l32 Z^Dirac(torus)}) can be split as a union of two lattices:
\begin{itemize}
\item $\Lambda_1$, with $\e$ even and any $\m$, 
\item $\Lambda_2$, with $\e$ odd and any $\m$.
\end{itemize}

One has the following refinement of the observation (\ref{l32 Z^Dirac=Z^boson}). 
\begin{thm}[Bosonization correspondence]
One has an isomorphism of $\mr{Heis}\oplus \ol{\mr{Heis}}$-modules (and, a fortiori, of
$\mr{Vir}\oplus \ol{\mr{Vir}}$-modules):
\begin{equation}\label{l32 H^Dirac=H^boson}
(\HH^\mr{Dirac})^\mr{even} \simeq \HH^{r=2\;\mr{free\;boson}}.
\end{equation}
More specifically, restricting to the summands in the r.h.s. of (\ref{l32 H^Dirac via H^Majorana}), one has isomorphisms
\begin{equation}\label{l32 H^Dirac=H^boson A,P}
\begin{gathered}
\left(\HH_A^\mr{Majorana}\otimes \HH_A^\mr{Majorana} \right)^\mr{even} \simeq \bigoplus_{(\alpha,\bar\alpha)\in \Lambda_1} V^{\mr{Heis}\oplus \ol{\mr{Heis}}}_{(\alpha,\bar\alpha)},\\
\left(\HH_P^\mr{Majorana}\otimes \HH_P^\mr{Majorana} \right)^\mr{even} \simeq \bigoplus_{(\alpha,\bar\alpha)\in \Lambda_2} V^{\mr{Heis}\oplus \ol{\mr{Heis}}}_{(\alpha,\bar\alpha)},
\end{gathered}
\end{equation}
where the terms in the r.h.s. are the Verma modules of $\mr{Heis}\oplus \ol{\mr{Heis}}$ with highest weight $(\alpha,\bar\alpha)$ (w.r.t. the operators $\wh{a}_0,\wh{\bar{a}}_0$).
\end{thm}

%
\begin{proof}[Sketch of proof]
Both sides of (\ref{l32 H^Dirac=H^boson}) can be split as a sum of Verma modules over $\mr{Heis}\oplus \ol{\mr{Heis}}$, with highest weights in the lattice $\Lambda$ for the r.h.s. of (\ref{l32 H^Dirac=H^boson}) and in some set $S\subset \RR^2$ for the l.h.s. of (\ref{l32 H^Dirac=H^boson}):
\begin{equation}\label{l32 thm proof eq1}
(\HH^\mr{Dirac})^\mr{even}\simeq\bigoplus_{(\alpha,\bar\alpha)\in S} V^{\mr{Heis}\oplus \ol{\mr{Heis}}}_{(\alpha,\bar\alpha)}.
\end{equation}
 It suffices to show that $S=\Lambda$ (since a splitting of a 
$\mr{Heis}\oplus \ol{\mr{Heis}}$-module as a sum of Verma modules is unique).  We decompose the set $S$ into two subsets $S=S_1\sqcup S_2$ according to the contributions of $A$- and $P$-sector into (\ref{l32 thm proof eq1}).

The splitting (\ref{l32 thm proof eq1}) implies that the torus partition function of the Dirac fermion is
\begin{equation}
Z^\mr{Dirac}(\tau)= \tr_{(\HH^\mr{Dirac})^\mr{even}} q^{-\frac{1}{24}+\wh{L}_0}\bar{q}^{-\frac{1}{24}+\wh{\ol{L}}_0} = \frac{1}{\eta(\tau)\eta(\bar\tau)}\sum_{(\alpha,\bar\alpha)\in S} q^{\frac{\alpha^2}{2}}\bar{q}^{\frac{\bar\alpha^2}{2}}.
\end{equation}
Comparing with (\ref{l32 Z^Dirac(torus)}), we find that the respective sets of exponents coincide
\begin{equation}\label{l32 thm proof eq2}
\{(\frac12 \alpha^2,\frac12 \bar\alpha^2)\}_{(\alpha,\bar\alpha)\in S}=\{(\frac12 \alpha^2,\frac12 \bar\alpha^2)\}_{(\alpha,\bar\alpha)\in \Lambda}.
\end{equation}

Consider the map 
\begin{equation}
\begin{array}{cccc}
f\colon & \RR^2 & \ra & \RR^2 \\
&(\alpha,\bar\alpha)&\mapsto & (\frac12\alpha^2,\frac12 \bar\alpha^2)
\end{array}
\end{equation} 
Note that $f$ is four-to-one on $\Lambda_1$ and is two-to-one on $\Lambda_2$. To infer $S=\Lambda$ from $f(S)=f(\Lambda)$,  we need to explain this quadruple/double degeneracy on the side of the Dirac fermion.

Dirac fermion theory has the following two discrete symmetries: (left/right charge conjugation):
\begin{equation}
C_L\colon   \psi_+\longleftrightarrow \psi_-
\qquad 
C_R\colon  \bar\psi_+ \longleftrightarrow \bar\psi_-
\end{equation}
More precisely, the $A$-sector of $\HH^\mr{Dirac}$ is invariant under $C_L,C_R$ separately, while the $P$-sector is invariant only under the composition $C_L C_R$.

The involutions $C_{L},C_R$ act on the $A$-part of the space of states -- and in particular on the set $S_1$ -- as $(\alpha,\bar\alpha)\leftrightarrow (-\alpha,-\bar\alpha)$ and $(\alpha,\bar\alpha)\leftrightarrow (\bar\alpha,\alpha)$. Thus, each highest weight $(\alpha,\bar\alpha)$ appears in $S_1$ as a part of the quadruplet $(\pm \alpha,\pm\bar\alpha)$. Similarly, due to the action of $C_LC_R$, in $S_2$ each highest weight appears as a part of a doublet
$(\alpha,\bar\alpha),(-\alpha,-\bar\alpha)$.
 This together with (\ref{l32 thm proof eq2}) proves that $S_1=\Lambda_1$, $S_2=\Lambda_2$.
\end{proof}

We proceed to give some examples of the correspondence (\ref{l32 H^Dirac=H^boson}) at the level of fields (rather than states).
\begin{itemize}
\item 
 Informally, the odd fields $\psi_\pm(z)$ correspond to the chiral vertex operators $:e^{\pm i \chi(z)}:$ in the free boson theory, cf. (\ref{l28 chi, chibar}), (\ref{l28 vertex operator}):
 \begin{equation}
 \psi_\pm(z) \longleftrightarrow :e^{\pm i\chi(z)}:,\quad
 \bar\psi_\pm(z) \longleftrightarrow :e^{\pm i\bar\chi(z)}:
 \end{equation}
The  right hand sides here are not in fact elements of the space of fields of the compactified free boson with $r=2$ (and the elements in the l.h.s. are odd, so this example is outside of the correspondence (\ref{l32 H^Dirac=H^boson})). However \emph{even} composites of $\psi_\pm,\bar\psi_\pm$ are mapped to legitimate fields of the $r=2$ boson theory.
 \item The $U(1)$-current of the Dirac fermion CFT is mapped to the Heisenberg current of the free boson CFT:
 \begin{equation}
 j=:\psi_+\psi_-: \longleftrightarrow i\dd\phi,\quad
 \bar{j}=:\bar\psi_+\bar\psi_-: \longleftrightarrow i\bar\dd\phi.
 \end{equation}
The stress-energy tensor is mapped to the stress-energy tensor:
\begin{equation}
\frac12 :jj: \longleftrightarrow -\frac12 :\dd\phi\,\dd\phi:,
\quad
\frac12 :\bar{j}\bar{j}: \longleftrightarrow -\frac12 :\bar\dd\phi\,\bar\dd\phi:.
\end{equation}
One also has
\begin{equation}
j \bar{j} = \psi_1\bar\psi_1\psi_2\bar\psi_2 \longleftrightarrow - \dd\phi\,\bar\dd\phi.
\end{equation}
\item Consider the pair of fields
\begin{equation}\label{l32 sigma_pm}
\sigma_\pm(z) =\frac{1}{\sqrt{2}}(\bbsigma_1(z)\bbsigma_2(z)\pm i \bbmu_1(z)\bbmu_2(z))
\end{equation}
in the Dirac fermion theory. They satisfy the OPEs 
\begin{equation}
j(w) \sigma_\pm(z)=\frac{\pm 1/2}{w-z}+\mr{reg.},\quad 
\bar{j}(w) \sigma_\pm(z)=\frac{\pm 1/2}{\bar{w}-\bar{z}}+\mr{reg.}
\end{equation}
Hence, $\sigma_\pm$ are highest vectors w.r.t. $\mr{Heis}\oplus \ol{\mr{Heis}}$ with weights $(\alpha=\pm\frac12,\bar\alpha=\pm\frac12)$ (note that these two points belong to the lattice $\Lambda_2$). In the $r=2$ free boson theory these fields correspond to particular vertex operators:
\begin{equation}\label{l32 sigma to vertex operator}
\sigma_\pm(z) \longleftrightarrow :e^{\pm \frac{i\phi(z)}{2}}:=V_{\pm \frac12}(z).
\end{equation}
\end{itemize}

\begin{figure}[H]
\begin{center}
\includegraphics[scale=1]{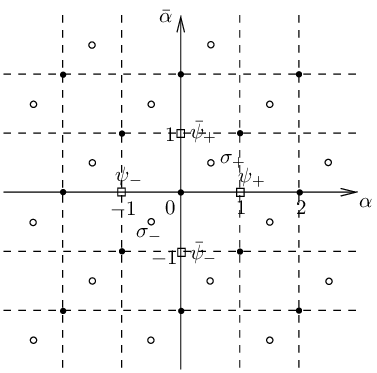}
\end{center}
\caption{Lattice $\Lambda$ of $\mr{Heis}\oplus \ol{\mr{Heis}}$ highest weights in the bosonization correspondence (\ref{l32 H^Dirac=H^boson}). Dots and circles correspond to sublattices $\Lambda_1$ and $\Lambda_2$, respectively. Four little boxes do not belong to the lattice $\Lambda$ but correspond to the fundamental (odd) fields $\psi_\pm,\bar\psi_\pm$ in the Dirac fermion CFT.}
\end{figure}

\subsubsection{Correlators of Majorana fermion theory via bosonization}
The basic idea of using the bosonization correspondence (\ref{l32 H^Dirac=H^boson}) to compute correlators in the free Majorana fermion theory is as follows. Let $\Phi(z)\in V^\mr{Majorana}_z$ be some field. We can consider product of two copies of this field (a tensor square) as a field in the Dirac fermion CFT, $\Phi_1(z)\Phi_2(z)\in V^\mr{Dirac}$, which corresponds by (\ref{l32 H^Dirac=H^boson}) to some field $X\in V^\mr{boson}$ in the free boson theory,
\begin{equation}\label{l32 Phi1 Phi2 = X}
\Phi_1(z)\Phi_2(z) \longleftrightarrow X(z).
\end{equation}
This leads to relations between correlators of the form
\begin{equation}\label{l32 bosonization for a correlator}
\left(\langle \Phi_{(1)}(z_1)\cdots \Phi_{(n)}(z_n) \rangle_\mr{Majorana}\right)^2 = \langle X_{(1)}(z_1)\cdots X_{(n)}(z_n) \rangle_\mr{boson},
\end{equation}
with $\Phi_{(i)}$ some fields in the Majorana fermion theory and $X_{(n)}$ the corresponding fields in the free boson theory.

\begin{example} For $\Phi=\psi$ the fermion field itself, the corresponding (in the sense of (\ref{l32 Phi1 Phi2 = X}) field in the boson theory  is $-\dd\phi$. The relation (\ref{l32 bosonization for a correlator}) becomes
\begin{multline}
\underbrace{\langle \psi_1(z_1)\cdots \psi_{2n}(z_{2n})  \rangle_\mr{Majorana}^2}_{=\mr{Pf\left(\frac{1}{z_i-z_j}\right)^2}} = \langle \dd\phi(z_1)\cdots \dd\phi(z_{2n}) \rangle_\mr{boson}=\\
\underset{\mr{Wick}}{=} \frac{1}{2^{2n}n!}\sum_{\pi \in S_{2n}} 
(z_{\pi(1)}-z_{\pi(2)})^{-2}\cdots (z_{\pi(2n-1)}-z_{\pi(2n)})^{-2}
\end{multline}
In this case we know the fermion correlator and the boson correlator separately from Wick's lemma and we obtain an interesting equality of rational functions on the configuration space. E.g. for $n=2$ this equality is
\begin{multline}
\left(\frac{1}{z_{12}z_{34}}-\frac{1}{z_{13}z_{24}}+\frac{1}{z_{14}z_{23}}\right)^2=\langle \psi(z_1)\cdots \psi(z_4) \rangle^2 =\\
= \langle \dd\phi(z_1)\cdots \dd\phi(z_4) \rangle=
\frac{1}{z_{12}^2z_{34}^2}+\frac{1}{z_{13}^2z_{24}^2}+\frac{1}{z_{14}^2z_{23}^2},
\end{multline}
where $z_{ij}\colon=z_i-z_j$.
\end{example}

\begin{example} For $\Phi=\bbsigma$ the twist field, the corresponding field in the boson theory is 
\begin{equation}
\frac{1}{\sqrt{2}}(V_{\frac12}(z)+V_{-\frac12}(z)),
\end{equation}
cf. (\ref{l32 sigma_pm}), (\ref{l32 sigma to vertex operator}).
Thus, the equality (\ref{l32 bosonization for a correlator}) becomes
\begin{multline}\label{l32 <sigma..sigma>^2}
\langle \bbsigma(z_1)\cdots \bbsigma(z_n) \rangle^2_\mr{Majorana}=
2^{-\frac{n}{2}} \langle (V_{\frac12}(z_1)+V_{-\frac12}(z_1))\cdots
(V_{\frac12}(z_n)+V_{-\frac12}(z_n))
 \rangle_\mr{boson}=\\
 =2^{-\frac{n}{2}}\sum_{k_1,\ldots,k_n\in \{+1,-1\},\; \mr{s.t.}\;
 k_1+\cdots+k_n=0}\;\;
 \prod_{1\leq i<j\leq n} |z_i-z_j|^{\frac{k_ik_j}{2}}.
\end{multline}
Here in the last step we used the fact that we know the correlator of a collection of vertex operators, cf. (\ref{l26 <V...V>}).

For example, for $n=2$ (\ref{l32 <sigma..sigma>^2}) yields
\begin{equation}
\langle \bbsigma(z_1)\bbsigma(z_2) \rangle_\mr{Majorana} = |z_1-z_2|^{-\frac14},
\end{equation}
cf. (\ref{l32 <sigma sigma>}). 

For $n=4$, (\ref{l32 <sigma..sigma>^2}) yields an equivalent form of the result (\ref{l32 corr of 4 sigmas}) -- but obtained from a completely different idea (bosonization vs. differential equation on the correlator arising from a null descendant).
\end{example}

Generally, bosonization allows one to determine (up to sign) any correlator in the Majorana fermion CFT via (\ref{l32 bosonization for a correlator}), reducing it to a computation by Wick's lemma of the corresponding correlator in the free boson theory.

\section{$bc$ system}\label{s bc system}
The $bc$ system (or ``reparametrization ghost system'') is a CFT classically defined on a Riemannian surface $\Sigma$ by the action functional
\begin{equation}\label{l32 S_bc}
S_{bc}=\frac{i}{2\pi}\int_\Sigma -\mathbf{b} \bar\bdd \mathbf{c}+
\bar{\mathbf{b}} \bdd \bar{\mathbf{c}}
=\frac{1}{\pi}\int_\Sigma d^2 z\, (b\bar\dd c+\bar{b} \dd\bar{c}),
\end{equation}
where the fields are a $(1,0)$-vector field and a quadratic differential
\begin{equation}
\mb{c}=c\dd_z \in \Gamma(\Sigma,\underbrace{K^{-1}}_{T^{1,0}}),
\quad \mb{b} = b (dz)^2 \in \Gamma (\Sigma, K^{\otimes 2})
\end{equation}
and their antiholomorphic counterparts
\begin{equation}
\bar{\mb{c}}=\bar{c} \dd_{\bar{z}} \in \Gamma(\Sigma, \underbrace{\ol{K}^{-1}}_{T^{0,1}}),\quad \bar{\mb{b}}=\bar{b} (d\bar{z})^2 \in \Gamma (\Sigma, \ol{K}^{\otimes 2}).
\end{equation}
Fields $b,c,\bar{b},\bar{c}$ are understood as odd (anticommuting). Since no fractional powers of $K$ appear in the definition of the fields, there is no choice of a spin structure/boundary condition involved.

It is easier to analyze the model in the path integral formalism.
One finds the 2-point function 
\begin{equation}\label{l32 <bc>}
\langle b(w) c(z) \rangle = \frac{1}{w-z}
\end{equation}
as the Green's function for the operator $\frac{1}{\pi}\bar\dd$.
Similarly, by the method of Section \ref{sss free scalar in PI formalism} one finds the OPE
\begin{equation}
b(w) c(z) \sim \frac{\mathbb{1}}{w-z}+\mr{reg.}
\end{equation}

The stress-energy tensor is
\begin{equation}
T(z)=:2\dd c(z) \,b(z)+c(z)\dd b(z):.
\end{equation}
and similarly for $\ol{T}$.
The normal ordering here means that inside a correlator Wick contractions of fields inside $:\cdots:$ are prohibited.
Using Wick's lemma as in Section \ref{sss free scalar in PI formalism}, one computes the OPEs of $b(z)$, $c(z)$, $T(z)$ with $T(w)$ or $\ol{T}(w)$ and finds that:
\begin{itemize}
\item $c$ is a primary field of conformal weight $(-1,0)$ (similarly, $\bar{c}$ is $(0,-1)$-primary),
\item $b$ is a primary field of conformal weight $(2,0)$ (similarly, $\bar{b}$ is $(0,2)$-primary),
\item one has the standard OPE of the stress-energy with itself (\ref{l23 TT}), (\ref{l23 T Tbar}), (\ref{l23 Tbar Tbar}) with central charge
\begin{equation}
c=\bar{c}=-26.
\end{equation}
\end{itemize}

\begin{remark}\label{l32 rem: bc system with j-parameter}
 One can consider a modified ghost system, with fields as above and with modified stress-energy tensor 
\begin{equation}
T(z)=:\dd c(z) b(z)+j\dd \big(c(z)b(z)\big):
\end{equation}
with $j\in \RR$ a parameter of the system (the case of reparametrization ghosts corresponds to $j=1$). Then one obtains by similar computations to the above that $c$ is $(-j,0)$-primary, $b$ is $(j+1,0)$-primary and the central charge is 
\begin{equation}
c=-12j^2-12j-2.
\end{equation}

The case $j=0$ in the terminology of \cite{DMS} is called the ``simple ghost system.'' By the formula above, this system has central charge $c=-2$.
\end{remark}

\subsection{Correlators on $\CP^1$, soaking field, ghost number anomaly}\label{sss: bc soaking field, ghost anomaly}
Note that the correlator (\ref{l32 <bc>}) seems to contradict Lemma \ref{l26 lemma 2-point} (\ref{l26 lemma 2-point (b)}): we have a nonvanishing correlator of two primary fields of \emph{different} conformal weight ($2$ and $-1$). The answer to this seeming paradox is that the field $c$ on $\CP^1$ has zero-modes: there is a 3-dimensional space of holomorphic vector fields on $\CP^1$. When we wrote the Green's function (\ref{l32 <bc>}), we implicitly imposed the condition that the vector field $c(z)\dd_z$ vanishes together with its first and second derivatives at the point $\infty\in \CP^1$. This is tantamount to inserting a certain field $s$ (``zero-mode soaking field'') of conformal weight $(h,\bar{h})=(0,0)$ at $z=\infty$. So, the correlator (\ref{l32 <bc>}) is ``secretly'' a 3-point function 
\begin{equation}\label{l32 <sbc> on CP^1}
\langle s(\infty)b(w)c(z)\rangle_{\CP^1}.
\end{equation}
From this standpoint, there is no contradiction in the fact that the correlator is nonzero. For three arbitrary points on $\CP^1$, the correlator (\ref{l32 <sbc> on CP^1}) becomes a M\"obius-invariant expression
\begin{equation}
\langle b(z_1)(dz_1)^2\; c(z_2)\dd_{z_2} \; s(z_3) \rangle = \nu_{12} \nu_{13}^3 \nu_{23}^{-3},
\end{equation}
where
\begin{equation}
\nu_{ij}\colon=\frac{d^{\frac12}z_i d^{\frac12}z_j}{z_i-z_j}
\end{equation}
is (the square root of) the Szeg\"o kernel (\ref{l27 mu}). The soaking field $s$ can be written as 
\begin{equation}\label{l32 soaking operator}
s= \frac14( c\,\dd c\,\dd^2 c)(\bar{c}\, \bar\dd \bar{c} \,\bar{\dd}^2\bar{c}).
\end{equation} 
We refer to \cite[Section 10]{Witten_superstring} and \cite[Section 2.4]{LMY2} for details on soaking fields.

The presence of zero-modes also means that for instance one has 
\begin{equation}
\langle 1\rangle_{\CP^1}=\langle \vac |\vac\rangle=0
\end{equation} 
(which means that the theory does not satisfy the usual BPZ axiomatics). On the other hand, 
\begin{equation}\label{l32 <s>}
\langle s(\infty) \rangle_{\CP^1}=\langle s|\vac\rangle=1.
\end{equation}

One can assign the ``left ghost number'' $+1$ to the field $c$ and $-1$ to $b$ and likewise ``right ghost number'' $+1$ to $\bar{c}$ and $-1$ to $\bar{b}$. Then
for a correlator on $\CP^1$ of some collection of differential monomials inserted at points $z_1,\ldots,z_n\in \CP^1$ to be possibly nonzero, one needs the following selection rule to hold:
the total left ghost number and the total right ghost number (of the entire expression under the correlator) should both be $+3$: 
\begin{equation}\label{l32 ghost anomaly}
\#c-\#b=3,\quad \#\bar{c}-\#\bar{b}=3
\end{equation}
This 
phenomenon is known as the
``ghost number anomaly.'' For example, one has 
\begin{equation}\label{l32 <ccc>}
\langle c(z_1) c(z_2) c(z_3) \rangle_{\CP^1}^{\mr{chiral}} = z_{12}z_{13}z_{23}
\end{equation}
Here for brevity we wrote the correlator in the chiral $bc$ system (ignoring the fields $\bar{b},\bar{c}$). Taking the limit $\lim_{z_2\ra z_1}\frac{1}{z_{12}}(\cdots)$ in (\ref{l32 <ccc>}), replacing $c(z_2)$ with its Taylor expansion around $z_1$, we have
\begin{equation}
\langle (c\,\dd c)(z_1) c(z_3) \rangle_{\CP^1}^{\mr{chiral}}=-z_{13}^2.
\end{equation}
Taking here the limit $\lim_{z_3\ra z_1}\frac{1}{z_{13}^2}\cdots$, replacing $c(z_3)$ with its expansion around $z_1$, we obtain 
\begin{equation}
\langle (\frac{-1}{2} c\,\dd c\,\dd^2 c)(z_1) \rangle_{\CP^1}^{\mr{chiral}}=1,
\end{equation} 
which is the chiral counterpart of (\ref{l32 <s>}).

For a surface $\Sigma$ of genus $g$, the ghost number anomaly (\ref{l32 ghost anomaly}) is given by Riemann-Roch theorem, as the dimension of the space of holomorphic vector fields minus the dimension of the space of holomorphic quadratic differentials:
\begin{equation}
\dim H^0_{\bar\dd}(\Sigma, K^{-1})-\dim H^0_{\bar\dd}(\Sigma,K^{\otimes 2}) = 3-3g.
\end{equation}

\subsection{Operator formalism for the $bc$ system}
One can develop the canonical quantization picture for the $bc$ system, similarly to how we did it for the other free field models before.
Then one obtains the Heisenberg fields on $\CC\backslash\{0\}$,
\begin{equation}
\wh{c}(z)=\sum_{n\in \ZZ} \wh{c}_n z^{-n+1}, \quad 
\wh{b}(z)=\sum_{n\in \ZZ} \wh{b}_n z^{-n-2}
\end{equation}
with operators $\wh{c}_n,\wh{b}_n$ subject to the anticommutation relations
\begin{equation}\label{l32 bc anticomm rel}
[\wh{b}_n,\wh{c}_m]_+= \delta_{n,-m} \wh{\mathbb{1}},\quad
[\wh{b}_n,\wh{b}_m]_+=0,\quad 
[\wh{c}_n,\wh{c}_m]_+=0.
\end{equation}
One has similar mode expansions and anticommutation relations for $\bar{b},\bar{c}$.  Here the 
the splitting of the mode operators into creation and annihilation operators is as follows:
\begin{equation}
\underbrace{\ldots,\wh{c}_{-1},\wh{c}_{0},\wh{c}_{1}}_{\mr{creation}},\underbrace{\wh{c}_{2},\wh{c}_{3},\ldots}_{\mr{annihilation}},\qquad 
\underbrace{\ldots,\wh{b}_{-3},\wh{b}_{-2}}_{\mr{creation}},\underbrace{\wh{b}_{-1},\wh{b}_{0},\wh{b}_1,\ldots}_{\mr{annihilation}}
\end{equation}
and similarly for $\wh{\bar{b}}_n,\wh{\bar{c}}_n$.\footnote{
This nontrivial splitting of modes into creation and annihilation operators is forced by the field-state correspondence: one wants limits $\lim_{z\ra 0}\wh{\Phi}(z)|\vac\rangle$ to be well-defined and nonzero for $\Phi=b,c,\bar{b},\bar{c}$.
} The vacuum vector $|\vac\rangle$ is killed by annihilation operators, while creation operators produce nonzero vectors out of $|\vac\rangle$. The hermitian conjugates are $(\wh{b}_n)^+=\wh{b}_{-n}$, $(\wh{c}_n)^+=\wh{c}_{-n}$. The special vector $|s\rangle$ corresponding to the soaking field (\ref{l32 soaking operator}) is
\begin{equation}
|s\rangle = \wh{c}_{-1}\wh{c}_0\wh{c}_1 \wh{\bar{c}}_{-1}\wh{\bar{c}}_{0} \wh{\bar{c}}_{0} |\vac\rangle.
\end{equation}

The space of states $\HH$ is generated freely by acting on the vector $|\vac\rangle$ repeatedly with creation operators (i.e., $\HH$ is a Verma module for the Clifford algebra (\ref{l32 bc anticomm rel}), tensored with the conjugate one).

Reproducing the 2-point correlation function (\ref{l32 <bc>}) in the language of operator quantization, we have (assuming $|w|>|z|$ for simplicity):
\begin{equation}\label{l32 <bc> operator computation}
\langle b(w) c(z) \rangle = \langle s | \wh{b}(w) \wh{c}(z) |\vac\rangle = \sum_{m,n\in\ZZ} \langle s|\wh{b}_n \wh{c}_m |\vac\rangle w^{-n-2} z^{-m+1}.
\end{equation}
Here we notice that the expression $\langle s| \wh{b}_n \wh{c}_m|\vac\rangle$ has the following properties:
\begin{itemize}
\item Vanishes for $\wh{c}_m$ an annihilation operator (since then $\wh{c}_n|\vac\rangle =0$), i.e., for $m\geq 2$.
\item Vanishes for $n\neq -m$ and $\wh{b}_n$ an annihilation operator (since  $\wh{b}_n$ commutes past $\wh{c}_m$ and acts on $|\vac\rangle$), i.e., for $n\neq -m$, $n\geq -1$.
\item Vanishes for $\wh{b}_n$ a creation operator (in this case $\langle s|\wh{b}_n=0$), i.e., for $n\leq -2$.
\item Vanishes for $\wh{c}_m$ a creation operator if $n\neq -m$ (in this case, $\wh{c}_m$ commutes to the left past $\wh{b}_n$ and annihilates $\langle h|$), i.e., for $n\neq -m$, $m\leq 1$.
\end{itemize}
Thus, the only surviving terms in (\ref{l32 <bc> operator computation}) are $n=-m\geq -1$, i.e., we have
\begin{multline}
\langle b(w) c(z) \rangle = \sum_{n\geq -1} \langle s|\underbrace{\wh{b}_n \wh{c}_{-n}}_{\wh{\mathbb{1}}-\wh{c}_{-n}\wh{b}_n}|\vac\rangle w^{-n-2}z^{n+1} = \sum_{n\geq -1}w^{-n-2}z^{n+1}=\\
=\frac{1}{w}\left(1+\frac{z}{w}+\left(\frac{z}{w}\right)^2+\cdots\right)=\frac{1}{w-z}
\end{multline}

\section{Bosonic string}

We start with outlining the heuristic idea of bosonic string theory. One wants to integrate over maps $\phi$ of a smooth surface $\Sigma$ (worldsheet) to the target $\RR^D$ (for some dimension $D\geq 1$):
\begin{equation}\label{l32 Z_string}
Z_\mr{string}(\Sigma,\RR^D)=\int_{\mr{Met}(\Sigma)}\mc{D}g \int_{\mr{Map}(\Sigma,\RR^D)}
\mc{D}\phi\,
e^{-S_\mr{Polyakov}(g,\phi)}
\end{equation}
where
\begin{equation}
S_\mr{Polyakov}(g,\phi)=\sum_{k=1}^D\frac12 \int_\Sigma \dvol_g d\phi^k\wedge *d\phi^k
\end{equation}
is the action for $D$ non-interacting free bosons $\phi^1,\ldots,\phi^D$ on $\Sigma$; the action depends on a choice of Riemannian metric $g$ on the surface, and this choice is averaged over in  (\ref{l32 Z_string}). The integrand in (\ref{l32 Z_string}) is invariant under diffeomorphisms of $\Sigma$, and one wants to switch to integration over the quotient $\mr{Met}(\Sigma)\times \mr{Map}(\Sigma,\RR^D)/\mr{Diff}(\Sigma)$.\footnote{Heuristically, transitioning to integration over the quotient rescales the result by an ``infinite constant'' -- the volume of $\mr{Diff}(\Sigma)$.} 
Next, one writes the metric  as 
\begin{equation}
g=e^{2\sigma}g_0
\end{equation} 
where $g_0$ is the canonical ``uniformization'' metric of constant scalar curvature $K\in \{0,\pm1\}$ representing the conformal class of $g$ -- the metric arising from uniformization theorem; 
$\Omega=e^{2\sigma}$ with $\sigma\in C^\infty(\Sigma)$ is the Weyl factor, transforming $g_0$ into $g$; one calls $\sigma$ the Liouville field. With this in mind, the path integral (\ref{l32 Z_string}) becomes the integral over
\begin{multline}
\{\mr{conformal\;structures\;on\;}\Sigma\}\times \{\mr{Weyl\;factors\;}\Omega=e^{2\sigma}\}\times \mr{Map}(\Sigma,\RR^D) /\mr{Diff}(\Sigma)\simeq\\
\simeq 
\frac{\{\mr{conformal\;structures\;on\;}\Sigma\}}{\mr{Diff}(\Sigma)}\times \{\mr{Weyl\;factors\;}\Omega=e^{2\sigma}\}\times \mr{Map}(\Sigma,\RR^D) /\mr{Diff}(\Sigma)
\end{multline}
where in the first factor on the r.h.s. we recognize the moduli space of conformal structures $\MM_\Sigma$. For the integral over the quotient by diffeomorphisms, one employs the Faddeev-Popov gauge-fixing mechanism, which results in the path integral
\begin{equation}
\int_{\MM_\Sigma}\mc{D}\xi \int_{C^\infty(\Sigma)}\mc{D}\sigma \int_{\Pi \mathfrak{X}(\Sigma)\times \Pi \Gamma(\Sigma,K^{\otimes 2}\oplus \ol{K}^{\otimes 2})} \mc{D}c\mc{D}\bar{c}\mc{D}b\mc{D}\bar{b} e^{-S_{bc} 
}\;
\int_{\mr{Map}(\Sigma,\RR^D)} e^{-S_\mr{Polyakov}}
\end{equation}
where the auxiliary fields $c\dd_z+\bar{c}\dd_{\bar{z}}$ (an odd vector field) and $b(dz)^2+\bar{b}(d\bar{z})^2$ (an odd quadratic differential) appear as Faddeev-Popov ghosts corresponding to the quotient by diffeomorphisms (or ``reparametrizations,'' hence the name ``reparametrization ghosts''); the action $S_{bc}$ is as in (\ref{l32 S_bc}). The Gaussian integral over ghosts is an integral representation of a Jacobian, canceling the dependence of the integral over a section of the quotient $\{\mr{conf.\;structures}\}/\mr{Diff}(\Sigma)$ on the choice of the section.

Exploiting the result (\ref{l2 change of Z under Weyl}), we have that the bosonic string path integral is
\begin{equation}
\int_{\MM_\Sigma}\mc{D}\xi \int_{C^\infty(\Sigma)} \mc{D}\sigma\; Z_\mr{CFT}\left(
\begin{array}{c}
D\mr{\; free\; bosons} \\
+ bc\mr{\;system}
\end{array}, \xi
\right) e^{ic S_\mr{Liouville}(\sigma)}
\end{equation}
where 
\begin{equation}
c=D-26
\end{equation} 
is the central charge of the CFT comprised of $D$ free bosons and a single $bc$ system. The case $D=26$ is special and corresponds to the so-called ``critical'' bosonic string -- in this case the central charge vanishes and the integrand is independent of the Liouville field $\sigma$.

In summary, bosonic string is the conformal field theory 
comprised of $D$ free bosons and a $bc$ system, 
with  classical action
\begin{equation}\label{l32 string CFT}
S_\mr{string}=\frac{1}{\pi}\int_\Sigma d^2 z \Big(\underbrace{\sum_{k=1}^D \frac12\dd\phi^k \bar\dd\phi^k}_{D\mr{\; free\;bosons}} + \underbrace{b\bar\dd c+\bar{b}\dd\bar{c}}_{bc\;\mr{system}}\Big)
\end{equation}
where to get the full string path integral one needs to integrate the CFT partition function (or correlator) over the moduli space $\mc{M}_\Sigma$ (and if $D\neq 26$, also factor in the Liouville path integral).\footnote{
In a jargon, one couples the CFT (\ref{l32 string CFT}) on $\Sigma$ with ``2d gravity on $\Sigma$.''
}

\marginpar{Lecture 33,\\ 11/14/2022}

\subsection{The BRST differential $Q$ in bosonic string}
Fix $D=26$. Consider the fields 
\begin{equation}
\begin{gathered}\label{l33 J bosonic string}
J=\;:c T_\mr{bosons}+\frac12 c T_{bc}+\frac32 \dd^2 c:\; =\; :\sum_{k=1}^D-\frac12 c \dd\phi^k \dd\phi^k + c\, \dd c\, b+\frac32 \dd^2 c:,\\
\bar{J}=\;:\bar{c} \ol{T}_\mr{bosons}+\frac12 \bar{c} \ol{T}_{bc}+\frac32 \bar\dd^2 \bar{c}:\;=\; :\sum_{k=1}^D-\frac12 \bar{c}\bar\dd\phi^k \bar\dd\phi^k + \bar{c}\, \bar\dd \bar{c}\, \bar{b}+\frac32 \bar\dd^2 \bar{c}:
\end{gathered}
\end{equation}
They satisfy the following properties.
\begin{itemize}
\item $J$ is a holomorphic $(1,0)$-primary field, $\bar{J}$ is an antiholomorphic $(0,1)$-primary field.
\item The  
OPE $J(w) J(z)$ does not contain a first-order pole\footnote{
\label{l33 footnote JJ}
This property relies on $D=26$. 
More explicitly, if one defines $J_\alpha=:cT_\mr{bosons}+\frac12 cT_{bc}+\alpha \dd^2 c:$, then one has the OPE 
$$J_\alpha(w)J_\alpha(z)\sim \frac{(3-\frac{D}{2}+4\alpha)\,c\dd c}{(w-z)^3}+\frac{(\frac32-\frac{D}{4}+2\alpha)\, c\dd^2 c}{(w-z)^2}+ \frac{(\frac23-\frac{D}{12}+\alpha)c\dd^3 c+(-\frac32+\alpha) \dd c\dd^2 c}{w-z}+\mr{reg.}$$
Here all fields on the right are at the point $z$.
In particular, for $D=26$ and $\alpha=\frac32$ one has
$J(w)J(z)\sim -\frac{4c\dd c}{(w-z)^3}-\frac{2c \dd^2 c}{(w-z)^2}+\mr{reg.}$
} (but contains second and third-order poles) and similarly for $\bar{J}(w)\bar{J}(z)$. The mixed OPE $J(w)\bar{J}(z)$ is regular.
\item One can introduce an operator $Q\colon V_z\ra V_z$ given by
\begin{equation}\label{l33 JJ OPE}
Q\colon \Phi(z)\mapsto \frac{1}{2\pi i}\oint_{\gamma_z} (dw J(w)+d\bar{w} \bar{J}(w)) \Phi(z),
\end{equation}
with $\gamma_z$ a contour around $z$.
This operator satisfies 
\begin{equation}
Q^2=0,
\end{equation}
as a consequence of (\ref{l33 JJ OPE}) (proven by the contour integration technique of Section \ref{sss: Virasoro from TT OPE}). One can equip $V$ with $\ZZ$-grading by \emph{(total) ghost number}, by prescribing the ghost numbers to elementary fields as follows:

\vspace{0.2cm}
\begin{center}
\begin{tabular}{c|ccccc}
field & $c$ & $b$ & $\bar{c}$ & $\bar{b}$ & $\phi^k$ \\ \hline
ghost number &$1$ & $-1$ & $1$ & $-1$ & $0$
\end{tabular}
\end{center}
\vspace{0.2cm}

-- This is the sum of the left and right ghost numbers of Section \ref{sss: bc soaking field, ghost anomaly}. According to this $\ZZ$-grading, $V$ is a cochain complex, with differential $Q$ (known as the ``BRST operator''), increasing the ghost number by $+1$.

\item The stress-energy tensor satisfies
\begin{equation}
T=Q(b),\quad \ol{T}=Q(\bar{b})
\end{equation}
-- the stress-energy tensor is $Q$-exact.
\end{itemize}

\begin{remark}
If one omits the $\frac32 \dd^2 c$ term in $J$ and likewise in $\bar{J}$ then the residue  of the first-order pole in $JJ$ OPE will be nonzero, but it will be exact, so the operator $Q$ would not change. Also, with this modification $J,\bar{J}$ would not be primary.
\end{remark}

\begin{remark} Fields $J,\bar{J}$ are also $Q$-exact:
\begin{equation}\label{l33 J=Q(bc)}
J=Q(:bc:),\quad \bar{J}=Q(:\bar{b}\bar{c}:).
\end{equation}
We also note that in the computation of the r.h.s. it is the double Wick contractions that result in $\frac32 \dd^2 c$ term in $J$ in the l.h.s.; in this sense, the term $\frac32 \dd^2 c$ should be regarded as a quantum (``1-loop'' in the language of Feynman diagrams) correction to $J$.\footnote{In a bit more detail: in the classical field theory defined by the action functional (\ref{l32 string CFT}) one has an odd symmetry $Q_{cl}\in \mathfrak{X}(\F_\Sigma)$ acting on the space of classical fields and squaring to zero. For this symmetry one has an associated Noether current $dz J_{cl}+d\bar{z} \bar{J}_{cl}$, where $J_{cl},\bar{J}_{cl}$ are given by the formulae (\ref{l33 J bosonic string}) without the $\frac32\dd^2 c$, $\frac{3}{2}\bar\dd^2\bar{c}$ terms and without normal ordering. Thus, the quantum fields (\ref{l33 J bosonic string}) are the ``naive'' quantization of $J_{cl},\bar{J}_{cl}$ (replacing a differential polynomial in free classical fields by a normally ordered expression), plus a ``quantum correction'' $\frac32 \dd^2 c$, $\frac{3}{2}\bar\dd^2\bar{c}$.
}
\end{remark}

\section{Topological conformal field theories}
\label{ss TCFT}

\begin{definition}
A CFT is called \emph{topological} (or TCFT\footnote{We refer the reader to the introductory part of \cite{Donald_PhD} for an introduction to topological conformal field theories.}) is
the space of fields $V$ (and the space of states $\HH$) is endowed with the structure a cochain complex with differential $Q$ of degree $+1$,
such that the stress-energy tensor is $Q$-exact, 
\begin{equation}\label{l33 T=Q(G)}
T=Q(G),\quad \ol{T}=Q(\ol{G}),
\end{equation}
with $G,\ol{G}$ some fields of cohomological degree $-1$, such that
\begin{enumerate}[(a)]
\item One has regular OPEs 
\begin{equation}\label{l33 GG OPE}
G(w)G(z),\quad G(w)\ol{G}(z),\quad \ol{G}(w) \ol{G}(z).
\end{equation}
\item \label{l33 def (b)}
$G$ is a holomorphic $(2,0)$-primary field, $\ol{G}$ is an antiholomorphic $(0,2)$-primary field.
\item There exist fields $J,\bar{J}\in V$ of degree $+1$ and conformal weights $(1,0)$ for $J$ and $(0,1)$ for $\bar{J}$, such that:
\begin{itemize}
\item The 1-form-valued field 
\begin{equation}
\mathbb{J}(z)=dz\,J+d\bar{z}\,\bar{J}\in V_z\otimes T^*_z \Sigma
\end{equation} 
is $d$-closed under the correlator, or equivalently 
\begin{equation}\label{l33 dbar J - d barJ=0}
\bar\dd J-\dd \bar{J}=0.
\end{equation}
\item The differential $Q$ is given by
\begin{equation}\label{l33 Q = action of J}
Q\Phi(z) = \frac{1}{2\pi i}\oint_{\gamma_z} \mathbb{J}(w) \Phi(z).
\end{equation}
\item 
$\mathbb{J}$ satisfies
\begin{equation}\label{l33 oint JJ}
\oint_{\gamma_z} \mathbb{J}(w) \mathbb{J}(z)=0.
\end{equation}
This property implies $Q^2=0$.
\end{itemize}
\item The field $\mathbb{1}$ is not $Q$-exact (note that it is automatically $Q$-closed).
\end{enumerate}
\end{definition}

In particular, bosonic string with $D=26$ is an example of a TCFT.

\begin{remark}\label{l33 rem: chirally split TCFTs}
In some TCFTs a stronger version of property (\ref{l33 dbar J - d barJ=0}) holds: 
$\bar\dd J$ and $\dd \bar{J}$ vanish separately. This means that $Q$ splits into two commuting differentials $Q=Q_L+Q_R$ which square to zero separately. We will call such TCFTs ``chirally split.'' 
 This extra symmetry is present, e.g., in bosonic string theory and in 
 A-model (Section \ref{s A-model}), but fails in some other examples, see e.g. \cite{LMY2}. More generally, all so-called twisted $\mc{N}=(2,2)$ supersymmetric CFTs are chirally split -- the A-model belongs to this class, cf. Section \ref{ss SUSY sigma model}. The converse is not true, e.g., bosonic string is not a twisted supersymmetric theory.\footnote{
Twisted supersymmetric theories have an extra symmetry between $J$ and $G$ -- the so-called R-symmetry. Also, in a twisted supersymmetric theory, $J(w)J(z)$ OPE is purely regular (unlike in the case of bosonic string, see footnote \ref{l33 footnote JJ}).
 }.
\end{remark}

One can introduce mode operators for $G,\ol{G}$, defined similarly to (\ref{l25 L_n loc}):
\begin{equation}\label{l33 G modes}
G_n\Phi(z)=\frac{1}{2\pi i}\oint_{\gamma_z} dw\,(w-z)^{n+1} G(w) \Phi(z),\quad 
\ol{G}_n\Phi(z)=\frac{1}{2\pi i}\oint_{\gamma_z} d\bar{w} (\bar{w}-\bar{z})^{n+1} \ol{G}(w) \Phi(z)
\end{equation} 
Then the property (\ref{l33 T=Q(G)}) implies that one has\footnote{
We write $[A,B]=AB-(-1)^{|A|\cdot |B|} BA$ for the supercommutator of two operators $A,B$. It is the usual commutator if either $A$ or $B$ (or both) are even and it is the anticommutator if $A$ and $B$ are odd. 
}
\begin{equation}
L_n=[Q,G_n],\quad \ol{L}_n= [Q,\ol{G}_n]
\end{equation}
for $n\in\ZZ$,
i.e., Virasoro generators are $Q$-exact. In turn this implies that the central charge of the CFT must vanish (because the coefficient of the fourth-order pole in $TT$ OPE must be $Q$-exact; since it is proportional to identity, it must vanish\footnote{
An equivalent argument: the commutator $[L_n,L_m]=[Q,[G_n,[Q,G_m]]]$ has the form $[Q,-]$, so it cannot contain a nonzero term proportional to identity/central element (with is not of the form $[Q,-]$).
}):
\begin{equation}\label{l33 c=0}
c=\bar{c}=0.
\end{equation}
Property (\ref{l33 GG OPE}) implies
\begin{equation}
[G_n,G_m]=0,\quad [G_n,\ol{G}_m]=0, \quad [\ol{G}_n,\ol{G}_m]=0.
\end{equation}

From the OPEs between $T, \ol{T}$ and $G,\ol{G}$, which are encoded in the axiom (\ref{l33 def (b)}) above: 
\begin{equation}\label{l33 TG OPE}
\begin{gathered}
T(w) G(z)\sim \frac{2G(z)}{(w-z)^2}+\frac{\dd G(z)}{w-z}+\mr{reg.},\\
\ol{T}(w) \ol{G}(z)\sim \frac{2\ol{G}(z)}{(\bar{w}-\bar{z})^2}+\frac{\bar\dd \,\ol{G}(z)}{\bar{w}-\bar{z}}+\mr{reg.},\\
T(w) \ol{G}(z)\sim \mr{reg.}, \; \ol{T}(w) G(z)\sim \mr{reg.},
\end{gathered}
\end{equation}
one has
 the commutation relations
\begin{equation}\label{l33 [L,G]}
[L_n, G_m]=(n-m)G_{n+m}, \quad [\ol{L}_n, \ol{G}_m]=(n-m)\ol{G}_{n+m},\quad [L_n, \ol{G}_m] = [\ol{L}_n, G_m] =0.
\end{equation}

\begin{lemma}\label{l33 lemma: corr of Q-closed are constant}
In a TCFT, assume that $\Phi_1,\ldots,\Phi_n\in V$ are $Q$-closed elements. Then: 
\begin{enumerate}[(i)]
\item \label{l33 lemma (i)} The correlator on $\CP^1$
\begin{equation}\label{l33 corr of Q-closed}
\langle \Phi_1(z_1)\cdots \Phi_n(z_n)\rangle
\end{equation}
is a constant function on the configuration space $C_n(\CP^1)$.
\item \label{l33 lemma (ii)} For any $\Psi\in V$, one has
\begin{equation}
\langle Q\Psi(z_0)\;\Phi_1(z_1)\cdots \Phi_n(z_n)\rangle = 0
\end{equation}
-- the correlator of a $Q$-exact field with several $Q$-closed fields vanishes.
\end{enumerate}
\end{lemma}

\begin{proof}
For, (\ref{l33 lemma (i)}) consider the derivative of the correlator (\ref{l33 corr of Q-closed}) in $z_i$, $i=1,\ldots,n$. We have
\begin{multline}\label{l33 lemma computation}
\dd_{z_i} \langle \Phi_1(z_1)\cdots \Phi_n(z_n)\rangle = \langle \Phi_1(z_1)\cdots \underbrace{L_{-1}\Phi_i}_{Q G_{-1} \Phi_i}(z_i)\cdots \Phi_n(z_n)\rangle=\\
= \pm\frac{1}{2\pi i}\oint_{\gamma_{z_i}} \langle \mathbb{J}(w) \Phi_1(z_1) \cdots G_{-1}\Phi_i(z_i)\cdots \Phi_n(z_n) \rangle
\end{multline}
where $\gamma_{z_i}$ is a contour going around $z_i$ and not enclosing any other $z_j$'s. One then deforms $\gamma_{z_i}$ into a collection of contours going around $z_j$'s for $j\neq i$:  $\gamma_{z_i}\sim \sqcup_{i\neq j} -\gamma_{z_j}$ (cf. Section \ref{ss: Ward identity via contour integration}). Thus, one has 
\begin{equation}
\dd_{z_i} \langle \Phi_1(z_1)\cdots \Phi_n(z_n)\rangle  = \sum_{j\neq i} \langle \Phi_1(z_1)\cdots \underbrace{Q\Phi_j}_{=0}(z_j)\cdots G_{-1}\Phi_i \cdots \Phi_n(z_n) \rangle =0.
\end{equation}
So, we obtain that all holomorphic derivatives of the correlator vanish; by a similar argument, the antiholomorphic derivatives vanish too. Hence, the correlator is constant.

The proof of (\ref{l33 lemma (ii)}) is similar: one represents $Q$ acting on $\Psi$ by a contour integral around $z_0$ and then deform the contour to a collection of contours going around $z_j$, $j\neq 0$; those give correlators containing $Q\Phi_j=0$.
\end{proof}

\begin{remark} \label{l33 rem: higher genus}
The statement and proof of Lemma \ref{l33 lemma: corr of Q-closed are constant} actually extends to correlators on Riemannian surfaces $\Sigma$ of any genus $g$, 
since on any $\Sigma$ the 1-cycle $\gamma_{z_i}$ is \emph{homologous} (though not homotopic for $g>0$) to $\sqcup_{i\neq j} -\gamma_{z_j}$. Since one has $d\mathbb{J}=0$ (under a correlator, away from the punctures $z_j$), this homology statement is sufficient to justify the switch of contours in (\ref{l33 lemma computation}).
\end{remark}

\begin{lemma}
If a field $\Phi\in V$ is $Q$-closed and has conformal weight $h\neq 0$, then $\Phi$ is $Q$-exact.
\end{lemma}
\begin{proof}
Since $\Phi$ has conformal weight $h$, we have
\begin{equation}
h\Phi = L_0\Phi  = (QG_0+\cancel{G_0 Q}) \Phi = QG_0\Phi.
\end{equation}
Thus, we have
\begin{equation}\label{l33 homotopy}
\Phi = Q(\frac{1}{h}G_0 \Phi).
\end{equation}
\end{proof}
Similarly, one shows that if a $Q$-closed field has $\bar{h}\neq 0$, then it is $Q$-exact. Therefore, a nontrivial $Q$-cocycle (homogeneous w.r.t. grading by conformal weight) must have $(h,\bar{h})=0$. By ``nontrivial'' we mean ``not $Q$-exact'' or equivalently defining a nonzero element in the cohomology of $Q$, $H_Q(V)$.

\begin{example}
Here is an example of a $Q$-cocycle in bosonic string: fix a unit ``momentum'' vector  $p\in (\RR^D)^*$, $||p||=1$. Then the field
\begin{equation}\label{l33 tachyon}
\Phi= :c\bar{c}\; e^{i\sqrt{2} \sum_{k=1}^D p_k \phi^k}:
\end{equation}
is a nontrivial $Q$-cocycle. (In string theory, in Lorentzian signature on $\RR^D$ it is called the ``tachyon field.'')
Note that the condition $||p||=1$ guarantees that $\Phi$ has confromal weight $(0,0)$.
\end{example}

\begin{example} In any TCFT one has $QJ =Q\bar{J}=0$, as a consequence of (\ref{l33 oint JJ}). Thus, using homotopy (\ref{l33 homotopy}) one has
\begin{equation}
J= Q(G_0(J)), \quad \bar{J}=Q(\ol{G}_0(\bar{J})),
\end{equation}
i.e., fields $J,\bar{J}$ are always $Q$-exact. This generalizes the observation (\ref{l33 J=Q(bc)}) in bosonic string theory.
\end{example}

\begin{remark}[1d version: topological quantum mechanics]
Two-dimensional TCFTs have a one-dimensional analog: topological quantum mechanics (TQM).  Topological quantum mechanics is defined by a $\ZZ$-graded vector space $\HH$ (the space of states of a point) equipped with a differential $Q$ of degree $+1$ and a second differential $G$ of degree $-1$ (both differentials are assumed to square to zero). The Hamiltonian is defined as the anticommutator 
\begin{equation}
H=[Q,G].
\end{equation}

An example of this structure is: $\HH=\Omega^\bt(X), Q=d_X$ -- the de Rham complex of a target manifold $X$. For $G$ one can choose:
\begin{enumerate}[(a)]
\item \label{l33 TQM (a)} The Hodge-de Rham codifferential $G=d^* = \pm *d*$  (assuming that $X$ is Riemannian). In this case, the Hamiltonian $H=\Delta$ is the Laplace-Beltrami operator on $X$.
\item  \label{l33 TQM (b)} Contraction with a vector a field $v\in \X(X)$, $G=\iota_X$. In this case, the Hamiltonian $H=\LL_v$ is the Lie derivative and the quantum-mechanical evolution operator $U(t)=e^{-t \LL_v}$ is the flow along $v$ on $X$ in a given time.
\item    \label{l33 TQM (c)} One can interpolate between cases (\ref{l33 TQM (a)}) and (\ref{l33 TQM (b)})  for $v=-\mr{grad}(f)$  the gradient vector field of a Morse function $f\in C^\infty(X)$  by setting
\begin{equation}
G=  e^{-\frac{f}{\epsilon}} d^* e^{\frac{f}{\epsilon}}
\end{equation}
with $\epsilon$ an interpolation parameter. This is the setting of  the seminal example of TQM \cite{Witten_Morse}.
\end{enumerate}
Topological quantum mechanics is also known as $\mc{N}=2$ supersymmetric quantum mechanics.\footnote{A small caveat: in $\mc{N}=2$ supersymmetric quantum mechanics, one only requires a $\ZZ_2$-grading on $\HH$ instead of $\ZZ$-grading. Then one just requires that $Q,G$ are odd operators, and hence $H$ is an even operator.}
 We refer to \cite{Losev_TQM} for details on topological quantum mechanics. 
\end{remark}

\subsection{Witten's descent equation}

Witten's descent equation is a sequence of equations on a tower of $p$-form valued fields $\Phi^{(0)}$, $\Phi^{(1)}$, $\Phi^{(2)}$, where\footnote{Recall that we already encountered a situation where it is convenient to consider form-valued observables, -- transformation of primary fields and Ward identity for primary fields, cf. (\ref{l25 Phi underline}), (\ref{l26 corr of primary fields as a section of a line bundle}).} 
\begin{equation}
\Phi^{(p)}(z)\in V_z^{(p)}=V_z\otimes \wedge^p T^*_z \Sigma
\end{equation}
(we denoted $V_z^{(p)}$ the space of $p$-form-valued fields at $z$).
Descent equation reads
\begin{equation}\label{l33 descent eq}
d\Phi^{(p-1)} = Q \Phi^{(p)},\qquad p=0,1,2.
\end{equation}
Here we understand that $\Phi^{(-1)}\colon=0$. Thus, explicitly, the equations are:
\begin{eqnarray}
Q\Phi^{(0)} & = & 0 \label{l33 descent 0}\\
Q\Phi^{(1)} &=& d\Phi^{(0)} \label{l33 descent 1}\\
Q\Phi^{(2)} &=& d\Phi^{(1)} \label{l33 descent 2}
\end{eqnarray}

One can think of this sequence as follows: one fixed a $Q$-cocycle $\Phi^{(0)}$ -- a ``0-observable,'' 
then one wants to solve (\ref{l33 descent 1}) for the ``1-observable'' $\Phi^{(1)}$ and subsequently solve (\ref{l33 descent 2}) for the 2-observable $\Phi^{(2)}$. 

\begin{remark}
Descent equations (\ref{l33 descent eq}) are meaningful not just in dimension $2$ (then $p$ goes up to the dimension of the manifold). Originally, they appeared in the work of Witten on 4-dimensional Donaldson theory \cite{Witten_Donaldson}.
\end{remark}

From Lemma \ref{l33 lemma: corr of Q-closed are constant}, correlators of $Q$-closed $0$-observables $\langle \Phi^{(0)}_1(z_1)\cdots \Phi_n^{(0)}(z_n)\rangle$ are constant functions of positions $z_1,\ldots, z_n$ (as long as points are distinct).

Equation (\ref{l33 descent 1}) implies that one can construct an ``extended observable'' (localized on a 1-cycle rather than at a point)
\begin{equation}\label{l33 Phi^1 integrated}
\oint_\gamma \Phi^{(1)}
\end{equation}
with $\gamma$ some closed contour. Then (\ref{l33 descent 1}) implies by Stokes' theorem that (\ref{l33 Phi^1 integrated}) is $Q$-closed:\footnote{
We understand (\ref{l33 Phi^1 integrated}) as an element of $V$ -- in that sense it is clear what acting by $Q$ means.
Equivalently, the 
action of $Q$ on (\ref{l33 Phi^1 integrated}) can be understood as $\frac{1}{2\pi i}\int_{\dd U_\gamma \ni w} \mathbb{J}(w) 
\oint_{\gamma\ni z} \Phi^{(1)}(z)
$, with the integral being over the boundary of a thickening $U_\gamma$ of the contour $\gamma$.}
\begin{equation}
Q \oint_\gamma \Phi^{(1)}=0
\end{equation}
By repeating the argument of Lemma \ref{l33 lemma: corr of Q-closed are constant}, we have that, given $Q$-cocycles $\Phi^{(0)},\Phi_1^{(0)},\ldots, \Phi_n^{(0)}$, the correlator 
\begin{equation}\label{l33 top correlator}
\langle \oint_\gamma \Phi^{(1)}\;\; \Phi^{(0)}_1(z_1)\cdots \Phi^{(n)}_n(z_n) \rangle
\end{equation}
does not change when one moves points $z_i$ or deforms the contour $\gamma$ (as long as the points and the contxour keep disjoint), however it can change when some point $z_i$ crosses $\gamma$.

The correlator (\ref{l33 top correlator}) is an example of a ``topological correlator'' -- one invariant under small deformations of insertion points of fields (and the contour over which the 1-observable is integrated).

A 2-observable $\Phi^{(2)}$ gives rise to a $Q$-closed extended observable
\begin{equation}
\int_\Sigma \Phi^{(2)}
\end{equation}
and can be understood as defining an infinitesimal deformation of a TCFT, deforming the correlators as
\begin{multline}\label{l33 deformation by Phi^2}
\langle \Phi_1(z_1)\cdots \Phi_n(z_n) \rangle \mapsto \\
\mapsto 
\langle \Phi_1(z_1)\cdots \Phi_n(z_n) \rangle +\epsilon
\Big\langle \left(\int_{\Sigma-\sqcup_{i=1}^n D_i}\Phi^{(2)}\right)  \Phi_1(z_1)\cdots \Phi_n(z_n) \Big\rangle 
\end{multline}
Here $D_i$ is a small disk centered at $z_i$; $\epsilon$ is the infinitesimal deformation parameter. 
\marginpar{?}
This deformation should be accompanied by a deformation of the rest of TCFT data, $Q,\mathbb{J},G,\ol{G},T,\ol{T}$, so that the relations of TCFT hold (up to $O(\epsilon^2)$) for the deformed package.

The deformation (\ref{l33 deformation by Phi^2}) in the path integral language can be interpreted as the deformation of the action functional,
\begin{equation}
S\mapsto S- \epsilon \int_\Sigma \Phi^{(2)}.
\end{equation}

\subsubsection{Total descendant}
Given a solution $\Phi^{(0)}$, $\Phi^{(1)}$, $\Phi^{(2)}$ of the descent equation (\ref{l33 descent eq}), one can consider the ``total descendant''
\begin{equation}\label{l33 total descendant}
\til\Phi\colon =\Phi^{(0)}+\Phi^{(1)}+\Phi^{(2)}
\end{equation}
-- a field valued in nonhomogenous forms.
The descent equation can be written in terms of the total descendant as
\begin{equation}\label{l33 (d-Q) til Phi =0}
(d-Q)\til\Phi=0.
\end{equation}

\subsubsection{Closed forms on the configuration space from correlators of total descendants}
\label{l33 rem: closed forms from a TCFT}
\begin{lemma}
Given a collection of $Q$-cocycles $\Phi_1^{(0)},\ldots,\Phi_n^{(0)}$, the correlator of their total descendants (\ref{l33 total descendant}) is a \emph{closed} form on the open  configuration space:
\begin{equation}\label{l33 closed form on C_n}
\langle \til\Phi_1(z_1)\cdots \til\Phi_n(z_n) \rangle \in \Omega_\mr{closed}(C_n(\CP^1)).
\end{equation}
\end{lemma}
\begin{proof}
Indeed, one has
\begin{equation}\label{l33 closedness argument}
d\langle \til\Phi_1(z_1)\cdots \til\Phi_n(z_n) \rangle  =
\sum_{j=1}^n \langle \til\Phi_1(z_1)\cdots \underbrace{(d-Q)\til\Phi_j(z_j)}_0\cdots \til\Phi_n(z_n) \rangle =0.
\end{equation}
\end{proof}
In fact, the form (\ref{l33 closed form on C_n}) is $PSL_2(\CC)$-basic and thus descends to a closed form on the moduli space $\MM_{0,n}\simeq C_n(\CP^1)/PSL_2(\CC)$.

We remark also
that if in the correlator (\ref{l33 closed form on C_n}) we replace one of the fields $\til\Phi_j$ by a $(d-Q)$-\emph{exact}  form-valued field $(d-Q)\Xi^\bt$, with some $\Xi^\bt\in V\otimes \wedge^\bt T^*\Sigma$, the resulting correlator will be an exact form on the configuration space (rather than just a closed one):
\begin{equation}
\langle \til\Phi_1(z_1)\cdots (d-Q) \Xi^\bt(z_j) \cdots \til\Phi_n(z_n) \rangle = d \langle \til\Phi_1(z_1)\cdots \Xi^\bt(z_j) \cdots \til\Phi_n(z_n) \rangle.
\end{equation}
 This is proven by an argument similar to (\ref{l33 closedness argument}).

In the example of bosonic string theory the degree of the form (\ref{l33 closed form on C_n}) is 
\begin{equation}
\sum_{j=1}^n \mr{gh}(\Phi_j^{(0)})-6
\end{equation}
-- the sum of the ghost numbers of the fields $\Phi_j^{(0)}$ minus the total (left plus right) ghost number anomaly.

By Remark \ref{l33 rem: higher genus}, the correlator (\ref{l33 closed form on C_n}) can be considered on a surface $\Sigma$ of any genus, yielding again a closed form on the configuration space.

\subsection{Canonical solution of descent equations using the $G$-field}
In a TCFT, one can find a canonical solution of the equation (\ref{l33 descent eq}) starting from any $Q$-cocycle $\Phi^{(0)}$.

Consider the operator
\begin{equation}\label{l33 Gamma}
\Gamma=-dz\, G_{-1} - d\bar{z}\,\ol{G}_{-1} \colon  \; V_z^{(p)}\ra V_{z}^{(p+1)},
\end{equation}
where $G_{-1},\ol{G}_{-1}$ are particular mode operators of the fields $G,\ol{G}$, cf. (\ref{l33 G modes}). We will refer to $\Gamma$ as the \emph{descent operator}.  Note that the commutator of $\Gamma$ with $Q$ is  the de Rham operator:
\begin{equation}\label{l33 [Q,Gamma]=d}
[Q,\Gamma]=dz [Q,G_{-1}]+d\bar{z} [Q,\ol{G}_{-1}]= dz L_{-1}+d\bar{z} \ol{L}_{-1} = dz \dd_z + d\bar{z} \dd_{\bar{z}} = d.
\end{equation}

Note also that one has 
\begin{equation}\label{l33 [d,Gamma]=0}
[d,\Gamma]=0, 
\end{equation}
since operators $L_{-1},\ol{L}_{-1}$ commute with $G_{-1},\ol{G}_{-1}$, cf. (\ref{l33 [L,G]}).

\begin{lemma}
Given a $Q$-cocycle $\Phi^{(0)}$, the sequence 
\begin{equation}\label{l33 can descent}
\Phi^{(0)},\quad\Phi^{(1)}\colon= \Gamma \Phi^{(0)},\quad \Phi^{(2)}\colon= \frac12 \Gamma^2 \Phi^{(0)}
\end{equation}
solves the descent equation (\ref{l33 descent eq}).
\end{lemma}

\begin{proof}
The equation (\ref{l33 descent 0}) is given, since $\Phi^{(0)}$ is a $Q$-cocycle. For (\ref{l33 descent 1}) we have
\begin{equation}
Q \Gamma \Phi^{(1)} \underset{(\ref{l33 [Q,Gamma]=d})}{=} (d+\Gamma Q) \Phi^{(0)} = d\Phi^{(0)}.
\end{equation}
For (\ref{l33 descent 2}) we have
\begin{multline}
Q\frac12 \Gamma\Gamma \Phi^{(0)}= \frac12 (d+\Gamma Q) \Phi^{(1)}\underset{(\ref{l33 descent 1})}{=} \frac12 (d \Phi^{(1)}  +\Gamma d \Phi^{(0)}) =\\
\underset{(\ref{l33 [d,Gamma]=0})}{=}
\frac12 (d \Phi^{(1)}  + d \Gamma\Phi^{(0)})= d\Phi^{(1)}.
\end{multline}
\end{proof}

\begin{example} In bosonic string, starting with the $Q$-cocycle (\ref{l33 tachyon}) and applying the canonical descent construction (\ref{l33 can descent}), we obtain the descent sequence
\begin{equation}
\Phi^{(0)}=:c\bar{c} V_p:,\quad
\Phi^{(1)}=:( -dz \, \bar{c}+ dz\,c)V_p:,\quad
\Phi^{(2)}=:dz\,d\bar{z} V_p:,
\end{equation}
where we denoted $V_{p}=e^{i\sqrt{2} \sum_k p_k \phi^k}$, with the momentum satisfying $||p||=1$.
\end{example}

For $\Phi=\Phi^{(0)}$ a $Q$-cocycle, one can assemble the  descendants (\ref{l33 can descent}) into the canonical total descendant
\begin{equation}\label{l33 canonical total descendant}
\til\Phi\colon=e^\Gamma \Phi = \Phi+\Gamma \Phi+\frac12 \Gamma^2 \Phi
\end{equation}
satisfying the equation (\ref{l33 (d-Q) til Phi =0}).
More generally, for $\Phi$ not necessarily $Q$-closed, one has an easily proven identity
\begin{equation}\label{l33 (d-Q)e^Gamma}
(d-Q) e^\Gamma \Phi = -e^\Gamma (Q\Phi).
\end{equation}

\subsection{BV algebra structure on $Q$-cohomology}
\label{sss: BV algebra on Q-cohomology}

\begin{definition}\label{l33 def: BV algebras}
A Batalin-Vilkovisky algebra (or ``BV algebra'') is a $\ZZ$-graded supercommutative unital algebra $(W,\cdot, \mathbb{1})$ equipped additionally with:
\begin{itemize}
\item A degree $-1$ Poisson bracket\footnote{
The grading convention that we use here, with $(,)$ and $\Delta$ of degree $-1$, is adapted to BV algebras arising from 2d TCFT. In the setting where BV algebras originally appeared -- Batalin-Vilkovisky quantization of gauge theories -- the natural convention is to assign degree $+1$ to $(,)$ and $\Delta$ (the same degree as the operator $Q$, whereas in TCFT the degrees are opposite to the degree of $Q$).
}
 (or ``BV bracket,'' or ``antibracket'')
\begin{equation}
(,)\colon W\otimes W \ra W
\end{equation}
which is a derivation in both slots and satisfies (graded) Jacobi identity.
\item A degree $-1$ operator $\Delta\colon W\ra W$ (the ``BV Laplacian'')
satisfying the Leibniz identity for a second-order differential operator
\begin{equation}
\Delta(xyz)\pm \Delta(xy)z\pm \Delta(xz)y \pm \Delta(yz)x \pm xy \Delta(z)\pm xz\Delta(y) \pm yz \Delta(x) =0
\end{equation}
and the properties
\begin{gather}
\Delta(\mathbb{1})=0,\\
\Delta(xy)=\Delta(x)y+(-1)^{|x|}x \Delta(y) + (-1)^{|x|}(x,y).
\end{gather}
In particular, the BV bracket arises as the defect of the first order Leibniz identity for $\Delta$.
\end{itemize}
\end{definition}

In the setting of TCFT, we consider the graded vector space $W=H_Q(V)$ -- the cohomology of $Q$ (with grading by the ``ghost number''), the unit element is $\mathbb{1}$ -- the cohomology class of the identity field. 

The supercommutative product on $W$ is given by OPEs. Notice that if $\Phi_1,\Phi_2$ are two nontrivial $Q$-cocycles, 
the OPE has the form
\begin{equation}
\Phi_1(w) \Phi_2(z)\sim \sum_\Phi (w-z)^{h(\Phi)}(\bar{w}-\bar{z})^{\bar{h}(\Phi)} \Phi(z)
\end{equation}
where we used that $\Phi_{1,2}$ must have conformal weight $(0,0)$  and used Lemma \ref{l26 lemma OPE exponents}. Terms in the right hand side of the OPE must also be $Q$-closed, and the ones containing nontrivial $Q$-cocycles $\Phi$ have to contribute with exponents $h(\Phi)=\bar{h}(\Phi)=0$. Therefore, for $\Phi_1,\Phi_2$ two $Q$-cocycles one has an OPE of the form
\begin{equation}\label{l33 Q-closed OPE}
\Phi_1(w)\Phi_2(z)\sim (\Phi_1\cdot \Phi_2)(z)\quad\mr{modulo}\;Q\mbox{-exact terms.}
\end{equation}
with $\Phi_1\cdot \Phi_2$ some $Q$-cocycle.
Thus, in $Q$-cohomology OPE, is always constant and induces a supercommutative product. 

The BV bracket is given by the following construction:
for $\Phi_1=\Phi_1^{(0)}$, $\Phi_2=\Phi_2^{(0)}$ two $Q$-cocycles, we set
\begin{equation}\label{l33 (,)}
(\Phi_1,\Phi_2)(z)=\frac{1}{2\pi i}\oint_{\gamma_z} \Phi_1^{(1)}(w)\Phi_2^{(0)}(z),
\end{equation}
where $\gamma_z$ is a contour around $z$ and $\Phi_1^{(1)}=\Gamma \Phi_1^{(0)}$ is the first descent of $\Phi_1$.

The BV Laplacian is constructed as the operator
\begin{equation}\label{l33 G_0,-}
\Delta\colon= G_0-\ol{G}_0,
\end{equation}
also denoted $G_{0,-}$, where $G_0,\ol{G}_0$ are particular mode operators of $G$, cf. (\ref{l33 G modes}).

We refer to \cite{LMY1} for an example of a TCFT with explicitly computed BV algebra structure on $Q$-cohomology.

\subsection{Action of the operad of framed little disks on $V$}
The BV algebra structure on $Q$-cohomology $H_Q(V)$ has a ``lift'' to the full space of fields $V$, as an ``algebra over the operad of framed little 2-disks.''

\begin{definition}
The operad of framed little 2-disks $E_2^\mr{fr}$ is a sequence of manifolds $(E_2^\mr{fr})_n$, where $(E_2^\mr{fr})_n$ is the space of configurations of $n\geq 0$ disjoint disks inside a unit disk in $\RR^2\simeq \CC$, each disk is equipped with a ``framing'' -- a marked point on the boundary circle.\footnote{In particular, $(E_2^\mr{fr})_n$ is a manifold of real dimension $4n$, parameterized by positions of centers of the $n$ disks, $n$ radii and $n$ angles (of the marked point).}
The marked point on the unit circle is fixed at $(1,0)$. 

One has composition maps 
\begin{equation}
\circ_i\colon (E_2^\mr{fr})_n \times (E_2^\mr{fr})_m \ra (E_2^\mr{fr})_{n+m-1}
\end{equation}
for $i=1,\ldots,n$, defined as follows.
A configuration of disks $o_2\in (E_2^\mr{fr})_m$ is
scaled and rotated so that its outer disk fits with $i$-th disk in  the configuration $o_1\in (E_2^\mr{fr})_n$ (and the marked points should coincide). Then the new configuration $o_1 \circ_i o_2$ consists of the rescaled/rotated configuration $o_2$ and all disks of $o_1$ except the $i$-th disk.
\end{definition}

It is convenient to think of a configuration of disks as ``holes'' in the unit disk. Then the composition map fits one $m$-holed inside the $i$-th hole of another $n$-holed disk.

\begin{figure}[H]
\begin{center}
\includegraphics[scale=0.8]{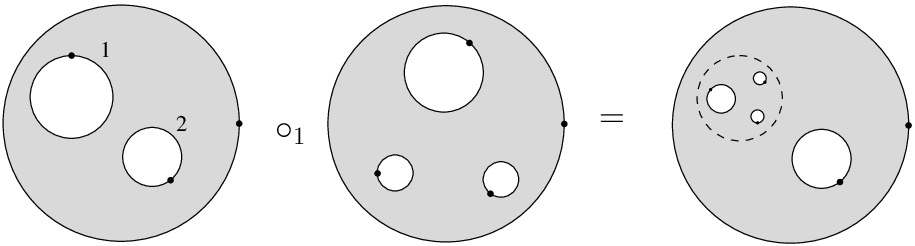}
\end{center}
\caption{Composition in the operad of framed little 2-disks.}
\end{figure}

Given a TCFT, one can construct a sequence of differential forms $\omega_n$ on $(E_2^\mr{fr})_n$ valued in $\mr{Hom}(V^{\otimes n},V)$, for $n\geq 1$, defined by
\begin{equation}\label{l33 form on E_2^fr}
\omega_n (\Phi_1,\ldots,\Phi_n)= \prod_{k=1}^n e^{\zeta_k L_0+\bar\zeta_k \ol{L}_0+ d\zeta_k G_0 + d\bar\zeta_k \ol{G}_0}e^\Gamma\Phi_k(z_k).
\end{equation}
Here $z_k$ are positions of the centers of disks, $\zeta_k=\log r_k +i\theta_k$, with $r_k$ the radii and $\theta_k$ the angles; $\Gamma$ is the descent operator (\ref{l33 Gamma}). The expression in the r.h.s. of (\ref{l33 form on E_2^fr}) is to be understood under a correlator with an arbitrary collection of test fields inserted outside the unit disk. Thus, the r.h.s. of (\ref{l33 form on E_2^fr}) is a ``multi-OPE.'' 

One has the property 
\begin{equation} \label{l33 (d-Q)omega=0}
(d-\mr{ad}_Q)\omega_n=0
\end{equation}
where $\mr{ad}_Q$ means the sum of terms  where $Q$  on an input or the output field of $\omega_n$. More explicitly,
\begin{multline}
(d-\mr{ad}_Q)\omega_n(\Phi_1,\ldots,\Phi_n)\colon= \\
=d\omega_n(\Phi_1,\ldots,\Phi_n)-Q\omega_n(\Phi_1,\ldots,\Phi_n)
+\sum_{k=1}^n \pm\omega_n(\Phi_1,\ldots,Q\Phi_k,\ldots, \Phi_n) =0
\end{multline}
This property  is a consequence of (\ref{l33 (d-Q)e^Gamma}). 

The property (\ref{l33 (d-Q)omega=0}) implies that one has a map of cochain complexes, from singular chains of the framed little disk operad to multilinear operators on $V$:
\begin{equation}
\begin{array}{ccc}
C_{-\bt}((E_2^\mr{fr})_n) &\ra& \mr{Hom}(V^{\otimes n},V)\\
\mr{chain} &\mapsto & \Big(\Phi_1\otimes\cdots\otimes\Phi_n \mapsto \int_\mr{chain} \omega_n(\Phi_1,\ldots,\Phi_n) \Big)
\end{array}
\end{equation}
Note that we put the reverse grading on chains, so that singular chains are seen as a cochain complex.
This map is a representation of the operad, i.e., is compatible with compositions.

In particular, passing to cohomology, we obtain a map from homology of $(E_2^\mr{fr})_n$ to $\mr{Hom}(W^{\otimes n},W)$, where $W=H_Q(V)$.

Here is a known fact (see \cite{Getzler_BV}): homology of the operad $E_2^\mr{fr}$ is the operad of BV algebras, with generators $\mathbb{1},\cdot, (,), \Delta$ (subject to the relations as in Definition \ref{l33 def: BV algebras}).\footnote{We think of the unit as an operation of ``arity zero,'' $\mathbb{1}\in\mr{Hom}(V^{\otimes 0},V)\simeq V$.}

More explicitly, the homology of $E_2^\mr{fr}$ is generated (using compositions $\circ_i$) by four homology classes:
\begin{enumerate}[(i)]
\item The tautological $0$-class in $H_0((E_2^\mr{fr})_0)$.
\item The 0-class in $H_0((E_2^\mr{fr})_2)$, represented by any configuration of two disjoint disks in the unit disk.
\item The 1-class in $H_1((E_2^\mr{fr})_2)$, represented by one disk moving a full circle around the other disk.
\item The 1-class in $H_1((E_2^\mr{fr})_1)$, represented by rotating the disk (or equivalently rotating the marked point on the boundary) a full circle.
\end{enumerate}
\begin{figure}[H]
\begin{center}
\includegraphics[scale=0.8]{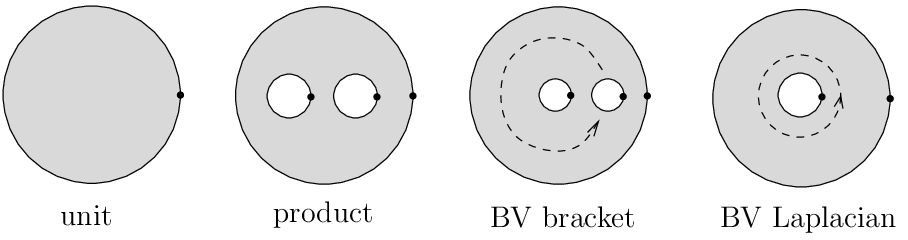}
\end{center}
\caption{Generators of homology of $E_2^\mr{fr}$.}
\end{figure}

These classes correspond to the elements $\mathbb{1},\cdot, (,), \Delta$ of the BV operad and are represented on $W$ by the corresponding operations: cohomology class of the unit field, (\ref{l33 Q-closed OPE}), (\ref{l33 (,)}), (\ref{l33 G_0,-}), and one can see that those formulae follow from integrating the form (\ref{l33 form on E_2^fr}) over the respective cycles. Thus, indeed, the BV algebra structure on $Q$-cohomology that we constructed in Section \ref{sss: BV algebra on Q-cohomology} is induced from the $E_2^\mr{fr}$-algebra structure on $V$ via passing to cohomology.

For details on the operadic viewpoint on TCFTs, we refer the reader to \cite{Getzler_BV}. For an explicit example, we refer to \cite{LMY1}.

\chapter{Elements of representation theory of Virasoro algebra}
\chaptermark{Representation theory of Virasoro algebra}

\marginpar{Lecture 34,\\ 11/16/2022}
\label{ch Vir rep theory}
\section{Verma modules of Virasoro algebra, null vectors}
Let $V_{c,h}$ be the Verma module\footnote{\label{l34 footnote: Verma}
Given any $\ZZ$-graded Lie algebra $A=A_\bt$, one defines the Verma module as follows. Let $W$ be a module over $A_{\geq 0}$ where $A_{>0}$ acts by zero. Then the Verma module is the $U(A)$-module induced from $W$, i.e., $U(A)\otimes_{U(A_{\geq 0})} W$, where $U(\cdots)$ is the enveloping algebra.
} of Virasoro algebra with central charge $c\in \CC$ and highest weight $h\in \CC$, i.e., it is generated by the highest weight vector which we denote $|h\rangle$ which satisfies
\begin{equation}
L_{>0} |h\rangle =0,\quad L_0 |h\rangle = h|h\rangle
\end{equation}
-- is killed by the positive part of the Virasoro algebra and is an eigenvector for $L_0$ with eigenvalue $h$. The Verma module is then
\begin{equation}
V_{c,h}=\mr{Span}_\CC \{L_{-n_r}\cdots L_{-n_1}|h\rangle\;|\; 1\leq n_1\leq \cdots\leq n_r,\; r\geq 0\}
\end{equation}
The descendant
\begin{equation}\label{l34 descendant}
L_{-n_r}\cdots L_{-n_1}|h\rangle
\end{equation}
has conformal weight ($L_0$-eigenvalue) $h+N$ where $N=n_1+\cdots+n_r$. One says that (\ref{l34 descendant}) is a ``level-$N$'' vector in $V_{c,h}$. One has a splitting of $V_{c,h}$ by level:
\begin{equation}
V_{c,h}=\bigoplus_{N\geq 0} V_{c,h}^N,
\end{equation}
where $V_{c,h}^N$ is the subspace of the Verma module spanned by level-$N$ vectors (i.e., it is the $(h+N)$-eigenspace of $L_0$). Note that the dimension of $V_{c,h}^N$ is
\begin{equation}
\dim V_{c,h}^N= P(N)
\end{equation}
-- the number of partitions of $N$ (cf. Section \ref{sss torus Z for compactified free boson}).

There is a unique sesquilinear form $\langle,\rangle$ on $V_{c,h}$ characterized by the properties
\begin{eqnarray}
\langle h|h\rangle &=& 1 \\
(L_n)^+ &=& L_{-n},\quad n\in \ZZ
\end{eqnarray}
Generally, $\langle,\rangle$ is not positive-definite and may be degenerate.

\begin{definition}
A vector $|\chi\rangle\neq |h\rangle$ in $V_{c,h}$ is called a ``singular vector'' or ``null vector''  if it satisfies
\begin{equation}\label{l34 L_>0 chi=0}
L_{>0} |\chi\rangle =0.
\end{equation}
\end{definition}

Note that a null vector is automatically orthogonal to the entire $V_{c,h}$, since one has
\begin{equation}\label{l34 chi perp to V}
\big\langle L_{-n_r}\cdots L_{-n_1}|h\rangle  ,|\chi\rangle\big\rangle = \langle h| L_{n_1}\cdots \underbrace{L_{n_r} |\chi\rangle}_{=0\;\mr{since\;}n_r>0} =0.
\end{equation}
In particular, a null vector has zero norm:
\begin{equation}
\langle \chi|\chi \rangle =0.
\end{equation}

Assume that there exists a null vector $|\chi\rangle$ at level $N$ in $V_{c,h}$.
Then Virasoro descendants of $|\chi\rangle$ form a submodule of $V_{c,h}$ isomorphic to the Verma module $V_{c,h+N}$:
\begin{equation}
\underbrace{\mr{Span}\{L_{-n_r}\cdots L_{-n_1}|\chi\rangle\}}_{\simeq V_{c,h+N}} \subset V_{c,h}
\end{equation}
In fact, this entire submodule is orthogonal to $V_{c,h}$, by an argument similar to (\ref{l34 chi perp to V}).

Let us consider when null vectors can appear at small levels $N$ (the full answer for general $N$ is given by Kac determinant formula in Section \ref{ss Kac determinant formula} below).
\begin{example} \label{l34 ex null vector at N=1}
Assume that $V_{c,h}$ contains a null vector at level $N=1$. That means $|\chi\rangle = L_{-1}|h\rangle$ (ignoring a possible normalization factor). Note that $L_{\geq 2}|\chi\rangle$ is a vector at level $-1$, so it automatically vanishes. The only case of (\ref{l34 L_>0 chi=0}) that needs checking is $L_1 |\chi\rangle$ :
\begin{equation}
L_1|\chi\rangle = L_1 L_{-1}|h\rangle =(2L_0-L_{-1}\underbrace{L_1)|h\rangle}_0 = 2h|h\rangle.
\end{equation}
Thus, $|\chi\rangle=L_{-1}|h\rangle$ is a null vector if and only if $h=0$.
\end{example}

\begin{example} Assume that $V_{c,h}$ contains a null vector at level $N=2$. This means
\begin{equation}
|\chi\rangle = (\alpha L_{-2}+\beta L_{-1}^2)|h\rangle
\end{equation}
with $\alpha,\beta\in\CC$ not simultaneously zero. By the same argument as above, $L_{\geq 3}|\chi\rangle$ vanishes automatically, so we only need to check $L_1|\chi\rangle$ and $L_2|\chi\rangle$.
We have
\begin{multline}
L_1(\alpha L_{-2}+\beta L_{-1}^2)|h\rangle = (\alpha 3 (L_{-1}+L_{-2}L_1)+\beta (2L_0 L_{-1}+L_{-1}L_1 L_{-1}) )|h\rangle=\\
=
(\alpha 3 L_{-1} +\beta (2L_{-1}+2 L_{-1}L_0+2L_{-1}L_0+L_{-1}^2 L_1))|h\rangle=
(3\alpha + (4h+2)\beta)|h\rangle,
\end{multline}
\begin{multline}
L_2(\alpha L_{-2}+\beta L_{-1}^2)|h\rangle = (\alpha (4L_0+\frac{c}{2}+L_{-2}L_2)+\beta(3L_1 L_{-1}+L_{-1}L_2L_{-1}))|h\rangle
=\\
=(\alpha (4h+\frac{c}{2})+\beta(6L_0+3L_{-1}L_1+3 L_{-1} L_1+L_{-1}^2 L_2 ))|h\rangle = ( (4h+\frac{c}{2})\alpha+ 6h \beta) |h\rangle.
\end{multline}
So, the equations on a null vector $L_1 |\chi\rangle=0$, $L_2|\chi\rangle$ are equivalent to a homogeneous system of two linear equations on two coefficients $\alpha,\beta$, 
\begin{equation}
3\alpha + (4h+2)\beta =0,\quad (4h+\frac{c}{2})\alpha+6h\beta =0,
\end{equation}
which has a nonzero solution if and only if the determinant of the coefficient matrix vanishes,
\begin{equation}
\left| 
\begin{array}{cc}
3 & 4h+2 \\
\frac{c}{2}+4h & 6h
\end{array}
\right| =0.
\end{equation}
This is a nontrivial quadratic relation on $c$ and $h$, and as we just showed, $V_{c,h}$ contains a null vector at level $N=2$ if and only if this relation is satisfied. 

For instance, this relation is satisfied for $c=\frac12, h=\frac{1}{16}$, which is what allowed us to find a hypergeometric equation on the four-point correlator of fields $\bbsigma$ in the free fermion CFT in Section \ref{sss free fermion correlators}.
\end{example}

\section{Kac determinant formula}\label{ss Kac determinant formula}

Consider the ``Gram matrix'' -- the matrix of inner products of level-$N$ descendants of the highest vector $|h\rangle$ in $V_{c,h}$:
\begin{equation}\label{l34 Gram matrix}
M^{(N)}=(\langle i|j \rangle)_{i,j}
\end{equation}
where $i,j$ run over the basis of vectors (\ref{l34 descendant}) in $V_{c,h}$. In particular, $M^{(N)}$ is a matrix of size $P(N)\times P(N)$.

\begin{thm}[Kac \cite{Kac}, Feigin-Fuchs \cite{Feigin-Fuchs}]
The determinant of the Gram matrix  (\ref{l34 Gram matrix}) is
\begin{equation}\label{l34 Kac formula}
\det M^{(N)} =\alpha_N \prod_{
\begin{array}{c}
p,q\geq 1\;\mr{s.t.}\\
pq\leq N
\end{array}
} (h-h_{p,q}(c))^{P(N-pq)}.
\end{equation}
Here
\begin{equation}
\alpha_N=\prod_{
\begin{array}{c}
p,q\geq 1\;\mr{s.t.}\\
pq\leq N
\end{array}
} ((2p)^q q!)^{P(N-pq)-P(N-p(q+1))}
\end{equation}
is a numerical factor and 
\begin{equation}\label{l34 h_p,q}
h_{p,q}(c)=\frac{((m+1)p-mq)^2-1}{4m(m+1)},
\end{equation}
where $m$ is related to the central charge $c$ by
\begin{equation}\label{l34 m via c}
m=-\frac12 \pm \sqrt{\frac{25-c}{1-c}}
\end{equation}
or equivalently
\begin{equation}\label{l34 c via m}
c=1-\frac{6}{m(m+1)}.
\end{equation}
\end{thm}

The importance of Kac determinant formula (\ref{l34 Kac formula}) is that says for which $c,h$ the Gram matrix at level $N$ vanishes, which means that $V_{c,h}$ contains a null vector at level $\leq N$. More precisely, Kac formula implies the following:
\begin{corollary}\label{l34 Corollary}
 If 
 $h=h_{p,q}$ (as defined by (\ref{l34 h_p,q}))  for some integers $p,q\geq 1$ then $V_{c,h}$ contains a null vector at level $N=pq$.
\end{corollary}
In fact, null, vectors at other levels may also appear (i.e. this is not an ``if and only if'' statement), however every null vector in $V_{c,h}$ is either covered by Corollary \ref{l34 Corollary} or is a descendant of one.

\begin{example}
For any $c$ from (\ref{l34 h_p,q}) we have $h_{1,1}=0$. This corresponds to the fact that $V_{c,0}$ has a null vector at level $N=1$ for any $c$, cf. Example \ref{l34 ex null vector at N=1}.
\end{example}

\begin{example} Consider the case of central charge $c=1$. By (\ref{l34 m via c}), (\ref{l34 c via m}) it corresponds to the limiting case $m\ra \infty$. In this limit, (\ref{l34 h_p,q}) becomes
\begin{equation}
h_{p,q}=\frac{(p-q)^2}{4}.
\end{equation}
This implies that for $c=1$, $h=\frac{n^2}{4}$, with $n=0,1,2,3,\ldots$, the Verma module $V_{1,h}$ contains an infinite sequence of null vectors at levels $N=p\underbrace{(n+p)}_q$, with  $p=1,2,3,\ldots$, since $h=\frac{n^2}{4}$ equals $h_{p,q}$ for an infinite sequence of pairs $(p,q)$ of the form $(p,n+p)$.
\end{example}

The following is (a part of) a theorem of Feigin-Fuchs \cite{Feigin-Fuchs}.
\begin{thm}[Feigin-Fuchs]
\begin{itemize}
\item $V_{c,h}$ is irreducible if and only if it contains no null vectors and is reducible if and only if $h=h_{p,q}$ for some integers $p,q\geq 1$.
\item Proper submodules of $V_{c,h}$ are generated by null vectors.
\item The irreducible highest weight module $M_{c,h}$ for Virasoro algebra at central charge $c$ and with highest weight $h$ has the form
\begin{equation}
M_{c,h}=V_{c,h}/\mathsf{N}
\end{equation}
where $\mathsf{N}\subset V_{c,h}$ is the maximal proper submodule.
It can also be realized as the kernel of the sesquilinear form $\langle ,\rangle$ on $V_{c,h}$ or equivalently the orthogonal complement of $V_{c,h}$:
\begin{equation}
\mathsf{N}=\mr{ker} \langle,\rangle = V_{c,h}^\perp.
\end{equation}
\end{itemize}
\end{thm}

In Section \ref{ss FF: maps between Verma modules} we recall the second part of Feigin-Fuchs theorem giving the full classification of maps (inclusions) between Verma modules at a given $c$, which in particular yields formulae for characters of $M_{c,h}$ and therefore formulae for torus partition functions in a CFT, see Section \ref{sss Virasoro characters} and (\ref{l35 torus Z for M(P,Q)}).

\begin{example}\label{l34 ex c=1 two sequences}
 For $c=1$, $V_{1,h}$ is reducible iff $h=\frac{n^2}{4}$ for some $n=0,1,2,\ldots$. In particular, for $h\neq \frac{n^2}{4}$, one has $M_{1,h}=V_{1,h}$. Reducible Verma modules for $c=1$ arrange into two sequences connected by inclusions of modules:
\begin{equation}\label{l34 c=1 two sequences}
\begin{gathered}
V_{1,0}\leftarrow V_{1,1}\leftarrow V_{1,4}\leftarrow V_{1,9}\leftarrow \cdots\\
V_{1,\frac14} \leftarrow V_{1,\frac94} \leftarrow V_{1,\frac{25}{4}} \leftarrow V_{1,\frac{49}{4}}\leftarrow \cdots
\end{gathered}
\end{equation}
Irreducible modules for these values of $h$ are obtained by taking the corresponding Verma module and quotienting out the module mapping into it. E.g., $M_{1,0}=V_{1,0}/V_{1,1}$. Null vectors in $V_{1,h}$ are images of highest vectors of Verma modules to the right in the respective sequence, i.e., mapping into $V_{1,h}$, possibly via a sequence of inclusions.
\end{example}

\begin{example}
Consider the case $c=\frac12$, which corresponds to $m=3$. One has \begin{equation}
h_{p,q}= \frac{(4p-3q)^2-1}{48}.
\end{equation}
The values of $h_{p,q}$ for small $p,q$ are the following.

\begin{center}
\begin{tabular}{c|cc}
$h_{p,q}$ 
& $p=1$ & $p=2$ \\ \hline 
$q=1$ & $0$ &  $\frac12$ \\
$q=2$ & $\frac{1}{16}$ & $\frac{1}{16}$ \\
$q=3$ & $\frac{1}{2}$ & $0$
\end{tabular}
\end{center}

We recognize these numbers $h_{1,1}=0$, $h_{2,1}=\frac12$, $h_{1,2}=\frac{1}{16}$ as precisely the conformal weights $h$ of primary field $\mathbb{1}$, $\psi$, $\sigma$ in the free fermion CFT. Thus, the corresponding conformal families $M_{\frac12,h}$ are the ones coming from Verma modules $V_{\frac12,h}$ containing a null vector, which allows one to write differential equations on correlators of the corresponding primary fields, as we did in Section \ref{sss free fermion correlators}.
\end{example}

\section{Maps between Verma modules}\label{ss FF: maps between Verma modules}
In this section we follow Feigin-Fuchs \cite{Feigin-Fuchs}.

Fix the central charge $c,h\in \RR$. Equation $h_{p,q}=h$ with $h_{p,q}$ defined by (\ref{l34 h_p,q}) determines two parallel lines on the $(p,q)$-plane related to one another by reflection $(p,q)\leftrightarrow (-p,-q)$. Pick one of those lines and denote it $l_{c,h}$. The slope of $l_{p,q}$ is $\frac{m+1}{m}$, in particular:
\begin{itemize}
\item If $c\leq 1$, the line is real, with positive slope. For $c=1$ the slope is $1$.
\item If $c\geq 25$, the line is real, with negative slope. For $c=25$, the slope is $-1$.
\item If $1<c<25$, the slope (and the line) is complex.
\end{itemize}

One is interested in integer points on $l_{p,q}$. The relevant cases (with nomenclature taken from \cite{Feigin-Fuchs}) is:
\begin{enumerate}
\item[I] $l_{c,h}$ has no integer points.
\item[II] $l_{c,h}$ has a single integer point $(a',a'')\in \ZZ^2$. One distinguishes the following subcases:
\begin{enumerate}
\item[$\mr{II}_+$] $a'a''>0$,
\item[$\mr{II}_0$] $a'a''=0$,
\item[$\mr{II}_-$] $a'a''<0$.
\end{enumerate}
\item[III] $l_{c,h}$ contains infinitely many integer points. In this case $m\in\mathbb{Q}$ and one has either $c\leq 1$ (subcase $\mr{III}_-$) or $c\geq 25$ (subcase $\mr{III}_+$). We further distinguish between the subcases according to whether $l_{p,q}$ intersects the coordinate axes at integer points.
\begin{enumerate}
\item[$\mr{III}^{0,0}_\pm$] $l_{c,h}$ intersects both coordinate axes $q=0$ and $p=0$ at integer points. Denote $P$ the middle point of the interval connecting these two intersection points. Enumerate the integer points of the upper half of $l_{c,h}$ (above $P$) as
\begin{equation}\label{l34 FF *}
\ldots, (a'_{-1},a''_{-1}), (a'_0,a''_0), (a'_1,a''_1),\ldots
\end{equation}
in such order that one has 
\begin{equation}\label{l34 FF **}
\cdots < a'_{-1}a''_{-1} < a'_0 a''_0=0 < a'_1a''_1 <\cdots
\end{equation}
In particular, in the case $\mr{III}_{-}^{0,0}$, the sequence (\ref{l34 FF *}) is finite on the left and infinite on the right, and vice versa in the case $\mr{III}_{+}^{0,0}$.
\item[$\mr{III}^0_\pm$] $l_{c,h}$ intersects one of the coordinate axes at an integer point. Then we enumerate all integer points of $l_{c,h}$ (not just half) as in (\ref{l34 FF *}), so that (\ref{l34 FF **}) holds.
\item[$\mr{III}_\pm$] $l_{c,h}$ intersects both coordinate axes at non-integer points. Then we enumerate all integer points of $l_{c,h}$ as in (\ref{l34 FF *}), so that
\begin{equation}\label{l34 FF III a order}
\cdots < a'_{-1}a''_{-1}<0<a'_0a''_0<a'_1a''_1<\cdots
\end{equation}
We also draw a second line $l'_{c,h}$ through the point $(-a'_0,a''_0)$ parallel to $l_{c,h}$ and enumerate its integer points as
\begin{equation}
\ldots, (b'_{-1},b''_{-1}), (b'_0,b''_0)=(-a'_0,a''_0), (b'_1,b''_1),\ldots
\end{equation}
so that one has
\begin{equation}\label{l34 FF III b order}
\cdots < b'_{-1}b''_{-1}<b'_0b''_0=-a'_0a''_0<0<b'_1b''_1 < \cdots
\end{equation}
\end{enumerate}
\end{enumerate}

\begin{thm}[Feigin-Fuchs]\label{l34 thm FF maps of Verma modules}
Fix $c,h\in\RR$ and a line $l_{c,h}$ as above. Then:
\begin{itemize}
\item In cases $\mr{I}$, $\mr{II}_0$: $V_{c,h}$ is irreducible and not a proper submodule of any Verma module.
\item In the case $\mr{II}_+$, $V_{c,h}$ has a single Verma submodule isomorphic to $V_{c,a'a''}$ (which is irreducible) and generated a by a null vector at level $a'a''$; $V_{c,h}$ is not a proper submodule of any Verma module. 
\item In the case $\mr{II}_-$, $V_{c,h}$ is irreducible but can be  embedded into $V_{c,h+a'a''}$ and is generated there by a null vector at level $-a'a''$. $V_{c,h}$ cannot be embedded into any other Verma module.
\item In the cases $\mr{III}^{0,0}_\pm$, $\mr{III}^{0}_\pm$, there is a sequence of embeddings
\begin{equation}\label{l34 FF III^0}
\cdots \ra V_{c,h+a'_1a''_1} \ra V_{c,h} \ra V_{c,h+a'_{-1}a''_{-1}}\ra \cdots
\end{equation}
Modules in this sequence are not related by morphisms with any Verma modules not from this sequence.
\item In the cases $\mr{III}_\pm$ there is a commutative diagram of embeddings of Verma modules
\begin{equation}\label{l34 FF III diagram}
\xymatrix{
\stackrel{\vdots}{V_{c,h+a'_{-1}a''_{-1}}}  & \stackrel{\vdots}{V_{c,h+a'_{-2}a''_{-2}}}\\
V_{c,h}\ar[u]\ar[ur]&V_{c,h+b'_{-1}b''_{-1}+a'_0a''_0}\ar[u]\ar[ul] \\
V_{c,h+a'_1a''_1}\ar[u]\ar[ur] & V_{c,h+a'_0a''_0}\ar[u]\ar[ul] 
\\
\underset{\vdots}{V_{c,h+b'_2b''_2+a'_0a''_0}}\ar[u]\ar[ur] & \underset{\vdots}{V_{c,h+b'_{1}b''_{1}+a'_0a''_0}}\ar[u]\ar[ul]
}
\end{equation}
Modules in this diagram are not connected by homomorphisms with any other Verma modules. In each piece of the form
\begin{equation}
\xymatrix{
A & \\
B \ar[u] & C \ar[ul] \\
D \ar[u] \ar[ur] & E \ar[ul] \ar[u]
}
\end{equation}
the images of $B$ and $C$ in $A$ do not contain each other and their intersection is generated by images of $D$ and $E$ in $A$.
\end{itemize}
\end{thm}

\begin{example}
The case $c=1$, $h=0$ corresponds is $\mr{III}_-^{0,0}$ in Feigin-Fuchs classification and $c=1$, $h=\frac{n^2}{4}$ with $n=1,2,\ldots$ is $\mr{III}_-^0$. In these cases the sequence (\ref{l34 FF III^0}) is one of the two sequences (\ref{l34 c=1 two sequences}).
\end{example}

\begin{example}
For $c=\frac12$, $h=0$ we have the line $l_{\frac12,0}=\{(p,q)\;|\; 4p-3q=1\}$ corresponding to the case $\mr{III}_-$. The integer points on $l_{\frac12,0}$ are $(1+3k,1+4k)$ with $k\in\ZZ$; arranged in the order  (\ref{l34 FF III a order}) 
they are:  
\begin{equation}
\begin{array}{c|cccccc}
n & 0&1 & 2& 3&4&\cdots \\ \hline
(a'_n,a''_n) & (1,1) & (-2,-3) &(4,5)&(-5,-7)&(7,9)&\cdots
\end{array}
\end{equation} 
The parallel line $l'_{\frac12,0}=\{(p,q)\;|\; 4p-3q=-7\}$, it has integer points $(-1+3k,1+4k)$, $k\in\ZZ$; arranged in the order (\ref{l34 FF III b order}) they are:
\begin{equation}
\begin{array}{c|cccccc}
n & 0&1 & 2& 3&4&\cdots \\ \hline
(b'_n,b''_n) & (-1,1) & (2,5) &(-4,-3)&(5,9)&(-7,-7)&\cdots
\end{array}
\end{equation} 
The diagram of embeddings (\ref{l34 FF III diagram}) becomes
\begin{equation}\label{l34 V_1/2,0 FF diagram}
\xymatrix{
V_{\frac12,0} & \\
V_{\frac12,6} \ar[u] & V_{\frac12,1} \ar[ul] \\
\underset{\vdots}{V_{\frac12, 11}} \ar[u] \ar[ur] & \underset{\vdots}{V_{\frac12,9}} \ar[u] \ar[ul]
}
\end{equation}
One has similar diagrams for $h=\frac12$ and for $h=\frac{1}{16}$ (all with $c=\frac12$).
\end{example}

\subsection{Characters of highest weight modules of Virasoro algebra}\label{sss Virasoro characters}
Given a module $W$ of Virasoro algebra with central charge $c$ is defined as
\begin{equation}\label{l34 character}
\chi_W(\qq)=\tr_W \qq^{L_0-\frac{c}{24}},
\end{equation}
with $\qq$ a complex parameter with $|\qq|<1$. For a Verma module $V_{c,h}$, one has
\begin{equation}
\chi_{V_{c,h}}(\qq)=\sum_{N\geq 0}P(N) \qq^{h+N-\frac{c}{24}} =\frac{\qq^{h+\frac{1-c}{24}}}{\eta(\tau)},
\end{equation}
where $P(N)$ is the number of partitions and $\eta(\tau)$ is the Dedekind eta-function; $\qq$ is related to $\tau\in \Pi_+$ by
\begin{equation}
\qq=e^{2\pi i \tau}.
\end{equation}

Characters of irreducible highest weight modules $M_{c,h}$ can be obtained using Theorem \ref{l34 thm FF maps of Verma modules}.

\begin{example}\label{l34 ex character of M_1/2,0}
The character of the irreducible module $M_{\frac12,0}$ can be obtained from the diagram (\ref{l34 V_1/2,0 FF diagram}):
\begin{multline}
\chi_{M_{\frac12,0}}(\qq)=\chi_{V_{\frac12,0}}(\qq) - \chi_{V_{\frac12,1}}(\qq) - \chi_{V_{\frac12,6}}(\qq) + \chi_{V_{\frac12,9}}(\qq)+\chi_{V_{\frac12,11}}(\qq)-\cdots =\\
=
\frac{\qq^\frac{1}{48}}{\eta(\tau)}(1-\qq-\qq^6+\qq^9+\qq^{11}-\cdots)=
\frac{\qq^{\frac{1}{48}}}{\eta(\tau)} \sum_{k\in \ZZ}\left(
\qq^{1+(-1+3k)(1+4k)}-\qq^{(1+3k)(1+4k)}
\right)
\end{multline}
\end{example}

Characters of irreducible modules $M_{c,h}$ are the conformal blocks for the torus partition function in a CFT. If the space of fields of a CFT with central charge $c,\bar{c}$ contains primary fields $\Phi_i$ with $i\in I$ (the indexing set for primary fields), with conformal weights $(h_i,\bar{h}_i)$, then the space of states (or space of fields) is
\begin{equation}
\HH=\bigoplus_{i\in I} M_{c,h_i}\otimes M_{\bar{c},\bar{h}_i}
\end{equation}
and the torus partition function is
\begin{equation}\label{l34 torus Z via chi}
Z(\tau)= \tr_\HH \qq^{L_0-\frac{c}{24}} \bar{\qq}^{\ol{L}_0-\frac{\bar{c}}{24}}=
\sum_{i\in I} \chi_{M_{c,h_i}}(\qq)\chi_{M_{\bar{c},\bar{h}_i}}(\bar{\qq})
\end{equation}

\section{Minimal models of CFT}
\subsection{Unitary minimal models}

The following theorem is due to Friedan-Qiu-Shenker (1984) and Goddard Kent-Olive (1986).
\begin{thm}
The irreducible highest weight Virasoro module $M_{c,h}$ is unitary (i.e. the sesquilinear form  $\langle,\rangle$ is positive definite) if 
\begin{enumerate}[(a)]
\item either $c\geq 1$, $h\geq 0$,
\item \label{l34 Kac unitarity (b)} or $c=1-\frac{6}{m(m+1)}$ with $m=2,3,4,\ldots$ and $h=h_{p,q}$ with $1\leq p \leq m-1$, $1\leq q \leq m$.
\end{enumerate}
\end{thm}

Note that for $c$ as in (\ref{l34 Kac unitarity (b)}) above, one has a symmetry in the table of admissible $h_{p,q}$'s: 
\begin{equation}
h_{p,q}=h_{m-p,m+1-q}.
\end{equation}

\marginpar{Lecture 35,\\
11/18/2022}

Fix $m=2,3,4,\ldots$  
The ``minimal model'' $\MM(m,m+1)$ is defined\footnote{
We say ``defined'' a bit sloppily here. To have a CFT, the definition of the space of states as a $\mr{Vir}\oplus \ol{\mr{Vir}}$-module needs to be supplemented with extra data: OPEs of primary fields, or equivalently, structure coefficients of 3-point correlators of primary fields, which then allows to determine all correlators.
} as a CFT with central charge
\begin{equation}
c=\bar{c}=1-\frac{6}{m(m+1)}
\end{equation}
and space of states (or space of fields)
\begin{equation}\label{l35 H M(m,m+1)}
\HH=\bigoplus_{1\leq p \leq m-1,\; 1\leq q \leq m\;/\ZZ_2}  M_{c,h_{p,q}}\otimes \ol{M}_{c,h_{p,q}}.
\end{equation}
Here the sum is over pairs $(p,q)$ where the pairs $(p,q)$ and $(m-p,m+1-q)$ are understood as equivalent; notation ``$/\ZZ_2$'' above means that we should take one representative for each equivalence class. Each term in the sum in (\ref{l35 H M(m,m+1)}) is a representation of left and right Virasoro algebra, $\mr{Vir}\oplus \ol{\mr{Vir}}$, given as a tensor product of two copies of the same irreducible Virasoro module $M_{c,h_{p,q}}$; bar over the second copy of $M$ indicates that we see it as a module over the right (antiholomorphic) copy of Virasoro algebra.

\begin{example}
For $m=3$, the minimal model $\MM(3,4)$ is the CFT with central charge $\frac12$ and three species of primary fields: 
\begin{itemize}
\item 
$\Phi_{1,1}$ of conformal weight $h=\bar{h}=h_{1,1}=0$, 
\item $\Phi_{2,1}$ of conformal weight $h=\bar{h}=h_{2,1}=\frac12$, 
\item 
$\Phi_{1,2}$ of conformal weight $h=\bar{h}=h_{1,2}=\frac{1}{16}$. 
\end{itemize}
Comparing this to (\ref{l32 H non-chiral fermion}), we see that the space of states for $\MM(3,4)$ is the even part of the space of states of the free Majorana fermion, and we can identify the fields as
\begin{equation}
\Phi_{1,1}=\mathbb{1},\quad \Phi_{2,1}=\epsilon,\quad \Phi_{1,2}=\bbsigma
\end{equation}
-- the identity, ``energy'' and ``spin'' fields.

In particular, the CFT minimal model $\MM(3,4)$  corresponds to the  Ising model at critical temperature (at the point of second-order phase transition), and in particular correlators of $\bbsigma$ reproduce the correlators of spins in critical Ising model.

The selection rules (so-called ``fusion rules'') for OPEs are given by the following table
\begin{equation}
\begin{array}{c|ccc}
\times & [\mathbb{1}] & [\epsilon] & [\bbsigma] 
\\ 
\hline
[\mathbb{1}] &  [\mathbb{1}] & [\epsilon] & [\bbsigma] 
\\ {}
 [\epsilon] &  [\epsilon] & [\mathbb{1}] & [\bbsigma] 
\\ {}
[\bbsigma] & [\bbsigma] & [\bbsigma] & [\mathbb{1}] + [\epsilon]
\end{array}
\end{equation}
Here for $\Phi$  a primary field $[\Phi]$ stands for its conformal family (the span of all descendants, or equivalently, the corresponding term in the sum (\ref{l35 H M(m,m+1)})). For instance, the fusion rule $[\bbsigma]\times [\bbsigma]=[\mathbb{1}]+[\epsilon]$ means that in the r.h.s. of the r.h.s of the OPE of any two descendants of $\bbsigma$ one can find only descendants of $\mathbb{1}$ and of $\mathbb{\epsilon}$. We will comment later on where these selection rules for OPEs come from, see Remark \ref{l35 rem: fusion rules}. 
\end{example}

\begin{example}
Case $m=2$ is the ``trivial CFT'' with $c=0$ and a single conformal family with  $h=\bar{h}=h_{1,1}=0$:
\begin{equation}
\HH=M_{0,0}\otimes \ol{M}_{0,0}.
\end{equation}
In fact the irreducible Virasoro module $M_{0,0}$ consists of just the highest vector $|\vac\rangle$ (or $\mathbb{1}$) and all its descendants are zero. 
\end{example}

\subsection{General minimal models}
Let $c=1-\frac{6}{m(m+1)}$ with $m\in \mathbb{Q}$ (rational but not necessarily integer). Assume that 
\begin{equation}
\frac{m+1}{m}=\frac{Q}{P}
\end{equation}
with $Q,P\geq 1$ coprime integers.

As a consequence of Theorem \ref{l34 thm FF maps of Verma modules}, one has that for such $c$, the \emph{maximal}\footnote{``Maximal'' means that they cannot be embedded as proper submodules into any other Verma module} reducible highest weight Verma modules are $V_{c,h_{p,q}}$ with $0\leq p \leq P$, $0\leq q\leq Q$.

The minimal model $\MM(P,Q)$ is defined as a CFT with the space of states (or space of fields)
\begin{equation}\label{l35 H of M(P,Q)}
\HH=\bigoplus_{1\leq p \leq P-1,\; 1\leq q\leq \; Q-1\;/\mathbb{Z}_2} M_{c,h_{p,q}}\otimes \ol{M}_{c,h_{p,q}},
\end{equation}
where $/\ZZ_2$ again means that from each equivalence class $(p,q)\sim (P-p,Q-q)$ we need to pick one representative. 

The minimal models $\MM(P,Q)$ are not unitary (the sesquilinear product on $\HH$ is not positive-definite), unless one has $(P,Q)=(m,m+1)$ for $m=2,3,\ldots$.

\begin{example}
The minimal model $\MM(2,5)$ corresponds to $c=-22/5$ (and $m=\frac23$) and has two primary fields: 
\begin{itemize}
\item $\Phi_{1,1}=\mathbb{1}$ of conformal weight $h=\bar{h}=h_{1,1}=0$,
\item $\Phi_{1,2}$ of conformal weight $h=\bar{h}=h_{1,2}=-\frac15$.\footnote{Note that the negative conformal weight means that the correlator of two such fields increases as the fields get farther apart: $\langle \Phi_{1,2}(w) \Phi_{1,2}(z) \rangle=|w-z|^{\frac45}$ (cf. Lemma \ref{l26 lemma 2-point}).}
\end{itemize}
In particular, it is clear that the model cannot be unitary, since $c<0$ and there is a field with negative conformal weight (each of these observations separately contradicts unitarity).
\end{example}

\begin{example}
The minimal model $\MM(4,5)$ is unitary. It has $c=7/10$ and the array\footnote{Such arrays for minimal models are called ``Kac tables''} of conformal weights $h_{p,q}$ for admissible $p,q$ is
\begin{equation}
\begin{array}{c|ccc}
 &  p=1& p=2&p=3 \\ \hline
q=1& 0 & 7/16 & 3/2 \\
q=2& 1/10 & 3/80 & 3/5 \\
q=3& 3/5 & 3/80 & 1/10 \\
q=4& 3/2 & 7/16 & 0
\end{array}
\end{equation}
In particular, the model has $6=3\times 4/2$ conformal families/species of primary fields. 
\end{example}

Each minimal model $\MM(P,Q)$ has a collection of primary field $\Phi_{p,q}$ of conformal weight $(h_{p,q},h_{p,q})$, with $p,q$ as in the r.h.s. of (\ref{l35 H of M(P,Q)}); $\Phi_{1,1}=\mathbb{1}$ has conformal weight $(0,0)$ and is identified with the identity field.

Some of the fusion rules are:
\begin{equation}\label{l35 fusion rules}
\begin{gathered}
{[\Phi_{1,1}]}\times [\Phi_{p,q}]=[\Phi_{p,q}],\\
[\Phi_{1,2}]\times [\Phi_{p,q}] = [\Phi_{p,q-1}]+[\Phi_{p,q+1}],\\
[\Phi_{2,1}]\times [\Phi_{p,q}] = [\Phi_{p-1,q}]+[\Phi_{p+1,q}].
\end{gathered}
\end{equation}

Minimal models of CFT describe different 2d  systems of statistical mechanics at the point of second-order phase transition (put another way, they describe universality classes of 2d critical phenomena). For instance, one has the following correspondences were identified between 2d systems at the point of phase-transition and minimal models of CFT:

\begin{tabular}{c|c}
CFT minimal model & phase transition \\ \hline
$\MM(3,4)$ & Ising model at critical temperature \\
$\MM(2,5)$ & Yang-Lee edge singularity \\
$\MM(4,5)$ & tricritical Ising model \\
$\MM(5,6)$ & 3-state Potts model \\
$\MM(6,7)$ & tricritical 3-state Potts model
\end{tabular}

\begin{remark}
All primary fields in a minimal model $\MM(P,Q)$ are highest vectors of reducible Virasoro modules (always corresponding to the case $\mr{III}_-$ in Feigin-Fuchs classification, Theorem \ref{l34 thm FF maps of Verma modules}) and thus have vanishing descendants. Therefore any 4-point correlation function of primary fields in $\MM(P,Q)$ can be reduced to a function $F(\lambda)$ of the cross-ratio $\lambda$ satisfying certain ODE (e.g. a hypergeometric equation in the case of fields $\Phi_{1,2},\Phi_{2,1}$), as in the case of the correlator $\langle \bbsigma\bbsigma\bbsigma\bbsigma \rangle$ in Section \ref{sss free fermion correlators}.
\end{remark}

\begin{definition}
One calls a CFT with finitely many primary fields (or equivalently finitely many conformal families -- irreducible summands in the space of states/ space of fields) a \emph{rational} CFT, or RCFT.
\end{definition}

Thus, minimal models are the prime examples of rational CFT. On the other hand, free boson (with values in $\RR$ or $S^1$) is not rational:
 it contains infinitely many primary fields.

To define a CFT, one needs to present two pieces of data:
\begin{itemize}
\item The space of states $\HH$ or equivalently the space of fields $V$ as a $\mr{Vir}\oplus \ol{\mr{Vir}}$-module (with come central charge $c,\bar{c}$) -- in particular, splitting it into irreducible summands, one has conformal families generated by highest weight vectors/primary fields.
\item The coefficients  in 3-point correlation functions  of primary fields (\ref{l27 3-point fun}).
\end{itemize}
This data allows one to recover all correlation functions of all fields but there are two constraints that the data above must satisfy:
\begin{enumerate}[(i)]
\item ``Crossing symmetry'' -- a certain quadratic constraint on the coefficients of 3-point functions of primary fields, see Section \ref{sss 4-pt corr of primary fields}.
\item Modular invariance of genus one partition function.
\end{enumerate}

\begin{remark}
If one defines the space of states to be just the single conformal family generated by the identity field $\mathbb{1}$, then the corresponding ``CFT'' will have correlators and OPEs on the plane but will fail the modular invariance property (unless $c=0$ which is the case of the trivial CFT $\MM(2,3)$).

More explicitly, one computes the torus partition function in $\MM(P,Q)$ using  (\ref{l34 torus Z via chi}) and evaluating the characters as in Example \ref{l34 ex character of M_1/2,0}, resulting in the formula
\begin{multline}\label{l35 torus Z for M(P,Q)}
Z(\tau)=\\
=\frac{|\qq|^\frac{1-c}{12}}{|\eta(\tau)|^2}\sum_{1\leq p \leq P-1,\; 1\leq q\leq \; Q-1\;/\mathbb{Z}_2} 
\left| \qq^{h_{p,q}}\sum_{k\in\ZZ}\Big( \qq^{pq+(-p+Pk)(q+Qk)}-\qq^{(p+Pk)(q+Qk)} \Big)\right|^2.
\end{multline}
This expression is modular invariant (which can be proved by Poisson summation). However, restricting to only the $(p,q)=(1,1)$ term in the sum one obtains a non-modular invariant expression.
\end{remark}

\section{Correlators and OPEs of primary fields in a general RCFT}
\label{sec 7.5}

Consider a general CFT. Fix $\{\Phi_p\}_{p\in I}$ an orthonormal basis of primary fields, with $I$ an indexing set.
\begin{lemma}\label{l35 lemma OPE}
For $\Phi_1,\Phi_2$ primary fields, the OPE has the form
\begin{equation}\label{l35 OPE ansatz}
\Phi_1(w) \Phi_2(z)\sim \sum_{p\in I} \sum_{\vec{k},\vec{\bar{k}}} C_{12p}^{\vec{k},\vec{\bar{k}}} (w-z)^{-h_1-h_2+h_p+|\vec{k}|}
(\bar{w}-\bar{z})^{-\bar{h}_1-\bar{h}_2+\bar{h}_p+|\vec{\bar{k}}|} \Phi_p^{\vec{k},\vec{\bar{k}}}(z),
\end{equation}
where:
\begin{itemize}
\item The first sum is over species primary fields.
\item The second sum over pairs of nondecreasing sequences $1\leq k_1\leq\cdots \leq k_r$ (which we denote $\vec{k}$) and $1\leq \bar{k}_1\leq\cdots \leq \bar{k}_s$ (denoted $\vec{\bar{k}}$), with $r,s\geq 0$; we also denoted $|\vec{k}|=k_1+\cdots + k_r$ and similarly for $\vec{\bar{k}}$; $\Phi_p^{\vec{k},\vec{\bar{k}}}$ is the descendant
\begin{equation}
\Phi_p^{\vec{k},\vec{\bar{k}}}=L_{-k_r}\cdots L_{-k_1}\ol{L}_{-\bar{k}_s}\cdots \ol{L}_{-\bar{k}_1}\Phi_p
\end{equation}
\item The coefficients on the right are
\begin{equation}\label{l35 C_12p^k}
C_{12p}^{\vec{k},\vec{\bar{k}}}=C_{12p}\beta_{12p}^{\vec{k}} \bar\beta_{12p}^{\vec{\bar{k}}}
\end{equation}
where $C_{12p}$ 
are certain coefficients depending on the triple of primary fields $\Phi_1,\Phi_2,\Phi_p$ and $\beta_{12p}^{\vec{k}}$ a certain family of universal\footnote{I.e. not depending on any details of the CFT.} rational functions of $c,h_1,h_2,h_p$ parametrized by the sequence $\vec{k}$; $\bar\beta$ is the same family where $\bar{c},\bar{h}_1,\bar{h}_2,\bar{h}_p$ are used instead. 
\end{itemize}
\end{lemma}
One can always assume the normalization $\beta_{12p}^\varnothing=1$.
\begin{remark}
\begin{enumerate}[(a)]
\item
``Structure constants'' $C_{12p}$ in the r.h.s. of (\ref{l35 C_12p^k}) are the same as the constants appearing in the r.h.s. of the 3-point function (\ref{l27 3-point fun}) of primary fields 
\begin{equation}\label{l35 3-pt fun of primary}
\langle \Phi_1(w)\Phi_2(z)\Phi_p(x) \rangle.
\end{equation} 
Expressed another way, it is the matrix element
\begin{equation}
\langle \Phi_p| \wh\Phi_1(1) |\Phi_2 \rangle.
\end{equation}
It is symmetric under permutations of species $1,2,p$ (as obvious from the previous interpretation).
\item
Remark that as a consequence of Lemma \ref{l35 lemma OPE}, a descendant field $\Phi_p^{\vec{k},\vec{\bar{k}}}$ can appear in the OPE $\Phi_1(w)\Phi_2(z)$ only if the primary field $\Phi_p$ itself appears in that OPE.
\item From the ansatz (\ref{l35 lemma OPE}) it is clear that only finitely many descendants of each primary field $\Phi_p$ contribute to the \emph{singular part} of the OPE.
\end{enumerate}
\end{remark}

\begin{proof}[Sketch of proof of Lemma \ref{l35 lemma OPE}]
The exponents in the ansatz (\ref{l35 OPE ansatz}) follow immediately from Lemma \ref{l26 lemma OPE exponents}. The only thing to check is (\ref{l35 C_12p^k}). 

The idea is to consider 3-point correlation functions
\begin{equation}\label{l35 3-point}
\langle \Phi_1(w) \Phi_2(z) \Phi_p^{\vec{l},\vec{\bar{l}}}(x) \rangle.
\end{equation}
for various nondecreasing sequences $\vec{l},\vec{\bar{l}}$ with $|\vec{l}|=|\vec{k}|$, $|\vec{\bar{l}}|=|\vec{\bar{k}}|$.
On one hand one can find these correlators explicitly by reducing them to a differential operator acting on $\langle \Phi_1(w)\Phi_2(z) \Phi_p(x) \rangle$ (cf. Example \ref{l26 ex: corr of a descendant}), resulting in expressions of the form
 \begin{equation}
 \langle \Phi_p^{\vec{l},\vec{\bar{l}}} | \wh\Phi_1(1) |\Phi_2 \rangle= C_{12p}\gamma_{12p}^{\vec{l}} \bar{\gamma}_{12p}^{\vec{\bar{l}}}
 \end{equation}
with $\gamma_{12p}^{\vec{l}}$ some universal rational functions of $c,h_1,h_2,h_p$ depending on the sequence $\vec{l}$, and similarly for $\bar\gamma$; for convenience we set $z=0,w=1, x\ra\infty$ in the correlator (\ref{l35 3-point}).
%
%
On the other hand one can replace $\Phi_1(w)\Phi_2(z)$ in (\ref{l35 3-point}) with the r.h.s. of (\ref{l35 OPE ansatz}) and evaluate the remaining 2-point functions of descendants in terms of elements of the Gram matrix (\ref{l34 Gram matrix}): 
\begin{equation}
\langle \Phi_p^{\vec{l},\vec{\bar{l}}}| \wh\Phi_1(1) |\Phi_2  \rangle =  \sum_{\vec{k},\vec{\bar{k}}} C_{12p}^{\vec{k},\vec{\bar{k}}} G^p_{\vec{k},\vec{l}}\, \ol{G}^p_{\vec{\bar{k}},\vec{\bar{l}}}\, ,
\end{equation}
where $G^p_{\vec{k},\vec{l}}$ are the matrix elements of the Gram matrix. Here we again set $z=0,w=1,x\ra\infty$.
Comparing the two sides, we obtain the claimed ansatz (\ref{l35 C_12p^k}) with 
\begin{equation}
\beta_{12p}^{\vec{k}}=\sum_{\vec{l}}((G^p)^{-1})_{\vec{k},\vec{l}}\gamma_{12p}^{\vec{l}}.
\end{equation}
\end{proof}

\begin{example}
The first coefficients $\beta_{12p}^{\vec{k}}$ appearing in (\ref{l35 C_12p^k}) are:
\begin{equation*}
\begin{aligned}
&\beta_{12p}^{\varnothing}=1,\\
&\beta_{12p}^{\{1\}}= \frac{h_1-h_2+h_p}{2h_p},\\
& \hspace{-1cm} \left(
\begin{array}{c}
\beta_{12p}^{\{2\}}\\
\beta_{12p}^{\{1,1\}}
\end{array}
\right) =
\left(
\begin{array}{cc} 
 4 h_p+\frac{c}{2} & 6 h_p\\
 6h_p &  2h_p(4h_p+2)
\end{array}
\right)^{-1}
\left(
\begin{array}{c}
2h_1-h_2+h_p\\
(-h_1-h_2+h_p)(3h_1-h_2+h_p+1)+6h_1^2 
\end{array}
\right).
\end{aligned}
\end{equation*}
\end{example}

\begin{remark}\label{l35 rem: fusion rules}
Assume that the primary field $\Phi_1$ has a vanishing descendant al level $N$ (corresponding to a null vector in the corresponding Verma module). Then by the argument of Example \ref{l26 ex: corr of a descendant} there is a degree $\leq N$ differential operator annihilating the 3-point function of primary fields (\ref{l35 3-pt fun of primary}). Combining with the expression (\ref{l27 3-point fun}) for the 3-point function this implies an algebraic equation of degree $\leq N$. Thus, there is an algebraic equation of degree $\leq N$ on the conformal weight $h_p$ of a primary field which (and whose descendant) can appear in the r.h.s. of the OPE (\ref{l35 OPE ansatz}).

This is exactly the case in minimal models $\MM(P,Q)$ and this is how one obtains ``fusions rules'' (\ref{l35 fusion rules}) and, more generally, obtains the result that fields $\Phi_{p,q}$ of the minimal model form a closed algebra under OPEs: no fields with other conformal weights can appear.
\end{remark}

\subsection{4-point correlator of primary fields}\label{sss 4-pt corr of primary fields}
The correlator of four primary fields in a general CFT is bound by global conformal symmetry to be of the form (\ref{l27 n-point fun}):
\begin{equation}\label{l35 4-point fun}
\langle \Phi_1(z_1)\Phi_2 (z_2) \Phi_3(z_3) \Phi_4(z_4) \rangle=
\Big(\prod_{1\leq i<j\leq 4} z_{ij}^{\frac13 \sum_{k=1}^4 h_k-h_i-h_j} 
\bar{z}_{ij}^{\frac13 \sum_{k=1}^4 \bar{h}_k-\bar{h}_i-\bar{h}_j}\Big)
f(\lambda)
\end{equation}
with $f$ a smooth function of the cross-ratio $\lambda=\frac{z_{13}z_{24}}{z_{14} z_{23}}\in \CP^1\backslash\{0,1,\infty\}$. We can use M\"obius symmetry to fix points $z_2,z_3,z_4$ at $1,0,\infty$, then $z_1$ becomes $\lambda$. Thus, we have
\begin{equation}
\langle \Phi_4|\wh\Phi_2(1)  \wh\Phi_1(\lambda) |\Phi_3 \rangle =f(\lambda)
\end{equation}
Applying the OPE (\ref{l35 OPE ansatz}) to the expression $\wh\Phi_1(\lambda) |\Phi_3 \rangle$ above, we obtain
\begin{multline}\label{l35 f}
f(\lambda)=\sum_{p\in I}\sum_{\vec{k},\vec{\bar{k}}} C_{13p}\beta_{13p}^{\vec{k}}\bar\beta_{13p}^{\vec{\bar{k}}} \lambda^{-h_1-h_3+h_p+|\vec{k}|} \bar\lambda^{\bar{h}_1-\bar{h}_3+\bar{h}_p+|\vec{\bar{k}}|} \langle \Phi_4|\wh\Phi_2(1) | \Phi_p^{\vec{k},\vec{\bar{k}}} \rangle\\
=\sum_{p\in I} C_{42p}C_{13p} \mc{F}_{13}^{24}(p|\lambda) \ol{\mc{F}}_{13}^{24}(p|\bar\lambda)
\end{multline}
where 
\begin{equation}\label{l35 conf block}
\mc{F}_{13}^{24}(p|\lambda)\colon= \lambda^{-h_1-h_3+h_p}\sum_{K=0}^\infty \lambda^{K}\sum_{\vec{k},\vec{l}\;\mr{with}\; |\vec{k}|=|\vec{l}|=K}  \beta_{13p}^{\vec{k}} G^p_{\vec{k},\vec{l}}\, \beta_{24p}^{\vec{l}},
\end{equation}
and similarly for $\ol{\mc{F}}$. Here $G^p_{\vec{k},\vec{l}}$ is a matrix element of the Gram matrix (\ref{l34 Gram matrix}).

The r.h.s. of (\ref{l35 conf block}) is a holomorphic function of $\lambda$ (possibly with monodromy at $\lambda=0$), the sum over $K$ is absolutely convergent in the unit disk $|\lambda|<1$. Thus the function $f(\lambda)$ determining the 4-point correlation function is a sum over $I$ (i.e. a finite sum for a rational CFT) of products of certain universal holomorphic and antiholomorphic functions, with coefficients given in terms of coefficients of 3-point functions. This begins to justify the claim that coefficients of 3-point functions determine all correlators in a CFT.

Function (\ref{l35 conf block}) is called the \emph{conformal block} of the 4-point function, cf. (\ref{l30 conf blocks}).

Computation (\ref{l35 f}) can be thought of in terms of Segal's axioms, as cutting a 4-punctured sphere $\CP^1$ by a circle $S^1_r$ of radius $|\lambda|<r<1$ centered at the origin and evaluating the corresponding composition as a sum over the basis in the space of states for the circle $S^1_r$:
\begin{equation}
\langle \Phi_4|\wh\Phi_2(1)  \wh\Phi_1(\lambda) |\Phi_3 \rangle =
\sum_{p\in I}\sum_{\vec{k},\vec{\bar{k}}} \langle \Phi_4|\wh\Phi_2(1) | \Phi_p^{\vec{k},\vec{\bar{k}}} \rangle \langle \Phi_p^{\vec{k},\vec{\bar{k}}}| \wh\Phi_1(\lambda) |\Phi_3 \rangle.
\end{equation}

\begin{figure}[H]
\begin{center}
\includegraphics[scale=0.7]{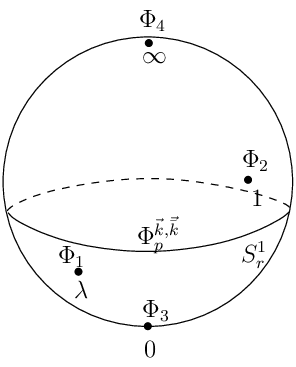}
\end{center}
\caption{Cutting the 4-point correlator on $\CP^1$.}
\end{figure}

\subsubsection{Crossing symmetry.}
Starting from the 4-point function (\ref{l35 4-point fun})
and switching the roles of fields $\Phi_2(z_2)$ and $\Phi_3(z_3)$,
one obtains another expression for the 4-point function:
\begin{equation}\label{l35 f(lambda) 2}
f(\lambda)=\sum_{p\in I} C_{43p}C_{12p}\mc{F}_{12}^{34}(p|1-\lambda)\ol{\mc{F}}_{12}^{34}(p|1-\bar\lambda).
\end{equation}
Expressions (\ref{l35 f}) and (\ref{l35 f(lambda) 2}) must agree in the region where r.h.s. in both cases is defined, i.e., in the region $\{\lambda\in \CC\;|\; |\lambda|<1,\; |1-\lambda|<1\}$. This is the so-called ``crossing symmetry.'' In particular it implies nontrivial quadratic relations (a version of associativity constraint) on the coefficients of 3-point functions of primary fields.

In terms of Segal's axioms, crossing symmetry is just the statement that cutting a 4-punctured sphere in two ways yields the same partition function.

\begin{figure}[H]
\begin{center}
\includegraphics[scale=0.7]{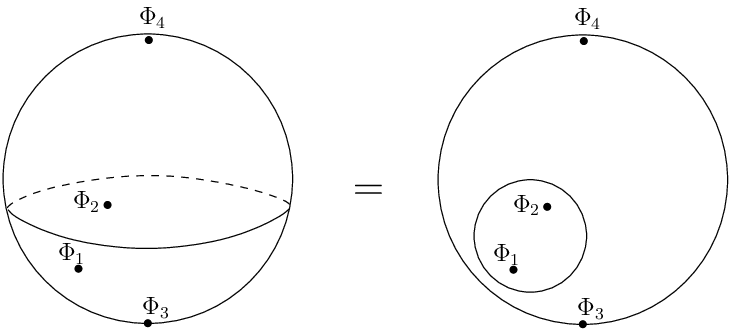}
\end{center}
\caption{Crossing symmetry = cutting the 4-point correlator on $\CP^1$ in two ways.}
\end{figure}

Replacing punctures by finite circles, the same picture can be regarded as cutting a sphere with four holes into two pairs of pants in two different ways.

\begin{figure}[H]
\begin{center}
\includegraphics[scale=0.7]{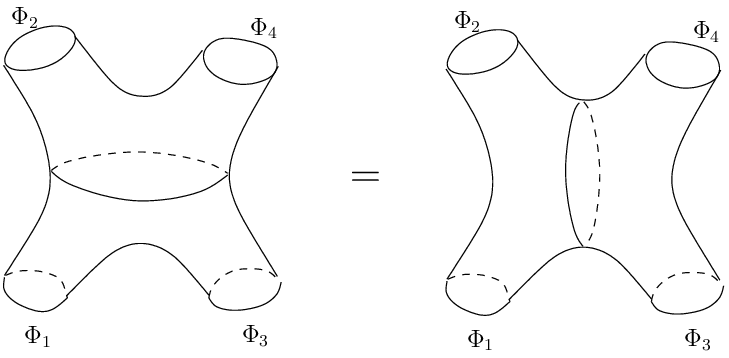}
\end{center}
\caption{Another visualization of crossing symmetry. 
}
\end{figure}

\subsection{$n$-point correlator of primary fields}\label{ss n-point correlator of primary fields}
The strategy above can be applied to $n$-point correlators of primary fields on $\CC$ (or $\CP^1$), with $n\geq 4$.

First note that Lemma \ref{l35 lemma OPE} has a straightforward generalization to the case of an OPE between descendants:
\begin{multline}\label{l35 OPE descendants}
\Phi_1^{\vec{k}_1,\vec{\bar{k}}_1}(w) \Phi_2^{\vec{k}_2,\vec{\bar{k}}_2}(z) \sim \\
\sim 
\sum_{p\in I}\sum_{\vec{k},\vec{\bar{k}}} C_{12p}^{\vec{k}_1,\vec{k}_2,\vec{k};\vec{\bar{k}}_1,\vec{\bar{k}}_2,\vec{\bar{k}}} (w-z)^{-h_1-h_2+h_p+|\vec{k}|-|\vec{k}_1|-|\vec{k}_2|} (\bar{w}-\bar{z})^{-\bar{h}_1-\bar{h}_2+\bar{h}_p+|\vec{\bar{k}}|-|\vec{\bar{k}}_1|-|\vec{\bar{k}}_2|} \Phi_p^{\vec{k},\vec{\bar{k}}}(z)
\end{multline}
with 
\begin{equation}
C_{12p}^{\vec{k}_1,\vec{k}_2,\vec{k};\vec{\bar{k}}_1,\vec{\bar{k}}_2,\vec{\bar{k}}}=
C_{12p}\beta_{12p}^{\vec{k}_1,\vec{k}_2,\vec{k}} \bar\beta_{12p}^{\vec{\bar{k}}_1,\vec{\bar{k}}_2\vec{\bar{k}}}.
\end{equation}
Here all conventions are as in Lemma \ref{l35 lemma OPE}; $\beta_{12p}^{\vec{k}_1,\vec{k}_2,\vec{k}}$ are again some universal rational functions of conformal weights and the central charge. To obtain (\ref{l35 OPE descendants}), one repeatedly applies Virasoro generators (centered at $z$ and at $w$) to the OPE (\ref{l35 OPE ansatz}).

Given a correlator of primary fields
\begin{equation} \label{l35 corr of n primary fields}
\langle \Phi_1(z_1)\cdots \Phi_n(z_n) \rangle,
\end{equation}
choose a tree $T$ with trivalent vertices and with $n$ leaves decorated by $\Phi_i(z_i)$. The tree $T$ 
determines
an asymptotic region in the configuration space $C_n(\CP^1)$, prescribing in which order the points are approaching one another (for instance, $z_2$ approaches $z_1$ and $z_4$ approaches $z_3$; then $z_3$ approaches $z_1$).\footnote{This asymptotic region corresponds to a compactification stratum of complex codimension $n-3$ in the Fulton-MacPherson compactification of $C_n(\CP^1)$.}
In this asymptotic region, one can compute the correlator (\ref{l35 corr of n primary fields}) by iteratively using the OPE (\ref{l35 OPE descendants}). The result is a sum over ``intermediate states/fields'' -- a sum over species of primary fields and partitions $\vec{k},\vec{\bar{k}}$ decorating each inner edge of $T$. Such a sum coverges in a finite region of the configuration space.\footnote{One can also think of $T$ as prescribing a way to cut $\CP^1$ by $n-3$ circles (corresponding to the inner edges of the tree) into pairs of pants. The ``asymptotic region of the configuration space'' picture corresponds to the circles being infinitesimal, while a finite region of the configuration space corresponds to having finite circles.}

The resulting formula for the correlator has the general form
\begin{equation}\label{l35 corr of n fields via conf blocks}
\langle \Phi_1(z_1)\cdots \Phi_n(z_n) \rangle=
\sum_{p_1,\ldots,p_{n-3}\in I} \prod_{v\in V_T} C_{p_{v_1}p_{v_2}p_{v_3}} \cdot \mc{F}_T(z_1,\ldots,z_n) \ol{\mc{F}}_T(\bar{z}_1,\ldots,\bar{z}_n)
\end{equation}
Here the sum is over species of primary fields decorating the inner edges of $T$. The first term in the r.h.s. is the product over inner vertices of the structure constant of 3-point functions, corresponding to the decorations of the three incident edges of the vertex. The two subsequent factors $\mc{F}, \ol{\mc{F}}$ are a holomorphic and an antiholomorphic function on the configuration space, depending on the tree $T$ and on the conformal weights ($h$ for $\mc{F}$ and $\bar{h}$ for $\ol{\mc{F}}$) of the primary fields decorating the leaves and the inner edges. The functions $\mc{F}_T$, $\ol{\mc{F}}_T$ arising from summation over intermediate descendants are the conformal blocks of the $n$-point correlation function (\ref{l35 corr of n primary fields}).

Regions of convergence corresponding to different trees may overlap. The resulting formulae (\ref{l35 corr of n fields via conf blocks}) give compatible answers on the overlap due to the crossing symmetry.

\chapter{Wess-Zumino-Witten model}

\marginpar{Lecture 36,\\ 11/21/2022}

\section{Affine Lie algebras}\label{ss affine Lie algebras}
For details on affine Lie algebras we refer to \cite{Kac}, \cite{Kohno}, \cite{DMS}.

Fix a compact simple Lie group $G$, denote its Lie algebra $\g$ and the complexification of the latter $\g_\CC=\CC\otimes \g$. 

\begin{definition}
The \emph{loop group} $LG=\mr{Map}(S^1,G)$ is the group of $G$-valued smooth functions on a circle with pointwise multiplication. 
Its complexified Lie algebra $L\g=\mr{Map}(S^1,\g_\CC)$ -- the Lie algebra of $\g_\CC$-valued functions on $S^1$ with pointwise Lie bracket is called the \emph{loop Lie algebra}.
\end{definition}

 One can identify loop Lie algebra with the algebra of $\g_\CC$-valued Laurent polynomials
\begin{equation}\label{l36 Lg}
L\g=\g_\CC\otimes \CC[t,t^{-1}]
\end{equation}
where $t$ is the complex coordinate on the unit circle $S^1=\{t\in\CC\;|\; |t|=1\}$.\footnote{One can choose different completions of the algebra of Laurent polynomials in (\ref{l36 Lg}) corresponding to different regularity assumptions on the allowed maps from $S^1$ to $\g_\CC$, cf. the discussion of models of Witt algebra in Section \ref{sss: Witt algebra}. We will not dwell on this point.} The Lie bracket in $L\g$ is
\begin{equation}
[X\otimes f,Y\otimes g]= [X,Y]\otimes fg
\end{equation}
for $X,Y\in \g_\CC$, $f,g\in \CC[t,t^{-1}]$.

\begin{definition}
The affine Lie algebra $\wh\g$ associated with $\g$ is defined as the unique (up to normalization) central extension $\wh\g=L\g\oplus \CC\cdot \mathbb{K}$ of the loop Lie algebra, equipped with Lie bracket
\begin{equation}\label{l36 ghat bracket}
[X\otimes f,Y\otimes g]_{\wh{\g}}=[X,Y]\otimes fg+\mathbb{K} \langle X,Y \rangle_\g \mr{res}_{t=0} (df\cdot g).
\end{equation}
Here $\mathbb{K}$ is the central element, $\langle,\rangle_\g$ is the Killing form\footnote{
We assume the normalization of the Killing form $\langle X,Y \rangle_\g=\tr (XY)$ -- the trace of the product in the \emph{fundamental} representation of $\g$ (e.g. in the 2-dimensional representation for $\g=\mathfrak{su}(2)$). 
} on $\g$ and the residue $\mr{res}_{t=0}(\cdots)$ returns the coefficient of  $t^{-1}dt$ in the 1-form $(\cdots)$. 
\end{definition}

One can write the Lie bracket (\ref{l36 ghat bracket}) more explicitly:
\begin{equation}\label{l36 ghat bracket 2}
[X\otimes t^n,Y\otimes t^{m}]=[X,Y]\otimes t^{n+m}+\mathbb{K} \langle X,Y\rangle_\g n\delta_{n,-m}.
\end{equation}
We will be using a shorthand notation $X_n\colon= X\otimes t^n$.

\begin{remark} The statement that (\ref{l36 ghat bracket}) is the \emph{unique} up to normalization central extension of the loop Lie algebra is tantamount to a statement about Lie algebra cohomology:
\begin{equation}\label{l36 H^2(Lg)}
H^2_\mr{Lie}(L\g,\CC) = \CC,
\end{equation} 
where the nontrivial 2-cocycle is given by the rightmost term in (\ref{l36 ghat bracket}). 

The result (\ref{l36 H^2(Lg)}) uses the fact that $\g$ is simple. For $\g$ semisimple with $n$ simple summands $\g=\g_1\oplus\cdots \oplus \g_n$, the r.h.s. of (\ref{l36 H^2(Lg)}) is $\CC^n$ -- there are $n$ independent 2-cocycles corresponding to Killing forms on $\g_i$. 
\end{remark}

\begin{remark} If we set $\g=\RR$ and $\langle X,Y \rangle_\g=XY$ for $X,Y\in \RR$,\footnote{
This example is somewhat outside the setup of this section: $\RR$ is not the Lie algebra of a compact simple group and this choice of $\langle,\rangle_\g$ is not the Killing form (the Killing form for $\g=\RR$ is zero).
} then (\ref{l36 ghat bracket}) becomes the Lie bracket of the Heisenberg Lie algebra (\ref{l17 Heisenberg Lie algebra}), (\ref{l17 Heis comm rel}), so in this case one has $\wh{g}=\mr{Heis}$.
\end{remark}

Similarly to the loop Lie algebra $L\g$, the loop group $LG$ also has a family of central extensions $\wh{LG}^k$,
\begin{equation}\label{l36 LG central extension}
1\ra \CC^*\ra  {\wh{LG}}^k \ra LG \ra 1,
\end{equation}
with the ``level'' parameter $k=1,2,3,\ldots$; here $\wh{LG}^k$ is a principal $\CC^*$-bundle over $LG$ with first Chern class $c_1=k\in H^2(LG,\ZZ)\simeq \ZZ$. 

At the level of Lie algebra, the central extension $\wh{LG}^k$ corresponds to the affine Lie algebra $\wh{\g}$ where $\mathbb{K}$ is identified with $k\cdot\mr{Id}$ -- an integer multiple of identity (in particular, an $\wh{LG}^k$-module is automatically a $\wh{\g}$-module, with $\mathbb{K}$ acting by $k\cdot \mr{Id}$).

\textbf{Notation.} The affine Lie algebra $\wh\g$ with the central element identified with $k\cdot\mr{Id}$, with $k$ an integer, is customarily denoted $\wh\g_k$.

\subsection{Highest weight modules over $\wh\g$}
Fix a decomposition
\begin{equation}\label{l36 g Borel decomp}
\g_\CC=\g_+\oplus \mathfrak{h}\oplus \g_-
\end{equation}
with $\mathfrak{h}$ the Cartan subalgebra, $\g_+$ the span of positive roots $\{e_\alpha\}_{\alpha>0}$ of $\g$ and $\g_-$ the span of negative roots $\{e_{\alpha}\}_{\alpha<0}$.

Consider the following decomposition of the affine Lie algebra $\wh{\g}$:
\begin{equation}\label{l36 ghat Borel decomp}
\wh{\g}=\underbrace{(\g\otimes t\, \CC[t]\oplus \g_+)}_{N_+}\oplus \underbrace{(\CC\cdot \mathbb{K}\oplus \frh)}_{N_0} \oplus \underbrace{(\g\otimes t^{-1}\CC[t^{-1}] \oplus \g_- )}_{N_-}.
\end{equation}

A Verma module over $\wh{\g}$ is defined (cf. footnote \ref{l34 footnote: Verma}) as 
\begin{equation}\label{l36 Verma}
V^{\wh{\g}}_{k,\lambda}=U(\wh{\g})\otimes_{U(N_0\oplus N_+)} \CC_{k,\lambda}.
\end{equation}
Here:
\begin{itemize}
\item $k\in \CC 
$ is the level\footnote{
In the context of Verma modules over $\wh\g$, the level does not have to be an integer and $\lambda\in \CC^r$ can be any vector. However, more detailed structure of the Verma module (e.g. null vectors) is sensitive to integrality of $k$ and to $\lambda$ belonging to the weight lattice of $\g$.
} and $\lambda=(\lambda^1,\ldots,\lambda^r)$ is a highest weight of $\g$, with $r=\dim \frh$ the rank of $\g$. We assume that a basis $\tau^1,\ldots,\tau^r$ in $\frh$ is fixed.
\item $\CC_{k,\lambda}$ is a 1-dimensional module over $N_0\oplus N_+$ where $N_+$ acts by zero, $\mathbb{K}$ acts by multiplication by the level $k$ and elements  of the Cartan $\tau^i\in\frh$ act by multiplication by $\lambda^i$.
\item $U(\cdots)$ is the universal enveloping algebra. 
\end{itemize}

Let us denote by $v$  the highest weight vector in (\ref{l36 Verma}) -- the generator of $\CC_{k,\lambda}$.

As in Virasoro case, in $V^{\wh{\g}}_{k,\lambda}$ one can have null vectors -- vectors (distinct from the highest weight vector $v$) annihilated by $N_+$.

The \emph{irreducible} highest weight module (of level $k$, with highest weight $\lambda$) is
\begin{equation}
M^{\wh{\g}}_{k,\lambda}=V^{\wh{\g}}_{k,\lambda}/\nu
\end{equation}
-- the quotient of the Verma module by the maximal proper submodule. As in the Virasoro case, $\nu$ can also be described as
\begin{itemize}
\item the submodule generated by the null-vectors,
\item or equivalently as the kernel of the sesquilinear form on $V^{\wh{\g}}_{k,\lambda}$ characterized by the properties $\langle v,v\rangle=1$, $(X\otimes t^n)^+=X^+\otimes t^{-n}$.
\end{itemize}

\begin{remark}
It is convenient to adjoin to $\wh\g$ an extra generator (``grading operator'' or ``Euler vector field'') $\delta=-t\frac{d}{dt}$ satisfying the commutation relations
\begin{equation}\label{l36 [delta,X]}
[\delta, X\otimes t^j]=-j X\otimes t^j,\quad [\delta,\mathbb{K}]=0.
\end{equation}
The algebra $\wh{\g}\oplus \CC\cdot \delta$ is called the \emph{affine Kac-Moody algebra}.
\end{remark}

In a highest weight module $W$, if we set $\delta(v)=0$, the module becomes $\ZZ_{\geq 0}$-graded by eigenvalues $n_\delta$ of $\delta$: 
\begin{equation}\label{l36 W= sum of W(n_delta)}
W=\bigoplus_{n_\delta=0}^\infty W(n_\delta).
\end{equation}
We will call $n_\delta$ ``depth.''\footnote{It is not a standard term; we use it because the word ``level'' already has another meaning in the context of affine Lie algebras.} 

Note that each term $W(n_\delta)$ in the r.h.s. of (\ref{l36 W= sum of W(n_delta)}) carries a representation of $\g$ (without the hat).
In particular, for $W$ the Verma module and $n_\delta=0$ one has that $V^{\wh{\g}}_{k,\lambda}(0)$ is the Verma module $V^\g_\lambda$ of $\g$ with highest weight $\lambda$  obtained by acting on $v$ by elements of $\g_-$.
Similarly, for the irreducible $\wh\g$-module one has that $M^{\wh{\g}}_{k,\lambda}(0)=M^\g_\lambda$ is the irreducible representation of $\g$ with highest weight $\lambda$.

\subsubsection{Integrable highest weight modules.}
There is a distinguished set of irreducible highest weight modules over $\wh{\g}$ -- ``integrable highest weight modules'' for positive integer level $k=1,2,3,\ldots$ Their equivalent characterizations are:
\begin{enumerate}[(i)]
\item The module $M_{k,\lambda}^{\wh{\g}}$ is integrable if the action of $\wh{\g}$ on it integrates to the action of the group $\wh{LG}^k$.
\item (Purely Lie algebraic definition.) The module $M_{k,\lambda}^{\wh{\g}}$ is integrable if it satisfies the ``local nilpotency condition'': for any $u\in M_{k,\lambda}^{\wh{\g}}$, any $j\in \ZZ$ and any root $e_\alpha$ of $\g$ there exists $N$ such that
\begin{equation}
(e_\alpha\otimes t^j)^N u=0.
\end{equation}
\end{enumerate}
If the irreducible module $M^{\wh\g}_{k,\lambda}$ is integrable, we will also denote it $H_{k,\lambda}$.

\begin{thm}[see Kac \cite{Kac}]
There are finitely many integrable highest weight $\wh{\g}$-modules for any given positive integer level $k=1,2,3,\ldots$.
\end{thm}

\begin{example}
Consider the case $G=SU(2)$. In the complexified Lie  algebra $\g_\CC=\CC\otimes\mathfrak{su}(2)=\mathfrak{sl}(2,\CC)$ one can consider the standard basis
\begin{equation}
E= \left(  \begin{array}{cc}
0&1\\ 0&0
\end{array} \right) ,\quad
F= \left(  \begin{array}{cc}
0&0\\ 1&0
\end{array} \right) ,\quad
H= \left(  \begin{array}{cc}
1&0\\ 0&-1
\end{array} \right) 
\end{equation}
satisfying the commutation relations 
\begin{equation}
[H,E]=2E,\quad [H,F]=-2F,\quad [E,F]=H.
\end{equation}
We consider $H$ as the basis vector for the Cartan subalgebra $\frh$, $E$ as the positive root and $F$ the negative root, i.e., the decomposition (\ref{l36 g Borel decomp}) is
\begin{equation}
\mathfrak{sl}(2,\CC)=\underbrace{\CC\cdot E}_{\g_+} \oplus \underbrace{\CC\cdot H}_{\frh} \oplus \underbrace{\CC\cdot F}_{\g_-}
\end{equation}

\begin{figure}[H]
\begin{center}
\includegraphics[scale=0.7]{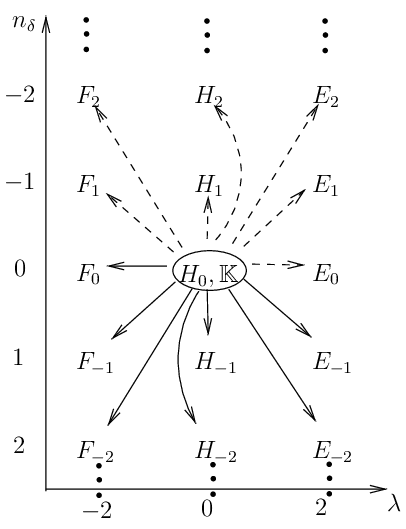}
\end{center}
\caption{Root diagram of $\wh{\mathfrak{su}(2)}$. Positive roots (basis of $N_+$, cf. (\ref{l36 ghat Borel decomp})) are indicated by dashed arrows and negative roots (basis of $N_-$) -- by solid arrows. The encircled part corresponds to the Cartan subalgebra $N_0$. The diagram extends infinitely vertically.
}
\end{figure}

Fix the level $k=1,2,3,\ldots$. Then the irreducible highest weight module $M^{\wh{\mathfrak{su}(2)}}_{k,\lambda}$ is integrable if and only if the highest weight $\lambda$ is an integer in the range $0\leq \lambda \leq k$. We denote this integrable module $H_{k,\lambda}$; it can be realized as the quotient of the Verma module
$V^{\wh{\mathfrak{su}(2)}}_{k,\lambda}$ by the submodule $\nu$ generated by two null vectors\footnote{In fact, $V^{\wh{\mathfrak{su}(2)}}_{k,\lambda}$ contains other null vectors, but they are contained in $\nu$.} 
\begin{equation}
\chi=(E_{-1})^{k-\lambda+1}v,\quad \psi =(F_0)^{\lambda+1}v.
\end{equation}
At depth $n_\delta=0$, $H_{k,\lambda}$ is the standard irreducible representation of $\mathfrak{sl}(2,\CC)$ of dimension $\lambda+1$ (or the ``representation of spin $\frac{\lambda}{2}$'').

As an illustration, consider the case $k=1$, $\lambda=0$. Here are the dimensions of first weight spaces (joint eigenspaces of $H$ and $\delta$), a.k.a. multiplicities of weights, in the Verma module $V^{\wh{\mathfrak{su}(2)}}_{1,0}$:\footnote{The generating function for the numbers in this table is 
$$(1-\alpha^2\tau)^{-1}\prod_{n=0}^\infty((1-\alpha^2 \tau^{2+n})(1-\tau^{1+n})(1-\alpha^{-2}\tau^n))^{-1}.$$
The coefficient of $\alpha^{2k}\tau^l$ in this function is the dimension of the weight space with $H$ eigenvalue $2k$ and $n_\delta=l$. This generating function counts the ``nondecreasing'' words made out of the ordered alphabet $E_{-1};F_{0},H_{-1},E_{-2};F_{-1},H_{-2},E_{-3};F_{-2},H_{-3},E_{-4},\ldots$ (ordered by $n_\delta 
-\frac12\,(H\mr{-eigenvalue})$) -- such words give a Poincar\'e-Birkhoff-Witt basis in $U(N_-)$ and hence in the Verma module.
}
\begin{equation}
\begin{array}{c|ccccccccccc}
n_\delta \;\backslash H\mr{-e.v.} &-10& -8&-6&-4&-2&0&2&4&6&8&10 \\ \hline
0& 1&1& 1&1&1&\boxed{1}&&&&&  \\
1&3& 3&3&3&3 &2&1 &&&& \\
2&9&9&9&9&8&6&3&1&&&  \\
3& 22&22 & 22&21&19&14&8&3&1&& \\
4&51&51&50&48&42&32&19&9&3&1&
\end{array}
\end{equation}
Here an empty cell means that the corresponding weight space is zero; we are indicating $H$-eigenvalue horizontally and $\delta$-eigenvalue vertically. The boxed entry corresponds to the highest vector $v$.  The cell at position $(2i,i)$ corresponds to the weight space $\CC\cdot (E_{-1})^i v$; the cell at position $(-2i,0)$ corresponds to the weight space $\CC\cdot (F_0)^iv$. 

The similar table of multiplicities for the integrable module $H_{1,0}$ is the following:\footnote{See Figure 14.4 and Table 15.1 in \cite{DMS}. For (\ref{l36 H11 multiplicities}) see Table 15.2 in \cite{DMS}.}
\begin{equation}
\begin{array}{c|ccccccc}
n_\delta \;\backslash H\mr{-e.v.}&-6&-4&-2&0&2&4&6 \\ \hline
0 & &&&\boxed{1}&&& \\
1& &&1&1&1&& \\
2&&&1&2&1&&\\
3&&&2&3&2&&\\
4&&1&3&5&3&1&
\end{array}
\end{equation}
This table illustrates e.g. that at the representation of $\mathfrak{sl}(2,\CC)$ arising at a fixed depth $n_\delta>0$ is finite-dimensional but generally not irreducible.

For the second integrable module arising at level $k=1$, $H_{1,1}$, the table of multiplicities is
\begin{equation}\label{l36 H11 multiplicities}
\begin{array}{c|cccccc}
n_\delta \;\backslash H\mr{-e.v.}& -5 & -3 & -1 & 1& 3& 5\\ \hline
0& &&1&\boxed{1}&& \\
1& &&1&1&&\\
2& &1&2&2&1&\\
3& &1&3&3&1&
\end{array}
\end{equation}
\end{example}

\subsection{Sugawara construction}

Sugawara construction is a realization of Virasoro algebra (with some particular value of of central charge) in terms of quadratic expressions in generators of the affine Lie algebra $\wh{\g}$. Put another way, it is an embedding $\mr{Vir}\hookrightarrow U(\wh{\g})$ of Virasoro into (the degree two part of) the enveloping algebra of $\wh\g$.

Let $\{T^a\}$ be an orthonormal basis in $\g$ with respect to the Killing form. The quadratic Casimir element 
\begin{equation}
\mr{Cas}\colon= \sum_a T^a T^a \in U(\g)
\end{equation}
acts on the irreducible $\g$-module with highest weight $\lambda$ 
by multiplication by a constant $C_\lambda$,
\begin{equation}\label{l36 C_lambda}
\mr{Cas}=C_\lambda\cdot \mr{Id}\quad \mr{on}\; M^\g_\lambda.
\end{equation}

We also denote the normalized trace of the Casimir element in the adjoint representation of $\g$ by
\begin{equation}
h^\vee\colon=\frac{\tr_\g \mr{ad}(\mr{Cas})}{2\dim\g}
\end{equation}
-- it is the so-called dual Coxeter number of $\g$.

\begin{thm}[Sugawara, \cite{Sugawara}]
Let $W$ be a highest weight
 $\wh{\g}$-module  on which $\mathbb{K}$ acts by multiplication by a number $k\in \CC$, $k\neq -h^\vee$. Consider the elements 
\begin{equation}\label{l36 L_n Sugawara}
L_n=\frac{1/2}{k+h^\vee} \sum_{j\in\ZZ} \sum_{a=1}^{\dim\g} :T^a_j T^a_{n-j}:\qquad \in \mr{End}(W),
\end{equation}
where $T^a_i=T^a\otimes t^i$ and the normal ordering symbol $:\cdots:$ puts $T^a_{>0}$ to the right of $T^0_{<0}$.\footnote{Note that the normal ordering only affects the expression for $L_0$, as $T^a_j$ and $T^a_{n-j}$ commute for $n\neq 0$.}
Then:
\begin{enumerate}[(a)]
\item The operators $L_n$ satisfy Virasoro commutation relations with central charge
\begin{equation}
c=\frac{k\cdot \dim\g}{k+h^\vee}.
\end{equation}
\item The commutation relation between operators (\ref{l36 L_n Sugawara}) and the generators of $\wh\g$ is
\begin{equation}\label{l36 [L_n,X]}
[L_n,X_j]=-j X_{n+j}
\end{equation}
for any $X\in\g$.
\item  If $W=H_{k,\lambda}$ is an integrable $\wh\g$-module and $v$ is the highest weight vector, then one has
\begin{equation}\label{l36 L_0 v}
L_0 v= \frac{\frac12 C_\lambda}{k+h^\vee} v,
\end{equation}
with $C_\lambda$ the value of the quadratic Casimir in the representation $M^\g_\lambda$, as in (\ref{l36 C_lambda}).
\end{enumerate}
\end{thm}
For the proof see e.g. Theorem 10.1 and Proposition 10.1 in \cite{Kac_Raina}.

Comparing (\ref{l36 [L_n,X]}), (\ref{l36 L_0 v}) and (\ref{l36 [delta,X]}) we note that in the decomposition of the integrable module by depth
\begin{equation}
H_{k,\lambda}=\bigoplus_{n_\delta\geq 0} H_{k,\lambda}(n_\delta),
\end{equation}
the term $H_{k,\lambda}(n_\delta)$ in the r.h.s. is the eigenspace of $L_0$ with eigenvalue
\begin{equation}
\Delta+ n_\delta,
\end{equation}
where
\begin{equation}
\Delta=\frac{\frac12 C_\lambda}{k+h^\vee} 
\end{equation}
is the constant in (\ref{l36 L_0 v}).
Put another way, one has 
\begin{equation}
L_0=\Delta\cdot\mr{Id}+\delta
\end{equation}
as an equality of operators on
$H_{k,\lambda}$.

Also note that all elements of $H_{k,\lambda}(0)$ are annihilated by $L_{>0}$, i.e. they are all Virasoro-highest weight (or ``Virasoro-primary'') vectors with $L_0$-eigenvalue $\Delta$. 
There may also be other Virasoro-primary vectors in $H_{k,\lambda}$ emerging at depths $n_\delta>0$.

\begin{example}
For $\g=\mathfrak{su}(2)$, one has $h^\vee=2$ (more generally, for $\g=\mathfrak{su}(N)$, one has $h^\vee=N$), thus (\ref{l36 L_n Sugawara}) becomes
\begin{equation}\label{l36 L_n su(2)}
L_n=\frac{1/2}{k+2} \sum_{j\in \ZZ}\sum_{a=1}^3 :T^a_j T^a_{n-j}:.
\end{equation}
For the orthonormal basis $\{T^a\}$ in $\mathfrak{su}(2)$, one can choose the appropriately normalized Pauli matrices,
\begin{equation}\label{l36 T^a basis}
T^1= \frac{1}{\sqrt{2}}\left(  \begin{array}{cc}
0&1\\ 1&0
\end{array} \right) ,\quad
T^2= \frac{1}{\sqrt{2}}\left(  \begin{array}{cc}
0&-i\\ i&0
\end{array} \right) ,\quad
T^3= \frac{1}{\sqrt{2}} \left(  \begin{array}{cc}
1&0\\ 0&-1
\end{array} \right) .
\end{equation}
The operators (\ref{l36 L_n su(2)}) satisfy Virasoro commutation relations with central charge
\begin{equation}
c=\frac{3k}{k+2}.
\end{equation}
For $W=H_{k,\lambda}$ an integrable $\wh{\mathfrak{su}(2)}$-module, the highest vector satisfies 
\begin{equation}
L_0 v=\frac{\frac14\lambda (\lambda+2)}{k+2} v,
\end{equation}
since for $\g=\mathfrak{su}(2)$ the value of the quadratic Casimir in an irreducible representation is 
\begin{equation}
C_\lambda=\frac12 \lambda (\lambda+2).
\end{equation}
\end{example}

\marginpar{Lecture 37,\\ 11/28/2022}

\section{Wess-Zumino-Witten model as a classical field theory}
Let $G$ be a compact simple, simply connected matrix group (keeping in mind $G=SU(2)$ in the fundamental representation as the main example).

Consider the following 3-form on $G$:
\begin{equation}\label{l37 sigma}
\sigma=\frac{1}{24\pi^2} \tr \left((X^{-1}dX)\wedge (X^{-1}dX)\wedge (X^{-1}dX)\right)\quad \in \Omega^3(G)
\end{equation}
It is known as the Cartan 3-form on $G$; it is left- and right-invariant under $G$-action and represents the image of the generator of $H^3(G,\ZZ)\simeq \ZZ$ in de Rham cohomology $H^3(G,\RR)$. In particular, the form $\sigma$ has integer periods. 

\begin{example}
For $G=SU(2)$ the group manifold is the 3-sphere and $\sigma$ is a volume form of unit total volume, $\int_G \sigma=1$. The funny normalization factor in (\ref{l37 sigma}) is tuned so as to have this property.
\end{example}

\begin{remark}
The form $\sigma$ is constructed out of the Maurer-Cartan 1-form 
\begin{equation}\label{l37 MC}
\mu=X^{-1}dX\quad \in \Omega^1(G,\g)
\end{equation}
-- the unique left-invariant $\g$-valued 1-form form on the group $G$ such that its value at the group unit $\mu|_{e}\colon \underbrace{T_e G}_{\g} \ra \g$ is identity. In terms of $\mu$, the Cartan 3-form is
\begin{equation}
\sigma=\frac{1}{48\pi^2}\langle \mu \stackrel{\wedge}{,}[\mu\stackrel{\wedge}{,}\mu] \rangle_\g.
\end{equation}
\end{remark}

\subsubsection{The action functional.}
Let $\Sigma$ be a closed Riemannian surface. Fields of the model are smooth maps 
\begin{equation}
g\colon \Sigma \ra G
\end{equation}
 and the action functional is\footnote{Recall that $\bdd =dz \frac{\dd}{\dd z}$, $\bar\bdd=d\bar{z}\frac{\dd}{\dd \bar{z}}$ are the holomorphic and antiholomorphic Dolbeault differentials.}
\begin{equation}\label{l37 S_WZW}
S_\Sigma(g)\colon=-\frac{i}{4\pi}\int_\Sigma \tr\left(g^{-1}\bdd g\wedge g^{-1}\bar\bdd g\right) + \mr{WZ}(g)
\end{equation}
where the last term is the so-called ``Wess-Zumino term.'' It is defined as
\begin{equation}\label{l37 WZ}
\mr{WZ}(g)\colon=-\frac{i}{12\pi}\int_B \tr\left(\til{g}^{-1}d\til{g}\right)^{\wedge 3}= -2\pi i \int_B \til{g}^* \sigma
\end{equation}
where $B$ is any compact oriented 3-manifold with boundary $\dd B=\Sigma$  (e.g. one can choose $B$ to be a handlebody)\footnote{One says ``the 3-manifold $B$ \emph{cobounds} the surface $\Sigma$.''} and $\til{g}\colon B\ra G$ any smooth extension of the map $g\colon \Sigma \ra G$ into $B$ (``extension'' means that $\til{g}$ must satisfy $\til{g}|_{\dd B}=g$).

\begin{lemma}
For a fixed map $g\colon \Sigma\ra G$, the Wess-Zumino term $\mr{WZ}(g)$ modulo $2\pi i \ZZ$ does not depend on the choice of 3-manifold $B$ cobounding $\Sigma$  and on the choice of extension $\til{g}$.
\end{lemma}

\begin{proof}[Sketch of proof]
Denote by  $\mr{WZ}^{B,\til{g}}(g)$ the r.h.s. of (\ref{l37 WZ}). Let $B,B'$ be two 3-manifolds cobounding $\Sigma$ and $\til{g},\til{g}'$ some extensions of $g$ from $\Sigma$ into $B$ and into $B'$, respectively. One has
\begin{multline}
\mr{WZ}^{B,\til{g}}(g)-\mr{WZ}^{B',\til{g}'}(g)=-2\pi i \left(\int_B \til{g}^* \sigma - \int_{B'} (\til{g}')^*\sigma\right)=\\
=-2\pi i \left(\int_B \til{g}^* \sigma + \int_{\ol{B}'} (\til{g}')^*\sigma\right)
=-2\pi i \int_{\check{B}}\check{g}^* \sigma,
\end{multline}
where $\ol{B}'$ is $B'$ with reversed orientation. Here in the last step we defined the closed 3-manifold $\check{B}$ as $B$ glued to $\ol{B}'$ along $\Sigma$, and we defined the ``glued'' map $\check{g}\colon \check{B}\ra G$ as the map whose restrictions to $B$, $\ol{B}'$ are $\til{g}$ and $\til{g}'$, respectively.
\begin{figure}[H]
\begin{center}
\includegraphics[scale=0.7]{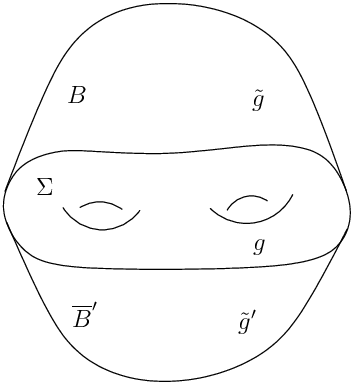}
\end{center}
\caption{Closed 3-manifold $\check{B}$ glued out of $B$ and $\ol{B}'$ along $\Sigma$ and the corresponding glued map to $G$.
}
\end{figure}
 Thus, one has 
\begin{equation}\label{l37 WZ jump}
\mr{WZ}^{B,\til{g}}(g)-\mr{WZ}^{B',\til{g}'}(g)= -2\pi i \langle [\check{B}],\check{g}^* [\sigma] \rangle \in 2\pi i\ZZ
\end{equation}
-- the pairing (up to normalization) of the fundamental class of the closed 3-manifold $\check{B}$ with the pullback along $\check{g}$ of the integral cohomology class $[\sigma]\in H^3(G,\ZZ)$.\footnote{By abuse of notations, here
$[\sigma]$ stands for the class in $H^3(G,\ZZ)$ whose image in $H^3(G,\RR)$ is the class of the Cartan 3-form in de Rham cohomology. We also remark that in the special case $G=SU(2)$ the r.h.s. of (\ref{l37 WZ jump}) admits the interpretation as $-2\pi i$ times the degree of the map  $\check{g}\colon\check{B}\ra SU(2)\simeq S^3$ between oriented closed 3-manifolds.
}
\end{proof}

In particular, this lemma implies that Wess-Zumino-Witten (WZW) action (\ref{l37 S_WZW}) modulo $2\pi i \ZZ$ is well-defined (independent of choices of cobounding 3-manifold $B$ and the extension $\til{g}$). Thus, 
for $k=1,2,3,\ldots$ an integer (the ``level'' of Wess-Zumino-Witten model), the expression
\begin{equation}
e^{-k S_{\Sigma}(g)}
\end{equation}
is well-defined. This expression is the integrand in the path integral for the Wess-Zumino-Witten model,
\begin{equation}
Z_k(\Sigma) = ``\int_{\mr{Map}(\Sigma,G)} \mc{D}g\; e^{-k S_{\Sigma}(g)}."
\end{equation}
Here the level $k=1,2,3,\ldots$ is a parameter of the theory playing the role of inverse Planck constant, $k=\hbar^{-1}$, see Remark \ref{l3 rem hbar}.

\begin{remark}
\begin{enumerate}[(a)]
\item In the action (\ref{l37 S_WZW}) the first term is real and the second term is imaginary.
\item One can write the action (\ref{l37 S_WZW}) in terms of the Maurer-Cartan 1-form on $G$:
\begin{equation}
S_\Sigma(g)=-\frac{1}{8\pi}\int_\Sigma \langle g^*\mu\stackrel{\wedge}{,}*_\mr{Hodge}g^* \mu\rangle_\g \underbrace{-\frac{i}{24\pi} \int_B \til{g}^* \langle \mu\stackrel{\wedge}{,}[\mu\stackrel{\wedge}{,}\mu] \rangle_\g}_{\mr{WZ}(g)}
\end{equation}
The benefit of this rewriting is that it one can use it to define WZW action for non-matrix Lie groups.
\item Although the Wess-Zumino term is non-local (not an integral over $\Sigma$), its variation is local: 
\begin{equation}
\delta \mr{WZ}=\frac{i}{4\pi}\int_\Sigma \tr\, g^{-1}\delta g (\bdd(g^{-1}\bar\bdd g)+\bar\bdd(g^{-1}\bdd g))
\end{equation} 
(note that the integral is over $\Sigma$, not over $B$). 
Putting this together with the variation of the first term of (\ref{l37 S_WZW}) (let us denote it $E(g)$),
\begin{equation}
\delta E= \frac{i}{4\pi}\int_\Sigma \tr\, g^{-1} \delta g (\bdd(g^{-1}\bar\bdd g)-\bar\bdd (g^{-1}\bdd g)),
\end{equation}
one obtains the variation of the full action (\ref{l37 S_WZW}) is
\begin{equation}\label{l37 delta S}
\delta S_\Sigma = \frac{i}{2\pi}\int_\Sigma \tr \,(g^{-1}\delta g)\bdd (g^{-1}\bar\bdd g).
\end{equation}
An equivalent expression is:
\begin{equation}
\delta S_\Sigma =-\frac{i}{2\pi}\int_\Sigma \tr (\delta g\,g^{-1}) \bar\bdd(\bdd g\, g^{-1}).
\end{equation}
\end{enumerate}
\end{remark}

\subsubsection{Euler-Lagrange equation.}
For the discussion of the Euler-Lagrange equation (especially the holomorphic factorization of solutions (\ref{l37 holom factorization})) and symmetries it is convenient to complexify the space of classical fields, i.e., to allow fields $g$ to be maps from $\Sigma$ to the \emph{complexified} group $G_\CC$ rather than the compact group $G$.

The Euler-Lagrange equation corresponding to the action (\ref{l37 S_WZW}) is read off from the fromula for the variation (\ref{l37 delta S}):
\begin{equation}\label{l37 EL}
\bdd (g^{-1}\bar\bdd g)=0.
\end{equation}
Equivalently, the same equation can be written as $\bar\bdd(\bdd g\, g^{-1})=0$.

The general solution of the Euler-Lagrange equation (\ref{l37 EL}) is:
\begin{equation}\label{l37 holom factorization}
g(z)=h_1(z) \ol{h_2(z)},
\end{equation}
where $h_1,h_2\colon \Sigma\ra G_\CC $ are two \emph{holomorphic} maps into the complexified group. 

\begin{remark} One can  consider Wess-Zumino-Witten theory for $G=U(1)$. (This group fails our assumptions: it is neither simple nor simply connected, but nevertheless one can play with it.) Then the field $g\colon \Sigma \ra G$ can be parametrized as $g=e^{i\phi}$. The action (\ref{l37 S_WZW}) is then simply the action of a free boson (with values in $S^1$); the Wess-Zumino term vanishes. Euler-Lagrange equation (\ref{l37 EL}) becomes the equation of a harmonic function $\Delta \phi=0$. The factorization (\ref{l37 holom factorization}) simply becomes the statement that any harmonic function is a sum of a holomorphic and an antiholomorphic function, $\phi(z)=\chi_1(z)+\ol{\chi_2(z)}$.
\end{remark}

\subsubsection{Symmetry and conserved currents.}
The action (\ref{l37 S_WZW}) is invariant under the following transformations of the field:
\begin{equation}\label{l37 symmetry}
g(z)\mapsto g'(z)=\Omega_1(z) g(z) \ol{\Omega_2(z)}
\end{equation}
where $\Omega_1,\Omega_2\colon \Sigma \ra G_\CC$ are two arbitrary holomorphic maps.\footnote{The transformations (\ref{l37 symmetry}) are sometimes called ``gauge symmetry'' in the literature. We would argue that it is not a very good term here, since the generators of the symmetry are not local: they are holomorphic (rather than, say, smooth) maps from $\Sigma$ to the target, and for holomorphic maps one doesn't have partitions of unity, so one cannot have a bump function as a generator.}

The invariance under transformations (\ref{l37 symmetry}) corresponds by Noether theorem to having two conserved currents
\begin{equation}\label{l37 J Jbar}
\begin{gathered}
\mathbf{J}=\bdd g\cdot g^{-1}\in \Omega^{1,0}(\Sigma,\g), \\
\ol{\mathbf{J}}=g^{-1}\bar\bdd g\in \Omega^{0,1}(\Sigma,\g),
\end{gathered}
\end{equation}
satisfying the conservation properties
\begin{equation}\label{l37 J Jbar conservation}
\bar\bdd \bJ\simEL 0,\quad \bdd\ol{\bJ} \simEL 0.
\end{equation}

\begin{remark} The action (\ref{l37 S_WZW}) is the sum of the action of a sigma model with target a group (the natural quadratic ``energy of a map'') and a seemingly  complicated nonlocal cubic term $\mr{WZ}(g)$.
One might reasonably ask: why add this extra term to the sigma model? The answer is that adding this term actually makes the model much simpler: it creates two separately conserved holomorphic and antiholomorphic Noether currents $\bJ,\ol{\bJ}$, leads to simpler Euler-Lagrange equation which allows an explicit solution (\ref{l37 holom factorization}).
Ultimately, the addition of the Wess-Zumino term to the model results in the factorization of the model into a holomorphic and an antiholomorphic sector (this statement makes sense both at the classical and at the quantum level).
\end{remark}

\begin{remark}
One can also write down the currents (\ref{l37 J Jbar}) without referring to the matrix structure of the group $G$:
\begin{equation}
\bJ=\frac12 (\mr{id}+i *_\mr{Hodge})g^* \mu_{R},\quad
\ol{\bJ}=\frac12 (\mr{id}-i *_\mr{Hodge})g^* \mu_{L},
\end{equation}
where $\mu_L$ is the left-invariant Maurer-Cartan form (\ref{l37 MC}) and $\mu_R$ is its right-invariant counterpart ($\mu_R= dX\, X^{-1}$ for a matrix group).
\end{remark}

\subsubsection{Polyakov-Wiegmann formula.}
For the next discussion it is important to know how the action (\ref{l37 S_WZW}) interacts with pointwise products of fields (as maps to the group). 
\begin{thm}[Polyakov-Wiegmann]
For $\Sigma$ a closed Riemannian surface and $f,g\colon \Sigma \ra G$ two maps to the group, one has
\begin{equation}\label{l37 Polyakov-Wiegmann}
S_\Sigma(f\cdot g) = S_\Sigma(f)+S_\Sigma(g)+\underbrace{\frac{i}{2\pi}\int_\Sigma \tr \left(f^{-1}\bar\bdd f\wedge \bdd g\cdot g^{-1}\right)}_{\Gamma_\Sigma(f,g)}.
\end{equation}
Here $\cdot$ in the l.h.s. stands for the pointwise product of maps to $G$.
\end{thm}
Thus, the action is ``almost'' additive w.r.t. pointwise product of fields, with the defect given by the rightmost term in (\ref{l37 Polyakov-Wiegmann}) which we denoted $\Gamma_\Sigma(f,g)$. 

We note that the ``defect'' $\Gamma_\Sigma$ in (\ref{l37 Polyakov-Wiegmann}) is a 2-cocycle for the group $\mr{Map}(\Sigma,G)$ (with trivial coefficients), i.e., for any triple of maps  $f,g,h\colon \Sigma \ra G$ it satisfies\footnote{
Indeed, $0=S_\Sigma((fg)h)-S_\Sigma(f(gh))=S_\Sigma(fg)+S_\Sigma(h)+\Gamma_\Sigma(fg,h)-S_\Sigma(f)-S_\Sigma(gh)-\Gamma_\Sigma(f,gh)=\cancel{S_\Sigma(f)}+\cancel{S_\Sigma(g)}+ \Gamma_\Sigma(f,g)+\cancel{S_\Sigma(h)}+\Gamma_\Sigma(fg,h)-\cancel{S_\Sigma(f)}-\cancel{S_\Sigma(g)}-\cancel{S_\Sigma(h)}-\Gamma_\Sigma(g,h)-\Gamma_\Sigma(f,gh) =-\mr{l.h.s.\;of\; (\ref{l37 2-cocycle})}$.
}
\begin{equation}\label{l37 2-cocycle}
\Gamma_\Sigma(g,h)-\Gamma_\Sigma(fg,h)+\Gamma_\Sigma(f,gh)-\Gamma_\Sigma(f,g)=0.
\end{equation}

\subsection{
Case of surfaces with boundary}
Here we briefly sketch a geometric construction from \cite{Kohno}.

It is not straightforward to generalize the action (\ref{l37 S_WZW}) to surfaces with boundary, due to the presence of a nonlocal term in the action. It turns out one can still do it, with two caveats:
\begin{itemize}
\item one should consider the exponential of the action $e^{-k S_\Sigma}$ instead of the action itself (we assume that the level $k=1,2,3,\ldots$ is fixed),
\item instead of obtaining $e^{-k S_\Sigma}$ as a function on the space of fields on a surface with boundary, it will be a section of a certain line bundle over $\F_\Sigma$.
\end{itemize}

Let $\Sigma$ be a compact Riemannian surface with $n$ boundary circles. Construct a closed surface $\Sigma'$ by attaching $n$ disks $D_1,\ldots,D_n$ to the boundary of $\Sigma$ (i.e. attach a disk to each boundary circle): $\Sigma'= \Sigma\cup \bigsqcup_{i=1}^n D_i$.

\begin{figure}[H]
\begin{center}
\includegraphics[scale=0.9]{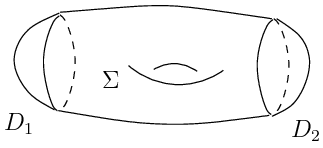}
\end{center}
\caption{Closed surface obtained from $\Sigma$ by attaching disks along boundary circles.
}
\end{figure}

The basic idea is to define the WZW action on  a surface with boundary via
\begin{equation}\label{l37 S for Sigma with boundary}
e^{-k S_\Sigma(g)}\colon= e^{-k S_{\Sigma'}(g')}
\end{equation}
where $g$ is a map $\Sigma\ra G$ and $g'$ is some extension of $g$ as a map $\Sigma'\ra G$ (i.e. an extension of the map $g$ into each disk $D_i$ is to be chosen).

The ambiguity in the choice of the extension $g'$ leads to the idea that the expression $e^{-k S_\Sigma(g)}$ should be understood as taking values in the fiber of the complex line bundle
\begin{equation}\label{l37 line bun over LG^n}
\begin{CD}
\LL^k\boxtimes\cdots \boxtimes \LL^k \\ @VVV \\ LG\times \cdots \times LG
\end{CD}
\end{equation}
over the point $g|_{\dd \Sigma}\in \mr{Map}(\dd \Sigma,G)\simeq LG^{\times n}$, i.e., over the boundary value of the map $g$ seen as a collection of loops in $G$. 

The complex line bundle  over the loop group
\begin{equation}\label{l37 L^k}
\LL^k\ra LG,
\end{equation}
several copies of which appear in (\ref{l37 line bun over LG^n}), is constructed as follows (see \cite{Kohno} for details). Consider the trivial line bundle 
\begin{equation}\label{l37 triv line bun}
\mr{Map}(D,G)\times \CC \ra \mr{Map}(D,G)
\end{equation} 
with $D$ the unit disk, and consider the following equivalence relation: two 
pairs 
\begin{equation}\label{l37 equivalence}
{(f_D\colon D\ra G,u\in \CC)}\;\;\sim\;\; (g_D\colon D\ra G,v\in \CC)
\end{equation} 
are considered equivalent if
\begin{itemize}
\item $f_D$ and $g_D$ agree on the boundary circle: $f_D|_{\dd D}=g_D|_{\dd D}$,
\item one has 
$$v=u\cdot e^{-k S_{\CP^1}(h)-k \Gamma_D(f_D,h_D)},$$ 
where
$h\colon \CP^1\ra G$ is defined on $D$ as $f_D^{-1}g_D=:h_D$ and extended by $1$ to $\CP^1\backslash D$;  $\Gamma_D$ is given by the same formula as in (\ref{l37 Polyakov-Wiegmann}) (but the integral is over $D$).
\end{itemize}
Quotienting the line bundle (\ref{l37 triv line bun}) by this equivalence relation produces a line bundle over $LG$ (loops in $G$ seen as boundary values of functions $f_D$) which we call $\LL^k$.

By construction and as a consequence of Polyakov-Wiegmann formula, one indeed has that $e^{-k S_\Sigma(g)}$ defined by (\ref{l37 S for Sigma with boundary}) seen as an element in the fiber of (\ref{l37 line bun over LG^n}) over the boundary value of $g$ is independent of the extension $g'$.\footnote{
In a bit more detail, one chooses an extension $g'$ of $g\colon \Sigma\ra G$ into the disks $D_i$ and thinks of $e^{-k S_\Sigma(g)}$ as a tuple $\left(\{g'|_{D_i}\}\in \mr{Map}(D,G)^{\times n},e^{-k S_{\Sigma'}(g')}\in \CC\right)$ up to an equivalence as the one on (\ref{l37 triv line bun}), extended in an obvious way to $n$ disks.

For instance, if $\Sigma$ has a single boundary circle (i.e. $n=1$), if $g'$ and $g''$ are two extensions of the map $g\colon \Sigma\ra G$ into the single attached disk $D$, one has $g''=g'h$ with the map $h\colon \Sigma'\ra G$ (``discrepancy'' of the two extensions) equal to $1$ on $\Sigma$ and nontrivial in $D$, the pairs $(g'h|_D,e^{-k S_\Sigma(g'h)})$ and $(g'|_D, e^{-k S_\Sigma(g')})$ are equivalent precisely because by Polyakov-Wiegmann formula one has $e^{-k S_\Sigma(g'h)}=e^{-k S_\Sigma(g')}\cdot e^{-k S_{\Sigma}(h)-k \Gamma_\Sigma(g',h)}$. We note that in the r.h.s. here $\Gamma_\Sigma$ can be replaced with $\Gamma_D$ and $S_{\Sigma}(h)$ can be replaced with $S_{\CP^1}(\til{h})$ where $\til{h}$ is the extension of $h|_D$ into $\CP^1\backslash D$ by $1$ (the intuition here is that since $h$ is trivial outside $D$, the surface $\Sigma$ can be replaced by anything, including a complementary disk).
}

Put another way, the exponentiated action for $\Sigma$ a surface with boundary is not a function on $\F_\Sigma=\mr{Map}(\Sigma,G)$ but rather is a section of a line bundle,
\begin{equation}
e^{-kS_\Sigma} \in \Gamma(\F_\Sigma, \pi^* (\LL^k)^{\boxtimes n})
\end{equation}
where 
\begin{equation}
\pi\colon \F_\Sigma \ra \underbrace{\mr{Map}(\dd \Sigma,G)}_{\F_{\dd\Sigma}}\simeq LG^{\times n}
\end{equation}
is the restriction of the map to the boundary. We will denote  $\LL_{\dd \Sigma}\colon=(\LL^k)^{\boxtimes n}$ seen as a line bundle over $\F_{\dd \Sigma}$.

\begin{remark} 
\begin{enumerate}[(a)]
\item Denoting $\LL^1=: \LL$ one has 
\begin{equation}
\LL^k= \LL^{\otimes k}.
\end{equation}
Thus, the superscript in $\LL^k$ can be interpreted as the tensor power of a special line bundle corresponding to $k=1$. The first Chern class of the bundle $\LL$ 
is given by
\begin{equation}
[\omega] = p_*( \mr{ev}^* [\sigma])\quad \in H^2(LG),
\end{equation}
where $[\sigma]$ is the cohomology class of Cartan 3-form (\ref{l37 sigma}) in $H^3(G)$;  $p$ and $\mr{ev}$ are the projection and evaluation maps in the diagram
\begin{equation}
\begin{CD}
LG\times S^1 @>{\mr{ev}}>> G \\
@VpVV \\
LG
\end{CD}
\end{equation}
The map $\mr{ev}$ evaluates the loop in $G$ at a given point of $S^1$; $p_*$ stands for the pushforward in cohomology  (fiber integral over $S^1$).
\item One has a product on the total space of the line bundle $\LL^k$ given by
\begin{equation}
(g_1, u_1e^{-k S_D(g_1)}) * (g_2, u_2 e^{-k S_D(g_2)}) = (g_1\cdot g_2\ ,\  u_1u_2 e^{-k S_D(g_1 g_2)-k \Gamma_D(g_1,g_2)})
\end{equation}
with $u_{1,2}\in \CC$ and $g_{1,2}\colon D\ra G$.
Here we understand that on both sides we pass to equivalence classes under (\ref{l37 equivalence}).
Removing the zero-section from $\LL^k$ one obtains a group which is none other than the central extension of the loop group we mentioned in Section \ref{ss affine Lie algebras} (see (\ref{l36 LG central extension})):
\begin{equation}
\LL^k\backslash \{\mbox{zero-section}\}= \wh{LG}^k.
\end{equation}
\end{enumerate}
\end{remark}

\subsubsection{Symmetry of the model on a surface with boundary.} 
Fix $ \Sigma$ a surface with boundary. 
One has a left and a right action of the group $\mr{Map}(\Sigma,G)$ on $\F_\Sigma=\mr{Map}(\Sigma,G)$ coming from multiplication in the target $G$ from the left or from the right. One also has left and right actions of the group $\mr{Map}(M,G)$ on the space of sections of the line bundle $\LL_{\dd\Sigma}\ra \F_{\dd\Sigma}$. 

The symmetry (\ref{l37 symmetry}) for $\Sigma$ with boundary becomes the following statement. 
\begin{lemma}\label{l37 lemma: invariance with bdry}
The exponentiated action
\begin{equation}\label{l37 symmetry with bdry}
e^{-k S_{\Sigma}}\in \Gamma(\F_\Sigma,\pi^* \LL_{\dd\Sigma})
\end{equation}
is
\begin{itemize} 
\item left-invariant under \emph{holomorphic} maps $\Omega\colon \Sigma \ra G_\CC$ and 
\item right-invariant under \emph{antiholomorphic} maps $\Omega^* \colon \Sigma \ra G_\CC$,
\end{itemize} 
where maps act both on the fields and on the bundle $\LL_{\dd\Sigma}$ in (\ref{l37 symmetry with bdry}).
\end{lemma}
For the proof see \cite[Proposition 1.11]{Kohno}.

\subsubsection{Path integral heuristics.} The path integral on a surface with boundary
\begin{equation}\label{l37 PI Ward identity}
Z(\Sigma)=\int_{g|_{\dd\Sigma}=g_\dd } \mc{D} g\; e^{-k S_\Sigma(g)} \quad \in \Gamma(\F_{\dd\Sigma},\LL_{\dd\Sigma})^{\mr{Hol}(\Sigma,G_\CC)\times \mr{Antihol}(\Sigma,G_\CC)}
\end{equation}
is to be thought of as averaging the exponentiated action over fields with fixed boundary value $g_\dd\in \F_\dd$, and the value of the path integral is not a number but an element in the line
$\LL_{\dd \Sigma}|_{g_\dd}$. Thus, considering the path integral with all possible boundary conditions one has a section of $\LL_{\dd\Sigma}$. By the invariance property of the exponentiated action (Lemma \ref{l37 lemma: invariance with bdry}), this section should be invariant under holomorphic maps $\Sigma\ra G_\CC$ acting from the left and antiholomorphic maps $\Sigma\ra G_\CC$ acting from the right. 
This invariance property of the path integral is a variant of Ward identity.

\marginpar{Lecture 38,\\ 11/30/2022}
\section{Quantum Wess-Zumino-Witten model}
We fix as before a compact simple simply-connected group $G$ and a level $k=1,2,3,\ldots$. It is possible to quantize the classical WZW theory -- either by canonical/geometric quantization in the hamiltonian formalism, or by path integral. Here we will just outline the resulting (quantum) CFT.

\subsubsection{Space of states/space of fields.}
The space of states of the model associated to the circle -- or equivalently the space of fields $V$ --
is
\begin{equation}\label{l38 WZW space of states}
\HH=\bigoplus_{\lambda\in I_k} H_{k,\lambda}\otimes H^*_{k,\lambda}
\end{equation}
where the sum is over integrable highest weight modules of $\wh{\g}$ at level $k$ (we denote the set of corresponding highest weights $I_k$); the summand is a tensor product of the integrable module and the dual one, seen as a module over $\wh{\g}\oplus \wh{\bar{\g}}$ -- two copies of the affine Lie algebra.

The space $\HH_k$ can be identified with the space of sections of the line bundle $\LL^k$ over the loop group (\ref{l37 L^k}) via the inclusion
\begin{equation}\label{l38 inclusion}
\begin{array}{ccc}
\bigoplus_{\lambda\in I_k} H_{k,\lambda}\otimes H^*_{k,\lambda} & \ra & \Gamma(LG,\LL^k) \\
\phi_0\in \mr{End}(H_{k,\lambda}) &\mapsto& \left( \phi(g_+g_0 g_-)\colon=\tr_{M^\g_\lambda}(\phi_0\cdot \rho_\lambda(g_0))\cdot e^{-k S_D(g_+ g_0 g_-)} \right)
\end{array} 
\end{equation}
Here $g_\pm$ are a holomorphic and an antiholomorphic map $D\ra G_\CC$, taking value $1\in G$ at a base point on the boundary circle $1\in \dd D$; $g_0\colon D\ra G_\CC$ is a constant map; $\rho_\lambda(g)$ is the linear operator on $M^\g_\lambda$ representing the action of the group element $g$. In (\ref{l38 inclusion}) both sides carry a natural action of $\wh{\g}\oplus \wh{\bar\g}$ and these actions are intertwined by the inclusion.

By Sugawara construction, $\HH_k$ carries an action of two copies of Virasoro algebra $\mr{Vir}\oplus \ol{\mr{Vir}}$, with central charges 
\begin{equation}\label{l38 c cbar}
c=\bar{c}=\frac{k\dim G}{k+2}.
\end{equation}

\subsubsection{Quantum currents.}
Let $\{T^a\}$ be a fixed orthonormal basis in $\g$ and let $f^{abc}$ be the structure constants of $\g$ in this basis defined by $[T^a,T^b]=\sum_c f^{abc}T^c$. Components of Noether currents (\ref{l37 J Jbar}) become in the quantum setting certain local quantum fields -- elements in the space of fields $V$:
\begin{equation}
J^a, \ol{J}{}^a \in V,\qquad a=1,\ldots, \dim G,
\end{equation}
which are holomorphic/antiholomorphic,\footnote{Under a correlator with any collection of test fields, or as local field operators.}
\begin{equation}
\bar\dd J^a=0,\quad \dd \ol{J}{}^a=0
\end{equation}
(as a reflection of the classical conservation laws (\ref{l37 J Jbar conservation}))
and satisfy the OPEs
\begin{eqnarray}
J^a(w) J^b(z) &\sim &\frac{k \delta^{ab}\mathbb{1}}{(w-z)^2}+\frac{\sum_c f^{abc}J^c(z)}{w-z}+\mr{reg.}  \label{l38 JJ OPE},\\
\ol{J}{}^a(w) \ol{J}{}^b(z) &\sim &\frac{k \delta^{ab}\mathbb{1}}{(\bar{w}-\bar{z})^2}+\frac{\sum_c f^{abc}\ol{J}{}^c(z)}{\bar{w}-\bar{z}}+\mr{reg.} ,\\
J^a(w) \ol{J}{}^b(z) &\sim & \mr{reg.} \label{l38 J Jbar OPE}
\end{eqnarray}

The field $J^a$ acts on the space of states by a local field operator $\wh{J}^a(z)$; one can introduce the corresponding mode operators $\wh{J}^a_n\in \mr{End}(\HH)$ as
\begin{equation}
\wh{J}^a_n \colon = \frac{1}{2\pi i}\oint dz \,z^n\, \wh{J}(z)
\end{equation}
where the integral is over a contour going about the origin. 
Equivalently, we have
\begin{equation}\label{l38 Jhat via Jhat_n}
\wh{J}(z)=\sum_{n\in\ZZ} z^{-n-1}\wh{J}^a_n.
\end{equation}
Repeating the computation of Section \ref{sss: Virasoro from TT OPE}, we obtain from the OPE (\ref{l38 JJ OPE}) the commutation relations between the mode operators
\begin{equation}
[\wh{J}^a_n,\wh{J}^b_m]=\sum_c f^{abc} \wh{J}^c_{n+m} +kn \delta_{n,-m}\wh{\mathbb{1}}
\end{equation}
Note that these are exactly the commutation relations of the affine Lie algebra $\wh\g$. 
Comparing with the notations in (\ref{l36 ghat bracket 2}), we have the identification $\wh{J}^a_n=T^a_n=T^a\otimes t^n$.
Likewise one introduces the mode operators $\wh{\ol{J}}{}^a_n$ for the antiholomorphic current $\ol{J}{}^a$ which again satisfy the commutation relations of $\wh\g$ and commute with the mode operators $\wh{J}^a_n$ (due to (\ref{l38 J Jbar OPE})).
 Therefore, the action of $\wh\g\oplus\wh{\bar\g}$ on the space of states is realized by the mode operators generated by the currents $J$, $\ol{J}$.
 
Similarly to the action on the space of states, we have a local action of $\wh{\g}$ on fields at a point $z$ given by local mode operators $J^a_n\in \mr{End}(V_z)$ defined by
\begin{equation}
J^a_n \Phi(z)\colon = \frac{1}{2\pi i}\oint_{\gamma_z} dw\, (w-z)^n J(w) \Phi(z)
\end{equation}
for any field $\Phi(z)\in V_z$; $\gamma_z$ is a contour going around $z$. Equivalently, the mode operators yield the coefficients in the OPE of a field at $z$ with the current:
\begin{equation}\label{l38 J Phi OPE via modes}
J^a(w) \Phi(z)\sim\sum_{n\in \ZZ} (w-z)^{-n-1}J^a_n\Phi(z).
\end{equation}
One has a similar local action of $\wh{\bar\g}$ on $V_z$ generated by local mode operators of $\ol{J}$.

\subsubsection{The $\wh{\g}$-primary multiplet.}
Fix $\lambda$ a weight of an integrable $\wh\g$-module $H_{k,\lambda}$. Let $e^p$ be a basis in the irreducible $\g$-module $M^\g_\lambda$ (which is also the depth-zero component $H_{k,\lambda}(0)$ of the corresponding integrable $\wh\g$-module). We have a collection (``multiplet'') of $\wh\g\oplus\wh{\bar\g}$-primary fields $\phi^{p\bar{p}}_{\lambda}$  (primary here means ``annihilated by $J^a_{>0},\ol{J}^a_{>0}$'') corresponding to coordinates of a vector in
\begin{equation}
M^\g_\lambda \otimes (M^\g_\lambda)^* = H_{k,\lambda}(0)\otimes H_{k,\lambda}(0)^* \subset V
\end{equation}
By (\ref{l38 J Phi OPE via modes}) and the primary property, we have
\begin{equation}\label{l38 J phi OPE}
J^a(w) \phi_{\lambda}^{p\bar{p}}(z) \sim \frac{\sum_q (T^a_\lambda)^p_q \phi_\lambda^{q\bar{p}}(z)}{w-z}+\mr{reg.}
\end{equation}
where $T^a_\lambda$ is the matrix representing $T^a\in \g$ as an operator on $M^\g_\lambda$.

\subsubsection{Stress-energy tensor.}
The quantum stress-energy tensor of the model is the field
\begin{equation}\label{l38 T}
T(z)=\frac{1/2}{k+h^\vee} \sum_{a} :J^a(z) J^a(z):
\end{equation}
it satisfies the standard TT OPE (\ref{l23 TT}) with central charge (\ref{l38 c cbar}); the expression for $\ol{T}$ is similar (replacing $J$ with $\ol{J}$).

Normal ordering in (\ref{l38 T}) refers to the following definition: for local fields $\Phi_1,\Phi_2$ their normally ordered product $:\Phi_1(z)\Phi_2(z):$ is defined as the constant term in the OPE $\Phi_1(w)\Phi_2(z)$ or equivalently
\begin{equation}
:\Phi_1(z)\Phi_2(z): = \lim_{w\ra z} \left(\Phi_1(w)\Phi_2(z)- [\Phi_1(w)\Phi_2(z)]_\mr{sing}\right) = \frac{1}{2\pi i} \oint_{\gamma_z} \Phi_1(w)\Phi_2(z),
\end{equation}
where $[\cdots]_\mr{sing}$ is the singular part of the OPE and $\gamma_z$ is the contour around $z$.

\begin{remark} The classical Hilbert stress-energy tensor in Wess-Zumino-Witten theory, obtained as a variation w.r.t. the metric, is given by the formula (\ref{l38 T}) without the normal ordering and without the $h^\vee$ shift in the denominator. In this regard, the shift by $h^\vee$ should be understood as a quantum correction: it must be incroporated in the quantum picture, otherwise $T$ would not satisfy the OPE of the standard form (\ref{l23 TT}).
\end{remark}

\begin{remark} Note that substituting the mode expansion of the current (\ref{l38 Jhat via Jhat_n}) into the stress-energy tensor (\ref{l38 T}) we obtain the Sugawara formula (\ref{l36 L_n Sugawara}) expressing Virasoro generators in terms of generators of $\wh\g$:
\begin{equation}\label{l38 Sugawara}
\wh{L}_n=\frac{1/2}{k+h^\vee}\sum_a :\wh{J}^a_m \wh{J}^a_{n-m}:.
\end{equation}
 In this sense, the construction of the stress-energy tensor (\ref{l38 T}) is a restatement of Sugawara construction.
\end{remark}

\begin{remark} The counterpart of the formula (\ref{l38 T}) in the abelian case $\g=\RR$ is the formula (\ref{l23 T scalar}) for the free boson. Note that in that case there is no $h^\vee$ shift.
\end{remark}

Fields $J^a$ are Virasoro-primary, of conformal weight $(1,0)$. Similarly,fields  $\ol{J}^a$ are Virasoro-primary, of conformal weight $(0,1)$.

\subsubsection{Example: WZW model for $G=SU(2)$ at level $k=1$ and the $r=\sqrt{2}$ free boson}
Consider the WZW model in the case $G=SU(2)$, $k=1$. 
In this case there are only two integrable modules of $\wh{\mathfrak{su}(2)}_1$ and space of states (\ref{l38 WZW space of states}) is 
\begin{equation}
\HH= H_{1,0}\otimes H_{1,0}^*\oplus H_{1,1}\otimes H_{1,1}^*.
\end{equation}
By (\ref{l38 c cbar}), the central charge of the model is 
\begin{equation}
c=\bar{c}=1.
\end{equation}

It turns out that in this very special case the WZW model is equivalent to the free boson with values in a circle of radius $r=\sqrt{2}$ (the ``self-dual radius,'' w.r.t. T-duality).\footnote{
``Equivalence'' of CFTs means that there is an isomorphism of spaces of states as $\mr{Vir}\oplus \ol{\mr{Vir}}$-modules, and all correlators in the two CFTs agree via this isomorphism. In the case at hand the equivalence also implies a ``hidden'' $\wh{\mathfrak{su}(2)}_1$-symmetry of $r=\sqrt{2}$ free boson CFT.
} More specifically, the components of the WZW current corresponds to special fields in the $r=\sqrt{2}$ free boson theory:
\begin{equation}\label{l38 WZW-boson correspondence}
\begin{array}{c|c}
\wh{\mathfrak{su}(2)}_1\;\mr{WZW} & r=\sqrt{2}\;\mr{free \; boson} \\ \hline
J^3 & i\dd\phi \\
J^+ & V_{1,1} \\
J^- & V_{-1,-1} \\ \hline
\ol{J}{}^3 & i\bar\dd\phi \\
\ol{J}{}^+ & V_{1,-1} \\
\ol{J}{}^- & V_{-1,1}
\end{array}
\end{equation}
Here $V_{\e,\m}$ are the vertex operators (\ref{l28 vertex operator}). We are writing the components of the WZW current in terms of the basis $T^3, T^{\pm}=\frac{1}{\sqrt{2}}(T^1\pm i T^2)$, with $T^{1,2,3}$ as in (\ref{l36 T^a basis}). For instance, it is easy to check that the OPE algebra (\ref{l38 JJ OPE})--(\ref{l38 J Jbar OPE}) of the components of the WZW current is reproduced in the free boson theory by the fields in the right column of (\ref{l38 WZW-boson correspondence}).

The $\wh\g$-primary multiplet for $\lambda=0$ corresponds on the $r=\sqrt{2}$ free boson side to the identity field $\mathbb{1}$. The $\lambda=1$ multiplet corresponds to the quadruple of vertex operators $V_{\pm 1,0},V_{0,\pm 1}$, all of conformal weight $(\frac14,\frac14)$.

\subsection{Ward identity for $\wh\g$-symmetry. Knizhnik-Zamolodchikov equations.}
As a consequence of Lemma \ref{l27 lemma: Ward identity for holom field Xi}, one has the Ward identity generated by the holomorphic field $J^a$ as in (\ref{l27 Ward identity for holom field Xi}): for  a collection of points $z_1,\ldots,z_n\in \CC$,  $\alpha$ a $\g$-valued meromorphic function with poles at $z_1,\ldots,z_n$ allowed , $\Phi_1,\ldots,\Phi_n\in V$ a collection of fields, one has
\begin{equation}\label{l38 Ward identity}
\alpha\circ \langle \Phi_1(z_1)\cdots \Phi_n(z_n) \rangle\colon= \sum_{j=1}^n \langle \Phi_1(z_1)\cdots \rho_J^{(z_j)}(\alpha)\circ\Phi_j(z_j) \cdots \Phi_n(z_n) \rangle
=0,
\end{equation}
where 
\begin{equation}
\rho_J^{(z)}(\alpha)\circ\Phi(z)\colon= \frac{1}{2\pi i} \oint_{\gamma_z} dw\,\alpha^a(w) J^a(w) \Phi(z).
\end{equation}
One has a similar Ward identity corresponding to the action on a correlator by an antimeromorphic function using the second current $\ol{J}$.

\begin{remark} 
One can think of the Ward identity (\ref{l38 Ward identity}) as corresponding to the expected invariance  property of the WZW path  integral (\ref{l37 PI Ward identity}) where:
\begin{itemize}
\item Boundary circles are shrunk to punctures $z_i$.
\item We consider the infinitesimal action of the Lie algebra of $\g$-valued functions, holomorphic in the complement of the punctures, instead of the group of holomorphic maps to the group $G_\CC$,
\end{itemize}
\end{remark}

Specializing (\ref{l38 Ward identity}) to the case $\alpha(w)=\frac{1}{w-z}$ and the collection of fields being the identity field at $z$ and $\wh\g$-primary fields at $z_1,\ldots,z_n$, we have the identity
\begin{equation}\label{l38 Ward identity with J}
\langle J^a(z) \phi_{\lambda_1}^{p_1\bar{p}_1}(z_1)\cdots \phi_{\lambda_n}^{p_n\bar{p}_n}(z_n) \rangle =\sum_{j=1}^n \sum_{q_j}\frac{(T^a_{\lambda_j})^{p_j}_{q_j}}{z-z_j} \langle \phi_{\lambda_1}^{p_1\bar{p}_1} (z_1)\cdots \phi_{\lambda_j}^{q_j\bar{p}_j}(z_j)\cdots
 \phi_{\lambda_n}^{p_n\bar{p}_n}(z_n)\rangle
\end{equation}
Here we have fixed some weights $\lambda_1,\ldots,\lambda_n$ of $\g$ corresponding to integrable $\wh\g$-modules.

One can also obtain this identity by realizing that due to (\ref{l38 J phi OPE}), the l.h.s. has to be a meromorphic function in $z$ with first-order poles at $z=z_1,\ldots,z_n$, with residues controlled by the r.h.s. of (\ref{l38 J phi OPE}). Such a function (decaying as $z\ra \infty$) is unique and given by the r.h.s. of (\ref{l38 Ward identity with J}).

One can also write the identity (\ref{l38 Ward identity with J}) in slightly more pleasing notations:
\begin{equation}\label{l38 Ward identity with J 2}
\langle J^a(z) \phi_{\lambda_1}(z_1)\cdots \phi_{\lambda_n}(z_n) \rangle =\sum_{j=1}^n \frac{T^a_{\lambda_j}}{z-z_j} \langle \phi_{\lambda_1}(z_1)\cdots
 \phi_{\lambda_n}(z_n)\rangle
\end{equation}
where 
\begin{itemize}
\item We denote 
\begin{equation}\label{l38 full primary multiplet}
\phi_\lambda\colon= \sum_{p,\bar{p}}\phi_\lambda^{p\bar{p}} e_p \otimes \bar{e}_{\bar{p}} \in V\otimes (M^\g_\lambda)^* \otimes M^\g_\lambda.
\end{equation} 
where $\{e_p\}$ is the basis in $(M^\g_\lambda)^*$ dual to the basis $\{e^p\}$ in $M^\g_\lambda$.
(\ref{l38 full primary multiplet}) is a vector-valued  field  -- the ``full'' $\wh\g$-primary multiplet with weight $\lambda$.
\item Both sides of (\ref{l38 Ward identity with J 2}) are valued in tensors
\begin{equation}\label{l38 tensor product}
\bigotimes_{i=1}^n (M^\g_{\lambda_i})^* \otimes M^\g_{\lambda_i}.
\end{equation}
\item 
We understand
that the operator $T^a_{\lambda_j}$ is acting in the $j$-th factor in the product (\ref{l38 tensor product}).
\end{itemize}

\subsubsection{Knizhnik-Zamolodchikov equations.}

As a special case $n=-1$ of the Sugawara construction (\ref{l38 Sugawara}) one has
\begin{equation}
L_{-1}=\frac{1/2}{k+h^\vee} \sum_{m\in \ZZ} :J^a_m J^a_{-1-m}:
\end{equation}
where we think of both sides as operators acting on the space of fields $V_z$. In particular, for the $\wh\g$-primary multiplet $\phi_\lambda$, we have
\begin{equation}
L_{-1}\phi_{\lambda}(z)=\frac{1}{k+h^\vee} \sum_a J^a_{-1}J^a_0 \phi_\lambda(z)=
\frac{1}{k+h^\vee}J^a_{-1}T^a_\lambda \phi_\lambda (z).
\end{equation}
Using this, we have the following:
\begin{multline}
0=\langle \phi_{\lambda_1}(z_1)\cdots \underbrace{\Big(L_{-1}-\frac{1}{k+h^\vee} \sum_a J^a_{-1} T^a_{\lambda_j}\Big)\phi_{\lambda_j}(z_j)}_0 \cdots \phi_{\lambda_n}(z_n) \rangle=\\
=\frac{\dd}{\dd z_j} \langle \phi_{\lambda_1}(z_1)\cdots \phi_{\lambda_n}(z_n) \rangle - \sum_a \frac{1}{k+h^\vee} T^a_{\lambda_j} \frac{1}{2\pi i}\oint_{\gamma_{z_j}} \frac{dw}{w-z_j} \langle J^a(w) \phi_{\lambda_1}(z_1)\cdots \phi_{\lambda_n}(z_n) \rangle.
\end{multline}
Here we used that $L_{-1}\Phi(z)=\dd \Phi(z)$. Next we deform the integration contour $\gamma_{z_j}$ going around $z_j$ to a collection of contours 
going around the punctures $z_i$ in negative direction,
\begin{equation}
\gamma_{z_j}\sim \sqcup_{i\neq j}(-\gamma_{z_i})
\end{equation}
Then, using the Ward identity (\ref{l38 Ward identity with J 2}), we obtain the following.
\begin{thm}[Knizhnik-Zamolodchikov \cite{KZ}]
Given the weights $\lambda_1,\ldots,\lambda_n$ of $\g$ corresponding to integrable $\wh\g$-modules, the correlator of primary multiplets satisfies the following the system of ODEs
\begin{equation}\label{l38 KZ}
\underbrace{\left( \frac{\dd}{\dd z_j}+ \frac{1}{k+h^\vee}\sum_{i\neq j}\sum_a \frac{T^a_{\lambda_i} T^a_{\lambda_j}}{z_i-z_j}\right)}_{\nabla_j^{\KZ}}
\langle \phi_{\lambda_1}(z_1)\cdots \phi_{\lambda_n}(z_n)  \rangle =0,
\end{equation}
for any $j=1,\ldots,n$.
\end{thm}

One can interpret the result as follows: one has a flat connection 
\begin{equation}\label{l38 nabla_KZ}
\nabla_{\KZ}\colon= 
\sum_j dz_j \nabla^{KZ}_j + d\bar{z}_j \ol\nabla^{KZ}_j
\end{equation}
on a vector bundle over the open configuration space 
$C_n(\CP^1)$ with fiber (\ref{l38 tensor product});\footnote{
The structure of a vector bundle is determined by conformal weights of vectors in the fiber. I.e., a vector with conformal weight $(h,\bar{h})$ in (\ref{l38 tensor product}) contributes a summand $K^{\otimes h}\otimes \ol{K}^{\otimes\bar{h}}$ to the vector bundle.
}
 here $\nabla^{\KZ}_j$ are the differential operators appearing in the equation (\ref{l38 KZ}). The correlator of $\wh\g$-primary multiplets $\langle \phi_{\lambda_1}(z_1)\cdots \phi_{\lambda_n}(z_n)\rangle$
is a section of this bundle that is \emph{horizontal} w.r.t. $\nabla_{\KZ}$.

The flat connection (\ref{l38 nabla_KZ}) is known as the Knizhnik-Zamolodchikov (KZ) connection.

For future reference we will introduce a notation for the holomorphic part of the KZ connection
\begin{equation}\label{l38 nabla KZ hol}
\nabla_\KZ^\mr{hol}= \sum_j dz_j\nabla^\KZ_j + d\bar{z}_j \frac{\dd}{\dd \bar{z}_j}
\end{equation}
as a connection on the vector bundle on $C_n(\CP^1)$ with the fiber
$\bigotimes_{i=1}^n (M^\g_{\lambda_i})^*$ (i.e. taking only the first factor in each term in (\ref{l38 tensor product})).

\subsection{Space of conformal blocks. Chiral WZW model.}\label{ss WZW conf blocks}

For $z_1,\ldots,z_n$ distinct points in $\CP^1$,
let us denote  by
$\g(z_1,\ldots,z_n)$ the Lie algebra of $\g$-valued meromorphic  functions on $\CP^1$ with poles allowed only at $z_1,\ldots,z_n$. 

Fix weights $\lambda_1,\ldots,\lambda_n$ of $\g$ corresponding to integrable modules of $\wh\g$ at level $k$. Then the Lie algebra $\g(z_1,\ldots,z_n)$ acts on the tensor product of integrable modules
\begin{equation}
H_{k,\lambda_1}\otimes \cdots \otimes H_{k,\lambda_n}
\end{equation}
by
\begin{equation}\label{l38 alpha action on psi}
\alpha\circ (\psi_1\otimes\cdots \otimes \psi_n)\colon=
\sum_{j=1}^n \psi_1\otimes\cdots \otimes\rho(\mr{Laurent}_{z_j}(\alpha))\circ \psi_j \otimes \cdots\otimes \psi_n
\end{equation}
where $\mr{Laurent}_{z_j}(\alpha)=\sum_{m=-N}^\infty \sum_a \alpha^a_m (T^a\otimes t_j^m)$ is the Laurent expansion of $\alpha$ at $z_j$, in powers of $t_j=z-z_j$; this Laurent expansion acts on $H_{k,\lambda_j}$ (this action is denoted by $\rho$ above) via the tautological embedding 
\begin{equation}
\g\otimes \CC[t_j^{-1},t_j]] \hookrightarrow \wh\g.
\end{equation}

\begin{definition}
For $\lambda_1,\ldots,\lambda_n$ a collection of weights of $\g$ corresponding to integrable modules of $\wh\g$ and a collection of distinct points $z_1,\ldots,z_n\in \CP^1$,
the \emph{space of Wess-Zumino-Witten conformal blocks} is  defined as the complex vector space
\begin{equation}\label{l38 conf blocks}
\mc{B}(z_1,\ldots,z_n; \lambda_1,\ldots,\lambda_n)\colon=
\mr{Hom}_{\g(z_1,\ldots,z_n)}(H_{k,\lambda_1}\otimes\cdots \otimes H_{k,\lambda_n},\CC)
\end{equation}
--  the space of $\g(z_1,\ldots,z_n)$-equivariant maps between two $\g(z_1,\ldots,z_n)$-modules, $H_{k,\lambda_1}\otimes\cdots \otimes H_{k,\lambda_n}$ with module structure (\ref{l38 alpha action on psi}) and $\CC$ as the trivial module.
\end{definition}

One can think of elements of (\ref{l38 conf blocks}) as correlators
\begin{equation}
\langle \psi_1(z_1)\cdots \psi_n(z_n) \rangle^\mr{chiral}
\end{equation}
in the \emph{chiral} WZW model, where the correlators are (possibly multivalued) holomorphic functions on the open configuration space $C_n(\CP^1)$ and only a single copy of $\wh\g$ (and a single copy of Virasoro) acts on the space of states/space of fields. Thus, in the chiral theory one has
\begin{equation}
V^\mr{chiral}\simeq \HH^\mr{chiral}= \bigoplus_\lambda H_{k,\lambda}.
\end{equation}
One can say that the chiral WZW is obtained from usual WZW by setting the antiholomorphic current to zero, $\ol{J}=0$ (and consequently $\ol{T}=0$).

The fact that in (\ref{l38 conf blocks}) the maps are required to be $\g(z_1,\ldots,z_n)$-equivariant is exactly the statement of Ward identity (\ref{l38 Ward identity}) for chiral correlators.

Somewhat surprisingly, the space of conformal blocks is finite-dimensional (with dimension depending on the level and the weights). In fact,
the inclusion 
\begin{equation}
\iota\colon M^\g_{\lambda_1}\otimes \cdots \otimes M^\g_{\lambda_n}  \hookrightarrow H_{k,\lambda_1}\otimes \cdots \otimes H_{k,\lambda_n}
\end{equation}
of depth-zero subspaces in each integrable module
 induces an \emph{injective} map
\begin{equation}\label{l38 B inclusion}
i=\iota^*\colon \mc{B}(z_1,\ldots,z_n;\lambda_1,\ldots, \lambda_n) \hookrightarrow \mr{Hom}_\g (M^\g_{\lambda_1}\otimes\cdots\otimes M^\g_{\lambda_n},\CC).
\end{equation}
This map corresponds to 
considering 
only correlators of $\wh\g$-primary chiral fields. The fact that the map $i$ is injective reflects the fact that using the Ward identity one can reduce a correlator of $\wh\g$-descendants to the correlator of $\wh\g$-primary fields (similarly to  Virasoro case, cf. Example \ref{l26 ex: corr of a descendant}). From (\ref{l38 B inclusion}) is is obvious that the space of conformal blocks must be finite-dimensional.

\begin{example} Consider the case $G=SU(2)$ and fix the level $k=1,2,3,\ldots$. The admissible weights corresponding to integrable modules are $\lambda=0,1,\ldots,k$. 

\begin{itemize}
\item For $n=3$, the space of conformal blocks can be either 0- or 1-dimensional:
\begin{itemize}
\item One has $\mc{B}(z_1,z_2,z_3;\lambda_1,\lambda_2,\lambda_3)=\CC$ if the ``fusion rules'' (or ``quantum Klebsch-Gordan condition'') hold:
\begin{equation}\label{l38 fusion rules}
\lambda_1+\lambda_2+\lambda_3 \in 2\ZZ,\quad |\lambda_1-\lambda_2|\leq \lambda_3\leq \lambda_1+\lambda_2,\quad
\lambda_1+\lambda_2+\lambda_3\leq 2k.
\end{equation}
\item Otherwise one has $\mc{B}(z_1,z_2,z_3;\lambda_1,\lambda_2,\lambda_3)=0$.
\end{itemize}
\item For a general $n$ one can associate a basis in the space of conformal blocks $\mc{B}(z_1,\ldots,z_n;\lambda_1,\ldots,\lambda_n)$  to any trivalent tree 
 with $n$ leaves decorated with $\lambda_1,\ldots,\lambda_n$. Basis vectors in $\mc{B}$ correspond to ways to decorate the internal edges $e$ of the tree 
 by labels $\lambda_e \in \{0,1,\ldots,k\}$ so that fusion rules (\ref{l38 fusion rules}) hold at each vertex. The idea behind constructing such a basis is similar to that of Sections \ref{sss 4-pt corr of primary fields}, \ref{ss n-point correlator of primary fields} and comes from a pair-of-pants decomposition of the surface; edges of the graph correspond to circles we cut along and their decorations correspond to intermediate states we sum over. 
 
 In the case when $\CP^1$ is replaced by a Riemannian surface $\Sigma$ of genus $h$, instead of a trivalent tree one should consider decorations of a connected trivalent graph with $h$ 
loops.
\item One has a fascinating explicit formula due to Verlinde \cite{Verlinde} for the dimension of the space of $n$-point conformal blocks for $G=SU(2)$, on 
a surface $\Sigma$ of genus $h$:
\begin{equation}\label{l38 Verlinde}
\dim \mc{B}(z_1,\ldots,z_n;\lambda_1,\ldots,\lambda_n)=
\sum_{0\leq \lambda \leq k}
(S_{0\lambda})^{2-2h-n} S_{\lambda_1\lambda}\cdots S_{\lambda_n\lambda},
\end{equation}
where 
\begin{equation}\label{l38 S}
S_{\lambda\mu}=\sqrt{\frac{2}{k+2}}\; \sin \pi \frac{(\lambda+1)(\mu+1)}{k+2}.
\end{equation}
The result comes from a ``diagonalization'' of the dimension of the space of 3-point conformal blocks:
\begin{multline}
\dim \mc{B}(z_1,z_2,z_3;\lambda_1,\lambda_2,\lambda_3)=\\
=
\sum_{0\leq \lambda \leq k} \frac{S_{\lambda_1 \lambda} S_{\lambda_2 \lambda} S_{\lambda_3 \lambda}}{S_{0\lambda}}=
\left\{ 
\begin{array}{cl}
1& \mbox{if the fusion rules (\ref{l38 fusion rules}) hold},\\
0& \mbox{otherwise}
\end{array}
\right.
\end{multline}
The matrix $S$ (\ref{l38 S}) appearing here can be interpreted as representing the action of the modular $S$-transformation $\tau\ra -\frac{1}{\tau}$ on the space of conformal blocks with genus one and no punctures.\footnote{This space is $(k+1)$-dimensional, with a natural basis given by characters of modules $H_{k,\lambda}$ with 
 $0\leq \lambda \leq k$, cf. Section \ref{sss Virasoro characters}.}
\end{itemize}
\end{example}

\subsubsection{The bundle of conformal blocks.} Spaces of conformal blocks (\ref{l38 conf blocks}) with fixed weights $\lambda_1,\ldots,\lambda_n$ and variable  points $z_1,\ldots,z_n$ arrange into a complex vector bundle over the open configuration space of $n$ points,
\begin{equation}\label{l38 bundle of conf blocks}
\begin{CD}
\mc{E}_{\lambda_1\cdots \lambda_n} @<<< \mc{B}(z_1,\ldots,z_n;\lambda_1,\ldots,\lambda_n) \\@VVV \\
C_n(\CP^1)
\end{CD}
\end{equation}
This vector bundle comes equipped with a 
flat connection
\begin{equation}\label{l38 nabla_E}
\nabla_\mc{E}= \sum_{j=1}^ndz_j \left(\frac{\dd}{\dd z_j}-L_{-1}^{(j)}\right)+d\bar{z}_j\frac{\dd}{\dd \bar{z}_j},
\end{equation}
where $L_{-1}^{(j)}$ is  (the dual of) the Sugawara operator acting on $H_{k,\lambda_j}$. Correlators of chiral WZW model yield a horizontal multivalued section 
of $\mc{E}$. 
Restricted to depth zero in each integrable module (i.e. restricted to chiral correlators of $\wh\g$-primary fields), the
holomorphic part of the connection $\nabla_\mc{E}$ becomes  
the holomorphic part of the Knizhnik-Zamolodchikov connection $\nabla_\KZ^\mr{hol}$ (\ref{l38 nabla KZ hol}). 


\marginpar{Lecture 39,\\ 12/02/2022}

\subsection{The ``holographic'' correspondence between 3d Chern-Simons and 2d Wess-Zumino-Witten theories}
Here we quickly mention the remarkable relation between a 3d topological field theory (Chern-Simons theory) on a 3-manifold $M$ and a 2d CFT (Wess-Zumino-Witten model) on the boundary surface $\Sigma=\dd M$. There is a lot of literature on the subject, starting with the seminal work of Witten \cite{Witten89}. 
The correspondence between WZW and Chern-Simons is an example in the class of so-called ``holographic correspondences'' between $(d+1)$-dimensional gravity and a $d$-dimensional conformal theory on the boundary.

Fix $G$ a compact, simple, simply connected Lie group with Lie algebra $\g$ and fix $M$ an oriented compact 3-manifold with the boundary surface $\Sigma$ (possibly disconnected); we assume that $\Sigma$ is equipped with complex structure.

Consider Chern-Simons theory on $M$ with space of fields $\F_M^\CS=\Omega^1(M,\g)=\mr{Conn}(M,G)$ -- the space of connections in the trivial principal $G$-bundle over $M$; we identify connections with their $1$-forms on the base. The action functional is
\begin{equation}
S_{\CS}^b(A)\colon=\frac{1}{2\pi}\int_M \tr \left(\frac12 A\wedge dA +\frac16 A\wedge [A\stackrel{\wedge}{,}A] \right) + \underbrace{\frac{1}{4\pi}\int_\Sigma \tr\, A^{1,0}\wedge A^{0,1}}_{b(A|_\Sigma)}.
\end{equation}
The last term here is a boundary term, depending only on the restriction of $A$ to $\Sigma$ (and the decomposition of that restriction into a $(1,0)$-form and a $(0,1)$-form using the complex structure). The superscript $b$ in the action is to emphasize the presence of the boundary term $b$. The term $b$ is designed to tweak the Noether 1-form induced by the action to a convenient form: for the variation of the action one has
\begin{equation}
\delta S^b_{\CS} = -\frac{1}{2\pi} \int_M \tr\, \delta A\wedge F_A + \underbrace{\frac{1}{2\pi}\int_\Sigma \tr\, A^{0,1}\wedge \delta A^{1,0}}_{\alpha}.
\end{equation}
Here the last term is the Noether 1-form on the phase space $\Phi^{\CS}_\Sigma=\Omega^1(\Sigma,\g)$ and the fact that it vanishes on the (Lagrangian) fibers of the fibration 
\begin{equation}
p\colon \Omega^1(\Sigma,\g_\CC)\ra \Omega^{1,0}(\Sigma,\g)
\end{equation}
implies that one flat connections $A$ are actual critical points of $S_{\CS}^b$ on the subspace of fields with prescribed boundary condition $(A|_\Sigma)^{1,0}$. In particular, one can study the path integral for Chern-Simons theory
\begin{equation}\label{l39 CS PI}
Z^{\CS}(A^{1,0})\colon= \int_{\mr{Conn}(M,G)\ni A\;\mr{s.t.}\; (A|_\Sigma)^{1,0}=A^{1,0}} \mc{D}A\, e^{i k S_{\CS}^b(A)}
\end{equation}
with $k=1,2,3,\ldots$ the ``level'' of Chern-Simons theory.  

Consider gauge transformations of the connection
\begin{equation}
A \mapsto A^g = g^{-1}Ag + g^{-1} dg
\end{equation}
with generator $g\colon M\ra G$. If the generator is trivial on the boundary $g|_\Sigma=1$, 
one has
\begin{equation}
S^b_{\CS}(A^g)= S^b_{\CS}(A)\quad \mr{mod}\; 2\pi\ZZ,
\end{equation}
i.e., Chern-Simons action is invariant modulo $2\pi\ZZ$ under gauge transformations \emph{relative to the boundary}. The $2\pi\ZZ$-ambiguity is the reason why one wants the normalization factor $k$ in the exponential in the path integral (\ref{l39 CS PI}) to be an integer -- so that the integrand in the path integral is gauge-invariant.

\subsubsection{Classical CS-WZW correspondence.}
If the generator of the gauge transformation is nontrivial on the boundary, one has
\begin{equation}\label{l39 S_CS gauge defect}
S^b_{\CS}(A^g)-S^b_{\CS}(A)= i S_{\WZW}(g|_\Sigma)+\frac{1}{2\pi} \int_\Sigma \tr\, A^{1,0} g^{-1}\dd g.
\end{equation}
The first term on the r.h.s. is the Wess-Zumino-Witten action evaluated on the boundary restriction of the generator of the gauge transformation. Thus, the defect of gauge-invariance of Chern-Simons theory due to the presence of boundary is given by WZW action on the boundary. 

The full r.h.s. of (\ref{l39 S_CS gauge defect}) is sometimes called the \emph{gauged} WZW model. It can also be thought of as the action of the \emph{chiral} WZW model: we can regard the field $A^{1,0}$ as a Lagrange multiplier, integrating it out imposes the vanishing of the antiholomorphic WZW current $\ol{J}=0$.

Formula (\ref{l39 S_CS gauge defect}) is a manifestation of the Chern-Simons/Wess-Zumino-Witten correspondence at the classical level. A consequence of it is the following: if $M$ is a 3-ball, with $\Sigma=\dd M =\CP^1$, any flat connection on $M$ can be written as $A=g^{-1}dg$ (gauge-equivalent to zero connection) for some $g\colon M\ra G$. In this case (\ref{l39 S_CS gauge defect}) implies
\begin{equation}
S^b_{\CS}(A)=iS_{\WZW}(g).
\end{equation}

\subsubsection{Quantum CS-WZW correspondence.}
The relation (\ref{l39 S_CS gauge defect}) has a very nontrivial quantum counterpart:
\begin{equation}\label{l39 B_WZW=H_CS}
\mc{B}^{\WZW}_\Sigma = \HH_\Sigma^{\CS}
\end{equation}
-- the space of states that quantum Chern-Simons theory (as an Atiyah's TQFT) assigns to a surface $\Sigma$ is isomorphic to the space of WZW conformal blocks on the surface.

One  has a version of this statement with punctures on $\Sigma$. For that one should consider a Wilson graph observable $O_\Gamma$ in the Chern-Simons theory on $M$. Let $\Gamma\subset M$ be an embedded oriented graph in $M$, which is allowed to meet the boundary surface transversally; we treat these boundary points of $\Gamma$ as univalent vertices. Bulk vertices are assumed to be trivalent. Assume that the edges of $\Gamma$ are decorated by weights $\lambda$ of integrable representations of $\g$\footnote{
We understand that one can switch the orientation of any edge, switching simultaneously the representation $M^\g_\lambda$ to its dual.
} and the trivalent vertices are decorated by intertwiners -- elements of $\left(M^\g_\lambda\otimes M^\g_{\lambda'}\otimes M^\g_{\lambda''}\right)^\g$, where $\lambda,\lambda',\lambda''$ are the weights decorating the incident edges. At the level of classical field theory, the observable 
\begin{equation}\label{l39 Wilson graph}
O_\Gamma\colon \F_M\ra (M^\g_{\lambda_1})^*\otimes \cdots \otimes (M^\g_{\lambda_n})^*
\end{equation} 
is a function on the space of connections $A$, given by the contraction of holonomies of $A$ along the edges of $\Gamma$, taken in corresponding representations, with the intertwiners at the vertices. In (\ref{l39 Wilson graph}) we are assuming that $\Gamma$ has $n$ boundary vertices at points $z_1,\ldots,z_n\in \Sigma$ and the incident edges are decorated by weights $\lambda_1,\ldots,\lambda_n$.
\begin{figure}[H]
\begin{center}
\includegraphics[scale=0.75]{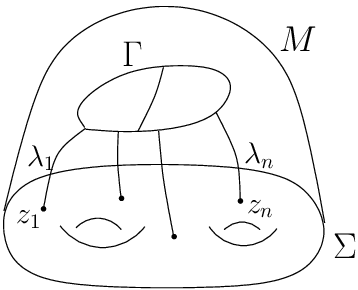}
\end{center}
\caption{Wilson graph observable.
}
\end{figure}

The path integral of Chern-Simons theory with the Wilson graph observable is
\begin{multline}
Z_{M,\Gamma}^{\CS}=\int_{\mr{Conn}(G,\Sigma)\ni A\;\mr{s.t.}\; (A|_\Sigma)^{1,0}=A^{1,0}} \mc{D} A\; e^{ik S^b_{CS}(A)} O_\Gamma(A)
\quad \in 
\\ \in \underbrace{\left(C^\infty(\Omega^{1,0}(\Sigma,\g))\otimes \mr{Hom}(M^\g_{\lambda_1}\otimes \cdots \otimes M^\g_{\lambda_n},\CC)\right)^{\mr{Map}(\Sigma,G)}}_{\HH^{\CS,\Gamma}_{\Sigma
}}
\end{multline}
where understand that the path integral is a function of the boundary $(1,0)$-form $A^{1,0}$ (the boundary condition of the path integral) and also takes values in a product of representations due to the presence of $O_\Gamma$-observable. The whole expression is expected to be equivariant under gauge transformations, where only the boundary value of the gauge generator matters after averaging over the fields in the bulk, since the integrand is equivariant (and invariant under gauge transformations relative to the boundary). 
The expected equivariance property following from (\ref{l39 S_CS gauge defect}) is:
\begin{equation}
Z^\CS_{M,\Gamma}((A^{1,0})^g)=e^{-k S_{\WZW}(g)+\frac{i}{2\pi}\int_\Sigma A^{1,0} g^{-1}\dd g} \bigotimes_{j=1}^n \rho^*_{\lambda_j}(g(z_j))\circ Z^\CS_{M,\Gamma}(A^{1,0}),
\end{equation}
where $A^{1,0}\in \Omega^{1,0}(\Sigma,\g)$ is the boundary condition and $g\colon \Sigma \ra G$ is the gauge transformation on the boundary; $(A^{1,0})^g=g^{-1}A^{1,0}g+g^{-1}\dd g$ is the chiral gauge transformation on the boundary; $\rho_{\lambda}^*(g)$ is the operator representing the group element on the module $(M^\g_\lambda)^*$.

The vector space where the path integral takes values is the space of states assigned to the boundary $\Sigma$ by Chern-Simons theory deformed by the observable $\Gamma$. It depends on the positions $z_1,\ldots,z_n\in \Sigma$ of boundary vertices of $\Gamma$ and the corresponding weights $\lambda_1,\ldots,\lambda_n$.
The 
statement of CS-WZW correspondence generalizing (\ref{l39 B_WZW=H_CS}) in this setting is:
\begin{equation}\label{l39 holography with punctures}
\mc{B}_\Sigma^{\WZW}(z_1,\ldots,z_n;\lambda_1,\ldots,\lambda_n)=\HH_\Sigma^{\CS,\Gamma}
\end{equation}
-- the Chern-Simons space of states on $\Sigma$ deformed by $\Gamma$ is the space of WZW $n$-point conformal blocks on $\Sigma$. We refer the reader to \cite{GK} for details on the correspondence (\ref{l39 holography with punctures}).

\begin{remark}
The space of states $\HH_\Sigma^\CS$ of Chern-Simons theory can also be obtained as as a geometric quantization of the moduli space of flat connections on $\Sigma$ (as a symplectic manifold with singularities, with polarization inferred from complex structure), see \cite{ADPW}. The choice of complex structure serves as a parameter of quantization, and hence one obtains a vector bundle of spaces of states over the moduli space of complex structures
\begin{equation}
\begin{CD}
\mathbb{H} @<<< \mr{GeomQ}(\MM_\mr{flat}(\Sigma),\mr{cx.\, str.\,on\,}\Sigma)\\ @VVV\\ \MM_\Sigma
\end{CD}
\end{equation}
This vector bundle comes with a natural projectively flat connection -- the so-called Hitchin connection --  allowing one to compare quantizations with different choices of complex structure. This bundle in the case of $\Sigma=\CP^1$ with punctures (and up to reduction by the M\"obius group) is the bundle of conformal blocks (\ref{l38 bundle of conf blocks}) with Knizhnik-Zamolodchikov connection.
\end{remark}

\begin{remark}\label{l39 rem: Z^CS_cylinder from Verlinde} The correspondence (\ref{l39 holography with punctures}) allows one to use things known in WZW to make statements about Chern-Simons theory. For instance, from Atiyah's axioms one has that for a closed 3-manifold of the form $M=\Sigma\times S^1$ with $\Sigma$ a closed surface of genus $h$, the Chern-Simons partition function is the dimension of the space of states on $\Sigma$:
\begin{multline}
Z^\CS_{\Sigma\times S^1} \underset{\mr{Atiyah}}{=} \dim \HH^\CS_\Sigma \underset{\mr{holography}\;(\ref{l39 B_WZW=H_CS})}{=}  \dim \mc{B}_\Sigma^\WZW  =\\
\underset{\mr{Verlinde}\;(\ref{l38 Verlinde})}{=} \left(\frac{k+2}{2}\right)^{h-1}\sum_{\lambda=0}^k \left(\sin\pi \frac{\lambda+1}{k+2}\right)^{2-2h}
\end{multline}
Here we assumed $G=SU(2)$; $k$ is the level.

Likewise, consider the Chern-Simons partition function for the 3-manifold $\Sigma\times S^1$ with observable $\Gamma$ consisting of $n$  circles of the form $\{z_i\}\times S^1$, with $z_1,\ldots,z_n$ an $n$-tuple of distinct points on $\Sigma$, assuming that the circles are decorated with weights $\lambda_1,\ldots,\lambda_n$. By the same logic, this partition function is again given by the Verlinde formula,
\begin{equation}
Z^\CS_{\Sigma\times S^1,\Gamma} = \dim \mc{B}_\Sigma^\WZW(z_1,\ldots,z_n;\lambda_1,\ldots,\lambda_n)=\mbox{r.h.s. of (\ref{l38 Verlinde})}.
\end{equation}
\end{remark}

\subsection{Parallel transport of the KZ connection, $R$-matrix  and representation of the braid group}

Consider the Knizhnik-Zamolodchikov connection $\nabla_{\KZ}^\mr{hol}$ on the depth-zero part of the bundle of $n$-conformal blocks where all weights are the same $\lambda_1=\cdots=\lambda_n=\lambda.$
\begin{equation}
\mc{E}^0_{\lambda\cdots \lambda} \ra C_n(\CC).
\end{equation}
We also restricted the base from $\CP^1$ to $\CC$ for the sake of present discussion. Recall that the base $C_n$ here is the space of \emph{ordered} configurations of points. However, since we chose all weights to be the same, we can quotient the bundle by the symmetric group permuting the $n$ points, obtaining a vector bundle
\begin{equation}
\mc{E}^{'0}_{\lambda\cdots \lambda} \ra C_n^{\mr{unordered}}(\CC)
\end{equation}
over the unordered configuration space. The connection $\nabla_{\KZ}^\mr{hol}$ descends to this quotient.

Consider 
 a path 
\begin{equation}
\gamma_j(t)=\left(1,\ldots,j-1,j+\frac{1-e^{it}}{2} ,j+\frac{1+e^{it}}{2},j+2,\ldots, n\right),\quad t\in [0,\pi]
\end{equation} 
in $C_n(\CC)$ for $j\in \{1,\ldots,n-1\}$ -- it interchanges the points $z_j$ and $z_{j+1}$ by a smooth move,
i.e., it starts at $P=(1,\ldots,j,j+1,\ldots,n)$ and finishes at $Q=(1,\ldots,j+1,j,\ldots,n)$. 
\begin{figure}[H]
\begin{center}
\includegraphics[scale=0.75]{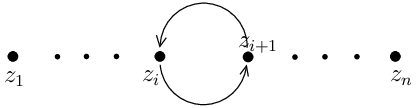}
\end{center}
\caption{Path in the configuration space.
}
\end{figure}
This path descends to a closed loop $\gamma'_j$ in $C_n^\mr{unordered}(\CC)$ starting and ending at the point $\{1,\ldots,n\}$. The parallel transport  of $\nabla_{\KZ}^\mr{hol}$ along this loop is an endomorphism of the fiber of $\mc{E}^{'0}_{\lambda\cdots\lambda}$ of the form
\begin{equation}\label{l39 R}
\underbrace{\mr{id}\otimes\cdots \otimes \mr{id}}_{j-1}\otimes R \otimes \underbrace{\mr{id}\otimes\cdots \otimes \mr{id}}_{n-j-1}\quad \in \mr{End}(((M^\g_\lambda)^*)^{\otimes n})
\end{equation}
with $R$ a certain element
\begin{equation}
R \in \mr{End}\left(((M^\g_\lambda)^*)^{\otimes 2}\right)
\end{equation}
-- it is an example of the so-called ``$R$-matrix.'' (This particular one is the $R$ matrix given by the holonomy of Knizhnik-Zamolodchikov connection.) It satisfies the Yang-Baxter equation
\begin{equation}\label{l39 YB}
(R\otimes \mr{id})(\mr{id}\otimes R)(R\otimes \mr{id}) = (\mr{id}\otimes R)(R\otimes \mr{id})(\mr{id}\otimes R)
\end{equation}
by construction -- because both sides give the parallel transport along loops in $C_n^\mr{unordered}(\CC)$ and the two sides correspond to two \emph{homotopic} loops (recall that $\nabla_{\KZ}^\mr{hol}$ is a \emph{flat} connection, so the parallel transport does not change under homotopy of the loop).

The fundamental group of $C^\mr{unordered}(\CC)$ is also known as the braid group on $n$ strands. Its standard presentation is with $n-1$ generators $c_1,\ldots, c_{n-1}$ subject to relations
\begin{equation}\label{l39 Br relations}
c_j c_{j+1} c_j= c_{j+1} c_j c_{j+1},\qquad c_i c_j =c_j c_i \;\;\mr{if}\; |i-j|\geq 2.
\end{equation}
\begin{figure}[H]
\begin{center}
\includegraphics[scale=0.75]{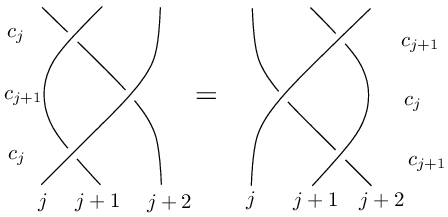}
\end{center}
\caption{Relation in the braid group. One can understand this picture as being in $\RR\times\CC$. The l.h.s. is the graph of the concatenation of paths $\gamma_j*\gamma_{j+1}*\gamma_j$, 
and similarly for the r.h.s.
}
\end{figure}

The construction above gives a representation of the braid group on the space 
\begin{equation}\label{l39 Br representation space}
((M^\g_\lambda)^*)^{\otimes n},
\end{equation} with the generator $c_j$ represented by the element (\ref{l39 R}) -- by the $R$-matrix acting in the $j$-th and $(j+1)$-st factors of the representation space (\ref{l39 Br representation space}). The first relation in (\ref{l39 Br relations}) holds due to the Yang-Baxter equation (\ref{l39 YB}) and the second relation is obvious by construction (\ref{l39 R}).

\begin{remark} Let $\gamma$ be a loop in $C_n^\mr{unordered}(\CP^1)$ or equivalently a braid. Gluing the top and the bottom of the braid, we obtain a link $L$ in the 3-manifold $M=\CP^1\times S^1$. Let $\Xi\in \mr{End}(\mc{B}(1,2,\ldots,n;\lambda,\ldots,\lambda))$ be the parallel transport of the connection (\ref{l38 nabla_E}) along $\gamma$. Then by the argument analogous to Remark \ref{l39 rem: Z^CS_cylinder from Verlinde} one has
\begin{equation}\label{l39 Z_CS with a link}
Z^{\CS}_{\CP^1\times S^1,L}= \tr_{\mc{B}(1,2,\ldots,n;\lambda,\ldots,\lambda)} \Xi.
\end{equation}
Here we think of $L$ as a special type of Wilson graph (a disjoint union of circles), with all components of the link decorated by the weight $\lambda$. Given a presentation of $\gamma$ seen as a braid in terms of generators $c_j$,
$
\gamma=c_{j_1}\cdots c_{j_r},
$
 the endomorphism $\Xi$ can be written as a product of $R$-matrices, 
 \begin{equation}\label{l39 Xi}
 \Xi=R_{j_1}\cdots R_{j_r},
 \end{equation} where the subscript $j$ means that the $R$-matrix acts on the $j$-th and $(j+1)$-st factors. Here a remark is that although the r.h.s. of (\ref{l39 Xi}) is an endomorphism of (\ref{l39 Br representation space}), it actually stabilizes the image of the inclusion (\ref{l38 B inclusion}) and hence determines an endomorphism of the space of conformal blocks.
 
 We refer to the seminal paper \cite{FK} for details on the invariant of knots and links arising from the construction (\ref{l39 Z_CS with a link}).
\end{remark}

\chapter{A-model}

The A-model introduced by Witten in \cite{Witten88} is an example of a 2d topological conformal field theory which contains a special class $Q$-closed observables (so-called evaluation observables) whose correlators yield closed forms on the moduli space $\MM_{g,n}$. Integrated over $\MM_{g,n}$, these correlators  yield interesting 
numbers -- Gromov-Witten invariants -- solutions of a certain class of enumerative geometric problems. Moreover, field-theoretic origin of these numbers (ultimately, Segal's axioms) result in an equation on Gromov-Witten invariants  -- the Witten-Dijkgraaf-Verlinde-Verlinde or WDVV equation -- which allows in some cases to fully compute the Gromov-Witten invariants, see \cite{KM}.

\section{Closed forms on the moduli space from TCFT correlators.}
\subsection{Genus zero case}
First, recall from Section \ref{l33 rem: closed forms from a TCFT} that in any TCFT, given a collection of $Q$-cocycles $\Phi_1,\ldots,\Phi_n$, the correlator of their total descendants on $\CP^1$,
\begin{equation}\label{l39 corr of descent towers}
\langle \til\Phi_1(z_1)\cdots  \til\Phi_n(z_n)\rangle
\end{equation} 
yields a closed form  (under de Rham differential) on the moduli space $\MM_{0,n}$, which can subsequently be integrated over relevant cycles to yield interesting periods. 

\subsection{Higher genus}

For a surface $\Sigma$ of general genus $g$, given $Q$-closed fields $\Phi_1,\ldots,\Phi_n\in V$,  the construction (\ref{l39 corr of descent towers}) yields a closed form on the configuration space, i.e., on the fiber of the bundle
\begin{equation}
\begin{CD}
\MM_{\Sigma,n} @<<< C_n(\Sigma) \\
@VVV \\
\MM_\Sigma
\end{CD}
\end{equation}
but not on the total space.

As a generalization of the construction (\ref{l39 corr of descent towers}), for fixed $Q$-closed fields $\Phi_1,\ldots,\Phi_n\in V$
and any $p\geq 0$ one can consider the correlator\footnote{
Note that the $(1,1)$-form $d^2x \mu(x)\, G(x)$ appearing in (\ref{l39 alpha_p}) is a coordinate-independent object -- it is the contraction of the Beltrami differential $d\bar{x}\, \mu(x) \frac{\dd}{\dd x}$ and the quadratic differential $G(x) (dx)^2$.
}  
\begin{equation}\label{l39 alpha_p}
\alpha_p(\mu_1+\bar\mu_1,\ldots, \mu_p+\bar\mu_p) \colon = 
\left\langle
\prod_{i=1}^p \left(
\int_\Sigma d^2 x_i \big(\mu_i(x_i) G(x_i)+\bar\mu_i(x_i)\ol{G}(x_i)\big)
\right)\; \Phi_1(z_1)\cdots \Phi_n(z_n)
 \right\rangle.
\end{equation}
Here $\{\mu_i+\bar\mu_i\}_{i=1,\ldots,p}$ are a $p$-tuple of Beltrami differentials on $\Sigma$, i.e., tangent vectors to the space of complex structures on $\Sigma$ (see Section \ref{sss Beltrami differentials}). We assume that supports of $\mu_i+\bar\mu_i$ for different $i$ are disjoint and also disjoint from points $z_k$. 

Note that if one changes $\mu_i\mapsto \mu_i+\bar\dd v^{1,0}$ 
for $v^{1,0}$ a section of $T^{1,0}\Sigma$ vanishing at points $z_k$ (cf. (\ref{l10 mu -> mu + dbar v})), the expression (\ref{l39 alpha_p}) does not change (as follows from integration by parts and the property $\bar\dd G=0$). Similarly, $\alpha_p$
 does not change under the transformation $\bar\mu_i\mapsto \bar\mu_i+\dd v^{0,1}$. This shows that $\alpha_p$ descends to a differential form on the moduli space of complex structures with marked points $z_1,\ldots,z_n$:
\begin{equation}
 \alpha_p\in \Omega^p(\MM_{g,n}).
\end{equation}
 
\begin{lemma}\label{l39 lemma alpha_p closed}
The form $\alpha_p$ is closed. If at least one of $\Phi_k$ is $Q$-exact, then $\alpha_p$ is an exact form.
\end{lemma} 
\begin{proof}
We have
\begin{multline}\label{l39 lemma computation}
d\alpha_p(\mu_0+\bar\mu_0,\ldots, \mu_p+\bar\mu_p)=
\sum_{r=0}^p (-1)^r\mc{L}_{\mu_r+\bar{\mu}_r} \alpha_p(\mu_0+\bar\mu_0,\ldots, \wh{\mu_r+\bar\mu_r} ,\ldots, \mu_p+\bar\mu_p)=\\
=\sum_{r=0}^p (-1)^r \Big\langle 
\left(\int_\Sigma d^2x_0 (\mu_0(x_0) G(x_0)+\bar\mu_0(x_0)\ol{G}(x_0) )  \right) 
\cdots
\left(\frac{1}{2\pi}\int_\Sigma d^2x_r (\mu_r(x_r) T(x_r)+\bar\mu_r(x_r)\ol{T}(x_r) )\right)\\ 
\cdots
\left(\int_\Sigma d^2x_p (\mu_p(x_p) G(x_p)+\bar\mu_p(x_p)\ol{G}(x_p) )\right)
\Phi_1(z_1)\cdots  \Phi_n(z_n)
 \Big\rangle\\
 =\int_{\Sigma^p} d^2x_1\cdots d^2x_p \Big\langle 
i \int_\gamma \mathbb{J}(w) \prod_{i=0}^p  
\left( \mu_i(x_i) G(x_i)+\bar\mu_i(x_i)\ol{G}(x_i) \right) \Phi_1(z_1)\cdots  \Phi_n(z_n)
 \Big\rangle =0. 
\end{multline}
Here $\gamma$ -- the  integration contour for the BRST current $\mathbb{J}$ -- is a union of circles going around points $x_i$. We deform it to a homologous contour going around points $z_i$. Since $Q\Phi_k=0$, the latter contributions vanish. The stress-energy appears in the correlator by the mechanism (\ref{l23 delta_g <...>}). Another remark is in order here: we understand the Beltrami differentials $\mu_i+\bar\mu_i$ as a collection of vector fields on the moduli space $\MM_{\Sigma,n}$ defined in the neighborhood of the reference complex structure (we can do it, since Beltrami differentials define finite deformations of complex structures); these vector fields commute, due to the disjoint support condition. This is why there is no second term in the computation of de Rham differential in the computation (\ref{l39 lemma computation}).\footnote{Recall that the general formula for the de Rham differential of a $p$-form is
$d\omega(X_1,\ldots,X_p)=\sum_{r=0}^p (-1)^r\LL_{X_r}\omega(X_0,\ldots,\wh{X_r},\ldots,X_p)+\sum_{0\leq r<s \leq p} (-1)^{r+s}\omega([X_r,X_s],X_0,\ldots, \wh{X_r},\ldots,\wh{X_s},\ldots, X_p)$.}

If $\Phi_k=Q(\Psi)$ for some $k$ and some field $\Psi$, one has a similar contour-deformation argument transforming the integration contour for $\mathbb{J}$ around $\Psi(z_k)$ to a contour around points $x_i$ where the action of $\mathbb{J}$ transforms the field $G$ into the stress-energy tensor $T$, and the whole expression becomes $d \alpha_{p-1}(\Phi_1,\ldots,\Psi,\ldots,\Phi_n)$. Here the arguments indicate the fields to which the construction (\ref{l39 alpha_p}) is applied. 
\end{proof}
 
\begin{remark} In a chirally split TCFT (cf. Remark \ref{l33 rem: chirally split TCFTs}) one has a refined version of the construction above: assuming that fields $\Phi_1,\ldots,\Phi_n\in V$ are both $Q_L$- and $Q_R$-closed, one can construct a family of closed $(p,q)$-forms on the moduli space,
\begin{equation}
\alpha_{p,q}\colon=
\Big\langle \prod_{i=1}^p \left(\int_\Sigma d^2x_i  \mu_i(x_i)G(x_i)\right)
\prod_{j=1}^q \left(\int_\Sigma d^2y_j  \bar\mu_j(y_j)\ol{G}(y_j)\right)
\Phi_1(z_1)\cdots \Phi_n(z_n)
 \Big\rangle\;\; \in \Omega^{p,q}_\mr{cl}(\MM_{g,n}).
\end{equation}
\end{remark}

\begin{remark} Construction (\ref{l39 corr of descent towers}) (correlators of descent towers), specialized to the case of canonical descent (\ref{l33 canonical total descendant}), can be understood as a special case of construction (\ref{l39 alpha_p}) (correlators with fields $G,\ol{G}$ adjoined), using the following observation. For $v\in T_z\Sigma$ a tangent vector and $\Phi(z)\in V_z$ a field, one can express the descent operator $\Gamma$ (\ref{l33 Gamma}) acting on $\Phi$ as the action by a particular Beltrami differential:
\begin{equation}\label{l39 descent via Beltrami}
\iota_v \Gamma \Phi(z) = \int_\Sigma d^2 x (\mu(x) G(x)+\bar\mu(x) \ol{G}(x)) \Phi(z),
\end{equation}
where 
\begin{equation}
\mu= \bar\bdd (\theta_D v^{1,0}),\;\; \bar\mu = \bdd (\theta_D v^{0,1}).
\end{equation}
Here $D$ is a small disk containing $z$, $\theta_D$ is the function equal to $1$ in the disk and zero outside it,  $v^{1,0}$ is a holomorphic vector field whose value at $z$ is the $(1,0)$-component of the vector $v$, and likewise for $v^{0,1}$. The idea is that the r.h.s. of (\ref{l39 descent via Beltrami}) is a contour integral of $G,\ol{G}$ over the boundary of the disk, which is exactly the descent operator $\Gamma$.

From this standpoint, the differential form (\ref{l39 corr of descent towers}) on the configuration space (defined via repeated action of $\Gamma$ at punctures) contracted with some tangent vectors to $C_n(\Sigma)$ can be written in terms of Beltrami differentials, i.e., as a special case of (\ref{l39 alpha_p}).
\end{remark}

\section{2d cohomological field theories}
Given a TCFT, restricting to correlators of $Q$-cocycles (extended to total descent towers as in (\ref{l39 corr of descent towers})), one obtains a simpler structure called a \emph{cohomological field theory}.\footnote{Here we are making an implicit assumption that the correlators extend to the Deligne-Mumford compactification of the moduli spaces $\MM_{g,n}$.}

The following definition is from Kontsevich-Manin \cite[section 6.1]{KM}.

\begin{definition}\label{l39 def: CohFT}
A 2d cohomological field theory is the following data:
\begin{itemize}
\item A $\ZZ$-graded complex vector space $W$ with an inner product $\langle,\rangle$.\footnote{In the cohomological field theory associated with a TCFT, one should think of $W$ as the $Q$-cohomology of the space of fields of the TCFT, $W_\mr{CohFT}
=H_Q(V_\mr{TCFT})$.}
\item A collection of linear maps (correlators)
\begin{equation}
I_{g,n} \colon W^{\otimes n}\ra H^\bt(\ol\MM_{g,n})
\end{equation}
with $g,n\geq 0$ satisfying 
\begin{equation}\label{l39 stability}
2-2g-n<0
\end{equation}
(``stability'' condition). I.e., $I_{g,n}$ maps an $n$-tuple of  elements of $W$ to a de Rham
cohomology class of the Deligne-Mumford compactification $\ol\MM_{g,n}$ of the moduli spaces of complex structures. 
\end{itemize}

The collection of maps $I_{g,n}$ is assumed to satisfy the following factorization axioms. 
\begin{enumerate}[(i)]
\item Let $S=\{i_1,\ldots,i_{n_1}\}\subset \{1,\ldots,n\}$ be a subset with $n_1$ elements 
and $S^c=\{j_1,\ldots,j_{n_2}\}$ its complement, with $n_2=n-n_1$ elements. For $g_1+g_2=g$, let 
\begin{equation}\label{l39 DM stratum I}
\dd^\mr{I}_{g_1;S}\ol\MM_{g,n}\simeq \ol\MM_{g_1,n_1+1}\times \ol\MM_{g_2,n_2+1}
\end{equation}
be the Deligne-Mumford compactification stratum of complex codimension 1 (a.k.a. ``compactification divisor''), corresponding to nodal curves where one component has genus $g_1$ and contains punctures from the subset $S$, plus the ``node'' or ``neck'' puncture and the second component similarly has genus $g_2$ and contains punctures from $S^c$, plus the ``neck'' puncture.\footnote{See Remark \ref{l11 rem: DM for higher genus}.} 
Then the factorization axiom is:
\begin{multline}\label{l39 CohFT factorization}
I_{g,n}(\Phi_1,\ldots,\Phi_n)\Big|_{\dd^\mr{I}_{g_1;S}\ol\MM_{g,n}}=\\
=
\sum_{k,l} I_{g_1,n_1+1}(\Phi_{i_1},\ldots,\Phi_{i_{n_1}},e_k) h^{kl} I_{g_2,n_2+1}(\Phi_{j_1},\ldots,\Phi_{j_{n_2}},e_l)
\end{multline}
Here $\Phi_1,\ldots,\Phi_n\in W$ any elements. We also introduced a basis $\{e_k\}$ in $W$ and $h^{kl}$ is the inverse matrix of the inner product in this basis $h_{kl}=\langle e_k,e_l \rangle$.
\item Consider the second type of Deligne-Mumford compactification stratum, corresponding to introducing a neck on a handle,
\begin{equation}\label{l39 DM stratum II}
\dd^\mr{II}\ol\MM_{g,n}\simeq \ol\MM_{g-1,n+2}.
\end{equation}
The corresponding factorization axiom is:
\begin{equation}\label{l39 CohFT factorization 2}
I_{g,n}(\Phi_1,\ldots,\Phi_n)\Big|_{\dd^\mr{II}\ol\MM_{g,n}}=\sum_{k,l} h^{kl}
I_{g-1,n+2}(\Phi_1,\ldots,\Phi_n,e_k,e_l).
\end{equation}
\end{enumerate}
\end{definition}

\begin{figure}[H]
$$\vcenter{\hbox{ \includegraphics[scale=0.75]{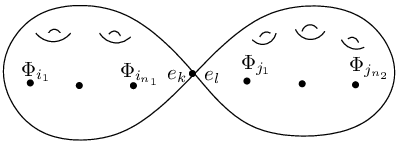}
}}
\quad 
\vcenter{\hbox{ \includegraphics[scale=0.75]{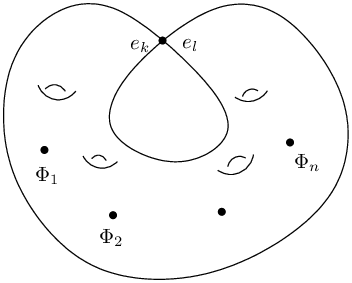}
}}
$$
\caption{Factorization on nodal curves.
}
\end{figure}

\begin{remark}
Thinking of a cohomological field theory as a reduction of a ``parent'' TCFT by passing to $Q$-cohomology, the factorization axioms above are a consequence of Segal's sewing axiom for the parent TCFT.
\end{remark}

\section{Gromov-Witten cohomological field theory 
}
Fix a compact K\"ahler manifold $X$ (the target). We will assume that the K\"ahler symplectic form $\omega$ on $X$ 
has integer periods.\footnote{In fact, the story of this section goes through under much milder assumptions: one just needs to require $X$ to be a symplectic manifold with compatible almost complex structure, such that the symplectic form has integer periods. 
The stronger assumption that $X$ is K\"ahler 
comes from the field theory side, where one wants to start with a sigma-model, cf. Section \ref{ss A-model} (in the original approach \cite{Witten88}, with $\mc{N}=(2,2)$ supersymmetric sigma-model).
}

We will be constructing a cohomological field theory in the sense of  Definition \ref{l39 def: CohFT} with the space of fields $W=H^\bt_\mr{de\,Rham}(X)$. This cohomological field theory, called Gromov-Witten theory, arises as a reduction by passing to $Q$-cohomology from a certain TCFT -- the A-model, which is a sigma-model with target $X$ (coupled to certain extra fields).

Let $\Sigma$ be a closed Riemannian surface. 
For any smooth map $\phi\colon \Sigma\ra X$, we define the \emph{degree} of $\phi$ as
\begin{equation}\label{l39 d}
d=\int_\Sigma \phi^* \omega\quad \in \ZZ.
\end{equation}
Let us denote by $\mr{Hol}_d(\Sigma,X)$ the space of holomorphic maps $\phi\colon \Sigma\ra X$ of a fixed  degree $d$.

The space $\mr{Hol}_d(\Sigma,X)$ is finite-dimensional for any $d\in \ZZ$;\footnote{For this statement, compactness of $X$ is crucial.} it vanishes for $d<0$ and consists of constant maps for $d=0$: 
\begin{equation}\label{l39 Hol_0 = X}
\mr{Hol}_0(\Sigma,X)=X.
\end{equation}

\begin{example}
Let the surface be $\Sigma=\CP^1$ with homogeneous coordinates $(z_0:z_1)$ and let the target be $X=\CP^N=(\CC^{N+1}\backslash\{0\})/\CC^*$ with homogeneous coordinates $(u_0:\cdots:u_N)$. We assume that the target $\CP^N$ is equipped with the symplectic structure $\omega_0=\omega_\mr{FS}$ the Fubini-Study 2-form normalized to have unit integral over $\CP^1\subset \CP^N$. 
We describe degree $d$ holomorphic maps $\CP^1\ra \CP^N$ as degree $d$ polynomial maps 
\begin{equation}
\CC^2\backslash\{0\} \ra \CC^{N+1}\backslash \{0\}
\end{equation}
where we subsequently quotient both sides by $\CC^*$.

Thus, a degree $d$ holomorphic map $\CP^1\ra \CP^N$ is given as
\begin{equation}
u_p=A_p(z_0,z_1), \quad 0\leq p\leq N,
\end{equation}
where $A_0,\ldots,A_p$ are homogeneous polynomials of degree $d$ in $z_0,z_1$. Tuples of polynomials $\{A_p\}$ and $\{A_p'\}$ determine the same map $\CP^1\ra \CP^n$ if and only if $A'_0=c A_0,\ldots, A'_n=cA_n$ for some $c\in \CC^*$. 
Also, a tuple $\{A_p\}$ determines a map $\CP^1\ra \CP^N$ if and only if the polynomials $\{A_p\}$ do not have a common nontrivial root $(z_0,z_1)$ -- if they do, then there is a point of $\CC^2\backslash\{0\}$ which is mapped to $\{0\}\in \CC^{N+1}$, which does not correspond to any point in $\CP^N$. Such tuples $\{A_p\}$ correspond to so-called \emph{Drinfeld's quasimaps} $\CP^1\ra \CP^N$; they are not however holomorphic maps in the usual sense (in particular they cannot be evaluated at all points of the source), so we will discard them. In summary, the space of holomorphic maps of degree $d$ is
\begin{multline}\label{l39 Hol from CP^1 to CP^n}
\mr{Hol}_d(\CP^1,\CP^N)=\\=\{ (A_p(z_0,z_1)=\sum_{j=0}^d a_{pj }z_0^j z_1^{d-j})_{p=0,\ldots,n}\; |\; \{A_p\} \;\mbox{do not have common roots} \}\;/\;\CC^* \\
= \CP^{(d+1)(N+1)-1}\backslash \mathbb{D}
\end{multline}
where $a_{pj}\in\CC$ are the coefficients of the polynomials -- thus in order to specify a holomorphic map $\CP^1\ra \CP^N$ we need to specify the $(N+1)\times (d+1) $ array of coefficients $a_{pj}$, modulo scaling them all by a number $c\in\CC^*$, which yields the projective space $\CP^{(d+1)(N+1)-1}$.  We denoted the set of ``prohibited'' configurations corresponding to quasimaps by $\mathbb{D}$ -- it is a subvariety in $\CP^{(d+1)(N+1)-1}$ of positive codimension and can be described as $\mathbb{D}\simeq \CP^1\times \CP^{d(N+1)-1}$ (the first factor in the r.h.s. gives the point on the source where the common root occurs).

As a further simplicifation, consider the case $N=1$. Then degree $d$ holomorphic maps $\CP^1\ra \CP^1$ are described by
\begin{equation}
(1:z) \mapsto (1:\frac{A_1(z)}{A_0(z)})
\end{equation}
where $A_0$ and $A_1$ are two degree $d$ polynomials in the variable $z$ without common roots. For instance, for $d=1$ the holomorphic maps are
\begin{equation}
(1:z)\mapsto (1: a \frac{z-b}{z-c})
\end{equation}
with parameters $a,b,c\in \CC$ such that $a\neq 0$ and $b\neq c$ (otherwise it is a quasimap).
 \end{example}

\subsection{Genus zero case.} 
Let $\Sigma=\CP^1$.  
We have a diagram of maps
\begin{equation}\label{l39 ev p diagram}
\begin{CD}
C_n(\Sigma)\times \mr{Hol}_d(\Sigma,X) @>{\mr{ev}}>> \underbrace{X\times\cdots \times X}_n \\
@VpVV \\
C_{n}(\Sigma)
\end{CD}
\end{equation}
Here $\mr{ev}$ is the evaluation map, evaluating the holomorphic map on an $n$-tuple of points in $\Sigma$,
\begin{equation}
\mr{ev}\colon ((z_1,\ldots,z_n),\phi)\mapsto (\phi(z_1),\ldots,\phi(z_n)).
\end{equation}
The vertical map $p$ in (\ref{l39 ev p diagram}) is the projection onto the first factor.

\begin{remark}\label{l39 rem: compactification}
The two objects in the left column in (\ref{l39 ev p diagram}) admit a certain compactification (we will leave it as a black box and denote it by an overline) such that the maps $\mr{ev},p$ extend to it.\footnote{The compactification of the configuration space $C_n(\Sigma)$ is due to Fulton-MacPherson. The compactification of $C_n(\Sigma)\times \mr{Hol}_d(\Sigma,X)$ is a special case of Kontsevich's compactification of the moduli space of stable maps.}
\end{remark}

Fix a collection of closed forms on the target, $\alpha_1,\ldots,\alpha_n\in \Omega^\bt_\mr{cl}(X)$. Then one can define
\begin{equation}\label{l39 I_0,d}
I_{0,n,d}(\alpha_1,\ldots,\alpha_n)=\int_{\mr{Hol}_d(\Sigma,X)} \mr{ev}^*(\pi_1^*(\alpha_1)\wedge \cdots \wedge \pi_n^*(\alpha_n)) \quad \in \Omega^\bt_\mr{cl}({C}_n(\Sigma)),
\end{equation}
where $\pi_i\colon X^n \ra X$ is the projection onto the $i$-th factor. The form (\ref{l39 I_0,d}) has the following properties:
\begin{enumerate}[(i)]
\item \label{l39 I_0,d property (i)} it is closed and its cohomology class is depends only on the cohomology classes of forms $\alpha_i$ -- this fact follows from Stokes' theorem for fiber integrals and relies on the existence of compactifications, cf. Remark \ref{l39 rem: compactification}. 
\item The form (\ref{l39 I_0,d}) extends to a closed form on the Fulton-MacPherson compactified configuration space $\ol{C}_n(\Sigma)$.
\item The form is also basic w.r.t. M\"obius transformations (which act diagonally in the top left corner in (\ref{l39 ev p diagram}) and in the obvious way on the configuration space). Therefore, the $I_{0,n,d}$ descends to a closed from on the moduli space $\ol\MM_{0,n}$:
\begin{equation}
I_{0,n,d}(\alpha_1,\ldots,\alpha_n)\in \Omega^\bt_\mr{cl}(\ol\MM_{0,n}),
\end{equation}
and by (\ref{l39 I_0,d property (i)}) above, the construction descends to de Rham cohomology:
\begin{equation}
I_{0,n,d}([\alpha_1],\ldots,[\alpha_n])\in H^\bt(\ol\MM_{0,n}).
\end{equation}
This is the so-called Gromov-Witten cohomology 
class.
\end{enumerate}

The genus zero part of the Gromov-Witten cohomological field theory is then defined as
\begin{equation}\label{l39 GW class genus 0}
I_{0,n}([\alpha_1],\ldots,[\alpha_n])\colon= \sum_{d\geq 0} q^d I_{0,n,d}([\alpha_1],\ldots,[\alpha_n]),
\end{equation}
where $q$ is a formal (infinitesimal) generating parameter.

\begin{remark} The A-model, the ``parent'' TCFT for the Gromov-Witten hohomological field theory, contains a class of $Q$-closed observables: for each closed form $\alpha$ on the target one has an ``evaluation observable'' $O_{\alpha}\in V$, see Section \ref{sss evaluation observables}. The cohomology class (\ref{l39 GW class genus 0})  is the cohomology class of the $n$-point correlator on $\CP^1$,
\begin{equation}
\langle \til{O}_{\alpha_1}\cdots  \til{O}_{\alpha_n}\rangle,
\end{equation}
where tilde means the full descendant, cf. (\ref{l39 corr of descent towers}).
\end{remark}

\marginpar{Lecture 40,\\ 12/5/2022}
\subsection{General genus}
Let $\Sigma$ be a closed oriented smooth surface of any genus $g$ 
and fix $d\geq 0$. One has a fiber bundle over the moduli space of complex structures on $\Sigma$ with fiber over $J\in \MM_\Sigma$ the space of holomorphic maps (w.r.t. to the complex structure $J$ on $\Sigma$) to $X$ of degree $d$:
\begin{equation}\label{l40 Hol bundle}
\begin{CD}
\MM_\Sigma(X,d)
@<<< \mr{Hol}_d(\Sigma,X) \\
@VVV \\
\MM_\Sigma
\end{CD}
\end{equation}
We have the ``forgetful'' map 
\begin{equation}
r\colon \MM_{\Sigma,n}\ra \MM_\Sigma
\end{equation}
from the moduli space with $n$ marked points to the moduli space without marked points, given by forgetting the marked points. The pullback of the bundle (\ref{l40 Hol bundle}) along the forgetful map $
\MM_{\Sigma,n}(X,d)\colon= r^*\MM_\Sigma(X,d) 
\ra \MM_{\Sigma,n}$ fits into the diagram similar to (\ref{l39 ev p diagram}):
\begin{equation}
\begin{CD}
\MM_{\Sigma,n}(X,d)
 @>\mr{ev}>>  X^n 
\\
@VpVV \\ \MM_{\Sigma,n} 
\end{CD}
\end{equation}
where $\mr{ev}$ evaluates the holomorphic map at the $n$ marked points. Again, there exists a compactification of the objects in the right column of the diagram -- Kontsevich's moduli space of stable maps at the top and Deligne-Mumford compactification of $\MM_{g,n}$ at the bottom -- such that the maps $\mr{ev},p$ extend to the compactifications:\footnote{
Very roughly, the idea is that in addition to adjoining Deligne-Mumford compactification strata (nodal curves) coming from the compactification of the base, in the total space one needs to blow up the configurations where a quasimap point in the space of holomorphic maps coincides with a marked point on the surface (i.e. exactly the situations where the evaluation of a map at a marked point becomes problematic).
}
\begin{equation}\label{l40 ev p diagram compactified}
\begin{CD}
\ol\MM_{g,n}(X,d)
@>\mr{ev}>>  X^n 
\\
@VpVV \\ \ol\MM_{g,n}
\end{CD}
\end{equation}
Here we put the genus of $\Sigma$ instead of $\Sigma$ as index.

Given closed forms $\alpha_1,\ldots,\alpha_n\in \Omega^\bt_\mr{cl}(X)$, we construct a form
\begin{equation}\label{l40 I form}
I_{g,n,d}(\alpha_1,\ldots,\alpha_n)= \int_{\mr{Hol}_d(\Sigma,X)} \mr{ev}^* (\pi_1^*(\alpha_1)\wedge \cdots \wedge \pi_n^*(\alpha_n))\quad \in H^\bt(\MM_{g,n})
\end{equation}
As in genus zero case, by Stokes' theorem and due to the existence of compactifications, this form is closed and its cohomology class depends only on the cohomology classes of $\alpha_i$; thus the construction descends to cohomology. Also, the form (\ref{l40 I form}) extends to the compactification of the moduli space of complex structures $\ol\MM_{g,n}$.

This leads to the following definition at the level of cohomology.
\begin{definition}
Genus $g$ Gromov-Witten classes are defined
via the diagram (\ref{l40 ev p diagram compactified}) as
\begin{equation}\label{l40 GW class}
I_{g,n,d}([\alpha_1],\ldots,[\alpha_n])=p_*\mr{ev}^* (\pi_1^*[\alpha_1]\wedge \cdots \pi_n^*[\alpha_n])\quad \in H^\bt(\ol\MM_{g,n}),
\end{equation}
where $[\alpha_1],\ldots, [\alpha_n]\in H^\bt(X)$ are any de Rham cohomology classes of the target $X$.
\end{definition}

As a generalization of (\ref{l39 GW class genus 0}) to any genus, Gromov-Witten cohomological field theory is defined by summing the classes (\ref{l40 GW class}) over the degree $d$, weighed with $q^d$, 
\begin{equation}\label{l40 GW class summed over d}
I_{g,n}([\alpha_1],\ldots,[\alpha_n])\colon= \sum_{d\geq 0} q^d I_{g,n,d}([\alpha_1],\ldots,[\alpha_n]).
\end{equation}

\begin{thm}\label{l40 thm factorization}
The cohomology classes $I_{g,n}$ 
satisfy the factorization properties (\ref{l39 CohFT factorization}), (\ref{l39 CohFT factorization 2}).
\end{thm}

\begin{proof}[Idea of proof]
 Fix the numbers $g,n,d\geq 0$, fix a splitting of genus $g=g_1+g_2$ and a splitting of the set of marked points into complementary subsets $\{1,\ldots,n\}=S\sqcup S^c$. Consider a compactification stratum $\dd_{g_1,S}\ol\MM_{g,n}$ of the moduli space of complex structures. 
 The restriction of the bundle (\ref{l40 ev p diagram compactified}) to it is\footnote{The intuition is that a holomorphic map $\phi$ from a nodal curve $\Sigma=\Sigma'\cup_q \Sigma''$ to $X$ is given by a pair of holomorphic maps, $\phi'$ on $\Sigma'$ and $\phi''$ on $\Sigma''$ agreeing at the node $q$. The degree of $\phi$ splits as the degree of $\phi'$ plus the degree of $\phi''$.}
\begin{equation}
p^{-1}(\dd_{g_1,S}\ol\MM_{g,n}) \simeq \bigsqcup_{d_1+d_2=d} \ol\MM_{g_1,S\cup q}(X,d_1)\times_X \ol\MM_{g_2,S^c\cup q^*}(X,d_2)
\end{equation}
Here  $q,q^*$ are the names of the nodal point as point seen as a marked point on either component of the nodal curve; the fiber product in the r.h.s. is w.r.t. evaluations at $q$ and at $q^*$, respectively. The evaluation map on the r.h.s. lands in $X^S\times \Delta \times X^{S^c}$ where $\Delta\subset X\times X$ is the diagonal.

Fix the cohomology classes $[\alpha_1],\ldots,[\alpha_n]\in H^\bt(X)$.
Then we have
\begin{equation}\label{l40 factorization proof eq1}
\mr{ev}^*(\prod_{i=1}^n \pi_i^* [\alpha_i])\Big|_{p^{-1}(\dd_{g_1,S}\ol\MM_{g,n})}=\sum_{k,l}\mr{ev}^*_{S\cup q} (\prod_{i\in S} \pi_i^*[\alpha_i]\cdot \pi_q^* e_k )h^{kl} \mr{ev}^*_{S^c\cup q^*} (\prod_{i\in S^c} \pi_i^*[\alpha_i]\cdot \pi^*_{q^*} e_l)
\end{equation}
where $e_{k}$ is a basis in $H^\bt(X)$ and $h^{kl}$ is the inverse matrix of Poincar\'e pairing; $\mr{ev}$ in the l.h.s. is for 
holomorphic maps out of the whole nodal curve $\Sigma$
and in the r.h.s. we have maps $\mr{ev}$ for the two components of $\Sigma$. Here we used the fact that the cohomology class of $X\times X$ Poincar\'e dual to the homology class of the diagonal $\Delta\subset X\times X$ is $\sum_{k,l}h^{kl}e_k\otimes e_l$. The appearance of this class in the r.h.s. of (\ref{l40 factorization proof eq1}) effectively forces $q$ and $q^*$ to map to the same point in $X$.

Pushing forward (i.e. performing the fiber integral) the l.h.s. of (\ref{l40 factorization proof eq1}) to the Deligne-Mumford stratum $\dd_{g_1,S}\ol\MM_{g,n}$ and pushing forward the r.h.s. 
to the product $\ol\MM_{g_1,S\cup q}\times \ol\MM_{g_2,S^c\cup q^*}$, and summing over the degree $d$ (and the splittings $d=d_1+d_2$) with weight $q^d$, we obtain the desired factorization property (\ref{l39 CohFT factorization}):
\begin{equation}\label{l40 factorization}
I_{g,n}([\alpha_1],\ldots,[\alpha_n])|_{\dd_{g_1,S}\ol\MM_{g,n}}=
\sum_{k,l} I_{g_1,n_1+1}(\{[\alpha_i]\}_{i\in S},e_k) h^{kl} I_{g_2,n_2+1}(\{[\alpha_i]\}_{i\in S^c},e_l)
\end{equation}
The factorization property on the second type of Deligne-Mumford strata (\ref{l39 CohFT factorization 2}) is proved similarly.

\end{proof}

%

\begin{definition}
For a collection of cohomology classes $[\alpha_1],\ldots,[\alpha_n]\in H^\bt(X)$. The genus $g$, $n$-point Gromov-Witten invariant of degree $d$ is defined as the pairing of the Gromov-Witten class (\ref{l40 GW class}) with the fundamental class of the moduli space $\ol\MM_{g,n}$:
\begin{equation}
\GW_{g,n,d}([\alpha_1],\ldots,[\alpha_n])\colon = \int_{\ol\MM_{g,n}} I_{g,n,d}([\alpha_1],\ldots,[\alpha_n]) \quad \in \CC
\end{equation}
\end{definition}

\subsection{Enumerative meaning of Gromov-Witten classes}
Fix $c_1,\ldots, c_n\in C_\bt(X,\ZZ)$ -- a collection of cycles in $X$ and let $[\delta_{c_i}]\in H^\bt(X)$ be the Poincar\'e dual classes to the homology classes of $c_i$; $[\delta_{c_i}]$ can be represented in de Rham cohomology by a (cohomologically smeared) Dirac delta-form on $c_i$, hence the notation.

Recall that
for $c\subset X$ a $k$-cycle in a smooth $N$-manifold $X$, the delta-form $\delta_c$ is the distributional $(N-k)$-form characterized by the property 
\begin{equation}
\int_X \delta_c \wedge \alpha =\int_c \alpha|_c
\end{equation} 
for any $\alpha\in \Omega^k(X)$.
A cohomologically smeared $\delta$-form on $c$ is a smooth form with the same property which is only required to hold for $\alpha$ a \emph{closed} $k$-form.

The Gromov-Witten invariant
\begin{equation}\label{l39 GW}
\GW_{g,n,d}([\delta_{c_1}],\ldots,[\delta_{c_n}])
\quad \in \mathbb{Q}
\end{equation}
is 
the ``virtual'' count of holomorphic curves in $X$ of genus $g$ and degree $d$ passing through the cycles $c_1,\ldots,c_n$. This number is an integer for zero genus. Generally, for higher genus, it is a rational number: holomorphic maps $\phi$ in this virtual count should be weighed with $\frac{1}{|\mr{stab}(\phi)|}$ -- the inverse of the number of holomorphic automorphisms $\Sigma$ commuting with $\phi$.\footnote{A related point: compactified moduli spaces $\ol\MM_{g,n}(X)$ have orbifold singularities which lead to having the ``virtual'' fundamental class defined over $\mathbb{Q}$ rather than $\ZZ$.}

\begin{figure}[H]
$$
\vcenter{\hbox{ \includegraphics[scale=0.75]{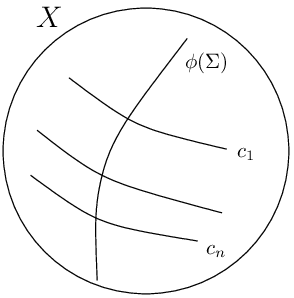}
}}
$$
\caption{The enumerative problem: counting holomorphic maps $\phi\colon \Sigma \ra X$ with the image passing through a given set of cycles $c_1,\ldots,c_n$
}
\end{figure}

\subsection{Quantum cohomology ring}
Consider de Rham cohomology of $X$ as a $\ZZ_2$-graded\footnote{
The reason for why we only consider the mod $2$ projection of the natural $\ZZ$-grading on cohomology is elucidated in Example \ref{l40 ex: CP^1} below: $\ZZ$-grading is not preserved by the deformation of the cup product we are describing.
} vector space $H^\bt(X)$ equipped with an inner product $\langle,\rangle$ given by Poincar\'e pairing 
$\langle [\alpha_1],[\alpha_2] \rangle=\int_X \alpha_1\wedge \alpha_2$ and equipped with a bilinear map
\begin{equation}
m\colon H(X)\otimes H(X) \ra H(X)
\end{equation}
characterized by
\begin{equation}\label{l40 quantum product}
\langle m([\alpha_1],[\alpha_2]),[\alpha_3] \rangle \colon= \sum_{d\geq 0} q^d \GW_{0,3,d}([\alpha_1],[\alpha_2],[\alpha_3])
\end{equation}
with $q
$ 
the generating parameter as in (\ref{l39 GW class genus 0}). Note that Gromov-Witten classes in the r.h.s. here are elements of $H^\bt(\ol\MM_{0,3})$, i.e., numbers (since $\ol\MM_{0,3}$ is a point). If $\alpha_i$ are integer classes then the Gromov-Witten invariant $GW_{0,3,d}$ is an integer.

\begin{definition} The bilinear operation $m\colon H(X)\otimes H(X) \ra H(X)$ defined by (\ref{l40 quantum product}) is called the ``quantum product'' on the cohomology $H(X)$. The quantum product endows the cohomology $H(X)$ 
with the structure of a $\ZZ_2$-graded ring called the 
``quantum cohomology ring.''
\end{definition}

Note that due to (\ref{l39 Hol_0 = X}) one has
\begin{equation}
\GW_{0,3,0}([\alpha_1],[\alpha_2],[\alpha_3]) = \int_X \alpha_1\wedge \alpha_2\wedge\alpha_3.
\end{equation}
Thus, the $q^0$ term in $m$ is the usual cup product while $q^{>0}$ terms comprise a deformation of the cup product by the data of (genus-zero, three-point) Gromov-Witten classes.

Implicitly present in the definition above (in the words ``ring'' and ``product'') is the following.
\begin{lemma}
The operation $m$ defined by (\ref{l40 quantum product}) is supercommutative and associative.
\end{lemma}
Supercommutativity is obvious from the definition of Gromov-Witten classes. Associativity is not obvious and is a consequence of the WDVV equation (\ref{l40 WDVV}).

\begin{example}\label{l40 ex: CP^1}
Let $X=\CP^1$. The space of holomorphic maps of degree $d$ is given by (\ref{l39 Hol from CP^1 to CP^n}):
\begin{equation}
\mr{Hol}_d(\CP^1,\CP^1)=\CP^{2d+1}\backslash\mathbb{D}
\end{equation}
-- it is a manifold of real dimension $2(2d+1)=4d+2$.
Thus the Gromov-Witten invariants 
\begin{equation}\label{l40 CP^1 GW_0,3,d}
\GW_{0,3,d}(\alpha_1,\alpha_2,\alpha_3)=\int_{\mr{Hol}_d(\CP^1,\CP^1)} \mr{ev}^*(\pi_1^*\alpha_1\wedge \pi_2^* \alpha_2\wedge \pi_3^*\alpha_3).
\end{equation}
Note that for this number to be nonzero it is necessary that the dimension of the space over which we integrate is equal to the degree of the form we are integrating:
\begin{equation}\label{l40 CP^1 quantum product balancing}
4d+2 = |\alpha_1|+|\alpha_2|+|\alpha_3|,
\end{equation}
where $|\alpha|$ is the de Rham degree of the form $\alpha$.

The cohomology of $\CP^1$ is spanned by two classes, $[1]\in H^0(\CP^1)$ and $[\omega]\in H^2(\CP^1)$ -- the class of the Fubini-Study 2-form normalized to have unit volume. Choosing $\alpha_{1,2,3}$ in (\ref{l40 CP^1 GW_0,3,d}) to be the basis classes in $H^\bt(\CP^1)$ we observe that there are only two possibilities (up to permutations) to satisfy (\ref{l40 CP^1 quantum product balancing}):
\begin{eqnarray}
\GW_{0,3,0}([1],[1],[\omega])&=&1 \label{l40 CP^1 GW(1,1,1)}, \\
\label{l40 CP^1 GW(o,o,o)}
\GW_{0,3,1}([\omega],[\omega],[\omega])&=& 1.
\end{eqnarray}
Note that (\ref{l40 CP^1 GW(1,1,1)}) corresponds to the usual cup product in cohomology ($[1]\cup[1]=[1]$, or $[1]\cup[\omega]=[\omega]$). On the other hand, (\ref{l40 CP^1 GW(o,o,o)}) is the number of degree $1$ holomorphic maps $\CP^1\ra \CP^1$ (i.e., M\"obius transformations) mapping three marked points in the source $\CP^1$ into three fixed points $c_1,c_2,c_3$ in the target $\CP^1$ in general position (we then think of $c_i$ as zero-cycles with $[\omega]$ the Poincar\'e dual cohomology class for each $c_i$). There is exactly one such map.

To summarize the result, the quantum product in the cohomology of $\CP^1$ is given by the following multiplication table.
\begin{equation}
m([1],[1])=[1],\quad m([1],[\omega])=[\omega],\quad m([\omega],[\omega])=q\cdot [1].
\end{equation}
Note that due to the last relation the quantum product does not preserve the de Rham degree. In this particular example, $X=\CP^1$, one can prescribe degree $4$ to $q$ and then $m$ preserves the $\ZZ$-degree.

\end{example}

\subsection{Gromov-Witten potential}
Fix a basis $e_1,\ldots,e_s$ for $H^\bt(X)$. The function 
\begin{equation}
\Phi(t_1,\ldots,t_s)\colon = \sum_{n_1,\ldots,n_s\geq 0} \sum_{d\geq 0} \frac{t_1^{n_1}\cdots t_s^{n_s}}{n_1!\cdots n_s!} q^d\; \GW_{0,\sum n_i,d}(\underbrace{e_1,\ldots,e_1}_{n_1},\ldots,\underbrace{e_s,\ldots,e_s}_{n_s})
\end{equation}
of the generating parameters $t_1,\ldots,t_s$ is called the Gromov-Witten potential. Here we understand that the variable $t_a$ is even (commuting) if $e_a\in H^\mr{even}(X)$ and $t_a$ is odd if $e_a\in H^\mr{odd}(X)$. Thus, $\Phi$ is a generating function for Gromov-Witten invariants. 

One can think of $t_1,\ldots,t_n$ as coordinates on $H^\bt(X)$, i.e., coordinates of the vector $\beta=\sum_a t_a e_a\in H^\bt(X)$. Then one can also write $\Phi$ as
\begin{equation}
\Phi(t_1,\ldots,t_s)=\sum_{n\geq 0}\sum_{d\geq 0} \frac{q^d}{n!}\GW_{0,n,d}(\underbrace{\beta,\ldots,\beta}_n)
\end{equation}
One can treat parameters $t_a$ as formal (i.e. treat $\Phi$ as a formal power series in $t_a$'s), however the sum over $n$ 
 is actually convergent for $\beta$ 
 in some open set $U$ in $H^\bt(X)
 $. 

\subsubsection{``Big'' quantum product.}
One defines the ``big quantum product'' as a family parametrized by $\beta=\sum_a t_a e_a\in H^\bt(X)$ of supercommutative associative products on cohomology
\begin{equation}
m_\beta\colon H(X)\otimes H(X)\ra H(X)
\end{equation}
defined by
\begin{equation}
\langle m_\beta(\alpha_1,\alpha_2),\alpha_3 \rangle \colon=
\sum_{n\geq 0}\sum_{d\geq 0} \frac{q^d}{n!} \GW_{0,n+3,d}(\alpha_1,\alpha_2,\alpha_3,\underbrace{\beta,\ldots,\beta}_{n}),
\end{equation}
for any $\alpha_{1,2,3}\in H(X)$.
Thus, it is the construction of the quantum product (\ref{l40 quantum product}) deformed by the class $\beta\in H(X)$.

Note that the big quantum product can be written in terms of the third derivative of the potential $\Phi$:
\begin{equation}
\langle m_\beta(e_a,e_b),e_c \rangle = \frac{\dd^3 \Phi}{\dd t_a \dd t_b \dd t_c},
\end{equation}
for any $a,b,c=1,\ldots,s$;  both sides are understood as functions of  $\beta \in U\subset H(X)$.

The big quantum product endows an open subset of cohomology $U\subset H(X)$ with the structure of a \emph{Frobenius manifold}.

The following definition is due to Dubrovin \cite{Dubrovin92,Dubrovin96}.
\begin{definition}
A Frobenius manifold is a manifold $Y$ equipped with the following data:
\begin{itemize}
\item Affine flat structure on $Y$ and a compatible (flat) Riemannian metric $h$.
\item For each $\beta\in Y$, the tangent space $T_\beta Y$ is equipped with a commutative associative product 
\begin{equation}
m_\beta\colon T_\beta Y\otimes T_\beta Y \ra T_\beta Y
\end{equation}
compatible with $h$, in the sense that $h(m_\beta(x,y),z)=h(x,m_\beta(y,z))$.
\item A potential $\Phi \in C^\infty(Y)$ such that
\begin{equation}
h(m(u,v),w)=u\circ v\circ w\circ \Phi
\end{equation}
for any triple of flat vector fields $u,v,w$ on $Y$.
\end{itemize}
\end{definition} 

This definition has a straightforward $\ZZ_2$-graded generalization. To see the big quantum product as equipping an open set in $H(X)$ with the structure of a Frobenius manifold, one should consider the ring of scalars to be formal power series in $q$. 

\subsection{WDVV equation}
Let $h^{ab}$ be the inverse 
matrix of the Poincar\'e pairing in the basis $\{e_a\}$ in $H(X)$.  The following theorem is due to Witten-Dijkgraaf-Verlinde-Verlinde \cite{Witten_2dgrav}.
\begin{thm}
Gromov-Witten potential $\Phi$ satisfies the following differential equation:
\begin{equation}\label{l40 WDVV}
\sum_{c,d}\frac{\dd^3 \Phi}{\dd t_a \dd t_b \dd t_c} h^{cd} \frac{\dd^3 \Phi}{\dd t_d \dd t_e \dd t_f}=\sum_{c,d}
\frac{\dd^3 \Phi}{\dd t_e \dd t_b \dd t_c} h^{cd} \frac{\dd^3 \Phi}{\dd t_d \dd t_a \dd t_f}
\end{equation}
(the r.h.s. is the l.h.s. with indices $a,e$ switched).
\end{thm}
The equation (\ref{l40 WDVV}) is known as Witten-Dijkgraaf-Verlinde-Verlinde (or WDVV) equation. It is a consequence of the factorization properties of Gromov-Witten classes on compactification divisors in $\ol\MM_{g,n}$ (Theorem \ref{l40 thm factorization}) and certain relations between homology classes of these divisors -- so-called Keel's relations, see Section \ref{sss Keel} below.

\begin{remark} WDVV equation is not specific to Gromov-Witten theory: it holds in any  2d cohomological field theory: 
one can define the potential in a general CohFT as
\begin{equation}
\Phi=\sum_{n\geq 3}\frac{1}{n!}\int_{\ol\MM_{0,n}} I_{0,n}(\beta,\ldots,\beta),
\end{equation}
seen as a function of $\beta=t_1 e_1+\cdots t_s e_s\in W$, and then $\Phi$ satisfies (\ref{l40 WDVV}) (the proof we sketch in Section \ref{sss Keel} carries over to this general case).
\end{remark}

\subsection{Example of Gromov-Witten potential: $X=\CP^1$}
Consider the example $X=\CP^1$. In this case the cohomology $H(X)$ has a basis $[1],[\omega]$ (with $\omega$ the Fubini-Study 2-form normalized to have unit volume); let us denote the corresponding generating parameters $t_0,t_1$. We already know the numbers $\GW_{0,3,d}$ from (\ref{l40 CP^1 GW(1,1,1)}), (\ref{l40 CP^1 GW(o,o,o)}). 
\begin{lemma}
Gromov-Witten invariants for $n\geq 4$ points are
\begin{equation}\label{l40 CP^1 all GW}
\GW_{0,n,d}(\underbrace{[\omega],\ldots,[\omega]}_k,\underbrace{[1],\ldots,[1]}_l)= 
\left\{
\begin{array}{cc}
1& \mr{if}\; l=0,d=1,\\
0& \mr{otherwise}
\end{array}
 \right.
\end{equation}
here $k+l=n$.  
\end{lemma}
\begin{proof}
If $l>0$, $p_*\mr{ev^*}$ is a class on $\ol\MM_{0,n}$ coming as a pullback of a class from $\ol\MM_{0,k}$ via the map forgetting the $l$ points mapping to $[1]$. Being a pullback, it integrates to zero on $\MM_{0,n}$. 

For the case $l=0$ (and hence $k=n$), we have a balancing condition (degree of the form) = (dimension of $\mr{Hol}_d$)+(dimension of $\MM_{0,n}$):
\begin{equation}
2n=2(2d+1)+2(n-3) \quad \Leftrightarrow \quad d=1.
\end{equation}
In the case $k=n$, $d=1$ -- the only case when we might get a nontrivial Gromov-Witten invariant, we are counting the number of M\"obius transfromations $\CP^1\ra \CP^1$ that take points $(0,1,\infty,z_4,\ldots,z_n)$ to points $(u_1,\ldots,u_n)$ where $u_i$ are fixed distinct points on the target and $z_4,\ldots,z_n$ are arbitrary (integrated over when we integrate over $\MM_{0,n}$ in (\ref{l39 GW})). There is exactly one such map.
\end{proof}

As a corollary, the Gromov-Witten potential for $X=\CP^1$ is
\begin{equation}
\Phi(t_0,t_1)= \frac{t_0^2 t_1}{2}+\sum_{n\geq 3}q\frac{t_1^n}{n!} =
\frac{t_0^2 t_1}{2}+q\left(e^{t_1}-1-t_1-\frac{t_1^2}{2}\right).
\end{equation}

The big quantum product is given on basis elements by
\begin{equation}
m_\beta([1],[1])=[1],\quad m_\beta([1],[\omega])=[\omega],\quad
m_\beta([\omega],[\omega])=qe^{t_1}\cdot[1],
\end{equation}
where the reference point is $\beta=t_0[1]+t_1[\omega]\in H(\CP^1)$.

\subsection{Example of Gromov-Witten potential: $X=\CP^2$}
We proceed to the case $X=\CP^2$. This example due to Kontsevich-Manin is a spectacular application of WDVV equation to enumerative geometry. We refer to original paper \cite{KM} for details. 

One has three basis cohomology classes: $[1],[\omega],[\omega^2]$ where again $\omega$ is the Fubini-Study 2-form normalized to have unit period on $\CP^1\subset \CP^2$. Let us denote the corresponding generating parameters $t_0,t_1,t_2$.
\begin{thm}[Kontsevich-Manin \cite{KM}]
\begin{enumerate}[(i)]
\item The Gromov-Witten potential for $X=\CP^2$ has the form
\begin{equation}\label{l40 KM thm eq1}
\Phi(t_0,t_1,t_2)=\frac{t_0^2t_2}{2}+\frac{t_0 t_1^2}{2}-q\frac{t_2^2}{2}+\sum_{d\geq 1}\frac{\mc{N}(d)}{(3d-1)!} q^d t_2^{3d-1} e^{dt_1},
\end{equation}
where $\mc{N}(d)$ is the number of rational (i.e. genus zero) holomorphic curves of degree $d$ in $\CP^2$ passing through $3d-1$ points in general position.
\item The numbers $\mc{N}(d)$ satisfy $\mc{N}(1)=1$ and the recurrence relation
\begin{equation}\label{l40 KM thm eq2}
\mc{N}(d)=\sum_{k+l=d}\mc{N}(k)\mc{N}(l) k^2 l \left( 
l \left(  
\begin{array}{c}
3d-4 \\ 3k-2
\end{array}
\right)
-k \left(  
\begin{array}{c}
3d-4 \\ 3k-1
\end{array}
\right)
\right)
\end{equation}
for $d\geq 2$. These two properties define the numbers $\mc{N}(d)$ completely. In particular, the first numbers are:
\begin{equation}
\begin{array}{c|cccccc} 
d& 1&2&3&4&5 &\cdots\\ \hline
\mc{N}(d) &1&1&12&620& 87304&\cdots
\end{array}
\end{equation}
\end{enumerate}
\end{thm}

In particular $\mc{N}(1)=1$ is the number of degree $1$ curves (lines) in $\CP^2$ through $2$ (generic) points, $\mc{N}(2)=1$ is the number of conics through $5$ points, $\mc{N}(3)=12$ is the number of \emph{rational} cubics through $8$ points,\footnote{One can find a cubic through $9$ points in general position,
but it will (in general position) have genus one, not zero.} etc.

The term $-q\frac{t_2^2}{2}$ in (\ref{l40 KM thm eq1}) is inconsequential, it cancels a similar term with the opposite sign present in the sum over $d$; it is put there so that $\Phi$ does not have terms of degree $<3$ in $t$'s (cf. the stability condition (\ref{l39 stability}): we only consider GW invariants in genus zero for $n\geq 3$).

\begin{proof}[Sketch of proof]
(i) Consider the Gromov-Witten invariant
\begin{equation}\label{l40 KM proof eq1}
\GW_{0,n,d}(\underbrace{[1],\ldots,[1]}_{n_0},\underbrace{[\omega],\ldots, [\omega]}_{n_1}. \underbrace{[\omega^2],\ldots, [\omega^2]}_{n_2})
\end{equation}
for $n\geq 4$; we understand that $n=n_0+n_1+n_2$. The number (\ref{l40 KM proof eq1}) vanishes for $n_0> 0$ by the same argument as in (\ref{l40 CP^1 all GW}) for $l>0$. The balancing condition between the form degree and the dimension of the space over which it is integrated is
\begin{equation}
\underbrace{2n_1+4n_2}_{\mr{form\;degree}}= \underbrace{2(3(d+1)-1)}_{\dim_\RR \mr{Hol}}+\underbrace{2(n-3)}_{\dim_\RR\MM_{0,n}} \quad \Leftrightarrow \quad n_2=3d-1
\end{equation}
Denote 
\begin{equation}\label{l40 KM proof eq2}
\mc{N}(d)\colon= \GW_{0,3d-1,d}(\underbrace{[\omega^2],\ldots,[\omega^2]}_{3d-1})
\end{equation}
If we insert $n_1$ additional copies of the class $[\omega]$ (Poincar\'e dual to the class of a hyperplane $H\subset \CP^2$ of complex codimension $1$) into the Gromov-Witten invariant (\ref{l40 KM proof eq2}), the number (\ref{l40 KM proof eq2}) gets multiplied by $d^{n_1}$, since a curve of degree $d$ intersects the hyperplane $H$ $d$ times.

This analysis, together with the straightforward case $n=3$ results in the ansatz (\ref{l40 KM thm eq1}).

(ii) The recurrence relation (\ref{l40 KM thm eq2}) is an immediate consequence of the WDVV equation (\ref{l40 WDVV}), from substituting the ansatz (\ref{l40 KM thm eq1}) into it.

\end{proof}

\marginpar{Lecture 41,\\ 12/7/2022}

\subsection{Keel's theorem}
For a subset $S\subset \{1,\ldots,n\}$, let us denote by $D_S\in H_\bt(\ol\MM_{0,n})$ the homology class of Deligne-Mumford compactification stratum $\dd_{0,S}$ (\ref{l39 DM stratum I}) of the compactified moduli space $\ol\MM_{0,n}$. We will denote $S^c$ the complement of $S$ in $\{1,\ldots,n\}$.

\begin{thm}[Keel \cite{Keel}]
Homology of the moduli space $\ol\MM_{0,n}$ is generated by classes $D_S$ with $S$ subsets of $\{1,\ldots,n\}$ such that $|S|,|S^c|\geq 2$, modulo the following relations:
\begin{itemize}
\item  $D_S=D_{S^c}$.
\item For $i,j,k,l$ distinct,
\begin{equation}\label{l41 Keel's rel}
\sum_{i,j\in S,\; k,l\in S^c}  D_S = 
\sum_{i,k\in S,\; j,l\in S^c}  D_S = 
\sum_{i,l\in S,\; j,k\in S^c}  D_S 
\end{equation}
\item  $D_S\cap D_T=0$ unless  $S\subset T$ or $T\subset S$.
\end{itemize}
In the relation (\ref{l41 Keel's rel}) the summation in the left term is over partitions of $\{1,\ldots,n\}$ into two subsets $S,S^c$ such that $S$ contains $i,j$ and $S^c$ contains $S^c$, and similarly for the middle and the right terms.
\end{thm}

\begin{example}
Consider the case $n=4$. Non-compactified moduli space $\MM_{0,4}=C_4(\CP^1)/PSL(2,\CC)$  can be identified with sphere with three punctures, $\CP^1\backslash\{0,1,\infty\}$ (fixing three of the marked points to $0,1,\infty$, the modulus is the position of the fourth point), cf. (\ref{l11 M_0,4}). Deligne-Mumford compactification fills the the three punctures with the compactification strata $\dd_{0,\{1,4\}}$, $\dd_{0,\{2,4\}}$, $\dd_{0,\{3,4\}}$ (configurations where $z_4$ approaches 
$z_1=\infty$, $z_2=0$ or $z_3=1$), see Figure \ref{l11 fig Mbar_0,4}. 
The compactified moduli space $\ol\MM_{0,4}$ is just a sphere $\CP^1$ and all three Deligne-Mumford strata are in the same homology class -- the class of a point in $\CP^1$. Thus, one indeed has
\begin{equation}
D_{\{1,4\}}=D_{\{2,4\}}=D_{\{3,4\}}
\end{equation}
which is the Keel's relation (\ref{l41 Keel's rel}) for $n=4$. 
\end{example}

\subsection{Explanation of WDVV equation from Keel's theorem and factorization of GW classes}
\label{sss Keel}

Consider the moduli space $\ol\MM_{0,n+4}$ 
with marked points labeled $\{A,B,E,F,1,\ldots,n\}$. Fix $a,b,e,f\in \{1,\ldots, s\}$ a quadruple of basis elements in $H(X)$. Restricting the Gromov-Witten class to Deligne-Mumford compatification strata of $\ol\MM_{0,n+4}$, we obtain
\begin{multline}\label{l41 WDVV proof eq}
\sum_{ S\subset\{1,\ldots,n\}  
} \int_{D_{S\cup\{A,B\}}} I_{0,n+4}(e_a, e_b,e_e,e_f,\underbrace{\beta,\ldots,\beta}_n) 
\underset{\mr{factorization\;}(\ref{l40 factorization})}{=} \\   =
\sum_{ S\subset\{1,\ldots,n\}  
} \sum_{c,d} \int_{\ol\MM_{0,
S\cup \{A,B,C\}
}}I_{0,|S|+3}(e_a,e_b,\underbrace{\beta,\ldots,\beta}_{|S|},e_c)\; h^{cd} 
\int_{\ol\MM_{0, 
 S^c\cup \{D,E,F\}
}}I_{0,|S^c|+3}(e_e,e_f,\underbrace{\beta,\ldots,\beta}_{|S^c|},e_d)\\=
\sum_{n_1+n_2=n} \frac{n!}{n_1! n_2!} \sum_{d_1,d_2\geq 0}\sum_{c,d} q^{d_1}\GW_{0,n_1+3}(e_a,e_b,\underbrace{\beta,\ldots,\beta}_{n_1},e_c) h^{cd} q^{d_2}\GW_{0,n_2+3}(e_e,e_f,\underbrace{\beta,\ldots,\beta}_{n_2},e_d)
\end{multline}
In this computation we called $C,D$ the nodal point seen as a marked point on the two components of the curve.
Note that by Keel's relation (\ref{l41 Keel's rel}), expression (\ref{l41 WDVV proof eq}) doesn't change if we switch $A\leftrightarrow E$ and $a\leftrightarrow e$: under this switch, both the cohomology class in the l.h.s. and the homology class $\sum_{S}D_{S\cup\{A,B\}}$ it is paired with are invariant -- the former trivially and the latter by Keel's theorem. 

Summing (\ref{l41 WDVV proof eq}) over $n\geq 0$ with weight $\frac{1}{n!}$, we obtain the l.h.s. of the WDVV equation (\ref{l40 WDVV}). Switching $a\leftrightarrow e$ (which doesn't change the expression by the argument above), we obtain the r.h.s. of WDVV.

\begin{figure}[H]
$$
\vcenter{\hbox{ \includegraphics[scale=1.25]{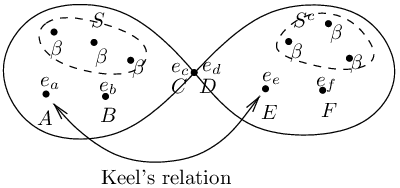}
}}
$$
\caption{A nodal curve with two groups of marked points corresponding to (\ref{l41 WDVV proof eq}).
}
\end{figure}

\section{A-model}\label{ss A-model}\label{s A-model}
Roughly speaking, the A-model is the nonlinear sigma model, with fields being maps into a target manifold $X$, with some extra fields (fermions) adjoined, so that the path integral of the model ``localizes'' -- in the sense that will be explained below -- to just the \emph{holomorphic} maps into the target.

For details on the A-model we refer to Witten's original papers \cite{Witten88,Witten_mirror}. For the viewpoint on the A-model as calculating the Euler class of a vector bundle over the mapping space whose section is the holomorphicity equation, see \cite{Atiyah-Jeffrey}.

Fix a Riemannian surface $\Sigma$ and a target K\"ahler manifold $X$. We will assume that the K\"ahler symplectic form $\omega$ on $X$ has integral periods. 

We will use local complex coordinates 
on the target: holomorphic coordinates  $x^i$ and antiholomorphic coordinates $x^{\bar{i}}$;
 we will denote the real coordinates on the target $x^I$ (equivalently, one may think of $x^I$ as holomorphic and antiholomorphic coordinates jointly).
The action functional of the $A$-model is\footnote{
We put the normalization factor $\frac{1}{2\pi}$ in the action, so that  its bosonic part yields (for a flat target) the standard free boson propagator $\langle  \phi^I(w)\phi^J(z)\rangle= -g^{IJ}2\log|w-z|+\mr{const}$.
}
\begin{equation}\label{l41 S}
S
=\frac{1}{2\pi}\int_\Sigma \frac{i}{2} 
 g_{IJ}\bdd\phi^I  \bar\bdd \phi^J+ \psi^{(1,0)}_i \ol{\mathbf{D}} \chi^i - \psi^{(0,1)}_{\bar{i}} \mathbf{D} \chi^{\bar{i}} +i R^{i\bar{i}}_{\;\;j\bar{j}} \psi^{(1,0)}_i \psi^{(0,1)}_{\bar{i}} \chi^j \chi^{\bar{j}}
\end{equation}
The fields are
\begin{itemize}
\item A smooth map $\phi\colon \Sigma \ra X$.
\item An odd (anticommuting) field 
\begin{equation}
\chi \in \Gamma(\Sigma,\phi^* TX).
\end{equation}
\item Odd $(1,0)$- and $(0,1)$-form fields
\begin{equation}
\psi^{(1,0)}\in \Omega^{1,0}(\Sigma, \phi^* (T^{1,0})^*X),\quad
\psi^{(0,1)}\in \Omega^{0,1}(\Sigma, \phi^* (T^{0,1})^*X).
\end{equation}
\end{itemize}
One can assign $\ZZ$-grading to fields (ghost number): 
\begin{equation}
\mr{gh}(\phi)=0,\;\; \mr{gh}(\chi)=1,\;\; \mr{gh}(\psi^{1,0})=\mr{gh}(\psi^{0,1})=-1.
\end{equation}
In the action (\ref{l41 S}), $g=g(\phi)$ is the 
 Riemannian metric on the target $X$ pulled back to $\Sigma$ by the map $\phi$;
\begin{equation}
\ol{\mathbf{D}}\chi^i= \bar\bdd \chi^i+\Gamma^i_{jk}(\phi) \bar\bdd \phi^j \chi^k,\quad 
\mathbf{D}\chi^{\bar{i}}= \bdd \chi^{\bar{i}}+\Gamma^{\bar{i}}_{\bar{j}\bar{k}}(\phi) \bdd \phi^{\bar{j}} \chi^{\bar{k}}
\end{equation}
are the Dolbeault operators on $\Sigma$ twisted by the pullback of the Levi-Civita connection $\nabla_{LC}$ on $X$; $R=R(\phi)$
is the pullback of the Riemann curvature tensor on $X$ to $\Sigma$. 

Using local complex coordinates on $\Sigma$, we can write the 1-form fields as
\begin{equation}
\psi^{(1,0)}_i=dz\,\psi_i,\quad \psi^{(0,1)}_{\bar{i}}=d\bar{z}\,\psi_{\bar{i}}.
\end{equation}
Then the action (\ref{l41 S}) can be written as
\begin{equation}\label{l41 S in components}
S=\frac{1}{\pi}\int_\Sigma d^2z \frac12 g_{IJ} \dd\phi^I \bar\dd\phi^J + i\psi_i \ol{D} \chi^i+i\psi_{\bar{i}}D \chi^{\bar{i}} - 
R^{i\bar{i}}_{\;\;j\bar{j}} \psi_i \psi_{\bar{i}}\chi^j \chi^{\bar{j}},
\end{equation}
with $D,\ol{D}$ the covariant derivatives on $\Sigma$ in the directions $\dd_z,\dd_{\bar{z}}$, w.r.t. the pullback of the target Levi-Civita connection.

The first term of the action (\ref{l41 S}) is the action of a sigma-model with target $X$ seen as a Riemannian manifold; one can rewrite it as
\begin{equation}\label{l41 rewriting S_sigma}
\frac{1}{2\pi}\int_\Sigma \frac{i}{2} g_{IJ} \bdd \phi^I \bar\bdd \phi^J =
\frac{1}{2\pi}\int_\Sigma i g_{\bar{i}j} \bdd \phi^{\bar{i}} \bar\bdd \phi^{j} +
\underbrace{\frac{1}{4\pi}
\int_{\Sigma} \phi^* \omega}_{S_\mr{top}}
\end{equation}
Here $\omega=i g_{i\bar{j}}dx^i\wedge dx^{\bar{j}}$ is the K\"ahler symplectic form on $X$. The last term in the r.h.s. of (\ref{l41 rewriting S_sigma}) is ``topological'': it depends only on the homotopy class of the map $\phi$ (and the cohomology class of $\omega$). In particular, $S_\mr{top}$ is a locally constant function on the space of fields.

The space of fields is equipped with a degree $-1$ odd  derivation $Q$ acting by
\begin{equation}
\begin{gathered}
Q\phi^I=\chi^I,\;\; Q\chi^I=0,\\ Q\psi^{(1,0)}_i=-ig_{i\bar{j}} \bdd\phi^{\bar{j}} +\Gamma^k_{ij} \chi^j \psi^{(1,0)}_k,\\
Q\psi^{(0,1)}_{\bar{i}}=-i g_{\bar{i}j} \bar\bdd\phi^{j} +\Gamma^{\bar{k}}_{\bar{i}\bar{j}} \chi^{\bar{j}} \psi^{(0,1)}_{\bar{k}}.
\end{gathered}
\end{equation}

The operator $Q$ squares to zero modulo equations of motion,\footnote{More precisely, here and in (\ref{l41 S simEL Q(...)}), we only need the part of the Euler-Lagrange equations arising as the variation of $S$ w.r.t. fields $\psi^{(1,0)},\psi^{(0,1)}$. These equations read
$\ol{\mathbf{D}}\chi^i + i R^{i\bar{i}}_{\;\; j\bar{j}} \psi^{(0,1)}_{\bar{i}}\chi^j\chi^{\bar{j}}=0$ and
$\mathbf{D}\chi^{\bar{i}} - i R^{i\bar{i}}_{\;\; j\bar{j}} \psi^{(1,0)}_{i}\chi^j\chi^{\bar{j}}=0$.
}
\begin{equation}
Q^2\simEL 0.
\end{equation}
 One can in fact massage the model (construct a ``first-order'' action) to make $Q$ square to zero on the nose, see Section \ref{sss A-model 1st order}.

The crucial property of the action (\ref{l41 S}) is that it is $Q$-exact, up to the topological term:
\begin{equation}\label{l41 S simEL Q(...)}
S\simEL S_\mr{top}+Q (R)
\end{equation}
where
\begin{equation}
R=\frac{1}{4\pi}\int_\Sigma -\psi^{(1,0)}_i\bar\bdd \phi^i +  \psi^{(0,1)}_{\bar{i}} \bdd \phi^{\bar{i}}
\end{equation}
Again, the equality (\ref{l41 S simEL Q(...)}) is true only modulo equation of motion but becomes true everywhere on the space of fields in the version of Section \ref{sss A-model 1st order}.

\begin{remark} In the language of TCFT, the operator $Q$ is given by integrating around a field the conserved current $\mathbb{J}=\bJ+\ol\bJ$, cf. (\ref{l33 Q = action of J}), where
\begin{equation}\label{l41 J, Jbar}
\bJ=g_{i\bar{j}}\chi^i \bdd \phi^{\bar{j}},\quad
\ol\bJ=g_{\bar{i}j} \chi^{\bar{i}} \bar\bdd \phi^j.
\end{equation}
The currents $\bJ,\ol\bJ$ are conserved separately: $\bar\bdd \bJ\simEL 0$, $\bdd \ol\bJ\simEL 0$.

The fields $G$, $\ol{G}$  -- the $Q$-primitives of the components of the stress-energy tensor (\ref{l33 T=Q(G)}) are given by
\begin{equation}\label{l41 G, Gbar}
G (dz)^2=-i\psi^{(1,0)}_i\bdd \phi^i,\quad
\ol{G} (d\bar{z})^2 = -i\psi^{(0,1)}_{\bar{i}} \bar\bdd \phi^{\bar{i}}.
\end{equation}

The stress-energy tensor itself is
\begin{equation}
\begin{gathered}
T(dz)^2=Q(G)(dz)^2=-g_{i\bar{j}}\bdd \phi^i \bdd \phi^{\bar{j}}+
i\psi^{(1,0)}_i\mathbf{D}\chi^i,\\
\ol{T}(d\bar{z})^2=Q(\ol{G})(d\bar{z})^2=-g_{i\bar{j}}\bar\bdd \phi^i \bar\bdd \phi^{\bar{j}}+
i\psi^{(0,1)}_{\bar{i}}\ol{\mathbf{D}}\chi^{\bar{i}}.
\end{gathered}
\end{equation}
\end{remark}

\begin{remark} The A-model is described by somewhat lengthy formulae due to the involvement of target geometry. For a flat target all formulae simplify drastically. E.g., the action (\ref{l41 S}) becomes simply a free (quadratic) action
\begin{equation}\label{l41 A flat}
S=S_\mr{top}+\frac{1}{2\pi}\int_\Sigma i g_{\bar{i}j} \bdd \phi^{\bar{i}} \bar\bdd \phi^j+\psi^{(1,0)}_i \bar\bdd \chi^i- \psi^{(0,1)}_{\bar{i}}\bdd \chi^{\bar{i}},
\end{equation}
with $g_{\bar{i}j}$. In fact there is a very interesting class of cases where the target is compact and admits a flat metric  everywhere except for finitely many points -- toric manifolds $X$. In this case one can study the A-model as a free theory with special observables corresponding to the preimages of the special points in $X$ where the metric is singular. This approach is due to Frenkel-Losev  \cite{Frenkel-Losev}.
\end{remark}

\begin{remark}\label{l41 rem: A with flat target = bosons+ghosts}
For the target $X=\CC^n$ with standard 
K\"ahler structure, the A-model (\ref{l41 A flat}) becomes a system of $n$ free complex bosons and $n$ simple ghost systems  (cf. Remark \ref{l32 rem: bc system with j-parameter}). Note that in this case it is clear that the central charge of the system is $2n+(-2)n=0$. 

In fact the central charge of the A-model is zero for any target, by the general argument (\ref{l33 c=0}) holding in any TCFT.
\end{remark}

\begin{remark} The space of states of the A-model on a circle (or equivalently the space of quantum fields $V$) can be identified with the de Rham complex of the free loop space of $X$,
\begin{equation}
V=\HH_{S^1}=\Omega^\bt(LX),
\end{equation}
with differential $Q$ being the de Rham differential $d_{LX}$.
Put another way, states on $S^1$ can be realized as functions of fields $\phi^I,\chi^I$ on $S^1$.
The $Q$-cohomology of $V$ is then 
\begin{equation}
H_Q(V)=H^\bt(LX)=H^\bt(X)\oplus (\cdots) 
\end{equation}
where the first term on the right corresponds to constant loops. 
One can identify $H^\bt(X)\subset H_Q(V)$ 
as evaluation observables (Section \ref{sss evaluation observables}). 
\end{remark}

\subsection{Path integral heuristics: independence on the target geometric data.}
The fact (\ref{l41 S simEL Q(...)}) leads to the following expectation about the A-model path integral: the correlator of any collection of $Q$-closed observables $\Phi_1,\ldots,\Phi_n$ should be invariant under deformations of the geometric data on the target, except for the possible change of the topological term. More precisely, one 
can split the correlator into contributions of different homotopy classes of the map $\phi\colon \Sigma \ra X$:
\begin{multline}\label{l41 corr}
\langle \Phi_1(z_1)\cdots \Phi_n(z_n) \rangle = \int_{\F} e^{-S} \Phi_1(z_1)\cdots \Phi_n(z_n) =\\
= \sum_{[\phi]\in [\Sigma,X]} e^{-S_\mr{top}([\phi])}  \int_{\F_{[\phi]}} e^{-Q(R)} \Phi_1(z_1)\cdots \Phi_n(z_n)=\\
=
\sum_{[\phi]\in [\Sigma,X]} e^{-S_\mr{top}([\phi])} \langle \Phi_1(z_1)\cdots \Phi_n(z_n) \rangle_{[\phi]}
\end{multline}
where $[\Sigma,X]$ is the set of homotopy classes of maps. Then the expectation is that for $Q$-closed observables $\Phi_i$ and for a path $(g_t,J_t,\omega_t)$ of K\"ahler data on $X$ with parameter $t$, 
the contribution 
\begin{equation}\label{l41 corr in a homotopy class}
\langle \Phi_1(z_1)\cdots \Phi_n(z_n) \rangle_{[\phi]}
\end{equation}
of a homotopy class into the correlator (\ref{l41 corr})
does not depend on $t$. 

The logic is that one differentiates the path integral over a given homotopy class in (\ref{l41 corr}) in the parameter $t$ of the family which results in a $Q$-exact expression (modulo Euler-Legrange equations) under the path integral; such expressions are expected to have zero averages over the space of fields.

\begin{remark}
Later, in Section \ref{sss evaluation observables}, we will be discussing evaluation observables which are not $Q$-closed, but rather $Q$-closed up to a $d$-exact ``error term.'' The argument above goes through for them, with the caveat that the correlators  (\ref{l41 corr in a homotopy class}) for them change by a $d$-exact term under the deformation of target geometric data.
\end{remark}

\subsection{A-model as an integral representation for the delta-form on holomorphic maps}

If we rescale the target metric $g\ra \frac{1}{\epsilon}g$ with $\epsilon$ a constant, the action becomes
\begin{equation}
S^\epsilon= \underbrace{\frac{1}{4\pi\epsilon}\int_\Sigma \phi^* \omega}_{S_\mr{top}^\epsilon} + 
\underbrace{\frac{1}{2\pi}\int_\Sigma \underbrace{\frac{i}{\epsilon} g_{\bar{i}j} \bdd \phi^{\bar{i}} \bar\bdd \phi^{j}}_{(I)}+
\psi^{(1,0)}_i \ol{\mathbf{D}} \chi^i - \psi^{(0,1)}_{\bar{i}} \mathbf{D} \chi^{\bar{i}} +i\epsilon R^{i\bar{i}}_{\;\;j\bar{j}} \psi^{(1,0)}_i \psi^{(0,1)}_{\bar{i}} \chi^j \chi^{\bar{j}}}_{S'_\epsilon}
\end{equation}
In the limit $\epsilon\ra 0$ the dominating term (I) in the action essentially enforces the constraint $\bar\bdd\phi^i=0$, i.e., enforces the holomorphicity property of the map $\phi\colon \Sigma \ra X$.

More precisely, integrating out fields $\psi^{(1,0)},\psi^{(0,1)}$, we obtain a cohomologically smeared delta-form on the space of smooth maps $\Sigma\ra X$ supported on holomorphic maps:
\begin{equation}\label{l41 smeared delta-fun on hol maps}
\int \mc{D} \psi^{(1,0)}\mc{D}\psi^{(0,1)}\; e^{-S'_\epsilon} = 
\delta^\epsilon_{\mr{Hol}(\Sigma,X)} \quad \in \Omega(\mr{Map}(\Sigma,X)).
\end{equation}
In this identification, one identifies the field $\chi^I$ as
\begin{equation}
\chi^I=d_\mr{Map} \phi^I \quad \in T^*_\phi \mr{Map}(\Sigma,X)
\end{equation}
-- a 1-form/covector 
on the mapping space; $d_\mr{Map}$ stands for de Rham operator on the mapping space. The parameter $\epsilon$ in (\ref{l41 smeared delta-fun on hol maps}) serves a ``smearing'' parameter, with $\epsilon\ra 0$ limit being the ``true'' (non-smeared) distributional delta-form.

\subsubsection{Prototype 
of a Mathai-Quillen representative.}
Given a function $f\colon M\ra \RR$ (assume that it is smooth, with nonvanishing differential on its zero-locus), one has the following cohomologically smeared delta-form on the hypersurface $f^{-1}(0)\subset M$:
\begin{equation}\label{l41 MQ prototype}
\delta^\epsilon_{f^{-1}(0)}=(2\pi \epsilon)^{-\frac12} e^{-\frac{f(x)^2}{2\epsilon}} df  \in \Omega_\mr{cl}^1(M)
\end{equation}
with $\epsilon>0$ a smearing parameter. In the limit $\epsilon\ra 0$ this form distributionally converges to true delta-form
 $\delta_{f^{-1}(0)}$.
The form (\ref{l41 MQ prototype}) can be written as a Berezin integral over an auxiliary odd (anticommuting) variable $\psi$:
\begin{equation}
\delta^\epsilon_{f^{-1}(0)}= (2\pi \epsilon)^{-\frac12} \int D\psi \; e^{-\frac{f(x)^2}{2\epsilon}+\psi df
}.
\end{equation} 

More generally, for  $f\colon M\ra \RR^k$ a smooth function with surjective differential on $f^{-1}(0)$, the zero-locus is a submanifold of codimension $k$ and one has the following smeared delta-form on it:
\begin{equation}
\delta^\epsilon_{f^{-1}(0)}=(2\pi \epsilon)^{-\frac{k}{2}} \int \prod_{a=1}^k D\psi_a \; e^{-\frac{||f(x)||^2}{2\epsilon}+\psi_a 
df^a
}    \in \Omega^k_\mr{cl}(M),
\end{equation}
where we introduced $k$ auxiliary odd variables $\psi^a$.

\subsubsection{Mathai-Quillen representative of the Euler class of a vector bundle.}
Let $E\ra M$ be a real oriented vector bundle of even rank $k$ over a manifold $M$. Assume that $E$ is equipped with fiberwise metric $g$, a connection $\nabla$ compatible with the metric and a section $s\colon M\ra E$. 
Consider the following differential form:
\begin{multline}\label{l41 S_MQ}
S_{MQ}= \frac{1}{2\epsilon} g(s,s) + i\langle \psi, \nabla s \rangle -\frac{\epsilon}{2} \langle \psi, F_\nabla(g^{-1}(\psi)) \rangle=\\
=\frac{1}{2\epsilon} g_{ab}s^a s^b +i\psi_a (ds^a + A^a_{\;\;b} s^b)-\frac{\epsilon}{4}  g^{bc}F^a_{\;\;c} \psi_a \psi_b \quad \in \Omega^\bt(M,\wedge^\bt E).
\end{multline}
Here we think of odd variables $\psi_a$ as generators of the exterior algebra of the fiber, $\wedge^\bt E_x$ (put another way $\psi_a$ are coordinates on the parity-reversed dual fiber $\Pi E_x^*$). In the second line we rewrote $S_{MQ}$ explicitly in a local trivialization of $E$; $A^a_{\;\; b}$ are the components of the local connection 1-form, $F_\nabla\in \Omega^2(M,\mr{End}(E))$ is the curvature 2-form of the connection and $F^a_{\;\;c}\in \Omega^2(M)$  are its components. The smearing parameter $\epsilon$ in (\ref{l41 S_MQ}) corresponds to scaling the fiber metric $g\mapsto \frac{1}{\epsilon} g$. 

Even more epxlicitly, using local coordinates $x^i$ on $M$, (\ref{l41 S_MQ}) can be written as
\begin{equation}
S_{MQ}=\frac{1}{2\epsilon} g_{ab}s^a s^b +i\psi_a (\dd_i s^a + A^{\;a}_{i\;b} s^b)\chi^i-\frac{\epsilon}{4}  g^{bc}F^{\;\;a}_{ij\;\;c} \psi_a \psi_b  \chi^i \chi^j,
\end{equation}
where we denoted $\chi^i\colon= dx^i$.

Consider the fiber Berezin integral
\begin{equation}\label{l41 Xi}
\Xi= \left(\frac{i}{\sqrt{2\pi \epsilon}}\right)^k\int_{\mr{fiber\;of\;}\Pi E^*\ra M} D\psi\; e^{-S_{MQ}}\quad \in \Omega^k(M).
\end{equation}
Here $D\psi\in \Gamma(M,\wedge^k E^*)$ is the fiber Berezinian (fermionic integration measure) induced from the fiber metric and the orientation of the fiber.

\begin{thm}[Mathai-Quillen \cite{Mathai-Quillen}]\label{l41 thm MQ}
\begin{itemize}
\item Form $\Xi$ is closed.
\item Changing the data $s,g,\nabla,\epsilon$ changes $\Xi$ by an exact form, $\Xi\mapsto \Xi+d(\cdots)$.
\item The class of $\Xi$ in de Rham cohomology $H^k(M)$ is the Euler class of the bundle $E\ra M$.\footnote{
Recall that for rank $k$ oriented real vector bundle $E$ over a closed  manifold $M$,  the Euler class $e$ is the cohomology class of $M$ Poincar\'e dual to the homology class of the zero-locus of a generic section $s\colon M\ra E$ (``generic'' here means ``transversal to the zero-section''). More precisely (to take signs into account), ``zero-locus'' should be understood as the intersection of the graph of $s$ with the graph of the zero-section. An equivalent definition: consider the Thom class of $E$ -- the cohomology class of the total space $\tau\in H^k(E)$ with the property that its pushforward to $M$ by the bundle projection is the constant function $1$. Then the Euler class  is the pullback $e=s^*\tau$ of the Thom class by an (arbitrary) section $s:M\ra E$ (here one doesn't need a transversality condition).
}
\item If the section $s$ intersects the zero-section of $E$ transversally, then one has
\begin{equation}
\lim_{\epsilon\ra 0} \Xi = \delta_{s^{-1}(0)}
\end{equation}
where the limit is understood in distributional sense.
\end{itemize}
\end{thm}

In particular, the form (\ref{l41 Xi}) is a cohomologically smeared delta-form on the zero-locus of the section $s$; $\Xi$ is known as the Mathai-Quillen representative of the Euler class of the bundle $E\ra M$.

Mathai-Quillen construction has a straightforward modification to complex vector bundles equipped with hermitian fiber metric.

\begin{remark} In the limit $\epsilon\ra \infty$, the last term in (\ref{l41 S_MQ}) is dominating and the formula (\ref{l41 Xi}) becomes the Gaussian integral over the odd variable $\psi$. The latter yields the representative for the Euler class of the bundle as a Pfaffian of the curvature 2-form,
\begin{equation}
\Xi=\mr{Pf}\left(\frac{1}{2\pi}F_\nabla\right)\quad \in \Omega^k(M).
\end{equation}
This is the Chern-Weil representative of the Euler class. In the special case when $E=TM$ is the tangent bundle and $\nabla$ is the Levi-Civita connection, integrating $\Xi$ over $M$ one obtains the Chern-Gauss-Bonnet theorem,
\begin{equation}
\chi(M)=\int_M \mr{Pf}\left(\frac{1}{2\pi} R 
\right),
\end{equation}
where the l.h.s. is the Euler characteristic and $R=F_{\nabla_{LC}}\in \Omega^2(M,\mr{End}(TM))$ is the Riemann curvature tensor.
\end{remark}

\subsubsection{A-model as a Mathai-Quillen representative.}
Consider the vector bundle $\mc{E}$ over the space of smooth maps $M=\mr{Map}(\Sigma,X)$ where the fiber over the map $\phi$ is 
\begin{equation}\label{l41 E_phi}
E_\phi= \Omega^{0,1}(\Sigma,\phi^*T^{1,0}X)
\end{equation}
The bundle $E$ is equipped with:
\begin{itemize}
\item A natural section $s=\bar\bdd\colon M\ra E$. Note that the zero-locus of this section is the submanifold of holomorphic maps inside smooth maps, $\mr{Hol}(\Sigma,X)\subset \mr{Map}(\Sigma,X)$.
\item A natural fiber hermitian metric given by $\langle \xi,\rho\rangle = \int_\Sigma g( \xi \stackrel{\wedge}{,} \bar\rho)$ for $\xi,\rho\in E_\phi$, with $g$ the metric on the target.
\item A connection compatible with fiber metric, induced from Levi-Civita connection on $X$.
\end{itemize}

Comparing (\ref{l41 Xi}) and the l.h.s. of (\ref{l41 smeared delta-fun on hol maps}), we observe that the integral over the field $\psi$ in the A-model can be formally identified with the Mathai-Quillen representative of the Euler class of the vector bundle (\ref{l41 E_phi}) over the space of smooth maps, or, put differently, with the cohomologically smeared delta-form on the cycle of holomorphic maps inside smooth maps.


\subsection{Evaluation observables}\label{sss evaluation observables}
Consider the evaluation map
\begin{equation}\label{l41 ev}
\mr{ev}\colon \Sigma \times \mr{Map}(\Sigma,X) \ra X
\end{equation}

Given a differential form $\alpha$ on $X$
\begin{equation}
\alpha=\alpha_{I_1\cdots I_p}(x)dx^{I_1}\cdots dx^{I_p} \in \Omega^p(X),
\end{equation}
one defines the corresponding \emph{evalutation observable}\footnote{Here we think of $\tilde{O}_\alpha$ as an observable in the sense of classical field theory, which can then be put into the path integral. Tilde in the notation refers to the fact that it is a nonhomogeneous form on $\Sigma$ which we will in a moment identify as a total descendant (\ref{l33 total descendant}), for $\alpha$ closed.
} $\til{O}_\alpha(z)$ at a point $z\in \Sigma$ as
\begin{multline}
\til{O}_\alpha(z)\colon =\mr{ev}^* \alpha|_z = \alpha_{I_1\cdots I_p}(\phi)(\chi^{I_1}+d\phi^{I_1})\cdots (\chi^{I_p}+d\phi^{I_p})\big|_z \\    \in \Omega^\bt(\mr{Map}(\Sigma,X))\otimes \wedge^\bt T^*_z\Sigma \;\; \subset C^\infty(\F_\Sigma)\otimes \wedge^\bt T^*_z\Sigma
\end{multline}
Thus, $\til{O}_\alpha$ is a  form on $\Sigma$ depending on field configuration, or more specifically on the fields $\phi$, $\chi=d_\mr{Map}\phi$ and first derivatives of $\phi$ at the point $z$. Evaluation observable can be split according to the de Rham degree on $\Sigma$,
\begin{equation}
\til{O}_\alpha = O_\alpha^{(0)}+O_\alpha^{(1)}+O_\alpha^{(2)}.
\end{equation}

The following is checked by a direct computation.
\begin{lemma}
Evaluation observables satisfy the following properties:
\begin{eqnarray}
(d+Q)\til{O}_\alpha &=& \til{O}_{d_X\alpha},\\
Q O_\alpha^{(0)} &=& O^{(0)}_{d_X \alpha}, 
\end{eqnarray}
where  $d,d_X$ are the de Rham differentials on the source and the target, respectively. 
\end{lemma}
In particular, if $\alpha$ is a \emph{closed} form on $X$, then $O_\alpha^{(0)}$ is $Q$-closed and $\til{O}_\alpha$ is $(d+Q)$-closed and is the total descendant of $O_\alpha^{(0)}$, cf. (\ref{l33 total descendant}).

\begin{remark}
It is natural to identify the total de Rham differential on 
$\Sigma\times \mr{Map}(\Sigma,X)$ with $d+Q$, rather than $d-Q$. Thus, in this section we are using a different sign convention than in Section \ref{ss TCFT} for  the descent equations (\ref{l33 descent eq}), (\ref{l33 (d-Q) til Phi =0}): $(d+Q)\til{O}=0$, or $dO^{(k-1)}=-QO^{(k)}$.
\end{remark}

\begin{remark} One \emph{does not} have the equality $\til{O}_\alpha=e^\Gamma O_\alpha^{(0)}$ with $\Gamma$ the descent operator associated with the $G,\ol{G}$ field (\ref{l41 G, Gbar}). I.e., the evaluation observable is not the \emph{canonical} total descendant of its $0$-form component, in the sense of (\ref{l33 canonical total descendant}). However, one can consider adjusting the $Q$-primitive of the total stress-energy tensor by a $Q$-exact term
\begin{multline}
G^\mr{tot}=G(dz)^2 + \ol{G} (d\bar{z})^2  \mapsto \\
\mapsto
G^{'\mr{tot}} = G(dz)^2 + \ol{G} (d\bar{z})^2 + Q(g^{i\bar{i}}\psi^{(1,0)}_i \psi^{(0,1)}_{\bar{i}}) =
-i \psi_i^{(1,0)} d\phi^i - i  \psi^{(0,1)}_{\bar{i}}d\phi^{\bar{i}}.
\end{multline}
Then, denoting by $\Gamma'$ the associated modified descent operator, given by integrating $G^{'\mr{tot}}$ around a field, one has for the evaluation observable the equality
\begin{equation}
\til{O}_\alpha= e^{\Gamma'} O_\alpha^{(0)}.
\end{equation}
\end{remark}

\subsubsection{Gromov-Witten classes as correlators of evaluation observables.}
Consider for simplicity the case $\Sigma=\CP^1$.
Given a collection of closed forms on the target, $\alpha_1,\ldots, \alpha_n \in \Omega_\mr{cl}(X)$, the correlator of the corresponding evaluation observables in the path integral formalism is
\begin{multline}\label{l41 corr of evaluation observables}
\langle \til{O}_{\alpha_1}\cdots \til{O}_{\alpha_n} \rangle =
\int \mc{D}\phi \mc{D} \chi \int \mc{D} \psi\; e^{-S} \til{O}_{\alpha_1}\cdots \til{O}_{\alpha_n} =\\
 \underset{(\ref{l41 smeared delta-fun on hol maps})}{=}
\sum_{d\geq 0} e^{-\frac{d}{4\pi\epsilon}}
\int_{\mr{Map}_d(\Sigma,X)} \delta^\epsilon_{\mr{Hol}(\Sigma,X)} \til{O}_{\alpha_1}\cdots \til{O}_{\alpha_n} =\\
=
\sum_{d\geq 0} q^d \left(\int_{\mr{Hol}_d(\Sigma,X)} \pi_1^* \mr{ev}^*\alpha_1\wedge \cdots \wedge \pi_n^* \mr{ev}^*\alpha_n+d\Big(\cdots\Big)\right)\quad \in \Omega_\mr{cl}(\ol{C}_n(\Sigma)).
\end{multline}
Here in the second line, the prefactor is the exponential of the topological term in the action, $e^{-S_\mr{top}}$, evaluated on maps of degree $d$ (defined by (\ref{l39 d})); we also identify this prefactor as $q^d$ with \begin{equation}\label{l41 q via epsilon}
q\colon= e^{-\frac{1}{4\pi\epsilon}}.
\end{equation} 
In the second step in (\ref{l41 corr of evaluation observables}) we consider the limit $\epsilon\ra 0$ in the path integral over $\mr{Map}_d(\Sigma,X)$, which localizes the integral to holomorphic maps; however the change of $\epsilon$ induces a shift of the value of the integral by an exact form on the configuration space (since we are looking at a \emph{fiber} integral over $C_n(\Sigma)\times \mr{Map}(\Sigma,X)\ra C_n(\Sigma)$ of a closed form changed by an exact form -- such a change induces an exact change of the fiber integral). The cohomology class of the correlator (\ref{l41 corr of evaluation observables}) is the genus zero Gromov-Witten class  (\ref{l39 GW class genus 0}).

\subsection{A-model in the first-order formalism}\label{sss A-model 1st order}

The first-order action for the A-model is
\begin{multline}\label{l41 S 1st order}
S^{\tiny\mbox{first-order}}=S_{\mr{top}}(\phi)+ \\
+
\frac{1}{2\pi}\int_\Sigma -p^{(1,0)}_i \bar\bdd \phi^i + p^{(0,1)}_{\bar{i}}\bdd \phi^{\bar{i}} +i g^{i\bar{j}} p^{(1,0)}_i p^{(0,1)}_{\bar{j}} +\psi^{(1,0)}_i \ol{\mathbf{D}} \chi^i -\psi^{(0,1)}_{\bar{i}} \mathbf{D} \chi^{\bar{i}}+i R^{i\bar{i}}_{\;\; j\bar{j}} \psi^{(1,0)}_i \psi^{(0,1)}_{\bar{i}}\chi^j \chi^{\bar{j}} 
\end{multline}
with $S_\mr{top}(\phi)$ the topological term as in (\ref{l41 rewriting S_sigma}).
Here the fields are as in (\ref{l41 S}), plus two new ``momentum'' fields (even, of ghost number $0$):
\begin{equation}
p^{(0,1)}\in \Omega^{0,1}(\Sigma,\phi^* (T^{1,0})^*X),\quad p^{(1,0)}\in \Omega^{1,0}(\Sigma,\phi^* (T^{0,1})^*X).
\end{equation}
Integrating out the fields, $p^{(1,0)},p^{(0,1)}$, one obtains back the action (\ref{l41 S}):
\begin{equation}
\int \mc{D} p^{(1,0)}\, \mc{D} p^{(0,1)} e^{-S^{\tiny\mbox{first-order}}}= e^{-S} .
\end{equation}

The odd derivation $Q$ acts on fields of the first-order theory as
\begin{equation}\label{l41 Q 1st order}
\begin{gathered}
Q\phi^I=\chi^I,\;\; Q\chi^I=0,\\ Q\psi^{(1,0)}_i=p^{(1,0)}_i+\Gamma^k_{ij} \chi^j \psi_k^{(1,0)},\;\;
Q\psi^{(0,1)}_{\bar{i}}=p^{(0,1)}_{\bar{i}}+\Gamma^{\bar{k}}_{\bar{i}\bar{j}} \chi^{\bar{j}} \psi_{\bar{k}}^{(0,1)},\\
Q p^{(1,0)}_i= \Gamma^k_{ij}\chi^j p^{(1,0)}_k - R^j_{\;\;i k\bar{k}} \psi^{(1,0)}_j \chi^k\chi^{\bar{k}},\;\;
Q p^{(0,1)}_{\bar{i}}= \Gamma^{\bar{k}}_{\bar{i}\bar{j}}\chi^{\bar{j}} p^{(0,1)}_{\bar{k}} - R^{\bar{j}}_{\;\;\bar{i} k\bar{k}} \psi^{(0,1)}_{\bar{j}} \chi^k\chi^{\bar{k}}
\end{gathered}
\end{equation}
The operator $Q$ squares to zero
\begin{equation}\label{l41 1st order Q^2=0}
Q^2=0
\end{equation}
and one has
\begin{equation}\label{l41 1st order S=Q(...)}
S^{\tiny\mbox{first-order}}=S_\mr{top}+Q\left(\frac{1}{2\pi}\int_\Sigma -\psi^{(1,0)}_i \bar\bdd \phi^i+\psi^{(0,1)}_{\bar{i}}\bdd \phi^{\bar{i}} +\frac{i}{2} g^{i\bar{j}}\psi^{(1,0)}_i p^{(0,1)}_{\bar{j}} -\frac{i}{2} g^{\bar{i}j}\psi^{(0,1)}_{\bar{i}}p^{(1,0)}_j \right).
\end{equation}
Both equalities (\ref{l41 1st order Q^2=0}), (\ref{l41 1st order S=Q(...)})
hold strictly, not just modulo Euler-Lagrange equations.

The counterpart of currents (\ref{l41 J, Jbar}) in the first-order formalism is
\begin{equation}
\bJ=\chi^i p^{(1,0)}_i,\quad \ol\bJ=\chi^{\bar{i}} p^{(0,1)}_{\bar{i}},
\end{equation}
whereas formulae (\ref{l41 G, Gbar}) for $G,\ol{G}$ do not change.

\begin{remark} Scaling the metric  as $g\mapsto \frac{1}{\epsilon} g$ in the first-order action, one obtains
\begin{multline}
S^{\tiny\mbox{first-order}}_\epsilon=S^\epsilon_{\mr{top}}(\phi)+ \\
+
\underbrace{\frac{1}{2\pi}\int_\Sigma -p^{(1,0)}_i \bar\bdd \phi^i + p^{(0,1)}_{\bar{i}}\bdd \phi^{\bar{i}} + i\epsilon  g^{i\bar{j}} p^{(1,0)}_i p^{(0,1)}_{\bar{j}} +\psi^{(1,0)}_i \ol{\mathbf{D}} \chi^i -\psi^{(0,1)}_{\bar{i}} \mathbf{D} \chi^{\bar{i}}+i\epsilon R^{i\bar{i}}_{\;\; j\bar{j}} \psi^{(1,0)}_i \psi^{(0,1)}_{\bar{i}}\chi^j \chi^{\bar{j}} }_{S_\epsilon^{'\tiny\mbox{first-order}}}.
\end{multline}
Note that only the \emph{inverse} metric is involved in the first-order action (barring the topological term), so one can
take the limit $\epsilon\ra 0$:
\begin{equation}
\lim_{\epsilon\ra 0} S_\epsilon^{'\tiny\mbox{first-order}} =
\frac{1}{2\pi}\int_\Sigma -p^{(1,0)}_i \bar\bdd \phi^i + p^{(0,1)}_{\bar{i}}\bdd \phi^{\bar{i}} +\psi^{(1,0)}_i \ol{\mathbf{D}} \chi^i -\psi^{(0,1)}_{\bar{i}} \mathbf{D} \chi^{\bar{i}}.
\end{equation}
Here it is clear that fields $p$ play the role of Lagrange multipliers imposing the holomorphicity constraint $\bar\bdd\phi=0$; fields $\psi$ are the corresponding odd Lagrange multipliers imposing the associated constraint on the differential $\chi$ of a holomorpic map $\phi$.
\end{remark}

\subsubsection{``First-order'' Mathai-Quillen construction.}
The first-order A-model (\ref{l41 S 1st order}) can be seen as an example of the
 ``first-order'' variant of the Mathai-Quillen construction (\ref{l41 S_MQ}), (\ref{l41 Xi}):
\begin{multline}\label{l41 S_MQ 1st order}
S_{MQ}^{\tiny\mbox{first-order}}=\frac{\epsilon}{2}g^{-1}(p,p)-i\langle p,s \rangle+i\langle \psi,\nabla s \rangle -\frac{\epsilon}{2}\langle \psi, F_\nabla(g^{-1}(\psi)) \rangle=\\
=\frac{\epsilon}{2} g^{ab}p_a p_b-ip_a s^a  +i\psi_a (ds^a + A^a_{\;\;b} s^b)-\frac{\epsilon}{4}  g^{bc}F^a_{\;\;c} \psi_a \psi_b \quad \in \Omega(M,\wedge E\otimes \mr{Sym} E).
\end{multline}
where the new even momentum field $p_a$ is a coordinate on the dual of the fiber $E_x^*$. Denote $j\colon \Pi TM \ra M$ the bundle projection from the parity-reversed tangent bundle of $M$ to $M$ and denote $\mathbb{E}\colon=j^* (E\oplus \Pi E)$ -- a supervector bundle over $\Pi TM$. Then the action (\ref{l41 S_MQ 1st order}) is a function on the total space of $\mathbb{E}$.

One has the following:
\begin{itemize}
\item Integrating out the variable $p$, one gets the ``second order'' Mathai-Quillen action (\ref{l41 S_MQ}):
\begin{equation}
e^{-S_{MQ}}=\left(\frac{\epsilon}{2\pi}\right)^{k/2}\int dp\; e^{-S_{MQ}^{\tiny\mbox{first-order}}}.
\end{equation}
Thus, integrating out both $p$ and $\psi$, one obtains the Mathai-Quillen representative of the Euler class (\ref{l41 Xi}):
\begin{equation}
\Xi=(2\pi)^{-k}\int dp\;D\psi\; e^{-S_{MQ}^{\tiny\mbox{first-order}}}\in \Omega^k(M).
\end{equation}
\item One can introduce an odd derivation $Q$ on 
$C^\infty(\mathbb{E})=\Omega(M,\wedge E\otimes \mr{Sym} E)$ 
(i.e. functions of the variables $x,\chi=dx,\psi,p$) defined by 
\begin{equation}
\begin{gathered}
Q(x^i)=\chi^i,\;\; Q(\chi^i)=0,\;\; Q(\psi_a)=p_a+A_{i\;a}^{\;b}\chi^i\psi_b,\\ Q(p_a)=-F_{ij\;a}^{\;\;b}\chi^i \chi^j \psi_b + A_{i\;a}^{\;b}\chi^i p_b,
\end{gathered}
\end{equation}
cf. (\ref{l41 Q 1st order}).
Then one has 
\begin{equation}
Q^2=0,
\end{equation}
i.e., $Q$ is a cohomological vector field on $\mathbb{E}$.
The first-order Mathai-Quillen action is $Q$-exact:
\begin{equation}
S_{MQ}^{\tiny\mbox{first-order}}=Q(-i \langle \psi,s \rangle +\frac{\epsilon}{2} g^{-1}(\psi,p)).
\end{equation}
\item One has a ``$Q$-bundle'' $(\mathbb{E},Q) \ra (\Pi TM,d)$  -- a supervector bundle where both the total space and the base are equipped with cohomological vector fields intertwined by the bundle projection. This perspective leads to a natural proof of Theorem \ref{l41 thm MQ}. E.g., one has Stokes' theorem for fiber integrals for $Q$-bundles, which implies that $\Xi$ is a closed form on $M$ (or equivalently a closed function on $\Pi TM$), being a pushforward of a closed function $e^{-S_{MQ}^{\tiny\mbox{first-order}}}$ on the total space. 
\item Substituting the data of the bundle (\ref{l41 E_phi}) into the construction (\ref{l41 S_MQ 1st order}) one gets back the first-order action of the A-model (\ref{l41 S 1st order}); here one needs to make an appropriate change to account for the fact that the bundle (\ref{l41 E_phi}) carries a  hermitian rather than a Euclidean fiber metric.
\end{itemize}


\subsection{A-model from supersymmetric sigma model}\label{ss SUSY sigma model}
We sketch briefly the original approach \cite{Witten88}, \cite{Witten_mirror} to the A-model as a ``twist'' of another (non-topological) CFT -- the $\mc{N}=(2,2)$ supersymmetric sigma model.

Fix a source Riemann surface $\Sigma$ and a target K\"ahler manifold $X$ with metric $g$. 
The $\mc{N}=(2,2)$ supersymmetric sigma model is defined by
the action functional
\begin{equation}\label{l41 S^SUSY}
S^\mr{SUSY}=\frac{1}{\pi}\int_\Sigma d^2 z\left( \frac12 g_{IJ} \dd \phi^I \bar\dd \phi^J + i g_{\bar{i}j} \psi_+^{\bar{i}} \ol{D} \psi_+^j+i
g_{\bar{i}j} \psi_-^{\bar{i}} D \psi_-^j 
+ R_{i\bar{i}j\bar{j}} \psi_+^i \psi_+^{\bar{i}}\psi_-^j \psi_-^{\bar{j}} \right).
\end{equation}
As in (\ref{l41 S}), we are using complex coordinates on the target; here we are additionally using a complex coordinate $z$ on the surface. 
The fields of the supersymmetric model are:
\begin{itemize}
\item A smooth map $\phi\colon \Sigma \ra X$.
\item Odd spinors (fermions)
\begin{equation}\label{l41 SUSY sigma model fields}
\begin{gathered}
\psi_+^i (dz)^{\frac12} \in \Gamma(\Sigma, K^{\frac12}\otimes \phi^* T^{1,0}\Sigma),\quad 
\psi_+^{\bar{i}} (dz)^{\frac12}\in \Gamma(\Sigma, K^{\frac12}\otimes \phi^* T^{0,1}\Sigma),\\
\psi_-^i (d\bar{z})^{\frac12} \in \Gamma(\Sigma, \ol{K}^{\frac12}\otimes \phi^* T^{1,0}\Sigma),\quad 
\psi_-^{\bar{i}} (d\bar{z})^{\frac12} \in \Gamma(\Sigma, \ol{K}^{\frac12}\otimes \phi^* T^{0,1}\Sigma).
\end{gathered}
\end{equation}
\end{itemize}

This model has two distinguished odd holomorphic fields of conformal weight $(\frac32,0)$ -- the supercurrents
\begin{equation}
J_1 = -g_{i\bar{j}}\psi_+^i \dd \phi^{\bar{j}},\;\;
J_2 = i g_{\bar{i}j}\psi_+^{\bar{i}} \dd \phi^{j},
\end{equation}
and their antiholomorphic counterparts
\begin{equation}
\ol{J}_1 = -g_{i\bar{j}}\psi_-^i \bar\dd \phi^{\bar{j}},\;\;
\ol{J}_2 = i g_{\bar{i}j}\psi_-^{\bar{i}} \bar\dd \phi^{j}.
\end{equation}
The supercurrents $J_{1,2}(dz)^{\frac32}$ can be contracted with a meromorphic section of $K^{-\frac12}$ and integrated around any field, and similarly for $\ol{J}_{1,2}(d\bar{z})^{\frac32}$. This gives rise to the action of the $\mc{N}=(2,2)$ superconformal algebra on the (quantum) space of fields $V$.

The stress-energy tensor of the supersymmetric model is:
\begin{equation}
\begin{gathered}
T^\mr{SUSY}=-g_{i\bar{j}}\dd\phi^i \dd\phi^{\bar{j}} - \frac{i}{2} g_{\bar{i}j}\psi^{\bar{i}}_+ D \psi_+^j - 
\frac{i}2 g_{i\bar{j}} \psi^{i}_+ D \psi_+^{\bar{j}},\\
\ol{T}^\mr{SUSY}=-g_{i\bar{j}}\bar\dd\phi^i \bar\dd\phi^{\bar{j}} - \frac{i}{2} g_{\bar{i}j}\psi^{\bar{i}}_- \ol{D} \psi_-^j - 
\frac{i}{2} g_{i\bar{j}} \psi^{i}_- \ol{D} \psi_-^{\bar{j}}
\end{gathered}
\end{equation}

Finally, the model contains an even holomorphic field $\sj$ of conformal weight $(1,0)$  -- the ``$R$-symmetry current,'' or the ``$U(1)$-current''\footnote{\label{l41 footnote: j}
$\sj$ is the Noether current for the symmetry of the action $\psi_+^i\mapsto e^{i\alpha}\psi_+^i$, $\psi_+^{\bar{i}}\mapsto e^{-i\alpha}\psi_+^{\bar{i}}$, for $\alpha$ any holomorphic function on $\Sigma$.
} --  and its antiholomorphic counterpart:
\begin{equation}
\sj =ig_{i\bar{j}}\psi_+^i \psi_+^{\bar{j}},\qquad
\bar{\sj}=ig_{i\bar{j}}\psi_-^i \psi_-^{\bar{j}}.
\end{equation}

\begin{remark} In the case of the target $X=\CC^n$ with standard K\"ahler structure, the action (\ref{l41 S^SUSY}) describes a system of $n$ complex free bosons and $n$ free Dirac fermions. In particular, the central charge of the system is $c=2\cdot n+1\cdot n=3n$ (cf. Remark \ref{l41 rem: A with flat target = bosons+ghosts}). In fact, this result remain true for a nontrivial target geometry, see (\ref{l41 c SUSY}) below. 
\end{remark}

\subsubsection{
OPE algebra of distinguished fields and commutation relations of their mode operators
}
Denote $\mathsf{n}\colon= \dim_\CC X$ -- the complex dimension of the target.

Distinguished holomorphic fields $T^\mr{SUSY},J_{1,2},\sj$ satisfy the following OPEs:
\begin{gather}
T^\mr{SUSY}(w) T^\mr{SUSY}(z) \sim \frac{\frac32 \mathsf{n}}{(w-z)^4}+ \frac{2T(z)}{(w-z)^2} +\frac{\dd T(z)}{w-z}, \label{l41 T^SUSY OPE}\\
T^\mr{SUSY}(w) J_{1,2}(z)\sim\frac{\frac32 J_{1,2}(z)}{(w-z)^2}+\frac{\dd J_{1,2}(z)}{w-z},\quad T^\mr{SUSY}(w)\sj(z)=\frac{\sj(z)}{(w-z)^2}+\frac{\dd \sj(z)}{w-z}, \label{l41 OPE T^SUSY with J,j}\\
J_1(w) J_1(z)\sim \mr{reg.},\;\; J_2(w)J_2(z)\sim \mr{reg.},\label{l41 OPEs J1 J1, J2 J2} \\
J_1(w) J_2(z) \sim \frac{\mathsf{n}}{(w-z)^3}+\frac{\sj(z)}{(w-z)^2}+\frac{T^\mr{SUSY}(z)+\frac12 \dd \sj(z)}{w-z},\label{l41 J1 J2 OPE}\\
\sj(w) \sj(z)\sim \frac{\mathsf{n}}{(w-z)^2},\label{l41 jj OPE}\\
\sj(w) J_1(z)\sim \frac{J_1(z)}{w-z},\quad \sj(w) J_2(z)\sim \frac{-J_2(z)}{w-z} \label{l41 j OPE}
\end{gather}
and similar OPEs for the antiholomorphic counterparts. OPEs between holomorphic and antiholomorphic  fields are regular.

In particular, (\ref{l41 T^SUSY OPE}) implies that the central charge of the supersymmetric model is 
\begin{equation}\label{l41 c SUSY}
c=3\mathsf{n}=3\dim_\CC X.
\end{equation}
OPEs (\ref{l41 OPE T^SUSY with J,j}) say that fields $J_{1,2}$ are primary, with $h=\frac32$ and that $\sj$ is primary with $h=1$. (\ref{l41 j OPE}) means that fields $J_{1,2}$ have $U(1)$-charge $\pm 1$.

As a consequence of the OPEs above, using Lemma \ref{l25 lemma: algebra of mode operators}, the mode operators of the holomorphic fields $T^\mr{SUSY},J_{1,2},\sj$ form a Lie superalgebra and satisfy the following (super)commutation relations:
\begin{equation}
\begin{gathered}
{}[L_n,L_m]=(n-m)L_{n+m}+\frac{\sn}{4}(n^3-n)\delta_{n,-m},\\
[L_n,J^{1,2}_r]=\left(\frac{n}{2}-r\right)J^{1,2}_{n+r},\quad
[L_n,\sj_m]=n \sj_{n+m} ,\\
[J^1_r,J^1_s]_+=0,\quad [J^2_r,J^2_s]_+=0,\\
[J^1_r,J^2_s]_+=L_{r+s}+\frac12 (r-s)\sj_{r+s}+\frac{\sn}{2}\left(r^2-\frac14\right) \delta_{r,-s},\\
[\sj_n,\sj_m]=\sn\, n \delta_{n,-m},\\
[\sj_n,J^1_r]=J^1_{n+r},\quad [\sj_n,J^2_r]=-J^2_{n+r}.
\end{gathered}
\end{equation}
Here $L_n,\sj_n$ are the even mode operators of $T^\mr{SUSY}$, with $n\in \ZZ$;  $J^{1,2}_r$ are the odd mode operators of $J_{1,2}$, with $r$ ranging either over integers or over half-integers, depending on the choice of spin-structure\footnote{
If the mode operators are understood as acting on fields at $z$, then here we are talking about the choice of spin-structure (or periodicity/antiperiodicity condition for fermions) on the punctured disk around $z$. Periodic condition on the punctured disk (Neveu-Schwarz spin-structure) corresponds to $r$ ranging in half-integers; antiperiodic condition (Ramond spin structure) corresponds to integer $r$, cf. Section \ref{ss fermion can quantization}.
} 
in (\ref{l41 SUSY sigma model fields}). This Lie superalgebra is known as $\mc{N}=2$ super-Virasoro algebra, or equivalently as $\mc{N}=2$ superconformal algebra.

\subsubsection{The ``A-twist''}
The ``A-twist'' of the supersymmetric sigma-model consists in changing the stress-energy tensor as\footnote{
One also has a ``B-twist'' where the sign of the shift for $\ol{T}$ is $+$, leading to the ``B-model,'' \cite{Witten88}, \cite{Witten_mirror}.
}
\begin{equation}
T^\mr{SUSY} \mapsto T^\mr{A-model}=T^\mr{SUSY}+\frac12 \dd \sj,
\qquad  \ol{T}^\mr{SUSY} \mapsto \ol{T}^\mr{A-model}=\ol{T}^\mr{SUSY}-\frac12 \bar\dd\, \bar\sj.
\end{equation}
The action, fields (locally) and equations of motion are unchanged, see Remark \ref{l41 rem: twist map} below.

The change of the stress-energy tensor affects the conformal weights of fields (recall that they are determined by the quadratic pole in the OPE of the stress-energy tensor with the field). In particular: 
\begin{itemize}
\item Conformal weight of $\psi_+^{i}$ changes from $(h,\bar{h})=(\frac12,0)$ (a left Weyl spinor) to $(h,\bar{h})=(0,0)$ (scalar).
\item $\psi_+^{\bar{i}}$ changes from $(\frac12,0)$ (a left Weyl spinor) to $(1,0)$ (thus, $dz\, \psi_+^{\bar{i}}$ is a $(1,0)$-form field).
\item $\psi_-^i$ changes from $(0,\frac12)$ (a right Weyl spinor) to $(0,1)$ (i.e., $d\bar{z}\, \psi_-^i$ is a $(0,1)$-form field),
\item $\psi_-^{\bar{i}}$ changes from $(0,\frac12)$ (a right Weyl spinor) to $(0,0)$ (a scalar).
\item Conformal weights of the supercurrent shift as
\begin{equation}
\begin{gathered}
J_1\colon (3/2,0)\mapsto (1,0),\quad
J_2\colon (3/2,0)\mapsto (2,0),\\
\ol{J}_1 \colon (0,3/2) \mapsto (0,2),\quad
\ol{J}_2\colon (0,3/2) \mapsto (0,1).
\end{gathered}
\end{equation}
\end{itemize}

Thus, the twist transforms the spinor fields of the supersymmetric sigma model into differential form fields of the A-model.

The correspondence of notations for fields of the supersymmetric sigma-model and the A-model is given by the following dictionary:
\begin{equation}\label{l41 SUSY-A dictionary}
\begin{array}{c|c}
\mr{SUSY\;sigma\;model} & \mr{A-model}\\ \hline
\phi^I \;\;(0,0) & \phi^I \;\; (0,0)\\
\psi_+^i\;\; (1/2,0) & \chi^i \;\; (0,0)\\
\psi_+^{\bar{i}}\;\; (1/2,0) & g^{\bar{i}j} \psi_j\;\; (1,0) \\
\psi_-^i \;\; (0,1/2) & g^{i\bar{j}} \psi_{\bar{j}}\;\; (0,1) \\
\psi_-^{\bar{i}}\;\; (0,1/2) & \chi^{\bar{i}} \;\; (0,0) \\ \hline
J_1\;\; (3/2,0) & J\;\; (1,0) \\
J_2 \;\; (3/2,0) & G \;\; (2,0) \\
\ol{J}_1 \;\; (0,3/2) & i\ol{G} \;\; (0,2) \\
\ol{J}_2 \;\; (0,3/2) & -i\ol{J} \;\; (0,1)
\end{array}
\end{equation}
Here we are indicating the conformal weight $(h,\bar{h})$ of each field. 
In particular, the supercurrents after the twist become the natural objects of a TCFT -- the holomorphic/antiholomorphic BRST currents $J,\ol{J}$ and the primitive fields for the stress-energy tensor, $G,\ol{G}$.

All fields in the table (\ref{l41 SUSY-A dictionary}) are primary, both on the supersymmetric and on the A-model side.

\begin{remark}\label{l41 rem: twist map}
In a coordinate-independent language on the source surface, the twist yields a mapping of fields of the supersymmetric model on a contratible open set $U\subset \Sigma$ to fields of the A-model:
\begin{equation}\label{l41 twist map}
\begin{array}{ccc}
\F_U^\mr{SUSY} & \ra & \F_U^\mr{A-model} \\
(\phi^I,\ppsi_+^i,\ppsi_+^{\bar{i}},\ppsi_-^i,\ppsi_i^{\bar{i}}) & \mapsto &
(\phi^I, \underbrace{\lambda^{-1} \ppsi_+^i}_{\chi^i}, \underbrace{\lambda \ppsi_+^{\bar{i}}}_{g^{\bar{i}j}\psi_j^{(1,0)}},\underbrace{\bar\lambda \ppsi_-^i}_{g^{i\bar{j}}\psi^{(0,1)}_{\bar{j}}}, \underbrace{\bar\lambda^{-1} \ppsi_-^{\bar{i}}}_{\chi^{\bar{i}}})
\end{array}
\end{equation}
Here $\lambda\in \Gamma(U,K^{\frac12})$ is some reference nonvanishing holomorphic spinor on $U$ (e.g. if $U$ is equipped with a complex coordinate $z$, one can choose $\lambda=(dz)^{\frac12}$).

Since the Lagrangian density of the action (\ref{l41 S^SUSY}) is invariant under the R-symmetry (footnote \ref{l41 footnote: j}), its pushforward under the map (\ref{l41 twist map}) is independent under the choice of the reference spinor $\lambda$ and yields the Lagrangian density of the A-model (\ref{l41 S}).
\end{remark}

\end{document}